\numberwithin{equation}{section}
\setlist{nosep}
\setlist{noitemsep}
\def\XXint#1#2#3{{\setbox0=\hbox{$#1{#2#3}{\int}$}
		\vcenter{\hbox{$#2#3$}}\kern-.5\wd0}}
\newcommand{\N}{\mathbb{N}}
\newcommand{\Z}{\mathbb{Z}}
\newcommand{\R}{\mathbb{R}}
\def \S{\mathbb{S}}
\newtheorem{theorem}{Theorem}
\newtheorem{prop}{Proposition}[section]
\newtheorem{lemma}[prop]{Lemma}
\newtheorem{coro}[prop]{Corollary}
\newtheorem{remark}[prop]{Remark}
\theoremstyle{plain}
\newtheorem{definition}[prop]{Definition}
\def \t0{\rightarrow 0} 
\def \hal{\frac{1}{2}}
\def \div{\mathrm{div} \,} 
\def \1{\mathbf{1}} 
\def \mc{\mathcal}
\def \dist{\mathrm{dist}}
\def\nab{\nabla}
\newcommand{\dip}{\mathrm{dip}}
\newcommand{\dd}{\mathrm{d}}
\newcommand{\dE}{\mathbb{E}}
\newcommand{\bad}{\mathrm{bad}}
\newcommand{\dR}{\mathbb{R}}
\newcommand{\mult}{\mathrm{mult}}
\newcommand{\Eul}{\mathrm{Eul}}
\newcommand{\Eulc}{\mathrm{Eul}_{\mathrm{c}}}
\newcommand{\odd}{\mathrm{odd}}
\newcommand{\Msf}{\mathsf{M}}
\newcommand{\Anch}{\mathrm{Anch}}
\newcommand{\Leaves}{\mathrm{Leaves}}
\newcommand{\Ksf}{\mathsf{K}}
\newcommand{\ksf}{\mathsf{k}}
\newcommand{\msf}{\mathsf{m}}
\newcommand{\Root}{\mathrm{Root}}
\newcommand{\Peeled}{\mathrm{Peeled}}
\newcommand{\Zsf}{\mathsf{Z}}
\newcommand{\Clus}{\mathrm{Clus}}
\newcommand{\Res}{\mathrm{Res}}
\newcommand{\abs}{\mathrm{abs}}
\newcommand{\Coarse}{\mathrm{Coarse}}
\newcommand{\inter}{\mathrm{inter}}
\newcommand{\intra}{\mathrm{intra}}
\newcommand{\Htrees}{\mathrm{Htrees}}
\newcommand{\br}{\mathrm{br}}
\newcommand{\aux}{\mathrm{aux}}
\newcommand{\loc}{\mathrm{loc}}
\newcommand{\Cut}{\bar{R}_{\beta,\lambda}}
\newcommand{\Good}{\mathrm{GoodP}}
\newcommand{\even}{\mathrm{even}}
\newcommand{\PF}{\mathrm{PF}}
\newcommand{\KA}{\mathrm{KA}}
\newcommand{\Child}{\mathrm{Child}}
\def\XXint#1#2#3{{\setbox0=\hbox{$#1{#2#3}{\int}$}
		\vcenter{\hbox{$#2#3$}}\kern-.5\wd0}}
\def \XN{\vec{X}_N}
\def \YN{\vec{Y}_N}
\def\Esp{\mathbb{E}} 
\def \ZNbeta{Z_{N,\beta}}
\def\g{\mathsf{g}}
\def\Psf{\mathsf{P}}
\def\indic{\mathbf{1}}
\def \E{\mathcal{E}}
\def\namedlabel#1#2{\begingroup
	#2%
	\def\@currentlabel{#2}%
	\phantomsection\label{#1}\endgroup
}
\def \epsilon{\varepsilon}
\def\F{\mathsf{F}}
\def\loc{\mathrm{loc}}
\def \diam{\mathrm{diam}}
\def\F{\mathsf{F}}
\def\K{\mathsf{K}}
\def\Esp{\mathbb{E}}
\def\P{\mathbb{P}_{N,\beta}^{\lambda}}
\def\Z{Z_{N,\beta}^{\lambda}}
\newcommand{\vr}{\vec{r}}
\newcommand{\vro}{\vec{\rho}}
\newcommand{\ve}{\varepsilon}
\newcommand{\Id}{\mathrm{Id}}
\newcommand{\good}{\mathrm{good}}
\newcommand{\sub}{\mathrm{sub}}
\begin{document}
	\title[Multipole and BKT Transition for the 2CP]{Multipole and Berezinskii-Kosterlitz-Thouless Transitions in the Two-component Plasma}
	\author{Jeanne Boursier}
	\address{Columbia University, Mathematics Department, New York, NY 10027, USA.}
	\email{jb4893@columbia.edu}
	
	\author{Sylvia Serfaty}
	\address{Sorbonne Universit\'e,  CNRS,  Laboratoire Jacques-Louis Lions (LJLL), F-75005 Paris,
		France \\ \& Institut Universitaire de France \&
		Courant Institute of Mathematical Sciences, 251 Mercer street, New York NY 10012.}
	\thanks{S.S. was supported by NSF grant DMS-2000205 and by the Simons Foundation through the Simons Investigator program.}
	\email{serfaty@cims.nyu.edu}
	\maketitle

	\begin{abstract}
		We study the two-dimensional two-component  Coulomb gas in the canonical ensemble and at inverse temperature $\beta>2$. In this regime, the partition function diverges and the interaction needs to be cut off at a length scale $\lambda\in (0,1)$. Particles of opposite charges tend to pair into dipoles of length scale comparable to $\lambda$, which themselves can aggregate into multipoles. Despite the slow decay of dipole--dipole interactions, we construct a convergent cluster expansion around a hierarchical reference model that retains only intra-multipole interactions. This yields a large deviations result for the number of $2p$-poles as well as a sharp free energy expansion as $N\to\infty$ and $\lambda\to0$ with three contributions: (i) the free energy of $N$ independent dipoles, (ii) a perturbative correction, and (iii) the contribution of a non-dilute subsystem.

		The perturbative term has two equivalent characterizations: (a) a convergent Mayer series obtained by expanding around an i.i.d.\ dipole model; and (b) a variational formula as the minimum of a large-deviation rate function for the empirical counts of $2p$-poles. The Mayer coefficients exhibit transitions at $\beta_p=4-\tfrac{2}{p}$, that accumulate at $\beta=4$, which corresponds to the Berezinskii-Kosterlitz-Thouless transition in the low-dipole-density limit. At $\beta=\beta_p$ the $p$-dipole cluster integrals switch from non-integrable to integrable tails.
		
		The non-dilute system corresponds to the contribution of large dipoles: we exhibit a new critical length scale $R_{\beta, \lambda}$ which transitions from $\lambda^{-(\beta-2)/(4-\beta)}$   to $+\infty$ as $\beta$ crosses the critical inverse temperature $\beta=4$, and which can be interpreted as the maximal scale such that  the dipoles of that scale form a dilute set. 
	\end{abstract}
	\tableofcontents

	\section{Introduction}
	
	\subsection{Setting of the problem and objectives}
	We consider   the canonical continuum Coulomb gas given as the ensemble with probability distribution proportional to
	\begin{equation}\label{ZGP}\exp\left(- \beta \F(\vec{X}_N, \vec{Y}_N) \right) \dd \vec{X}_N \dd \vec{Y}_N ,\end{equation}
	for  configurations $(\XN, \YN)$, with $\vec{X}_N=(x_1, \dots, x_N) \in \Lambda^N $ and $\vec{Y}_N=(y_1, \dots, y_N)\in \Lambda^N$  of $N$ positive and $N$ negative particles  in the square $\Lambda = [0,\sqrt{N}]^2$ in $\R^2$, with logarithmic interaction energy
	\begin{equation}\label{1.1} \F(\vec{X}_N, \vec{Y}_N)= \hal \left( \sum_{i\neq j} - \log |x_i-x_j|-\log |y_i-y_j|+ 2 \sum_{i,j} \log |x_i-y_j|\right),\end{equation}
	where $\dd \vec{X}_N$ and $\dd \vec{Y}_N$ denote the Lebesgue measure on $\Lambda^N$. Here, the parameter $\beta>0$ is the inverse temperature.
	
	This model is known to exhibit a phase transition at $\beta=4$ (in the low-dipole-density limit), called the Berezinskii-Kosterlitz-Thouless (BKT) transition, discovered independently in 
	the 1970s in  \cite{Kosterlitz1974,Kosterlitz1973} and  in \cite{Berezinsky1970fr}.	
	The integral of the expression in \eqref{ZGP} (called partition function) diverges as soon as $\beta \ge 2$, so we introduce a small‑scale cutoff $\lambda\in (0,1)$. This divergence corresponds to the dipole transition that occurs at $\beta=2$, between a situation with ``free charges'' for $\beta<2$ and a situation with  pairing  of opposite-sign particles (dipoles)  of length scale $\lambda$ for $\beta \ge 2$.
	It was predicted in   \cite{BenfattoGallavottiNicolo,GallavottiNicolo} that such a system also has an infinite  sequence of transitions at
	\begin{equation}\label{defbetap}\beta_p \coloneqq  4 - \frac{2}{p},\quad \text{where $p\in \mathbb{N}^*$}, \end{equation}
	which corresponds to the divergence (at small scale) of the free energy of $2p$-poles (i.e.~sets of $p$ dipoles clustered together).
	Indeed, formal computations indicate that the Gibbs weight of a dipole of size $\lambda$ is $\sim \lambda^{2-\beta}$ which diverges if and only if $\beta\geq \beta_1= 2$ (logarithmic divergence at $\beta_1$). Similarly, the Gibbs weight of a $2p$-pole of size $\lambda$ is $\sim \lambda^{p(2-\beta)+2(p-1)}$ which diverges if and only if $\beta\geq \beta_p= 4-\frac{2}{p}$ (logarithmic divergence at $\beta_p$).

	Most of the works on that topic have analyzed the grand canonical lattice Coulomb gas, via the sine-Gordon transformation. The existence of the BKT phase transition was proven in the pioneering paper of Fr\"ohlich and Spencer \cite{frohlichspencer}. 
	In that paper, the authors write:
	\begin{quote}
		``We believe
		that the techniques of Section 5 will eventually permit us to prove convergence of an expansion of the two-dimensional Coulomb gas in terms of
		neutral multipole configurations, at low density and low temperature,
		designed to imply the existence of the Kosterlitz-Thouless transition. But
		the required combinatorial and refined electrostatic estimates are still
		missing.''
	\end{quote}
	Moreover, the status of the transitions at $\beta_p$ predicted by \cite{BenfattoGallavottiNicolo,GallavottiNicolo} has remained debated, as footnotes in \cite{fisherlilevin} reveal.

	In this paper, we rigorously establish for the first time the multipole transitions at $\beta_p$, as well as aspects of the BKT transition at $\beta=4$, and we go further by providing a new free energy expansion and large deviations estimates in the asymptotics of $\lambda\to 0$, by bringing the missing pieces of the puzzle alluded to in \cite{frohlichspencer}: new electrostatic estimates combined with diagrammatic expansions.
	
	We also introduce for $\beta\in (2,\infty)$, a new {\it critical length scale}
	\begin{equation}
		\label{def:Rlambda}
		R_{\beta,\lambda}= \begin{cases} \lambda^{-\frac{\beta-2}{4-\beta}} & \text{ if $\beta<4$ } \\ +\infty &\text{ if $\beta \ge 4$}\end{cases}\end{equation}
	that changes as $\beta$ crosses the critical inverse temperature $4$, and which can be interpreted as the  maximal scale such that  the dipoles of that scale form a dilute set, i.e.~the  dipole size below which dipoles are typically far from other dipoles of the same size.  
	
	Our first main result is  an expansion of the partition function $Z_{N,\beta}^\lambda$ (of the model suitably truncated at $\lambda$): roughly, we show that for $\beta\in(\beta_p,\beta_{p+1}]$,
	\begin{equation}
		\label{freeexpintro}
		\log Z_{N,\beta}^\lambda=\log N! + \text{free energy of $N$ i.i.d.~dipoles} + \text{Mayer series} + \text{non-dilute contribution}.\end{equation}
	This formula is a perturbative expansion around an i.i.d.~dipole model. The exact statement we prove is given in Theorem \ref{theorem:expansion} below. The Mayer series terms (named after J.\,E.~Mayer, who introduced the cluster expansion in statistical mechanics \cite{mayer}) are the non-divergent perturbative terms, corresponding to the contributions of clusters of $k$ dipoles with $k\le p$.	The non-dilute contribution is an error term which we show is of order $NR_{\beta,\lambda}^{-2}$ (with logarithmic corrections at each $\beta_p$), which is sharp. In the regime $\beta \ge 4$, we can keep the Mayer series terms up to an arbitrary order $p_0$, and obtain an error of order $\lambda^{2p_0}$.
	The transition at  $\beta_p$ corresponds to the transition from  the $p$-dipole clusters at large scales being non-integrable to them being  integrable (as expected from the way  $\beta_p$ was introduced), and dominant over the non-dilute contribution  $NR_{\beta, \lambda}^{-2}$.

	We also provide a probabilistic interpretation of the multipole transitions: we show a large deviations result on 
	the fraction of $2p$-poles,  which transitions from $R_{\beta,\lambda}^{-2}$ to   $\lambda^{2(p-1)}$ at $\beta_p$. The Mayer series term in \eqref{freeexpintro} is interpreted as the minimum of the rate function associated with this large deviations result. The precise statements are given in Theorem \ref{theorem:LDP} below. \medskip

	The paper develops a framework tailored to long-range dipole interactions. To our knowledge, this is the first non-iterated cluster expansion with long-range interactions. The key novelties are:
	\begin{itemize}
		\item\textit{Matching compatible with cluster expansion.} We use a matching between positive and negative charges (forming dipoles) which is compatible with the cluster expansion, i.e.~which enjoys a nice separation of variables property.
		\item \textit{Hierarchical multipole model.} A direct perturbative expansion around an i.i.d.~dipole model fails to converge. Instead,  we introduce a hierarchical multipole reference model that retains only intra-multipole interactions and we perform a convergent perturbative (cluster) expansion around it. The non-divergent diagrams from this multipole Mayer series are shown to agree term-by-term with those of the dipole Mayer series.
		\item \textit{Removing divergences by energy truncation.} For $\beta\in(\beta_p,\beta_{p+1})$, the $k$-dipole clusters with $k>p$ are divergent, so a naïve multipole cluster expansion fails. The idea is to ``sandwich'' the system between two auxiliary models whose interaction ranges lie strictly below the critical scale $R_{\beta,\lambda}$, thereby removing large-scale divergences before running the expansion. We show that the two auxiliary models have identical Mayer series up to a sharp error term. These approximations are obtained via new electrostatic estimates which provide effective upper and lower bounds 
		for the energy.

		\item \textit{Extracting an ideal dilute subsystem. }We isolate an “ideal’’ subsystem  (dipoles of length smaller than the critical scale and multipoles of bounded cardinality, subject to additional constraints) and perform the cluster expansion solely on this part; we then show that the complement has negligible density with high probability.
		\item \textit{Exploiting cancellations.} Even after truncation, the dipole-dipole interaction remains long-range: it decays in inverse distance squared. However it exhibits a cancellation by averaging over the angle between the dipoles. Thus, the Mayer bond splits into a non-integrable odd contribution and an even integrable contribution. Hence, by a symmetry argument, we reduce to Eulerian graphs with odd weight (every vertex has even degree), supplemented by summable corrections from the even part. This Eulerian structure yields convergent integrals and a summable cluster series, with optimal bounds on the contribution of large diagrams.
	\end{itemize}

	\subsection{Physical background}

	\subsubsection{The BKT transition}
	
	The Berezinskii--Kosterlitz--Thouless (BKT) transition (see, e.g.,
	\cite{bietenholz2016berezinskii}, the Nobel lecture \cite{RevModPhys.89.040501},
	and the comprehensive text \cite{alastueybook}) is a paradigmatic phase
	transition in two dimensions.  It occurs in a wide range of systems, including
	thin superfluid \(^4\)He films, superconducting films, and defect-mediated
	melting of two-dimensional crystals.  Its distinctive feature is that it is not
	driven by the spontaneous breaking of a continuous symmetry, but by the
	(un)binding of topological defects.
	
	A discrete toy model for a two-dimensional superfluid is the planar rotor
	(or XY) model: to each site one assigns a unit spin \(e^{i\theta_x}\in\mathbb S^1\),
	with a nearest-neighbor energy favoring alignment.  The model has a
	global \(\mathrm U(1)\) symmetry.  By the Mermin--Wagner theorem, the
	magnetization vanishes at every positive temperature, so there is no
	conventional long-range ordered phase.  The key insight of
	Berezinskii and of Kosterlitz and Thouless was that two-dimensional systems can
	nevertheless exhibit a low-temperature phase with \emph{quasi-long-range order} \cite{Berezinsky1970fr,Kosterlitz1973}.
	
	Vortices are point defects of the \(\mathbb S^1\)-valued phase field.  On the
	lattice they correspond to plaquettes around which \(\theta\) winds by
	\(\pm 2\pi\), i.e.\ defects of degree \(+1\) (vortex) or \(-1\) (antivortex).
	Since the XY Hamiltonian is a discrete Dirichlet energy, vortices interact
	logarithmically and can be recast as a neutral two-component Coulomb gas of
	charges \(n=\pm 1\) \cite{Kosterlitz1974,Kosterlitz1973,kennedy}. A single
	vortex has a logarithmically divergent energy \(\beta \Delta E\approx \pi K_0\log(L/a)+\beta E_c\), where \(K_0\) is a constant called the spin-wave stiffness and \(E_c\) is a core energy (corresponding to self-interactions), while the entropic
	gain from the \(\sim(L/a)^2\) possible locations is
	\(\Delta S\approx 2\log(L/a)\). The competition
	\[
	\Delta F=(\pi K_0-2)\log(L/a)+\beta E_c
	\]
	shows that isolated vortices are suppressed at low temperature
	(\(\pi K_0>2\)), while at higher temperature the entropic gain wins, neutral
	pairs unbind, and free vortices proliferate. This topological change is visible
	in correlations: below the BKT transition one has algebraic decay
	(quasi-long-range order), whereas above it correlations decay exponentially
	\cite{mcbryanspencer,bricmont,brydgesfederbush,frohlichspencer,mkp1,mkp2,KKN,garban2020statistical,garban2020quantitative}.
	
	Let us emphasize that in the lattice XY model, vortex charges live on the dual
	lattice (vorticity on plaquettes), whereas in continuum superfluid models, the
	vortex core has a finite microscopic size, which provides the short-distance
	cutoff \cite{RevModPhys.89.040501}.

	\subsubsection{The renormalization group picture}\label{sub:renorm}
	
	Kosterlitz and Thouless \cite{Kosterlitz1973,Kosterlitz1974} made the above heuristic quantitative by deriving RG flow equations for two scale-dependent couplings: the stiffness~\(K(l)\) and the vortex fugacity~\(z(l)\), where \(l=\log(r/a)\) is the logarithmic length scale. The stiffness \(K(l)\) measures the free-energy cost of a long-wavelength twist of the phase once all vortex--antivortex dipoles up to size \(\sim ae^l\) have been integrated out. The fugacity is initially \(z(0)=e^{-\beta E_c}\), and it evolves as the definition of the core is enlarged by the RG.

	In the dilute-vortex regime,
	integrating out vortex--antivortex dipoles shell by shell gives, to leading
	order in~\(z\),
	\begin{equation}\label{eq:ODE}
		\frac{\dd K^{-1}}{\dd l}=4\pi^3 z^2 + O(z^4),\qquad
		\frac{\dd z}{\dd l}=(2-\pi K)\,z + O(z^3).
	\end{equation}
	The first equation expresses dielectric screening: neutral dipoles reduce the
	effective stiffness. The second encodes the energy--entropy competition
	discussed above: vortices are irrelevant when \(\pi K>2\) and proliferate
	when~\(\pi K<2\). Notice that $K=\frac{2}{\pi}$ and $z$ small corresponds to $\beta=4$ and $\lambda$ small in our normalization. The phase portrait consists of hyperbolic flow lines
	\(x^2-\tilde y^2=C\) (in suitable linearized coordinates near the fixed point),
	separated by a critical separatrix flowing into
	\((K,z)=(2/\pi,\,0)\).

	\begin{figure}[H]
		\centering
		\includegraphics[width=0.4\textwidth]{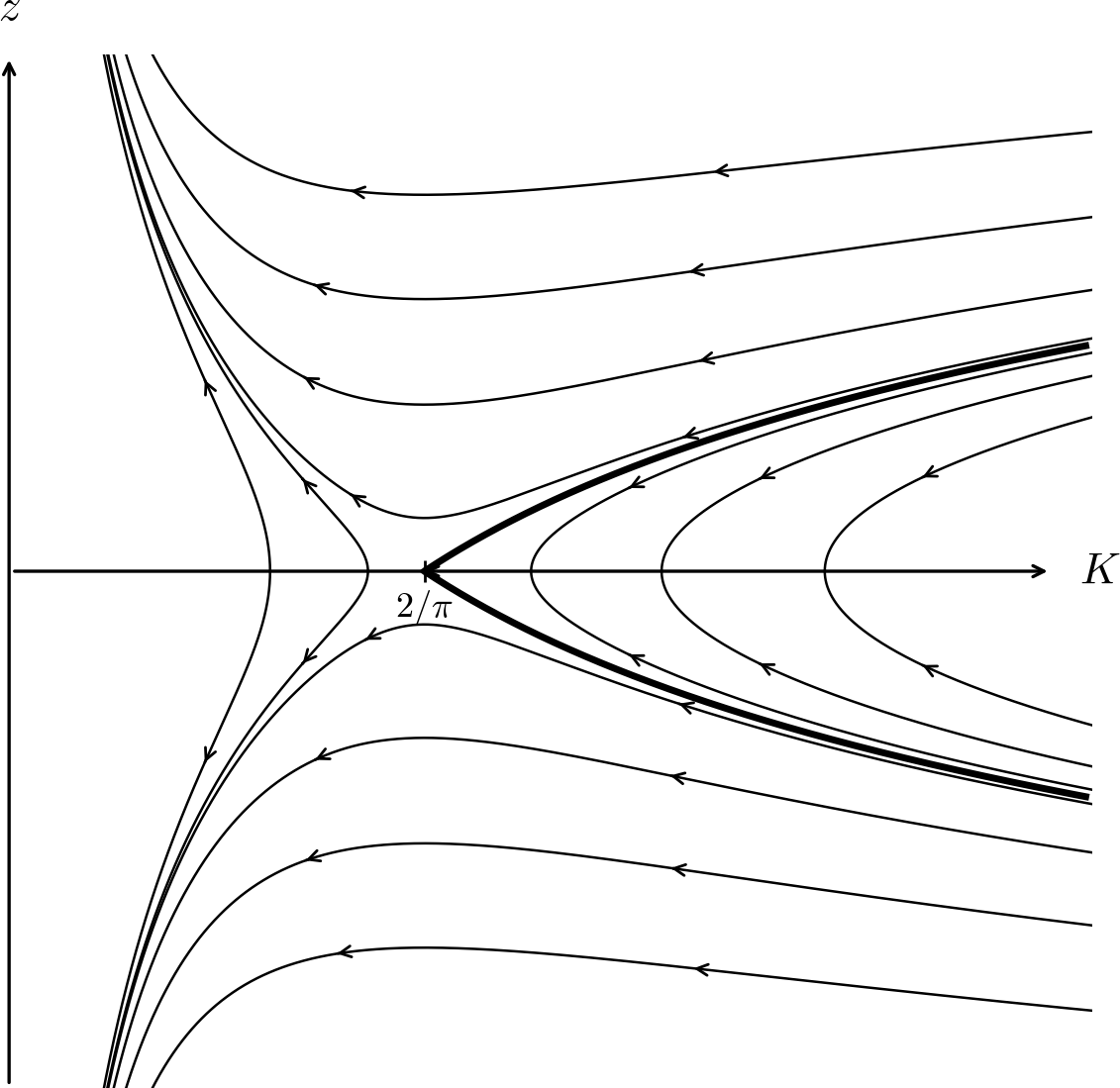}
		\caption{Kosterlitz--Thouless approximate RG flow in the \((K,z)\)-plane. The bold curve is the separatrix of the ODE \eqref{eq:ODE}.}
	\end{figure}

	Already at leading order, these equations capture the main signatures of the
	transition. On the separatrix the fugacity flows to zero, so the infrared
	theory is a free Gaussian boson with renormalized stiffness
	\(K(\infty)=2/\pi\); this gives algebraic correlations
	\(\langle e^{i\theta(r)}e^{-i\theta(0)}\rangle\sim r^{-\eta}\) with
	\(\eta=1/(2\pi K(\infty))=1/4\). On the disordered side, the time spent
	drifting along the nearly-critical separatrix before the flow runs away gives a
	correlation length with an essential singularity,
	\(\xi\sim a\exp\bigl(b/\sqrt{|\beta-\beta_c|}\bigr)\). Finally, the
	stiffness (which is proportional to the superfluid density) jumps
	discontinuously from the universal value \(K=2/\pi\) to zero at the
	transition. These predictions were confirmed experimentally in the measurements of Bishop and Reppy on thin helium films~\cite{PhysRevB.22.5171}.

	\subsubsection{Multipole transitions}

	Using the sine-Gordon transformation, Gallavotti and Nicol\`o \cite{GallavottiNicolo}
	predicted a sequence of transitions at \(\beta_p(0)=8\pi\bigl(1-\frac{1}{2p}\bigr)\)
	(corresponding to \(\beta_p=4-\frac{2}{p}\) in our normalization). They analyze the
	connected (truncated) correlation functions of the sine-Gordon interaction density
	\(\cos(\sqrt{\beta}\phi)\) and show that for \(\beta>\beta_p(0)\), the coefficients up to
	order \(2p\) admit a thermodynamic limit; in particular, for \(\beta>8\pi\) all
	coefficients are finite and the small vortex activity gas can be \emph{formally} reorganized as a
	gas of multipoles. These transitions are also studied on the grand-canonical Coulomb gas itself in
	\cite{AlastueyCornu1992_KT_CoulombGas,AlastueyCornu1997_CriticalLine_ZeroDensity,AlastueyCornu1997_2D_CG_PartI,AlastueyForrester1995_Testbench_NestedDipole,AlastueyForrester1995_Correlations_LogGas,AlastueyBalleneggerCornuMartin2003_ScreenedCluster} via cluster expansions. Non‑rigorous resummation techniques are developed and shown to agree with the RG predictions.

	\subsection{Rigorous approaches to the transition}

	A large part of the rigorous literature concerns the discrete Coulomb gas and exploits either the sine-Gordon representation or a duality with height functions.

	\subsubsection{Existence of two phases}

	The seminal work of Fr\"ohlich--Spencer \cite{frohlichspencer} gives the first rigorous
	construction of the dipole regime via an inductive multiscale expansion in the
	sine-Gordon representation. Integrating the field scale-by-scale, they reorganize the
	gas as a convex superposition of \emph{dilute gases of neutral molecules} (multipoles) of
	varying size, with \emph{renormalized activities} that are small when $\beta$ is large
	(and $z$ sufficiently small); this yields a convergent cluster expansion.
	As a main consequence, they prove \emph{two-sided power-law bounds} on correlations of
	\emph{fractional} external charges (vertex operators), i.e.\ insertions
	$\exp(i\xi(\phi_0-\phi_x))$ with $\xi\notin\mathbb{Z}$: the corresponding two-point
	function decays like $|x|^{-c\,\beta\,\mathrm{dist}(\xi,\mathbb{Z})^2}$ up to constants.
	We refer to \cite{KharashPeled} for a modern exposition. On the other hand, the exponential decay of correlations at small $\beta$ and small activity is proven in \cite{brydgesfederbush}.
	
	Refinements of the Fr\"ohlich--Spencer multiscale scheme due to Marchetti, Klein and
	P\'erez \cite{mkp1,mkp2} substantially enlarge the region of the dipole phase where such
	power-law bounds can be proved, eventually reaching $\beta>8\pi$ ($\beta>4$ in our normalization) for small enough activity, though still not the full BKT
	transition line at positive activity. On the spin side, the related complex-translation
	method of McBryan--Spencer \cite{mcbryanspencer} provides general polynomial upper bounds
	for two-point functions in two-dimensional $O(n)$ models.
	
	\subsubsection{Duality with integer-valued height functions}\label{sec:duality}

	On planar graphs, the Villain $O(2)$ model admits a dual \emph{random height function}
	representation: its partition function coincides (up to an explicit constant) with that of an
	integer-valued Gaussian free field on the dual
	graph, with coupling $\kappa=1/\beta$.  In this picture, vortices become height defects, and
	the BKT transition can be interpreted as a \emph{roughening/depinning}
	(localization--delocalization) transition for the integer-valued surface
	\cite{frohlichspencer,aizenman2021depinning}.  See also
	\cite{Lammers2022ADT,lammers2023bijecting} for a general delocalization theory for planar
	integer-valued height functions and for a bijective perspective linking the BKT transition of
	XY/Villain models to that of their dual height models.

	Garban and Sep\'ulveda \cite{garban2020quantitative} construct a coupling in which the Villain field admits a decomposition
	into two independent contributions: a Gaussian spin-wave part (a GFF) and a vortex-induced part
	obtained from the associated integer-valued $1$-form. Then the Villain is mapped into an integer-valued GFF with effective temperature $\beta_{\mathrm{eff}}(\beta)$. This decomposition allows one to write $1/ \beta_{\mathrm{eff}}(\beta)$ as the sum of $1/\beta$ and a vortex contribution. They show that the vortex contribution is $\geq e^{-2\pi^2\beta+o(\beta)}$ as $\beta \to\infty$ and further conjecture, in agreement with RG predictions, that the true asymptotic is
	$\beta-\beta_{\mathrm{eff}}(\beta)=e^{-\pi^2\beta+o(\beta)}$ as $\beta\to\infty$.

	\subsubsection{Implementing the renormalization group}
	
	One can make the RG picture from Section~\ref{sub:renorm} precise in the
	sine-Gordon representation by decomposing the Gaussian field scale-by-scale
	and tracking the induced flow of the effective interaction.
	
	A classical framework for this is the Brydges--Yau approach
	\cite{brydgesyau}, combining cluster expansions with scale-dependent norms.
	Dimock--Hurd \cite{dimock2000sine} implemented this program at small activity,
	obtaining a convergent series for the free energy.  Refining the analysis of charged
	clusters, Falco proved convergence of the free energy along the BKT curve at
	small activity \cite{falco2012kosterlitz} and established the predicted
	power-law decay (with multiplicative logarithmic corrections) for
	fractional-charge correlations, including the critical exponents
	\cite{falco2013critical}.
	
	More recently, Bauerschmidt, Park and Rodriguez developed an RG approach for
	\emph{discrete} height models dual to Coulomb gases, valid beyond the
	small-activity regime.  In the discrete Gaussian model at inverse temperature
	\(1/\beta\) (so that \(\beta\) is the inverse temperature of the dual Coulomb
	gas), they prove that the field converges under diffusive rescaling to a
	Gaussian free field with effective temperature
	\(\beta_{\mathrm{eff}}(\beta)=\beta+O(e^{-c\beta})\) as
	\(\beta\to\infty\)
	\cite{bauerschmidt2022discrete,bauerschmidt2022discrete2}; as emphasized in
	\cite{garban2020quantitative}, the \(O(e^{-c\beta})\) correction is the
	vortex contribution.	
	
	An alternative to the scale-by-scale cluster expansion is to run the RG in continuous scale via the Polchinski flow (a functional differential equation governing the evolution of the effective interaction under Gaussian convolution).  This viewpoint originates with Brydges--Kennedy \cite{brydgeskennedy}; more recently, Bauerschmidt--Bodineau \cite{bauerschmidt2021log} use it to propagate functional inequalities across
	scales, obtaining Log--Sobolev inequalities for the sine-Gordon model at
	\(\beta<6\pi\) (\(\beta<3\) in our normalization).  They expect the result
	to extend up to \(\beta<8\pi\) (the BKT value \(\beta=4\) in our
	conventions), but the Mayer combinatorics becomes substantially more involved
	for \(\beta\in(6\pi,8\pi)\).  We face an analogous threshold at \(\beta=3\):
	below it, no cancellations in the activities are needed; above it, one must
	exploit more delicate ones.
	
	\subsubsection{Comparison with our model}
	Our setup differs slightly in that the total number of particles is fixed: we work with a
	system of size $L=\sqrt{N}$  containing exactly $2N$ points. The cutoff
	parameter $\lambda$ plays a role analogous to a lattice mesh size in discrete models. In
	contrast, the discrete grand-canonical Coulomb gas studied in
	\cite{garban2020quantitative} is defined on the unit lattice (mesh size $1$) in a box of
	side length $\sqrt{N}$. In that model, the number of particles is random and, for large
	$\beta$, is $\asymp e^{-c\beta}N$ for some explicit constant $c>0$ (with high
	probability) \cite{garban2020quantitative}. Interpreting $\lambda$ as a mesh, the number of sites is $\asymp N/\lambda^2$, so fixing $2N$ particles corresponds to a density $\asymp \lambda^2$; matching with the discrete model's density $\asymp e^{-c\beta}$ suggests the identification $\lambda^2 \asymp C e^{-c\beta}$. This explains why, as written above, one expects that $\beta-\beta_{\mathrm{eff}}(\beta)\asymp e^{-\pi^2\beta}$, where $\beta-\beta_{\mathrm{eff}}(\beta)$ is the effective temperature of the integer-valued Gaussian free field.

	\subsubsection{Prior results on our model}
	Prior to our work, the ensemble \eqref{ZGP} was  studied  in the regime $\beta <2$ in  \cite{GunPan,DeutschLavaud,LSZ}.
	The results of \cite{LSZ}, building on important insights from \cite{GunPan} and techniques developed for the study of the one-component Coulomb gas in \cite{SS2d,RougSer,LS1,leble2018fluctuations}, show an expansion of $\log \ZNbeta^0$ as $N \to \infty$ up to $o(N)$, 
	with the order-$N$ coefficient identified as the minimum of a large deviations rate function, improving on \cite{DeutschLavaud,GunPan}, which captured only the leading $N\log N$ term. This result was complemented by 
	a large deviations principle on point processes, which characterizes a situation with interacting particles, with competition between the attraction of opposite charges and the entropic repulsion. This corresponds to the situation of ``free charges'' for $\beta<2$. 
	
	In  our prior paper \cite{boursier2024dipole}, using large deviations techniques, a description of dipoles based on nearest-neighbor graphs inspired by \cite{GunPan} as well as electrostatic estimates, we showed that for $\beta \ge 2$ the system concentrates on configurations with a large proportion of dipoles of size $O(\lambda)$, and proved the  expansion 
	$$\log Z_{N,\beta}^\lambda=\log N!+ \text{free energy of $N$ i.i.d.~dipoles} +N o_\lambda(1),$$
	where $o_\lambda(1)$ is a positive power of $\lambda$ for $\beta>2$ and a positive power of $|\log \lambda|^{-1}$ for $\beta=2$.

	\medskip
	
	{\bf Acknowledgments:} The authors are particularly indebted to  Ofer Zeitouni for many invaluable discussions that greatly benefited this work.
	They also thank Christophe Garban, Tom Spencer, Angel Alastuey, Eyal Lubetzky and Ahmed Bou-Rabee for very helpful discussions.

	\section{Definitions, method of proof and main results}
	
	\subsection{Model}
	We now describe the precise model we study, by providing the  definition of the interaction truncation introduced in \cite{boursier2024dipole} and adopted here.
	
	Let us for shortcut always denote 
	\begin{equation} 
		\g(x)=-\log |x|,
	\end{equation}
	and we will abuse notation by considering $\g$ as either a function of $\R^2$ or of $\R$ depending on the context.
	
	Truncating the interaction involves introducing a small length scale $\lambda\in (0,1)$ and {\it renormalizing} the divergent part of the energy. A natural  way is to truncate the logarithmic kernel at a distance $\lambda$ and consider
	\begin{equation}\label{eq:F2}   \hal \sum_{i,j}\left(\min( \g(x_i-x_j) , \g(\lambda))+ \min(\g(y_i-y_j), \g(\lambda)) -2 \min(  \g(x_i-y_j) , \g(\lambda))\right).
	\end{equation}
	
	Instead, it is much more convenient and natural to consider charges smeared on disks of radius $\lambda\in (0,1)$, with $\lambda$ small, interacting otherwise in the normal Coulomb fashion. Let $\chi$ denote a  radial nonnegative function, supported in  the unit ball $B(0,1)$ and such that $\int \chi =1$. We will only assume that $\chi$ is bounded, for instance $\chi$ could be the normalized indicator function of $B(0,1)$.
	We  denote $\delta_{z}^{(\lambda)}= \frac{1}{\lambda^2} \chi ( \frac{\cdot -z}{\lambda})$, which is a measure of mass $1$ supported on $B(z, \lambda)$. We then  let 
	\begin{equation} \label{defkappa}
		\kappa\coloneqq  \iint \g(x-y) \delta_0^{(1)}(x)\delta_0^{(1)}(y),\end{equation}
	and observe, by scaling, that 
	\begin{equation}\label{geta0}\iint \g(x-y) \delta_0^{(\lambda)} (x) \delta_0^{(\lambda)}(y)= \g(\lambda)+ \kappa.\end{equation} 
	
	The energy we consider is
	\begin{equation}\label{eq:F3}
		\F_\lambda(\XN, \YN)= \hal \iint \g(x-y) \dd\left( \sum_{i=1}^N \delta_{x_i}^{(\lambda)}-\delta_{y_i}^{(\lambda)}\right)(x) 
		\dd\left( \sum_{i=1}^N \delta_{x_i}^{(\lambda)}-\delta_{y_i}^{(\lambda)}\right)(y) -  N   ( \g( \lambda)+\kappa).
	\end{equation}
	Here, compared to \eqref{1.1} we have reinserted the self-interaction terms which are no longer infinite  but equal to $\g(\lambda) + \kappa$,
	and then subtracted them off.

	We will denote by 
	\begin{equation} \label{eq:defgeta} \g_\lambda(z)= \iint \g(x-y) \delta_0^{(\lambda)}(x)\delta_z^{(\lambda)} (y) = \g* \delta_0^{(\lambda)} * \delta_0^{(\lambda)} (z),
	\end{equation} the effective interaction between two points at distance $|z|$. The function $ \delta_0^{(\lambda)} * \delta_0^{(\lambda)}$ is radial and supported in $B(0, 2\lambda)$, hence  from the mean-value theorem or Newton's theorem, 
	\begin{equation}  \g_\lambda(z)=  \g(z) \quad \text{for } |z|\ge 2\lambda.\end{equation} 
	Thus we see that $\F_\lambda$ is the same as \eqref{eq:F2} except with 
	$\min(\g(x_i-x_j), \g(\lambda))$ replaced by $\g_\lambda(x_i-x_j)$, and if the  distances between points are  larger than $\lambda$,  the interactions coincide and  $\F_\lambda$ coincides with $\F$ in \eqref{1.1}.
	If $\lambda=0$ then the definition in \eqref{eq:F3}  coincides with $\F$, as proved in \cite{LSZ} -- this is essentially Newton's theorem and Green's formula. 
	
	Let us emphasize that this renormalization of the interaction is also more physically motivated. As stressed in \cite{RevModPhys.89.040501}, vortices in a two-dimensional superfluid are discs rather than point particles: the superfluid model itself therefore includes a short-distance cutoff parameter $\lambda$. Moreover, this choice is consistent with the regularization on the dual sine-Gordon side \cite{bauerschmidt2021log}.

	\smallskip
	
	We will thus work with \eqref{eq:F3} and study 
	\begin{equation} \label{defP}
		\dd\P= \frac{1}{\ZNbeta^\lambda} \exp\left(- \beta \F_\lambda(\vec{X}_N, \vec{Y}_N)\right) \dd \vec{X}_N \dd \vec{Y}_N
	\end{equation}
	for $\lambda\in (0,1)$ small and fixed, where 
	\begin{equation} \label{eq:ZRV}\Z= \int_{(\Lambda^2)^N} \exp\left(-\beta \F_\lambda(\XN, \YN)\right)  \dd \XN \dd\YN.
	\end{equation}
	When $\beta <2$ one can immediately set $\lambda=0$ and recover the model studied in \cite{LSZ}, but when $\beta>2$, $\log \ZNbeta^\lambda$  diverges as  $\lambda \to 0$.

	\subsection{Dipoles and multipoles}\label{sub:dipoles} 
	Throughout the paper, for any integer $n$, we write $[n]\coloneqq \{1,\dots,n\}$.
	
	The first question that arises is to properly define dipoles, i.e.~how one pairs the positive charges with the negative ones. 
	We  present a notion of dipoles based on a   stable matching algorithm. This is a different definition from \cite{boursier2024dipole} in which dipoles were defined based on the nearest-neighbor graph of the particles (irrespective of their sign), as in \cite{GunPan}.
	
	We introduce a matching between the positive charges $(x_i)_{i\in [N]}$ and the negative charges $(y_i)_{i\in [N]}$. We use the stable matching $\sigma_N$  defined as follows: starting from $r=0$, grow discs of radius $r$ around each $x_i$. If $B(x_i,r)$ is the first disc to touch a negative charge, say $y_j$, (if several, pick the one with the smallest index) then we set $\sigma_N(i)=j$. The particles $x_i$ and $y_j$ are then removed from the family of particles, and we repeat the operation. This procedure defines a random permutation of $[N]$.
	\begin{definition}\label{def:stable intro}
		We call {\it dipoles} the pairs $(x_i, y_{\sigma_N(i)})$, where $\sigma_N$ is the stable matching as above. 
	\end{definition}

	\begin{figure}
		\centering
		\fbox{\includegraphics[width=0.5\textwidth]{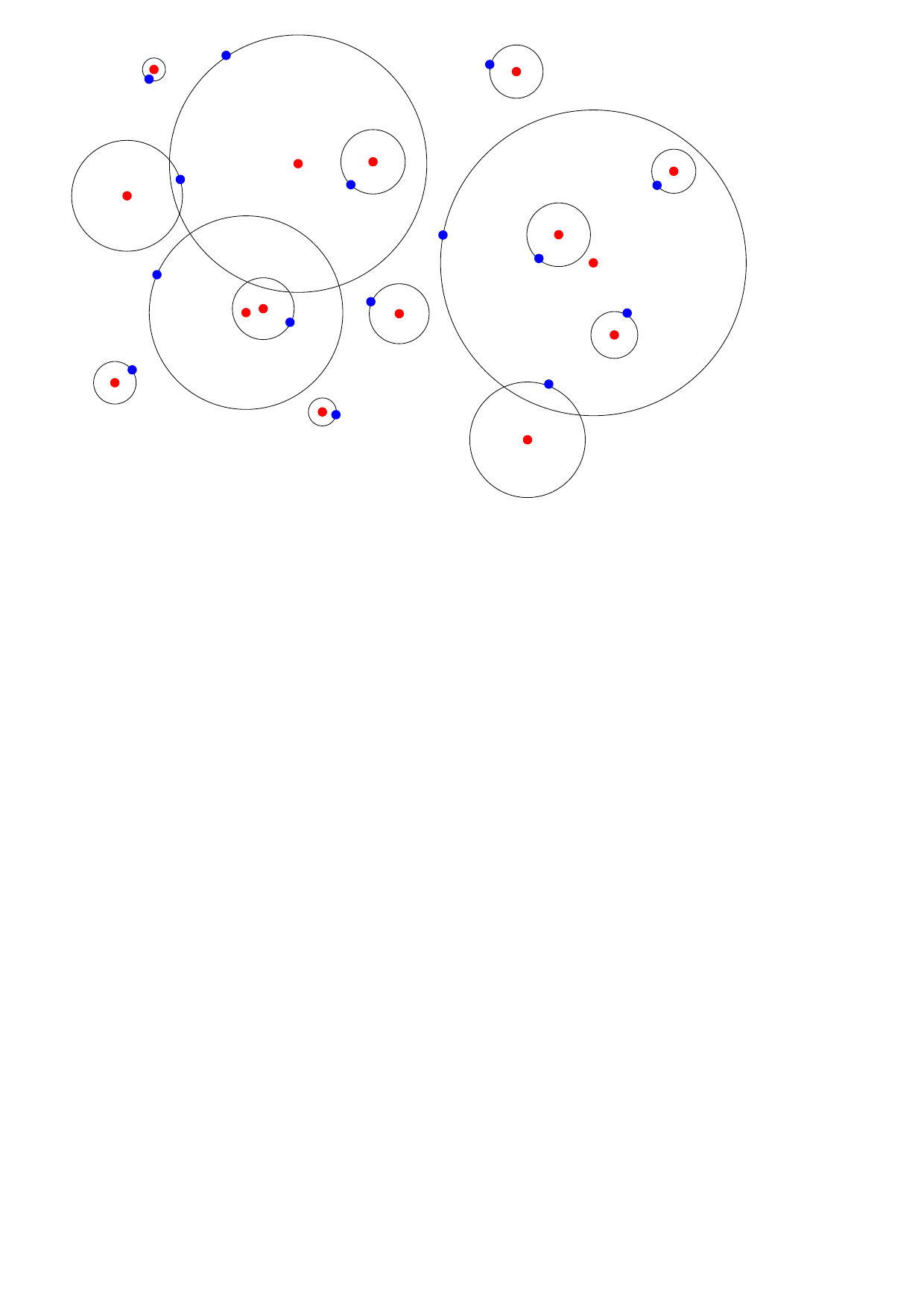}} 
		\caption{A stable matching. The positive charges are represented as red dots and negative charges as blue dots.}
	\end{figure}

	Without loss of generality, up to the multiplicative factor $N!$, one can reduce to the situation where $\sigma_N=\Id$. 
	The stable matching has the nice property (see Section  \ref{sub:stable})  that
	\begin{equation}\label{introAij}
		\{\sigma_N=\Id\}=\bigcap_{1\leq i,j\leq N:i\neq j}\mc{A}_{ij},
	\end{equation}
	where $\mc{A}_{ij}$ is an event depending only on $x_i, y_i, x_j, y_j$: it is the geometric condition that says that $x_i$ is matched to $y_i$ and $x_j$ to $y_j$ when performing the stable matching with $x_i , x_j$ and $y_i, y_j$ only. This decomposition of the geometric constraint of the matching $\sigma_N$ into the pair conditions $\mc{A}_{ij}$ is crucial.

	We now define the notion of $2p$-poles or multipoles which are groups of dipoles. Two dipoles belong to the same multipole if the distance between these two sets of points is, roughly speaking, of the same order as the smallest dipole size. This construction involves an arbitrary choice of a large parameter.

	\begin{definition}[Multipoles]\label{def:multipoles}
		Let $\sigma_N$ be the stable matching introduced in Definition \ref{def:stable intro}. Let $M>20$ be a  fixed number. For each $i=1,\ldots,N$, denote $\mathbf{r}_i\coloneqq |x_i-y_{\sigma_N(i)}|$. 
		
		\begin{enumerate} 
			\item For every $i, j \in [N]$ with $i\neq j$, define the event
			
			\begin{equation}\label{eq:defBij}
				\mathsf{B}_{ij}\coloneqq \{i\leftrightarrow_{\sigma_N} j\},
			\end{equation}
			where $i\leftrightarrow_{\sigma_N} j$ if
			\begin{equation*}
				\mathrm{dist}(\{x_i,y_{\sigma_N(i)}\},\{x_j,y_{\sigma_N(j)}\})\leq M \min(\max(\mathbf{r}_i,\lambda), \max(\mathbf{r}_j,\lambda)).
			\end{equation*}
			\item The relation $\leftrightarrow_{\sigma_N}$ defines an undirected graph on $[N]$.  
			The connected components of that graph are called multipoles. Those of size $k$ are called $2k$-poles or multipoles of size $k$ (or pure dipoles for $k=1$).
			\item Let $S$ be a subset of $[N]$. Define the event
			\begin{equation}\label{eq:defBC}
				\mathsf{B}_S\coloneqq \{S \text{ is connected in }([N],\leftrightarrow_{\sigma_N})\},
			\end{equation}
			which means that the elements of $S$ are in the same multipole.
		\end{enumerate}
	\end{definition}

	Note that to have a convergent cluster expansion, we will take $M$ large enough with respect to $\beta$.

	\begin{figure}[H]
		\centering
		\fbox{\includegraphics[width=0.4\textwidth]{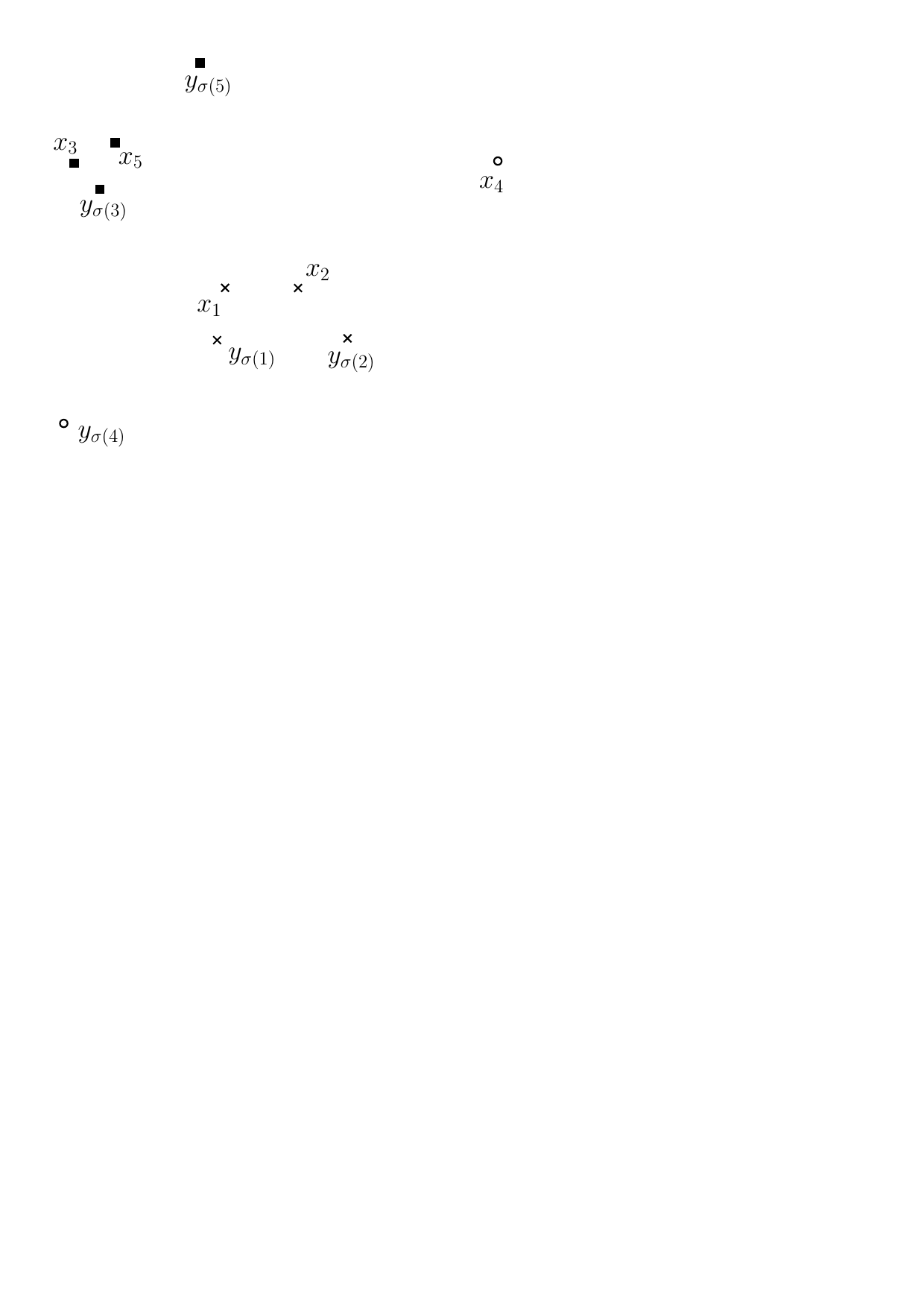}}
		\caption{Three multipoles: $\{1,2\}$, $\{4\}$, $\{3,5\}$}
		\label{fig:mult}
	\end{figure}
	These definitions are motivated by the fact that the free energy of a $k$-pole is a true $2k$-body integral and cannot be Taylor-expanded. Note that a large dipole (say of size $\gg 1$) which is not a multipole is still considered in a ``dipole state'' exactly like a small  dipole that is  well-separated from other dipoles. Such a large dipole may be surrounded by small dipoles that are very close to it, but that is still not considered a multipole.

	We also define the analogue of $\mathsf{B}_{ij}$ and $\mathsf{B}_S$ when $y_{\sigma_N(i)}$ is replaced by $y_i$.

	\begin{definition}\label{def:multipoles mc} Let $M>20$ be as in Definition \ref{def:multipoles}. For each $i=1,\ldots,N$, denote $r_i\coloneqq |x_i-y_i|$. 
		
		\begin{enumerate} 
			\item For every $i, j \in [N]$ with $i\neq j$, define the event
			\begin{equation}\label{eq:defBij mc}
				\mc{B}_{ij}\coloneqq \{i\leftrightarrow j\},
			\end{equation}
			where $i\leftrightarrow j$ if
			\begin{equation*}
				\mathrm{dist}(\{x_i,y_i\},\{x_j,y_i\})\leq M \min(\max(r_i,\lambda), \max(r_j,\lambda)).
			\end{equation*}
			\item Let $S$ be a subset of $[N]$. Define the event
			\begin{equation}\label{eq:defBC mc}
				\mc{B}_S\coloneqq \{S \text{ is connected in }([N],\leftrightarrow)\}.
			\end{equation}
		\end{enumerate}
	\end{definition}

	Observe that $\mathsf{B}_{ij}\cap \mc{A}_{ij}=\mc{B}_{ij}\cap \mc{A}_{ij}$, so that once we restrict to the event $\{\sigma_N=\Id\}$, one can replace $\mathsf{B}_{ij}$ and $\mathsf{B}_S$ by $\mc{B}_{ij}$ and $\mc{B}_S$.

	\subsection{Starting point of the method}\label{sub:idea}
	Let us give a simplified sketch of the strategy. Details will be given over the course of the paper.

	\subsubsection{Cluster expansion}
	
	We begin by introducing a dipole expansion (which we in fact do not perform), which will allow us to introduce the main quantities.

	The starting point is to separate out the leading 
	part of the interactions, which is the interaction within dipoles, and consider the other pair interactions, which are the dipolar interactions (and are weaker) as small perturbations by writing:
	\begin{equation}\label{premdevZ}
		Z_{N, \beta}^\lambda = \int_{(\Lambda^2)^N}  e^{-\beta \F_\lambda}= N! \int_{(\Lambda^2)^N}  e^{-\beta \F_\lambda} \indic_{\sigma_N=\Id}=  N! \int_{(\Lambda^2)^N} \prod_{i< j}e^{-\beta v_{ij}}\indic_{\mc{A}_{ij}}\prod_{i=1}^N e^{\beta \g_\lambda(x_i-y_i) }\dd x_i \dd y_i,
	\end{equation}
	where $v_{ij}$ is the interaction between the dipole $\{x_i, y_i\}$ and the dipole $\{x_j,y_j\}$ and $\mathcal A_{ij} $ are the sets appearing in \eqref{introAij}. 
	
	The principle of a cluster expansion is to write
	\begin{equation}\label{vijfij}
		e^{-\beta v_{ij}}\indic_{\mc{A}_{ij}}=1+f_{ij}^v
	\end{equation} where $f_{ij}^v$ are the so-called Mayer bonds, 
	and expand the product over $i,j\in [N]$, $i< j$. This yields 
	\begin{equation}\label{eq:try1}
		Z_{N,\beta}^\lambda=N!\sum_{E}\int_{(\Lambda^2)^N} \prod_{ij\in E} f_{ij}^v\prod_{i=1}^N e^{\beta \g_\lambda(x_i-y_i) }\dd x_i \dd y_i,
	\end{equation}
	where the sum is over sets of pairs $E$ in $[N]$.
	
	Cluster expansions allow us to use combinatorial graph expansions to rewrite the \emph{logarithm} of expressions like \eqref{eq:try1} as a sum over subsets of $[N]$, involving certain integral terms called \emph{activities}. We refer for instance to \cite{faris,FriedliVelenik} for an introduction.  It is important to note that the activity is not an energetic quantity.

	A first important observation is that for any connected graph $(V,E)$ with $p$ vertices, 
	\begin{equation*}
		\int_{(\dR^2)^{2p}} \prod_{ij\in E}f_{ij}^v \prod_{i\in V}e^{\beta \g_\lambda(x_i-y_i)}\delta_0(x_1)\dd y_1 \prod_{i=2}^p \dd x_i\dd y_i \text{ converges} \quad \Longleftrightarrow \quad \beta>\beta_p = 4-\frac{2}p
	\end{equation*}
	where $\delta_0$ is the Dirac measure at $0$, meaning the point $x_1$ in the integral is frozen. Thus, the clusters (i.e.~connected components of the graph $([N],E)$ in \eqref{eq:try1}) of cardinality strictly larger than $p$ will diverge for $\beta\in [\beta_p,\beta_{p+1})$.

	\subsubsection{Critical scale}
	Denoting
	\begin{equation*}
		Z^\dip=\int_{(\Lambda^2)^N}\prod_{i=1}^N e^{\beta \g_\lambda(x_i-y_i)}\dd x_i \dd y_i,
	\end{equation*}
	the expression \eqref{eq:try1} can be read as 
	\begin{equation}\label{eq:try2}\frac{Z_{N, \beta}^\lambda}{Z^\dip } =N!\sum_{E} \Esp_{\mathrm{dip}}\left[  \prod_{ij\in E} f_{ij}^v\right]\end{equation}
	where the expectation is taken with respect to the i.i.d.~dipole model, i.e.~the measure with density proportional to 
	\begin{equation*}
		\prod_{i=1}^N e^{\beta \g_\lambda(x_i - y_i)}  \dd x_i \dd y_i.
	\end{equation*}
	We will denote by 
	\begin{equation}\label{def:Zbeta}
		\mc{Z}_\beta\coloneqq 2\pi\int_0^{+\infty} e^{\beta \g_1(r)}r\dd r,
	\end{equation} the rescaled (limiting) normalization constant for this model.

	The cluster expansion corresponding to \eqref{eq:try2} leads to a series that is not convergent. To obtain a convergent expansion, we first need to remove the divergent part of the model. Let us see at what scale the i.i.d.~dipole approximation breaks. Under the i.i.d.~dipole model, the number of dipoles with size $R > \lambda$ is typically of order $N (\frac{\lambda}{R})^{\beta - 2}.$ Therefore, under the i.i.d.~dipole measure, the typical distance between dipoles of size $R$ is of the order $(\frac{R}{\lambda})^{\frac{\beta}{2} - 1}.$ Observe that the independent dipole approximation is valid only if the inter-dipole distance is at least of the same order as the dipole length, i.e.~if
	\begin{equation}\label{echelle optimale}
		\left(\frac{R}{\lambda}\right)^{\frac{\beta}{2} - 1} \geq R \quad \Longleftrightarrow \quad R \leq R_{\beta,\lambda} = \lambda^{-\frac{\beta - 2}{4 - \beta}}\quad \text{or $\beta\geq 4$}.
	\end{equation}
	Indeed, if the inter-dipole spacing is much smaller than the dipole length, the matching forces an overcrowding of same-signed particles, which is heavily energetically penalized.

	\subsubsection{Expansion around the hierarchical model}
	
	To have a convergent expansion, we should expand only the interactions that are small enough, i.e.,~the interactions between dipoles in distinct multipoles. In other words, we will perform a perturbative expansion around a hierarchical multipole model in which only intra-multipole interactions (as defined in Definition \ref{def:multipoles}) are retained.

	For every subpartition $X$ of $[N]$, the hierarchical model on $X$, denoted by $\Psf_{\beta,\lambda}^X$, is defined by
	\begin{equation}\label{eq:PXfirst}
		\dd \Psf_{\beta,\lambda}^X \propto	\prod_{ S\in X}\indic_{\mc{B}_S} \prod_{i,j \in S:i<j} e^{-\beta v_{ij}} \indic_{\mc{A}_{ij}} \prod_{i\in S}e^{\beta \g_\lambda(x_i - y_i)} \dd x_i \dd y_i,
	\end{equation}
	which can be compared to \eqref{premdevZ}.
	
	Denote by $\Pi_\mult$ the (random) partition into multipoles and let $\pi$ be a partition of $[N]$. Expanding around the hierarchical model by using \eqref{vijfij} allows one to expand $\P(\sigma_N=\Id,\Pi_\mult=\pi)$ as the multipole partition function multiplied by the perturbative term
	\begin{equation}\label{sumEpi} \sum_{E \subset \mc{E}^\inter(\pi)} \dE_{\Psf_\pi^{\beta,\lambda}} \left[\prod_{ij \in E} (e^{-\beta v_{ij}}\indic_{\mc{B}_{ij}^c}-1) \right],
	\end{equation}
	where  $\mc{E}^\inter(\pi)$ represents the  set of edges joining pairs of dipoles in {\it distinct} multipoles, defined by 
	\begin{equation}\label{defmcE} 
		\mc{E}^\inter(\pi)\coloneqq \bigcup_{S\neq S'\in \pi} \bigcup_{i\in S, j\in S'} \{ij\}.\end{equation}
	Let us also define $\mc{E}^\intra(\pi)$ to be the set of edges between dipoles in the same multipole:
	\begin{equation*}
		\mc{E}^\intra(\pi)\coloneqq \bigcup_{S\in \pi} \bigcup_{i,j\in S:i< j} \{ij\}.
	\end{equation*}

	Following the cluster expansion roadmap, we resum \eqref{sumEpi} over the connected components of the augmented graph 
	\begin{equation}\label{def:augmented}
		([N], E \cup \mc{E}^\intra(\pi)).
	\end{equation}
	Let $X$ be a subpartition of $[N]$. We will denote 
	\begin{equation}\label{eq:VX first}
		V_X\coloneqq \bigcup_{S\in X}S,
	\end{equation}
	and will say that a set of edges $E\subset \mc{E}^\inter(X)$ is connected relative to $X$ and write $E\in \mathsf{E}^X$ if the augmented graph $(V_X,E\cup \mc{E}^\intra(X))$ is connected and if {$E$ contains at least one edge}.

	This allows one to rewrite \eqref{sumEpi} as 
	\begin{equation}\label{eq:take my log}
		\sum_{E \subset \mc{E}^\inter(\pi)} \dE_{\Psf_\pi^{\beta,\lambda}} \left[\prod_{ij \in E} (e^{-\beta v_{ij}}\indic_{\mc{B}_{ij}^c}-1) \right]=\sum_{n=0}^\infty \frac{1}{n!}\sum_{\substack{X_1,\ldots,X_n\in\mc{P}(\pi) \\ \mathrm{disjoint} } }K(X_1)\cdots K(X_n),
	\end{equation}
	where $K(X)$ is the {\it activity} of the subpartition $X\in \mc{P}(\pi)$ defined by 
	\begin{equation}\label{2.11}
		K(X)=\sum_{E \in \mathsf{E}^X} \dE_{\Psf_\pi^{\beta,\lambda}} \left[\prod_{ij \in E} (e^{-\beta v_{ij}}\indic_{\mc{B}_{ij}^c}-1) \right].
	\end{equation}
	
	One of the main advantages of such a representation is that, {\it if the series is absolutely convergent},  one has a formula for the  logarithm of \eqref{eq:take my log} as another series. By a simple combinatorial argument, as formal series,
	\begin{equation}\label{eq:log}
		\log \sum_{n=0}^\infty \frac{1}{n!} \sum_{\substack{X_1, \dots, X_n \in \mc{P}(\pi)\\ \mathrm{disjoint}}} K(X_1) \dots K(X_n)=\sum_{n=1}^{+\infty}\frac{1}{n!}\sum_{\substack{X_1,\ldots,X_n\in \mc{P}(\pi) \\ \text{ connected}}}K(X_1)\cdots K(X_n)\mathrm{I}(G(X_1,\ldots,X_n)),
	\end{equation}
	where $G(X_1,\ldots,X_n)$ denotes the connection graph of $X_1,\ldots,X_n$ (the graph on $[n]$ with an edge $ij$ if $X_i$ and $X_j$ intersect) and where $\mathrm{I}$ is the   Ursell function whose definition we now recall.	
	\begin{definition}\label{def:graphU}
		The Ursell function of  a connected graph $G$ is defined by
		\begin{equation}\label{def:UrsellI}
			\mathrm{I}(G)\coloneqq \sum_{H\subset G}(-1)^{|E(H)|},
		\end{equation}
		where the sum over $H$ runs over all spanning subgraphs of $G$ ($H$ connected and with edges included in those of $G$) and where $|E(H)|$ is the number of edges of $H$.
	\end{definition}

	Note that we will in fact expand the restricted partition function in a different way, which will eventually give the same activity by M\"obius inversion. The main challenge is to establish the absolute convergence of the series \eqref{eq:log}. Because of the long-range nature of the interaction, the usual cluster‑expansion criteria do
	not apply in their standard form.

	\subsection{Statement of the main results}

	We begin by defining the dipole activity for families of $k$ dipoles in $\Lambda$.

	\begin{definition}[Dipole activity]\label{def:dipole activity}The dipole measure is the probability measure on $\Lambda^2$ defined by
		\begin{equation}\label{defmubetalambda}
			\dd \mu_{\beta,\lambda}(x,y)=\frac{1}{NC_{\beta,\lambda}}e^{\beta \g_\lambda(x-y)}\dd x\dd y,
		\end{equation}
		where 
		\begin{equation}\label{defCbetalambda}
			C_{\beta,\lambda}=\frac{1}{N}\int_{\Lambda^2}e^{\beta \g_\lambda(x-y)}\dd x\dd y.
		\end{equation}
		For every $V\subset [N]$, we let $\mc{G}_c(V)$ be the set of collections of edges $E$ on $V$ such that $(V,E)$ is connected, with at least one edge. Moreover, we let
		\begin{equation*}
			\Ksf_{\beta,\lambda}^{\dip}(V)=\sum_{E\in \mc{G}_c(V)} \dE_{\mu_{\beta,\lambda}^{\otimes |V|} }\left[\prod_{ij\in E}(e^{-\beta v_{ij}}\indic_{\mc{A}_{ij}}-1) \right].
		\end{equation*}
		Notice that $\Ksf_{\beta,\lambda}^{\dip}$ still depends on $N$, and that for $|V|\in \{0,1\}$, we have $\Ksf_{\beta,\lambda}^{\dip}(V)=0$.
	\end{definition}

	Since $\beta>2$, the integrand in \eqref{defCbetalambda} is integrable and
	\begin{equation}\label{limCbl}
		\lim_{N\to \infty}C_{\beta,\lambda}=\int_{\dR^2}e^{\beta \g_\lambda(x)}\dd x=\lambda^{2-\beta}\mathcal{Z}_\beta,
	\end{equation}
	where $\mc{Z}_\beta$ is as in \eqref{def:Zbeta}.

	Next, we introduce the limiting activity associated with $\Ksf_{\beta,\lambda}^\dip$. For $\beta>2$, set
	\begin{equation}\label{defpstar}
		p^*(\beta)\coloneqq \sup\{q\ge 1:\ \beta>\beta_q\}\in \mathbb{N}^*\cup\{+\infty\}.
	\end{equation}
	Note that $p^*(\beta)=1$ for $\beta\in(2,3]$ and $p^*(\beta)=+\infty$ for $\beta\ge 4$. We will show that whenever $|V|=k\in \{2,\ldots, p^*(\beta)\}$, the limit
	\begin{equation}\label{def:little k dip}
		\lim_{N\to\infty}N^{k-1}\Ksf_{\beta,\lambda}^\dip(V)\eqqcolon\ksf_{\beta,\lambda}^\dip(k)
	\end{equation}
	exists. For $k\in \{0,1\}$, we set $\ksf_{\beta,\lambda}^\dip(k)=0$.

	We next introduce the truncated Mayer series that arises in \eqref{freeexpintro}.

	\begin{definition}[Hypertrees]
		For every set $A$, define the set of hypertrees $\Htrees_n(A)$ on $A$ with $n$ parts to be
		\begin{multline}\label{def:Htrees}
			\Htrees_n(A)\coloneqq\\
			\Bigl\{(X_1,\dots,X_n)\in \mc{P}(A)^n :
			G(X_1,\dots,X_n)\text{ is a tree and }
			|X_i\cap X_j|=1\ \ \forall\,ij\in E\bigl(G(X_1,\dots,X_n)\bigr)
			\Bigr\},
		\end{multline}
		where $G(X_1,\ldots,X_n)$ is the connection graph of $X_1,\ldots,X_n$.
	\end{definition}

	\begin{definition}[Truncated Mayer series]\label{def:free energy}
		Let $\beta\in (2,\infty)$. Let $p\leq p^*(\beta)$ with $p^*(\beta)$ as in \eqref{defpstar}. 
		
		Define the Mayer series truncated at $p$ to be 
		\begin{equation}\label{def:fbeta}
			\mathrm{Mayer}_{\beta,p,\lambda}\coloneqq \sum_{n=1}^{\infty} \frac{(-1)^n}{n!}\sum_{k=1}^\infty \frac{1}{k! }\sum_{\substack{(V_1,\ldots,V_n)\in \Htrees_n([k]):\\ V_1\cup \cdots \cup V_n=[k] }}\mathsf{k}^{\dip}_{\beta,\lambda}(|V_1|)\indic_{|V_1|\leq p }\cdots \mathsf{k}^{\dip}_{\beta,\lambda}(|V_n|)\indic_{|V_n|\leq p }.
		\end{equation}
		Note that when $\beta \in (2,3]$, $\mathrm{Mayer}_{\beta, p,\lambda}=0$.
	\end{definition}
	
	When $\beta\in (2,4)$, the Mayer series will be truncated at $p=p^*(\beta)$, while for $\beta\geq 4$ at an arbitrary number that will be fixed throughout the paper.

	\begin{definition}[Series truncation parameter]\label{def:pbeta}
		Fix $p_0\in \mathbb{N}^*$ for the rest of the paper. Set 
		\begin{equation*}
			p(\beta)=\begin{cases}
				p^*(\beta) & \text{if $\beta\in (2,4)$}\\
				p_0 & \text{if $\beta\geq 4$}.
			\end{cases}
		\end{equation*}
	\end{definition}

	We next quantify the remainder in \eqref{freeexpintro}, which corresponds to the contribution of the non-dilute subsystem for $\beta\in(2,4)$ and to large diagrams for $\beta\ge 4$ in the expansion \eqref{freeexpintro}.

	\begin{definition}[$\lambda$-optimal error rate]\label{def:optimal error}
		We let 
		\begin{equation}\label{defdelta}
			\delta_{\beta,\lambda}\coloneqq R_{\beta,\lambda}^{-2}\indic_{\beta\in (2,4)} \left(1+\sum_{p= 2}^\infty |\log \lambda|\indic_{\beta=\beta_{p}}\right) +\lambda^{2p_0}\indic_{\beta\geq 4}.
		\end{equation}
	\end{definition}
	Note that $\delta_{\beta,\lambda} \to 0$ as $\lambda\to 0$.

	The following result decomposes the free energy of the Coulomb gas as the sum of the free energy of $N$ i.i.d.~dipoles, the truncated Mayer series of non-divergent clusters of Definition \ref{def:free energy} and the optimal error rate from Definition \ref{def:optimal error}.

	\begin{theorem}[Free energy expansion]\label{theorem:expansion}
		Let $\beta\in (2,\infty)$ and let $p(\beta)$ be as in Definition \ref{def:pbeta}. Let $\lambda\in (0,1)$. Let $\mc{Z}_\beta$ be as in \eqref{def:Zbeta} and $\Ksf_{\beta,\lambda}^\dip$ be as in Definition \ref{def:dipole activity}. Recall the Ursell function $\mathrm{I}$ from Definition \ref{def:graphU}.
		\begin{enumerate}
			\item We have
			\begin{multline}\label{eq:th1 exp1}
				\log Z_{N,\beta}^\lambda=2N\log N  +N((2-\beta)\log \lambda+\log \mc{Z}_\beta-1)\\+\sum_{n=1}^\infty \frac{1}{n!}\sum_{\substack{V_1,\ldots, V_n\subset [N]\\ \mathrm{disjoint},\\ \forall i,2\leq |V_i|\leq p(\beta)}}\Ksf_{\beta,\lambda}^\dip(V_1)\cdots \Ksf_{\beta,\lambda}^\dip(V_n)\mathrm{I}(G(V_1,\ldots,V_n))+O_{\beta, p(\beta)}(N\delta_{\beta,\lambda}).
			\end{multline}
			\item The limit $\ksf_{\beta,\lambda}^\dip(k)$ in \eqref{def:little k dip} exists and is finite if $k\leq p^*(\beta)$, i.e.~if $\beta>\beta_k$. Moreover, for every $k\leq p^*(\beta)$, there exists a constant $C>0$ depending on $\beta$ and $k$ such that
			\begin{equation}\label{eq:th1 ksf}
				|\ksf_{\beta,\lambda}^\dip(k)|\leq  C\lambda^{2(k-1)}.
			\end{equation}	
			\item Recalling $\mathrm{Mayer}_{\beta,p,\lambda}$ from Definition \ref{def:free energy}, we have
			\begin{equation}\label{eq:th1 exp2}
				\sum_{n=1}^\infty \frac{1}{n!}\sum_{\substack{V_1,\ldots, V_n\subset [N]\\ \mathrm{disjoint}, \forall i,|V_i|\leq p(\beta)}}\Ksf_{\beta,\lambda}^\dip(V_1)\cdots \Ksf_{\beta,\lambda}^\dip(V_n)\mathrm{I}(G(V_1,\ldots,V_n))=-N\mathrm{Mayer}_{\beta,p(\beta),\lambda}
				+O_{\beta,p(\beta)}(N\delta_{\beta,\lambda}).
			\end{equation}
		\end{enumerate}
	\end{theorem}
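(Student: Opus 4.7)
The plan is to assemble the three parts of Theorem \ref{theorem:expansion} from the cluster expansion scheme outlined in Section \ref{sub:idea}. For part (1), my starting point is the identity
\begin{equation*}
Z_{N,\beta}^\lambda = N!\,(NC_{\beta,\lambda})^N\, \dE_{\mu_{\beta,\lambda}^{\otimes N}}\!\left[\prod_{i<j}(1+f_{ij}^v)\right],
\end{equation*}
obtained from the $\sigma_N=\Id$ reduction in \eqref{premdevZ} together with \eqref{vijfij}. Stirling's formula applied to $\log N!$ and the asymptotics $\log C_{\beta,\lambda}=(2-\beta)\log\lambda + \log \mc{Z}_\beta + O(\delta_{\beta,\lambda})$ coming from \eqref{limCbl} (supplemented by an estimate on the boundary of $\Lambda$) produce the explicit leading terms $2N\log N + N((2-\beta)\log\lambda + \log\mc{Z}_\beta - 1)$ in \eqref{eq:th1 exp1}. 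The task is then reduced to identifying $\log\dE_{\mu_{\beta,\lambda}^{\otimes N}}[\prod(1+f_{ij}^v)]$ with the truncated Ursell sum, modulo $O(N\delta_{\beta,\lambda})$.

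For this identification I would not expand the product directly, since the long-range dipole bonds make the naive Mayer series divergent for $\beta>2$. Instead I follow the roadmap of Section \ref{sub:idea}: first condition on the random multipole partition $\pi$; then cluster-expand only the inter-multipole edges $\mc{E}^\inter(\pi)$, keeping the intra-multipole interactions inside the hierarchical reference measure $\Psf_{\beta,\lambda}^\pi$. This yields the formal cluster series \eqref{eq:log} with multipole activities $K(X)$. Three independent ingredients, developed in dedicated sections of the paper, are required to control it: the sandwich between auxiliary energy-truncated models cut at the critical scale $R_{\beta,\lambda}$; the exploitation of the Eulerian odd/even splitting that uses the angular cancellation of the dipole-dipole kernel; and the absorption of non-dilute configurations into the remainder. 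A M\"obius-type reorganization then converts the multipole activities back into the dipole activities $\Ksf_{\beta,\lambda}^\dip(V)$ of Definition \ref{def:dipole activity}, and truncation at $|V|\le p(\beta)$ produces \eqref{eq:th1 exp1}; the terms with some $|V_i|>p(\beta)$ are divergent as $\lambda\to 0$ and are reabsorbed in the sandwich remainder. \emph{This convergence step is the main obstacle}, since all the novel ingredients announced in the introduction have to be deployed simultaneously.

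For part (2), I would use a scaling and translation-invariance argument. For $|V|=k$, symmetry reduces $\Ksf_{\beta,\lambda}^\dip(V)$ to an integral over $\Lambda^{2k}$ of a sum over spanning connected graphs on $[k]$. Translating so that $x_1=0$ and rescaling $(x_i,y_i)\mapsto \lambda(\tilde{x}_i,\tilde{y}_i)$ exposes a Jacobian factor $\lambda^{2(k-1)}$ together with a remaining translation mode accounting for the $N$ scaling in \eqref{def:little k dip}. The rescaled integral converges to a finite limit precisely when $\beta>\beta_k$, in accordance with the integrability criterion recalled in Section \ref{sub:idea}: large-distance integrability follows from the stable-matching indicators $\indic_{\mc{A}_{ij}}$, which force the Mayer bonds $f_{ij}^v$ to decay, while near-diagonal integrability requires $\beta>\beta_k$. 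The existence of $\ksf_{\beta,\lambda}^\dip(k)$ and the bound \eqref{eq:th1 ksf} follow directly.

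For part (3), I would substitute $\Ksf_{\beta,\lambda}^\dip(V) = N^{-(|V|-1)}\ksf_{\beta,\lambda}^\dip(|V|) + O(N^{-|V|})$ into the left-hand side of \eqref{eq:th1 exp2} and reorganize the sum via a standard tree-graph/M\"obius inversion. The terms in the Ursell sum whose connection graph is not a hypertree in the sense of Definition \ref{def:Htrees} cancel up to $O(N\delta_{\beta,\lambda})$, while on hypertrees $\mathrm{I}$ reduces to an explicit sign reproducing the $(-1)^n$ in Definition \ref{def:free energy}. Combining this with the combinatorial counting of label positions in $[N]$ — one overall factor $N$ from a root label, the remaining labels absorbed in the $\tfrac{1}{n!\,k!}$ normalizations — yields the prefactor $-N$ and matches \eqref{def:fbeta}. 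This last step is essentially bookkeeping once the convergence in part (1) is established.
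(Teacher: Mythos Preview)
Your outline for part (1) correctly identifies the overall architecture (multipole expansion, sandwich, Eulerian cancellation, M\"obius reorganization), and your sketch for part (3) matches the paper's actual computation (reduction to hypertrees, $\mathrm{I}(G)=(-1)^{n-1}$ on trees, combinatorial bookkeeping). These are fine as high-level plans.

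The real problem is part (2). Your proposed scaling argument does produce the correct power $\lambda^{2(k-1)}$ from the Jacobian and the normalization, but your justification of why the rescaled integral is \emph{finite} is wrong on both ends. At large distances the stable-matching indicator $\indic_{\mc{A}_{ij}}$ does \emph{not} force integrable decay of $f_{ij}^v$: when the dipoles are far apart $\indic_{\mc{A}_{ij}}=1$ and $f_{ij}^v\approx -\beta v_{ij}$ decays only like $r_i r_j/d_{ij}^2$, which is not integrable in $2$D. The existence of $\ksf_{\beta,\lambda}^\dip(k)$ as an absolutely convergent integral over $(\mathbb{R}^2)^{2k-1}$ therefore requires exactly the same odd/even splitting and Eulerian cancellation you invoke in part (1); there is no shortcut. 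The paper makes this explicit in Lemma~\ref{lemma:limiting} (``the integrable parts correspond to restricting the subgraphs with odd bond to Eulerian graphs'') and in Lemma~\ref{lemma:int dipole}, where the bound \eqref{eq:Kdip bound} is obtained via the peeling of Eulerian graphs (Corollary~\ref{coro:prod a}) followed by the one-dimensional radial estimate. You also have the role of the threshold $\beta_k$ reversed: it is not a near-diagonal condition (the $\lambda$-smearing already regularizes the short-distance singularity) but the condition under which, \emph{after} the quadratic estimate \eqref{eq:quad intro}, the remaining radial integral $\int \max(r,\lambda)^{(4-\beta)k-3}\,dr$ converges at infinity, namely $(4-\beta)k-3<-1\iff \beta>\beta_k$.

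In short, part (2) cannot be detached from the Eulerian machinery: the bound \eqref{eq:th1 ksf} is not a soft scaling statement but a quantitative consequence of the same cancellation and peeling that drive the convergence in part (1).
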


	\medskip

	\begin{remark}[Link with the multipole transitions]
		For $\beta>\beta_p$, the smallest contribution to \eqref{def:fbeta} comes from clusters of cardinality $p$ and scales like $\lambda^{2(p-1)}$, while
		\begin{equation*}
			R_{\beta,\lambda}^{-2}< \lambda^{\frac{2(\beta_p-2)}{4-\beta_p}}=\lambda^{2(p-1)},
		\end{equation*}with equality for $\beta=\beta_p$.
		In particular, for $\beta<\beta_p$ the non-dilute term dominates over the $p$-cluster term, whereas for $\beta>\beta_p$ the $p$-cluster contribution becomes the larger of the two. Equivalently, this critical point marks the change from divergence to convergence at infinity of the activity of $p$-clusters: they diverge when $\beta<\beta_p$ but converge when $\beta>\beta_p$. Note that here $p$-clusters are not the same as $2p$-poles. 
	\end{remark}
	
	\begin{remark}[On the optimality of the error rate]
		Recall from \eqref{echelle optimale} that $R_{\beta, \lambda}$ is the maximal scale $R$ such that dipoles of size $R$ are dilute. 
		Hence, we expect there are about $NR_{\beta, \lambda}^{-2}$ dipoles of size $R_{\beta, \lambda}$. The free energy of that subsystem should then be proportional to the number of points, as in the regime $\beta\in (0,2)$  in \cite{LSZ}, hence the error that they contribute should be in $NR_{\beta, \lambda}^{-2}$. This error rate is therefore expected to be sharp when $\beta\in (2,4)$. In contrast, when $\beta \ge 4$, the error rate we obtain is sharp in $\lambda$ but not sharp in $p_0$. An important remaining question is the dependence in $p_0$ of the error term, equivalent to the convergence of the cluster expansion series in that regime.
	\end{remark}

	We now state our large deviations result for the number of $2k$-poles. Let us recall the hierarchical multipole model, already introduced in \eqref{eq:PXfirst}.

	\begin{definition}[Hierarchical multipole model and multipole partition function]
		For every $S\subset [N]$, define
		\begin{equation}\label{defM}
			\Msf_{\beta,\lambda}(S)\coloneqq \dE_{\mu_{\beta,\lambda}^{\otimes |S|}}\left[\indic_{\mc{B}_S}\prod_{i,j\in S:i<j}e^{-\beta v_{ij}}\indic_{\mc{A}_{ij}}\right].
		\end{equation}
		We also define the probability measure $\Psf_{\beta,\lambda}^S$ on $(\Lambda^{2})^{|S|}$ by
		\begin{equation}\label{defPS}
			\dd \Psf_{\beta,\lambda}^S\coloneqq \frac{1}{\Msf_{\beta,\lambda}(S)}\indic_{\mc{B}_S}\prod_{i,j\in S:i<j}e^{-\beta v_{ij}}\indic_{\mc{A}_{ij}}\prod_{i\in S}\dd \mu_{\beta,\lambda}(x_i,y_i).
		\end{equation}
		When $X$ is a subpartition of $[N]$, we define 
		\begin{equation*}
			\Psf_{\beta,\lambda}^X=\bigotimes_{S\in X} \Psf_{\beta,\lambda}^S\end{equation*}
		to be the hierarchical multipole model on  $X$.
	\end{definition}

	\begin{definition}[Multipole activity]\label{def:inter activity} For any 
		$X$ subpartition of $[N]$, we denote
		\begin{equation}V_X\coloneqq \bigcup_{S\in X} S.\end{equation} 
		Recall \eqref{defmcE} and $\mathsf{E}^X$ as the set of $E\subset \mc{E}^\inter(X)$ with at least one edge such that $(V_X,E\cup \mc{E}^\intra(X))$ is connected. We define
		\begin{equation}\label{defK}
			\Ksf_{\beta,\lambda}^{\mult}(X)\coloneqq \sum_{E\in \mathsf{E}^X }\dE_{\Psf_{\beta,\lambda}^X}\left[\prod_{ij\in E}(e^{-\beta v_{ij}}\indic_{\mc{B}_{ij}^c}-1)\right].
		\end{equation}
		Notice that for $|X|\in \{0,1\}$, we have $\Ksf_{\beta,\lambda}^{\mult}(X)=0$.
	\end{definition}

	We will show that for $|S|\leq p^*(\beta)$, there exists $\msf_{\beta,\lambda}(|S|)$ such that
	\begin{equation}\label{def:little m}
		\lim_{N\to \infty}N^{|S|-1}\Msf_{\beta,\lambda}(S)=\msf_{\beta,\lambda}(|S|).
	\end{equation}
	Notice that for $|S|=1$, we have $\msf_{\beta,\lambda}(|S|)=1$. 
	
	For every subpartition $X$ of $[N]$ such that $k\coloneqq \sum_{S\in X}|S|\leq p^*(\beta)$, we will show that there exists $\ksf_{\beta,\lambda}^\mult$ such that 
	\begin{equation}\label{def:little k}
		\lim_{N\to \infty}N^{|X|-1}\Ksf_{\beta,\lambda}^{\mult}(X)=\ksf_{\beta,\lambda}^{\mult}(n_1,\ldots,n_{p^*(\beta)}),
	\end{equation}
	where $n_i$ stands for the number of elements of $X$ of cardinality $i$. Notice that for $|X|\in \{0,1\}$, we have $\Ksf_{\beta,\lambda}^{\mult}(X)$.

	We now define the multipole rate function.
	
	\begin{definition}[Canonical partition with given block profile]\label{def:canonical}
		Let $p\in\mathbb{N}$ and $m_1,\ldots,m_p\in\mathbb{N}$, and set $k=m_1+2m_2+\cdots+pm_p$. For any set partition $Y$, define
		\[
		\#_i Y \coloneqq \bigl|\{S\in Y : |S|=i\}\bigr|
		\]
		for the number of blocks of size $i$ in $Y$.

		We define $Y(m_1,\ldots,m_p)$ to be the partition of $[k]$ obtained by taking the ordered list
		$1,2,\ldots,k$ and cutting it, in order, into blocks as follows: first form $m_1$ singletons from
		the first $m_1$ elements; next form $m_2$ consecutive pairs from the following $2m_2$ elements; and
		continue in this way until $m_p$ consecutive $p$-tuples have been formed from the last $pm_p$
		elements. By construction,
		\[
		\#_i Y(m_1,\ldots,m_p)=m_i \quad\text{for every } i=1,\ldots,p,
		\quad
		\#_i Y(m_1,\ldots,m_p)=0 \quad\text{for every } i>p.
		\]
	\end{definition}

	\begin{definition}[Multipole rate function]\label{def:rate function}
		Let $\beta\in (2,\infty)$ and $p\leq p^*(\beta)$. Recall $Y(m_1,\ldots,m_p)$ from Definition \ref{def:canonical}.
		
		For every $x\in [0,1]$, let $\triangle_{p,x}\coloneqq \{(\gamma_1,\ldots,\gamma_p)\in [0,1]^p:\gamma_1+2\gamma_2+\cdots+p\gamma_p=1-x\}$. Define 
		\begin{multline*}
			\mathcal{I}_{\beta,p,\lambda}:(\gamma_1,\ldots,\gamma_p)\in \cup_{x\in [0,1]}\triangle_{p,x}\mapsto\sum_{i=1}^p \gamma_i(\log i!-1+\log \gamma_i-\log \msf_{\beta,\lambda}(i))+1
			\\+\sum_{n=1}^{\infty}\frac{(-1)^n}{n!}\sum_{m_1,\ldots,m_p\in \mathbb{N} }\prod_{i=1}^p \frac{\gamma_i^{m_i}}{m_i!}\sum_{\substack{(X_1,\ldots,X_n)\\\in \Htrees_n(Y(m_1,\ldots,m_p))\\ X_1\cup \cdots \cup X_n=Y(m_1,\ldots,m_p)}}\prod_{i=1}^n \ksf_{\beta,\lambda}^\mult((\#_kX_i)_{k=1}^{p^*(\beta)})\indic_{|V_{X_i}|\leq p},
		\end{multline*}
		with the convention that $\gamma \log \gamma=0$ for $\gamma=0$.
	\end{definition}

	The rate function $\mc{I}_{\beta,p,\lambda}$ includes an entropy term, which counts the number of partitions, the free energy of the hierarchical model, and the correction to the hierarchical model (Mayer series).

	We can now state our large deviations result.

	\begin{theorem}[Multipole distribution]\label{theorem:LDP}
		Let $\beta\in (2,\infty)$ and let $p(\beta)$ be as in Definition \ref{def:pbeta}. For every $k\geq 1$, denote by $\mc{N}_k$ the number of multipoles (as defined in Definition \ref{def:multipoles}) of cardinality $k$. Then, for $M$ large enough with respect to $\beta$ and $p(\beta)$, the following hold:
		\begin{enumerate}
			\item \emph{(Coefficients of the rate function)} The limits \eqref{def:little m} and \eqref{def:little k} exist and are finite. Moreover, for every $k\in \{2,\ldots,p(\beta)\}$, there exists $C>0$ depending on $\beta,k$ and $M$ such that
			\begin{equation}\label{eq:msf thm}
				\frac{1}{C}\lambda^{2(k-1)}\leq  \msf_{\beta,\lambda}(k)\leq C\lambda^{2(k-1)}.
			\end{equation}
			For every subpartition $X$ of $\mathbb{N}^*$, such that $|V_X|\leq p(\beta)$, there exists $C>0$ depending on $\beta,|V_X|$ and $M$ such that 
			\begin{equation}\label{eq:ksf thm}
				|\ksf_{\beta,\lambda}^{\mult}( (\#_m X_i)_m)|\leq C\lambda^{2(|X|-1)}.
			\end{equation}  
			\item \emph{(Free energy expansion in terms of the rate function)}
			We  have 
			\begin{equation}\label{eq:expansion Z}
				\log Z_{N,\beta}^\lambda= 2N\log N+N\left((2-\beta)\log\lambda + \log \mc{Z}_\beta-1\right)-N \inf_{\triangle_{p,0}}\mathcal{I}_{\beta,p,\lambda}+O_{\beta,p(\beta),M}(N\delta_{\beta,\lambda}).
			\end{equation}

			\item \emph{(Control of large multipoles and large dipoles)} For every $n\geq 1$,
			\begin{equation}\label{eq:thm bad}
				\log \P\left( |\mc{N}_1+2\mc{N}_2+\cdots+p(\beta)\mc{N}_{p(\beta) }-N|\geq n\right)\leq -\tfrac{1}{2}n+O_{\beta,p(\beta),M}(N\delta_{\beta,\lambda}).
			\end{equation}
			Moreover, for every $n\geq 1$, 
			\begin{equation}\label{eq:thm bad2}
				\log \P\left(\left|\left\{i\in [N]: |x_i-y_{\sigma_N(i)}|\geq R_{\beta,\lambda}\right\}\right|\geq n\right)\leq -\tfrac{1}{2}n+O_{\beta,p(\beta),M}(N\delta_{\beta,\lambda}) .
			\end{equation}
			\item \emph{(Large deviations for multipole count)} Let $\gamma=(\gamma_1,\ldots,\gamma_{p(\beta)})\in \triangle_{p(\beta),0}$ and suppose that there exists $C_0>0$ such that for every $k\in\{2,\ldots,p(\beta)\}$,
			\[
			\gamma_k\le C_0\,\lambda^{2(k-1)}.
			\] Then, 
			\begin{multline}\label{eq:LD bounds}
				\log \P\left(\max_{i\in [p(\beta)] } |\mc{N}_i-N\gamma_i|\leq C_0N\delta_{\beta,\lambda} \right)\\=-N\left(\mc{I}_{\beta,p(\beta),\lambda}(\gamma_1,\ldots,\gamma_{p(\beta)})-\inf_{\triangle_{p(\beta),0}} \mc{I}_{\beta,p(\beta),\lambda}\right)+O_{\beta,p(\beta),M,C_0}(N\delta_{\beta,\lambda}).
			\end{multline}
			\item \emph{(Minimizer of the rate function)} The infimum of $\mc{I}_{\beta,p(\beta),\lambda}|_{\triangle_{p,0}}$ is attained at a unique $\gamma^*\in \triangle_{p(\beta),0}$ and for every $k\in\{2,\ldots,p(\beta)\}$, there exists a constant $C>0$ depending on $\beta$, $M$ and $k$ such that
			\begin{equation}\label{eq:minimizer}
				\frac{1}{C}\lambda^{2(k-1)}\leq  \gamma_k^*\leq C\lambda^{2(k-1)}.
			\end{equation}
			\item \emph{(Typical number of multipoles)} Let $k\in \{2,\ldots,p(\beta)\}$. There exists $C>0$ depending on $\beta,p(\beta)$ and $M$ such that for $C_0$ large enough,
			\begin{equation}\label{eq:multipoles number}
				\log	\P\left(\mc{N}_k\notin \Bigr(\frac{1}{C_0}N\lambda^{2(k-1)},C_0N\lambda^{2(k-1)}\Bigr)  \right)\leq -CN\lambda^{2(k-1)}+O_{\beta,p(\beta),M}(N\delta_{\beta,\lambda}).
			\end{equation}
		\end{enumerate}
	\end{theorem}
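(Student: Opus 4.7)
The overall plan is to decompose $Z_{N,\beta}^\lambda$ according to the multipole partition $\pi$ from Definition \ref{def:multipoles} and to perform a cluster expansion of the inter-multipole interactions on top of the hierarchical model $\Psf_{\beta,\lambda}^\pi$. Symmetrizing via $\sigma_N=\Id$ yields
\begin{equation*}
\P(\sigma_N=\Id,\Pi^\mult=\pi)=\frac{N!}{Z_{N,\beta}^\lambda}\prod_{S\in\pi}\Msf_{\beta,\lambda}(S)\cdot\mc{R}(\pi),
\end{equation*}
where $\mc{R}(\pi)$ is the Mayer correction obtained by expanding $\prod_{ij\in\mc{E}^\inter(\pi)}\bigl(1+(e^{-\beta v_{ij}}\indic_{\mc{B}_{ij}^c}-1)\bigr)$ and taking logarithm via \eqref{eq:log}, producing the cluster series built from the activities $\Ksf_{\beta,\lambda}^\mult$. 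Summing over partitions $\pi$ with prescribed profile is the combinatorial mechanism that produces the rate function $\mc{I}_{\beta,p(\beta),\lambda}$.

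For part (1), the bound \eqref{eq:msf thm} reduces to a direct integral estimate on \eqref{defM}: the event $\mc{B}_S$ confines all dipoles of $S$ to a cluster of diameter $O(M\lambda)$ (since for $|S|\le p^*(\beta)$ the Gibbs weight $e^{\beta\g_\lambda}$ forces $r_i\lesssim \lambda$), so that after integrating out the global position (volume $\sim N$) the remaining $|S|-1$ relative coordinates produce the $\lambda^{2(|S|-1)}$ scaling; sharpness comes from restricting to tight configurations and choosing $M$ large. The activity bound \eqref{eq:ksf thm} combines this with the odd/even Eulerian cancellation structure of Section \ref{sub:idea}, which tames the long-range Mayer bonds $f^v_{ij}$. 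For part (2), the number of set-partitions of $[N]$ with profile $(m_1,\ldots,m_{p(\beta)})$ equals $N!/\prod_i m_i!(i!)^{m_i}$, and Stirling with $\gamma_i=m_i/N$ produces the entropy $\sum_i\gamma_i(\log i!-1+\log\gamma_i)$; the product $\prod_S\Msf_{\beta,\lambda}(S)$ contributes $-\log\msf_{\beta,\lambda}(i)$, and $\mc{R}(\pi)$ converges to the final sum in Definition \ref{def:rate function}. Matching with Theorem \ref{theorem:expansion} identifies $-\inf_{\triangle_{p,0}}\mc{I}_{\beta,p(\beta),\lambda}$ with $-\mathrm{f}_{\beta,p(\beta),\lambda}$ up to $O(\delta_{\beta,\lambda})$, giving \eqref{eq:expansion Z}.

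For parts (3) and (4), restricting the sum over $\pi$ to profiles near $(\gamma_iN)_i$ and applying the same Stirling calculation yields \eqref{eq:LD bounds}; the tail bounds \eqref{eq:thm bad}--\eqref{eq:thm bad2} follow because each additional multipole of size $k\ge 2$ (resp.\ dipole of size $\ge R_{\beta,\lambda}$) costs a factor $C\lambda^{2(k-1)}$ (resp.\ of order $R_{\beta,\lambda}^{-2}$ after integration) against an $O(1)$ entropy, producing the geometric decay $e^{-n/2}$. For parts (5)--(6), the leading entropy $\sum_i\gamma_i\log\gamma_i$ of $\mc{I}_{\beta,p(\beta),\lambda}$ strictly dominates the cluster corrections in the region $\gamma_i=O(\lambda^{2(i-1)})$; strict convexity then gives a unique minimizer $\gamma^*$ with $\gamma_k^*\asymp\lambda^{2(k-1)}$ from the critical-point equations, and \eqref{eq:multipoles number} follows from \eqref{eq:LD bounds} using the quadratic behavior of $\mc{I}$ near $\gamma^*$. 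The main obstacle is to show that the cluster expansion defining $\mc{R}(\pi)$ converges absolutely \emph{uniformly} in $\pi$ over all profiles appearing in the large-deviation regime, with error $O(N\delta_{\beta,\lambda})$: this is where the electrostatic sandwich estimates and the Eulerian/odd--even cancellation structure of Section \ref{sub:idea} become indispensable, and where the truncation at $p(\beta)$ is essential since clusters of cardinality $>p^*(\beta)$ have divergent activities.
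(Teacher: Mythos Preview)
Your outline captures the skeleton of the \emph{lower} bound (Proposition~\ref{prop:lower bound}): restrict to configurations with all dipoles shorter than $\varepsilon_0\Cut$, fix a multipole partition $\pi$ with a prescribed profile, and run the cluster expansion around the hierarchical model. The Stirling/entropy calculation and the convexity argument for parts (5)--(6) are also essentially what the paper does.

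The genuine gap is in the \emph{upper} bound. You write that ``restricting the sum over $\pi$ to profiles near $(\gamma_i N)_i$'' gives \eqref{eq:LD bounds}, but this is circular: for the upper bound you cannot a priori assume that dipoles are short or that multipoles have cardinality at most $p(\beta)$, and the cluster expansion you describe diverges without those restrictions. The paper's solution (Section~\ref{section:upper}) is substantially more involved: one defines a set $I_{\bad}$ of bad dipoles via a $q(\beta)$-clustering algorithm (Definitions~\ref{def:continuous clustering}--\ref{def:bad points}), freezes their positions as an external field $Z$, and performs a cluster expansion on the remaining good dipoles with activities $\Ksf_{\varepsilon_0}^{+,Z}$ that now depend on $Z$. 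Because good dipoles interact with frozen bad points, the parity cancellation no longer forces the odd-weight subgraph to be Eulerian; the paper recovers the quadratic estimate via a core--forest decomposition and a delicate geometric growth property of $d_{i,\bad}$ along paths (Lemmas~\ref{lemma:geo}, \ref{lemma:pendant}, Corollary~\ref{coro:prod a u}). The tail bounds \eqref{eq:thm bad}--\eqref{eq:thm bad2} then follow from a Chernoff-type argument on $|I_{\bad}|$ (Proposition~\ref{prop:bad points} and Step~1 of the proof of Theorem~\ref{theorem:LDP}), not from a direct ``cost per extra multipole'' as you suggest. Your last sentence acknowledges that uniform convergence of the cluster series is the crux, but you have not identified the mechanism that actually delivers it for the upper bound.
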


	\begin{remark}[On the large deviations assumption]
		In the grand canonical setting, an expansion  around the i.i.d.~dipole model of the partition function conditional on the multipoles fractions 
		would yield a simpler rate function and make the assumption $\gamma_k\le C_0\,\lambda^{2(k-1)}$ in (4) unnecessary.
	\end{remark}

	\begin{remark}[On the number of multipoles of given size and cardinality]
		We believe that with our method one could easily prove the following. Let $\beta\in (2,\infty)$ and $k\in \{2,\ldots, p(\beta)\}$. For every $R>0$, denote by $\mc{N}_{k,R}$ the number of multipoles of cardinality $k$ with the largest dipole of size between $R$ and $2R$ and set $n_{k,R}\coloneqq NR^{2(k-1)}(\frac{\lambda}{R})^{k(\beta-2)}$. Then, (we expect that) there exists $C>0$ depending on $\beta, p(\beta)$ and $M$ such that for $C_0$ large enough,
		\begin{equation*}
			\log \P \Bigr( \mc{N}_{k,R} \notin \Bigr(\frac{1}{C_0}n_{k,R},C_0 n_{k,R}\Bigr)\Bigr)\leq -Cn_{k,R}+O_{\beta,p(\beta),M}(N\delta_{\beta,\lambda}). 
		\end{equation*}
		Notice that since $k\leq p(\beta)\leq p^*(\beta)$, the quantity $n_{k,R}$ is decreasing in $R$. This is at the heart of the multipole transition: conditionally on seeing a multipole of cardinality $k$, one should bet that it is of small size whenever $k\leq p^*(\beta)$ and of large size whenever $k>p^*(\beta)$.
	\end{remark}

	\subsection{More details about the proof  and plan of the paper}\label{sub:more}
	We assume here that we have restricted to the event where the stable matching introduced in Definition \ref{def:stable intro} is given by $\sigma_N=\Id$.

	\subsubsection{Truncation scale}
	
	In order to get a convergent Mayer series, we need to restrict the cluster expansion to dipoles of length smaller than the critical scale. However for $\beta\geq 4$, this critical scale is infinite; hence we define the following truncation scale:
	\begin{definition}\label{def:Rlambda p0}
		We let 
		\begin{equation*}
			\Cut\coloneqq \begin{cases}
				R_{\beta,\lambda} & \text{if $\beta\in (2,4)$}\\
				\lambda^{-2p_0} & \text{if $\beta\geq 4$},
			\end{cases}
		\end{equation*}
		where $R_{\beta, \lambda}$ is as in \eqref{def:Rlambda}.
	\end{definition}
	We will restrict the cluster expansion to dipoles of length smaller than $\ve_0 \Cut$ where $\ve_0\in (0,1)$ is a small number.
	
	The next task is to approximate the dipole-dipole interaction $v_{ij}$ (and hence $f_{ij}^v$) by two {\it short-range} interactions, one denoted by $\tilde v_{ij}$ which corresponds to a bound from above on the energy, and one denoted by $v_{ij}'$ which corresponds to a bound from below. These approximations will have ranges smaller than $\ve_0 \Cut$ and will yield errors of size $N\delta_{\beta,\lambda}$ (see Definition \ref{def:optimal error}). This relies on new electrostatic estimates that are given in Section \ref{section:energy}. These two approximate models allow us to discard delicate interactions, corresponding to dipoles at large distances from each other or to highly packed dipoles, while still providing a good approximation of the true interaction $v_{ij}$.

	\subsubsection{Electric representation and electrostatic estimates}
	
	Our electrostatic estimates are based on an electric formulation of the energy, as developed for the one-component Coulomb gas in the second author's prior works \cite{SS2d,RougSer,serfaty2024lectures} and in the two-component case in \cite{LSZ,boursier2024dipole}. They provide effective upper and lower bounds 
	for the energy  based on a method of increasing the radii of the smearing balls for each charge.

	First, the {\it electric potential} $h_\lambda$ generated by the configuration $(\vec{X}_N, \vec{Y}_N)$ is defined as a function {\it over all} $\R^2$  by 
	\begin{equation}\label{eq:heta}h_\lambda[\vec{X}_N, \vec{Y}_N]\coloneqq  \g* \left( \sum_{i=1}^N \delta_{x_i}^{(\lambda)}-\delta_{y_i}^{(\lambda)}\right),\end{equation} where $*$ denotes the convolution.
	In the sequel, we will most often drop the $[\vec{X}_N, \vec{Y}_N]$ dependence in the notation.
	
	Note that by definition of $\g$, $h_\lambda$ satisfies  the Poisson equation
	\begin{equation}\label{eq:Deltaheta}
		-\Delta h_\lambda[\vec{X}_N, \vec{Y}_N]= 2\pi \left( \sum_{i=1}^N \delta_{x_i}^{(\lambda)}-\delta_{y_i}^{(\lambda)}\right).\end{equation}
	A direct insertion into \eqref{eq:F3} and integration by parts using \eqref{eq:Deltaheta} yield the following rewriting of the energy
	\begin{equation}\label{eq:rewrF}\F_\lambda(\XN, \vec{Y}_N)= \frac{1}{4\pi} \int_{\R^2} \left|\nab h_\lambda[\vec{X}_N,\vec{ Y}_N]\right|^2 -  N (\g(\lambda)+\kappa).\end{equation}
	
	The energies lower bound proceeds by the ball-growth method, 
	see \cite[Chap.~4]{serfaty2024lectures} and references therein, 
	which consists in expanding the spherical charges $\delta_{x_i}^{(\lambda)}, \delta_{y_i}^{(\lambda)}$ into charges $\delta_{x_i}^{(\tau_i)}, \delta_{y_i}^{(\tau_i)}$ of the same mass but supported in the discs $B(x_i,\tau_i)$, $B(y_i,\tau_i)$. By Newton's theorem, whenever $B(x_i, \tau_i)$ and $B(x_j, \tau_j)$ are {\it disjoint} (resp. with $x_i$ replaced by $y_i$ or $x_j$ replaced by $y_j$), then the interaction between the charge distributions $\delta_{x_i}^{(\tau_i)}$ and $\delta_{x_j}^{(\tau_j)}$ (and resp.) is equal to the interaction between $\delta_{x_i}^{(\lambda)}$ and $\delta_{x_j}^{(\lambda)}$ (and resp.). Moreover, if they are not disjoint, one can bound from below $\int |\nab h_\lambda|^2$ by 
	the difference between the energies before ball growth and after ball growth, which, by Newton's theorem {\it only counts interactions between  non-disjoint larger balls}.
	This way, we obtain a lower bound for the energy of the form  
	\begin{equation}\label{eqbinf}
		\begin{split}
			\F_\lambda(\vec{X}_N, \vec{Y}_N) \geq -\sum_{i=1}^N \g_\lambda(x_i - y_i) + \sum_{i < j} v_{ij}' + \text{Error},
		\end{split}
	\end{equation}
	where $v_{ij}'$ is a new {\it short range interaction} between $\{x_i, y_i\}$ and $\{x_j, y_j\}$, which vanishes if the balls of radius $\tau_i$ and $\tau_j$
	centered at dipoles $i$ and $j$ do not intersect. 
	
	The delicate task accomplished in Section \ref{sub:bad points} is to choose the $\tau_i$'s appropriately. These radii should be small enough to suppress diverging interactions (from long dipoles or highly overcrowded multipoles) but large enough so that $\tilde{v}_{ij}$ remains a good approximation of $v_{ij}$.

	To bound the partition function from below, we restrict the phase space to the event where all dipoles are smaller than the critical scale. To each couple $(x_i, y_i)$ we associate a {\it screened dipole electric field} obtained by solving 
	\begin{equation}\label{eq:neuman_corrected}
		\left\{
		\begin{array}{ll}
			-\Delta u_i = 2\pi (\delta_{x_i}^{(\lambda)} - \delta_{y_i}^{(\lambda)}) & \text{in} \ B(z_i,\tau_i), \\
			\frac{\partial u_i}{\partial \nu} = 0 & \text{on} \ \partial B(z_i, \tau_i),
		\end{array}
		\right.
	\end{equation}
	where $\nu$ denotes the outer unit normal, $z_i$ is the barycenter of $\{x_i, y_i\}$ and the $\tau_i$'s  are this time chosen equal to $\ve_0 \Cut$.

	Adding the vector fields $\nab u_i$ generated by all dipoles, one obtains a vector field $E= \sum_i \nab u_i \indic_{B(z_i, \tau_i)}$, which is not necessarily a gradient, but which, thanks to the vanishing Neumann boundary condition, satisfies
	\begin{equation}\label{eq:divE_corrected}
		-\div E = 2\pi \left( \sum_{i=1}^N \delta_{x_i}^{(\lambda)} - \sum_{i=1}^N \delta_{y_i}^{(\lambda)} \right)\quad \text{in } \ \R^2.
	\end{equation}
	Using Green's theorem (which reveals the $L^2$-minimality of gradients), we may then obtain an energy upper bound  
	\begin{equation}\int_{\R^2}|\nab h_\lambda|^2 \le \int_{\R^2}|E|^2 = \int_{\R^2} \left|\sum_{i=1}^N (\nab u_i )\indic_{B(z_i, \tau_i)}\right|^2.\end{equation}
	Expanding the square in the right-hand side gives pair-interactions which {\it only concern pairs of  dipoles for which the balls $B(z_i, \tau_i)$ and $B(z_j, \tau_j)$ intersect}.

	After some PDE estimates, we are led to  an upper bound of the form
	\begin{equation}\label{eq:intro u}
		\F_\lambda(\vec{X}_N,\vec{ Y}_N) \leq -\sum_{i=1}^N \g_\lambda(x_i - y_i) +  \sum_{i < j} \tilde{v}_{ij} +\text{Error},
	\end{equation}
	where $\tilde{v}_{ij}$ represents the new interaction between the screened dipoles, which vanishes at distances larger than ${16\ve_0 \Cut}$.

	We will work with these two reduced models, the one with dipole interactions $v_{ij}'$ as in \eqref{eqbinf}, and the one with dipole interactions $\tilde v_{ij}$ as in \eqref{eq:intro u}, and perform cluster expansions on each of  them.

	\subsubsection{Cancellation effects and integration}
	
	The dipole-dipole interaction $v_{ij}$ between two dipoles of vectors $\vr_i\coloneqq y_i-x_i$, of length $r_i$ and $\vr_j\coloneqq y_j-x_j$ of length $r_j$  is well known to be, at leading order 
	\begin{equation}\label{ridrj}  v_{ij}\approx \frac{\vr_i\cdot \vr_j}{d_{ij}^2} -2 \frac{(\vr_i\cdot \vec{d_{ij}})( \vr_j\cdot \vec{d_{ij}})}{d_{ij}^4}\end{equation}
	where $\vec{d}_{ij}$ is the vector joining the two dipoles and $d_{ij}$ its norm, when $d_{ij}\geq \max(r_i,r_j)$. In general, one has the crude bound
	\begin{equation}\label{eq:crudevintro}
		|v_{ij}|\lesssim \frac{r_i r_j}{d_{ij}\max(d_{ij},r_i,r_j)},
	\end{equation}
	which is long-range in two dimensions since the spatial integral of $d_{ij}^{-2}$ diverges.
	
	Applied to the Mayer bond $f_{ij}^v,$ this crude control produces divergent contributions in the series \eqref{sumEpi} when $\beta\geq 3$. To obtain a convergent expansion, we must exploit the angular cancellation. We split $f_{ij}^v$ into its odd part (essentially $v_{ij}$) and its even correction. A standard parity argument shows that products of odd bonds integrate to zero unless every vertex has an even number of incident odd bonds -- that is, only Eulerian graphs with odd edges contribute after integration. We therefore reduce to analyzing Eulerian subgraphs and the contributions obtained by combining the crude bound along their edges.

	The key step is that, after integrating over inter-dipole positions/angles, cluster integrals associated with an Eulerian graph collapse to a quadratic dependence in the dipole lengths:
	\begin{equation}\label{eq:quad intro}
		\int r_2^2 r_3^2 \dots r_k^2 \prod_{i=1}^k e^{\beta \g_\lambda(r_i)}\indic_{r_i\leq \ve_0 \Cut } \dd r_i,
	\end{equation}
	where $r_i=| \vr_i|$ is the length of the dipole. 
	
	For intuition, consider the triangle $(1,2,3)$ and assume for simplicity that $d_{ij}\geq \max(r_i,r_j)$ for each edge. If $r_3 = \max(r_1,r_2,r_3)$ and $d_{13} = \max(d_{12},d_{23},d_{13})$, then
	\begin{equation*}
		\prod_{ij} \frac{r_i r_j}{d_{ij}\max(d_{ij},r_i,r_j)}=r_1^2r_2^2 \cdot d_{12}^{-2}d_{23}^{-2}\cdot \Bigr(\frac{r_3^2}{d_{13}^2}\Bigr)\leq r_1^2r_2^2 \cdot d_{12}^{-2}d_{23}^{-2} \cdot \Bigr(\frac{r_3^2}{\max(d_{12},d_{23})^2}\Bigr).
	\end{equation*}
	Integrating over $d_{12}$ and $d_{23}$ yields the desired quadratic factor $r_1^2 r_2^2$. Notice that we have opened the triangle by removing the variable $d_{13}$. For a general Eulerian graph, we extract a minimally 2-edge-connected spanning subgraph and perform a strict ear decomposition; opening each ear repeats the triangle mechanism and produces the same quadratic structure.

	We then integrate the radii in \eqref{eq:quad intro}: take $r_2,\dots,r_k$ from $0$ up to $r_1$ 
	and use the approximation $e^{\beta \g_\lambda(r)} \approx r^{-\beta}$ on $[\lambda,\varepsilon_0 \Cut]$. 
	This leads to the one-dimensional integral
	\[
	\int \max(r_1,\lambda)^{(4-\beta)(k-1)}\max(r_1,\lambda)^{1-\beta}\dd r_1,
	\]
	which converges (at infinity) if and only if  
	\[
	(4-\beta)(k-1) + 1 - \beta < -1  \quad \Longleftrightarrow \quad k<\frac{2}{4-\beta}\text{ or $\beta\geq 4$} \quad\Longleftrightarrow \quad \beta>\beta_k.
	\]
	
	This is exactly the transition at $\beta_p$. For $\beta\in (\beta_p,\beta_{p+1})$, all clusters with $k$ dipoles with $k\leq p$ are integrable.

	Finally, the even part of the interaction is controlled via a Penrose resummation, reducing it to a convergent tree sum. Summing over partitions of $[N]$ with prescribed multipole counts then yields the desired lower bound on the partition function (see Section \ref{section:lower}).
	
	\subsubsection{Extracting a set of good points}

	The proof of the upper bound of the  partition function, occupying Section \ref{section:upper} is significantly more difficult than the lower bound. Indeed, the absolute convergence of the cluster expansion in \eqref{eq:log} crucially relies on the fact that dipoles are of length smaller than $\ve_0\Cut$ and multipoles of bounded cardinality. The idea is to extract a set of good points and to perform a cluster expansion on them, where the frozen bad points act as an external force. We then control the number of bad points and show that their contribution to the partition function is negligible.
	
	Two aspects are particularly delicate: (i) the clustering procedure that defines the radii $\tau_i$, where we must ensure that interactions among bad points are well controlled and that each bad point has attractive interactions with only a bounded number of other charges per scale; and (ii) the fact that the good points are not an isolated system—interactions with the bad points disrupt the structure underlying the lower bound. We discuss the second issue in the next paragraph.

	\subsubsection{Recovering the quadratic estimate}
	
	Freezing the bad points breaks the perfect odd–bond cancellations: on the set of good dipoles, the subgraph formed by odd Mayer bonds is not necessarily Eulerian or 2-edge-connected. To repair this, we analyze each connected component of the odd subgraph. It is made of a 2-core consisting of a tree of 2-edge-connected blocks, with pendant trees attached to it. Crucially, each vertex of odd degree in the odd graph interacts with bad points (in particular, the leaves of the pendant trees). We will restore the quadratic estimate \eqref{eq:quad intro} by taking advantage of the interactions with bad points.

	With our radii choice and the good-point definition, for any good $i$,
	$$\mathrm{Interaction}(i,\text{bad points})\lesssim \frac{r_i}{\dist(\{x_i,y_i\},\text{bad points})}.$$ 
	Using a key (sub)geometric growth property of $\dist(\{x_i,y_i\},\text{bad points})$ along the graph—forced by our choice of enlargement radii—we replace the missing odd edges by these interactions with bad points and thereby recover the quadratic estimate \eqref{eq:quad intro}.

	\subsubsection{Identifying the infimum of the rate function}
	
	Combining the matching large deviations upper and lower bounds identifies the limiting free energy as the infimum of the multipole rate function of Definition~\ref{def:rate function}. The remaining step in Section~\ref{section:upper} is to rewrite this infimum as the Mayer series of Theorem~\ref{theorem:expansion}. The idea is to use Theorem \ref{theorem:LDP} to reduce to clusters and multipoles of cardinality less than $p(\beta)$ and then sum over the multipole partition, in order to approximate $\log Z_{N,\beta}^\lambda$ by a (convergent) series
	\begin{equation*}
		\sum_{n=0}^\infty\frac{1}{n!}\sum_{\substack{V_1,\ldots,V_n\subset [N]\\ \mathrm{disjoint}}}K(V_1)\cdots K(V_n),
	\end{equation*}
	where $K$ denotes an intricate activity. Since a cluster expansion of the above form (with a variable total number of points) is essentially unique by M\"obius inversion, we can identify this activity $K$ with the simple dipole activity of Definition \ref{def:dipole activity} (for the relevant clusters). This way, we complete the proof of Theorem \ref{theorem:expansion}.
	
	\subsection{Notation}
	
	Since the paper is quite long, we gather here some notation that has already been introduced in the introduction and/or that will be introduced later.
	
	\begin{itemize}
		\item Recalling that $N$ is a fixed integer, we denote 
		$	\Lambda=[0,\sqrt{N}]^2.$
		\item The Euclidean norm in $\dR^2$ is denoted by $|\cdot|$.
		\item For every set $A$, we denote by $|A|$ its cardinality.
		\item For every set $A$, we let $\mc{P}(A)$ be the set of subsets of $A$.
		\item For every $n\in \mathbb{N}$, we denote by $[n]$ the set $\{1,\ldots,n\}$.
		\item For a set $V$, we let $\mc{G}_c(V)$ be the set of collections of edges $E$ on $V$ with at least one edge and such that $(V,E)$ is connected. We also let $\mc{T}_c(V)$ be the set of collections of edges $E$ on $V$ with at least one edge and such that $(V,E)$ is a tree.
		\item For a graph $G$, we let $V(G)$ be the set of vertices and $E(G)$ be the set of edges. For every $v\in V(G)$, we denote $\deg_G(v)$ for the degree of $v$ in $G$ (we will sometimes write this with an abuse of notation as $\deg_{E(G)}(v)$).
		\item Let $A$ be a set and $X_1,\ldots,X_n\subset A$. We denote by $G(X_1,\ldots,X_n)$ the connection graph of $X_1,\ldots,X_n$ which is a graph on $[n]$ with an edge between $i$ and $j$ if $X_i$ and $X_j$ intersect.
		\item We denote by $\mathrm{I}$ the Ursell function (see Definition \ref{def:graphU}).
		\item For every set $A$, recall $\Htrees_n(A)$ from \eqref{def:Htrees}.
		\item Let $A$ be an ensemble. The set $\mathbf{\Pi}(A)$ will denote the set of partitions of $A$ and $\mathbf{\Pi}_\sub(A)$ the set of subpartitions of $[N]$.
		\item For every subpartition $X$ of $[N]$, let
		\begin{equation*}
			V_X=\bigcup_{S\in X}S.
		\end{equation*}
		Also set
		\begin{equation*}
			\mc{E}^\inter(X)=\bigcup_{S,S'\in X:S\neq S'}\bigcup_{i\in S,j\in S'}\{ij\}\quad \text{and}\quad  \mc{E}^\intra(X)=\bigcup_{S\in X}\bigcup_{i\in S,j\in S:i\neq j}\{ij\}.
		\end{equation*}
		\item For every subpartition $X$ of $[N]$, we let $\mathsf{E}^X$ be the set of collections of edges $E\subset \mc{E}^\inter(X)$ with at least one edge and such that $(V_X,E\cup \mc{E}^\intra(X))$ is connected.
		\item For every subpartition $X$ of $[N]$ and every integer $k\in \mathbb{N}$, we let $\#_k X$ be the number of parts of $X$ of cardinality $k$.
		\item Let $X$ be a subpartition of $[N]$ and let $E\subset \mc{E}^\inter(X)$. For every $S\in X$, we denote 
		\begin{equation*}
			\deg_E(S)=\sum_{v\in S}\deg_{(V_X,E)}(v).
		\end{equation*}
		\item We denote by $\mc{Z}_\beta$ the integral defined in \eqref{def:Zbeta}:
		\begin{equation*}
			\mc{Z}_\beta=2\pi\int_0^{+\infty}e^{\beta \g_1(r)}r\dd r.
		\end{equation*}
		\item For every $i, j \in [N]$ with $i\neq j$, define the event
		
		\begin{equation*}
			\mathsf{B}_{ij}\coloneqq \{i\leftrightarrow_{\sigma_N} j\},
		\end{equation*}
		where $i\leftrightarrow_{\sigma_N} j$ if
		\begin{equation*}
			\mathrm{dist}(\{x_i,y_{\sigma_N(i)}\},\{x_j,y_{\sigma_N(j)}\})\leq M \min(\max(\mathbf{r}_i,\lambda), \max(\mathbf{r}_j,\lambda)).
		\end{equation*}
		Let $S$ be a subset of $[N]$. Define the event
		\begin{equation*}
			\mathsf{B}_S\coloneqq \{S \text{ is connected in }([N],\leftrightarrow_{\sigma_N})\}.
		\end{equation*}
		\item 
		For every $i, j \in [N]$ with $i\neq j$, define the event
		\begin{equation*}
			\mc{B}_{ij}\coloneqq \{i\leftrightarrow j\},
		\end{equation*}
		where $i\leftrightarrow j$ if
		\begin{equation*}
			\mathrm{dist}(\{x_i,y_i\},\{x_j,y_j\})\leq M \min(\max(r_i,\lambda), \max(r_j,\lambda)).
		\end{equation*}
		Let $S$ be a subset of $[N]$. Define the event
		\begin{equation*}
			\mc{B}_S\coloneqq \{S \text{ is connected in }([N],\leftrightarrow)\},
		\end{equation*}
		\item We denote by $\sigma_N$ the stable matching and $\mc{A}_{ij}$ the event from \eqref{introAij}. 
		\item We denote $\vec{r}_i\coloneqq y_i-x_i$ and $r_i\coloneqq |y_i-x_i|$. Moreover for every $S\subset [N]$, we let 
		\begin{equation*}
			r_S\coloneqq \max_{i\in S}r_i.
		\end{equation*}
		\item For every $i,j\in [N]$ with $i\neq j$, denote
		\begin{align*}
			d_{ij}&\coloneqq \dist(\{x_i,y_i\},\{x_j,y_j\}),\\
			d_{i,j}^+&\coloneqq \dist(\{x_i,y_i\},\{x_j\}),\\
			d_{i,j}^-&\coloneqq \dist(\{x_i,y_i\},\{y_j\}).
		\end{align*}
		\item Given $i,j\in [N]$ with $i\neq j$, denote 
		\begin{align*}
			v_{ij}&\coloneqq \g_\lambda(x_i-x_j)+\g_\lambda(y_{i}-y_{j})-\g_\lambda(x_i-y_{j})-\g_\lambda(x_j-y_{i}),\\
			v_{i,j,+}&\coloneqq \g_\lambda(x_i-x_j)-\g_\lambda(y_{i}-x_j),\\
			v_{i,j,-}&\coloneqq \g_\lambda(x_i-y_j)-\g_\lambda(x_{j}-y_j),
		\end{align*}
		\item As in \eqref{vijfij}, we denote by $f_{ij}^v$ the Mayer bond 
		\begin{equation*}
			f_{ij}^v=e^{-\beta v_{ij}}\indic_{\mc{A}_{ij}}-1.
		\end{equation*}
		\item We let $R_{\beta,\lambda}$ be the critical scale introduced in \eqref{def:Rlambda}, i.e. \begin{equation*}
			R_{\beta,\lambda}= \begin{cases} \lambda^{-\frac{\beta-2}{4-\beta}} & \text{ if $\beta<4$ } \\ +\infty &\text{ if $\beta \ge 4$}\end{cases}\end{equation*}
		Recall that $p_0\in \mathbb{N}^*$ is fixed throughout the paper. Moreover,  in Definition \ref{def:Rlambda p0}, we set
		\begin{equation*}
			\Cut\coloneqq \begin{cases}
				R_{\beta,\lambda} & \text{if $\beta\in (2,4)$}\\
				\lambda^{-2p_0} & \text{if $\beta\geq 4$}.
			\end{cases}
		\end{equation*}
		\item As in Definition \ref{def:optimal error}, we let
		\begin{equation*}
			\delta_{\beta,\lambda}\coloneqq R_{\beta,\lambda}^{-2}\indic_{\beta\in (2,4)} \left(1+\sum_{p= 2}^\infty |\log \lambda|\indic_{\beta=\beta_{p}}\right) +\lambda^{2p_0}\indic_{\beta\geq 4}.
		\end{equation*}
		Notice that for $\beta\geq 4$, $\delta_{\beta,\lambda}\neq \Cut^{-2}$.
		\item Recall from \eqref{defpstar} 
		\begin{equation*}
			p^*(\beta)\coloneqq \sup\{q\ge 1:\ \beta>\beta_q\}\in \mathbb{N}^*\cup\{+\infty\}
		\end{equation*}
		and from \eqref{def:pbeta},
		\begin{equation*}
			p(\beta)=\begin{cases}
				p^*(\beta) & \text{if $\beta\in (2,4)$}\\
				p_0 & \text{if $\beta\geq 4$}.
			\end{cases}
		\end{equation*}
		\item We let 
		\begin{equation*}
			\alpha(\beta)\coloneqq \begin{cases}
				\frac{4-\beta}{2} & \text{if $\beta\in (2,4)$}\\
				\frac{1}{2} & \text{if $\beta\geq 4$}.
			\end{cases}
		\end{equation*}
		\item We let 
		\begin{equation*}
			C_{\beta,\lambda,\ve_0}=\frac{1}{N}\int_{\Lambda^2}e^{\beta \g_\lambda(x-y)}\indic_{|x-y|\leq \ve_0 \Cut} \dd x\dd y
		\end{equation*}
		and $C_{\beta,\lambda}=C_{\beta,\lambda,\infty}$.		
		\item We will denote 
		\begin{equation*}
			q(\beta)\coloneqq 
			\begin{cases} \lfloor \frac{3}{2-\beta/2}\rfloor  & \text{if $\beta\in (2,4)$}\\
				p_0 & \text{if $\beta\geq 4$}.
			\end{cases}
		\end{equation*}
		\item For every $k\geq 2$, the quantity $\gamma_{\beta,\lambda,k}$ will stand for
		\begin{equation*}
			\gamma_{\beta,\lambda,k} =
			\begin{cases}
				\lambda^{2(k-1)} 
				& \text{if } k \leq p^*(\beta), \\[4pt]
				\Cut^{-2} 
				& \begin{aligned}
					&\text{if } k > p^*(\beta) \ \text{and} \ 
					\beta \in \big(\beta_{p^*(\beta)},\beta_{p^*(\beta)+1}\big) \\
					&\text{or } \beta = \beta_{p^*(\beta)+1} \ \text{and} \ k > p^*(\beta) + 1,
				\end{aligned} \\[8pt]
				\Cut^{-2} \lvert\log \lambda\rvert
				& \text{if } k = p^*(\beta) + 1 \ \text{and} \ 
				\beta = \beta_{p^*(\beta)+1}.
			\end{cases}
		\end{equation*}
		\item  Throughout the paper, $M$ denotes the constant in Definition~\ref{def:multipoles} (multipoles); $\ve_0\in(0,1)$ is a small parameter used to restrict to dipoles of length at most $\ve_0\,\Cut$ and to interactions of range at most $C\ve_0\,\Cut$; and $p_0$ denotes the (fixed) truncation parameter of the Mayer series when $\beta\geq 4$.
		\item By convention, we do not track the dependence of constants on $p_0$ and on the mollifier $\chi$ used to define $\delta_0^{(1)}$; all implicit constants may depend on these quantities.
	\end{itemize}

	\section{Electric formulation and energy bounds}\label{section:energy}
	
	This section collects the energy bounds used later in the paper. Section~\ref{sub:electric} introduces the electric formulation of the energy. From it we derive lower and upper bounds in Sections~\ref{sub:lower} and \ref{sub:upper}. After establishing decay estimates for the exact dipole--dipole interaction in Section~\ref{sub:decayv}, Section~\ref{sub:errors} quantifies the error between the exact interaction and the approximations employed in these bounds.

	\subsection{Electric formulation}\label{sub:electric}
	
	This formulation is useful for obtaining both lower and upper bounds on the energy. Denoting $z_1,\ldots,z_N$ for $x_1,\ldots,x_N$ and $z_{N+1},\ldots,z_{2N}$ for $y_1,\ldots,y_N$, and $d_i$ for the sign of $z_i$, one can rewrite \eqref{eq:heta} as 
	\begin{equation*} 
		h_\lambda=\g* \left( \sum_{i=1}^{2N} d_i \delta_{z_i}^{(\lambda)}\right)
	\end{equation*}
	and thus 
	\begin{equation} \label{Deltah} -\Delta h_\lambda= 2\pi \sum_{i=1}^{2N} d_i \delta_{z_i}^{(\lambda)}.\end{equation}
	
	When increasing the discs we will also denote similarly
	for any vector $\vec{\alpha}=(\alpha_1, \dots , \alpha_{2N})$ in $\R^{2N}$
	\begin{equation}\label{defhalpha} h_{\vec{\alpha}} = \g* \left( \sum_{i=1}^{2N} d_i \delta_{z_i}^{(\alpha_i)} \right).\end{equation}

	As a preliminary step, we  will need the following.
	\begin{lemma}\label{lemma:g1W2} If the density of $\chi=\delta_0^{(1)}$ is bounded, the
		functions $\g*\delta_0^{(1)}$ and  $\g_1= \g* \delta_0^{(1)}*\delta_0^{(1)}$ have a uniformly bounded second derivative, i.e.~are in the Sobolev space $W^{2,\infty}$.
	\end{lemma}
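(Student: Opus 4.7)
The plan is to exploit the radial symmetry of $\chi$ together with Newton's theorem in two dimensions, which for a radial convolution of $\g$ with a bounded density reduces the Hessian to an explicit algebraic expression that can be estimated by hand.

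Set $\rho_1\coloneqq \chi$ and $\rho_2\coloneqq \chi*\chi$. Both are radial and nonnegative; $\rho_1$ is supported in $B(0,1)$ and bounded by assumption, while $\rho_2$ is supported in $B(0,2)$ and satisfies $\|\rho_2\|_\infty\le \|\chi\|_1\|\chi\|_\infty=\|\chi\|_\infty<\infty$. Moreover $\int\rho_i=1$ for $i=1,2$. Writing $m(r)\coloneqq \int_{B(0,r)}\rho$ for one of these two densities, Newton's theorem in $\R^2$ yields the classical formula
\begin{equation*}
\nabla(\g*\rho)(x) \;=\; -\,\frac{m(|x|)}{|x|^{2}}\,x,
\end{equation*}
so that $\g*\rho$ is $C^1$ away from the origin (and in fact on all of $\R^2$, since $m(r)/r^2\to \pi\rho(0)$ as $r\to 0$). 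A second differentiation gives, at points where $\rho$ is continuous,
\begin{equation*}
\partial_i\partial_j(\g*\rho)(x)
\;=\;-\,\frac{2\pi\,\rho(|x|)\,x_ix_j}{|x|^{2}}
\;-\;\frac{m(|x|)\,\delta_{ij}}{|x|^{2}}
\;+\;\frac{2\,m(|x|)\,x_ix_j}{|x|^{4}}.
\end{equation*}

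I then check that each of the three summands is in $L^\infty(\R^2)$. The first is pointwise dominated by $2\pi\|\rho\|_\infty$. For the other two, note that $m(r)\le \pi r^{2}\|\rho\|_\infty$ by direct integration, so $m(|x|)/|x|^{2}$ is bounded by $\pi\|\rho\|_\infty$ on a neighborhood of $0$; for $|x|$ bounded away from $0$, $m(|x|)\le 1$ gives uniform control of $m(|x|)/|x|^{2}$ and $m(|x|)/|x|^{4}$. Combining these, the right-hand side of the displayed Hessian formula is uniformly bounded on $\R^2\setminus\{0\}$ (and hence a.e.). The same computation for $\rho_1$ and $\rho_2$ gives bounded Hessians for $\g*\chi$ and $\g_1$ respectively.

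The only mildly delicate point is that $\rho_1=\chi$ may be only $L^\infty$ (e.g.\ the normalized indicator of $B(0,1)$), so the pointwise computation above is not literally valid on the support boundary. I would handle this by deriving the Hessian formula distributionally from the gradient formula (which is Lipschitz in $x$) and then identifying the distributional Hessian with the explicit a.e.\ expression using a standard mollification of $\rho$. Since mollification preserves radiality, the total mass, and the uniform bounds $\|\rho\|_\infty$ and $m(r)/r^2\le\pi\|\rho\|_\infty$, passing to the limit in the formula and in the $L^\infty$ estimate concludes the proof; the bound on the Hessian depends only on $\|\chi\|_\infty$ (for $\g_1$ via $\|\chi*\chi\|_\infty\le\|\chi\|_\infty$). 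This is the only step requiring care, and it is routine.
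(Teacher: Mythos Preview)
Your proof is correct and follows essentially the same approach as the paper: both use Newton's theorem on a radial density to obtain an explicit formula for the gradient, then estimate the second derivatives using $m(r)\le \pi r^2\|\rho\|_\infty$ near the origin and $m\le 1$ at infinity. The paper works with the one-variable radial profile and its ODE, while you compute the full $2\times 2$ Hessian directly; your treatment of the case where $\chi$ is merely $L^\infty$ (via mollification/distributional identification) is slightly more explicit than the paper's, which relies on the radial ODE formulation holding regardless.
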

	\begin{proof} Let $\tilde \g_1\coloneqq  \g*\delta_0^{(1)}$. 
		It suffices to use that $-\Delta \tilde \g_1= 2\pi \chi$ where the right-hand side is bounded and supported in $B(0,1)$ and the fact that $\tilde \g_1$ is radial.
		Letting $f(t)= \int_{B(0,t)} \chi$, the function $f$ is differentiable, bounded by $\pi \|\chi\|_{L^\infty}t^2$  and its derivative is bounded by $2\pi t\|\chi\|_{L^\infty}$. On the other hand, Stokes's formula  gives that 
		$$\int_{\partial B(0,t)} \frac{\partial \tilde \g_1}{\partial \nu}=-2\pi f(t),$$ where $\nu$ denotes the outer unit normal to $B(0,t)$. Viewing $\tilde \g_1 $ as a function of one variable, we then find that 
		$-\tilde \g_1'(t)= \frac{f(t)}{t}$, and from the above estimates on $f$, we find that $|\tilde \g_1'|\le Ct $ and $|\tilde\g_1''(t)|\le C $  for every $t\in \R_+$, and this easily implies that $D^2\tilde\g_1$, seen as a function of $\R^2$, is bounded, implying the result for $\tilde \g_1$. The result for $\g_1$ is then straightforward.
	\end{proof}
	We next state another preliminary result.
	
	\begin{lemma}[Monotonicity]\label{lemma:mono}
		If $0<\alpha_1\leq \eta_1$ and $0<\alpha_2\leq \eta_2$, then
		\begin{equation}\label{eq:alphaeta}
			\g*\delta_0^{(\alpha_1)}*\delta_0^{(\alpha_2)}\geq  \g*\delta_0^{(\eta_1)}*\delta_0^{(\eta_2)}.
		\end{equation}
		In particular, in view of \eqref{eq:defgeta}, if $0<\alpha\leq \eta$, then 
		\begin{equation}\label{eq:alphaeta 2}
			\g_\alpha\geq \g_\eta.
		\end{equation}
	\end{lemma}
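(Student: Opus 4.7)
The plan is to reduce \eqref{eq:alphaeta} to the classical fact that a superharmonic function has non-increasing spherical averages, applied twice. First, I would note that for any $\alpha>0$, the function $\phi_\alpha:=\g*\delta_0^{(\alpha)}$ is superharmonic on $\R^2$, since $-\Delta \phi_\alpha = 2\pi\,\delta_0^{(\alpha)}\ge 0$ in the distributional sense (and $\phi_\alpha$ is continuous, thanks to Lemma~\ref{lemma:g1W2} style estimates). Equivalently, $\phi_\alpha$ is the average of translates of the superharmonic kernel $\g$ against a nonnegative measure.

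Next, fix $x\in\R^2$ and set $\phi:=\g*\delta_0^{(\alpha_2)}$, and introduce the spherical average
\[
m_\phi(x,r)\;:=\;\frac{1}{2\pi}\int_0^{2\pi}\phi\bigl(x+r(\cos\theta,\sin\theta)\bigr)\,d\theta.
\]
Since $\phi$ is superharmonic, the classical sub-mean-value property gives that $r\mapsto m_\phi(x,r)$ is non-increasing on $[0,\infty)$. Using that $\delta_0^{(\alpha)}(y)=\alpha^{-2}\chi(|y|/\alpha)$ with $\chi$ radial and nonnegative, a polar-coordinates computation and the substitution $s=r/\alpha$ yield
\[
\phi*\delta_0^{(\alpha)}(x)\;=\;2\pi\int_0^{\infty} m_\phi(x,\alpha s)\,\chi(s)\,s\,ds.
\]
As $\alpha$ increases, each argument $\alpha s$ increases; monotonicity of $m_\phi(x,\cdot)$ then shows that the right-hand side is non-increasing in $\alpha$. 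Applied with $\alpha_1\le\eta_1$ this proves, pointwise,
\[
\g*\delta_0^{(\alpha_1)}*\delta_0^{(\alpha_2)}\;\ge\;\g*\delta_0^{(\eta_1)}*\delta_0^{(\alpha_2)}.
\]

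Finally, the same argument applied with $\phi$ replaced by the superharmonic function $\g*\delta_0^{(\eta_1)}$ and $\alpha_2\le\eta_2$ gives $\g*\delta_0^{(\eta_1)}*\delta_0^{(\alpha_2)}\ge \g*\delta_0^{(\eta_1)}*\delta_0^{(\eta_2)}$, and combining the two inequalities yields \eqref{eq:alphaeta}. The specialization \eqref{eq:alphaeta 2} is immediate by taking $\alpha_1=\alpha_2=\alpha$, $\eta_1=\eta_2=\eta$ in \eqref{eq:defgeta}. There is no real obstacle here; the only point that warrants care is justifying the superharmonicity of $\phi_\alpha$ in a pointwise (not just distributional) sense, which is handled by noting that $\phi_\alpha$ is continuous and that the super-mean-value property transfers through convolution with a nonnegative radial density.
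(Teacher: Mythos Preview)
Your proof is correct. Both your argument and the paper's rest on the same two-step decomposition (vary $\alpha_1$, then $\alpha_2$) and the same underlying fact, namely that convolving a superharmonic function with the radial probability density $\delta_0^{(\alpha)}$ is non-increasing in $\alpha$. The difference is packaging: you invoke the classical sub-mean-value property for superharmonic functions and then express $\phi*\delta_0^{(\alpha)}$ as a weighted integral of spherical means $m_\phi(x,\alpha s)$, whereas the paper proves the one-parameter monotonicity $\g*\delta_0^{(\alpha)}\ge \g*\delta_0^{(\eta)}$ from scratch by setting $u_{\alpha,\eta}:=\g*\delta_0^{(\alpha)}-\g*\delta_0^{(\eta)}$, using Stokes's theorem and radiality to show $\partial_r u_{\alpha,\eta}\le 0$, and then Newton's theorem to conclude $u_{\alpha,\eta}\ge 0$; it then bootstraps by convolving this pointwise inequality against the nonnegative measure $\delta_0^{(\alpha_1)}$. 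Your route is slightly more conceptual and avoids rederiving a textbook property; the paper's route is more self-contained and makes the role of Newton's theorem explicit.
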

	\begin{proof}
		We first show that if $\alpha\leq \eta$, then 
		\begin{equation}\label{eq:claimgg}
			\g*\delta_0^{(\alpha)}\ge \g*\delta_0^{(\eta)}.
		\end{equation}
		
		Let $u_{\alpha,\eta}\coloneqq  \g*\delta_0^{(\alpha)}- \g*\delta_0^{(\eta)}$.
		By definition of $\delta_0^{(\eta)}$ as $\frac{1}{\eta^2}\chi(\frac{\cdot}{\eta})$ with $\chi\ge 0$ supported in  $B(0, 1)$, we have
		\begin{equation*}
			-\Delta u_{\alpha,\eta}=2\pi\Bigr(\frac{1}{ \alpha^2 }\chi\Bigr(\frac{x}{\alpha} \Bigr) - \frac{1}{\eta^2} \chi \Bigr(\frac{x}{\eta} \Bigr) \Bigr),
		\end{equation*}
		Thus, for any $r>0$, 
		\begin{align*}
			\int_{B(0, r)} \Delta u_{\alpha, \eta} &= 2\pi \int_{B(0,r)} \frac{1}{\eta^2} \chi \Bigr(\frac{x}{\eta} \Bigr)\dd x- 2\pi \int_{B(0,r)} \frac{1}{ \alpha^2 }\chi\Bigr(\frac{x}{\alpha}\Bigr) \dd x  \\ &= 2\pi \int_{B(0, \frac{r}{\eta}) } \chi(x) \dd x- 2\pi \int_{B(0,\frac{r}{\alpha}) }\chi(x) \dd x,\end{align*}
		where we have used a change of variables. Since $\alpha\le \eta$ and $\chi\ge 0$, this is nonpositive.
		
		By Stokes's theorem, we deduce that for every $r>0$, 
		$$\int_{\partial B(0,r)} \frac{\partial u_{\alpha,\eta}}{\partial \nu} \le 0,$$ where $\nu$ is the outer unit normal to $B(0,r)$.
		But $u_{\alpha,\eta}$ is radial, hence $\frac{\partial u_{\alpha,\eta}}{\partial \nu}$ is constant on $\partial B(0,r)$ and also equal to $
		\frac{\partial u_{\alpha,\eta}}{\partial r}$, so we have found that $\frac{\partial u_{\alpha,\eta}}{\partial r} \le 0$, that is $u_{\alpha,\eta}$ is nonincreasing in $r$. On the other hand, by Newton's theorem, $\g*\delta_0^{(\alpha)}$ and $ \g*\delta_0^{(\eta)}$
		both coincide with $\g$ in $B(0,\eta)^c$, hence $u_{\alpha,\eta}=0$ there. We deduce that $u_{\alpha, \eta}\ge 0$ everywhere and \eqref{eq:claimgg} holds.

		Next, we write 
		\begin{equation*}
			\g*\delta_0^{(\alpha_1)}*\delta_0^{(\alpha_2)}- \g*\delta_0^{(\eta_1)}*\delta_0^{(\eta_2)}=  \Bigr(\g*\delta_0^{(\alpha_1)}*\delta_0^{(\alpha_2)}- \g*\delta_0^{(\alpha_1)}*\delta_0^{(\eta_2)}\Bigr)+\Bigr(  \g*\delta_0^{(\alpha_1)}*\delta_0^{(\eta_2)}- \g*\delta_0^{(\eta_1)}*\delta_0^{(\eta_2)}\Bigr).
		\end{equation*}

		By \eqref{eq:claimgg},
		\begin{equation*}
			\g*\delta_0^{(\alpha_1)}*\delta_0^{(\alpha_2)}- \g*\delta_0^{(\alpha_1)}*\delta_0^{(\eta_2)}=\Bigr( \g*\delta_0^{(\alpha_2)}- \g*\delta_0^{(\eta_2)}\Bigr)*\delta_0^{(\alpha_1)} \geq 0.  
		\end{equation*}
		Similarly,
		\begin{equation*}
			\g*\delta_0^{(\alpha_1)}*\delta_0^{(\eta_2)}- \g*\delta_0^{(\eta_1)}*\delta_0^{(\eta_2)}\geq 0. 
		\end{equation*}
		Combining the  above two displays concludes the proof of \eqref{eq:alphaeta}.

	\end{proof}

	\subsection{Lower bound on the energy}\label{sub:lower}
	Let us now describe how to deduce easy energy lower bounds via the method of radii growth. 
	
	\begin{prop}\label{prop:mino}
		Let $(\vec{X}_N, \vec{Y}_N)$ be a configuration in $(\Lambda^2)^N$ such that $\sigma_N[\vec{X}_N, \vec{Y}_N]=\Id$ 
		where $\sigma_N		:((\Lambda^2)^N,\mc{B}((\Lambda^2)^N))\to\Sigma_N$ is  the stable matching. Let $\tau_i^{+}, \tau_i^{-}\geq \lambda,1\leq i\leq N$ be a set of positive variables.  		For each $i,j\in [N]$ with $i\neq j$, denote
		\begin{multline}\label{defvijp}
			v_{ij}'\coloneqq \Bigr(\g_\lambda-\g * \delta_0^{(\tau_i^{+})}*\delta_0^{(\tau_j^{+})}\Bigr) (x_i-x_j) +\Bigr(\g_\lambda-\g * \delta_0^{(\tau_i^{-})}*\delta_0^{(\tau_j^{-})}\Bigr) (y_{i}-y_{j})\\
			-\Bigr(\g_\lambda-\g * \delta_0^{(\tau_i^{-})}*\delta_0^{(\tau_j^{+})}\Bigr) (y_{i}-x_j)  -\Bigr(\g_\lambda-\g * \delta_0^{(\tau_i^{+})}*\delta_0^{(\tau_j^{-})}\Bigr) (x_i-y_{j}).
		\end{multline} 
		Assume that  there are two sets $I_1$ and $I_2$ such that $ [N]=I_1\sqcup I_2$ and 
		\begin{align}\label{assumpI1} & \text{for every} \ i \in I_1 \text{ we have } |x_i-y_{i} |\le 2(\tau_i^++\tau_i^-)  \text{ and }   \tau_i^+=\tau_i^-\\
			\label{assumpI2} & \text{for every}\  i\  \in I_2 \text{ we have } |x_i-y_{i} |>2(\tau_i^++\tau_i^-).\end{align}
		Then, there exists a constant $C>0$ depending only on $\g_1$ such that
		\begin{multline}\label{eq:minoF}
			\F_\lambda(\vec{X}_N,\vec{Y}_N)\geq -\sum_{i\in I_1} \g_\lambda(x_i-y_{i})-\frac{1}{2}\sum_{i\in I_2} (\g(\tau_i^{+})+ \g(\tau_i^{-})+2\kappa)
			\\  +\hal \sum_{1\leq i, j\leq N: i\neq j    } v_{ij}'- C\sum_{i\in I_1} \Bigr(\frac{|x_i-y_{i}|}{\tau_i^{+}}\Bigr)^2.
		\end{multline}  
		
	\end{prop}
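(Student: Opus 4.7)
The plan is to start from the electric formulation \eqref{eq:rewrF} and run the ball--growth method, then carefully identify the contributions at the enlarged scale $\vec\tau$ with the quantities appearing on the right-hand side of \eqref{eq:minoF}. I will carry this out in four steps.

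\emph{Step 1: expansion into dipole pair interactions.} Writing the $2N$ charges as $z_1,\ldots,z_{2N}$ with signs $d_i\in\{\pm1\}$ and expanding $\F_\lambda=\frac{1}{4\pi}\int|\nab h_\lambda|^2-N(\g(\lambda)+\kappa)$ via $\int|\nab h_\lambda|^2 = 2\pi\int h_\lambda\,d\mu_\lambda$ and Green's identity, the diagonal terms produce exactly $N(\g(\lambda)+\kappa)$ (through $\g_\lambda(0)=\g(\lambda)+\kappa$), which cancels the subtraction. Regrouping by dipole indices yields
\[
\F_\lambda(\vec X_N,\vec Y_N) \;=\; \tfrac12\sum_{i\neq j} v_{ij}\;-\;\sum_{i=1}^N \g_\lambda(x_i-y_{\sigma_N(i)}),
\]
where $v_{ij}$ is the dipole--dipole interaction.

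\emph{Step 2: splitting each pair into an ``overlap'' and a ``smoothed'' part.} For every pair $i\neq j$, by associativity and definition of $v'_{ij}$ in \eqref{defvijp},
\[
v_{ij} \;=\; v'_{ij}\;+\;\tilde v_{ij},\qquad \tilde v_{ij}\coloneqq \sum_{\epsilon,\epsilon'\in\{\pm\}}\!\!\epsilon\epsilon'\,(\g*\delta_0^{(\tau_i^\epsilon)}*\delta_0^{(\tau_j^{\epsilon'})})(z_i^\epsilon-z_j^{\epsilon'}).
\]
By Lemma~\ref{lemma:mono}, each of the four terms comprising $v'_{ij}$ is pointwise nonnegative, and by Newton's theorem $v'_{ij}$ is supported on the event where one of the pairs of balls at scale $\tau$ overlap.

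\emph{Step 3: re-assembling the smoothed terms as $\frac{1}{4\pi}\int|\nab h_{\vec\tau}|^2$.} Defining $h_{\vec\tau}$ as in \eqref{defhalpha}, a second Green's identity computation yields
\[
\tfrac{1}{4\pi}\int|\nab h_{\vec\tau}|^2 \;=\; \tfrac12\sum_{i\neq j}\tilde v_{ij}\;+\;\tfrac12\sum_{i=1}^N\Bigl[\g(\tau_i^+)+\g(\tau_i^-)+2\kappa - 2\,\g*\delta_0^{(\tau_i^+)}*\delta_0^{(\tau_i^-)}(x_i-y_{\sigma_N(i)})\Bigr],
\]
where I used the dipole self-interaction $(\g*\delta_0^{(\tau_i^\epsilon)}*\delta_0^{(\tau_i^\epsilon)})(0)=\g(\tau_i^\epsilon)+\kappa$. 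Substituting this back into the expression for $\F_\lambda$ and dropping the nonnegative term $\int|\nab h_{\vec\tau}|^2$ leaves
\[
\F_\lambda \;\geq\; \tfrac12\sum_{i\neq j} v'_{ij} \;-\;\tfrac12\sum_{i=1}^N\!\bigl(\g(\tau_i^+)+\g(\tau_i^-)+2\kappa\bigr)\;-\sum_{i=1}^N\!\Bigl[\g_\lambda(x_i-y_{\sigma_N(i)})-\g*\delta_0^{(\tau_i^+)}*\delta_0^{(\tau_i^-)}(x_i-y_{\sigma_N(i)})\Bigr].
\]

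\emph{Step 4: treating $I_1$ and $I_2$ separately.} For $i\in I_2$, assumption \eqref{assumpI2} gives $|x_i-y_{\sigma_N(i)}|>\tau_i^++\tau_i^-$, so Newton's theorem forces the last bracket to vanish; this contributes exactly $-\frac12(\g(\tau_i^+)+\g(\tau_i^-)+2\kappa)$ per $i\in I_2$. For $i\in I_1$ we have $\tau_i^+=\tau_i^-$, and the main (and only) subtle point is to Taylor-expand the smoothed intra-dipole term: by Lemma~\ref{lemma:g1W2}, $\g_{\tau_i^+}\in W^{2,\infty}$ with $\|D^2\g_{\tau_i^+}\|_\infty\leq C(\tau_i^+)^{-2}$ (by scaling from $\g_1$), hence
\[
\g*\delta_0^{(\tau_i^+)}*\delta_0^{(\tau_i^+)}(x_i-y_{\sigma_N(i)}) \;=\; \g(\tau_i^+)+\kappa \;+\; O\!\Bigl(\bigl(|x_i-y_{\sigma_N(i)}|/\tau_i^+\bigr)^2\Bigr).
\]
Substituting this, the $\g(\tau_i^+)+\kappa$ factor cancels exactly against $-\tfrac12(\g(\tau_i^+)+\g(\tau_i^-)+2\kappa) = -(\g(\tau_i^+)+\kappa)$ coming from the explicit self-energy sum, leaving the advertised remainder $-\g_\lambda(x_i-y_{\sigma_N(i)})$ plus an error of size $C(r_i/\tau_i^+)^2$. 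Summing over $I_1\sqcup I_2=[N]$ produces \eqref{eq:minoF}.

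The main obstacle, and really the only nontrivial input beyond bookkeeping, is the Taylor estimate in Step~4: verifying that the constant is controlled uniformly in $\tau_i^+\geq\lambda$, which is where Lemma~\ref{lemma:g1W2} is used essentially through the scaling $\g_\tau(\cdot)=\g_1(\cdot/\tau)-\log\tau$. Everything else is a mechanical rearrangement of the electric formulation together with Newton's theorem and Lemma~\ref{lemma:mono}.
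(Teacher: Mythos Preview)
Your proof is correct and follows essentially the same approach as the paper: both use the electric formulation and drop the nonnegative term $\int|\nabla h_{\vec\tau}|^2$, then identify the remaining terms dipole by dipole. Your Step~4 Taylor argument for $I_1$ (using radial symmetry of $\g_{\tau_i^+}$ to kill the linear term, then the $W^{2,\infty}$ bound from Lemma~\ref{lemma:g1W2} and scaling) is in fact slightly cleaner than the paper's version, which reaches the same $O((r_i/\tau_i^+)^2)$ bound via an explicit Stokes computation on the unit ball. One small remark: your aside in Step~2 that ``each of the four terms comprising $v'_{ij}$ is pointwise nonnegative'' is true but not used (and since two of the four carry minus signs, it does not by itself say anything about $v'_{ij}$); it is harmless but can be dropped.
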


	The crucial point here is that  the terms in $v_{ij}'$ vanish as soon as $B(x_i,\tau_i) \cap B(x_j, \tau_j) =\emptyset $ (or respectively the same with $x_i$ and $y_{i}$ etc. Indeed, when $B(z_i,\tau_i)\cap B(z_j,\tau_j)=\emptyset$, then $B(z_i,\lambda)\cap B(z_j,\lambda)=\emptyset$ and therefore by Newton's theorem (or the mean value property),
	\begin{equation*}
		\iint \g(x-y) \delta_{z_i}^{(\lambda)} (x)
		\delta_{z_j}^{(\lambda)} (y)-  \iint \g(x-y) \delta_{z_i}^{(\tau_i)} (x)
		\delta_{z_j}^{(\tau_j)} (y) =0.  
	\end{equation*}

	The above proposition thus allows us to bound from below the energy by the energy within dipoles plus dipole pair interactions concerning a reduced set of  pairs (that will depend on the choices of $\tau_i$), up to an error.

	\begin{proof}
		Here we introduce the notation $z_i$, $i=1, \cdots , 2N$ to denote any point in the collection, whether  positively or negatively charged, and $d_i=\pm 1$ its charge (positive if it belongs to $\XN$, negative to $\YN$).  
		
		For $i=1,\ldots,N$, we write temporarily $\tau_i=\tau_i^{+}$ and for $i=N+1,\ldots,2N$, $\tau_i=\tau_{i}^-$. In view of \eqref{geta0} and \eqref{defhalpha}, one can express the variation of energy as
		\begin{align}\label{peq}
			\int_{\R^2} |\nab h_{\lambda}|^2-
			\int_{\R^2}  |\nab h_{\vec{\tau}}|^2 &=
			2\pi \sum_{i,j} d_id_j \left(\iint \g(x-y) \delta_{z_i}^{(\lambda)} (x)
			\delta_{z_j}^{(\lambda)} (y)-  \iint \g(x-y) \delta_{z_i}^{(\tau_i)} (x)
			\delta_{z_j}^{(\tau_j)} (y)
			\right)
			\\ \notag
			& = 
			2\pi \sum_{i=1}^{2N} (\g(\lambda)-\g(\tau_i)) \\ \notag &
			+2\pi \sum_{1\leq i\neq j\leq 2N} d_id_j \left(\iint \g(x-y) \delta_{z_i}^{(\lambda)} (x)
			\delta_{z_j}^{(\lambda)} (y)-  \iint \g(x-y) \delta_{z_i}^{(\tau_i)} (x)
			\delta_{z_j}^{(\tau_j)} (y) \right).
		\end{align}
		By \eqref{eq:rewrF} and \eqref{geta0},
		\begin{align*}
			\F_\lambda(\vec{X}_N,\vec{Y}_N) &=\frac{1}{4\pi}\int_{\dR^2}|\nabla h_\lambda|^2-N(\g(\lambda)+\kappa)\geq \frac{1}{4\pi}\Bigr( \int_{\R^2} |\nab h_{\lambda}|^2-
			\int_{\R^2}  |\nab h_{\vec{\tau}}|^2\Bigr)-N(\g(\lambda)+\kappa)\\
			& \geq -\frac{1}{2}\sum_{i=1}^{2N}\iint \g(x-y) \delta_{z_i}^{(\tau_i)}(x) \delta_{z_i}^{(\tau_i)}(y)\\ &+\frac{1}{2}\sum_{1\leq i\neq j\leq 2N} d_id_j \left(\iint \g(x-y) \delta_{z_i}^{(\lambda)} (x)
			\delta_{z_j}^{(\lambda)} (y)-  \iint \g(x-y) \delta_{z_i}^{(\tau_i)} (x)
			\delta_{z_j}^{(\tau_j)} (y) \right).
		\end{align*}
		One can perform a resummation over dipole indices. Recalling  $v_{ij}'$ defined in \eqref{defvijp} and spelling out $z_i=x_i$ or $z_i=y_i$, one can write
		\begin{multline*}
			\sum_{1\leq i\neq j\leq 2N} d_id_j \left(\iint \g(x-y) \delta_{z_i}^{(\lambda)} (x)
			\delta_{z_j}^{(\lambda)} (y)-  \iint \g(x-y) \delta_{z_i}^{(\tau_i)} (x)
			\delta_{z_j}^{(\tau_j)} (y) \right)\\=\sum_{1\leq i,j\leq N:i\neq j}v_{ij}'-2\sum_{i=1}^N \left(\iint \g(x-y) \delta_{x_i}^{(\lambda)} (x)
			\delta_{y_i}^{(\lambda)} (y)-  \iint \g(x-y) \delta_{x_i}^{(\tau_i^+)} (x)
			\delta_{y_i}^{(\tau_i^-)} (y) \right).
		\end{multline*}
		It follows in view of \eqref{geta0} that
		\begin{equation}\label{eq:lowerA1}
			\F_\lambda(\XN, \YN) \geq \hal \sum_{i\neq j}v_{ij}'-\sum_{i=1}^{N}  \g_\lambda(x_i-y_i)+A
		\end{equation}
		where
		\begin{multline*}
			A\coloneqq \sum_{i=1}^N \iint \g(x-y)\delta_{x_i}^{(\tau_i^{+})}(x)\delta_{y_i}^{(\tau_i^{-})}(y)-\frac{1}{2}\sum_{i=1}^N \iint \g(x-y)\delta_{x_i}^{(\tau_i^{+})}(x)\delta_{x_i}^{(\tau_i^{+})}(y)\\-\frac{1}{2}\sum_{i=1}^N \iint \g(x-y)\delta_{y_i}^{(\tau_i^{-})}(x)\delta_{y_i}^{(\tau_i^{-})}(y)=-\frac{1}{2} \sum_{i=1}^N \iint \g(x-y)\dd(\delta_{x_i}^{(\tau_i^{+})}-\delta_{y_i}^{(\tau_i^{-})})(x)\dd(\delta_{x_i}^{(\tau_i^{+})}-\delta_{y_i}^{(\tau_i^{-})})(y).
		\end{multline*}
		Let now $i\in I_1$. Then by assumption, $\tau_i^{+}=\tau_i^{-}$. We claim that
		\begin{equation}\label{eq:qerror}
			\iint \g(x-y)\dd(\delta_{x_i}^{(\tau_i^{+})}-\delta_{y_i}^{(\tau_i^{+})})(x)\dd(\delta_{x_i}^{(\tau_i^{+})}-\delta_{y_i}^{(\tau_i^{+})})(y)=O\Bigr(\Bigr(\frac{|x_i-y_i|}{\tau_i^{+}}\Bigr)^2\Bigr).
		\end{equation}
		Indeed, by scaling, one can check that
		\begin{equation}\label{eq:s}
			\iint \g(x-y)\dd(\delta_{x_i}^{(\tau_i^{+})}-\delta_{y_i}^{(\tau_i^{+})})(x)\dd(\delta_{x_i}^{(\tau_i^{+})}-\delta_{y_i}^{(\tau_i^{+})})(y)= \iint \g(x-y)\dd(\delta_{z}^{(1)}-\delta_{0}^{(1)})(x)\dd(\delta_{z}^{(1)}-\delta_{0}^{(1)})(y),  
		\end{equation}
		where $z\in \mathbb{R}^2$ is such that $|z|=\frac{|x_i-y_i|}{\tau_i^{+}}$. First, note that
		\begin{multline}\label{eq:break1}
			\iint \g(x-y)\dd(\delta_{z}^{(1)}-\delta_{0}^{(1)})(x)\dd(\delta_{z}^{(1)}-\delta_{0}^{(1)})(y)=-\iint \g(x-y)\dd \delta_0^{(1)}(x)\dd (\delta_{-z}^{(1)}-\delta_{0}^{(1)})(y)\\-\iint \g(x-y)\dd \delta_0^{(1)}(x)\dd(\delta_z^{(1)}-\delta_0^{(1)})(y).
		\end{multline}
		Set $\tilde \g_1=\g*\delta_0^{(1)}$. By Taylor expansion and Lemma \ref{lemma:g1W2}, we have
		\begin{equation*}
			\iint \g(x-y)\dd\delta_{0}^{(1)}(x)\dd(\delta_{z}^{(1)}-\delta_{0}^{(1)})(y)=\int_{B(0,1)}(\tilde \g_1(x+z)-\tilde \g_1(x)){\chi(x)}\dd x=\int_{B(0,1)}(\nabla \tilde \g_1(x)\cdot z)  {\chi(x)}\dd x+O(|z|^2).
		\end{equation*}
		Since $\chi$ is radial hence even, 
		\begin{equation*}
			\int \nabla \tilde \g_1(x) {\chi(x)}\dd x=0,
		\end{equation*}
		which gives 
		\begin{equation*}
			\iint \g(x-y)\dd\delta_{0}^{(1)}(x)\dd(\delta_{z}^{(1)}-\delta_{0}^{(1)})(y)=O(|z|^2).  
		\end{equation*}
		Replacing $z$ by $-z$, it follows from   \eqref{eq:break1} that 
		\begin{equation*}
			\iint \g(x-y)\dd (\delta_z^{(1)}-\delta_0^{(1)})(x)\dd (\delta_z^{(1)}-\delta_0^{(1)})(y)=O(|z|^2),
		\end{equation*}
		which proves \eqref{eq:qerror} in view of  \eqref{eq:s}.
		
		Now suppose that $i\in I_2$ so that $|x_i-y_{i}|>2(\tau_i^{+}+\tau_i^{-})\ge 2\lambda$. Then,  $B(x_i,\tau_i^{+})$ and $B(y_{i},\tau_i^{-})$ are disjoint, hence by Newton's theorem we find
		\begin{equation}\label{eq:qerror2}
			\begin{split}
				\iint \g(x-y)\dd(\delta_{x_i}^{(\tau_i^{+})}-\delta_{y_i}^{(\tau_i^{-})})(x)\dd(\delta_{x_i}^{(\tau_i^{+})}-\delta_{y_{i}}^{(\tau_i^{-})})(y)&=(\g(\tau_i^{+})+\g(\tau_i^{-})+2\kappa)- 2\g(x_i-y_{i})\\
				&=\g(\tau_i^+) + \g(\tau_i^-) +2\kappa-2\g_\lambda(x_i-y_i),
			\end{split}
		\end{equation}in view of \eqref{geta0} and the fact that $|x_i-y_i|> 2\lambda$.
		
		Inserting \eqref{eq:qerror} and \eqref{eq:qerror2} into \eqref{eq:lowerA1} shows the result.
		
	\end{proof}

	\subsection{Upper bound on the energy}\label{sub:upper}
	The electric formulation is also useful in obtaining energy upper bounds by constructing electric potentials over a union of balls of the domain only. To do so, we consider electric fields $E=\nab h_\lambda$ and relax the condition that they be gradient fields. The Neumann boundary condition is crucial.

	\begin{lemma}\label{lemma:bsupF} 
		Let $(\vec{X}_N, \vec{Y}_N)$ be a configuration in $(\Lambda^2)^N$ such that $\sigma_N[\vec{X}_N, \vec{Y}_N]=\Id$.
		For each $i\in [N]$ let $z_i$ be the barycenter of $\{x_i,y_{i}\}$. Letting $\tau $ be such that for any $i \in [N]$ we have  $\tau \ge 8\max(r_i,\lambda)$ (where we recall $r_i=|y_i-x_i|$),  we define  for each $i$, $u_i$ to be the solution to 
		\begin{equation}\label{eqsurui}
			\left\{ \begin{array}{ll}
				-\Delta u_i= 2\pi (\delta_{x_i}^{(\lambda)} -\delta_{y_{i}}^{(\lambda)} )& \text{in} \ B(z_i, \tau)\\
				\frac{\partial u_i}{\partial \nu}= 0 & \text{on} \ \partial B(z_i, \tau).\end{array}\right.
		\end{equation}
		
		Then 
		
		\begin{equation}\label{eq:upF}
			\F_\lambda(\vec{X}_N,\vec{Y}_N)\leq -\sum_{i=1}^N \g_\lambda(x_i-y_{i})+\hal\sum_{i\neq j}\tilde{v}_{ij}+O\left(\sum_{i=1}^N \frac{r_i^2}{\tau^2}\right),
		\end{equation} 
		where 
		\begin{equation}
			\label{deftvij}\tilde{v}_{ij}\coloneqq \frac{1}{2\pi}\int_{B(z_i,\tau)\cap B(z_j,\tau) } \nabla u_i\cdot \nabla u_j.
		\end{equation}
	\end{lemma}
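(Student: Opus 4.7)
The key idea is to bound $\int|\nabla h_\lambda|^2$ from above by the $L^2$-norm of a screened competitor vector field, using the Neumann condition on the $u_i$ to ensure that the competitor has the correct distributional divergence. Concretely I would set $E\coloneqq \sum_{i=1}^N (\nabla u_i)\,\indic_{B(z_i,\tau)}$, viewed as a field on $\R^2$, and check via integration by parts against any test function that
\[
-\div E \;=\; 2\pi\sum_{i=1}^N\bigl(\delta_{x_i}^{(\lambda)}-\delta_{y_{\sigma_N(i)}}^{(\lambda)}\bigr)
\]
distributionally on $\R^2$; the interior Laplacian in each ball contributes the charges, while the boundary terms vanish thanks to \eqref{eqsurui}. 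Comparing with \eqref{Deltah}, the field $F\coloneqq E-\nabla h_\lambda$ is divergence-free and in $L^2(\R^2)$ (both summands are $L^2$: $E$ has compact support, and charge neutrality forces $\nabla h_\lambda=O(|x|^{-2})$ at infinity). An integration by parts on large balls then shows $\int\nabla h_\lambda\cdot F=0$, so $\int|\nabla h_\lambda|^2\leq \int|E|^2$.

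Next I would expand the square. Since each $\nabla u_i$ is supported in $B(z_i,\tau)$,
\[
\int|E|^2 \;=\; \sum_{i=1}^N\int_{B(z_i,\tau)}|\nabla u_i|^2\;+\;\sum_{i\neq j}\tilde v_{ij}.
\]
For the diagonal, Green's identity together with the Neumann condition yields $\int|\nabla u_i|^2=2\pi\int u_i\,\dd(\delta_{x_i}^{(\lambda)}-\delta_{y_{\sigma_N(i)}}^{(\lambda)})$. I would then decompose $u_i=w_i+\varphi_i$, with $w_i\coloneqq \g*(\delta_{x_i}^{(\lambda)}-\delta_{y_{\sigma_N(i)}}^{(\lambda)})$ the free-space dipole potential and $\varphi_i$ the harmonic corrector enforcing the Neumann condition. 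By \eqref{geta0} the $w_i$-piece contributes exactly $2(\g(\lambda)+\kappa)-2\g_\lambda(x_i-y_{\sigma_N(i)})$, so summing over $i$ and combining with the $-N(\g(\lambda)+\kappa)$ in \eqref{eq:rewrF} the $(\g(\lambda)+\kappa)$ contributions cancel, leaving the leading $-\sum_i\g_\lambda(x_i-y_{\sigma_N(i)})$ and the cross terms $\sum_{i\neq j}\tilde v_{ij}$ with the correct normalisation.

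The main obstacle is to show that $\int\varphi_i\,\dd(\delta_{x_i}^{(\lambda)}-\delta_{y_{\sigma_N(i)}}^{(\lambda)})=O(r_i^2/\tau^2)$. I would exploit two successive cancellations. First, because $\tau\geq 8\max(r_i,\lambda)$, the balls $B(x_i,\lambda)$ and $B(y_{\sigma_N(i)},\lambda)$ sit comfortably inside $B(z_i,\tau)$; combining harmonicity of $\varphi_i$ with the radial symmetry of $\chi$, the mean-value property gives \emph{exactly}
\[
\int\varphi_i\,\dd\delta_{x_i}^{(\lambda)}=\varphi_i(x_i),\qquad \int\varphi_i\,\dd\delta_{y_{\sigma_N(i)}}^{(\lambda)}=\varphi_i(y_{\sigma_N(i)}),
\]
reducing the integral to the pointwise difference $\varphi_i(x_i)-\varphi_i(y_{\sigma_N(i)})$. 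Second, Taylor-expanding around the midpoint $z_i$ and using the symmetric placement of $x_i,y_{\sigma_N(i)}$ around $z_i$ produces
\[
\varphi_i(x_i)-\varphi_i(y_{\sigma_N(i)})=\nabla\varphi_i(z_i)\cdot(x_i-y_{\sigma_N(i)})+O\!\bigl(r_i^3\|D^3\varphi_i\|_\infty\bigr),
\]
the quadratic term cancelling because it is odd under $h\mapsto -h$. To estimate $\varphi_i$, I would invoke Newton's theorem: on $\partial B(z_i,\tau)$ the function $w_i$ coincides with the pure point-dipole potential $-\log|\cdot-x_i|+\log|\cdot-y_{\sigma_N(i)}|$, whose normal derivative is bounded by $Cr_i/\tau^2$. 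A rescaling argument combined with standard interior estimates for harmonic functions with Neumann data then yields $\|D^k\varphi_i\|_{L^\infty(B(z_i,\tau/2))}\leq C_k\,r_i/\tau^{k+1}$ for every $k\geq 1$, from which $\nabla\varphi_i(z_i)\cdot(x_i-y_{\sigma_N(i)})=O(r_i^2/\tau^2)$ and $r_i^3\|D^3\varphi_i\|_\infty=O(r_i^4/\tau^4)=O(r_i^2/\tau^2)$ thanks to $r_i\leq\tau/8$. Summing the error over $i$ produces \eqref{eq:upF}.
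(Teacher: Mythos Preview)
Your proof is correct and follows essentially the same route as the paper: define the screened field $E=\sum_i(\nabla u_i)\indic_{B(z_i,\tau)}$, use the Neumann condition to get $-\div E=-\Delta h_\lambda$, apply $L^2$-minimality of gradients, expand the square, and handle the diagonal via $u_i=w_i+\varphi_i$. The paper isolates the estimate $\|\nabla\varphi_i\|_{L^\infty(B(z_i,\tau))}\le Cr_i/\tau^2$ as a separate lemma and uses it directly to bound $\int\varphi_i\,d(\delta_{x_i}^{(\lambda)}-\delta_{y_{\sigma_N(i)}}^{(\lambda)})$ by $Cr_i^2/\tau^2$; your mean-value reduction to $\varphi_i(x_i)-\varphi_i(y_{\sigma_N(i)})$ is a nice explicit touch, though the subsequent second-order Taylor cancellation is not needed since the first-order bound already gives $O(r_i^2/\tau^2)$.
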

	
	The solution to \eqref{eqsurui} can be computed explicitly using the classical method of image charges, which is closely related to the Schwarz reflection principle from complex analysis. To this end, consider the Neumann Green's function $G_1(x,y)$ on the unit ball $B_1$, defined as the solution (unique up to an additive constant) to
	\begin{equation*}
		\begin{cases}
			-\Delta G_1(x,y)=2\pi(\delta_y - \frac{1}{\pi}) & \text{in } B_1, \\
			\frac{\partial G_1}{\partial \nu}(x,y)=0 & \text{on } \partial B_1,\quad \text{where $y \in B_1$.}
		\end{cases}
	\end{equation*}
	
	The method of image charges introduces a fictitious charge placed outside the domain to explicitly satisfy the Neumann boundary condition. For the unit disk $B_1$, the appropriate image charge for a point charge at $y \in B_1$ is obtained by reflection across the boundary circle $\partial B_1$, specifically at the point
	\begin{equation*}
		y^* = \frac{y}{|y|^2}.
	\end{equation*}
	The explicit expression for the Neumann Green's function is  given by
	\begin{equation}\label{formulagreen}
		G_1(x,y)= -\log|x-y| - \log|x-y^*| +\frac{|x|^2}{2},  \text{ for } |x|\le 1.
	\end{equation}
	This can be checked using that 
	\begin{equation*}
		|x - y^*|^2|y|^2 = |x - y|^2\quad \text{for $|x|=1$}.
	\end{equation*}

	We will need the following lemma.
	\begin{lemma} \label{lem35}
		Let $u_i$ be as in \eqref{eqsurui} and let  $\tilde u_i\coloneqq  u_i -  \g*(\delta_{x_i}^{(\lambda)}- \delta_{y_{i}}^{(\lambda)})$.
		We have
		\begin{align}\label{eqtu1} &
			\|\nab \tilde u_i\|_{L^\infty(B(z_i, \tau))} \le  C \frac{r_i}{\tau^2},\\ 
			\label{eqtu3} &
			|\nab u_i(x) |\le C \frac{ r_i}{ |x-z_i|^2}\quad \text{if} \ x \in B(z_i,\tau)\backslash B(z_i, 2\max(r_i, \lambda)),\\
			\label{eqtu2} & |\nab u_i(x)|\le \frac{C}{\dist(x, \{x_i\}\cup \{y_{i}\}) } + C\frac{r_i}{\tau^2} \quad \text{for all } \ x\in B(z_i, \tau),
		\end{align}
		where $C$ is universal.
	\end{lemma}
	\begin{proof} First of all, by scaling and translation invariance, it suffices to prove the result for $\tau=1$ and $z_i=0$, noting that $\nab \tilde u_i(x)$ equals $\frac{1}{\tau}$ times the gradient of its rescaled function that solves the equation on $B(0,1)$.
		The function $\tilde u_i$ solves
		\begin{equation}\label{eqsurtui}
			\left\{ \begin{array}{ll}
				-\Delta \tilde u_i= 0& \text{in} \ B(0, 1)\\
				\frac{\partial \tilde u_i}{\partial \nu}= \left(- \frac{x-x_i}{|x-x_i|^2} +\frac{x-y_i}{|x-y_i|^2} \right) \cdot \nu & \text{on} \ \partial B(0, 1).\end{array}\right.
		\end{equation}
		The relation \eqref{eqtu1} follows from \eqref{eqsurtui}, the fact that $\|\frac{\partial \tilde u_i}{\partial \nu}\|_{L^\infty(\partial B(0,1))} \le C r_i$ and elliptic regularity estimates, then scaling. Alternatively, 
		by \eqref{formulagreen},   for $x\in B(0, 1)$, we have 
		\begin{equation}u_i(x) =  \int_{\R^2} \g(x -y)  \dd \left( \delta_{x_i}^{(\lambda)}  -\delta_{ y_i}^{(\lambda)}\right)(y)    +\int_{\R^2} \g(x-y^*  ) \dd \left( \delta_{x_i}^{(\lambda)}  -\delta_{ y_i}^{(\lambda)}\right)(y),
		\end{equation} so that 
		$$\tilde u_i(x)= \int_{\R^2} \g(x-y^*  ) \dd \left( \delta_{x_i}^{(\lambda)}  -\delta_{ y_i}^{(\lambda)}\right)(y)$$
		from which we can also deduce \eqref{eqtu1}, after scaling. The relations \eqref{eqtu3} and \eqref{eqtu2} are then straightforward from the properties of $\g$.

	\end{proof}
	
	\begin{proof}[Proof of Lemma \ref{lemma:bsupF}]  Following \cite{SS2d} and subsequent works,
		we  define a global ``electric field'' $E$ by pasting together the electric fields defined over these non-disjoint balls:
		$$E\coloneqq  \sum_{i=1}^N \indic_{B(z_i, \tau)} \nab u_i.$$
		Thanks to the crucial choice of zero Neumann boundary conditions on the boundary of each ball, this vector field satisfies 
		\begin{equation}\label{divE}
			-\div E= 2\pi \left(\sum_{i=1}^N \delta_{x_i}^{(\lambda)} -\delta_{y_{i}}^{(\lambda)}\right) = -\Delta h_\lambda \end{equation}
		where $h_\lambda$ is the electric potential of the configuration as in \eqref{eq:heta}.
		The trick is then to take advantage of the $L^2$ projection property onto gradients to show that the energy can be estimated from above by the $L^2$ norm of $E$:
		indeed $$\int_{\R^2} |E|^2= \int_{\R^2} |\nab h_\lambda|^2 + \int_{\R^2}|E- \nab h_\lambda|^2 + 2 \int_{\R^2} (E-\nab h_\lambda) \cdot \nab h_\lambda$$
		and the last term vanishes after integration by parts, in view of \eqref{divE} (noting that the boundary term vanishes), hence 
		$\int_{\R^2} |\nab h_\lambda|^2\le \int_{\R^2} |E|^2$.
		It follows from \eqref{eq:rewrF} that 
		\begin{equation}\label{hui}\F_\lambda (\XN, \YN) \le    \frac{1}{4\pi} \sum_{i=1}^N \int_{B(z_i, \tau) } |\nab u_i|^2+\frac{1}{4\pi}\sum_{i\neq j}\int_{B(z_i,\tau)\cap B(z_j,\tau) } \nabla u_i\cdot \nabla u_j - N (\g(\lambda)+\kappa) .
		\end{equation}
		Integrating by parts and using \eqref{eqsurui}, we have
		\begin{equation*}
			\int_{B(z_i, \tau)} |\nab u_i|^2 = 2\pi \int_{B(z_i, \tau)} u_i \, \dd (\delta_{x_i}^{(\lambda)}-\delta_{y_i}^{(\lambda)} ) \end{equation*}
		and decomposing $u_i$ as $  \g*(\delta_{x_i}^{(\lambda)}- \delta_{y_i}^{(\lambda)})+\tilde u_i$, we obtain 
		\begin{align*}
			\int_{B(z_i, \tau)} |\nab u_i|^2& = 4\pi(\g(\lambda)+ \kappa)-4\pi \g_\lambda(x_i-y_i)+ 2\pi \int_{\R^2}  \tilde u_i \,\dd (\delta_{x_i}^{(\lambda)}-\delta_{y_i}^{(\lambda)} )\\
			& = 4\pi(\g(\lambda)+ \kappa)-4\pi \g_\lambda(x_i-y_i)+ O\Bigr(\frac{r_i^2}{\tau^2}\Bigr) \end{align*}
		in view of \eqref{eqtu1}. The result follows.
	\end{proof}

	\subsection{Decay of the true dipole interaction}\label{sub:decayv}
	We denote by $v_{ij}$ the true dipole interaction.
	
	\begin{definition}[Non-truncated interaction]\label{def:vij}
		For every $i,j\in [N]$ with $i\neq j$, let
		\begin{equation*}
			v_{ij}\coloneqq \iint \g(x-y)\dd (\delta_{x_i}^{(\lambda)}-\delta_{y_{i}}^{(\lambda)})(x)\dd (\delta_{x_j}^{(\lambda)}-\delta_{y_{j}}^{(\lambda)})(y).
		\end{equation*} 
		Also let	
		\begin{align}\label{defvij+} &
			v_{i,j,+}\coloneqq \g_\lambda (x_i-x_j)-\g_\lambda(y_{i}-x_j)
			\\ \label{defvij-}&	v_{i,j,-}\coloneqq  \g_\lambda (x_i-y_{j})-\g_\lambda(y_{i}-y_{j}),\end{align}
		so that 
		$$v_{ij}= v_{i,j,+}-v_{i,j,-}.$$
		
	\end{definition}
	
	We next estimate precisely the dipole-dipole interaction, and for reference we also record in item (1) in the lemma below a more precise result than we will need that holds under a stronger regularity assumption on the truncation. 
	\begin{lemma}[Estimating the dipole-dipole interaction]\label{lemma:vij}
		Let $x_i,x_j, y_{i}, y_{j} \in \Lambda$ be such that $\sigma_2[(x_i, x_j), (y_i,y_j)] =\Id_{\{i,j\}} $. Let $z_i, z_j$ be the barycenters of $\{x_i,y_{i}\}$ and $\{x_j,y_{j}\}$. Let $\vro_{ij}\coloneqq  z_j-z_i$,  $\rho_{ij}\coloneqq |z_i-z_j|$,  $\vr_i\coloneqq y_{i}-x_i$, $\vr_j=y_{j}-x_j$ and $d_{ij}= \dist ( \{x_i, y_{i}\}, \{ x_j, y_{j}\})$.  Let $\vec{R}_{ij}= z_j-x_i$ and $\vec{R}'_{ij}= z_j-y_{i}$.
		Let $v_{ij}$ be as in Definition~\ref{def:vij}.  Viewing $\g_\lambda$ as a function of $\R$, the following holds.
		\begin{enumerate}
			\item 
			If we assume that  $\g_1=\g* \delta_0^{(1)}*\delta_0^{(1)}$ is of class $W^{3,\infty}$ (i.e.~its second derivative is Lipschitz),  then if  $d_{ij} \ge  \min (r_i, r_j)$,  we have 
			\begin{align}	\label{vijexp}
				v_{ij}
				=  - \frac{\g_\lambda' (  |\vec{R}_{ij}|)}{ |\vec{R}_{ij}|}  \vec{R}_{ij} \cdot \vr_j  +\frac{\g_\lambda' ( |\vec{R}'_{ij}|)}{|\vec{R}'_{ij}|}  \vec{R}'_{ij} \cdot \vr_j  +  O\left(\frac{\min(r_i , r_j)^3}{  \max(|\vec{R}_{ij}|,\lambda) ^3}+\frac{\min(r_i , r_j)^3}{\max(|\vec{R}'_{ij}|,\lambda)^3}\right) ,\end{align}
			and if in addition  $d_{ij} \ge 4 \max(r_i, r_j)$, we have 
			\begin{align}\label{vijexp2}
				v_{ij}= -\frac{ \g_\lambda' (\rho_{ij})}{\rho_{ij}} ( \vr_i\cdot \vr_j)+ \left(  \frac{\g_\lambda '(\rho_{ij})}{\rho_{ij}^{3}}   -\frac{\g_\lambda''(\rho_{ij} )}{\rho_{ij}^2}\right)   (\vro_{ij} \cdot \vr_i) (\vro_{ij}\cdot \vr_j) + O\left( \frac{\min(r_i , r_j) (r_i^2+r_j^2)}{\max(\rho_{ij},\lambda)^3}\right),\end{align}
			with 
			\begin{equation}
				\label{controledgl}
				\begin{cases}
					\g_\lambda'(t)= - \frac{1}{t}\ \text{and } \g_\lambda''(t) =  \frac{1}{t^2} &\ \text{if} \ t\ge 2\lambda\\
					|\g_\lambda'(t)|\le \frac{C}{\max(t,\lambda)} \quad |\g_\lambda'' (t)|\le \frac{C}{(\max(t,\lambda))^2} & \text{otherwise}.
				\end{cases}
			\end{equation}
			
			\item 
			If we assume $\g_1$ is only in $W^{2,\infty}$ (which follows from our assumption here), then, if $d_{ij}\ge \min(r_i,r_j)$, we have
			\begin{equation}\label{eqcorobv}
				|v_{ij}|\le C \frac{ r_ir_j}{\max(d_{ij},\lambda)\max(d_{ij}, \max(r_i,r_j))}.
			\end{equation}
			
			\item Let
			\begin{equation}\label{defdijpm}
				d_{i,j}^+\coloneqq  \dist(\{x_i,y_i\}, x_j), \quad d_{i,j}^-\coloneqq \dist(\{x_i,y_i\}, y_j).\end{equation}
			If $d_{i,j}^\pm \ge r_i$, then, $v_{i,j,\pm}$ being as in Definition~\ref{def:vij},  under the assumption that $\g_1\in W^{2,\infty}$, we have 
			\begin{equation}\label{eqcorobv2}
				|v_{i,j,\pm}|\le C \frac{r_i}{\max(d_{i,j}^\pm, \lambda)}.
			\end{equation}
		\end{enumerate}
		Here the constant $C>0$ and the $O$'s depend only on $\|\g_1\|_{W^{2,\infty}}$, respectively $\|\g_1\|_{W^{3,\infty}}$.
	\end{lemma}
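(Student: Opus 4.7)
The plan is to exploit the scaling $\g_\lambda(z) = \g_1(z/\lambda) - \log \lambda$, use Lemma~\ref{lemma:g1W2} (and its $W^{3,\infty}$ analogue under the stronger hypothesis in item (1)) to transfer regularity from $\g_1$ to $\g_\lambda$, and recognize the four-term interaction
\[
v_{ij} = \g_\lambda(x_i-x_j) - \g_\lambda(y_{\sigma_N(i)}-x_j) - \g_\lambda(x_i-y_{\sigma_N(j)}) + \g_\lambda(y_{\sigma_N(i)}-y_{\sigma_N(j)})
\]
as a discrete mixed second difference of $\g_\lambda$. The three items then follow from Taylor expansions of different orders combined with geometric control of the Taylor paths. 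First I would derive the differential bounds \eqref{controledgl} and more generally $|\g_\lambda^{(k)}(t)| \le C \max(t,\lambda)^{-k}$: for $t \ge 2\lambda$ this follows from Newton's theorem giving $\g_\lambda = \g$ explicitly, and for $t < 2\lambda$ it follows from the scaling identity together with Lemma~\ref{lemma:g1W2}.

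For item (1), using $x_j = z_j - \vec{r}_j/2$ and $y_{\sigma_N(j)} = z_j + \vec{r}_j/2$, I rewrite
\[
v_{ij} = H(x_i - z_j) - H(y_{\sigma_N(i)} - z_j), \qquad H(u) \coloneqq \g_\lambda(u + \vec{r}_j/2) - \g_\lambda(u - \vec{r}_j/2),
\]
and Taylor-expand $H(u) = \nabla \g_\lambda(u) \cdot \vec{r}_j + O(|\vec{r}_j|^3 \|D^3 \g_\lambda\|_{L^\infty(\text{path})})$ (the even-order term cancels by the antisymmetry of $H$). The first-order contributions, written in radial form via $\nabla \g_\lambda(u) = (\g_\lambda'(|u|)/|u|)\, u$, produce \eqref{vijexp}. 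Under $d_{ij} \ge \min(r_i, r_j)$, the segment from $u - \vec{r}_j/2$ to $u + \vec{r}_j/2$ stays at distance comparable to $\max(|\vec{R}_{ij}|, \lambda)$ (respectively $\max(|\vec{R}'_{ij}|, \lambda)$) from the origin, so the third-order remainder is bounded by $C r_j^3 / \max(|\vec{R}_{ij}|, \lambda)^3$ plus its analogue; swapping the roles of $i$ and $j$ upgrades $r_j^3$ to $\min(r_i, r_j)^3$. For \eqref{vijexp2}, when $d_{ij} \ge 4\max(r_i, r_j)$ the Taylor paths stay at distance $\ge \rho_{ij}/2$ from $0$, so I expand once more in $\vec{r}_i$ around $z_i$; the surviving terms are exactly $D^2 \g_\lambda(\vro_{ij})$ contracted with $(\vec{r}_i, \vec{r}_j)$, and using the explicit radial Hessian $D^2 \g_\lambda(w) = (\g_\lambda'(|w|)/|w|)(I - \hat w \hat w^\top) + \g_\lambda''(|w|) \hat w \hat w^\top$ reproduces the stated quadratic form with remainder $\min(r_i, r_j)(r_i^2 + r_j^2)/\max(\rho_{ij}, \lambda)^3$.

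For item (2), under only $W^{2,\infty}$ regularity, I would apply the MVT in both dipole vectors to obtain
\[
v_{ij} = \int_0^1\!\!\int_0^1 D^2 \g_\lambda\bigl(y_{\sigma_N(i)} - x_j - s \vec{r}_j - t \vec{r}_i\bigr)[\vec{r}_i, \vec{r}_j]\, ds\, dt.
\]
When $d_{ij} \ge \max(r_i, r_j)$, every path point has norm $\gtrsim d_{ij}$, yielding $|v_{ij}| \le C r_i r_j / \max(d_{ij}, \lambda)^2$, which matches \eqref{eqcorobv} since $\max(d_{ij}, \max(r_i, r_j)) = d_{ij}$ in this regime. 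In the intermediate regime $\min(r_i, r_j) \le d_{ij} < \max(r_i, r_j)$, assume WLOG $r_j \le r_i$ and apply MVT only in $\vec{r}_j$: writing $G(u) = \g_\lambda(x_i - u) - \g_\lambda(y_{\sigma_N(i)} - u)$, I get $|v_{ij}| \le r_j \sup_{u \in [x_j, y_{\sigma_N(j)}]} |\nabla G(u)|$, and for such $u$ both $|x_i - u|$ and $|y_{\sigma_N(i)} - u|$ are $\ge d_{ij}$, giving $|\nabla G(u)| \le C/\max(d_{ij}, \lambda)$ and hence $|v_{ij}| \le C r_i r_j / (r_i \max(d_{ij}, \lambda))$, matching \eqref{eqcorobv} since $\max(d_{ij}, \max(r_i, r_j)) = r_i$ here. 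Item (3) is an immediate one-variable MVT in $\vec{r}_i$, where the hypothesis $d_{i,j}^\pm \ge r_i$ keeps the integration path at distance $\gtrsim d_{i,j}^\pm$ from $0$ and therefore makes $|\g_\lambda'|$ bounded by $C/\max(d_{i,j}^\pm, \lambda)$ along it.

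The main obstacle is the geometric bookkeeping along Taylor paths in item (2) in the intermediate regime: since a long segment can approach a distant point much more closely than its endpoints do, one cannot uniformly bound path distances by $d_{ij}$ when some $r_k$ is comparable to or larger than $d_{ij}$. The remedy of switching from a two-variable to a one-variable MVT (thereby trading one factor of $r/d$ for a worst-case constant of order $1$) is precisely what produces the asymmetric denominator $\max(d_{ij}, \lambda)\max(d_{ij}, \max(r_i, r_j))$ in \eqref{eqcorobv}, and identifies it as the sharp bound achievable under only $W^{2,\infty}$ regularity on $\g_1$.
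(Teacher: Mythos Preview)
Your approach is correct and recovers all three items, but it uses a different technical device from the paper. The paper introduces $\varphi_\lambda(t)\coloneqq \g_\lambda(\sqrt t)$ and Taylor-expands in the \emph{squared-distance} variable: since $|u+h|^2-|u-h|^2=4u\cdot h$ is bilinear, the mean-value path for $\varphi_\lambda$ lives in a one-dimensional interval $[\min(R,R')^2,\max(R,R')^2]$, and controlling $\varphi_\lambda^{(k)}$ there only requires lower-bounding the barycenter distances $R=|z_j-x_i|$ and $R'=|z_j-y_i|$, which is immediate from the parallelogram law. You instead Taylor-expand $\g_\lambda$ directly on $\R^2$; this makes the leading term in \eqref{vijexp2} manifestly $-D^2\g_\lambda(\vec\rho_{ij})[\vec r_i,\vec r_j]$ via the radial-Hessian formula, which is conceptually cleaner, but it forces the double-MVT path to run through differences $p-q$ with $p\in[x_i,y_i]$, $q\in[x_j,y_j]$. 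Your assertion ``every path point has norm $\gtrsim d_{ij}$'' in item~(2) when $d_{ij}\ge\max(r_i,r_j)$ is therefore the statement $\dist([x_i,y_i],[x_j,y_j])\gtrsim d_{ij}$, which is true but not as immediate as you suggest---segment distance can in general be much smaller than endpoint-set distance. The easiest fix is to split off the range $\max(r_i,r_j)\le d_{ij}<2\max(r_i,r_j)$ and use your single-variable MVT there (the weaker bound $r_{\min}/\max(d_{ij},\lambda)$ already matches $r_ir_j/(\max(d_{ij},\lambda)d_{ij})$ since $r_{\max}\sim d_{ij}$ in this range), reserving the double MVT for $d_{ij}\ge 2\max(r_i,r_j)$ where the path trivially stays at distance $\ge d_{ij}/2$ from the origin. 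The same minor imprecision recurs in your single-MVT steps: ``$|x_i-u|\ge d_{ij}$'' for $u\in[x_j,y_j]$ should read ``$\ge d_{ij}-r_j/2\ge d_{ij}/2$''. None of this affects the final bounds.
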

	
	\begin{proof}
		Let $\varphi_\lambda(t)=\g_\lambda(\sqrt{t})$, considering $\g_\lambda$ as a function of $\R_+$. Without loss of generality, let us assume that $r_j\le r_i$, and denote $\vec{\rho}$ for $\vec{\rho}_{ij}$. 
		We will need to evaluate $\varphi_\lambda''$ and $\varphi_\lambda^{(3)}$.
		First we note that by scaling and the definition of $\g_\lambda$ \eqref{eq:defgeta}, we have 
		\begin{equation}\label{425}\g_\lambda(z)= \g(\lambda)+ \g_1\Bigr(\frac{z}{\lambda}\Bigr).\end{equation}
		Moreover, by assumption and Lemma \ref{lemma:g1W2},  the function $\g_1$ has two or  three uniformly bounded derivatives in $B(0, 2)$, while for $|z| \ge 2$, we have, without further assumption, that $\g_1(z)=\g(z)$, whose derivatives equal those of $-\log |z|$. It follows that for every $1\le k\le 2$ (resp. $\le 3$ for the first item), we have $|\g_1^{(k)} (z)|\le \frac{C}{\max(|z|,1)|^k}$. Note that this and \eqref{425} prove \eqref{controledgl}.
		
		Since $\varphi_\lambda(t)=\g_\lambda(\sqrt{t})= \g(\lambda)+ \g_1(  \frac{\sqrt{t}}{\lambda})$
		we obtain with the Faa-di-Bruno formula that  for every $k \ge 1$, 
		\begin{equation}\label{controlvarphi}|\varphi_\lambda^{(k)}(t)| \le \frac{C}{|\max(t,\lambda^2)|^k},\end{equation}
		where $C>0$ depends only on the derivatives of $\g_1$.

		Let us now turn to the proof of the first item and  recall that we assume $\min(r_i, r_j)= r_j\le d_{ij}$.
		Let us set $\vec{R}= \vec{R_{ij}}=\vro+\hal \vr_i$, which is the vector $\vec{x_iz_j}$ and let $R$ denote its norm. By the properties of the barycenter, and since $d_{ij}\geq r_j$, we must have $R \ge \frac{1}{2}d_{ij}\ge \frac{1}{2}r_j$.
		
		We then consider 
		$\g_\lambda(x_i-x_j)-\g_\lambda(x_i-y_j)$ and expand it to find 
		\begin{align}\notag
			\g_\lambda(x_i-x_j)-\g_\lambda(x_i-y_j)
			& = \varphi_\lambda\left( \left|\vro+\frac{\vr_i-\vr_j}{2}\right|^2 \right) - \varphi_\lambda\left( \left|\vro+\frac{\vr_i+\vr_j}{2}\right|^2\right)\\ \notag
			& = \varphi_\lambda\left( \left|\vec{R}-\frac{\vr_j}{2}\right|^2 \right) - \varphi_\lambda\left( \left|\vec{R}+\frac{\vr_j}{2}\right|^2\right)\\ \notag
			&= \varphi_\lambda\left(R^2 +\frac14 r_j^2- \vec{R}\cdot \vr_j \right)
			- \varphi_\lambda \left(R^2+\frac14 r_j^2 + \vec{R}\cdot \vr_j\right)\\ \notag
			& =-2 \varphi_\lambda'(R^2+ \frac14 r_j^2) \vec{R} \cdot \vr_j + O\left(\sup_{\sqrt t \in [|\vec{R}-\frac{\vr_j}{2}|, |\vec{R}+\frac{\vr_j}{2}| ]} |\varphi_\lambda^{(3)}(t)| R^3 r_j^3\right)\\
			\label{tagfinal}	& =  -2 \varphi_\lambda'(R^2)  \vec{R}\cdot \vr_j +O\Bigg( \sup_{ t \in [R^2, R^2 + \frac14 r_j^2]} |\varphi_\lambda''(t)|R r_j^3  + \sup_{\sqrt t \in [|\vec{R}-\frac{\vr_j}{2}|, |\vec{R}+\frac{\vr_j}{2}| ]} |\varphi_\lambda^{(3)}(t)| R^3 r_j^3\Bigg). \end{align}
		Alternatively, using only two derivatives of $\varphi_\lambda$, we can write 
		\begin{multline}\label{eq:alter1}
			\g_\lambda(x_i-x_j)-\g_\lambda(x_i-y_j)
			= -2 \varphi_\lambda'(R^2) \vec{R} \cdot \vr_j\\+O\Bigg(\sup_{\sqrt t \in [|\vec{R}-\frac{\vr_j}{2}|, |\vec{R}+\frac{\vr_j}{2}| ]} |\varphi_\lambda''(t)| R^2 r_j^2+\sup_{ t \in [R^2, R^2 + \frac14 r_j^2]} |\varphi_\lambda''(t)|R r_j^3\Bigg).
		\end{multline}
		In view of \eqref{controlvarphi}, we have thus obtained that
		\begin{equation}\label{eq:alter2}			\g_\lambda(x_i-x_j)-\g_\lambda(x_i-y_j)
			= -2 \varphi_\lambda'(R^2)  \vec{R}\cdot \vr_j +O\left( \frac{ r_j^3}{\max(R,\lambda)^3} \right)
		\end{equation} if $\g_1$ has three bounded derivatives, or alternatively
		\begin{equation}\label{eq:alter3}			\g_\lambda(x_i-x_j)-\g_\lambda(x_i-y_j)
			= -2 \varphi_\lambda'(R^2)  \vec{R}\cdot \vr_j +O\left( \frac{ r_j^2}{\max(R,\lambda)^2} \right)
		\end{equation} if $\g_1$ has only two bounded derivatives.
		
		We next turn to 
		\begin{align*}
			\g_\lambda(y_i-x_j)-\g_\lambda(y_i-y_j) =  \varphi_\lambda\left( \left|\vro-\frac{\vr_i+\vr_j}{2}\right|^2\right)- \varphi_\lambda\left( \left|\vro+\frac{\vr_j-\vr_i}{2}\right|^2 \right) .
		\end{align*}
		This is then exactly the same computation with $\vec{R}$ replaced by $\vec{R}'=\vec{R}'_{ij}= \vro-\hal \vr_i$, the vector $\vec{y_iz_j}$, and in the same way, we have $R'\ge \frac{1}{2}d_{ij}\ge \frac{1}{2}r_j$.
		Thus, we find  
		\begin{align*}
			& \g_\lambda(y_i-x_j)-\g_\lambda(y_i-y_j)=- 2 \varphi_\lambda'((R')^2)  (\vec{R'}\cdot \vr_j) +  O\left(\frac{r_j^3}{(\max(R',\lambda))^3}\right)
			.   \end{align*}
		Subtracting this from \eqref{eq:alter2},  if $\g_1$ has three bounded derivatives, the total interaction is then 
		\begin{align*}
			v_{ij}& = -2\varphi_\lambda'(R^2) \vec{R}\cdot \vr_j+2\varphi_\lambda'((R')^2) \vec{R'}\cdot \vr_j+ O\left(\frac{r_j^3}{\max(R,\lambda)^3}+\frac{r_j^3}{(\max(R',\lambda))^3}\right)\\
			& =  -2 \varphi_\lambda' \left(  |\vro + \hal \vr_i|^2\right)(\vro \cdot \vr_j + \hal \vr_i\cdot \vr_j) +2\varphi_\lambda' \left( |\vro - \hal \vr_i|^2\right)  (\vro \cdot \vr_j - \hal \vr_i\cdot \vr_j ) \\ &+  O\left(\frac{r_j^3}{  \max(|\vro +\hal \vr_i| ,\lambda)^3}+\frac{r_j^3}{\max(|\vro -\hal \vr_i|,\lambda)^3}\right). \end{align*}
		If in addition, we have $ r_i\le \frac 14 d_{ij}$ then $r_i\le \hal \rho$ and  using \eqref{controlvarphi} we  can expand this further into
		\begin{align*}
			v_{ij}&=-2 \varphi_\lambda' (\rho^2) ( \vr_i\cdot \vr_j)\\ &-2  \varphi_\lambda''(\rho^2) \left( (\vro \cdot \vr_i +\frac14 r_i^2) (\vro\cdot \vr_j + \hal \vr_i \cdot \vr_j)+   (\vro\cdot \vr_i - \frac14 r_i^2) (  \vro \cdot \vr_j - \hal \vr_i \cdot \vr_j) \right)\\ &+ O\left( \frac{r_i^2r_j}{\max(\rho,\lambda)^3}  + \frac{r_j^3}{\max(\rho,\lambda)^3}\right)   \\
			& = -2 \varphi_\lambda' (\rho^2) ( \vr_i\cdot \vr_j)- 4 \varphi_\lambda''(\rho^2) (\vro \cdot \vr_i) (\vro\cdot \vr_j) + O\left( \frac{r_i^2 r_j}{\max(\rho,\lambda)^3} \right).
		\end{align*} 
		Reexpressing things in terms of $\g_\lambda$, we obtain the result.

		Let us now turn to the proof of the second item.
		If we assume that  $\g_1 $ has only two bounded derivatives and $d_{ij}\ge r_j$,  
		we can subtract \eqref{eq:alter3} and the corresponding result with $\vec{R'}$ and find 
		\begin{align}\label{eq:alter5}
			v_{ij}=  -2\varphi_\lambda'(R^2) \vec{R}\cdot \vr_j+ 2\varphi_\lambda'((R')^2) \vec{R'} \cdot \vr_j  +O\left( \frac{ r_j^2}{\max(R,\lambda)^2}+ \frac{r_j^2}{\max(R',\lambda)^2 }\right).\end{align}
		In the case where $r_j\le d_{ij}\le r_i$, we may bound this using \eqref{controlvarphi} and $\min(R,R') \ge \frac{1}{2}d_{ij}$ to obtain 
		$$|v_{ij}|\le Cr_j \left( \frac{R}{\max(R,\lambda)^2}+ \frac{R'}{\max(R',\lambda)^2} + \frac{r_j}{\max(R,\lambda)^2}+ \frac{r_j}{\max(R',\lambda)^2 }\right)\le 
		\frac{r_j}{\max(d_{ij},\lambda)}.$$
		In the case $d_{ij}\ge r_i\ge r_j$, we can expand \eqref{eq:alter5} further to find
		\begin{align*}
			v_{ij}
			=& -2 \varphi_\lambda'\left(|\vec{\rho}+\hal \vr_i|^2\right)  ( \vec{\rho}\cdot \vr_j +\hal \vr_i\cdot \vr_j)   +2\varphi_\lambda'\left(|\vec{\rho}-\hal \vr_i|^2\right)( \vec{\rho}\cdot \vr_j-\hal \vr_i  \cdot \vr_j)\\&
			+O\left( \frac{ r_j^2}{\max(R,\lambda)^2}+ \frac{r_j^2}{\max(R',\lambda)^2 }\right)\\
			=&  O\left( \sup_{\sqrt{t} \in [R, R']] } |\varphi_\lambda''(t)|  \rho^2 r_i r_j +  \sup_{\sqrt{t}\in [R,R']} |\varphi_\lambda'(t)|r_ir_j
			+ \frac{ r_j^2}{\max(R,\lambda)^2}+ \frac{r_j^2}{\max(R',\lambda)^2 }\right)	.\end{align*}
		Using $\min(R,R')\ge \frac{1}{2}d_{ij}$ and  \eqref{controlvarphi} again, we then obtain 
		$$|v_{ij}|\le C \frac{r_ir_j}{\max(d_{ij},\lambda)^2}.$$
		In both cases, 		
		\eqref{eqcorobv} follows.

		For the proof of the third item, we treat the positive case, the negative one is analogous. Let $\vec{R}= \vec{x_jz_i}$. We have  $R\ge \frac{1}{2}d_{i,j}^+ \ge \frac{1}{2}r_i$.
		In view of \eqref{eq:alter1} and \eqref{controlvarphi}, we have
		$$|v_{i,j,+}|\le  2\sup_{\sqrt t\in [|\vec{R}-\hal \vec{r_i}|,| \vec{R}+\hal \vec{r_i}|]}|\varphi_\lambda' (t) | Rr_i 
		\le C\frac{1}{\max( R ,\lambda) ^2} R r_i$$
		and since  $ d_{i,j}^+-\hal r_i  \le R\le d_{i,j}^++\hal r_i$, the result follows.
	\end{proof}
	
	\begin{remark}
		If we have $d_{ij}\ge 4\lambda$ (resp. $d_{i,j}^\pm\ge 4\lambda$), then the calculations only involve derivatives of $\g_\lambda(t)$ in the region $t\ge 2\lambda$ where it coincides with $\g$, and then the results hold without any assumption on the regularization $\g_1$.
	\end{remark}

	\subsection{Reduced models and control on error terms}\label{sub:errors}

	\begin{lemma}[Errors from the lower bound]\label{lemma:error'}Let $x_i,x_j, y_{i}, y_{j} \in \Lambda$ be such that $\sigma_2[(x_i, x_j), (y_i,y_j)] =\Id_{\{i,j\}} $.
		Recall that    $r_i= |x_i-y_{i}|$,  and  $d_{ij}= \dist(\{x_i,y_{i}\},\{x_j,y_{j}\})$. Let $\tau_i^{+}, \tau_i^{-} \ge \lambda$, $\tau_j^{+}, \tau_j^{-} \ge \lambda$ be a set of positive variables.
		
		Let $v_{ij}, v_{i,j,\pm}, d_{i,j}^\pm$ be as in Definition \ref{def:vij}  and \eqref{defdijpm}, and $v_{ij}'$ as in  \eqref{defvijp}.
		Also denote 	\begin{align}\label{defvijp+} &
			v_{i,j,+}'\coloneqq \Bigr(\g_\lambda-\g * \delta_0^{(\tau_i^{+})}*\delta_0^{(\tau_j^{+})}\Bigr) (x_i-x_j) 
			-\Bigr(\g_\lambda-\g * \delta_0^{(\tau_i^{-})}*\delta_0^{(\tau_j^{+})}\Bigr) (y_{i}-x_j)
			\\ \label{defvijp-}&	v_{i,j,-}'\coloneqq  \Bigr(\g_\lambda-\g * \delta_0^{(\tau_i^{+})}*\delta_0^{(\tau_j^{-})}\Bigr) (x_i-y_{j})-\Bigr(\g_\lambda-\g * \delta_0^{(\tau_i^{-})}*\delta_0^{(\tau_j^{-})}\Bigr) (y_{i}-y_{j}).
		\end{align} 
		Then, the following holds.
		\begin{enumerate}
			\item If $d_{i,j}^{\pm}\geq r_i$, then
			\begin{equation}\label{diffvijcassuppl2}
				|v_{i,j,\pm}'|\le C \left(\frac{r_i}{\max(d_{i,j}^\pm,\lambda)} + \frac{r_i}{\tau_j^\pm}+\frac{\max(\tau_i^+, \tau_i^-)^2}{(\tau_j^\pm)^2}\right)\indic_{d_{i,j}^\pm\le \max(\tau_i^+,\tau_i^-)+\tau_j^\pm}
			\end{equation}
			and
			\begin{equation}\label{diffvijcassuppl}
				|v_{i,j,\pm}-v_{i,j,\pm}'|\le C \left( \frac{r_i}{\tau_j^\pm}+\frac{\max(\tau_i^+, \tau_i^-)^2}{(\tau_j^\pm)^2}\right)\indic_{d_{i,j}^{\pm}\le \max(\tau_i^+,\tau_i^-)+\tau_j^\pm} + C \frac{r_i}{\max(d_{i,j}^\pm,\lambda)} \indic_{d_{i,j}^\pm>\max(\tau_i^+,\tau_i^-)+\tau_j^\pm}.
			\end{equation}
			
			\item 
			If $d_{i,j}^\pm \ge r_i$ and $\tau_i^+=\tau_i^-=\tau_i$,  we have 
			\begin{equation}\label{eqvijpp}
				|v_{i,j,\pm}'|\le C\frac{r_i}{\max(d_{i,j}^{\pm},\lambda)} \indic_{d_{i,j}^\pm\le \tau_i +\tau_j^\pm}
			\end{equation}
			and 
			\begin{equation}\label{eqdiffdesvivvj}
				|v_{i,j,\pm}- v_{i,j,\pm}'|\le C \frac{r_i}{\max(\tau_i,\tau_j^\pm)} \indic_{d_{i,j}^\pm\le \tau_i+\tau_j^\pm} +C\frac{r_i}{\max(d_{i,j}^\pm,\lambda)}\indic_{d_{i,j}^\pm> \tau_i+\tau_j^\pm}  . \end{equation}

			\item Assume that  $r_i\le 2(\tau_i^++\tau_i^-)$, $\tau_i^+=\tau_i^-=\tau_i$,  the same for $j$,  and $d_{ij} \ge \min(r_i,r_j)$. Then we have 
			\begin{equation}\label{eq:v'}
				|v_{ij}'|\leq C \frac{r_ir_j}{\max(d_{ij},\lambda)\max(r_i,r_j,d_{ij})}\indic_{d_{ij}\le \tau_i+\tau_j} 
			\end{equation}
			and
			\begin{equation}\label{eq:diff v}
				|v_{ij}'-v_{ij}|\leq  			C	\frac{r_ir_j}{\max(\tau_i,\tau_j)^2} \indic_{d_{ij}\le \tau_i+\tau_j}+C\frac{r_ir_j}{\max(d_{ij},\lambda)\max(r_i,r_j,d_{ij})} \indic_{d_{ij} >\tau_i+\tau_j}. \end{equation}
		\end{enumerate}
	\end{lemma}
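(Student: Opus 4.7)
The plan is to reduce all four bounds to the same three ingredients: (i) radiality and compact support of the ``error kernel'' $h_{a,b}(z) := \g_\lambda(z) - \g*\delta_0^{(a)}*\delta_0^{(b)}(z)$ via Newton's theorem, (ii) elementary gradient and second-derivative estimates for $h_{a,b}$, and (iii) a second-difference/Taylor expansion in the dipole vectors, exactly as in the proof of Lemma~\ref{lemma:vij}. The indicator functions $\indic_{d_{i,j}^\pm\leq \ldots}$ and $\indic_{d_{ij}\leq \tau_i+\tau_j}$ come for free from the first ingredient: since $\delta_0^{(\lambda)}*\delta_0^{(\lambda)}$ is supported in $B(0,2\lambda)\subset B(0,a+b)$ whenever $a,b\geq\lambda$, and $\delta_0^{(a)}*\delta_0^{(b)}$ in $B(0,a+b)$, the kernel $h_{a,b}$ vanishes identically on $\{|z|\geq a+b\}$. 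Inside this support, the scaling argument of Lemma~\ref{lemma:g1W2} applied to $\g*\delta_0^{(a)}*\delta_0^{(b)}$, combined with the bounds already established for $\g_\lambda$ in \eqref{controlvarphi}, gives $|\nabla h_{a,b}(z)|\leq C/\max(|z|,\lambda)$ and $|\nabla^2 h_{a,b}(z)|\leq C/\max(|z|,\lambda)^2$.

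For items (1) and (2), I would write
\[
v_{i,j,+}' = h_{\tau_i^+,\tau_j^+}(x_i-x_j) - h_{\tau_i^-,\tau_j^+}(y_{\sigma_N(i)}-x_j),
\]
split this as a ``translation'' piece $h_{\tau_i^+,\tau_j^+}(x_i-x_j)-h_{\tau_i^+,\tau_j^+}(y_{\sigma_N(i)}-x_j)$ plus a ``scale-mismatch'' piece $[h_{\tau_i^+,\tau_j^+}-h_{\tau_i^-,\tau_j^+}](y_{\sigma_N(i)}-x_j)$, and control them by the mean value theorem and a direct scaling argument respectively. The first produces the $r_i/\max(d_{i,j}^+,\lambda)$ and $r_i/\tau_j^+$ terms; the second produces the $\max(\tau_i^+,\tau_i^-)^2/(\tau_j^+)^2$ term in \eqref{diffvijcassuppl2}. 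When $\tau_i^+=\tau_i^-$ the mismatch piece vanishes and, under $d_{i,j}^\pm\geq r_i$, the mean-value segment stays at distance at least $d_{i,j}^\pm/2$ from the origin, which sharpens the estimate to \eqref{eqvijpp}. The comparison estimates \eqref{diffvijcassuppl} and \eqref{eqdiffdesvivvj} with the true interaction follow by a dichotomy: when $d_{i,j}^\pm$ exceeds the support threshold, $v_{i,j,\pm}'\equiv 0$ and the difference equals $v_{i,j,\pm}$, bounded by \eqref{eqcorobv2}; otherwise both terms are bounded separately.

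For item (3), the full $v_{ij}'$ has the second-difference structure
\[
v_{ij}' = [h(x_i-x_j)-h(y_{\sigma_N(i)}-x_j)] - [h(x_i-y_{\sigma_N(j)})-h(y_{\sigma_N(i)}-y_{\sigma_N(j)})]
\]
with $h=h_{\tau_i,\tau_j}$ (using $\tau_i^+=\tau_i^-=\tau_i$ and $\tau_j^+=\tau_j^-=\tau_j$). A second-order Taylor expansion in $\vec r_i$ and $\vec r_j$, fed with the $|\nabla^2 h|\leq C/\max(|z|,\lambda)^2$ bound from Step~1 and the assumption $d_{ij}\geq \min(r_i,r_j)$, yields the product $r_ir_j$ and the decay in \eqref{eq:v'}, with the indicator $\indic_{d_{ij}\leq \tau_i+\tau_j}$ coming again from the support of $h$. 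The estimate \eqref{eq:diff v} on $|v_{ij}'-v_{ij}|$ follows from the same dichotomy as in item (1), invoking \eqref{eqcorobv} from Lemma~\ref{lemma:vij} in the regime where $v_{ij}'$ vanishes.

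The main difficulty I anticipate is the geometry entering the second-difference analysis of item (3): one must verify that, under the weakest stated assumption $d_{ij}\geq \min(r_i,r_j)$, the four corner points $\{x_i,y_{\sigma_N(i)}\}\times\{x_j,y_{\sigma_N(j)}\}$ all lie in a region where the $|\nabla^2 h|$ bound effectively gives the asymmetric decay $1/(\max(d_{ij},\lambda)\max(d_{ij},r_i,r_j))$ rather than $1/\max(d_{ij},\lambda)^2$. This is the same delicate case distinction already resolved in the proof of Lemma~\ref{lemma:vij}, and I would recover it by splitting into the sub-regimes $\min(r_i,r_j)\leq d_{ij}\leq \max(r_i,r_j)$ (first-order expansion in the larger dipole only) and $d_{ij}\geq \max(r_i,r_j)$ (full second-order expansion), paralleling the dichotomy between \eqref{vijexp} and \eqref{vijexp2}.
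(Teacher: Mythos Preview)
Your overall strategy and ingredients are right, but you have the logic inverted in a way that creates a genuine gap for the difference bounds \eqref{diffvijcassuppl}, \eqref{eqdiffdesvivvj}, \eqref{eq:diff v}. The paper does \emph{not} bound $|v'|$ first via $h_{a,b}=\g_\lambda-\g*\delta_0^{(a)}*\delta_0^{(b)}$ and then recover $|v'-v|$ by triangle inequality; it does the reverse. In $v_{i,j,\pm}'-v_{i,j,\pm}$ and in $v_{ij}'-v_{ij}$ the $\g_\lambda$ contributions cancel \emph{exactly}, leaving only differences of the \emph{smooth} smeared potentials: $f:=\g*\delta_{x_j}^{(\tau_j^\pm)}$ for items~(1)--(2), $F:=\g*\delta_0^{(\tau_i)}*\delta_0^{(\tau_j)}$ for item~(3). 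These satisfy the \emph{uniform} bounds $\|\nabla f\|_\infty\le C/\tau_j^\pm$, $\|D^2 f\|_\infty\le C/(\tau_j^\pm)^2$, $\|D^2 F\|_\infty\le C/\max(\tau_i,\tau_j)^2$, and it is precisely these (with no dependence on $|z|$) that produce the $r_i/\tau_j^\pm$, $\max(\tau_i^+,\tau_i^-)^2/(\tau_j^\pm)^2$, $r_i/\max(\tau_i,\tau_j^\pm)$ and $r_ir_j/\max(\tau_i,\tau_j)^2$ terms. The bounds on $|v'|$ then follow by adding back $|v|$, already controlled by Lemma~\ref{lemma:vij}.

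Your plan to get $|v'-v|$ by ``bounding both terms separately'' cannot work: when $d_{i,j}^\pm\ll\tau_j^\pm$ (resp.\ $d_{ij}\ll\max(\tau_i,\tau_j)$), both $|v'|$ and $|v|$ are of order $r_i/\max(d_{i,j}^\pm,\lambda)$ (resp.\ $r_ir_j/(\max(d_{ij},\lambda)\max(d_{ij},r_i,r_j))$), and the triangle inequality cannot yield the strictly smaller target.

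A bonus of the paper's ordering is that your anticipated ``main difficulty'' in item~(3) disappears: since $F$ has \emph{globally} bounded Hessian, the second-difference bound $|v_{ij}'-v_{ij}|\le r_ir_j\|D^2 F\|_\infty$ needs no geometry whatsoever. The delicate regime analysis you propose to redo was already carried out once in Lemma~\ref{lemma:vij} for $|v_{ij}|$; the paper simply reuses it via $|v_{ij}'|\le|v_{ij}'-v_{ij}|+|v_{ij}|$, absorbing $r_ir_j/\max(\tau_i,\tau_j)^2$ into the \eqref{eqcorobv} term on the support $\{d_{ij}\le\tau_i+\tau_j\}$ thanks to the hypotheses $r_i\le 4\tau_i$, $r_j\le 4\tau_j$, $\tau_i,\tau_j\ge\lambda$.
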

	
	\medskip	
	\begin{proof} 
		Let us first prove item (1) and consider the $+$ case, the other one being analogous.
		For the first item, we note that if $d_{i,j}^+>\max(\tau_i^+,\tau_i^-)+\tau_j^+$, then 
		$B(x_i, \tau_i^+)\cap B(x_j,\tau_j^+)=\emptyset$ and $B(y_i ,\tau_i^-)\cap B(x_j, \tau_j^+)=\emptyset$, which implies by Newton's theorem, that $v_{i,j,+}'=0$. The results then follow in view of Lemma  \ref{lemma:vij}.
		
		Let us turn to the case where $d_{i,j}^+\le \max(\tau_i^+,\tau_i^-)+\tau_j^+$.
		By the definitions, we have
		\begin{equation}\label{eqproofle1}
			v_{i,j,+}'-v_{i,j,+}= - \g*\delta_0^{(\tau_i^+)}* \delta_0^{(\tau_j^+)} (x_i-x_j)
			+\g*\delta_0^{(\tau_i^-)}*\delta_0^{(\tau_j^+)}(y_i-x_j).
		\end{equation}

		Setting $f= \g *  \delta_{x_j}^{(\tau_j^+)}$, we may rewrite 
		this as 
		\begin{align*}
			v_{i,j,+}'-v_{i,j,+}&={-} \int f ( \delta_{x_i}^{(\tau_i^+)}- \delta_{y_i}^{(\tau_i^-)})\\
			& = {-} \int f ( \delta_{x_i}^{(\tau_i^+)}- \delta_{y_i}^{(\tau_i^+) }){-} \int f ( \delta_{y_i}^{(\tau_i^+) }- \delta_{y_i}^{(\tau_i^-)}).\end{align*}
		The first term on the right-hand side is bounded by $|x_i-y_i|\|\nab f\|_{L^\infty}$, moreover one can check in view of Lemma \ref{lemma:g1W2} and scaling that $f$ is Lipschitz with constant $ C(\tau_j^+)^{-1}$. Thus the first term is bounded by $ C\frac{r_i}{\tau_j^+}$. For the second term, Taylor-expanding $f$ to order $2$ near $y_i$, we obtain 
		$$\Bigr|\int f(\delta_{y_i}^{(\tau_i^+) }- \delta_{y_i}^{(\tau_i^-)})\Bigr| \le C\|D^2 f\|_{L^\infty}\max(\tau_i^+, \tau_i^-)^2  \le C'\frac{\max(\tau_i^+, \tau_i^-)^2}{(\tau_j^+)^2}.
		$$	
		This proves \eqref{diffvijcassuppl}. Moreover, \eqref{diffvijcassuppl2} follows from Lemma \ref{lemma:vij}.
		
		Let us turn to the second item,  considering again the $+$ case.
		If $d_{i,j}^+ >\tau_i+ \tau_j^+ $ then $B(x_i, \tau_i) \cap B(x_j, \tau_j^+)=\emptyset $ and $B(y_i, \tau_i) \cap B(x_j, \tau_j^+)=\emptyset $, hence  by Newton's theorem 
		$v_{i,j,+}'=0$.	The results then follow from Lemma \ref{lemma:vij}.

		Let us thus consider the case $d_{i,j}^+\le \tau_i+ \tau_j^+$. 
		Letting $$f\coloneqq   \g*\delta_{0}^{(\tau_i^+)} * \delta_{x_j}^{(\tau_j^+)}= \g*\delta_{0}^{(\tau_i^-)} * \delta_{x_j}^{(\tau_j^+)},$$
		we may thus write in view of \eqref{eqproofle1} that 
		$$|v_{i,j,+}'-v_{i,j,+}|=|f (x_i)-f(y_i)|\le \|\nab f\|_{L^\infty} r_i.$$
		Moreover, $\nab f$ is as regular as $\nab \g* \delta_0^{(\max(\tau_i, \tau_j^+))}$,  from which we deduce $\|\nab f\|_{L^\infty}\le \frac{C}{\max(\tau_i, \tau_j^+)}$. 
		This proves \eqref{eqdiffdesvivvj}, and  from \eqref{eqcorobv2}, \eqref{eqvijpp} follows by using $d_{i,j}^+\le\tau_i+\tau_j^+\le 2\max(\tau_i, \tau_j^+)$ to absorb terms.

		Let us turn to the third item.	  If $d_{ij}> \tau_i+\tau_j$ then 
		$B(x_i, \tau_i)\cap B(x_j,\tau_j)=\emptyset$ and the same with $(y_i, y_j)$, $(x_i, y_j)$,  and $(y_i, x_j)$. By Newton's theorem, this implies that $v_{ij}'=0$, and the results follow in view of \eqref{eqcorobv}. Let us now turn to the case  $d_{ij}\le \tau_i+\tau_j\le 2 \max(\tau_i, \tau_j)$. 
		We have 
		\begin{multline}\label{lbtermestau}
			v_{ij}'-v_{ij}=- \g*\delta_0^{(\tau_i)}* \delta_0^{(\tau_j)} (x_i-x_j)
			+\g*\delta_0^{(\tau_i)}*\delta_0^{(\tau_j)}(x_i-y_j)
			\\+\g*\delta_0^{(\tau_i)}* \delta_0^{(\tau_j)}(x_j-y_i)- \g*\delta_0^{(\tau_i)}* \delta_0^{(\tau_j)}(y_i-y_j).		\end{multline}    
		Letting $f\coloneqq \g*\delta_0^{(\tau_i)}*\delta_0^{(\tau_j)}$ and $h= f(y_i-\cdot) -f(x_i-\cdot) $, we can rewrite the above as 
		$$v_{ij}'-v_{ij}= h(x_j)-h(y_j)$$
		hence 
		\begin{equation}\label{avhessien}|v_{ij}'-v_{ij}|\le |x_j-y_j| \|\nab h\|_{L^\infty} \le |x_j-y_j||x_i-y_i| \|D^2 f\|_{L^\infty}.\end{equation}
		But $f$ is at least as regular as $\g*\delta_0^{(\max(\tau_i, \tau_j))}$,  while by Lemma  \ref{lemma:g1W2} and scaling, we have
		$$|D^2 \g*\delta_0^{(\max(\tau_i, \tau_j))}|\le \frac{C}{\max(\tau_i, \tau_j)^2}.$$
		Inserting into \eqref{avhessien}, the result  \eqref{eq:diff v} follows. The relation \eqref{eq:v'}
		is then simply a combination of \eqref{eq:diff v}  and \eqref{eqcorobv}, absorbing $\frac{1}{\max(\tau_i, \tau_j)^2}$ into $ \frac{1}{d_{ij}^2}$ by the above remark.
	\end{proof}

	\begin{lemma}[Errors from the upper bound]\label{lemma:errortilde}Let $x_i,x_j, y_{i}, y_{j} \in \Lambda$ be such that $\sigma_2[(x_i, x_j), (y_i,y_j)] =\Id_{\{i,j\}} $.
		Let  $\tau>0$  be such that $\tau \ge 8\max(r_i, \lambda)$ for all $i$. Assume that $d_{ij}\ge \min (r_i,r_j)$.
		Let $v_{ij}$ be as in Definition \ref{def:vij} and $\tilde{v}_{ij}$ be as in \eqref{deftvij}.
		There exists $C>0$ (depending only on the regularization $\g_1$)  such that
		\begin{equation}\label{estisurtvij}
			|\tilde{v}_{ij}|\leq C
			\frac{r_ir_j}{\max(d_{ij},\lambda)\max(r_i,r_j,d_{ij})}\indic_{d_{ij}\le 2\tau}.
		\end{equation}
		Moreover,
		\begin{equation}
			\label{comptvijvij}	|\tilde{v}_{ij}-v_{ij}|\leq C
			\frac{r_ir_j}{\max(d_{ij},\lambda)\max(r_i,r_j,d_{ij})}\indic_{d_{ij}>2\tau}+ 				\frac{r_ir_j }{\tau^2} \indic_{d_{ij}\le 2\tau}
		\end{equation}
		
	\end{lemma}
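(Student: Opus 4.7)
The plan is to exploit the decomposition $u_i = h_i + \tilde u_i$ on $B(z_i,\tau)$, where $h_i := \g * (\delta_{x_i}^{(\lambda)} - \delta_{y_{\sigma_N(i)}}^{(\lambda)})$ is the free dipole potential (defined on all of $\R^2$) and $\tilde u_i$ is the harmonic corrector introduced in the lemma preceding Lemma \ref{lemma:bsupF}, which satisfies the pointwise bounds \eqref{eqtu1}--\eqref{eqtu2}; in particular $\|\nabla \tilde u_i\|_{L^\infty(B(z_i,\tau))} \le Cr_i/\tau^2$. This will reduce both estimates to combinations of free-potential integrals (which I can compare to $v_{ij}$) plus manifestly small corrector contributions.

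For \eqref{estisurtvij}, I first note that $\tilde v_{ij} = 0$ whenever $B(z_i,\tau) \cap B(z_j,\tau) = \emptyset$. Since $|z_i-z_j| \ge d_{ij} - (r_i+r_j)/2 \ge d_{ij} - \tau/8$ from $\tau \ge 8\max(r_i,\lambda)$, it suffices to bound the integral when $d_{ij} \le 2\tau$ (up to harmless constants absorbed into $C$). In that regime I split $D_{ij} := B(z_i,\tau) \cap B(z_j,\tau)$ into the two inner discs $B(z_k, 2\max(r_k,\lambda))$ and the complementary far region. On the far region I use $|\nabla u_k(x)| \lesssim r_k/|x-z_k|^2$ from \eqref{eqtu3}, and a polar-coordinate integral of $r_i r_j/(|x-z_i|^2|x-z_j|^2)$ yields $\lesssim r_i r_j/\max(d_{ij},\lambda)^2$ when $d_{ij} \gtrsim \max(r_i,r_j)$, and $\lesssim r_i r_j/(\max(d_{ij},\lambda)\max(r_i,r_j))$ otherwise. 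On the inner disc near dipole $i$ I use \eqref{eqtu2} to get $\int 1/\dist(x,\{x_i,y_i\})\,\dd x \lesssim \max(r_i,\lambda)$, coupled with $|\nabla u_j| \lesssim r_j/\max(d_{ij},\lambda)^2$ from \eqref{eqtu3}, and similarly at $j$. Aggregating gives \eqref{estisurtvij}.

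For \eqref{comptvijvij}, I treat the two indicator cases separately. When $d_{ij} > 2\tau$, $\tilde v_{ij}$ either vanishes or is again controlled by the decay in \eqref{estisurtvij}, while Lemma \ref{lemma:vij}(2) bounds $|v_{ij}|$ by the same decay, so $|\tilde v_{ij} - v_{ij}|$ inherits it. When $d_{ij} \le 2\tau$ I substitute $u_i = h_i + \tilde u_i$ and expand
\[
\tilde v_{ij} = \int_{D_{ij}} \nabla h_i \cdot \nabla h_j \;+\; \int_{D_{ij}} \nabla h_i \cdot \nabla \tilde u_j \;+\; \int_{D_{ij}} \nabla \tilde u_i \cdot \nabla h_j \;+\; \int_{D_{ij}} \nabla \tilde u_i \cdot \nabla \tilde u_j.
\]
Each of the last three terms is $\lesssim r_i r_j/\tau^2$: for the cross terms, $\|\nabla \tilde u_j\|_\infty \le Cr_j/\tau^2$ and a direct computation (or writing $\nabla h_i$ as a sum of two Coulomb kernels) gives $\int_{B(z_i,\tau)}|\nabla h_i|\,\dd x \lesssim r_i$, while the $\tilde u_i \cdot \tilde u_j$ term uses $|D_{ij}| \lesssim \tau^2$. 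For the main term, I compare it to $2\pi v_{ij} = \int_{\R^2} \nabla h_i \cdot \nabla h_j$ (by integration by parts using $-\Delta h_k = 2\pi \mu_k$); the difference is $-\int_{D_{ij}^c} \nabla h_i \cdot \nabla h_j$, which by the far-field bound $|\nabla h_k(x)| \lesssim r_k/\max(|x-z_k|,\lambda)^2$ is dominated by $\int_{D_{ij}^c} r_i r_j/(|x-z_i|^2|x-z_j|^2)\,\dd x \lesssim r_i r_j/\tau^2$, since in $D_{ij}^c$ at least one of $|x-z_i|, |x-z_j|$ is $\gtrsim \tau$.

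The main obstacle will be the last step: carefully controlling the exterior integral $\int_{D_{ij}^c} \nabla h_i \cdot \nabla h_j$ in the regime where the dipole centers are very close (so $D_{ij}$ is most of a disc of radius $\tau$) and extracting the sharp $r_i r_j/\tau^2$ scaling — the dipole cancellations in $h_i$ matter here, and one must be careful on the "lens" $B(z_j,\tau) \setminus B(z_i,\tau)$ where one of the gradient bounds degenerates. A minor bookkeeping point is matching the factor $2\pi$ between $v_{ij}$ and the quadratic form $\int \nabla h_i\cdot\nabla h_j$, which is harmlessly absorbed into $C$.
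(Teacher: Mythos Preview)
Your decomposition $u_i = h_i + \tilde u_i$ is the right starting point, and the strategy of comparing $\int_{D_{ij}}\nabla h_i\cdot\nabla h_j$ to $2\pi v_{ij}=\int_{\R^2}\nabla h_i\cdot\nabla h_j$ is natural. However, there is a genuine gap in the estimates you invoke: the claim $\int_{B(z_i,\tau)}|\nabla h_i|\,\dd x \lesssim r_i$ is false. The dipole field satisfies $|\nabla h_i(x)|\sim r_i/|x-z_i|^2$ for $|x-z_i|\gg \max(r_i,\lambda)$, and integrating this radially gives $\int_{B(z_i,\tau)}|\nabla h_i|\sim r_i\log(\tau/\max(r_i,\lambda))$, not $r_i$. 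This logarithm propagates: your cross terms $\int_{D_{ij}}\nabla h_i\cdot\nabla\tilde u_j$ then pick up a factor $\log(\tau/r_i)$, and the same issue appears in the far-region integral for \eqref{estisurtvij} (the polar integral of $r_ir_j/(|x-z_i|^2|x-z_j|^2)$ over $D_{ij}$ is $\sim \frac{r_ir_j}{d_{ij}^2}\log(d_{ij}/\max(r_i,\lambda))$, not $\frac{r_ir_j}{d_{ij}^2}$) and in the lens contribution to the exterior integral for \eqref{comptvijvij}. Since $\tau\sim\ve_0\Cut$ while $r_i$ can be as small as $\lambda$, this logarithm is unbounded and the argument does not close.

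The paper avoids this by a single integration by parts \emph{against the measure}: writing $\int_{\Omega}\nabla u_i\cdot\nabla u_j = \int_{\partial\Omega}\frac{\partial u_j}{\partial\nu}(u_i-\bar u_i) + 2\pi\int_\Omega u_i\,\dd(\delta_{x_j}^{(\lambda)}-\delta_{y_j}^{(\lambda)})$ localizes the evaluation of $u_i$ to the support of $\mu_j$ (a set of diameter $\lesssim\max(r_j,\lambda)$), so that after decomposing only $u_i=h_i+\tilde u_i$ one gets $v_{ij}$ plus $\int\tilde u_i\,\dd\mu_j$, which is controlled by $r_j\|\nabla\tilde u_i\|_\infty\lesssim r_ir_j/\tau^2$ directly---no $L^1$ estimate on $\nabla h_i$ is needed. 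The boundary term is handled by splitting into the regimes $d_{ij}\le\tau/2$ (where $\partial\Omega$ is at distance $\gtrsim\tau$ from both centers) and $d_{ij}\ge\tau/2$ (where a bisector argument is used). Your four-term expansion can also be repaired via integration by parts (exploiting that $\tilde u_j$ is harmonic in $B(z_j,\tau)\supset D_{ij}$ to turn the cross terms into boundary integrals), but as written the pointwise route does not give the stated bounds.
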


	\begin{proof}
		First let us observe that if $d_{ij} >2\tau$, then by property of the barycenter,  $B(z_i, \tau) \cap B(z_j, \tau)= \emptyset$, hence $\tilde v_{ij}=0$ and the claim is true by \eqref{eqcorobv}.
		Let us turn to the case $d_{ij} \le 2\tau$ and  distinguish two cases.\\
		$\bullet$ Case 1: $d_{ij} \le \frac12 \tau$. 
		In that case the balls $B(z_i, 3\max(r_i, \lambda)) $ and $B(z_j, 3\max(r_j, \lambda))$ are included in $\Omega \coloneqq  B(z_i, \tau) \cap B(z_j, \tau).$ 
		Integrating by parts and using \eqref{eqsurui}, the definition of $v_{ij}$ in  Definition \ref{def:vij} and the definition of $\tilde u_i$ in Lemma \ref{lem35},  we have
		\begin{align*}
			\tilde v_{ij}
			&= \frac{1}{2\pi}\int_{\Omega } \nab u_i \cdot \nab u_j
			= \frac{1}{2\pi}\int_{\partial \Omega} \frac{\partial u_j}{\partial \nu} (u_i-\bar u_i)
			- \frac{1}{2\pi}\int_{\Omega} (u_i-\bar u_i)\,\Delta u_j \\
			&= \frac{1}{2\pi}\int_{\partial \Omega} \frac{\partial u_j}{\partial \nu} (u_i-\bar u_i)
			+ \int_{\Omega} (u_i-\bar u_i)\,  (\delta_{x_j}^{(\lambda)} - \delta_{y_j}^{(\lambda)}) \\
			&= \frac{1}{2\pi}\int_{\partial \Omega} \frac{\partial u_j}{\partial \nu} (u_i-\bar u_i)
			+ \int_{\Omega} u_i\,  (\delta_{x_j}^{(\lambda)} - \delta_{y_j}^{(\lambda)}) \\
			&= \frac{1}{2\pi}\int_{\partial \Omega} \frac{\partial u_j}{\partial \nu} (u_i-\bar u_i)
			+ \int_{\Omega} \tilde u_i\,  (\delta_{x_j}^{(\lambda)} - \delta_{y_j}^{(\lambda)}) + v_{ij},
		\end{align*}
		where $\bar u_i$ is the average of $u_i$ on $\partial \Omega$.
		Using the geometry of this case and  \eqref{eqtu3}, we may bound the first term on the right-hand side by $ C \tau^2 \|\nab u_j\|_{L^\infty(\partial \Omega)}  \|\nab u_i\|_{L^\infty(\partial\Omega)} \le  \frac{r_ir_j}{\tau^2}$, and using \eqref{eqtu1}, we may bound the second term by the same.
		The result \eqref{comptvijvij} follows in that case, and we then also deduce \eqref{estisurtvij} from \eqref{eqcorobv}.\\

		$\bullet$ Case 2: $ d_{ij} \ge \frac12 \tau \ge 4 \max(r_i, r_j, \lambda)$. In that case $B(z_i, 2 \max(r_i,\lambda)) $ and $B(z_j, 2\max(r_j,\lambda))$ are disjoint. We partition $\Omega$ as $\Omega_i \cup \Omega_j$ where 
		$$\Omega_i = \{ x \in \Omega, |x-z_i|\le |x-z_j|\}, \quad \Omega_j = \{x \in \Omega, |x-z_i|>|x-z_j|\}.$$
		Because $\Omega$ is the intersection of two discs  centered at $z_i$ and $z_j$, $\partial \Omega_i $ consists of two pieces: $\partial B(z_i, \tau) \cap B(z_j, \tau)$ and a segment $S$, respectively the same for $\partial \Omega_j$ with the same segment $S$.
		The points of $S$ must be at distance $\ge \hal d_{ij} \ge \frac14 \tau$ from both $z_i$ and $z_j$,  thus $\nab u_i$ is bounded by $ C r_i/\tau^2 $ on $S$. The points of $\Omega_i$ are also at distance $\ge \frac 14\tau $ from $z_j$, so that $\nab u_j$ is bounded by $C r_j/\tau^2$ there.
		We may now write, by integration by parts and \eqref{eqsurui}, that 
		\begin{equation}\label{resplivijt}
			\int_{\Omega_i} \nab u_i \cdot \nab u_j = \int_S \frac{\partial u_i}{\partial \nu} (u_j -\bar u_j) - \int_{\Omega_i} (u_j-\bar u_j) \Delta u_i\end{equation}
		where $\bar u_j$ is a constant.
		
		Assume first that $\int_{\Omega_i} \Delta u_i=0$.  By the definition of $\delta_0^{(\lambda)}$, this can only happen if both $B(x_i, \lambda) $ and $B(y_i, \lambda)$ do not intersect $\Omega_i$, or if both $B(x_i,\lambda)$ and $B(y_i, \lambda)$ are included in $\Omega_i$. In that case, we let $\bar u_j$ be the average of $u_j$ over $\Omega_j$. Using \eqref{eqtu3} and the property of $S$, we find that 
		\begin{equation}\label{bornseg}
			\left| \int_S \frac{\partial u_i}{\partial \nu} (u_j -\bar u_j)\right|\le |S|\tau \sup_S |\nab u_i| \sup_{\Omega_j}|\nab u_j|\le \frac{Cr_ir_j}{\tau^2}.\end{equation}
		Moreover,  by \eqref{eqtu3} and the above remark on $\Omega_i$, 
		$$\left|\int_{\Omega_i} (u_j-\bar u_j) \Delta u_i \right|\le C  r_i\|\nab u_j\|_{L^\infty(\Omega_j) } \le \frac{Cr_ir_j}{\tau^2}$$ 
		in both the cases that we have to consider.  Inserting into \eqref{resplivijt}, we have shown that 
		\begin{equation}
			\label{resplivijt2}
			\left|\int_{\Omega_i} \nab u_i \cdot \nab u_j \right|\le \frac{Cr_ir_j}{\tau^2}\end{equation}
		in the case $\int_{\Omega_i}\Delta u_i=0$. 
		In the case $\int_{\Omega_i} \Delta u_i\neq 0$, we note that in view of the structure of $\Delta u_i$, we may always split it as $\mu_n+\mu_s$ where $\int_{\Omega_i} \mu_n=0$ and $\mu_s$ has a distinguished sign, with both $\mu_n$ and $\mu_s$ supported in $(B(x_i, \lambda) \cup B(y_i, \lambda)) \cap \Omega_i$. We then let  
		\begin{equation}\label{choixbaruj}\bar u_j= \frac{\int u_j \mu_s}{\int \mu_s}\end{equation}
		and check that $|u_j-\bar u_j|\le C \tau \|\nab u_j\|_{L^\infty(\Omega_i)} \le C \frac{r_j}{\tau}$.
		The first term in the right-hand side of \eqref{resplivijt} is thus bounded as above, and for the second term,  we have
		$$	\int_{\Omega_i} (u_j-\bar u_j) \Delta u_i = \int_{\Omega_i} u_j \mu_n+ \int_{\Omega_i} (u_j-\bar u_j) \mu_s= \int_{\Omega_i} u_j \mu_n$$
		where the second term vanished by \eqref{choixbaruj}.
		We then use the structure of $\mu_n$ and its neutrality to find that this is again bounded by $C r_i \|\nab u_j\|_{L^\infty(\Omega_i)}\le C \frac{r_ir_j}{\tau^2}$.
		We have thus obtained that \eqref{resplivijt2} holds in the case $\int_{\Omega_i} \Delta u_i\neq 0$ as well.
		We may then bound $\int_{\Omega_j} \nab u_i \cdot \nab u_j$ in the same way, reversing the roles of $i$ and $j$, to conclude that 
		$$|\tilde v_{ij} |\le C \frac{r_ir_j}{\tau^2}$$ in Case 2. But in that case we also have $|v_{ij}|\le C \frac{r_ir_j}{\tau^2},$ so \eqref{comptvijvij} holds, and \eqref{estisurtvij} is true as well.

	\end{proof}
	\begin{definition}[Interaction of the energy upper bound model]\label{def:vijtilde}
		From now on, we let $\tilde v_{ij}$ be as in \eqref{deftvij} with  the choice $\tau=\tau_0$ where
		\begin{equation}\label{deftauub}
			\tau_0= 8\ve_0 \Cut\end{equation}
		where $\Cut$ is defined in Definition \ref{def:Rlambda p0}, and $\ve_0\in (0,1)$ is  a small positive constant to be determined (in Definition \ref{def:dipole measure}).
	\end{definition}
	With that definition, we get from Lemma \ref{lemma:bsupF} the existence of a constant $L>0$ depending only  on $\beta$ (and the norms of the regularization $\g_1$) such that, on the event $\{\sigma_N=\Id\}\cap \cap_{i\in [N]}\{|x_i-y_i|\leq \ve_0 \Cut\}$, we have
	\begin{equation}\label{eq:b2}
		\F_\lambda(\vec{X}_N,\vec{Y}_N)\leq -\sum_{i=1}^N \g_\lambda(x_i-y_{i})+\hal\sum_{i\neq j}\tilde{v}_{ij}+\frac{L}{\beta}\left(\sum_{i=1}^N \frac{|x_i-y_{i}|^2}{(\ve_0 \Cut)^2}\right).
	\end{equation}

	\section{Large deviations lower bound}\label{section:lower}

	In this section, we prove the large deviations lower bound of Theorem \ref{theorem:LDP}. We begin by providing more details on the stable matching in Section \ref{sub:stable}, then state a general cluster‑expansion result that represents the logarithm of a formal series as another series. In Section \ref{sub:graphs and partitions} we collect additional graph‑theoretic and partition terminology. A detailed strategy of the proof of the lower bound is given in Subsection \ref{sub:strat lower}.

	\subsection{Stable matching}\label{sub:stable}
	Recall that a matching is a random permutation of $[N]$, i.e.~a way to assign to each $x_i$ a match $y_{\sigma_N(i)}$. A matching is stable with respect to an ordering of preferences if there does not exist any pair of points for which it would be preferable to switch partners. 
	
	We let $\Sigma_N$ denote the set of permutations on $[N]$.
	\begin{definition}\label{def:stable match}
		Let us introduce a random matching to pair particles of opposite sign. We define a random variable $\sigma_N$ from $((\Lambda^2)^N,\mc{B}((\Lambda^2)^N))$ to $\Sigma_N$ using the following procedure:
		\begin{itemize}
			\item Enlarge all points $x_i, 1\leq i\leq N$ into $B(x_i,\ve)$ for the same parameter $\ve>0$.
			\item Increase $\ve$ until some $B(x_i,\ve)$ touches some negative charge $y_j$. Set $\sigma_N(i)=j$. (If it touches more than one point at the same $\ve$,  choose the smallest index).
			\item Remove $x_i$ and $y_j$ from the configuration and repeat.
		\end{itemize}
	\end{definition}
	Almost surely, this procedure defines a permutation of $[N]$,  which we sometimes denote $\sigma_N[\vec{X}_N, \vec{Y}_N]$.

	\begin{remark}[Uniqueness of the stable matching]
		When preferences are ordered by (minimal) distances and are strict, there is a unique stable matching algorithm, which is given by Definition \ref{def:stable match}. This algorithm is a special case of the Gale-Shapley algorithm.
	\end{remark}

	\begin{lemma}\label{lemma:stable}
		Let $(\vec{X}_N,\vec{Y}_N)$ be a point configuration in $(\Lambda^2)^N$ such that the distances $|x_i-y_j|$, $1\leq i,j\leq N$ are all distinct. Then the following statements are equivalent:
		\begin{enumerate}
			\item $\sigma_N[\vec{X}_N,\vec{Y}_N]=\Id$,
			\item For every $1\leq i\neq j \leq N$, $\sigma_2[(x_i,x_j),(y_i,y_j)]=\Id_{\{i,j\}}$.
		\end{enumerate}
	\end{lemma}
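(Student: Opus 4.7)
My approach is to characterize the matching $\sigma_N$ produced by the algorithm of Definition~\ref{def:stable match} via the absence of \emph{blocking pairs}: a pair $(x_a, y_b)$ with $\sigma(a) \neq b$ is blocking for a matching $\sigma$ if $|x_a - y_b| < |x_a - y_{\sigma(a)}|$ and $|x_a - y_b| < |x_{\sigma^{-1}(b)} - y_b|$ (both $x_a$ and $y_b$ strictly prefer each other to their assigned partners). Once I show that, under the distinctness hypothesis, $\sigma_N$ is the unique matching with no blocking pair, both implications follow immediately from the observation that the blocking condition for a pair $(x_i, y_j)$ relative to the identity matching involves only the four points $x_i, x_j, y_i, y_j$.

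First I would verify that $\sigma_N$ admits no blocking pair. Suppose $\sigma_N(a) = b$ is assigned at some radius $\epsilon = |x_a - y_b|$. Any $y_{b'}$ still in the pool at that moment satisfies $|x_a - y_{b'}| \geq \epsilon = |x_a - y_b|$, by minimality of $\epsilon$. Any $y_{b'}$ that was removed earlier, say matched to some $x_{a'}$ at a radius $\epsilon' < \epsilon$, satisfies $|x_a - y_{b'}| \geq \epsilon' = |x_{a'} - y_{b'}|$, since $x_a$ was still in the pool at that earlier moment. In either case $(x_a, y_{b'})$ fails the blocking condition with respect to $\sigma_N$; the symmetric argument handles the pairs $(x_{a'}, y_b)$. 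Conversely, processing candidate pairs in increasing order of distance (all distinct by hypothesis) shows by induction that a matching with no blocking pair is unique and must coincide with $\sigma_N$.

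For $(1)\Rightarrow(2)$, fix $i\neq j$ and apply the absence of blocking pairs for $\sigma_N=\Id$ to the pairs $(x_i,y_j)$ and $(x_j,y_i)$. This gives
\begin{equation*}
|x_i - y_j| \geq \min(|x_i - y_i|, |x_j - y_j|) \quad \text{and} \quad |x_j - y_i| \geq \min(|x_i - y_i|, |x_j - y_j|),
\end{equation*}
which are exactly the conditions for $\Id$ to be (the unique matching) without blocking pair on the four-point subconfiguration $\{x_i,x_j,y_i,y_j\}$, hence $\sigma_2[x_i,x_j,y_i,y_j]=\Id$. For $(2)\Rightarrow(1)$, the hypothesis applied to each pair $i\neq j$ yields the same display and therefore rules out every potential blocking pair for $\Id$ on the full configuration; by the uniqueness established above, $\sigma_N=\Id$.

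The main (mild) obstacle is the first step: verifying that the simultaneous ball-growth procedure of Definition~\ref{def:stable match} -- which enlarges all remaining balls at once rather than alternating proposals as in classical Gale--Shapley -- still outputs the unique stable matching when all distances $|x_i - y_j|$ are distinct. Once this is in hand, the equivalence is essentially a tautology, since a swap of partners between two matched couples mechanically involves only their four points, so ``stability'' is a pairwise condition.
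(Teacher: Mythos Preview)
Your proof is correct and takes a genuinely different route from the paper. The paper argues directly with the ball-growth algorithm: for $(1)\Rightarrow(2)$ it observes that at each removal step $k$, the index $i_k$ whose ball touches first satisfies $\mc{A}_{i_k j}$ for all remaining $j$ (and, trivially, for all previously removed $j$); for $(2)\Rightarrow(1)$ it runs the algorithm and checks inductively that each $i_k$ must touch $y_{i_k}$ first, using only the pairwise conditions $\mc{A}_{i_k j}$. No abstract stability notion is invoked.

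Your approach instead lifts the problem to the classical stable-matching framework: you characterize $\sigma_N$ as the unique matching with no blocking pair, then observe that the blocking condition for $\Id$ is literally a pairwise statement about four points. This is more conceptual and explains \emph{why} the lemma holds---it is precisely the ``no blocking pair'' criterion of Gale--Shapley, which the paper alludes to in the remark following Definition~\ref{def:stable match} but does not use in the proof. The cost is that you must first establish the equivalence between the ball-growth procedure and the unique stable matching; you sketch this correctly (smallest remaining pair must be matched in any stable matching, then induct), though the paper's direct argument avoids this detour entirely. Both proofs are short; yours buys a cleaner structural picture, the paper's buys self-containment.
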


	The above lemma means that $\sigma_N$ is equal to the identity if and only if there is no ``blocking pair'', i.e., if and only if for every pair of couples $1\leq i, j\leq N$ with $i\neq j$, $x_i$ is assigned to $y_i$ and $x_j$ is assigned to $y_j$ when performing the stable matching between $x_i,x_j$ and $y_i,y_j$ only.

	\begin{definition}\label{def:Aij}
		For every $i,j\in [N]$ with $i\neq j$, we denote
		\begin{equation}\label{defAij}
			\mc{A}_{ij}\coloneqq \{(\vec{X}_N,\vec{Y}_N)\in (\Lambda^2)^N:\sigma_2[(x_i,x_j),(y_i,y_j)]=\Id_{\{i,j\}}\}.
		\end{equation}
	\end{definition}
	
	\medskip

	\begin{proof}[Proof of Lemma \ref{lemma:stable}]
		We first prove that (1) implies (2). Let $(\vec{X}_N,\vec{Y}_N)$ such that $\sigma_N[\vec{X}_N,\vec{Y}_N]=\Id$. Let $i_1$ be the index of the first ball $B(x_i,\ve)$ that touches a negative charge. Then 
		\begin{equation*}
			(\vec{X}_N,\vec{Y}_N)\in \bigcap_{j:j\neq i_1}\mc{A}_{i_1j}.
		\end{equation*}
		Now let $i_2$ be the index of the second ball to touch. We have similarly
		\begin{equation*}
			(\vec{X}_N,\vec{Y}_N)\in \bigcap_{j:j\neq i_2}\mc{A}_{i_2j}.
		\end{equation*}
		Iterating until the last ball touches a negative charge shows that 
		\begin{equation*}
			(\vec{X}_N,\vec{Y}_N)\in \bigcap_{i\neq j}\mc{A}_{ij}.
		\end{equation*}
		
		We now prove that (2) implies (1). Let $(\vec{X}_N,\vec{Y}_N)\in \cap_{i\neq j}\mc{A}_{ij}$. Let $i_1$ be the index of the first ball which touches. Then $\sigma_N[\vec{X}_N,\vec{Y}_N](i_1)=i_1$ because $(\vec{X}_N,\vec{Y}_N)\in \cap_{j:j\neq i_1}\mc{A}_{i_1,j}$. Let $i_2$ be the index of the second ball which touches. Then $\sigma_N[\vec{X}_N,\vec{Y}_N](i_2)=i_2$ because $(\vec{X}_N,\vec{Y}_N)\in \cap_{j:j\neq i_1,i_2}\mc{A}_{i_2,j}$. We iterate the argument until the last ball touches a negative point.
	\end{proof}
	
	\begin{lemma}\label{lemma:Aij}
		Recall $r_i= |x_i-y_i|, r_j=|x_j-y_j|$. Suppose $r_i\le r_j$. Then the following statements are equivalent:
		\begin{enumerate}
			\item \begin{equation*}
				\sigma_2(x_i,y_i,x_j,y_j)=\Id_{\{i,j\}},
			\end{equation*}
			\item 
			\begin{equation*}
				|y_j-x_i|\ge r_i \quad \text{and}\quad |y_i-x_j|\ge r_i.
			\end{equation*}
		\end{enumerate}
	\end{lemma}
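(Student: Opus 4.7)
The plan is to exploit how simple Definition~\ref{def:stable match} becomes when $N=2$: one grows a common radius $\varepsilon$ from $0$, and the first pair $(x_k,y_\ell)$ with $|x_k-y_\ell|=\varepsilon$ determines the first match $\sigma_2(k)=\ell$; after removing $x_k$ and $y_\ell$, the remaining positive and negative charges are automatically paired. Therefore the first match corresponds to the pair achieving the minimum of the four positive--negative distances
\begin{equation*}
r_i,\qquad r_j,\qquad |x_i-y_j|,\qquad |x_j-y_i|,
\end{equation*}
and, assuming these distances are distinct (as one may do by Lemma~\ref{lemma:stable}), the argmin is unique. In particular, $\sigma_2=\Id_{\{i,j\}}$ holds iff the minimum is attained on one of the two diagonal entries $r_i,r_j$; under the hypothesis $r_i\leq r_j$, the constraint $r_j\geq r_i$ is automatic, so the requirement collapses to $r_i\leq |x_i-y_j|$ and $r_i\leq |x_j-y_i|$, which is exactly condition~(2).

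Concretely I would prove the two implications as follows. For $(2)\Rightarrow(1)$, assuming $|x_i-y_j|\geq r_i$ and $|x_j-y_i|\geq r_i$, then together with $r_j\geq r_i$ the minimum equals $r_i$, hence the first match is $(x_i,y_i)$, and $(x_j,y_j)$ is then matched by elimination, giving $\sigma_2=\Id$. For $(1)\Rightarrow(2)$, I argue by contrapositive: if, say, $|x_i-y_j|<r_i$, then since $r_j\geq r_i>|x_i-y_j|$, the minimum is strictly less than both diagonals and must therefore be a cross distance, producing a cross first match and $\sigma_2\neq \Id_{\{i,j\}}$. The case $|x_j-y_i|<r_i$ is symmetric.

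This is essentially a bookkeeping statement about the Gale--Shapley procedure restricted to two pairs; there is no real obstacle beyond the clean identification of the first match with the global argmin of the four positive--negative distances, and the role of the hypothesis $r_i\leq r_j$ in eliminating the diagonal $r_j$ from the competition. The only mild subtlety is handling possible ties, which do not arise under the distinct-distance hypothesis of Lemma~\ref{lemma:stable} and in any case are of measure zero in the probabilistic uses of this lemma.
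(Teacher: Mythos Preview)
Your argument is correct and is precisely the natural one: for two pairs the stable matching reduces to the argmin of the four positive--negative distances, and under $r_i\le r_j$ the identity matching occurs iff neither cross distance beats $r_i$. The paper itself omits the proof of this lemma, so there is nothing to compare against; your treatment of ties via the distinct-distance hypothesis is the right way to dispose of the measure-zero boundary cases.
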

	The proof is omitted.
	
	\begin{remark}\label{remark:the inclusion}
		Recall $\mc{B}_{ij}$ from \eqref{eq:defBij mc}. Observe that 
		\begin{equation*}
			\mc{B}_{ij}^c\subset \mc{A}_{ij}.
		\end{equation*}
		Indeed, suppose that $r_i\leq r_j$ and $d_{ij}>M\max(\min(r_i,r_j),\lambda)=M\max(r_i,\lambda)$. Then, since $M\geq 2$, we have 
		\begin{equation*}
			|y_j-x_i|\ge r_i \quad \text{and}\quad |y_i-x_j|\ge r_i,
		\end{equation*}
		which implies that $(x_i,y_i,x_j,y_j)\in \mc{A}_{ij}$ by Lemma \ref{lemma:Aij}. Proceeding similarly when $r_i\geq r_j$, we deduce that $\mc{B}_{ij}^c\subset \mc{A}_{ij}$.
	\end{remark}

	\subsection{Connected–cluster resummation}\label{sub:resum}
	The following lemma gives a (formal) identity for the logarithm of a cluster expansion series.

	\begin{lemma}\label{lemma:resum2}
		Let $C$ be a finite set with a symmetric, reflexive relation expressing the intersection of its elements. Let $K:C\to \dR$. We say that $X_1,\ldots,X_n\in C$ are connected if their connection graph $G(X_1,\ldots,X_n)$ is connected. Let $\mathrm{I}$ be the Ursell function defined in \eqref{def:UrsellI}. We have
		\begin{equation}\label{eq:resum}
			\log \sum_{n=0}^{+\infty}\frac{1}{n!}\sum_{\substack{X_1,\ldots,X_n\in C \\ \mathrm{disjoint}}}K(X_1)\ldots K(X_n)=\sum_{n=1}^{+\infty}\frac{1}{n!}\sum_{\substack{X_1,\ldots,X_n\in C\\ \mathrm{connected}}}K(X_1)\cdots K(X_n)\mathrm{I}(G(X_1,\ldots,X_n)),
		\end{equation}
		as an identity between formal series in $K$.
	\end{lemma}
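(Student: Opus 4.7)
My plan is to prove \eqref{eq:resum} by the classical two-step Mayer--Ursell argument, working at the level of formal power series in $K$ so no convergence issues arise. First I will rewrite the disjointness constraint via the Mayer trick
\begin{equation*}
\prod_{1\le i<j\le n}\mathbf{1}_{X_i\cap X_j=\emptyset}=\prod_{i<j}\bigl(1-\mathbf{1}_{X_i\cap X_j\neq\emptyset}\bigr)=\sum_{E\subset \binom{[n]}{2}}(-1)^{|E|}\prod_{ij\in E}\mathbf{1}_{X_i\cap X_j\neq\emptyset},
\end{equation*}
which turns the left-hand side $Z$ of \eqref{eq:resum} into a triple sum over $n$, tuples $(X_1,\ldots,X_n)\in C^n$ and edge sets $E\subset\binom{[n]}{2}$, weighted by $\prod_i K(X_i)$ and the above signed product of intersection indicators.

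The main step is to regroup this triple sum according to the connected components of the graph $([n],E)$. For fixed $(X_1,\ldots,X_n)$ and fixed $\pi\in\mathbf{\Pi}([n])$, the edge sets $E$ whose component partition is exactly $\pi$ are precisely those of the form $E=\bigsqcup_{V\in\pi}E_V$ with $E_V\subset\binom{V}{2}$ such that $(V,E_V)$ is connected. This factorizes the sum over $E$ blockwise, and within each block the resulting sum
\begin{equation*}
\sum_{\substack{H\text{ connected spanning on }V\\ E(H)\subset E(G((X_i)_{i\in V}))}}(-1)^{|E(H)|}
\end{equation*}
equals $\mathrm{I}(G((X_i)_{i\in V}))$ when the connection graph $G((X_i)_{i\in V})$ is connected (directly from Definition \ref{def:graphU}) and vanishes otherwise, since a disconnected graph has no connected spanning subgraph. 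Substituting back, and using that the inner sum over $(X_i)_{i\in V}$ depends only on $|V|$, I obtain
\begin{equation*}
Z=\sum_{n\ge 0}\frac{1}{n!}\sum_{\pi\in\mathbf{\Pi}([n])}\prod_{V\in\pi}a(|V|),\qquad a(m):=\sum_{\substack{X_1,\ldots,X_m\in C\\ \text{connected}}}K(X_1)\cdots K(X_m)\,\mathrm{I}(G(X_1,\ldots,X_m)).
\end{equation*}

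To finish, I will invoke the classical exponential formula for exponential generating functions, i.e.~the identity
\begin{equation*}
\sum_{n\ge 0}\frac{t^n}{n!}\sum_{\pi\in\mathbf{\Pi}([n])}\prod_{V\in\pi}a(|V|)=\exp\!\left(\sum_{m\ge 1}\frac{a(m)}{m!}t^m\right),
\end{equation*}
specialized at $t=1$. This gives $Z=\exp\bigl(\sum_{m\ge 1}a(m)/m!\bigr)$, and taking logarithms yields exactly \eqref{eq:resum}. I do not anticipate any genuine obstacle: every step is a formal identity, and the only combinatorial content is the per-block identification of the signed sum with $\mathrm{I}(G(\cdot))$, which is immediate from Definition \ref{def:graphU}; the rest is bookkeeping of the connected-component decomposition and the standard EGF identity.
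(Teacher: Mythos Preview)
Your argument is correct and is precisely the standard Mayer--Ursell derivation of this identity. The paper does not give its own proof of Lemma~\ref{lemma:resum2}; it simply refers to \cite[Theorem~4.4]{Bauerschmidt2016FerromagneticSS}, whose proof follows the same route you outline (expand the hard-core constraint, regroup by connected components to produce the Ursell function on each block, then apply the exponential formula). One cosmetic point: your specialization ``at $t=1$'' is best phrased as an equality of formal power series in the variables $(K(X))_{X\in C}$, matching the statement of the lemma; since $C$ is finite, each homogeneous degree involves only finitely many terms, so this causes no difficulty.
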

	
	Note that the equality holds as an identity between elements in $\dR$ if the series in the right-hand side of \eqref{eq:resum} converges absolutely.

	The statement and the proof of Lemma \ref{lemma:resum2} can be found in \cite[Theorem 4.4]{Bauerschmidt2016FerromagneticSS}.

	\begin{remark}
		In the grand canonical setting, the identity \eqref{eq:resum} assumes a more concise form.  
		As a concrete illustration, let
		\[
		K:\mathcal{P}(\mathbb{N})\longrightarrow\mathbb{R}, 
		\qquad 
		K(\{i\})=0 \quad\text{for every } i\in\mathbb{N},\quad K(\emptyset)=0
		\]
		so that 
		\begin{equation*}
			\sum_{n=0}^\infty \frac{1}{n!}\sum_{\substack{ V_1,\ldots,V_n\subset [N]\\ \mathrm{disjoint} }}K(V_1)\ldots K(V_n)=\sum_{\substack{X \mathrm{ partition}\\ \mathrm{of }\, [N]}}\prod_{S\in X}\tilde{K}(S),
		\end{equation*}
		where 
		\begin{equation*}
			\tilde{K}(X)=\begin{cases}
				K(X) & \text{if $|X|>1$}\\
				1 & \text{if $|X|\in \{0,1\}$}.
			\end{cases}
		\end{equation*}
		Then, regarding all sums below as formal power series, we have
		\[
		\log\!\Biggl[\;
		\sum_{N=0}^{\infty}\frac{z^{N}}{N!}
		\sum_{\substack{X\mathrm{ partition}\\\mathrm{of }\, [N]}}
		\prod_{S\in X}\tilde{K}(S)
		\Biggr]
		=\sum_{N=1}^{\infty}\frac{z^{N}}{N!}\tilde{K}\bigl([N]\bigr).
		\]
		The identity above is an instance of the \emph{exponential (or connected-components) formula} for exponential generating functions: whenever a class of labeled structures is built as a disjoint union of ``components,’’ the exponential generating function (EGF) of all structures is the exponential of the EGF of the connected ones (see for instance \cite[Theorem~7.2.1]{Joyal1981}).
	\end{remark}

	\subsection{Graph and partitions notions}\label{sub:graphs and partitions}
	
	Recall that we denote by $\mc{G}_c(V)$ the set of collections of edges $E$ on $V$ such that $(V,E)$ is connected. We also denote by $\mc{T}_c(V)$ the set of collections of edges $E$ on $V$ such that $(V,E)$ is a tree.

	\begin{definition}[More partition notions]\label{def:part notions}
		\begin{enumerate}
			\item For every set $A$, we denote by $\mathbf{\Pi}(A)$ the set of partitions of $A$
			and  by $\mathbf{\Pi}_\sub(A)$ the set of subpartitions of $A$. Recall that $B\subset \mc{P}(A)$ is a subpartition of $A$ if there exists $C\subset A$ such that $B$ is a partition of $C$. 
			\item (Set of points) Let $X$ be a subpartition of $[N]$.  We denote
			\begin{equation}\label{def:VX}
				V_X=\bigcup_{S\in X}S.\end{equation}
			\item Let $X$ be a subpartition of $[N]$ and $i\in V_X$. We denote by $[i]^X$ the part $S\in X$ such that $i\in S$.
			\item (Set of edges between dipoles in distinct multipoles and in the same multipole). Let $X$ be a subpartition of $[N]$. We let
			\begin{equation}\label{def:E inter}
				\mc{E}^\inter(X)\coloneqq \bigcup_{S\neq S'\in X}\bigcup_{i\in S, j\in S'}\{ij\}.
			\end{equation}
			Moreover, we let  
			\begin{equation}\label{def:E intra}
				\mc{E}^\intra(X)\coloneqq \bigcup_{S\in X} \bigcup_{i\in S, j\in S:i\neq j}\{ij\}.
			\end{equation}
			\item (Restriction of a subpartition). Let $X$ be a subpartition of $[N]$. For each  $E\subset \mc{E}^\inter(X)$, we let $\Res(X,E)$ be the set of blocks $S\in X$ such that there exists a vertex in $S$ incident to an edge in $E$.
			\item (Coarsening of a partition). Let $X$ be a subpartition of $[N]$. Let $X_1,\ldots,X_n$ be disjoint subsets of $X$. We define the coarsening of $X$ along $X_1,\ldots,X_n$ to be 
			\begin{equation}\label{def:merge}
				\Coarse_X(X_1,\ldots,X_n)\coloneqq \left(\bigsqcup_{i=1}^n \{V_{X_i}\}\right)\sqcup \bigsqcup_{S\in X, S \notin X_1\cup \cdots \cup X_n}\{S\},
			\end{equation}
			Note that $\Coarse_X(X_1,\ldots,X_n)$ is a partition of $V_X$. Each group $X_i$ of blocks is replaced by a single block $\{V_{X_i}\}$ and every other block of $X$ is untouched, see Figure \ref{fig:Y} below.
	\end{enumerate}\end{definition}

	\begin{figure}[H]
		\centering
		\includegraphics[width=0.8\textwidth]{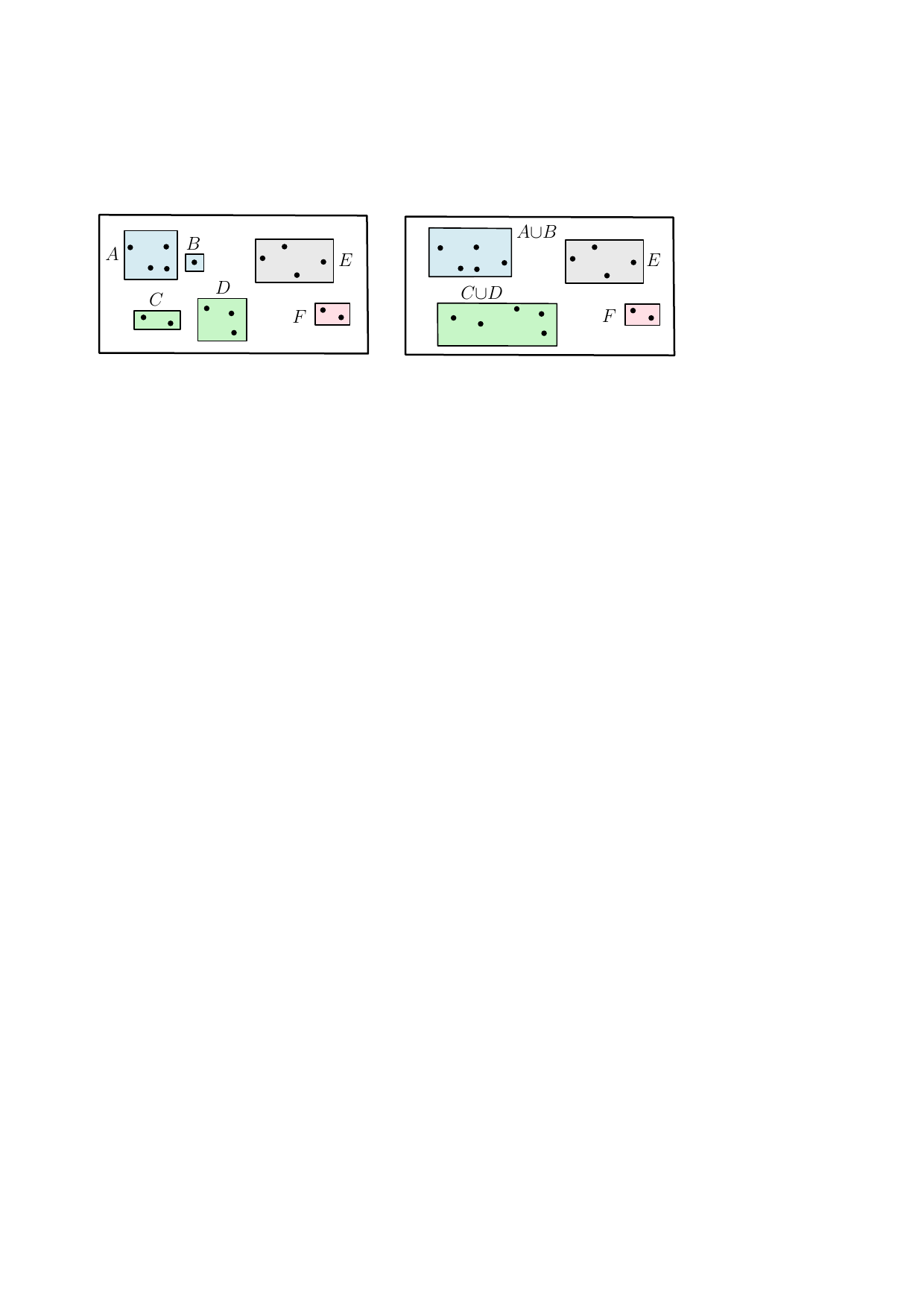} 
		\caption{Left: the partition $X$ with blocks $A,B,C,D,E$ and $F$. 
			Let $X_1=\{A,B\}$ and $X_2=\{C,D\}$. 
			Right: the coarsening $\Coarse_X(X_1,X_2)$ with blocks $A\cup B$, $C\cup D$, $E$, and $F$.}.
		\label{fig:Y}
	\end{figure}

	\begin{definition}[More graph notions]\label{def:quotient}
		Let $X$ be a subpartition of $[N]$.
		\begin{enumerate}
			\item (Quotient graph). Let $E\subset \mc{E}^\inter(X)$. We let $G=(V_X,E)/X $ be the graph given by $V(G)=X$ and
			\begin{equation*}
				SS'\in E(G)\quad \text{if there exists $i\in S$ and $j\in S'$ such that $ij\in E$}.
			\end{equation*}
			\item (Multigraph quotient) The \emph{multigraph quotient}
			$G^{\mathrm{multi}}$ is the undirected multigraph with vertex set
			$V(G^{\mathrm{multi}})=X$ and edge multiplicities
			\[
			m_E(S,S') \coloneqq  |\{ij\in E:\ i\in S,\ j\in S'\}|, \qquad S\neq S'\in X.
			\]
			\item ($X$-connected-components). Let $E\subset \mc{E}^\inter(X)$. We call $X$-components the connected components of the augmented graph 
			\begin{equation*}
				(V_X,E\cup \mc{E}^\intra(X)).
			\end{equation*}
			\item (Connected graph relative to $X$). We define $\mathsf{E}^X$ to be the set of $E\subset \mc{E}^\inter(X)$ with at least one edge and such that the augmented graph
			\begin{equation*}
				\Bigr(V_X,E\cup \mc{E}^\intra(X) \Bigr)
			\end{equation*}
			is connected. If $E\in\mathsf{E}^X$, we say that $(V_X,E)$ is connected relative to $X$. 
			\item (Tree relative to $X$). We define $\mathsf{T}^X$ to be the set of $E\in \mathsf{E}^X$ such that $G\coloneqq (V_X,E)/X$ is a tree and such that for every edge $SS'\in E(G)$, there exists a unique $i\in S$ and a unique $j\in S'$ such that $ij\in E$.
			\item Let $E\subset \mc{E}^\inter(X)$ and let $S\in X$. We denote 
			\begin{equation*}
				\deg_E(S)\coloneqq \sum_{i\in S}\deg_{(V_X,E)}(i).
			\end{equation*}
		\end{enumerate}
	\end{definition}

	\begin{figure}[H]
		\centering
		\fbox{\includegraphics[width=0.4\textwidth]{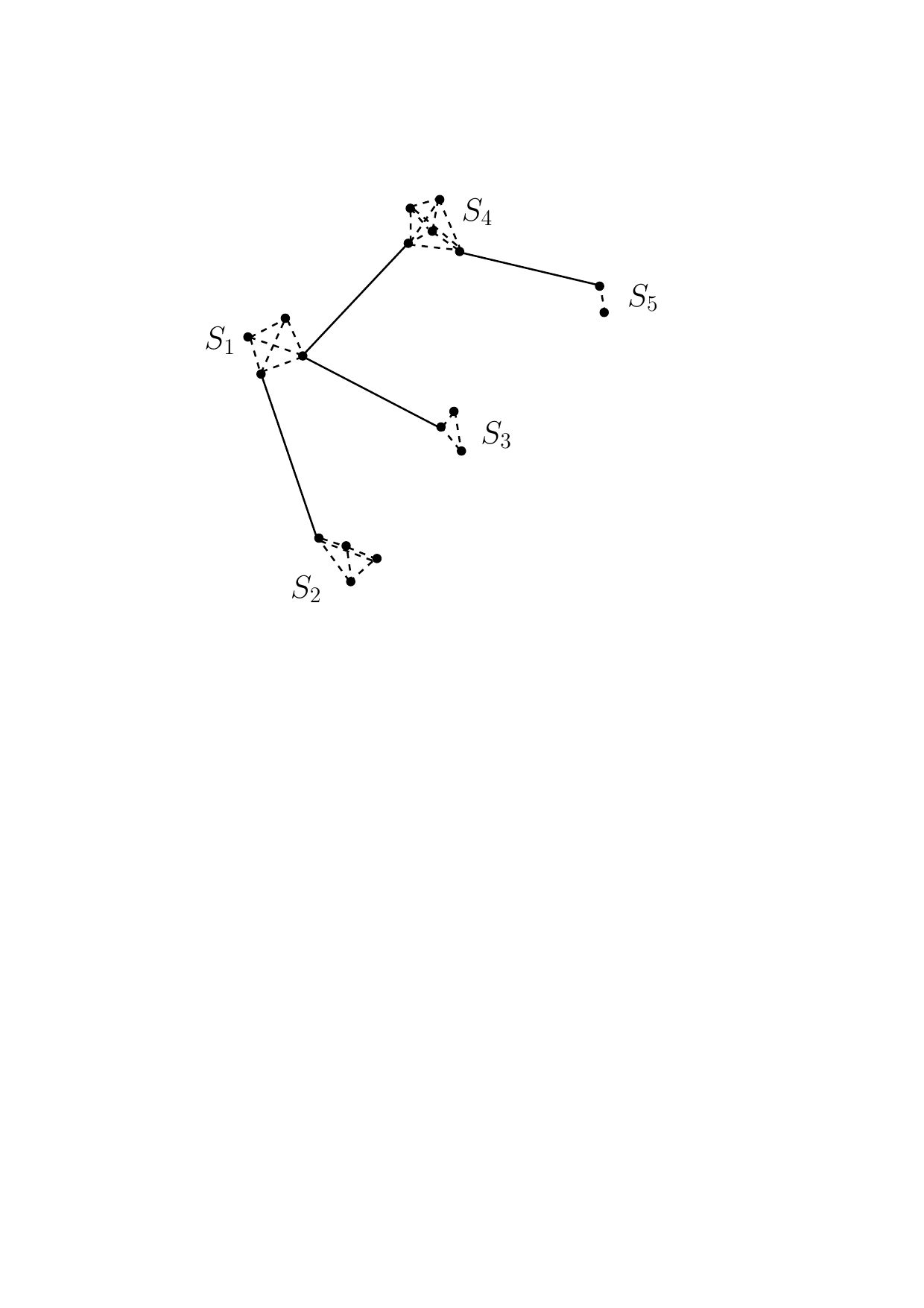}} 
		\caption{A tree relative to $X\coloneqq \{S_1,S_2,S_3,S_4,S_5\}$. Notice that $T\in \mathsf{T}^X$ implies that the quotient graph $(V_X,T)/X$ is a tree on $X$, but the converse is not true.}
	\end{figure}

	\subsection{Strategy for the lower bound}\label{sub:strat lower}
	Our aim is to provide a lower bound on the partition function by performing a cluster expansion. We start by bounding the partition function from below by that of a simplified model, the lower bound model obtained by the  energy upper bound of  Lemma~\ref{lemma:bsupF}.
	
	We wish to rewrite the partition function of the lower-bound model in terms of a cluster expansion series, as on the left-hand side of \eqref{eq:resum}. In order to obtain a formula for the logarithm as in \eqref{eq:resum}, we need the cluster expansion series, i.e.~the right-hand side of \eqref{eq:resum}, to be absolutely convergent. For that, we expand only the interactions that correspond to distinct multipoles which are therefore ``small enough''.

	Restricting interactions to only those within each multipole defines what we call a hierarchical model. We then compute the error between our lower-bound model and the hierarchical model by considering the ratio of their partition functions, and expanding the interactions between dipoles in distinct multipoles around $1$, 
	which amounts to performing a perturbative expansion around the hierarchical model.  This is done after restricting to the event on which the stable matching $\sigma_N$ is the identity,  the cardinality of each multipole is always smaller than $p(\beta)$, the number of multipoles of cardinality $k \leq p(\beta)$ is fixed, and every dipole has length smaller than $\varepsilon_0 \Cut$ where $\ve_0\in (0,1)$ is a small parameter. In Lemma \ref{lemma:start low}, we bound from below the integral of the Boltzmann-Gibbs weight on our event by the product of some combinatorial factors, the partition function of the hierarchical model, and a cluster expansion series.

	The goal of the rest of the section is then to expand the logarithm of the cluster expansion series appearing in Lemma \ref{lemma:start low}. We define a \emph{cluster} as a subpartition $X$ of $[N]$, and the \emph{size} of a cluster as the total number of points in the union of the elements of $X$.  First, we aim to control the activity of clusters of bounded size, which is addressed in Section \ref{sub:bounded}. Suppose that $\beta \in (\beta_p, \beta_{p+1}]$, where we recall that  $\beta_p = 4 - \frac{2}{p}$. We show that the behavior of the activity of a cluster of size $k$ undergoes a transition: if $k \leq p$, then typical clusters of size $k$ have  length scale $\lambda$, whereas if $k > p$, then typical clusters of size $k$ have  length scale $\Cut$.   We also control in Section \ref{sub:bounded} the error between the activities of the lower-bound model and the activities of the true model. The second main task of the section is to control the activity of large clusters in order to  prove that the cluster expansion series in Lemma~\ref{lemma:start low} is absolutely convergent, allowing us to apply the formula of Lemma~\ref{lemma:resum2} to its logarithm.

	The second task is the most delicate. As we will see in Section \ref{sub:pert}, the activity of a cluster $X$ is defined as a sum over certain sets of edges $E$ of an expectation under a certain measure of $\prod_{ij \in E} f_{ij}^{\tilde{v}}$, where the weights $f_{ij}^{\tilde{v}}$ are the Mayer bonds and with $\tilde v=(\tilde v_{ij})_{i<j}$ the interaction in the lower-bound model.
	The ingredients that will be used were alluded to in the introduction:
	\begin{itemize}
		\item {\it cancellation} by parity that allows to reduce to Eulerian graphs,
		\item {\it Penrose resummation} to rewrite an activity defined as a sum over connected graphs in terms of a sum over trees,
		\item a {\it peeling procedure} on Eulerian graphs to construct a spanning tree on which we can control the contributions of all edges.
	\end{itemize}

	With these ingredients, we complete the proof of the absolute convergence of the cluster expansion series in Section \ref{sub:abs lower}. In Section \ref{sub:proof low}, we give a lower bound on the partition function of the true model in terms of the infimum of a multipole free energy. This will complete the proof of the lower bound in \eqref{eq:expansion Z} and \eqref{eq:LD bounds} in Theorem \ref{theorem:LDP}.

	\subsection{Definitions: hierarchical model and  activities}\label{sub:multi}
	
	For any $V\subset [N]$, we let $\Pi_\mult^V$ be the partition of $V$ into multipoles (computed among points in $V$ only). When $V=[N]$, we let $\Pi_\mult\coloneqq \Pi_\mult^V$.
	Let $X$ be the value of $\Pi_\mult$. Recall that we say that $i$ belongs to a $2k$-pole if $|[i]^X|=k$ with $k\geq 2$ or to a pure dipole if $|[i]^X|=1.$

	\begin{definition}[Dipole measure]\label{def:dipole measure}
		Let $\ve_0\in (0,1)$ be a small fixed constant, the same as in Definition~\ref{def:vijtilde}.

		The dipole measure is  defined as  the probability measure over $(\R^2)^2$ with density
		\begin{equation}\label{def:muve}
			\dd \mu_{\beta,\lambda,\ve_0}=\frac{1}{C_{\beta,\lambda,\ve_0} N}e^{\beta \g_\lambda(x-y)}\indic_{x,y\in \Lambda}\indic_{|x-y|\leq \ve_0 \Cut}\dd x\dd y,
		\end{equation}
		where $C_{\beta,\lambda,\ve_0}$ is the normalization constant 
		\begin{equation}\label{def:Clambda}
			C_{\beta,\lambda,\ve_0}\coloneqq \frac{1}{N}\int_{\Lambda^2} e^{\beta \g_\lambda(x-y)}\indic_{|x-y|\leq \ve_0 \Cut}\dd x \dd y.
		\end{equation}
	\end{definition}
	
	\medskip

	\begin{lemma}\label{lem411}
		Let $\beta\in (2,\infty)$ and $\ve_0\in (0,1)$. There exists $C>0$ and $O$ depending only on $\beta$ such that 
		\begin{equation}\label{devtCbeta}
			C_{\beta,\lambda,\ve_0}= \lambda^{2-\beta} (\mathcal Z_\beta + O((\ve_0 \Cut)^{-2})) \quad \text{as} \ N\to \infty,
		\end{equation}
		where $\mathcal Z_\beta$ is as in \eqref{def:Zbeta}. 
		Moreover, given $\ve_0>0$, there exists $\lambda_0$  depending on $\ve_0$ and $\beta$ such that for $\lambda<\lambda_0$ the following holds: 
		\begin{equation}\label{eq:Clambda bound}
			C_{\beta,\lambda,\ve_0}\geq \frac{1}{2}\mathcal Z_\beta \lambda^{2-\beta}
		\end{equation}
		and 
		\begin{equation}\label{eq:comparisons Cve0}
			\frac{|C_{\beta,\lambda,\ve_0}- C_{\beta,\lambda}|}{C_{\beta,\lambda}}\leq C (\ve_0 \Cut)^{-2},
		\end{equation}
		where $C_{\beta, \lambda}$ is as in \eqref{defCbetalambda}.
	\end{lemma}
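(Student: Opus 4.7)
The plan is to compute $C_{\beta,\lambda,\ve_0}$ explicitly via a change of variables, extract the claimed asymptotic, and then read off the two corollaries. First I would apply the change of variables $u=y-x$ in the defining integral:
\begin{equation*}
C_{\beta,\lambda,\ve_0} = \frac{1}{N}\int_{|u|\le \ve_0\Cut} e^{\beta\g_\lambda(u)}\,|\Lambda\cap(\Lambda-u)|\,du.
\end{equation*}
Since $|\Lambda\cap(\Lambda-u)| = N + O(\sqrt N\,|u|)$ uniformly for $|u|\le \ve_0\Cut\ll \sqrt N$, and since $|u|e^{\beta\g_\lambda(u)}$ is $u$-integrable on $\{|u|\le \ve_0\Cut\}$ with an $N$-independent bound, the boundary correction contributes $O(N^{-1/2})$ and vanishes as $N\to\infty$. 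Hence the $N\to\infty$ limit reduces to $\int_{|u|\le\ve_0\Cut}e^{\beta\g_\lambda(u)}\,du$.

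Next I would use the scaling identity $\g_\lambda(u) = \g(\lambda) + \g_1(u/\lambda)$ (equivalently $e^{\beta\g_\lambda(u)} = \lambda^{-\beta}e^{\beta\g_1(u/\lambda)}$), so that after the substitution $v=u/\lambda$:
\begin{equation*}
\int_{|u|\le \ve_0\Cut}e^{\beta\g_\lambda(u)}\,du = \lambda^{2-\beta}\int_{|v|\le \ve_0\Cut/\lambda}e^{\beta\g_1(v)}\,dv.
\end{equation*}
Because $\g_1(v) = -\log|v|$ for $|v|\ge 2$ (from Newton's theorem, as noted after \eqref{eq:defgeta}), the tail integrand equals $|v|^{-\beta}$, and, setting $R := \ve_0\Cut/\lambda$, the tail from $R$ to $\infty$ integrates to $\tfrac{2\pi}{\beta-2}R^{2-\beta}$, which is finite since $\beta>2$. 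Thus
\begin{equation*}
\int_{|v|\le R}e^{\beta\g_1(v)}\,dv = \mc Z_\beta - \tfrac{2\pi}{\beta-2}R^{2-\beta},
\end{equation*}
giving $C_{\beta,\lambda,\ve_0} = \lambda^{2-\beta}\mc Z_\beta + O((\ve_0\Cut)^{2-\beta})$ in the limit. A short case check against Definition~\ref{def:Rlambda p0} yields $(\ve_0\Cut)^{2-\beta}\le C\,\lambda^{2-\beta}(\ve_0\Cut)^{-2}$: for $\beta\in(2,4)$ the identity $\Cut=\lambda^{-(\beta-2)/(4-\beta)}$ makes this an exact equality up to $\ve_0$-powers, and for $\beta\ge 4$ it holds with room since $\lambda\le 1$. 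This is \eqref{devtCbeta}, and \eqref{eq:Clambda bound} follows because $\Cut\to\infty$ as $\lambda\to 0$, so the parenthesised error drops below $\tfrac12\mc Z_\beta$ once $\lambda<\lambda_0$.

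For \eqref{eq:comparisons Cve0}, the same reduction applied to the complementary region yields
\begin{equation*}
C_{\beta,\lambda}-C_{\beta,\lambda,\ve_0} = \frac{1}{N}\int_{|u|>\ve_0\Cut}e^{\beta\g_\lambda(u)}\,|\Lambda\cap(\Lambda-u)|\,du,
\end{equation*}
and on this range $\ve_0\Cut\ge 2\lambda$ forces $\g_\lambda(u)=-\log|u|$, so the integrand is $|u|^{-\beta}$. Bounding $|\Lambda\cap(\Lambda-u)|\le N$ and integrating in polar coordinates gives $C_{\beta,\lambda}-C_{\beta,\lambda,\ve_0}\le \tfrac{2\pi}{\beta-2}(\ve_0\Cut)^{2-\beta}$ in the limit. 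Dividing by $C_{\beta,\lambda}\ge \tfrac12\mc Z_\beta\,\lambda^{2-\beta}$ (from \eqref{limCbl}) produces a relative error $(\lambda/(\ve_0\Cut))^{\beta-2}$, which, by the same case check, is at most $C(\ve_0\Cut)^{-2}$. The main technical point is the careful handling of the boundary correction $|\Lambda\cap(\Lambda-u)|-N$ under the $N\to\infty$ limit; beyond that, no nontrivial estimate is needed.
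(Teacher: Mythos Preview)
Your proof is correct and follows essentially the same approach as the paper: the same change of variables $u=y-x$ (the paper writes this as $C_{\beta,\lambda}=\int e^{\beta\g_\lambda(y)}V_N(y)\,dy$ with $V_N$ the normalized overlap), the same scaling $\g_\lambda(u)=\g(\lambda)+\g_1(u/\lambda)$, and the same case check showing $(\lambda/(\ve_0\Cut))^{\beta-2}\le C(\ve_0\Cut)^{-2}$ via the identity $R_{\beta,\lambda}^{4-\beta}=\lambda^{2-\beta}$. The only organizational difference is that the paper first bounds $|C_{\beta,\lambda,\ve_0}-C_{\beta,\lambda}|\le C(\ve_0\Cut)^{2-\beta}$ and then computes $\lim_N C_{\beta,\lambda}=\lambda^{2-\beta}\mc Z_\beta$ by dominated convergence (using only $0\le V_N\le 1$), whereas you compute the truncated integral directly and give an explicit $O(N^{-1/2})$ rate for the boundary correction; both routes are equivalent here.
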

	\begin{proof}First, it is straightforward to compute that 
		\begin{equation}\label{Cbe1}
			|C_{\beta,\lambda, \ve_0}-C_{\beta, \lambda}|\le C (\ve_0\Cut)^{2-\beta}\end{equation}
		for $C>0$ depending only on $\beta$.
		Next, observe that
		\begin{equation*}
			C_{\beta,\lambda}=\int_{\dR^2}e^{\beta \g_\lambda(y)} V_N(y)\dd y,
		\end{equation*}
		where 
		\begin{equation*}
			V_N(x)=\frac{1}{N}|\{y\in \Lambda:x+y\in \Lambda\}|,
		\end{equation*}
		where $|\cdot|$ here denotes the area. Since $0\leq V_N\leq 1$, we thus get by dominated convergence that 
		\begin{equation}\label{limCbeta}
			\lim_{N\to \infty}C_{\beta,\lambda}=\int_{\dR^2}e^{\beta \g_\lambda(x)} \dd x= \lambda^{2-\beta}\mc{Z}_\beta.
		\end{equation}
		It follows in view of \eqref{Cbe1} that, as $N \to \infty$, we have
		\begin{equation*}
			C_{\beta,\lambda,\ve_0} =\lambda^{2-\beta}\left(\mathcal Z_\beta+ O\left( \left(\frac{\lambda}{\ve_0\Cut}\right)^{\beta-2}\right)\right),
		\end{equation*}
		where $O$ depends only on $\beta$.
		Recall that, for $R>\lambda$, 
		\begin{equation*}
			\Bigr(\frac{\lambda}{R}\Bigr)^{\beta-2}\leq R^{-2}\quad\Longleftrightarrow \quad R\leq R_{\beta,\lambda}.
		\end{equation*}
		(Note that for $\beta\geq 4$, $R_{\beta,\lambda}=\infty$ so that the above is indeed true.) Thus, since $\ve_0 \Cut\leq R_{\beta,\lambda}$, we deduce \eqref{devtCbeta}, and \eqref{eq:Clambda bound} provided $\lambda$ is small enough is then a direct consequence. The relation \eqref{eq:comparisons Cve0} is then also a direct consequence of \eqref{devtCbeta}, \eqref{eq:Clambda bound} and \eqref{limCbeta}.
	\end{proof}

	We now introduce the multipole measure and partition function for our approximate model.

	\begin{definition}[Multipole measure] \label{def:multipolemeasure}
		Let $\ve_0\in (0,1)\cup \{\infty\}$. For any subpartition $X$ of $[N]$, define the probability measure on $(\R^2)^{2|V_X|}$ 
		\begin{equation}\label{def:P0 ve0}
			\dd \Psf_X^{0,\ve_0}= \frac{1}{\Msf^0_{\ve_0} (X)}  \prod_{S\in X}\left(\indic_{\mc{B}_S}\prod_{ i,j\in S:i<j}e^{-\beta v_{ij}}\indic_{\mc{A}_{ij}}\right)\prod_{i\in S} \dd \mu_{\beta,\lambda,\ve_0}(x_i, y_i),\end{equation}
		where we recall \eqref{eq:defBC} and \eqref{defAij}, and  where $\Msf^0_{\ve_0} (X)$ is the normalization constant
		\begin{equation}\label{def:M0 ve_0}
			\Msf^0_{\ve_0} (X)\coloneqq \prod_{S\in X}\dE_{(\mu_{\beta, \lambda, \ve_0})^{\otimes |S| }}\left[\indic_{\mc{B}_S}\prod_{ i,j\in S:i<j}e^{-\beta v_{ij}}\indic_{\mc{A}_{ij}}\right].
		\end{equation}

		Let $L$ be the positive constant in  \eqref{eq:b2}. For any subpartition $X$ of $[N]$,
		we let $\Psf_{X}^{-,\ve_0}$ be the probability measure
		\begin{equation}\label{def:PwL}
			\dd \Psf_{X}^{-,\ve_0}= \frac{1}{\Msf_{\ve_0}^-(X)}  \prod_{S\in X}\left(\indic_{\mc{B}_S}\prod_{ i,j\in S:i<j}e^{-\beta \tilde{v}_{ij}}\indic_{\mc{A}_{ij}}\right)\prod_{i\in V_X }e^{-L\frac{|x_i-y_i|^2}{(\ve_0\Cut)^2}}\prod_{i\in V_X}  \dd \mu_{\beta, \lambda, \ve_0}(x_i, y_i)\end{equation}
		where $\Msf_{\ve_0}^-(X)$ is the normalization constant
		\begin{equation}\label{def:M-}
			\Msf_{\ve_0}^-(X)\coloneqq \prod_{S\in X}\dE_{(\mu_{\beta, \lambda, \ve_0})^{\otimes |S| }}\left[\indic_{\mc{B}_S}\prod_{ i,j\in S:i<j}e^{-\beta \tilde{v}_{ij}}\indic_{\mc{A}_{ij}}\prod_{i\in S }e^{-L\frac{|x_i-y_i|^2}{(\ve_0\Cut)^2}} \right].
		\end{equation}
		With a slight abuse of notation, when $X=\{S\}$, we will denote $\Msf_{\ve_0}^{0}(S)=\Msf_{\ve_0}^{0}(\{S\})$ and $\Msf_{\ve_0}^{-}(S)=\Msf_{\ve_0}^{-}(\{S\})$.
	\end{definition}

	\begin{figure}[H]
		\centering
		\fbox{\includegraphics[width=0.3\textwidth]{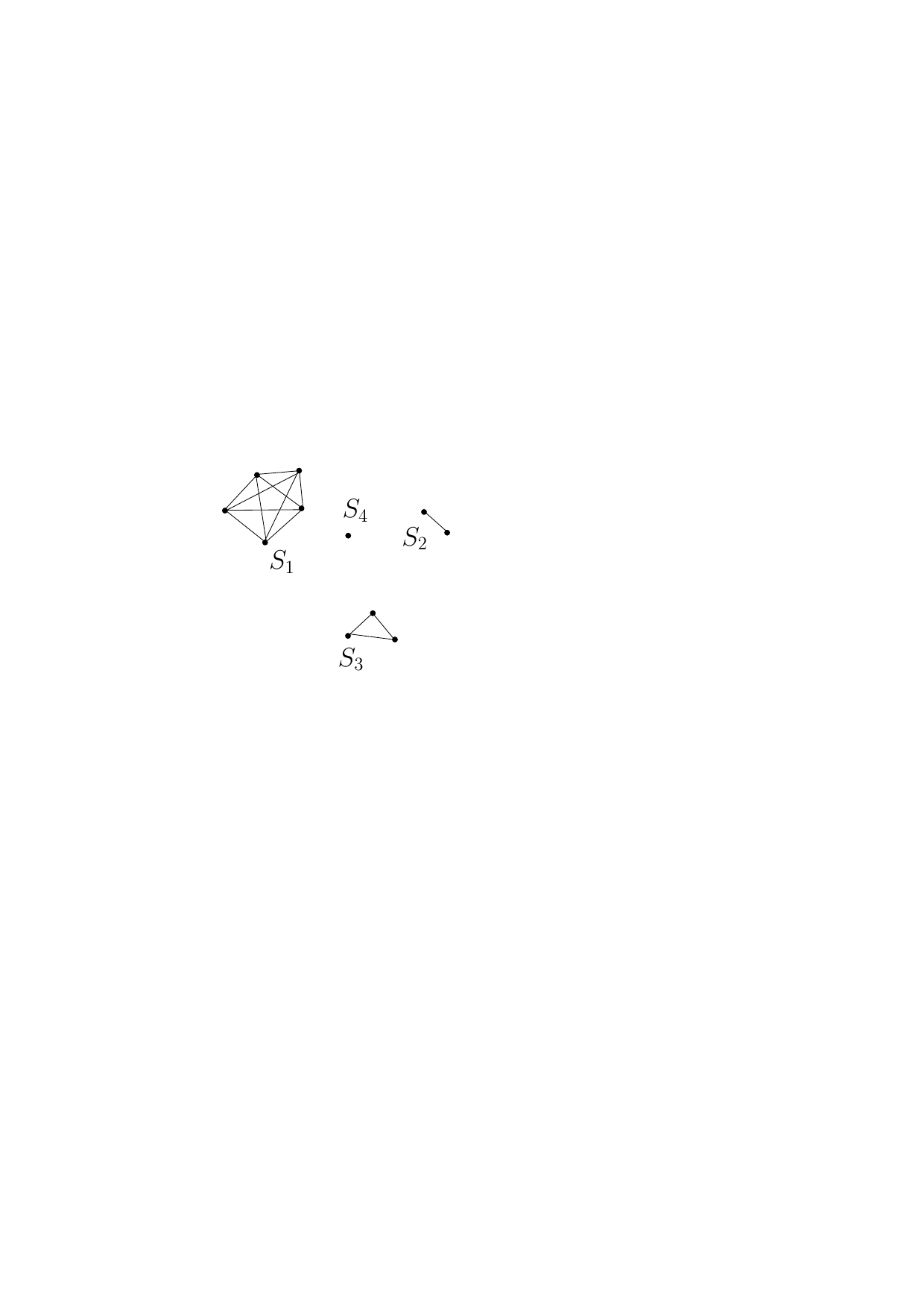}} 
		\caption{Hierarchical model. Four multipoles are featured: $X=\{S_1,S_2,S_3,S_4\}$. The hierarchical model is $P_X^- \coloneqq  P_{S_1}^- \otimes P_{S_2}^- \otimes P_{S_3}^- \otimes P_{S_4}^-$. Only the interactions within each multipole are taken into account.}
		\label{fig:hier}
	\end{figure}

	We introduce the corrections to the hierarchical multipole model approximation. First, recall the notion of Mayer bond.

	\begin{definition}[Mayer bond]\label{def:mayer} 
		Let $w=(w_{ij})$ be a collection of interactions. We let
		\begin{equation*}f_{ij}^{w}\coloneqq  e^{-\beta w_{ij}}\indic_{\mc{A}_{ij}}-1.
		\end{equation*}   
	\end{definition}

	\begin{definition}\label{def:activity lower}
		Let $\mathbf{\Pi}_\sub([N])$ be as in Definition \ref{def:part notions}. For every $X\in \mathbf{\Pi}_\sub([N])$, recall $\mc{E}^\inter(X)$ and $\Coarse_X(X_1,\ldots,X_n)$ from Definition \ref{def:part notions}, and $\mathsf{E}^X$ from Definition \ref{def:quotient}. Let $\ve_0\in (0,1)\cup\{\infty\}$.
		
		Let $\Ksf_{\ve_0}^0:\mathbf{\Pi}_\sub([N])\to \dR$ be defined for every $X\in \mathbf{\Pi}_\sub([N])$ by 
		\begin{multline}
			\Ksf_{\ve_0}^0(X)\coloneqq \sum_{n=0}^\infty \frac{1}{n!}\sum_{\substack{X_1,\ldots,X_n\subset X\\ \mathrm{disjoint} }} \sum_{E_1\in \mathsf{E}^{X_1}}\cdots\sum_{E_n\in \mathsf{E}^{X_n}} \sum_{
				F\in \mathsf{E}^{\Coarse_X(X_1,\ldots,X_n)}}\\ \dE_{\Psf_{X}^{0,\ve_0}}\left[\prod_{ij\in E_1\cup \cdots \cup E_n}f^{v}_{ij}\prod_{ij\in \mc{E}^\inter(X_1)\cup \cdots \cup \mc{E}^\inter(X_n)}\indic_{\mc{B}_{ij}^c}\prod_{ij\in F}(-\indic_{\mc{B}_{ij}})\right].
		\end{multline}
		Let $\Ksf_{\ve_0}^-:\mathbf{\Pi}_\sub([N])\to \dR$ be defined for every $X\in \mathbf{\Pi}_\sub([N])$ by 
		\begin{multline}\label{eq:activity lower}
			\Ksf_{\ve_0}^-(X)\coloneqq \sum_{n=0}^\infty \frac{1}{n!}\sum_{\substack{X_1,\ldots,X_n\subset X\\ \mathrm{disjoint} }} \sum_{E_1\in \mathsf{E}^{X_1}}\cdots\sum_{E_n\in \mathsf{E}^{X_n}} \sum_{
				F\in \mathsf{E}^{\Coarse_X(X_1,\ldots,X_n)}}\\ \dE_{\Psf_{X}^{-,\ve_0}}\left[\prod_{ij\in E_1\cup \cdots \cup E_n}f^{\tilde{v}}_{ij}\prod_{ij\in \mc{E}^\inter(X_1)\cup \cdots \cup \mc{E}^\inter(X_n)}\indic_{\mc{B}_{ij}^c}\prod_{ij\in F}(-\indic_{\mc{B}_{ij}})\right].
		\end{multline}
	\end{definition}
	
	\begin{remark}\label{remark:K0}
		Notice that since collections of edges $E$ in $\mathsf{E}^X$ contain at least one edge, for $|X|\in \{0,1\}$, we have $\Ksf_{\ve_0}^0(X)=\Ksf_{\ve_0}^-(X)=0$.  
	\end{remark}

	\begin{remark}
		By the M\"obius-inversion formula (see Lemma \ref{lemma:mobius}), one can obtain a more compact expression for $\Ksf_{\ve_0}^{-}(X)$. However, because the proof of the absolute convergence of the cluster-expansion series relies on the representation given in \eqref{eq:activity lower}, we keep that expression for now.
	\end{remark}

	\subsection{Perturbative expansion around a hierarchical model}\label{sub:pert}
	We now perform the series of reductions announced in Section \ref{sub:strat lower}.

	For each $k\geq 1$, let us denote by $\mc{N}_k$ the number of $2k$-poles as defined in Definition \ref{def:multipoles}, item (2). Let $n_1,\ldots,n_{p(\beta)}$ be such that $n_1+2n_2+\cdots+p(\beta)n_{p(\beta)}=N$ and such that 	
	\begin{equation}\label{eq:assnk}
		\forall k \in \{2,\ldots, p(\beta)\}, \quad	n_k\leq \ve_0^{-\alpha(\beta)}\lambda^{2(k-1)}N,
	\end{equation}
	where
	\begin{equation}\label{def:alphabeta}
		\alpha(\beta)\coloneqq \begin{cases}
			\frac{4-\beta}{2} & \text{if $\beta\in (2,4)$}\\
			\frac{1}{2} & \text{if $\beta\geq 4$}.
		\end{cases}
	\end{equation}
	Set
	\begin{equation}\label{defeventA}
		\mc{A}\coloneqq \{\mc{N}_1=n_1,\ldots,\mc{N}_{p(\beta)}=n_{p(\beta)}\}\bigcap_{i=1}^N \{|x_i-y_{\sigma_N(i)}|\leq \ve_0\Cut\}.  
	\end{equation}
	
	Our aim is to give a lower bound on $\int_{\mc{A}}e^{-\beta \F_\lambda}$ based on the energy upper bound \eqref{eq:b2}, restricting to events where there are no $2k$-poles for $k$ beyond $p(\beta)$, the number of $k$ poles for $k\le p(\beta)$ is well-controlled, and the size of the dipoles does not exceed $\ve_0 \Cut$. We start with the following cluster expansion form.
	
	\begin{lemma}\label{lemma:start low}
		Let $\beta\in (2,\infty)$ and $p(\beta)$ be as in  Definition \ref{def:pbeta}. Let $n_1,\ldots,n_{p(\beta)}$ be such that $n_1+2n_2+\cdots+p(\beta)n_{p(\beta)}=N$ and such that \eqref{eq:assnk} holds, and let $\mc{A}$ be as in \eqref{defeventA}.
		Let $\pi$ be a partition of $[N]$ such that for every $k=1,\ldots,p(\beta)$,
		\begin{equation}\label{eqSk}
			|\{S \in \pi:|S|=k\}|=\begin{cases} n_k & \text{if} \ k \le p(\beta)\\
				0 & \text{if} \ k\ge p(\beta)+1.\end{cases}
		\end{equation}
		Recalling  $\Msf_{\ve_0}^-$ from \eqref{def:M-} and $\Ksf_{\ve_0}^-$ from \eqref{eq:activity lower}, 		we have
		\begin{multline*}
			\int_{\mc{A}}\indic_{\sigma_N=\Id}e^{-\beta \F_\lambda(\XN,\YN)}\dd \XN \dd \YN\\ \geq \frac{N!}{1^{n_1}(2!)^{n_2}\cdots (p(\beta)!)^{n_{p(\beta)}}n_1!\cdots n_{p(\beta)}!}(C_{\beta, \lambda,\ve_0}N)^N\Msf_{\ve_0}^-(\pi)\sum_{n=0}^{\infty} \frac{1}{n!}\sum_{\substack{X_1,\ldots,X_n\in \mc{P}(\pi) \\ \mathrm{disjoint}}}\Ksf_{\ve_0}^-(X_1)\cdots \Ksf_{\ve_0}^-(X_n),
		\end{multline*}
		with the convention that for $n=0$, the sum over $X_1,\ldots,X_n\in \mc{P}(\pi)$ disjoint is 1.

	\end{lemma}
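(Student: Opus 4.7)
My plan is to combine the energy upper bound \eqref{eq:b2}, exchangeability in the particle labels, and a cluster expansion around the hierarchical model $\Psf_\pi^{-,\ve_0}$. Since $\F_\lambda$ and the event $\mc{A}$ are invariant under any joint relabeling of the $N$ pairs $(x_i,y_i)$, while $\{\sigma_N=\Id\}$ selects exactly one of the $\frac{N!}{\prod_k (k!)^{n_k} n_k!}$ labelings compatible with the multipole profile $(n_k)$ and with a given reference partition $\pi$, I would first write
\begin{equation*}
\int_{\mc{A}}\indic_{\sigma_N=\Id}e^{-\beta\F_\lambda}\,\dd\XN\,\dd\YN = \frac{N!}{\prod_k (k!)^{n_k} n_k!}\int \indic_{\sigma_N=\Id}\indic_{\Pi^\mult=\pi}\indic_{\forall i,\,r_i\le\ve_0\Cut}\,e^{-\beta\F_\lambda}\,\dd\XN\,\dd\YN.
\end{equation*}
On the event $\{\forall i,\,r_i\le\ve_0\Cut\}$, bound \eqref{eq:b2} gives $e^{-\beta\F_\lambda}\ge \prod_i e^{\beta\g_\lambda(x_i-y_i)}e^{-Lr_i^2/(\ve_0\Cut)^2}\prod_{i<j}e^{-\beta\tilde v_{ij}}$; converting Lebesgue measure to the dipole measure via $e^{\beta\g_\lambda(x_i-y_i)}\indic_{r_i\le\ve_0\Cut}\,\dd x_i\,\dd y_i = C_{\beta,\lambda,\ve_0}N\,\dd\mu_{\beta,\lambda,\ve_0}(x_i,y_i)$ (Definition \ref{def:dipole measure}) then produces the prefactor $(C_{\beta,\lambda,\ve_0}N)^N$ times an expectation under $\mu_{\beta,\lambda,\ve_0}^{\otimes N}$.

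Next I would separate intra- from inter-multipole contributions relative to $\pi$. Using Lemma \ref{lemma:stable} to write $\indic_{\sigma_N=\Id}=\prod_{i<j}\indic_{\mc{A}_{ij}}$ together with the identity $\indic_{\Pi^\mult=\pi}=\prod_{S\in\pi}\indic_{\mc{B}_S}\prod_{ij\in\mc{E}^\inter(\pi)}\indic_{\mc{B}_{ij}^c}$, I would split every product over pairs and the single-site weights $e^{-Lr_i^2/(\ve_0\Cut)^2}$ into intra-$\pi$ and inter-$\pi$ factors. The intra-$\pi$ block assembles by Definition \ref{def:multipolemeasure} into exactly $\Msf_{\ve_0}^-(\pi)\,\dd\Psf_\pi^{-,\ve_0}$, leaving the inter-$\pi$ expectation
\begin{equation*}
\dE_{\Psf_\pi^{-,\ve_0}}\!\left[\prod_{ij\in\mc{E}^\inter(\pi)}(1+f_{ij}^{\tilde v})(1-\indic_{\mc{B}_{ij}})\right]
\end{equation*}
to be expanded. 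Expanding the Mayer factor $(1+f_{ij}^{\tilde v})$ produces a subset $E\subset\mc{E}^\inter(\pi)$ of ``$f$-edges'' whose augmented graph $(V_\pi,E\cup\mc{E}^\intra(\pi))$ induces a refinement $Y(E)$ of $\pi$; keeping $\indic_{\mc{B}_{ij}^c}$ unexpanded for intra-$Y(E)$ inter-$\pi$ pairs and expanding $(1-\indic_{\mc{B}_{ij}})$ only for inter-$Y(E)$ pairs produces a second subset $F$ of merge-edges carrying weight $-\indic_{\mc{B}_{ij}}$.

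Because $\Psf_\pi^{-,\ve_0}$ is a product measure over the blocks of $\pi$, the expectation factorizes along the connected components $\tilde Z$ of the combined augmented graph $(V_\pi,E\cup F\cup\mc{E}^\intra(\pi))$. Within each such $\tilde Z$, identifying the non-singleton blocks of $Y(E)|_{\tilde Z}$ with the tuple $(X_1,\ldots,X_n)$ in Definition \ref{def:activity lower} and noting that $\Coarse_{\tilde Z}(X_1,\ldots,X_n)$ is precisely the partition through which $F$ must connect shows that the contribution of each cluster $\tilde Z$ with $|\tilde Z|\ge 2$ equals $\Ksf_{\ve_0}^-(\tilde Z)$; multipoles of $\pi$ outside every cluster contribute the trivial factor $1$, and singleton choices of outer $X_i$'s in $\mc{P}(\pi)$ are eliminated by the convention $\Ksf_{\ve_0}^-\equiv 0$ on $|X|\le 1$. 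Summing over disjoint cluster collections then yields the claimed series. I expect the main obstacle to lie in this last step: verifying that the asymmetric treatment of $\indic_{\mc{B}_{ij}^c}$ (kept unexpanded on intra-$Y(E)$ pairs, expanded as $1-\indic_{\mc{B}_{ij}}$ on inter-$Y(E)$ pairs) consistently reproduces the activity formula, and that the ordered-tuple sum with factor $\frac{1}{n!}$ in Definition \ref{def:activity lower} correctly collapses, after dividing by the orderings of the non-singleton $E$-components, to the unordered partition structure within each cluster.
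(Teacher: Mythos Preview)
Your proposal is correct and follows essentially the same route as the paper's proof: exchangeability for the combinatorial prefactor, the energy upper bound \eqref{eq:b2}, passage to the hierarchical measure $\Psf_\pi^{-,\ve_0}$, expansion of the Mayer bonds $(1+f_{ij}^{\tilde v})$ with resummation over the $E$-components $X_1,\dots,X_n$, then the asymmetric treatment of $\indic_{\mc{B}_{ij}^c}$ (kept on intra-$X_l$ inter-$\pi$ pairs, expanded as $1-\indic_{\mc{B}_{ij}}$ on $\mc{E}^\inter(\Coarse_\pi(X_1,\dots,X_n))$), and a final resummation over the connected components of $E\cup F$ relative to $\pi$. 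One terminological slip: the partition $Y(E)$ induced by the augmented graph $(V_\pi,E\cup\mc{E}^\intra(\pi))$ is a \emph{coarsening} of $\pi$, not a refinement; your subsequent use of ``intra-$Y(E)$'' shows you have the right picture.
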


	\begin{remark}
		Once the partition $\pi$ is fixed, we will often abuse notation and call multipole any part $S\in \pi$.
	\end{remark}

	\begin{proof}
		For simplicity, let us write $p$ for $p(\beta)$.
		
		{\bf{Step 1: fixing multipoles.}}Let $\pi$ be a partition of $[N]$ satisfying \eqref{eqSk}.
		Recall that the number of ways to partition $[N]$ into $m$ sets of cardinality $k_1,\ldots,k_{m}$ is given by
		\begin{equation}\label{eq:n partition}
			\frac{N!}{k_1!\cdots k_m! \prod_{i\geq 1} m_i !},
		\end{equation}
		where for every $i\geq 1$, $m_i$ stands for the number of sets of size $i$, $m=\sum_{i\ge 1} m_i$. The number of ways to choose the partition $\pi$ is therefore given by 
		\begin{equation}\label{eq:choicespi}
			\frac{N!}{1^{n_1}(2!)^{n_2}\cdots (p!)^{n_p}n_1!\cdots n_p!}.
		\end{equation}
		Therefore, recalling from the beginning of Section \ref{sub:multi} that 
		$\Pi_\mult$ is the partition of $[N]$ into multipoles (see Definition \ref{def:multipoles} for multipoles), we get
		\begin{multline}\label{eq:e1}
			\int_{\mc{A}\cap\{\sigma_N=\Id\}}e^{-\beta \F_\lambda}\\=  \frac{N!}{1^{n_1}(2!)^{n_2}\cdots (p!)^{n_p}n_1!\cdots n_p!}\int_{\Pi_\mult=\pi,\sigma_N=\Id}e^{-\beta \F_\lambda(\vec{X}_N,\vec{Y}_N)}\prod_{i=1}^N \indic_{|x_i-y_i|\leq \ve_0 \Cut} \dd \vec{X}_N \dd \vec{Y}_N.
		\end{multline}
		
		\noindent{\bf{Step 2: rewriting the event $\{\Pi_\mult=\pi,\sigma_N=\Id\}$ as an intersection of simpler events.}}

		By Lemma \ref{lemma:stable}, one can write
		\begin{equation}\label{intersaij}
			\{\sigma_N=\Id\}=\bigcap_{i<j}\mc{A}_{ij}, 
		\end{equation}
		where we recall from Definition \ref{def:Aij} that 
		$\mc{A}_{ij}$ is the event where $x_i$ is assigned to $y_i$ and $x_j$ to $y_j$ when performing the stable matching
		between $x_i,y_i$ and $x_j,y_j$ only.
		
		Moreover, recalling $\mc{B}_{ij}^c$ and $\mc{B}_S$  from \eqref{eq:defBij mc} and \eqref{eq:defBC mc},
		\begin{equation*}
			\{\sigma_N=\Id\}\cap\{\Pi_\mult=\pi\}=\{\sigma_N=\Id\}\cap \left\{ \bigcap_{S\in \pi}\mc{B}_S \right\} \cap \left\{ \bigcap_{S,S'\in \pi : S\neq S'}\bigcap_{i\in S, j\in S'}\mc{B}_{ij}^c \right\} .
		\end{equation*}
		Indeed every connected component of $\Pi_\mult$ is connected and there are no edges between distinct connected components. Therefore, in view of \eqref{intersaij},
		\begin{equation}\label{eq:event eq}
			\begin{split}
				\{\sigma_N=\Id\}\cap\{\Pi_\mult=\pi\}&=\left\{\bigcap_{S\in \pi}\Bigr(\mc{B}_S\bigcap_{i\neq j\in S}\mc{A}_{ij}\Bigr)\right\}\cap\left\{ \bigcap_{S\neq S'\in \pi}\bigcap_{i\in S, j\in S'}\mc{B}_{ij}^c\cap \mc{A}_{ij}\right\}.
			\end{split}
		\end{equation}
		
		\paragraph{\bf{Step 3: incorporating the energy upper bound}}
		Combining \eqref{eq:event eq} and \eqref{eq:b2}, and recalling the notation $\mc{E}^\inter(\pi)$ from Definition \ref{def:part notions}, we get
		\begin{align}\label{eq:e2}
			&\int_{\Pi_\mult=\pi,\sigma_N=\Id}e^{-\beta \F_\lambda(\vec{X}_N,\vec{Y}_N)}\prod_{i=1}^N \indic_{|x_i-y_i|\leq \ve_0 \Cut}\dd \vec{X}_N \dd \vec{Y}_N\\ & \hspace{1cm} \notag \geq \int \Bigr(\prod_{ij\in \mc{E}^\inter(\pi)} e^{-\beta \tilde{v}_{ij}}\indic_{\mc{A}_{ij}} \Bigr)\prod_{ij\in \mc{E}^\inter(\pi)}\indic_{\mc{B}_{ij}^c} \prod_{S\in \pi}\Bigr( \indic_{\mathcal B_S}\prod_{ i,j\in S:i<j} e^{-\beta \tilde{v}_{ij}}\indic_{\mc{A}_{ij}}\Bigr)\\ & \hspace{1cm}\notag \times \prod_{i=1}^N e^{\beta \g_\lambda(x_i-y_i)-L(\frac{|x_i-y_i|}{\ve_0\Cut})^2}\indic_{|x_i-y_i|\leq \ve_0 \Cut}\dd x_i \dd y_i.   
		\end{align}

		\paragraph{\bf{Step 4: expansion on interactions between dipoles in distinct multipoles}}
		As explained in the introduction, our starting point is to expand the interaction between dipoles which do not belong to the same multipole. Expanding the product of the $1+f_{ij}^{\tilde{v}}$ and recalling \eqref{def:E inter} and Definition \ref{def:mayer}, we have 
		\begin{equation*}
			\prod_{ij\in \mc{E}^\inter(\pi)}e^{-\beta \tilde{v}_{ij}}\indic_{\mc{A}_{ij}}=\prod_{ij\in \mc{E}^\inter(\pi)}(1+f_{ij}^{\tilde{v}})=\sum_{E\subset \mc{E}^\inter(\pi)}\prod_{ij\in E}f_{ij}^{\tilde{v}}.
		\end{equation*}
		Inserting this into \eqref{eq:e2} gives
		\begin{align}\label{defIw}
			&\int_{\Pi_\mult=\pi,\sigma_N=\Id}e^{-\beta \F_\lambda(\vec{X}_N,\vec{Y}_N)}\prod_{i=1}^N \indic_{|x_i-y_i|\leq \ve_0 \Cut}\dd \vec{X}_N \dd \vec{Y}_N\\ &\hspace{1cm} \notag \geq 
			\sum_{E\subset\mc{E}^\inter(\pi)}\int \prod_{ij\in \mc{E}^\inter(\pi)}\indic_{\mc{B}_{ij}^c} \prod_{ij\in E}f_{ij}^{\tilde{v}} \prod_{S\in \pi}\Bigr(\indic_{\mc{B}_S}\prod_{ i,j\in S:i<j}e^{-\beta \tilde{v}_{ij}}\indic_{\mc{A}_{ij}}\Bigr)\\ &\hspace{1cm}\notag\times \prod_{i=1}^N e^{\beta \g_\lambda(x_i-y_i)-L(\frac{|x_i-y_i|}{\ve_0\Cut})^2 }\indic_{|x_i-y_i|\leq \ve_0 \Cut}\dd x_i \dd y_i.
		\end{align}

		\paragraph{\bf{Step 5: summing according to connected components relative to $\pi$ of $([N],E)$}}   
		Next, we rewrite the above sum according to the connected components of the augmented graph 
		\begin{equation*}
			\Bigr([N],E\cup \mc{E}^\intra(\pi)\Bigr)
		\end{equation*}
		that contain at least two multipoles. 
		This yields 
		\begin{equation}\label{eq:res2}
			\sum_{E\subset \mc{E}^\inter(\pi)}\prod_{ij\in E}f_{ij}^{\tilde{v}} =\sum_{n=0}^\infty \frac{1}{n!}\sum_{\substack{X_1,\ldots,X_n\subset \pi\\ \mathrm{disjoint}}} \sum_{E_1\in \mathsf{E}^{X_1},\ldots,E_n\in \mathsf{E}^{X_n}}\prod_{l=1}^n \prod_{ij\in E_l}f_{ij}^{\tilde{v}}.
		\end{equation}
		Notice that for $n=0$, the product over $l\in [n]$ of the sum over $X_l\subset \pi$ is empty, hence the contribution equals $1$. Notice that if $|X_i|=1$, the sum over $E_i\in \mathsf{E}^{X_i}$ is zero. Hence, one can keep in mind that the sum in \eqref{eq:res2} can be restricted to $|X_i|\geq 2$.
		
		Thus,
		\begin{align}\label{eq:ws}
			&\int_{\Pi_\mult=\pi,\sigma_N=\Id}e^{-\beta \F_\lambda(\vec{X}_N,\vec{Y}_N)}\prod_{i=1}^N \indic_{|x_i-y_i|\leq \ve_0 \Cut}\dd \vec{X}_N \dd \vec{Y}_N  \geq \sum_{n=0}^\infty \frac{1}{n!}\sum_{\substack{X_1,\ldots,X_n\subset \pi\\ \mathrm{disjoint}}} \sum_{E_1\in \mathsf{E}^{X_1},\ldots,E_n\in \mathsf{E}^{X_n}}\\ & \hspace{1cm} \notag
			\int  \prod_{ij\in \mc{E}^\inter(\pi)}\indic_{\mc{B}_{ij}^c} \prod_{ij\in E_1\cup \cdots \cup E_n}f_{ij}^{\tilde{v}} \prod_{S\in \pi}\Bigr(\indic_{\mc{B}_S}\prod_{ i,j\in S:i<j}e^{-\beta \tilde{v}_{ij}}\indic_{\mc{A}_{ij}}\Bigr)\\ &\hspace{1cm}\notag \times \prod_{i=1}^N e^{\beta \g_\lambda(x_i-y_i)-L(\frac{|x_i-y_i|}{\ve_0\Cut})^2 }\indic_{|x_i-y_i|\leq \ve_0 \Cut}\dd x_i \dd y_i.
		\end{align}

		\paragraph{\bf{Step 6: expanding the weights $\indic_{\mc{B}_{ij}^c}$ over disjoint connected components}}
		
		Fix $X_1,\ldots,X_n\subset \pi$ that contain at least two multipoles and suppose that $X_1,\ldots,X_n$ are disjoint. In order to have a multiplicative activity, we need to expand the product of the $\indic_{\mc{B}_{ij}^c}$. To be able to control the products of the Mayer bonds $f_{ij}^{\tilde{v}}$ inside each connected component $X_l$, we keep the weights $\indic_{\mc{B}_{ij}^c}$ for every edge $ij\in \mc{E}^\inter(X_l)$ and expand the rest:
		\begin{equation*}
			\prod_{ij\in \mc{E}^\inter(\pi)}\indic_{\mc{B}_{ij}^c}=\left(\prod_{l=1}^n\prod_{ij\in \mc{E}^\inter(X_l)}\indic_{\mc{B}_{ij}^c}\right)\prod_{ij\in \mc{E}^\inter(\Coarse_\pi(X_1,\ldots,X_n)) }\indic_{\mc{B}_{ij}^c}.
		\end{equation*}
		We now expand the second product in the above display by writing $\indic_{\mc{B}_{ij}^c}=1-\indic_{\mc{B}_{ij}}$. This gives 
		\begin{equation*}
			\prod_{ij\in \mc{E}^\inter(\Coarse_\pi(X_1,\ldots,X_n))}\indic_{\mc{B}_{ij}^c}= \sum_{F\subset   \mc{E}^\inter(\Coarse_\pi(X_1,\ldots,X_n))}\prod_{ij\in F}(-\indic_{\mc{B}_{ij}}).
		\end{equation*}
		Inserting this into \eqref{eq:ws} gives
		\begin{align}\label{eq:wss}
			&	\int_{\Pi_\mult=\pi,\sigma_N=\Id}e^{-\beta \F_\lambda(\vec{X}_N,\vec{Y}_N)}\prod_{i=1}^N \indic_{|x_i-y_i|\leq \ve_0 \Cut}\dd \vec{X}_N \dd \vec{Y}_N   \\\notag & \qquad \geq \sum_{n=0}^\infty \frac{1}{n!}\sum_{\substack{X_1,\ldots,X_n\subset \pi\\   \mathrm{disjoint}}} \sum_{E_1\in \mathsf{E}^{X_1},\ldots,E_n\in \mathsf{E}^{X_n}}\sum_{F\subset \mc{E}^\inter(\Coarse_\pi(X_1,\ldots,X_n)) }I(E_1,\ldots,E_n,F),
		\end{align}
		where 
		\begin{multline*}
			I(E_1,\ldots,E_n,F)\coloneqq \int  \prod_{ij\in \cup_{l=1}^n\mc{E}^\inter(X_l)} \indic_{\mc{B}_{ij}^c}\prod_{ij\in F}(-\indic_{\mc{B}_{ij}}) \prod_{ij\in E_1\cup \cdots \cup E_n}f_{ij}^{\tilde{v}} \prod_{S\in \pi}\Bigr(\indic_{\mc{B}_S}\prod_{ i,j\in S:i<j}e^{-\beta \tilde{v}_{ij}}\indic_{\mc{A}_{ij}}\Bigr) \\  \times \prod_{i=1}^N e^{\beta \g_\lambda(x_i-y_i)-L(\frac{|x_i-y_i|}{\ve_0\Cut})^2 }\indic_{|x_i-y_i|\leq \ve_0 \Cut}\dd x_i \dd y_i.
		\end{multline*}

		\paragraph{\bf{Step 7: resumming over the connected components relative to $\pi$ of $([N],E\cup F)$}}
		Resumming \eqref{eq:wss} according to the connected components of the graph 
		\begin{equation*}
			\Bigr([N],E_1\cup \cdots \cup E_n\cup F\cup \mc{E}^\intra(\pi) \Bigr)
		\end{equation*}
		that contain at least two multipoles, we get
		\begin{align}\label{eq:proFF}
			&\int_{\Pi_\mult=\pi,\sigma_N=\Id}e^{-\beta \F_\lambda(\vec{X}_N,\vec{Y}_N)}\prod_{i=1}^N \indic_{|x_i-y_i|\leq \ve_0 \Cut}\dd \vec{X}_N \dd \vec{Y}_N\\ \notag
			&\hspace{1cm}\geq \sum_{m=0}^\infty \frac{1}{m!}\sum_{\substack{X_1',\ldots,X_m'\subset \pi\\ \mathrm{disjoint} }}\int F(X_1')\cdots F(X_m')\prod_{S\in \pi}\Bigr(\indic_{\mc{B}_S}\prod_{ i,j\in S:i<j}e^{-\beta \tilde{v}_{ij}}\indic_{\mc{A}_{ij}}\Bigr)\\\notag &\hspace{1cm}\times \prod_{i=1}^N e^{\beta \g_\lambda(x_i-y_i)-L(\frac{|x_i-y_i|}{\ve_0\Cut})^2 }\indic_{|x_i-y_i|\leq \ve_0 \Cut}\dd x_i \dd y_i,
		\end{align}
		where for every $X\subset \pi$,
		\begin{equation*}
			F(X)\coloneqq \sum_{n=0}^\infty \frac{1}{n!}\sum_{\substack{X_1,\ldots,X_n\subset X\\ \mathrm{disjoint}}}\sum_{E_1\in \mathsf{E}^{X_1}}\cdots \sum_{E_n\in \mathsf{E}^{X_n}}\sum_{F\in \mathsf{E}^{\Coarse_X(X_1,\ldots,X_n) } } \prod_{ij\in \cup_{l=1}^n E_l}f_{ij}^{\tilde{v}}\prod_{ij\in \cup_{l=1}^n\mc{E}^\inter(X_l)}\indic_{\mc{B}_{ij}^c}\prod_{ij\in F}(-\indic_{\mc{B}_{ij}}).
		\end{equation*}
		
		\paragraph{\bf{Step 8: conclusion}}
		Inserting the last display into \eqref{defIw}, dividing by the normalization constant $\Msf_{\ve_0}^-(\pi)$ introduced in Definition \ref{def:multipolemeasure}, and using independence over disjoint components to obtain the multiplicativity of $\Msf_{\ve_0}^-$ over disjoint connected components, we obtain from \eqref{eq:proFF}
		\begin{multline*}
			\frac{1}{(NC_{\beta,\lambda,\ve_0})^N\Msf_{\ve_0}^-(\pi)} \int_{\Pi_\mult=\pi,\sigma_N=\Id}e^{-\beta \F_\lambda(\vec{X}_N,\vec{Y}_N)}\prod_{i=1}^N \indic_{|x_i-y_i|\leq \ve_0 \Cut}\dd \vec{X}_N \dd \vec{Y}_N\\ \geq \sum_{n=0}^\infty \frac{1}{n!}\sum_{\substack{X_{1},\ldots,X_{n}\subset \pi \\ \mathrm{disjoint}}}\Ksf_{\ve_0}^-(X_1)\cdots \Ksf_{\ve_0}^-(X_n),
		\end{multline*}
		where $\Ksf_{\ve_0}^-$ is as in Definition \ref{def:activity lower}.  In view of \eqref{eq:e1}, we have proved the result.
		
	\end{proof}

	\subsection{Control of activities for bounded size clusters and limiting activities, statements}\label{sub:bounded}
	
	Let us introduce a truncated version of the dipole activity of Definition \ref{def:dipole activity} that will be used later in the proof.
	
	\begin{definition}[Truncated dipole activity]\label{def:dipole activity trunc}
		For all $\ve_0\in (0,1)\cup\{\infty\}$ and $V\subset [N]$, recalling \eqref{def:muve}, we let
		\begin{equation}\label{def:Kdip ve0}
			\Ksf_{\beta,\lambda,\ve_0}^\dip(V)\coloneqq \sum_{E\in \mc{G}_c(V)}\dE_{\mu_{\beta, \lambda, \ve_0}^{\otimes |V|}}\left[\prod_{ij\in E}f_{ij}^v\right].
		\end{equation}
	\end{definition}

	\begin{remark}
		Notice that, as in Remark \ref{remark:K0}, for $|V|\in \{0,1\}$, one has $\Ksf_{\beta,\lambda,\ve_0}^\dip(V)=0$.
	\end{remark}

	\begin{definition}\label{def:gamma beta}
		Let $\beta\in (2,+\infty)$. Recalling  \eqref{defpstar},
		we define 
		\begin{equation*}
			\gamma_{\beta,\lambda,k} =
			\begin{cases}
				\lambda^{2(k-1)} 
				& \text{if } k \leq p^*(\beta), \\[4pt]
				\Cut^{-2} 
				& \begin{aligned}
					&\text{if } k > p^*(\beta) \ \text{and} \ 
					\beta \in \big(\beta_{p^*(\beta)},\beta_{p^*(\beta)+1}\big) \\
					&\text{or } \beta = \beta_{p^*(\beta)+1} \ \text{and} \ k > p^*(\beta) + 1,
				\end{aligned} \\[8pt]
				\Cut^{-2} \lvert\log \lambda\rvert
				& \text{if } k = p^*(\beta) + 1 \ \text{and} \ 
				\beta = \beta_{p^*(\beta)+1},
			\end{cases}
		\end{equation*}
		with the convention that $\beta_\infty=4$.
	\end{definition}

	The following proposition is the substance of the multipole transition which occurs at $\beta_p=4-\frac{2}{p}$. Suppose $ \beta \in (\beta_p,\beta_{p+1}]$ and $k>p$. Then, clusters of cardinality $k$ lead to activities whose leading order is governed by the  long distances: the weight of the integral is carried by configurations of dipoles with a typical size $\Cut $ and at a distance of order $\Cut $. In contrast, for $k\le p$, the dominant contribution to the activity is given by dipoles of length scale $\lambda$ separated by a distance of order $\lambda$.

	\begin{prop}[Control on the activity of bounded size clusters]\label{prop:bounded lower}  
		Let $\beta\in (2,+\infty)$. Let $\gamma_{\beta,\lambda,n}$ be as in Definition \ref{def:gamma beta}. Let $M$ be the constant used in the definition of multipoles (Definition \ref{def:multipoles}). Let $S\subset [N]$ with $1<|S|\leq p^*(\beta)$, and let $X$ be a subpartition of $[N]$ with $|X|>1$. Recall the activities $\Msf_{\ve_0}^-$, $\Msf_{\ve_0}^0$, $\Ksf_{\ve_0}^-$ and $\Ksf_{\ve_0}^0$ from Definitions~\ref{def:multipolemeasure} and \ref{def:activity lower} and $\Ksf_{\beta,\lambda}^\dip$ from Definition \ref{def:dipole activity}.
		
		There exists $C>0$ depending only on $\beta$, $M$ and $|S|$ such that for $\Msf\in \{\Msf_{\ve_0}^-,\Msf^0_{\ve_0} \}$,
		\begin{equation}\label{eq:MX -low}
			\Msf(S)\geq \frac{\lambda^{2(|S|-1)} }{CN^{|S|-1}} ,
		\end{equation}
		\begin{equation}\label{eq:MX -up}
			\Msf(S)\leq  \frac{C\lambda^{2(|S|-1)}}{N^{|S|-1}} .
		\end{equation}
		Moreover, there exists $C>0$ depending only on  $\beta$, $M$, $\ve_0$ and $|V_X|$ such that for $\Ksf\in \{\Ksf_{\ve_0}^-,\Ksf_{\ve_0}^0\}$, we have
		\begin{equation}\label{eq:K- bound}
			|\Ksf(X)|\leq  C\frac{\gamma_{\beta,\lambda,|V_X|}}{\Msf^0_{\ve_0} (X)N^{|V_X|-1} }.
		\end{equation}
		If $V\subset [N]$ is such that $1<|V|\leq p^*(\beta)$, then  there exists $C>0$ depending only on  $\beta$, $M$ and $|V|$ such that 
		\begin{equation}\label{eq:Kdip bound}
			|\Ksf_{\beta,\lambda}^\dip(V)|\leq C\frac{\lambda^{2(|V|-1)} }{N^{|V|-1}}.
		\end{equation}
	\end{prop}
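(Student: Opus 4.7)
The plan is to prove the four bounds in turn, exploiting the hierarchical structure of multipoles and the Eulerian/parity cancellations in the inter-multipole Mayer series. Throughout, since the cluster size $|V_X|$ (respectively $|S|$, $|V|$) is fixed and bounded, combinatorial sums over graphs on $V_X$ contribute only constants depending on $|V_X|$, $\beta$, and $M$, so all work is in the analytic estimates on the integrands.

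For the two-sided estimate \eqref{eq:MX -low}--\eqref{eq:MX -up} on $\Msf(S)$, I would first unwind the normalization: the dipole measure contains the factor $1/(NC_{\beta,\lambda,\ve_0})$, and by Lemma \ref{lem411} we have $NC_{\beta,\lambda,\ve_0}\sim N\lambda^{2-\beta}\mc{Z}_\beta$. For the upper bound I would order the dipoles of $S$ by decreasing length $r_i$ and integrate successively: the matching constraint $\mc{A}_{ij}$ (Lemma \ref{lemma:Aij}) combined with the connectedness event $\mc{B}_S$ forces each subsequent dipole into a disc of radius $\lesssim M(r_S\vee\lambda)$ around the previous ones, so each added dipole contributes a volume factor $(r_S\vee\lambda)^2$, while $e^{-\beta\tilde v_{ij}}\mathbf{1}_{\mc{A}_{ij}}$ is controlled by a constant using Lemmas \ref{lemma:vij} and \ref{lemma:errortilde}. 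After the angular integration this reduces to a one-dimensional integral of the form $\int_\lambda^{\ve_0\Cut} \max(r,\lambda)^{2(|S|-1)+1-\beta}\,\dd r$, which evaluates to $\gamma_{\beta,\lambda,|S|}/N^{|S|-1}$, with the logarithmic correction occurring exactly at the critical exponent $\beta=\beta_{p^*(\beta)+1}$, $|S|=p^*(\beta)+1$. For the matching lower bound I would restrict to an explicit test set: anchor the dipole $1$ in a sub-square of $\Lambda$ of area $\gtrsim N$, place its partner so that $r_1$ lies in the appropriate scale window (either $[\lambda,2\lambda]$ or $[\Cut/2,\Cut]$ depending on $|S|$), and place each remaining dipole inside $B(x_1,CM\max(r_1,\lambda))$ with an orientation ensuring $\mc{A}_{ij}$; on this set $\tilde v_{ij}$ and the Gaussian damping $e^{-L(r_i/\ve_0\Cut)^2}$ are bounded by constants.

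For \eqref{eq:K- bound}, the bound on $|\Msf_{\ve_0}^-(X)/\Msf_{\ve_0}^0(X)|$ follows from the $e^{-\beta\tilde v_{ij}}$ vs $e^{-\beta v_{ij}}$ comparison via Lemma \ref{lemma:errortilde}, so the task reduces to bounding the unnormalized inner sum defining $\Ksf_{\ve_0}^-(X)$ by $\gamma_{\beta,\lambda,|V_X|}/N^{|V_X|-1}$. The dangerous edges are the inter-multipole Mayer bonds $f_{ij}^{\tilde v}$, whose naive bound $|v_{ij}|\lesssim r_ir_j/\max(d_{ij},\lambda)^2$ from \eqref{eqcorobv} is not spatially integrable in dimension two. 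I would follow the strategy outlined in Section \ref{sub:more}: split each Mayer bond into its odd and even parts in the angular variables of the dipole vectors $\vec{r}_i$; by a parity argument, after integrating the angles the odd-part product vanishes unless every vertex has an even number of incident odd bonds, reducing the relevant subgraph to an Eulerian one. On such a subgraph, extracting a minimally 2-edge-connected spanning subgraph and performing a strict ear decomposition, then opening each ear, converts the product of crude bounds into the quadratic factor $\prod_{i\in V_X} r_i^2$ after integrating over the inter-dipole displacements; the even parts are handled by a Penrose resummation onto a spanning tree, contributing a summable correction. Integrating the remaining radii against $e^{\beta\g_\lambda(r)}\mathbf{1}_{r\leq \ve_0\Cut}$ gives precisely $\gamma_{\beta,\lambda,|V_X|}/N^{|V_X|-1}$.

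The bound \eqref{eq:Kdip bound} on $\Ksf_{\beta,\lambda}^\dip(V)$ is the same Eulerian/ear-decomposition analysis applied to the pure dipole model, with no coarsening. Here the assumption $|V|\le p(\beta)\le p^*(\beta)$ is exactly what makes the resulting one-dimensional radial integral
\[
\int_0^{+\infty}\max(r,\lambda)^{(4-\beta)(|V|-1)}\max(r,\lambda)^{1-\beta}\,\dd r
\]
convergent at infinity: the condition is $(4-\beta)(|V|-1)+2-\beta<0$, i.e.\ $\beta>\beta_{|V|}$, the $|V|$-th multipole threshold. Under this condition the integral is saturated at the lower end $r\sim\lambda$ and equals $\lambda^{2(|V|-1)}$ up to a constant. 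The main obstacle I anticipate lies in the careful bookkeeping of the Eulerian reduction coupled with the ear-decomposition: one must verify that after integrating the pairwise displacements the crude bound \eqref{eqcorobv} consistently collapses to the advertised quadratic factor along every connected component, including in the regime $d_{ij}\le \max(r_i,r_j)$ where the naive bound is sharp only through the $\max$ in its denominator; and the even-part Penrose tree sum must dovetail with the odd-part Eulerian mechanism without double counting, which is the technical heart of the argument.
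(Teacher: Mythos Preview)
Your approach matches the paper's. The $\Msf(S)$ bounds reduce, via a spanning-tree expansion of $\mc{B}_S$ and an explicit test configuration, to a one-dimensional radial integral (this is Lemma~\ref{lemma:lowerM+}); the $\Ksf$ bounds use the parity reduction to Eulerian subgraphs (Corollary~\ref{coro:reduction Euler}), the ear-decomposition quadratic estimate (Corollary~\ref{coro:prod a}), and the integral bound of Lemma~\ref{lemma:intergral small -}; and $\Ksf_{\beta,\lambda}^\dip$ is the pure-dipole instance (Lemma~\ref{lemma:int dipole}). For bounded $|V_X|$ the paper in fact bypasses Penrose resummation---the crude bound $|\mathsf{E}^X|\le 2^{\binom{k}{2}}$ already absorbs the graph sum into a $|V_X|$-dependent constant---but invoking it is harmless.

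One slip in your $\Msf(S)$ upper bound would spoil sharpness. The multipole constraint is $d_{ij}\le M\max(\min(r_i,r_j),\lambda)$, so along a spanning tree of $\mc{B}_S$ the positioning volume at each edge is governed by the \emph{smaller} of the two dipole lengths, not by $r_S$. With your $(r_S\vee\lambda)^2$ the integration over the non-maximal $r_i$'s decouples (each contributing $\lambda^{2-\beta}$ and cancelling one normalization factor), leaving your radial exponent $2(|S|-1)+1-\beta$; for $|S|=2$ and $\beta\in(3,4)$ this is saturated at $r\sim\Cut$ and yields only $O(1/N)$ instead of the sharp $\lambda^2/N$. The fix is to keep $\max(\min(r_i,r_j),\lambda)^2$ per tree edge, apply the ordering inequality (Lemma~\ref{lemma:ordering tree}) to convert the product to $\prod_{i\neq j_0}\max(r_i,\lambda)^2$, and integrate the non-maximal $r_i$ first; this couples positioning to each $r_i$ and produces the correct exponent $(4-\beta)|S|-3$, which is exactly what you write in the $\Ksf_{\beta,\lambda}^\dip$ paragraph.
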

	\medskip

	\begin{prop}[Control of activity errors]\label{prop:expansion -}
		Let $\beta\in (2,+\infty)$. Let  $p^*(\beta)$ be as in \eqref{defpstar} and $M$  as in Definition \ref{def:multipoles}. Let $S\subset [N]$ be such that $1<|S|\leq p^*(\beta)$. Let $X$ be a subpartition of $[N]$ such that $|V_X|\leq p^*(\beta)$ and $|X|>1$.
		
		There exists $C>0$ depending on $\beta$, $M$, $\ve_0$ and $|S|$ such that 
		\begin{equation}\label{eq:M-diff}
			|\Msf_{\ve_0}^-(S)-\Msf^0_{\infty} (S)|\leq \frac{C}{N^{|S|-1}}\Cut^{-2}.
		\end{equation} 
		Moreover, there exists $C>0$ depending on $\beta$, $M$, $\ve_0$ and $|S|$ such that 
		\begin{equation}\label{eq:Kdip diff}
			|\Ksf_{\beta,\lambda,\ve_0}^\dip(S)-\Ksf_{\beta,\lambda}^\dip(S)|\leq  \frac{C}{N^{|S|-1}}\Cut^{-2}.
		\end{equation}
		
		Besides, there exists $C>0$ depending on $\beta$, $M$, $\ve_0$ and $|V_X|$ such that 
		\begin{equation}\label{eq:K-diff}
			|\Ksf_{\ve_0}^-(X)-\Ksf_{\infty}^0(X)|\leq \frac{C}{\Msf_{\infty}^0(X)N^{|V_X|-1}}\Cut^{-2}
		\end{equation}
		and 
		\begin{equation}\label{eq:K-diff0}
			|\Ksf_{\ve_0}^0(X)-\Ksf_{\infty}^0(X)|\leq \frac{C}{\Msf_{\infty}^0(X)N^{|V_X|-1}}\Cut^{-2}.
		\end{equation}

	\end{prop}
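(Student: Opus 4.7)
The strategy is to track three independent sources of discrepancy between the untruncated $0$-model (measure $\mu_{\beta,\lambda}$, energy $v_{ij}$) and the truncated $-$-model (measure $\mu_{\beta,\lambda,\ve_0}$, energy $\tilde v_{ij}$, Gaussian weight $e^{-L(r/\ve_0\Cut)^2}$): (a) the dipole-size truncation $\indic_{|x-y|\leq \ve_0\Cut}$; (b) the pointwise discrepancy $|\tilde v_{ij}-v_{ij}|$; and (c) the deviation of $e^{-L(r/\ve_0\Cut)^2}$ from $1$. Each produces a relative error of order $\Cut^{-2}$ under the multipole measure, and the four claimed bounds will follow by multiplying by the activity scales from Proposition~\ref{prop:bounded lower}.

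For (a), I will first use the direct estimate
\begin{equation*}
\int_{|z|\geq \ve_0\Cut}e^{\beta \g_\lambda(z)}\,\dd z\;\leq\; C(\ve_0\Cut)^{2-\beta},
\end{equation*}
which, combined with $C_{\beta,\lambda}\sim \lambda^{2-\beta}\mc{Z}_\beta$ from Lemma~\ref{lem411} and the defining equivalence \eqref{echelle optimale}, gives a relative tail mass $(\lambda/\ve_0\Cut)^{\beta-2}=O(\Cut^{-2})$ for $\mu_{\beta,\lambda}$; together with \eqref{eq:comparisons Cve0}, this yields \eqref{eq:Kdip diff} directly from Definition~\ref{def:dipole activity trunc} and the uniform bound \eqref{eq:Kdip bound}. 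For (b), on the multipole event $\mc{B}_S$ Definition~\ref{def:multipoles} forces $d_{ij}\leq M\max(r_i,r_j,\lambda)\leq CM\ve_0\Cut$, so Lemma~\ref{lemma:errortilde} with $\tau=8\ve_0\Cut$ yields the pointwise bound $|\tilde v_{ij}-v_{ij}|\leq Cr_ir_j/(\ve_0\Cut)^2$. For (c), $|1-e^{-L(r_i/\ve_0\Cut)^2}|\leq L(r_i/\ve_0\Cut)^2$ is immediate.

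To prove \eqref{eq:M-diff} I will interpolate $\Msf^-_{\ve_0}(S)\to \Msf^0_{\ve_0}(S)\to \Msf^0_\infty(S)$: the second step is controlled by (a) and yields an error of order $\Cut^{-2}\Msf^0_\infty(S)\leq C\Cut^{-2}/N^{|S|-1}$ via \eqref{eq:MX -up}. For the first step, telescoping the product over pairs bounds the integrand difference pointwise by $C(\ve_0\Cut)^{-2}\bigl(\sum_{i<j}r_ir_j+\sum_i r_i^2\bigr)$ times a function of the same scaling as the $0$-model integrand; the extra factor $r_ir_j/\Cut^2$ inside the multipole integral raises the scaling obtained in the proof of \eqref{eq:MX -up} by at most $\Cut^{-2}$, yielding the target bound. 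The bounds \eqref{eq:K-diff} and \eqref{eq:K-diff0} follow the same template applied term-by-term to the defining series of Definition~\ref{def:activity lower}: each summand is an expectation under $\Psf^{-,\ve_0}_X$ or $\Psf^{0,\ve_0}_X$ of a product of uniformly bounded factors (Mayer bonds $f_{ij}^{\tilde v}$, $f_{ij}^v$ together with indicators $\indic_{\mc{B}_{ij}^c},\indic_{\mc{B}_{ij}}$), and is bounded exactly as in the proof of \eqref{eq:K- bound} but with an additional $\Cut^{-2}$ factor from (a)-(c). The denominator $1/\Msf^0_\infty(X)$ on the right-hand side arises from dividing by the normalization of $\Psf^{-,\ve_0}_X$, which is comparable to $\Msf^0_\infty(X)$ via \eqref{eq:M-diff} and \eqref{eq:MX -low}.

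The main technical obstacle, I expect, lies in the last step: one must check that inserting the perturbative factor $r_ir_j/\Cut^2$ on a single edge does not disrupt the peeling/Eulerian-cancellation mechanism underlying Proposition~\ref{prop:bounded lower}. I plan to address this by marking the perturbed edge as a carrier and running the unperturbed bound on the remaining graph structure, at the cost of an $O(|V_X|^2)$ combinatorial factor absorbed in the constant $C$.
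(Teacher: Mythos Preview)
Your strategy matches the paper's: decompose into the three error sources (a) truncation tail, (b) $|\tilde v_{ij}-v_{ij}|$, (c) Gaussian weight, telescope, and bound each. The paper packages (b)--(c) into the appendix Lemmas~\ref{lemma:general exp M} and~\ref{lemma:general exp K}, and treats (a) via Lemma~\ref{lemma:I3}; the overall architecture is identical to yours.

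One imprecision worth flagging: the phrase ``raises the scaling\ldots by at most $\Cut^{-2}$'' is misleading. For $|S|\le p^*(\beta)$ the unperturbed multipole integral concentrates at $r\sim\lambda$ and yields $N\lambda^{(2-\beta)|S|+2(|S|-1)}$; but inserting an extra $r_{i_0}r_{j_0}$ pushes the radial integrand to $\max(r,\lambda)^{(4-\beta)|S|-1}$, whose exponent may now exceed $-1$, so the $r$-integrals spread out to $r\sim\ve_0\Cut$. The perturbed integral then evaluates to $N\lambda^{(2-\beta)|S|}\Cut^{-2}$, which after normalizing by $(NC_{\beta,\lambda})^{|S|}$ gives the target $\Cut^{-2}/N^{|S|-1}$ --- but this is \emph{not} the original bound $\lambda^{2(|S|-1)}/N^{|S|-1}$ multiplied by $\Cut^{-2}$ (it is larger by $\lambda^{-2(|S|-1)}$). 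You must redo the radial integration with the extra factor, not just multiply through.

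For the $\Ksf$ bounds, your ``carrier edge'' plan needs sharpening in two places. First, split $f_{ij}^w=a_{ij}^w+b_{ij}^w$ \emph{before} telescoping, so that the perturbation $a_{ij}^{\tilde v}-a_{ij}^v$ remains odd in each $\vec r_i$; this is what preserves the reduction to Eulerian graphs (Lemma~\ref{lemma:cancellation odd}) after the substitution. Second, when the perturbed edge $i_0j_0$ lands inside a minimally $2$-edge-connected block, simply deleting it breaks $2$-edge-connectivity and Corollary~\ref{coro:prod a} no longer applies to the remainder. The fix (Case~D of Lemma~\ref{lemma:general exp K}) is to choose the strict ear decomposition so that the base cycle contains $i_0j_0$ and open the base cycle \emph{at} $i_0j_0$: the factor $r_{i_0}r_{j_0}/\Cut^2$ then replaces the usual discarded-edge weight $g_{i_0j_0}\le 1$, and the quadratic estimate of Corollary~\ref{coro:prod a} goes through with an extra $r_{\hat S}^2/\Cut^2$, delivering the desired $\Cut^{-2}$ upon integration.
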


	Proposition \ref{prop:bounded lower} and  \ref{prop:expansion -}  are proved in Sections \ref{sub:bounded lower proof} and \ref{sub:exp -}.

	\begin{remark}[On activity sizes and assumption \ref{eq:assnk}]
		Let $n_1,\ldots,n_{p(\beta)}$ be such that $n_1+2n_2+\cdots+p(\beta)n_{p(\beta)}> p(\beta)+1$ and set $k_0\coloneqq n_1+\cdots +n_{p(\beta)}$. Suppose that $X$ has $n_i$ multipoles of cardinality $i$ for every $i\in [p(\beta)]$. Then as indicated by \eqref{eq:K- bound} and \eqref{eq:MX -low}, 
		\begin{equation*}
			|\Ksf_{\ve_0}^-(X)| \approx \frac{\Cut^{-2} }{N^{k_0-1}} \prod_{k=1}^{p(\beta)} \frac{1}{\lambda^{2(k-1)n_k}}.\end{equation*}    
		If we do not impose \eqref{eq:assnk}, then the number of choices of $X$ with $\#_i X=n_i$ for all $i$ is at most $\binom{|\pi|}{k_0}\leq N^{k_0}/k_0!$, hence
		\begin{equation*}
			\sum_{\substack{X\in \mc{P}(\pi):\\ \forall i, \#_i X=n_i}}|\Ksf_{\ve_0}^-(X)|\lesssim  N \frac{\Cut^{-2}}{\prod_{k=1}^{p(\beta)}\lambda^{2(k-1)n_k}}.
		\end{equation*}
		This is too large to conclude, since in Theorems \ref{theorem:expansion} and \ref{theorem:LDP} the contribution of clusters strictly larger than $p(\beta)$ must be $O(N\delta_{\beta,\lambda})$. 
		This explains the necessity of assumption \eqref{eq:assnk}.
	\end{remark}

	We now study limiting activities. Let us define, as in \eqref{def:little m}, the candidate for the limit of $\Msf_{\beta,\lambda}(S)$ for $|S|\leq p^*(\beta)$. Recall from \eqref{defM} that 
	\begin{equation*}
		\Msf_{\beta,\lambda}(S)=\dE_{\mu_{\beta,\lambda}^{\otimes |S|}}\left[\indic_{\mc{B}_S}\prod_{i,j\in S:i<j}e^{-\beta v_{ij}}\indic_{\mc{A}_{ij}}\right].
	\end{equation*}
	There exists a function $G_{S}:(\dR^{2})^{2|S|-1}\to \dR $ such that 
	\begin{equation}\label{eq:introduceGS}
		\indic_{\mc{B}_S}\prod_{i,j\in S:i<j}e^{-\beta v_{ij}}\indic_{\mc{A}_{ij}}\prod_{i\in S}e^{\beta \g_\lambda(x_i-y_i)}=  G_{S}(x_2-x_1,\ldots,x_{|S|}-x_1,y_1-x_1,\ldots,y_{|S|}-x_1).
	\end{equation}
	We thus set 
	\begin{equation}\label{eq:msf exp}
		\msf_{\beta,\lambda}(|S|)\coloneqq \frac{\displaystyle{\int_{(\dR^{2})^{(2|S|-1)}}G_{S}  }}{\displaystyle{\left(\int_{\dR^2 } e^{\beta \g_\lambda(x)}\dd x \right)^{|S|}}}.
	\end{equation}
	
	Let $E\in \mc{G}_c(S)$. Observe that there exists a map $G'_{S,E}:(\dR^2)^{2|S|-1 }\to \dR$ such that 
	\begin{equation}\label{eq:GS'}
		\prod_{ij\in E}f_{ij}^v \prod_{i\in S}e^{\beta \g_\lambda(x_i-y_i)}=G'_{S,E}(x_2-x_1,\ldots,x_{|S|}-x_1,y_1-x_1,\ldots,y_{|S|}-x_1).
	\end{equation}
	Therefore, we set
	\begin{equation}\label{eq:ksf exp again}
		\ksf_{\beta,\lambda}^\dip(|S|)\coloneqq  \frac{\displaystyle{\sum_{E\in \mc{G}_c(S)}\int_{(\dR^{2})^{(2|S|-1)}}G'_{S,E}  }}{\displaystyle{\left(\int_{\dR^2} e^{\beta \g_\lambda(x)}\dd x \right)^{|S|}}}.
	\end{equation}
	
	Let $X$ be a subpartition of $[N]$ with $n_i$ parts of cardinality $i$ for every $i\in \{1,\ldots,p(\beta)\}$. Let $E\in \mathsf{E}^X$. Again, there exists a map $G''_{X,E}:(\dR^2)^{2|V_X|-1 }\to \dR$ such that 
	\begin{multline}\label{eq:GX''}
		\prod_{ij\in E}f_{ij}^v\prod_{S\in X}\indic_{\mc{B}_S}\prod_{S\in X}\prod_{i,j\in S:i<j}e^{-\beta v_{ij}}\indic_{\mc{A}_{ij}}  \prod_{i\in V_X}e^{\beta \g_\lambda(x_i-y_i)}\\=G''_{X,E}(x_2-x_1,\ldots,x_{|V_X|}-x_1,y_1-x_1,\ldots,y_{|V_X|}-x_1).
	\end{multline}
	Hence, whenever $|V_X|\leq p^*(\beta)$, we set 
	\begin{equation}\label{eq:ksf mult again}
		\ksf_{\beta,\lambda}^\mult(n_1,\ldots,n_{p(\beta)})= \frac{\displaystyle{\sum_{E\in \mathsf{E}^X}\int_{(\dR^{2})^{(2|V_X|-1)}}G''_{X,E}  }}{\displaystyle{\prod_{S\in X}\int_{(\dR^{2})^{(2|S|-1)}}G_{S} } }.
	\end{equation}

	In the following lemma, we assert that the quantities $\ksf_{\beta,\lambda}^\dip$, $\msf_{\beta,\lambda}$ and $\ksf_{\beta,\lambda}^{\mult}$ are well defined and are indeed the limits of the normalized activities.

	\begin{lemma}[Limiting activities]\label{lemma:limiting}
		Let $\beta\in (2,+\infty)$ and let $p^*(\beta)$ be as in \eqref{defpstar}.

		Let $S\subset [N]$ with $|S|\leq p^*(\beta)$. The quantity \eqref{eq:msf exp} is well defined. Moreover,
		\begin{equation}\label{eq:proof little kdip}
			\lim_{N\to\infty}N^{|S|-1}\Msf_{\beta,\lambda} (S)=\msf_{\beta,\lambda}(|S|).
		\end{equation}
		The quantity \eqref{eq:ksf exp again} is well defined. Moreover,
		\begin{equation}\label{eq:proof little mdip}
			\lim_{N\to\infty}N^{|S|-1}\Ksf^\dip_{\beta,\lambda} (S)=\ksf^\dip_{\beta,\lambda}(|S|).
		\end{equation}
		
		Let $X$ be a subpartition of $[N]$ such that $|V_X|\leq p^*(\beta)$. Suppose that $X$ has $n_i$ elements of cardinality $i$ for every $i=1,\ldots,p(\beta)$ and no element of cardinality strictly larger than $p(\beta)$. Then, the quantity $\ksf_{\beta,\lambda}^{\mult}(n_1,\ldots,n_{p(\beta)})$ \eqref{eq:ksf mult again} is well defined. Moreover, 
		\begin{equation}\label{def:Kbeta un}
			\lim_{N\to\infty} N^{|X|-1}\Ksf_{\beta,\lambda}^{\mult}(X)=\ksf_{\beta,\lambda}^{\mult}(n_1, \dots, n_{p(\beta)}).
		\end{equation}
	\end{lemma}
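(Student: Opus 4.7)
All three limits follow from a common translation-invariance and dominated-convergence scheme: in each case, one rewrites the activity as a translation-invariant density integrated over $\Lambda^{2|S|}$ (resp.\ $\Lambda^{2|V_X|}$), changes variables to base-point-relative coordinates, and takes $N\to\infty$. The substantive content is the \emph{absolute integrability} of the limiting densities $G_S$, $G'_{S,E}$ and $G''_{X,E}$ on $(\R^2)^{2|S|-1}$ (resp.\ $(\R^2)^{2|V_X|-1}$), which is precisely where the hypothesis $|S|\leq p^*(\beta)$ enters.

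Concretely for \eqref{eq:proof little kdip}, unfolding the definition of $\mu_{\beta,\lambda}$ gives
\begin{equation*}
	\Msf_{\beta,\lambda}(S)=\frac{1}{(NC_{\beta,\lambda})^{|S|}}\int_{\Lambda^{2|S|}}G_S(x_2-x_1,\ldots,y_{|S|}-x_1)\prod_{i\in S}\dd x_i\,\dd y_i,
\end{equation*}
and the change of variables $u_k=x_k-x_1$, $w_k=y_k-x_1$ combined with Fubini yields
\begin{equation*}
	\Msf_{\beta,\lambda}(S)=\frac{1}{(NC_{\beta,\lambda})^{|S|}}\int_{(\R^2)^{2|S|-1}}G_S(u,w)\,\bigl|\{x_1\in\Lambda:x_1+u_k, x_1+w_k\in\Lambda\ \forall k\}\bigr|\,\dd u\,\dd w.
\end{equation*}
For each fixed $(u,w)$ the inner volume is bounded by $N=|\Lambda|$ and is asymptotic to $N$ as $N\to\infty$; using \eqref{limCbl} which gives $NC_{\beta,\lambda}/N\to\int_{\R^2}e^{\beta\g_\lambda}$, the dominated convergence theorem yields $N^{|S|-1}\Msf_{\beta,\lambda}(S)\to\msf_{\beta,\lambda}(|S|)$ from \eqref{eq:msf exp}, provided $G_S\in L^1$. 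The identical argument applied to $\sum_{E\in\mc G_c(S)}G'_{S,E}$ proves \eqref{eq:proof little mdip}. For the multipole limit \eqref{def:Kbeta un}, the same reduction applied to $\sum_{E\in\mathsf E^X}G''_{X,E}$ produces a quotient in which each $\Msf_{\beta,\lambda}(S)$ in the normalization of $\Psf_{\beta,\lambda}^S$ may be replaced by $\msf_{\beta,\lambda}(|S|)/N^{|S|-1}$ using \eqref{eq:proof little kdip}; a careful accounting of the powers of $N$ (the numerator contributes a factor $N$ from the $x_1$-integration and a factor $(NC_{\beta,\lambda})^{-|V_X|}$ from the dipole measures, while the denominator contributes $N^{-(|V_X|-|X|)}$) gives precisely $N^{|X|-1}\Ksf_{\beta,\lambda}^\mult(X)\to\ksf_{\beta,\lambda}^\mult(n_1,\ldots,n_{p(\beta)})$ as defined in \eqref{eq:ksf mult again}.

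The integrability of $G_S$ follows from the multipole constraint $\indic_{\mc B_S}$, which forces all inter-dipole centers within distance $O(Mr_S)$, with $r_S=\max_{i\in S}r_i$. The $|S|-1$ relative dipole positions are thereby confined to a volume $O(r_S^{2(|S|-1)})$, and integration against $\prod_{i\in S}e^{\beta\g_\lambda(r_i)}r_i\,\dd r_i\lesssim\prod r_i^{1-\beta}\dd r_i$ reduces to the one-dimensional integral $\int_\lambda^\infty r^{(4-\beta)(|S|-1)+1-\beta}\,\dd r$, which converges if and only if $|S|<2/(4-\beta)$, i.e.\ $|S|\leq p^*(\beta)$ (the case $\beta\geq 4$ being immediate). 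For $G'_{S,E}$ there is no multipole confinement, but the crude bound \eqref{eqcorobv} on $|v_{ij}|$ applied along a spanning tree of $(S,E)\in\mc G_c(S)$, together with $|f_{ij}^v|\lesssim |v_{ij}|$ when $v_{ij}$ is small, yields the same quadratic-in-radii estimate after integrating inter-dipole positions. The case of $G''_{X,E}$ combines both mechanisms: within-block multipole confinement and between-block Mayer-bond decay along a spanning tree of the quotient graph $(V_X,E)/X$. In all three cases, the uniform bounds of Proposition~\ref{prop:bounded lower} supply the integrable dominant required by dominated convergence.

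The main technical obstacle is the short-scale control in $G'_{S,E}$ and $G''_{X,E}$: when two dipoles nearly coincide, the Mayer factor $e^{-\beta v_{ij}}$ diverges, and one must exploit the stable-matching indicators $\indic_{\mc A_{ij}}$ (which exclude the worst collision geometries by Lemma~\ref{lemma:Aij}) together with the boundedness of $\g_\lambda$ near the origin to keep the local integrals finite. The requisite case analysis proceeds along the lines of Section~\ref{sub:decayv} and is handled in conjunction with Proposition~\ref{prop:bounded lower}, from which the uniform-in-$N$ dominant is extracted. Once this is in place, the three convergence statements follow from the common dominated-convergence scheme above.
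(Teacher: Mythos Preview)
Your overall scheme (translation invariance, change to relative coordinates, dominated convergence via the volume function $V_N$) is exactly the paper's approach, and your treatment of $\Msf_{\beta,\lambda}(S)$ is correct.

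There is however a genuine gap in your handling of $\Ksf^\dip_{\beta,\lambda}$ and $\Ksf^\mult_{\beta,\lambda}$. You assert that $G'_{S,E}$ is absolutely integrable via the crude bound \eqref{eqcorobv} along a spanning tree, but this fails: the bound gives $|f_{ij}^v|\lesssim r_ir_j/d_{ij}^2$ for large $d_{ij}$, and $\int_{\R^2}|z|^{-2}\,\dd z$ diverges logarithmically. Thus $|G'_{S,E}|$ is \emph{not} in $L^1\bigl((\R^2)^{2|S|-1}\bigr)$, and dominated convergence cannot be applied directly to it. Invoking Proposition~\ref{prop:bounded lower} does not help, since those bounds themselves rely on the parity cancellations rather than on absolute integrability of the raw integrand.

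The paper's remedy is to split $f_{ij}^v=a_{ij}^v+b_{ij}^v$ into odd and even parts and to note that $G'_{S,E}$ decomposes as a sum of two kinds of terms: (i) those in which the edges carrying the odd weight $a_{ij}^v$ form an Eulerian subgraph relative to the singleton partition, which by the quadratic estimate are absolutely integrable; and (ii) the remaining non-Eulerian contributions, which by Lemma~\ref{lemma:cancellation odd} have integral exactly zero both over $\Lambda^{2|S|}$ (the domain constraint $x_i,y_i\in\Lambda$ is invariant under $\vec r_i\mapsto-\vec r_i$ at fixed midpoint) and in the formal limit. Dominated convergence then applies to the integrable pieces (i), while the pieces (ii) vanish identically at every $N$. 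The same decomposition handles $G''_{X,E}$. Without this odd/even splitting, your integrability claim for $G'_{S,E}$ does not stand.
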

	
	The proof of Lemma \ref{lemma:limiting} is postponed to Section \ref{sub:limiting}.

	\subsection{Control of activities for unbounded clusters, statement}
	
	We next state the main result of this section, which is the control of activities for possibly unbounded clusters.
	\begin{prop}[Absolute convergence of the cluster expansion series]\label{prop:absolute lower}
		Let $\beta\in (2,\infty)$, and let $p(\beta)$ be as in Definition \ref{def:pbeta}. Let $\ve_0\in (0,1)$.
		Let $\pi$ be a partition of $[N]$ such that for every $S\in \pi$, one has $|S|\leq p(\beta)$. Assume that for every $k\in \{2,\ldots,p(\beta)\}$,
		\begin{equation}\label{eq:bornenk}
			n_k\coloneqq \#_k \pi\leq \ve_0^{-\alpha(\beta)}N\lambda^{2(k-1)},
		\end{equation}
		where $\alpha(\beta)$ is as in \eqref{def:alphabeta}.
		
		For $M$ large enough with respect to $\beta$ and $p(\beta)$, $\ve_0$ small enough with respect to $\beta$, $p(\beta)$ and $M$, and $\lambda$ small enough, there exists $C>0$ depending only on $\beta, M, p(\beta)$ and $\ve_0$ such that
		\begin{equation}\label{eq:bb1}
			\sum_{X\in \mc{P}(\pi)}|\Ksf_{\ve_0}^-(X)|\indic_{|V_X|>p(\beta)}\leq CN\delta_{\beta,\lambda},
		\end{equation}
		with $\delta_{\beta,\lambda}$ as in \eqref{defdelta}, in particular the series $\sum_{X\in \mc{P}(\pi)} \Ksf_{\ve_0}^-(X)$ is absolutely convergent.
		
		Moreover, there exists $C>0$ depending only on $\beta, M, p(\beta)$ and $\ve_0$ such that
		\begin{equation}\label{eq:bb2}
			\sum_{n=1}^\infty\frac{1}{n!}\sum_{\substack{X_1,\ldots,X_n\in \mc{P}(\pi)\\ \mathrm{connected},\, \exists i, |V_{X_i}|>p(\beta) } }|\Ksf_{\ve_0}^-(X_1)\ldots \Ksf_{\ve_0}^-(X_n) \mathrm{I}(G(X_1,\ldots,X_n))|\\ \leq CN\delta_{\beta,\lambda}.	\end{equation}
	\end{prop}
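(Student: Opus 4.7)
The plan is to prove \eqref{eq:bb1} first and then deduce \eqref{eq:bb2} from it via a Penrose-type resummation. As a preliminary, I would apply a Möbius inversion (alluded to in the remark after Definition \ref{def:activity lower}) to recast
$$\Ksf_{\ve_0}^-(X) = \sum_{E \in \mathsf{E}^X}\dE_{\Psf_X^{-,\ve_0}}\Bigl[\prod_{ij \in E}\bigl(e^{-\beta \tilde v_{ij}}\indic_{\mc{A}_{ij}}\indic_{\mc{B}_{ij}^c}-1\bigr)\Bigr],$$
so that the activity of $X$ becomes a sum of expectations indexed by edge sets $E$ making $(V_X, E \cup \mc{E}^\intra(X))$ connected. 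The core estimate I would aim for is
$$|\Ksf_{\ve_0}^-(X)| \leq \frac{C^k}{\Msf^0_{\ve_0}(X)\,N^{|X|-1}}\,\gamma_{\beta,\lambda,k},\qquad k \coloneqq |V_X|,$$
with $\gamma_{\beta,\lambda,k}$ as in Definition \ref{def:gamma beta} and $C = C(\beta, M, \ve_0)$.

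To obtain the pointwise bound, I would split each Mayer factor as $f_{ij}^{\tilde v}\indic_{\mc{B}_{ij}^c} = f_{ij}^{\mathrm{odd}} + f_{ij}^{\mathrm{even}}$, where the odd part is $\approx -\beta \tilde v_{ij}$ and the even remainder is $O(\tilde v_{ij}^2)$. Under the flip $x_i \leftrightarrow y_i$ the odd factors change sign, so by parity in the rotationally symmetric dipole measure \eqref{def:muve} only the subsets whose odd-bond subgraph is Eulerian survive integration. For each such Eulerian subgraph I would extract a minimally $2$-edge-connected spanning subgraph and perform a strict ear decomposition, as in the model triangle calculation of Section \ref{sub:strat lower}. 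Opening each ear yields, via Lemma \ref{lemma:errortilde}, an integrand bounded by $\prod_i r_i^2$ over interior vertices of the ears, times pair factors $\lesssim \max(d_{ij},\lambda)^{-2}$ that are integrated out in the inter-dipole positions. The even-bond contributions and the indicator $\indic_{\mc{B}_{ij}^c}$ are absorbed into the combinatorial prefactor $C^k$, using that $\indic_{\mc{B}_{ij}^c}$ forces $d_{ij} \gtrsim M\max(r_i, r_j, \lambda)$, which is why $M$ must be large.

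After these cancellations and integrations the integral collapses to a product of radial integrals of the form $\int_0^{\ve_0 \Cut} r^{2(k-1)+1-\beta}\dd r$, which is bounded by $\lambda^{2(k-1)}$ when $k \leq p^*(\beta)$ and by $\Cut^{-2}$ otherwise (with a logarithmic correction at $\beta = \beta_k$), thereby matching $\gamma_{\beta,\lambda,k}$. Summing the pointwise bound over subpartitions $X\in \mc{P}(\pi)$ with $|V_X|=k$ and profile $(m_i)_i$ is controlled by \eqref{eq:bornenk}: the number of such $X$ is $\prod_i \binom{\#_i(\pi)}{m_i}$, which is bounded by $N \prod_{i\geq 2}(\ve_0^{-\alpha(\beta)}\lambda^{2(i-1)})^{m_i}/m_i!$ up to constants. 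The contribution with $k>p(\beta)$ then sums to $O(N\delta_{\beta,\lambda})$, provided $\ve_0$ is small enough and $M$ is large enough for the geometric series in $k$ to converge. This proves \eqref{eq:bb1}. For \eqref{eq:bb2}, I would invoke the Penrose tree-graph inequality, which bounds $|\mathrm{I}(G(X_1,\ldots,X_n))|$ by the number of Penrose trees on the connection graph, thereby reducing the cluster-expansion series to a sum over labeled trees on $n$ vertices with per-vertex weight $|\Ksf_{\ve_0}^-(X_i)|$. A Kotecký--Preiss type criterion with weight function $w(X) = |V_X|$, combined with the uniform activity bound above, then controls the restricted sum (with at least one $X_i$ of size $>p(\beta)$) by \eqref{eq:bb1} up to a constant, yielding \eqref{eq:bb2}.

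The main obstacle will be the Eulerian ear-decomposition step in the presence of both the even-bond remainders and the indicators $\indic_{\mc{B}_{ij}^c}$: the parity structure holds only on the odd-bond subgraph, and the inter-multipole indicators interact non-trivially with the ear-opening geometry, so the two layers of combinatorics must be carried out simultaneously. A secondary difficulty is tracking the constant $C(\beta,M,\ve_0)$ sharply enough that the summation over edge configurations stays convergent as $k\to\infty$; this is precisely what forces $M$ to be large relative to $\beta$ and $\ve_0$ small in the statement.
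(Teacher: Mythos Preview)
Your high-level strategy (parity $\to$ Eulerian graphs $\to$ ear decomposition $\to$ quadratic estimate $\to$ Kotecký--Preiss) matches the paper's, but two genuine gaps would prevent the argument from closing. The most serious is the claim that ``the even-bond contributions are absorbed into the combinatorial prefactor $C^k$'': the sum over $E\in\mathsf E^X$ alone has order $2^{\binom{k}{2}}$ terms, and the odd/even splitting multiplies this by another $2^{|E|}$, so a naive absolute bound gives $C^{k^2}$, which diverges when summed over $k$. The paper resolves this via a Penrose resummation (Lemma \ref{lemma:LXi}): the even bonds are resummed into a \emph{tree} sum (Cayley gives $|X|^{|X|-2}$, cancelled by $1/|X|!$ from subpartition counting) times a residual factor $\prod_{ij\in\mc E^\inter(X)}e^{Ca_{ij}^\abs}$. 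That residual is \emph{not} bounded by $C^k$; rather, Lemma \ref{lemma:control sum intera} shows $\sum_{j}a_{ij}^\abs\le \tfrac{C}{M}\log(\Cut/r_i)$ via a volume-packing argument exploiting $\mc B_{ij}^c$, yielding $\prod_i(\Cut/r_i)^{C/M}$. This is absorbed only because, after the quadratic estimate, each non-maximal $r_i$ is integrated against $r_i^{3-\beta}\,dr_i$ with $3-\beta-\tfrac{C}{M}>-1$ for $M$ large---this is the precise mechanism by which ``$M$ large'' enters (cf.\ the discussion after Lemma \ref{lemma:integration summation}), not merely the separation $d_{ij}\gtrsim M\max(r_i,r_j)$ you cite.

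Second, your radial integral is incorrect: after the quadratic estimate $\prod_{i\neq i_0}r_i^2$ (Corollary \ref{coro:prod a}) and iterated integration of $r_i^{3-\beta}dr_i$ up to $r_{i_0}=\max_i r_i$, the final one-variable integral is $\int_0^{\ve_0\Cut}\max(r,\lambda)^{(4-\beta)k-3}\,dr$ (Lemma \ref{lemsupplapp}), not $r^{2(k-1)+1-\beta}$; the threshold $(4-\beta)k-3\lessgtr-1$ is exactly $\beta\gtrless\beta_k$ and produces $\gamma_{\beta,\lambda,k}$. Finally, your M\"obius simplification of $\Ksf_{\ve_0}^-$ is indeed valid (see the remark after Definition \ref{def:activity lower} and Corollary \ref{coro:equality mult}), but the paper deliberately retains the two-layer form \eqref{eq:activity lower} because separating the $f_{ij}^{\tilde v}$-bonds (inside the $X_l$) from the $-\indic_{\mc B_{ij}}$-bonds (in $F$) is what makes the Penrose step in Lemma \ref{lemma:LXi} tractable: the two bond types receive different Penrose treatments.
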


	The first step consists in splitting the Mayer bond $f_{ij}^{\tilde{v}}$ into its odd and even parts, and using cancellations of the odd parts to reduce to Eulerian graphs, and then using the Penrose resummation argument.

	\subsection{Parity and cancellations}\label{sub:parity}
	
	This section contains the cancellation arguments that form the starting point of the proof of the absolute convergence of the cluster expansion series appearing in Lemma \ref{lemma:start low}. Recall that by Lemma \ref{lemma:errortilde}, the weight $\tilde{v}_{ij}$ typically decays as the inverse of the square distance between dipole $i$ and $j$, provided this distance is larger than $r_i$ and $r_j$. The key issue is that $\dist^{-2}$ is not integrable at infinity in dimension $2$. Therefore, the absolute convergence of the cluster expansion series, even with truncated weights at distance $\ve_0 \Cut$, is nontrivial. 
	
	A crucial ingredient in the argument is the observation that if one of the dipoles is flipped, then the interaction is changed  into its opposite. In other words, the dipole-dipole interaction is an odd function of both dipole vectors. 
	
	Recalling from Definition \ref{def:mayer}, the Mayer bond $f_{ij}^{\tilde{v}}$ is given by
	\begin{equation*}
		f_{ij}^{\tilde{v}} = e^{-\beta \tilde{v}_{ij}}\indic_{\mc{A}_{ij}} - 1.
	\end{equation*}
	One can decompose $f_{ij}^{\tilde{v}}$ as $a_{ij}^{\tilde{v}} + b_{ij}^{\tilde{v}}$, where $a_{ij}^{\tilde{v}}$ and $b_{ij}^{\tilde{v}}$ are given as follows:

	\begin{definition}[Odd and even parts of the Mayer bond]\label{def:awbw}
		Let $w=(w_{ij})_{i<j}$ be a collection of weights. We set 
		\begin{equation*}
			a^w_{ij}\coloneqq -\sum_{k\,  \mathrm{odd}}\frac{\beta^k}{k!}w_{ij}^k\indic_{\mc{A}_{ij}}\quad \mathrm{and}\, \quad b_{ij}^w\coloneqq \left(\sum_{k \, \mathrm{even},\, k\neq 0}\frac{\beta^k}{k!}w_{ij}^k \indic_{\mc{A}_{ij}}\right)- \indic_{\mc{A}_{ij}^c}.
		\end{equation*}
	\end{definition}
	
	For later use, we introduce the following weights:
	
	\begin{definition}\label{def:abs ab}
		We let 
		\begin{equation}\label{defaabs}
			a^{\abs}_{ij}\coloneqq \frac{r_ir_j}{d_{ij}\max(r_i,r_j,d_{ij})}\indic_{\mc{B}_{ij}^c}\indic_{d_{ij}\leq 16 \ve_0 \Cut}
		\end{equation}
		and
		\begin{equation}\label{defbabs}
			b^{\abs}_{ij}\coloneqq (a_{ij}^{\abs})^2+\indic_{\mc{B}_{ij}}.
		\end{equation}
	\end{definition}
	
	\begin{remark}\label{remarkabbi}
		From Lemma \ref{lemma:errortilde} applied with $\tau_0= 8\ve_0\Cut$, if $\max(r_i,r_j)\le \ve_0\Cut $,  we have
		\begin{equation}\label{trueboundaij}
			|a_{ij}^{\tilde v} \indic_{\mc{B}_{ij}^c}| \le C a_{ij}^\abs,
		\end{equation}
		with $C$ depending on $\beta$.
		Indeed, on $\mc{B}_{ij}^c$ we have $d_{ij}\ge M \min(r_i,r_j)$ with $M>1$, hence  the assumptions of the lemma are verified.
		
		Since by Remark \ref{remark:the inclusion}, $\mc{B}_{ij}^c \subset \mc{A}_{ij}$ we have $\indic_{\mc{A}_{ij}^c}\le \indic_{\mc{B}_{ij}}$ hence, in view of Lemma \ref{lemma:errortilde} again, if $ \max(r_i,r_j)\le \ve_0\Cut $, we have
		\begin{equation}
			|b_{ij}^{\tilde v}\indic_{\mc{B}_{ij}^c}|\le C b_{ij}^\abs,\end{equation}
		with $C$ depending on $\beta$.
	\end{remark}

	\begin{remark}[Parity]\label{remark:parity}
		
		One can observe that for the weight $w= \tilde v$ we are interested in, $a_{ij}^w$ is odd in $\vr_i$ and $\vr_j$ while $b_{ij}^w$ is even in $\vr_i$ and $\vr_j$.
		Indeed, changing $\vr_i$ into $-\vr_i$ keeping the midpoint $z_i$ fixed (see Lemma \ref{lemma:bsupF}) corresponds to swapping the positions of $x_i$ and $y_i$. Thus, the quantity $u_i$ defined in \eqref{eqsurui} is changed into $-u_i$ and by definition  \eqref{deftvij}, $\tilde v_{ij}$ is changed into its opposite. Thus $\tilde v_{ij}$ is odd in $\vr_i$ and in $\vr_j$, keeping $z_i$ and $z_j$ fixed. 
	\end{remark}
	
	\begin{lemma}\label{lemma:cancellation odd}
		Let $X$ be a subpartition of $[N]$. Let $\mathbb{S}^1$ be the circle in $\dR^2$.  
		
		Let $E_1,E_2$ be disjoint sets of edges included in $\mc{E}^\inter(X)$. For each $ij\in E_1\cup E_2$, let $a_{ij}:(\mathbb{S}^1)^2\to \dR$ be a smooth function with the property that $a_{ij}$ is an odd function in both variables, i.e. 
		\begin{equation*}
			a_{ij}(-x,y)=-a_{ij}(x,y)=a_{ij}(x,-y)\quad \text{for every $x, y\in \mathbb{S}^1$}.
		\end{equation*}
		For each $ij\in E_1\cup E_2$, let $b_{ij}:(\mathbb{S}^1)^2\to \dR$ be a smooth function with the property that $b_{ij}$ is an even function in both variables, i.e. 
		\begin{equation*}
			b_{ij}(-x,y)=b_{ij}(x,y)=b_{ij}(x,-y)\quad \text{for every $x, y\in \mathbb{S}^1$}.
		\end{equation*}
		For each $S\in X$, let $F_S:(\mathbb{S}^1)^{|S|}\to \dR$ be a smooth function with the property that 
		\begin{equation*}
			F_S(-x_1,\ldots,-x_{|S|})= F_S(x_1,\ldots,x_{|S|}) \quad \text{for every $x_1,\ldots, x_{|S|}\in \mathbb{S}^1$}.
		\end{equation*}
		Suppose that there exists $S_0\in X$ such that  $\deg_{E_1}(S_0)$ is odd. Then 
		\begin{equation*}
			\int_{(\mathbb{S}^1)^{|V_X|}} \prod_{ij\in E_1}a_{ij}(\vr_i,\vr_j)\prod_{ij\in E_2}b_{ij}(\vr_i,\vr_j)\prod_{S\in X}F_{S}((\vr_i)_{i\in S})\prod_{i\in V_X}\dd \vr_i=0.
		\end{equation*}
	\end{lemma}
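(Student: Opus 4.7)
The strategy is a direct parity/sign-flip argument: I will perform the change of variables $\vr_i \mapsto -\vr_i$ for all $i \in S_0$ (keeping $\vr_i$ fixed for $i \notin V_{S_0}$) in the integral, and show that the integrand is multiplied by $(-1)^{\deg_{E_1}(S_0)}$. Since this change of variables is an isometry of $(\mathbb{S}^1)^{|V_X|}$ preserving the product measure, the integral equals $(-1)^{\deg_{E_1}(S_0)}$ times itself, and vanishes when $\deg_{E_1}(S_0)$ is odd.

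The key bookkeeping is to track how each factor transforms. First, because $E_1, E_2 \subset \mc{E}^\inter(X)$, every edge $ij \in E_1 \cup E_2$ connects two distinct parts of $X$; in particular there are no edges with both endpoints inside $S_0$. Hence for $ij \in E_1$, exactly one of the two endpoints lies in $S_0$ whenever $ij$ contributes to $\deg_{E_1}(S_0)$, and under the flip the factor $a_{ij}(\vr_i,\vr_j)$ picks up a single sign $-1$ by oddness in the relevant variable. Summing over such edges, the collective contribution is $(-1)^{\deg_{E_1}(S_0)}$. For edges $ij \in E_1$ with no endpoint in $S_0$, the factor $a_{ij}(\vr_i,\vr_j)$ is unchanged. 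For $ij \in E_2$, the factor $b_{ij}(\vr_i,\vr_j)$ is even in each argument and thus always unchanged under the flip.

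The remaining factors are the $F_S$. For $S \neq S_0$, none of the arguments $(\vr_i)_{i\in S}$ are flipped, so $F_S$ is unchanged. For $S = S_0$, all arguments are flipped simultaneously, and the hypothesis $F_{S_0}(-x_1,\ldots,-x_{|S_0|}) = F_{S_0}(x_1,\ldots,x_{|S_0|})$ ensures that $F_{S_0}$ is unchanged as well. Combining all contributions, the integrand is multiplied by exactly $(-1)^{\deg_{E_1}(S_0)} = -1$, which forces the integral to vanish.

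There is no substantive obstacle: the only point requiring a moment of care is the observation that $E_1 \subset \mc{E}^\inter(X)$ forbids edges internal to $S_0$, so that $\deg_{E_1}(S_0)$ equals the number of edges in $E_1$ with exactly one endpoint in $S_0$ (each such edge contributing a single sign). The invariance hypothesis on $F_{S_0}$ is precisely what makes the simultaneous-flip change of variables inside $S_0$ admissible without disturbing the intra-multipole factor.
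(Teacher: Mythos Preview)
Your proof is correct and follows exactly the same approach as the paper: perform the change of variables $\vr_i \mapsto -\vr_i$ for $i \in S_0$ and track the sign. You have in fact spelled out more detail than the paper's one-line proof, including the observation that $E_1 \subset \mc{E}^\inter(X)$ forbids edges internal to $S_0$.
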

	
	\medskip

	\begin{proof}
		Performing the change of variables $\vr_i\mapsto \vr_i$ for every $i\in \cup_{S\in X,S\neq S_0} S$ and $\vr_{i}\mapsto -\vr_{i}$ for every $i\in S_0$, 
		we find
		\begin{multline*}
			\int_{(\mathbb{S}^1)^{|V_X|}} \prod_{ij\in E_1}a_{ij}(\vr_i,\vr_j)\prod_{ij\in E_2}b_{ij}(\vr_i,\vr_j)\prod_{S\in X}F_{S}((\vr_i)_{i\in S})\prod_{i\in V_X}\dd \vr_i\\=-\int_{(\mathbb{S}^1)^{|V_X|}} \prod_{ij\in E_1}a_{ij}(\vr_i,\vr_j)\prod_{ij\in E_2}b_{ij}(\vr_i,\vr_j)\prod_{S\in X}F_{S}((\vr_i)_{i\in S})\prod_{i\in V_X}\dd \vr_i,
		\end{multline*}
		hence 
		\begin{equation*}
			\int_{(\mathbb{S}^1)^{|V_X|}} \prod_{ij\in E_1}a_{ij}(\vr_i,\vr_j)\prod_{ij\in E_2}b_{ij}(\vr_i,\vr_j)\prod_{S\in X}F_{S}((\vr_i)_{i\in S})\prod_{i\in V_X}\dd \vr_i=0.
		\end{equation*}
	\end{proof}
	
	Below, we introduce the notion of Eulerian graphs relative to a partition.
	
	\begin{definition}[Eulerian graphs relative to $X$]\label{def:Eulerian}
		Let $X$ be a subpartition of $[N]$ and $E\subset \mc{E}^\inter(X)$. We say that $E\in \Eul^X$ if 
		\begin{equation*}
			\sum_{i \in S} \deg_E(i) \equiv 0 \pmod{2}
			\quad \text{for every } S \in X,
		\end{equation*}
		where for every $i\in V_X$, $\deg_E(i)$ denotes the degree of $i$ in the graph $(V_X,E)$.

		Moreover, we say that $E\in \Eulc^X$ if in addition $E$ is connected relative to $X$, i.e.~we set 
		\begin{equation*}
			\Eulc^X=\mathsf{E}^X\cap \Eul^X.
		\end{equation*}
	\end{definition}

	\begin{figure}[H]
		\centering
		\begin{subfigure}{0.3\textwidth}
			\centering
			\fbox{\includegraphics[width=\linewidth]{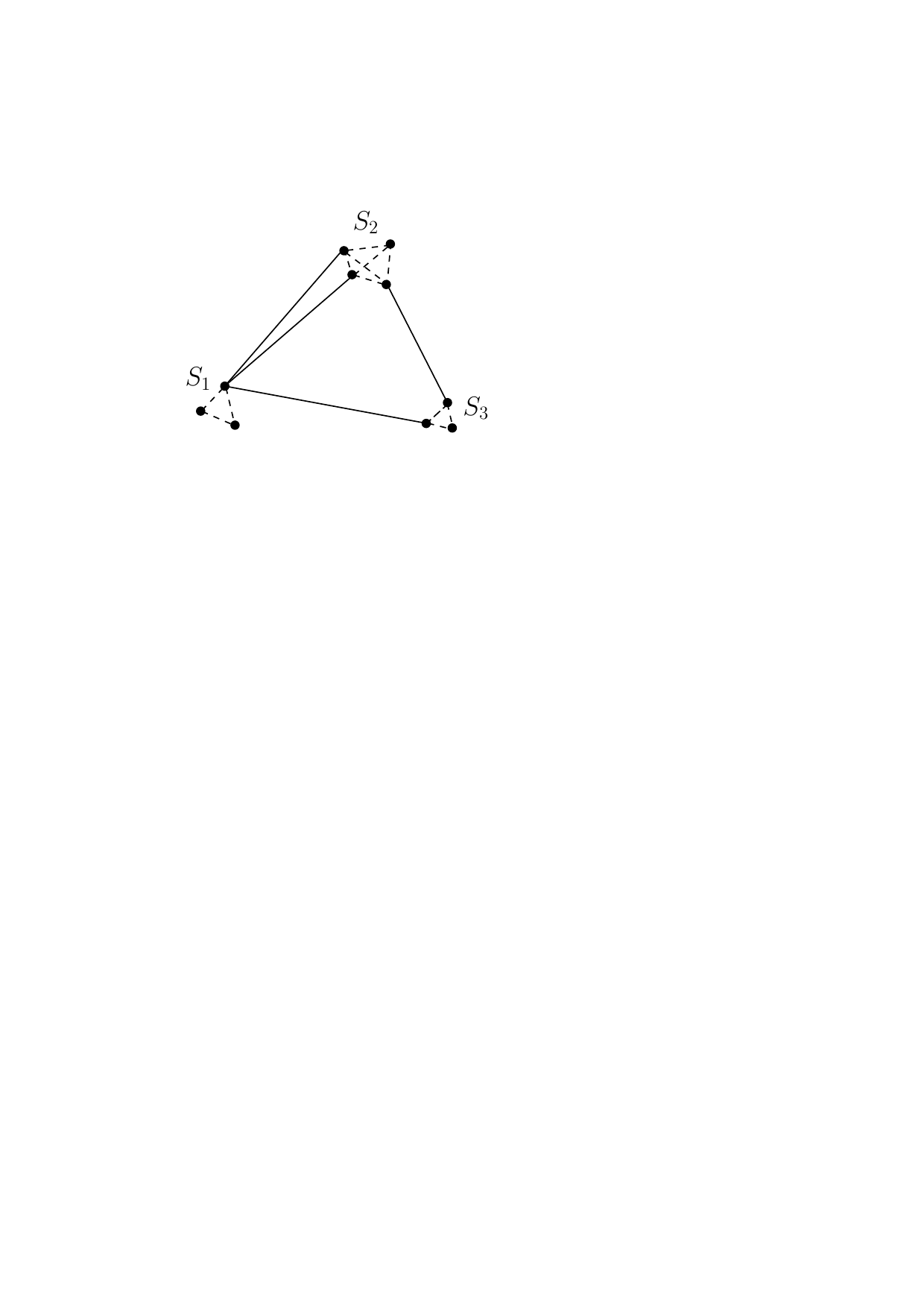}}
			\caption{}
		\end{subfigure}
		\hspace{1cm}
		\begin{subfigure}{0.3\textwidth}
			\centering
			\fbox{\includegraphics[width=\linewidth]{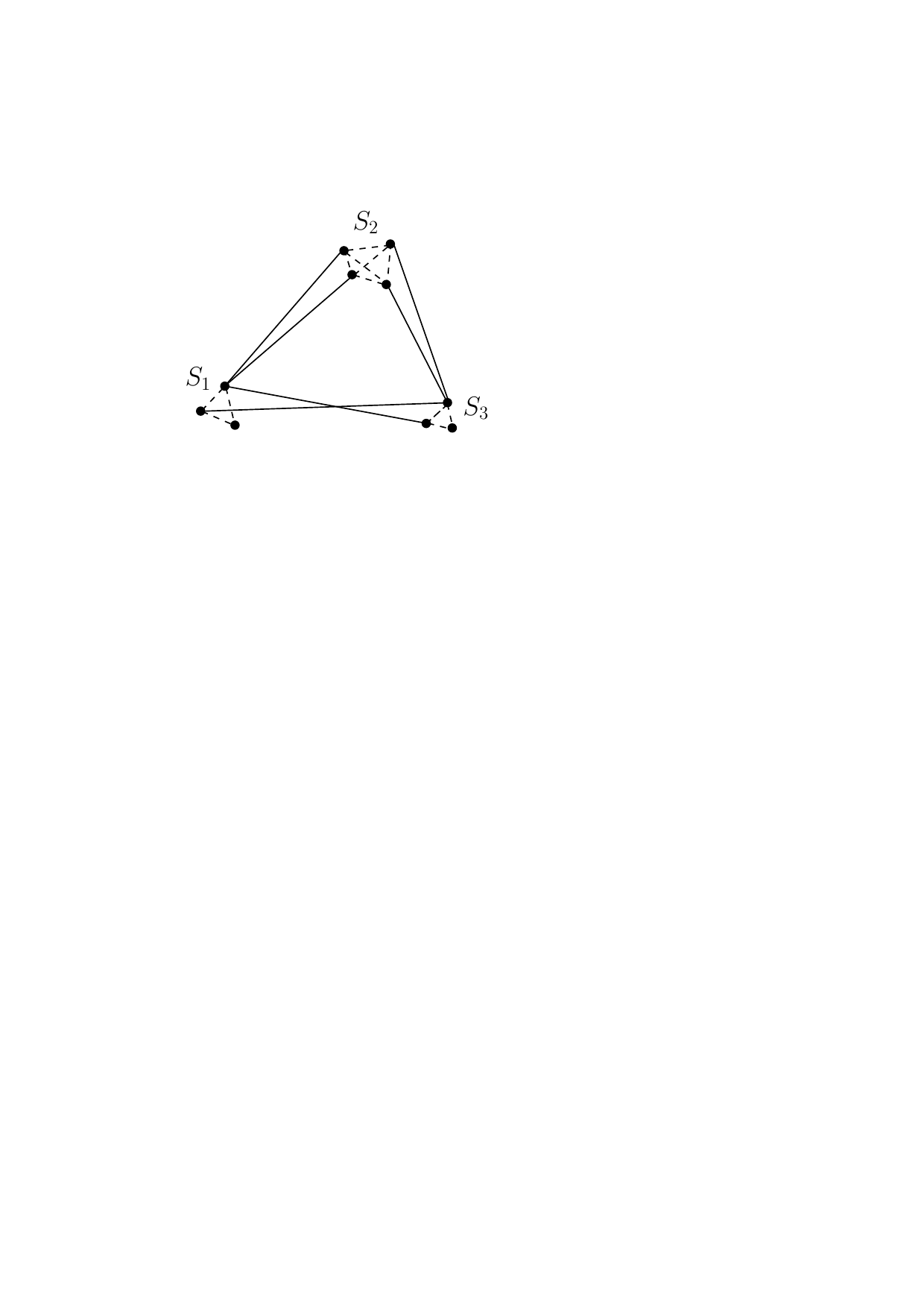}}
			\caption{}
		\end{subfigure}
		\caption{Unlike the graph in Figure B, the graph in Figure A is not Eulerian relative to the partition $X\coloneqq \{S_1,S_2,S_3\}$, although its quotient graph is Eulerian.}
		\label{fig:two_images}
	\end{figure}
	
	In other words, $E$ is Eulerian relative to $X$ if for every $S\in X$, $S$ is adjacent to an even number of edges in $E$. 
	
	\begin{coro}\label{coro:reduction Euler}
		Let $X$ be a subpartition of $[N]$. We have 
		\begin{multline}\label{eq:expE11}
			\Ksf_{\ve_0}^-(X)=\sum_{n=0}^\infty \frac{1}{n!}\sum_{\substack{X_1,\ldots,X_n\subset X\\ \mathrm{disjoint} }}\prod_{l=1}^n
			\Biggr(\sum_{ E_{l,1}\in \Eul^{X_l} }  \sum_{\substack{E_{l,2}:E_{l,1}\cup E_{l,2}\in \mathsf{E}^{X_l}\\
					E_{l,1}\cap E_{l,2}=\emptyset  }} \Biggr)\sum_{F\in \mathsf{E}^{\Coarse_X(X_1,\ldots,X_n)}}\\ \dE_{\Psf_{X}^{-,\ve_0}}\left[\prod_{l=1}^n\left(\prod_{ij\in E_{l,1}}a^{\tilde{v}}_{ij}\prod_{ij\in E_{l,2} }b^{\tilde{v}}_{ij}\right)\prod_{ij\in \cup_l\mc{E}^\inter(X_l)}\indic_{\mc{B}_{ij}^c}\prod_{ij\in F}(-\indic_{\mc{B}_{ij}})\right].
		\end{multline}
	\end{coro}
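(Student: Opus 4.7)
The plan is to derive \eqref{eq:expE11} in two moves: an algebraic splitting of each Mayer factor, followed by a parity cancellation under the involution that reverses the orientation of each dipole in a chosen block.

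First, I will use $f_{ij}^{\tilde v}=a_{ij}^{\tilde v}+b_{ij}^{\tilde v}$ from Definition \ref{def:awbw} and expand $\prod_{ij\in E_l}f_{ij}^{\tilde v}$ as a sum over ordered disjoint pairs $(E_{l,1},E_{l,2})$ with $E_{l,1}\sqcup E_{l,2}=E_l$, assigning $a_{ij}^{\tilde v}$ on $E_{l,1}$ and $b_{ij}^{\tilde v}$ on $E_{l,2}$. After re-indexing the double sum over $E_l\in\mathsf{E}^{X_l}$ and its splittings as a single sum over disjoint pairs $(E_{l,1},E_{l,2})$ with $E_{l,1}\cup E_{l,2}\in\mathsf{E}^{X_l}$, one reaches the formula in \eqref{eq:expE11} modulo the restriction $E_{l,1}\in\Eul^{X_l}$.

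Second, I will show that every term for which some $E_{l_0,1}$ fails to be Eulerian relative to $X_{l_0}$ integrates to $0$. Picking a block $S_0\in X_{l_0}$ with $\deg_{E_{l_0,1}}(S_0)$ odd, I consider the involution $T_{S_0}$ on $\Lambda^{2|V_X|}$ that swaps $x_i\leftrightarrow y_i$ for every $i\in S_0$ (keeping the barycenter $z_i$ frozen while sending $\vec{r}_i\mapsto -\vec{r}_i$). The key claim is that $\Psf_X^{-,\ve_0}$ is $T_{S_0}$-invariant: each factor in \eqref{def:PwL} depends on the swapped coordinates only through the unordered pair $\{x_i,y_{\sigma_N(i)}\}$ or the length $|x_i-y_{\sigma_N(i)}|$; the intra-block weights $\tilde v_{ij}$ for $i,j\in S_0$ are invariant by the joint parity flip recorded in Remark \ref{remark:parity}; and the events $\mc{A}_{ij}$, $\mc{B}_S$, $\mc{B}_{ij}$, $\mc{B}_{ij}^c$ are invariant because they are defined purely by inter-point distances between the dipole footprints.

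Finally, I will track the sign of $\prod_{l}\prod_{ij\in E_{l,1}}a_{ij}^{\tilde v}$ under $T_{S_0}$. Since $V_{X_1},\ldots,V_{X_n}$ are pairwise disjoint and $S_0\subset V_{X_{l_0}}$, only edges of $E_{l_0,1}$ can be incident to $S_0$; and since edges in $E_{l_0,1}\subset\mc{E}^\inter(X_{l_0})$ connect distinct blocks, each such edge has exactly one endpoint in $S_0$. By the odd parity of $a_{ij}^{\tilde v}$ in each dipole vector (Remark \ref{remark:parity}), every incident edge contributes one sign flip, so the overall sign change is $(-1)^{\deg_{E_{l_0,1}}(S_0)}=-1$, while the $b$ factors and the indicator weights are untouched. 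Measure-invariance then forces the corresponding expectation to vanish, which is precisely the mechanism of Lemma \ref{lemma:cancellation odd}. The only mildly delicate point is the invariance check for $\mc{A}_{ij}$ and $\mc{B}_\bullet$, but this is routine from Definition \ref{def:multipoles} and the symmetric formulation of the stable matching in Lemma \ref{lemma:stable}, since swapping all positive/negative labels inside a single multipole only relabels points within the same geometric matching.
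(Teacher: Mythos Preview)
Your approach is essentially the same as the paper's: expand $f_{ij}^{\tilde v}=a_{ij}^{\tilde v}+b_{ij}^{\tilde v}$ and kill the non-Eulerian terms via the block-flip involution, which is exactly Lemma~\ref{lemma:cancellation odd}. One small imprecision to tighten: the event $\mc{A}_{ij}$ is \emph{not} invariant under a single flip $\vec r_i\mapsto-\vec r_i$ (Lemma~\ref{lemma:Aij} shows it depends on which endpoint is positive), so $a_{ij}^{\tilde v}$ alone is not literally odd; what \emph{is} odd is the product $a_{ij}^{\tilde v}\indic_{\mc{B}_{ij}^c}$, since $\mc{B}_{ij}^c\subset\mc{A}_{ij}$ reduces $a_{ij}^{\tilde v}\indic_{\mc{B}_{ij}^c}$ to $-\sinh(\beta\tilde v_{ij})\indic_{\mc{B}_{ij}^c}$, and these indicators are already present in the integrand for every inter-block edge in $X_l$. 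The intra-block $\indic_{\mc{A}_{ij}}$'s in the measure are fine, as you say, because both endpoints are flipped jointly. This is precisely why the paper's proof singles out the evenness of $\indic_{\mc{B}_{ij}^c}$ rather than of $\indic_{\mc{A}_{ij}}$.
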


	\begin{proof}
		Starting back from \eqref{eq:activity lower}, the proof follows from Lemma \ref{lemma:cancellation odd} and the fact that both $\mc{B}_{ij}$ and $\mc{B}_{ij}^c$ are even in $\vr_i$ and $\vr_j$ keeping the midpoints $z_i$ and $z_j$ fixed, and the fact that $\indic_{z_i + \vr_i \in \Lambda} \indic_{z_i-\vr_i\in \Lambda}$ is even.
	\end{proof}

	\subsection{Reduction to 2-edge-connected graphs}
	
	We recall here some standard graph notions that we next adapt to our partition setting.

	\begin{definition}[2-edge-connected graphs]
		Let $G=(V,E)$ be a finite graph. 
		\begin{enumerate}
			\item One says that $G$ is 2-edge-connected if for every $e\in E$, the graph $(V,E\setminus \{e\})$ is connected.
			\item One says that $G$ is minimally 2-edge-connected if for every $e\in E$, the graph $(V,E\setminus \{e\})$ is not 2-edge-connected.
			\item One says that an edge $e$ is a bridge in $G$ if $(V, E\setminus \{e\})$ is disconnected. 
		\end{enumerate}
	\end{definition}
	
	\begin{definition}[2-edge-connected graph relative to a partition]\label{def:2egde relative}
		Let $X$ be a subpartition of $[N]$, let $E\in \mathsf{E}^X$ and set $G\coloneqq (V_X,E)$. 
		\begin{enumerate}
			\item One says that $G$ is 2-edge-connected relative to $X$ if for every $e\in E$, the graph $(V_X,E\setminus \{e\})$ is connected relative to $X$ (see Definition \ref{def:quotient}). This means that $(V_X,E\cup \mc{E}^\intra(X))$ has no bridge in $E$.
			\item One  says that $G$ is minimally 2-edge-connected relative to $X$ if for every $e\in E$, the graph $(V_X,E\setminus \{e\})$ is not 2-edge-connected relative to $X$.
		\end{enumerate}
	\end{definition}

	\begin{lemma}\label{lemma:euler implies}
		Let $X$ be a subpartition of $[N]$ and $E\in \Eulc^X$. Then, $(V_X,E)$ is 2-edge-connected relative to $X$.
	\end{lemma}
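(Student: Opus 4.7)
\medskip

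The plan is to argue by contradiction using a standard parity/handshake argument, adapted to the partition setting. Suppose that $(V_X,E)$ is not 2-edge-connected relative to $X$, so there exists an edge $e\in E$ whose removal disconnects the augmented graph $(V_X,E\cup \mc{E}^\intra(X))$. Since $e\in \mc{E}^\inter(X)$, we may write $e=ij$ with $i\in S_1$, $j\in S_2$ and $S_1\neq S_2$.

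Next I would use the crucial observation that every edge in $\mc{E}^\intra(X)$ lies entirely inside a single block $S\in X$. Consequently, for any subpartition $X=X_A\sqcup X_B$, no edge of $\mc{E}^\intra(X)$ crosses between $V_{X_A}$ and $V_{X_B}$. By the disconnection hypothesis applied after removing $e$, we can choose such a splitting with $S_1\in X_A$ and $S_2\in X_B$ such that no edge of $(E\setminus\{e\})\cup\mc{E}^\intra(X)$ crosses. Hence $e$ is the unique edge of $E$ between $V_{X_A}$ and $V_{X_B}$.

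The contradiction then comes from parity. Counting by the handshake lemma applied to the induced subgraph on $V_{X_A}$,
\[
\sum_{S\in X_A}\sum_{i\in S}\deg_E(i) \;=\; 2\,\#\{\text{edges of }E\text{ inside }V_{X_A}\} + \#\{\text{edges of }E\text{ between }V_{X_A}\text{ and }V_{X_B}\},
\]
and the right-hand side equals $2\,(\text{integer})+1$, hence is odd. On the other hand, since $E\in \Eul^X$, each summand $\sum_{i\in S}\deg_E(i)$ is even for $S\in X_A$, so the left-hand side is even. This contradiction shows that no $e\in E$ can be a bridge in the augmented graph, which is exactly the definition of $(V_X,E)$ being 2-edge-connected relative to $X$.

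There is no real obstacle here; the only mild subtlety is keeping the definitions straight, namely that 2-edge-connectivity relative to $X$ is about the augmented graph $(V_X,E\cup\mc{E}^\intra(X))$, and that the Eulerian condition is imposed block-by-block rather than vertex-by-vertex. Both are handled transparently by the observation that intra-block edges cannot cross any partition of $X$ into two pieces.
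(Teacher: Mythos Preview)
Your proof is correct and follows essentially the same approach as the paper: a contradiction argument using the handshake lemma and the block-wise parity condition, after splitting $X$ into the two sides of the bridge. Your presentation is in fact slightly cleaner than the paper's, which tracks degrees in the edge set $E'=E\setminus\{ab\}$ restricted to one side and then locates a second block of odd degree, whereas you compute the crossing count directly; but the underlying idea is identical.
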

	
	\medskip

	\begin{proof}
		Suppose by contradiction that $(V_X,E\cup \mc{E}^\intra(X))$ admits a bridge $ab\in E$. Let $X'$ be the connected component of $[a]^X$ in $(V_X,E\setminus\{ab\})/X$. Let $E'$ be the set of edges in $E\setminus \{ab\}$ adjacent to some vertex in $X'$.
		
		Then, since $E\in \Eulc^X$, the degree of $[a]^X$ in $E$ is even. Hence, the degree of $[a]^X$ in $E'$ is odd. By the handshaking lemma,
		\begin{equation*}
			\sum_{S\in X'}\deg_{E'}(S)=2|E'|.
		\end{equation*}
		Therefore, there is an even number of $S\in X'$ such that $\deg_{E'}(S)$ is odd. Hence, there exists $S\in X'$ with $S\neq [a]^X$ such that $\deg_{E'}(S)$ is odd. Since $ab$ is a bridge, $b$ is not in $X'$, hence  $\deg_{E'}(S)=\deg_E(S)$, which  is odd, a contradiction with the fact that $E\in \Eulc^X$.\end{proof}

	\begin{definition}[Peeling into a minimal 2-edge-connected graph]\label{def:peeling into minimal}
		Let $X$ be a subpartition of $[N]$ and $E\in \mathsf{E}^X$ be such that $(V_X,E)$ is 2-edge-connected relative to $X$. We select a minimal subset of edges $E'\subset E$ such that the graph $(V_X,E'\cup \mc{E}^\intra(X))$ is 2-edge-connected. 
		(If several exist, select one according to the lexicographical order.)
		
		Define $\Peeled_X(E) \coloneqq E'$. 
	\end{definition}

	\begin{definition}[Strict ear decomposition]\label{def:strict ear relative}
		Let $X$ be a subpartition of $[N]$ and let $E\in \mathsf{E}^X$. A sequence of graphs $(P_1,\ldots,P_K)$ is called a strict ear decomposition of $(V_X,E)$ if $P_1\cup \cdots \cup P_K=(V_X,E)$ and;
		\begin{enumerate}
			\item The multigraph quotient (see Definition \ref{def:quotient}) of $P_1$ by $X$ is a simple cycle. 
			\item For every $i=2,\ldots,K$, the multigraph quotient of $P_i$ by $X$ is a simple path with at least two vertices, at least one vertex which is not an endpoint, endpoints in $(P_1\cup\cdots \cup P_{i-1})/X$, and vertices that are not endpoints disjoint from $(P_1\cup \cdots \cup P_{i-1})/X$.
		\end{enumerate}
		
	\end{definition}

	See Figure \ref{fig:ear} for an example of a strict ear decomposition. A classical theorem \cite[Prop 3.1.1]{diestel2012graph} asserts that a graph admits an ear decomposition if and only if it is $2$-edge-connected.  One can show that a minimally $2$-edge-connected graph admits a strict ear decomposition.

	\begin{figure}[H]
		\centering
		\begin{subfigure}{0.3\textwidth}
			\centering
			\fbox{\includegraphics[width=\linewidth]{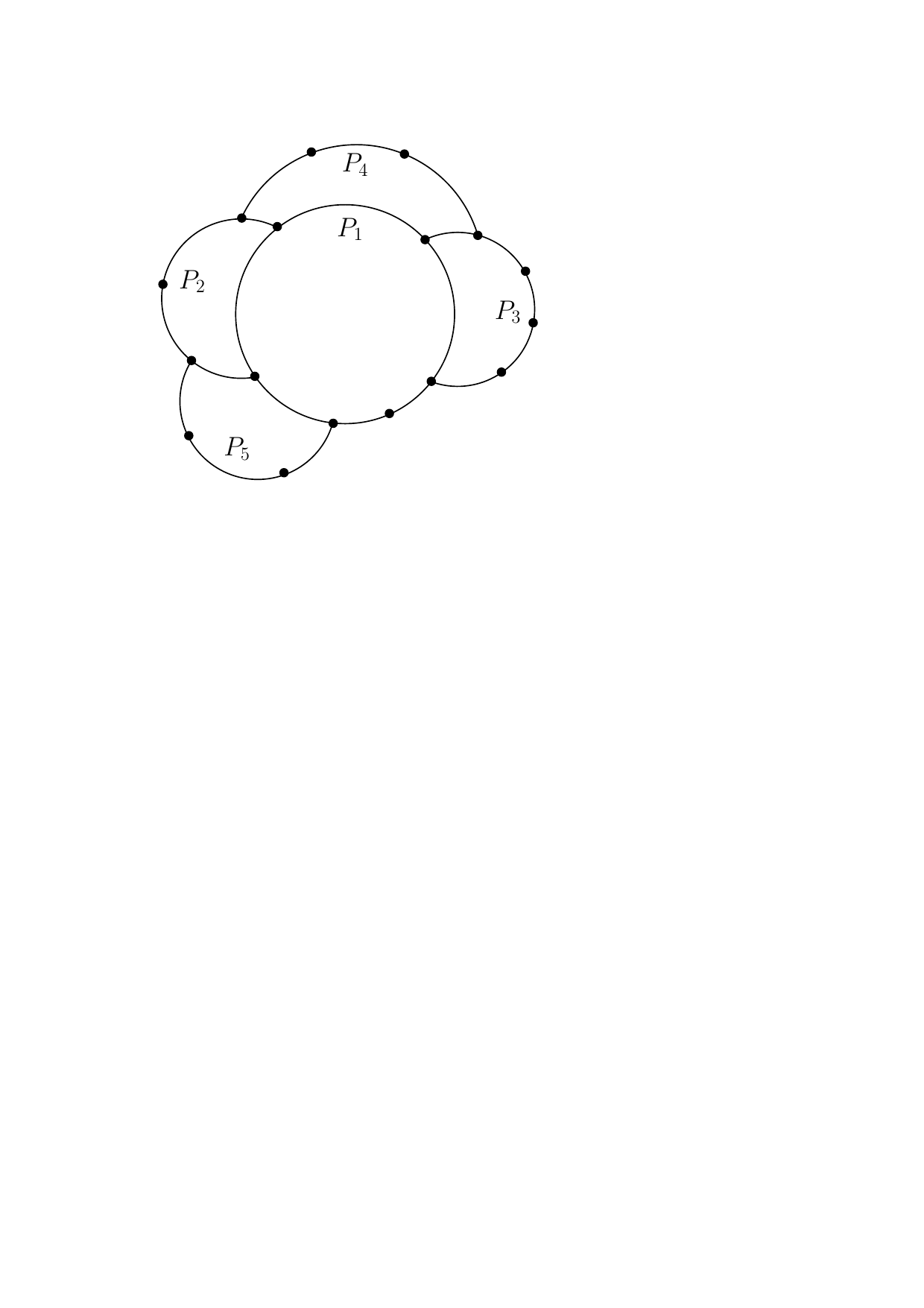}}
			\caption{}
		\end{subfigure}
		\hspace{2cm}
		\begin{subfigure}{0.3\textwidth}
			\centering
			\fbox{\includegraphics[width=\linewidth]{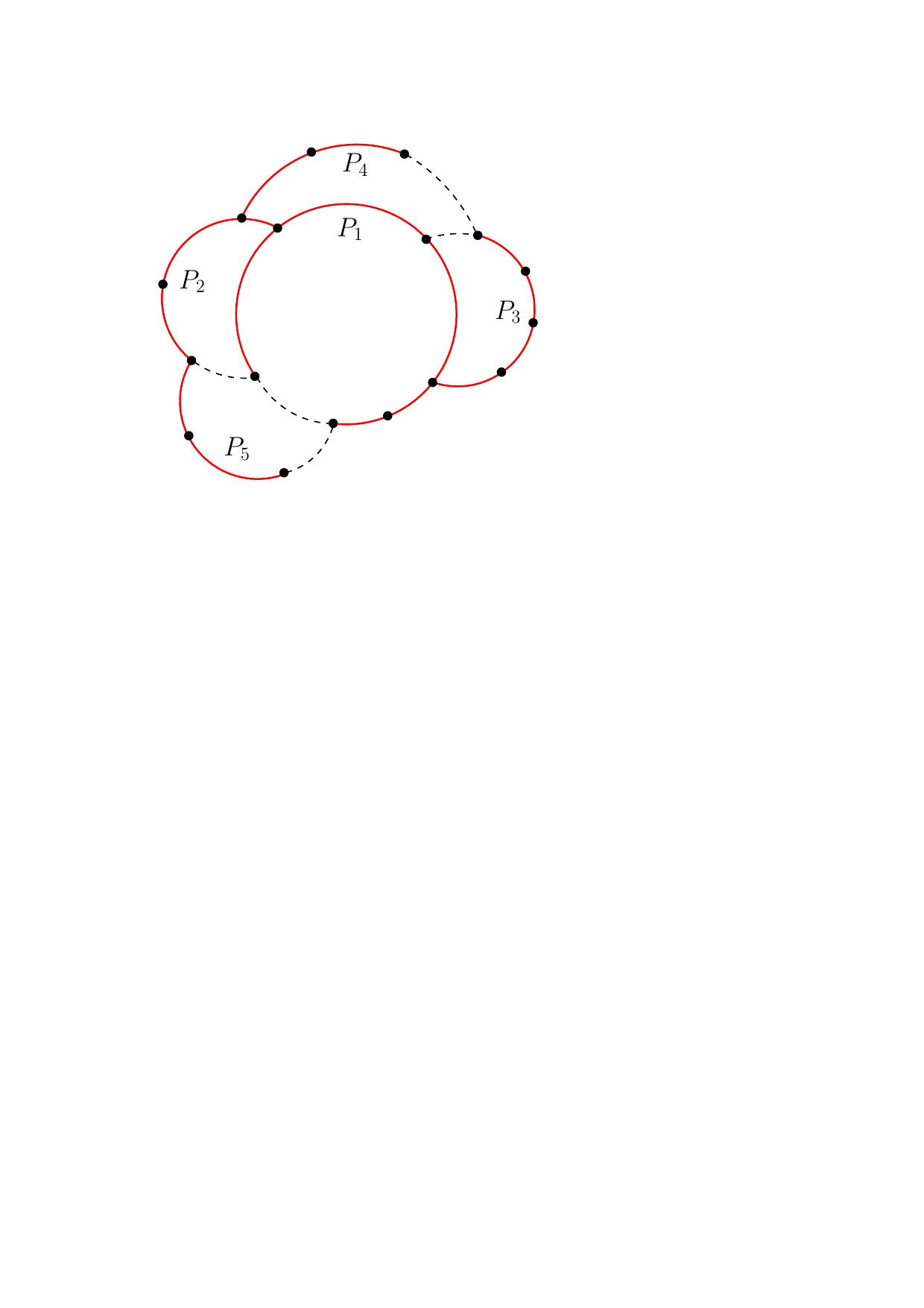}}
			\caption{}
		\end{subfigure}
		\caption{In Figure A, a strict ear decomposition of a minimal $2$-edge-connected graph. In Figure B, a spanning tree obtained by opening each ear and the base cycle.}
		\label{fig:ear}
	\end{figure}

	\begin{lemma}\label{lemma:strict ear}
		Let $X$ be a subpartition of $[N]$, $E\in \mathsf{E}^X$, and $(V_X,E)$ be a minimally 2-edge-connected graph relative to $X$. Then $(V_X,E)$ admits a strict ear decomposition relative to $X$.
	\end{lemma}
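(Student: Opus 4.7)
The plan is to reduce the statement to the classical Whitney--Robbins ear decomposition theorem applied to the multigraph quotient, then upgrade the decomposition to a strict one using minimality.

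First, I would pass to the multigraph quotient $\bar{G} \coloneqq G^{\mathrm{multi}}$ of $(V_X, E)$ by $X$ (see Definition \ref{def:quotient}). The edges of $\bar{G}$ are in natural bijection with the edges of $E$, and contracting a block $S \in X$ corresponds precisely to collapsing the edges of $\mc{E}^\intra(X)$ restricted to $S$. Consequently, an edge $e \in E$ is a bridge in the augmented graph $(V_X, E \cup \mc{E}^\intra(X))$ if and only if the corresponding edge in $\bar{G}$ is a bridge of $\bar{G}$. Hence $(V_X,E)$ is $2$-edge-connected relative to $X$ (Definition \ref{def:2egde relative}) if and only if $\bar{G}$ is $2$-edge-connected in the multigraph sense, and minimal $2$-edge-connectedness transfers in the same way: removing $e\in E$ destroys $2$-edge-connectedness of the augmented graph iff removing the corresponding edge of $\bar{G}$ destroys $2$-edge-connectedness of $\bar G$.

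Next, I would apply the classical Robbins/Whitney theorem to $\bar{G}$: every $2$-edge-connected multigraph with at least one edge admits an open ear decomposition $(\bar{P_1},\dots,\bar{P_K})$, where $\bar{P_1}$ is a simple cycle and, for each $i\ge 2$, $\bar{P_i}$ is a simple path whose endpoints lie in $\bar{P_1}\cup\cdots\cup\bar{P_{i-1}}$ and whose internal vertices (possibly none) are disjoint from this prior union. The only gap between this statement and the strict decomposition of Definition \ref{def:strict ear relative} is the possibility that some $\bar{P_i}$ with $i \geq 2$ is a single edge, i.e., has no internal vertex.

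To rule this out, I would use minimality. Suppose some $\bar{P_i}$ with $i\ge 2$ consists of a single edge $e$ joining two vertices already in $\bar{P_1}\cup\cdots\cup\bar{P_{i-1}}$. Since $\bar{P_i}$ contributes no new vertices, the remaining sequence $(\bar{P_1},\dots,\bar{P_{i-1}},\bar{P_{i+1}},\dots,\bar{P_K})$ is still an open ear decomposition: each subsequent $\bar{P_j}$ ($j>i$) still has its endpoints in the prior union (which is unchanged) and its internal vertices remain new. By the converse direction of Whitney's theorem, $\bar{G}\setminus\{e\}$ is then $2$-edge-connected, contradicting minimal $2$-edge-connectedness of $\bar{G}$ (equivalently, of $E$ relative to $X$). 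Therefore every $\bar{P_i}$ for $i\ge 2$ has at least one internal vertex, yielding a strict open ear decomposition of $\bar{G}$.

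Finally, I would lift back to $(V_X,E)$: define $P_i$ as the subgraph of $(V_X,E)$ whose edges are the edges of $E$ corresponding to the edges of $\bar{P_i}$, together with their endpoints in $V_X$. Then $P_1\cup\cdots\cup P_K=(V_X,E)$, and by construction the multigraph quotient $P_i/X$ equals $\bar{P_i}$, so all conditions of Definition \ref{def:strict ear relative} hold. The main conceptual obstacle is the correct transfer between $(V_X,E)$ and $\bar{G}$ in the presence of the intra-block edges $\mc{E}^\intra(X)$; once one observes that these only contribute self-loops in the quotient and therefore do not affect bridges, everything reduces cleanly to the classical (multigraph) ear decomposition theory.
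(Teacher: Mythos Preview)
Your approach is correct and matches the paper's own sketch (the proof is omitted there, but the sentence immediately preceding the lemma is precisely your plan: classical ear decomposition plus minimality to exclude trivial ears). The reduction to the multigraph quotient and the observation that bridges transfer is exactly right.

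One imprecision worth flagging: you invoke an \emph{open} ear decomposition for $2$-edge-connected multigraphs, but open ear decompositions (ears with two distinct endpoints in the prior graph) characterize $2$-\emph{vertex}-connectedness. For $2$-edge-connected multigraphs the classical theorem must in general allow \emph{closed} ears (cycles attached at a single vertex of the prior union). For instance, two triangles sharing a vertex, with $X$ the singleton partition, is minimally $2$-edge-connected relative to $X$, yet any ear decomposition has a closed second ear. Your minimality argument still rules out single-edge ears, and closed ears with at least one internal vertex are processed identically by the peeling in Definition~\ref{def:peeling mini} (which treats an ear as an ordered list of edges and never actually uses that the two endpoint blocks differ), so nothing breaks downstream. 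But Definition~\ref{def:strict ear relative} should be read as allowing the two ``endpoints'' of $P_i/X$ to coincide; otherwise the lemma is false on the two-triangle example.
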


	The proof is omitted.

	\subsection{Peeling lemma}\label{sub:peeling}

	Our next goal is to handle the contribution of the $a_{ij}^{\tilde{v}}$ terms appearing in \eqref{eq:expE11}. Notice that the cancelation in the interaction has already allowed us to reduce to Eulerian graphs.

	\begin{remark}[Why a naive strategy fails]\label{remark:crude}
		We claim that a crude peeling procedure fails as soon as $\beta\geq 3$. Suppose to simplify that $\beta\in (3,4)$ and let $p\geq 1$ be such that $\beta\in (\beta_p,\beta_{p+1}]$. Suppose also that $X$ is made only of pure dipoles, and that $(V_X,E_1)$ is connected. Denote $k\coloneqq |V_X|$. Recall $d_{ij}=\dist(\{x_i,y_i\},\{x_j,y_j\})$.
		
		We work on the event where the edge in $ E_1$ such that $d_e$ is maximal is fixed and equal to some edge $e_0$. Let $T$ be a spanning tree of $(V_X,E_1)$, which does not contain $e_0$ (which is possible since $E_1$ is Eulerian, hence not a tree). We have 
		\begin{equation}\label{eq:naive T}
			\prod_{ij\in E_1}\frac{r_ir_j}{d_{ij}\max(d_{ij},r_i,r_j)}\leq \prod_{ij\in T}\frac{r_ir_j}{d_{ij}\max(d_{ij},r_i,r_j)}\min\Bigr(\frac{\max_{i\in V_X}r_i}{\max_{e\in T}d_e },1\Bigr)^2. 
		\end{equation}
		Integrating the above under the non-normalized dipole measure $(C_{\beta,\lambda,\ve_0}\dd \mu_{\beta, \lambda, \ve_0})^{\otimes k}$ of Definition \ref{def:dipole measure}, and integrating out the $d_{ij}$'s, we reduce to controlling
		\begin{equation*}
			N \int \prod_{ ij\in T}(r_ir_j)\prod_{i\in V_X}e^{\beta \g_\lambda(r_i)}r_i \dd r_i= N \int \prod_{i\in V_X}r_i^{\deg_T(i)} e^{\beta \g_\lambda(r_i)}r_i \dd r_i\leq CN\int \prod_{i\in V_X}\max(r_i,\lambda)^{\deg_T(i)+1-\beta}.
		\end{equation*}
		Because $\beta>3$, for every leaf $i$ of $T$ the exponent $\deg_T(i)+1-\beta=-1-(\beta-3)$ is strictly less than $-1$; thus the corresponding $r_i$ are concentrated around $\lambda$, which spoils the estimate as we shall see below. Integrating over all leaves, as well as all but the largest $r_i$ at vertices of degree~${}\ge2$, gives
		\begin{equation*}
			CN(\lambda^{3-\beta})^{|\Leaves(T)|}\int_0^{\ve_0R_{\beta, \lambda} }\max(r,\lambda)^{\alpha}\dd r,
		\end{equation*}
		where, thanks to the handshaking lemma, 
		\begin{equation*}
			\alpha =-1+\sum_{i:\deg_T(i)>1}(\deg_T(i)+2-\beta)=(4-\beta)k-3+|\Leaves(T)|(\beta-3).
		\end{equation*}
		Since $\beta\in(3,4)$, fixing $|\Leaves(T)|$ and letting $k$ grow forces $\alpha>-1$ and therefore
		\begin{equation*}
			\int \prod_{i\in V_X}r_i^{\deg_T(i)} e^{\beta \g_\lambda(r_i)}r_i \dd r_i\leq C(\ve_0 R_{\beta, \lambda})^{(4-\beta)k-2} \Bigr(\frac{\ve_0R_{\beta, \lambda} }{\lambda}\Bigr)^{(\beta-3)|\Leaves(T)|}.
		\end{equation*}
		Together with \eqref{eq:Clambda bound} this yields
		\begin{multline*}
			\frac{1}{(NC_{\beta,\lambda,\ve_0})^{k}} \int_{(\Lambda^{2})^k} \prod_{ij\in E_1}|a_{ij}^{\tilde{v}}|\prod_{i\in V_X}e^{\beta \g_\lambda(x_i-y_i)}\indic_{|x_i-y_i|  \leq \ve_0 R_{\beta, \lambda}} \dd x_i \dd y_i\\ \leq CN^{1-k} \lambda^{(2-\beta)k}R_{\beta, \lambda}^{(4-\beta)k-2} \Bigr(\frac{R_{\beta, \lambda}}{\lambda} \Bigr)^{(\beta-3)l_0},
		\end{multline*}
		for every $l_0\ge1$, once $k\gg_{p,l_0}1$.  Using $R_{\beta,\lambda}^{4-\beta}=\lambda^{2-\beta}$ this simplifies to
		\begin{equation*}
			|\Ksf_{\ve_0}^-(X)|\leq C N^{1-k}R_{\beta,\lambda}^{-2}\Bigr(\frac{R_{\beta,\lambda}}{\lambda}\Bigr)^{(\beta-3)l_0}.
		\end{equation*}
		In contrast, Proposition~\ref{prop:bounded lower} asserts that, whenever $k>p$ and $\beta\neq\beta_{p+1}$,
		\begin{equation*}
			|\Ksf_{\ve_0}^-(X)|\leq C N^{1-k}R_{\beta,\lambda}^{-2}.
		\end{equation*}
		Hence the naive approach, based on the crude estimate \eqref{eq:naive T}, is insufficiently sharp when $\beta\in (3,4)$ (and for $\beta\geq 4$ as well). 
	\end{remark}

	In fact, we need to incorporate some geometric constraints on the loops and some properties of Eulerian graphs in order to replace the crude peeling procedure and the control \eqref{eq:naive T}. The simplest nontrivial case of an Eulerian graph is that of a triangle with edges $12, 23, 13$, between multipoles reduced to singletons (i.e.~pure dipoles).
	Let us describe our procedure in that case to give an idea (as already done in the introduction in Section \ref{sub:more}).

	For the sake of exposition, we assume that $d_{ij} \ge \max(r_i, r_j)$ for every edge. By Lemma \ref{lemma:errortilde},
	\begin{equation}\label{454} |a_{12}^{\tilde{v}}||a_{23}^{\tilde{v}}||a_{31}^{\tilde{v}}|\le C\frac{r_1r_2}{d_{12}^2} \frac{r_2r_3}{d_{23}^2}\frac{r_3r_1}{d_{13}^2}.\end{equation}
	We assume that $r_1$ is the largest of $r_1, r_2, r_3$, and up to changing the labeling, that $d_{12}\ge d_{31}$.
	We may then define 
	$$g_{12}\coloneqq  \frac{r_1^2}{d_{12}^2}\le 1.$$
	Inserting this into \eqref{454} gives
	$$\frac{r_1r_2}{d_{12}^2} \frac{r_2r_3}{d_{23}^2}\frac{r_3r_1}{d_{13}^2}= \frac{r_1^2 r_2^2r_3^2}{ d_{12}^2 d_{23}^2 d_{13}^2}= g_{12} \frac{r_2^2 r_3^2}{d_{13}^2 d_{23}^2}.$$
	Thus, instead of \eqref{454}, we have bounded the product of the $|a_{ij}^{\tilde{v}}|$ by a product where the largest $r_i^2$ has been removed, the edge $12$ has been removed from the product in the denominator, and replaced by a multiplicative ``error'' $g_{12}$, which is smaller than $1$. This means that in effect, we have ``opened'' the triangle and replaced it by the tree $13, 23$, while removing from \eqref{454} the largest $r_i^2$.

	We next explain how to generalize this to arbitrary Eulerian graphs. The key point is that if a graph $(V_X,E)$ is Eulerian relative to $X$, then the augmented graph $(V_X,E\cup\mc{E}^\intra(X))$ is 2-edge-connected, i.e.~$(V_X,E)$ is 2-edge-connected relative to $X$. We will consider a minimally 2-edge-connected subgraph of $(V_X,E)$ relative to $X$, which as such, admits a strict ear decomposition (see Definition~\ref{def:strict ear relative} and Lemma~\ref{lemma:strict ear}).  To construct the desired spanning tree of $(V_X,E)$, we will then open each ear by removing a carefully chosen edge similarly
	to what we did for the triangle case.

	We first isolate the largest dipole size in each multipole.
	\begin{definition}\label{def:rS}For every $S\subset [N]$, we let  $r_S\coloneqq \max_{i\in S} r_i$.
	\end{definition}

	\begin{definition}[Peeling of a minimally \(2\)-edge-connected graph]\label{def:peeling mini}
		Let \(X\) be a subpartition of \([N]\) and let
		\((x_i,y_i)_{i\in V_X}\in(\Lambda^2)^{|V_X|}\).
		For an edge \(ij\), recall
		$d_{ij}=\mathrm{dist}\bigl(\{x_i,y_i\},\{x_j,y_j\}\bigr).
		$
		Let $E\in \mathsf{E}^X$ be a minimally 2-edge-connected graph relative to $X$. 
		
		Let $v_1\in V_X$ be such that $r_{[v_1]^X}$ is maximal within $S\in X$. Let $(P_1,\ldots,P_K)$ be a strict ear decomposition of $E$ relative to $X$ such that $[v_1]^X$ is adjacent to some edge in $P_1$.
		Let 
		\[
		e_0\in E
		\quad\text{with}\quad
		d_{e_0}\coloneqq \max_{ij\in E} d_{ij},
		\]
		and let \(P_{\ell_0}\) be the (unique) ear containing \(e_0\).
		
		We now define a deletion rule for each ear $P_\ell$ together with an orientation on the deleted edge.

		\begin{enumerate}
			\item
			If \(\ell=\ell_0\) and \(d_{e_0}\ge\max_{i\in V_X}r_i\),  
			orient \(e_0\) from the endpoint of smaller radius to the larger, and
			set \(\vec{\mathcal S}(P_\ell)\coloneqq \{\vec e_0\}\).
			
			\item
			Suppose either
			\(\ell\neq \ell_0\) or \(\bigl(\ell=\ell_0\text{ and }d_{e_0}<
			\max r_i\bigr)\).
			Write \(P_\ell\) as the ordered simple path
			\[
			v_1v_2',\,v_2v_3',\dots,\,v_{n-1}v_n' \quad \text{with} \ [v_i']^X=[v_i]^X
			\qquad (n\ge 3).
			\]
			\begin{itemize}
				\item If
				\(d_{v_1v_2'}\ge\max(r_{v_1},r_{v_2'})\)
				and
				\(d_{v_{n-1}v_n'}\ge\max(r_{v_{n-1}},r_{v_n'})\), define
				\begin{equation}\label{eq:SPl}
					\vec{\mathcal S}(P_\ell)\coloneqq 
					\begin{cases}
						v_2'\!\to\! v_1 &\text{if }r_{v_1}\ge r_{v_n'},\\[2pt]
						v_{n-1}\!\to\! v_n' &\text{otherwise}.
					\end{cases}
				\end{equation}
				\item Otherwise, let \(e_\ell\) be the
				lexicographically smallest edge in
				\(\{\,v_1v_2',\,v_{n-1}v_n'\,\}\)
				that satisfies
				\(d_{e_\ell}<\max(r_u,r_v)\) for its endpoints \(u,v\).
				Set
				\[
				\vec{\mathcal S}(P_\ell)\coloneqq \{\vec e_\ell\}.
				\]

			\end{itemize}
			
			\item
			For the base cycle \(P_1\) (if either \(\ell_0\ne 1\) or
			\(d_{e_0}<\max r_i\)):
			write it as
			\((v_1v_2',\,v_2v_3',\dots,\,v_nv_1')\)
			with \([v_1]=[v_1']\) the part of maximal radius in the sense of Definition \ref{def:rS}.
			Use exactly the same deletion and orientation rule as in (2),
			replacing \(v_{n-1}v_n'\) by \(v_nv_1'\).
		\end{enumerate}
		
		Define the oriented discarded-edge set
		\[
		\vec{\mathcal F}^X\bigl((x_i,y_i)_{i\in V_X},E\bigr)
		\coloneqq 
		\bigcup_{\ell=1}^K \vec{\mathcal S}(P_\ell).
		\]
		Let also \(\mathcal F^X((x_i,y_i),E)\) be its undirected version, and set
		\[
		\mathcal T^X((x_i,y_i)_{i\in V_X},E)
		\coloneqq 
		E\setminus
		\mathcal F^X((x_i,y_i)_{i\in V_X},E).
		\]
		The edges in \(\mathcal F^X\) are ``peeled off'', while the set $\mc{T}^X$ is a tree relative to $X$, i.e., \(\mathcal T^X \in \mathsf{T}^X\).
	\end{definition}

	\begin{definition}[Peeling of a Eulerian graph]\label{def:peeling lower bound}
		Let $X$ be a subpartition of $[N]$ and let $E\in \Eulc^X$. Recall $\Peeled_X(E)$ from Definition \ref{def:peeling into minimal}. We extend the map $\mc{T}^X$, $\vec{\mc{F}}^X$ and $\mc{F}^X$ from Definition \ref{def:peeling mini} by setting 
		\begin{align*}
			\mc{T}^X(\cdot,E)&\coloneqq  \mc{T}^X(\cdot,\Peeled_X(E)),\\ \vec{\mc{F}}^X(\cdot,E)&\coloneqq  \vec{\mc{F}}^X(\cdot,\Peeled_X(E)),\\ \mc{F}^X(\cdot,E)&\coloneqq  \mc{F}^X(\cdot,\Peeled_X(E)).
		\end{align*}   
	\end{definition}

	The next lemma allows us to control the product of the weights $|a_{ij}^{\tilde{v}}|$ on the edges of each ear (of the strict ear decomposition) in terms of the product of the $r_i^2$ of the internal vertices of that ear.

	\begin{lemma}[Opening of the paths and cycles]\label{lemma:along cycles} 
		Let $X$ be a subpartition of $[N]$ and let $E'\in \Eulc^X$. Let $a_{ij}^{\abs}$ be as in Definition \ref{def:abs ab}. Let $e_0$ be the edge $e\in E'$ such that $d_e$ is maximal. 
		
		Let $P=(v_1 v_2',v_2v_3',\ldots,v_{n-1}v_n')$ be 
		such that for every $i=2,\ldots,n-1$, $[v_i']^X=[v_i]^X$. With the same notation as in the previous definition, suppose up to reversing the orientation along the path, that $\vec{\mc{S}}(P)=v_1 \to v_2'$. Then, there exists a constant $C>0$ depending on $\beta$ such that
		\begin{align}\label{mul436}
			\prod_{ij\in P}a_{ij}^{\abs}&\leq C^{|P|}\prod_{i=2}^{n-1}(r_{v_i}r_{v_i'})\Biggr(\prod_{ij\in P\setminus\{v_1v_2',v_{n-1}v_n'\}}\frac{1}{d_{ij}^2}\indic_{ d_{ij}\leq 16 \ve_0 \Cut}\indic_{\mc{B}_{ij}^c}\Biggr)\\ & \notag
			\times \left(\max\left( \frac{1}{d_{v_1v_2'}^2},\frac{1}{d_{v_{n-1}v_n'}^2}\right)\indic_{d_{v_1v_2'}<\max(r_{v_1},r_{v_2'})}+\frac{1}{d_{v_{n-1}v_n'}^2}\indic_{d_{v_1v_2'}\geq \max(r_{v_1},r_{v_2'})}\right) g_{v_1v_2'} ,
		\end{align}
		where for every $i,j\in [N], i\neq j$,
		\begin{equation}\label{def:gij}
			g_{ij}\coloneqq \indic_{\mc{B}_{ij}^c} \begin{cases}
				\indic_{d_{ij} \le 16 \ve_0 \Cut} & \text{if $d_{ij}< \max(r_i,r_j)$}\\
				\frac{\max_{v\in V_X}r_v^2}{d_{e_0}^2 }\indic_{d_{e_0}\leq 16 \ve_0\Cut} &\text{if $d_{ij}\geq \max_{v\in V_X}r_v$ \text{and $ij=e_0$}} \\
				\frac{\max(r_i,r_j)^2}{d_{ij}^2}\indic_{d_{ij}\leq 16 \ve_0\Cut } & \text{otherwise}.
			\end{cases}
		\end{equation}
		
	\end{lemma}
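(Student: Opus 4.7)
The estimate follows from a direct computation based on a case analysis dictated by the peeling rule of Definition~\ref{def:peeling mini}. The starting point is the pointwise bound
\[
a_{ij}^{\abs}
=\frac{r_ir_j}{d_{ij}\max(r_i,r_j,d_{ij})}\indic_{\mc B_{ij}^c}\indic_{d_{ij}\le 16\ve_0\Cut}
\;\le\;\frac{r_ir_j}{d_{ij}^2}\indic_{\mc B_{ij}^c}\indic_{d_{ij}\le 16\ve_0\Cut},
\]
which comes from $d_{ij}\le\max(r_i,r_j,d_{ij})$, supplemented by the sharper near-case identity $a_{ij}^{\abs}=\min(r_i,r_j)/d_{ij}\cdot\indic$ when $d_{ij}<\max(r_i,r_j)$. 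Multiplying along $P=(v_1v_2',v_2v_3',\dots,v_{n-1}v_n')$ and regrouping radii by vertex (using $[v_i]^X=[v_i']^X$ for $i=2,\dots,n-1$) gives
\[
\prod_{ij\in P}a_{ij}^{\abs}\;\le\; r_{v_1}\,r_{v_n'}\,\prod_{i=2}^{n-1}(r_{v_i}r_{v_i'})\prod_{ij\in P}\frac{1}{d_{ij}^2}\indic_{\mc B_{ij}^c}\indic_{d_{ij}\le 16\ve_0\Cut},
\]
in which the internal-vertex factor $\prod_{i=2}^{n-1}(r_{v_i}r_{v_i'})$ already matches the corresponding factor on the right-hand side of \eqref{mul436}. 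It thus remains to bound the leftover $r_{v_1}r_{v_n'}$ together with the contributions of the two endpoint edges by a constant times $g_{v_1v_2'}$ multiplied by the endpoint coefficient in the statement.

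When $d_{v_1v_2'}\ge\max(r_{v_1},r_{v_2'})$ (the ``far'' branch of Definition~\ref{def:peeling mini}(2)), the orientation reversal and the peeling rule give $r_{v_n'}\le r_{v_1}=\max(r_{v_1},r_{v_2'})$. The defining formula \eqref{def:gij} then yields $g_{v_1v_2'}\ge\max(r_{v_1},r_{v_2'})^2/d_{v_1v_2'}^2\indic\ge r_{v_1}r_{v_n'}/d_{v_1v_2'}^2\indic$, while the residual factor $1/d_{v_{n-1}v_n'}^2$ on the right of \eqref{mul436} is supplied by the corresponding term in $\prod_{ij\in P}1/d_{ij}^2$; the exceptional case $v_1v_2'=e_0$ is treated analogously using the even larger weight $\max_ir_i^2/d_{e_0}^2$.

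When $d_{v_1v_2'}<\max(r_{v_1},r_{v_2'})$ (the ``near'' branch), $g_{v_1v_2'}$ degenerates to an indicator and one proceeds more delicately. Assuming WLOG $r_{v_2'}\le r_{v_1}$, I would use the sharper bound $a_{v_1v_2'}^{\abs}=r_{v_2'}/d_{v_1v_2'}$, which in view of the constraint $d_{v_1v_2'}>M\min(r_{v_1},r_{v_2'})\ge M r_{v_2'}$ coming from $\mc B_{v_1v_2'}^c$ is bounded by $1/M$. A parallel treatment at $v_{n-1}v_n'$—whose near-or-far status is dictated by the lex-minimality rule of Definition~\ref{def:peeling mini} and the orientation reversal—produces the factor $\max(d_{v_1v_2'}^{-2},d_{v_{n-1}v_n'}^{-2})\cdot g_{v_1v_2'}$ required by \eqref{mul436}. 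Each manipulation costs at most a constant per edge, accumulating to the factor $C^{|P|}$. The principal obstacle is this near-case bookkeeping: one must combine the refined bound with the $\mc B_{ij}^c$ geometric separation so as to simultaneously absorb the two residual endpoint radii $r_{v_1}$ and $r_{v_n'}$, carefully exploiting the discriminating orientation of the peeling rule when the two end edges fall into different cases.
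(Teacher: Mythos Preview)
Your far-case argument is essentially the paper's (one small slip: the peeling rule only gives $r_{v_n'}\le r_{v_1}$, not $r_{v_1}=\max(r_{v_1},r_{v_2'})$; but $\max(r_{v_1},r_{v_2'})^2\ge r_{v_1}\cdot r_{v_n'}$ still follows, so the conclusion survives).

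The near case, however, has a genuine gap. You first establish the crude decomposition
\[
\prod_{ij\in P}a_{ij}^{\abs}\le r_{v_1}r_{v_n'}\prod_{i=2}^{n-1}(r_{v_i}r_{v_i'})\prod_{ij\in P}\frac{1}{d_{ij}^2}\indic,
\]
and then, realising this is too weak when $d_{v_1v_2'}<\max(r_{v_1},r_{v_2'})$, you ``go back'' to the sharper identity $a_{v_1v_2'}^{\abs}=\min(r_{v_1},r_{v_2'})/d_{v_1v_2'}$ and bound it by $1/M$. These two steps are incompatible: in the crude decomposition the factor $r_{v_2'}$ has already been absorbed into the internal product $\prod_{i=2}^{n-1}(r_{v_i}r_{v_i'})$, so you cannot also spend it to make $a_{v_1v_2'}^{\abs}$ into a constant. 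Concretely, replacing $a_{v_1v_2'}^{\abs}$ by $1/M$ and keeping the crude bound on the remaining edges leaves you with $r_{v_n'}\prod_{i=2}^{n-1}r_{v_i}\prod_{i=3}^{n-1}r_{v_i'}$ in the numerator, which is $\prod_{i=2}^{n-1}(r_{v_i}r_{v_i'})$ times $r_{v_n'}/r_{v_2'}$---and the latter is not bounded. Also, your ``WLOG $r_{v_2'}\le r_{v_1}$'' is neither justified (the near-case peeling rule selects a lex-minimal edge, imposing no radius ordering) nor needed.

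The paper's treatment of the near case is much simpler than you anticipate and uses no $\mc B^c$ separation or orientation discrimination. One simply refrains from applying the crude bound at the two endpoint edges, keeping instead the exact factors $1/(d_{ij}\max(d_{ij},r_i,r_j))$ there. Then the always-valid inequalities
\[
\frac{r_{v_1}}{\max(d_{v_1v_2'},r_{v_1},r_{v_2'})}\le 1,\qquad
\frac{r_{v_n'}}{\max(d_{v_{n-1}v_n'},r_{v_{n-1}},r_{v_n'})}\le 1
\]
absorb the leftover $r_{v_1}r_{v_n'}$ regardless of which radii are larger, turning the two endpoint denominators into $1/d_{v_1v_2'}$ and $1/d_{v_{n-1}v_n'}$; the product of these is at most $\max(d_{v_1v_2'}^{-2},d_{v_{n-1}v_n'}^{-2})$, and since $g_{v_1v_2'}=\indic$ in the near case the bound \eqref{mul436} follows.
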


	\medskip

	\begin{figure}[H]
		\centering
		\fbox{\includegraphics[width=0.5\textwidth]{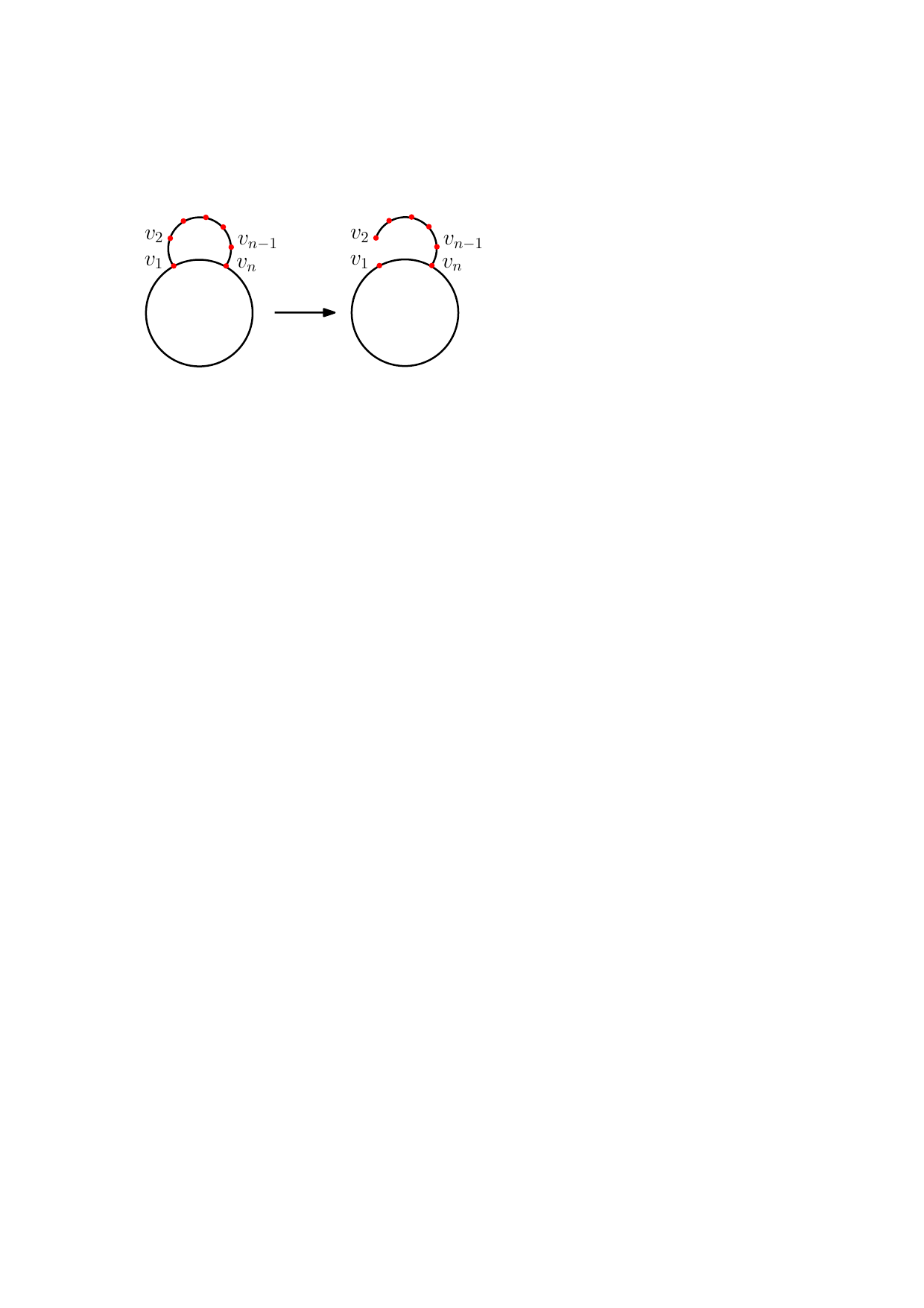}} 
		\caption{Opening of the path $(v_1v_2,v_2v_3,\ldots,v_{n-1}v_n)$ in the pure dipole case.}
	\end{figure}

	\begin{proof}[Proof of Lemma \ref{lemma:along cycles}]
		Suppose that $e_0\notin P$, or that $e_0\in P$ and $d_{e_0}<\max_{i\in V_X} r_i$. By the definition of $a_{ij}^{\mathrm{abs}}$ (see Definition \ref{def:abs ab}),
		\begin{equation}
			\begin{split}
				\label{prodaij}
				\prod_{ij\in P}a_{ij}^{\abs}&=r_{v_1}r_{v_n'} \prod_{i=2}^{n-1}(r_{v_i}r_{v_i'})  \prod_{ij\in P}\frac{1}{d_{ij}\max(d_{ij},r_i,r_j)}\indic_{d_{ij}\leq 16 \ve_0 \Cut} \indic_{\mc{B}_{ij}^c}\\
				&=\left(r_{v_1}r_{v_n'}\prod_{ij\in \{v_1v_2',v_{n-1}v_n'\}}\frac{1 }{d_{ij}\max(d_{ij},r_i,r_j)}\indic_{d_{ij}\leq 16 \ve_0 \Cut} \right)\\ 
				&\times\prod_{i=2}^{n-1}(r_{v_i}r_{v_i'}) \left(\prod_{ij\in P\setminus\{v_1v_2',v_{n-1}v_n'\} }\frac{1}{d_{ij}\max(d_{ij},r_i,r_j)}\indic_{d_{ij}\leq 16 \ve_0 \Cut}\right) \prod_{ij\in P}\indic_{\mc{B}_{ij}^c}.
			\end{split}
		\end{equation}
		
		Suppose that $d_{v_1v_2'}\geq \max(r_{v_1},r_{v_2'})$ and $d_{v_{n-1}v_n'}\geq \max(r_{v_{n-1}},r_{v_n'})$. Then,
		\begin{multline*}
			r_{v_1}r_{v_n'}\prod_{ij\in \{v_1v_2',v_{n-1}v_n'\}}\frac{1}{d_{ij}\max(d_{ij},r_i,r_j)}\indic_{d_{ij}\leq 16 \ve_0 \Cut}\\ \leq C\frac{1}{d_{v_{n-1}v_n'}^2} \frac{r_{v_1}r_{v_n'}}{d_{v_1 v_2'}^2}\indic_{d_{v_{n-1}v_n'}\leq 16 \ve_0\Cut}\indic_{d_{v_1 v_2'}\leq 16 \ve_0\Cut}.
		\end{multline*}
		By Definition \ref{def:peeling lower bound} and our assumption on $\vec{\mathcal{S}}(P)$, in this case  we have $r_{v_1}\geq r_{v_n'}$. Thus,
		\begin{equation*}
			r_{v_1}r_{v_n'}\prod_{ij\in \{v_1v_2',v_{n-1}v_n'\}}\frac{1}{d_{ij}\max(d_{ij}, r_i,r_j)}\indic_{d_{ij}\leq 16 \ve_0 \Cut}\leq C \Bigr(\frac{1}{d_{v_{n-1}v_n'}^2}\indic_{d_{v_{n-1}v_n'}\leq 16\ve_0\Cut}\Bigr)g_{v_1v_2'}.
		\end{equation*}

		Suppose that $d_{v_1v_2'}< \max(r_{v_1},r_{v_2'})$. Then, recalling that we assumed $\vec{\mc{S}}(P)=v_1\to v_2'$, we have 
		\begin{multline*}
			r_{v_1}r_{v_n'}\prod_{ij\in \{v_1v_2',v_{n-1}v_n'\}}\frac{1}{d_{ij}\max(d_{ij},r_i,r_j)}\indic_{d_{ij}\leq 16 \ve_0 \Cut}\leq  \frac{C}{d_{v_1v_2'}d_{v_{n-1}v_n'}} \indic_{d_{v_1 v_2'}\leq 16 \ve_0\Cut}\\
			\leq C\left(\frac{1}{d_{v_1v_2'}^2}+\frac{1}{d_{v_{n-1}v_{n}'}^2}\right)\indic_{d_{v_1 v_2'}\leq 16 \ve_0\Cut}.
		\end{multline*}
		Inserting these into \eqref{prodaij}, we obtain  the result.
		
		The case where $e_0\in P$ and $d_{e_0}\geq \max_{i\in V_X}r_i$ is straightforward.
	\end{proof}

	Consider a minimally $2$-edge-connected graph, which by Lemma \ref{lemma:strict ear} admits a strict ear decomposition. We are going to  apply Lemma \ref{lemma:along cycles} successively to each ear, while replacing each $r_i$ by the maximal one over its multipole.
	This allows us to bound the product of the 
	$a_{ij}^{\abs}$ over the graph by the product of the $r_{S}^2$, {\it after removing the contribution of the largest one}, keeping the product of the $d_{ij}^{-2}$ only on the tree left after peeling, with the multiplicative terms $g_{ij}$ appearing for the peeled-out edges. Notice that the terms $g_{ij}$ make the cluster expansion series summable (and the integral convergent).

	\begin{definition}\label{def:B(T)}
		Let $X$ be a subpartition of $[N]$ and $T\in \mathsf{T}^X$. Let $(x_i,y_i)_{i\in V_X}\in (\Lambda^2)^{|V_X|}$. We let $B((x_i,y_i)_{i\in V_X},T)$ be the set of $T'\in \mathsf{T}^X$ such that 
		\begin{equation*}
			T'\cap \{ ij:d_{ij}\geq \max(r_i,r_j)\}= T\cap \{ ij:d_{ij}\geq \max(r_i,r_j)\}.
		\end{equation*}
	\end{definition}

	It means that $T$ and $T'$ coincide up to some edges $ij$ which are such that $d_{ij}<\max(r_i,r_j)$.

	\begin{coro}[Quadratic estimate for minimal 2-edge-connected graphs]\label{coro:prod a}
		Let $X$ be a subpartition of $[N]$. Let $a_{ij}^{\abs}$ be as in Definition \ref{def:abs ab}. Let $E\in \mathsf{E}^X$ be such that $(V_X,E)$ is minimally 2-edge-connected relative to $X$.
		
		Let $(x_i,y_i)_{i\in V_X}\in (\Lambda^2)^{|V_X|}$. Let $T=\mc{T}^{X}(\cdot,E)$ and $F=\mc{F}^X(\cdot,E)$ be as in Definition \ref{def:peeling lower bound}. Let $B((x_i,y_i)_{i\in V_X},T)$ be as in Definition \ref{def:B(T)}. Let $v_1$ be the index $i\in V_X$ such that $r_i$ is maximal.

		Then, there exists a constant $C>0$ depending on $\beta$ such that 	
		\begin{equation}\label{eq:peeling}
			\prod_{ij\in E}a_{ij}^{\abs}\leq C^{|V_X|}\prod_{S\in X:S\neq [v_1]^X}r_{S}^2\sum_{T'\in B((x_i,y_i)_{i\in V_X},T) }\prod_{ij\in T'}\Bigr(\frac{1}{d_{ij}^2}\indic_{d_{ij}\leq 16 \ve_0 \Cut}\indic_{\mc{B}_{ij}^c}\Bigr) \prod_{ij\in F}g_{ij},
		\end{equation}
		where $g_{ij}$ is as in \eqref{def:gij}.
	\end{coro}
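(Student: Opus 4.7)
The plan is to combine the strict ear decomposition of the minimally \(2\)-edge-connected graph \((V_X,E)\) with a per-ear application of Lemma~\ref{lemma:along cycles}. By Lemma~\ref{lemma:strict ear}, \(E\) admits a strict ear decomposition \((P_1,\dots,P_K)\) relative to \(X\); we choose this decomposition compatibly with Definition~\ref{def:peeling mini} so that \([v_1]^X\) lies on the base cycle \(P_1\) and the orientations \(\vec{\mathcal S}(P_\ell)\) used to define \(\mc{T}^X\) and \(\mc{F}^X\) are those prescribed there.

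The first step is to apply Lemma~\ref{lemma:along cycles} to each ear \(P_\ell\) (\(\ell\ge 2\)) and the variant for the base cycle \(P_1\) (using rule (3) of Definition~\ref{def:peeling mini}). Each such application gives
\[
\prod_{ij\in P_\ell} a_{ij}^{\abs}
\le C^{|P_\ell|}\Bigl(\prod_{i\text{ internal in }P_\ell}r_{v_i}r_{v_i'}\Bigr)
\Bigl(\prod_{ij\in P_\ell\cap \mc{T}^X}\tfrac{1}{d_{ij}^2}\indic_{d_{ij}\le 16\ve_0\Cut}\indic_{\mc{B}_{ij}^c}\Bigr)\,g_{e_\ell},
\]
where \(e_\ell\) is the unique edge of \(P_\ell\) lying in \(\mc{F}^X\); in the well-separated case the factor \(g_{e_\ell}\) is exactly \(\max(r_i,r_j)^2/d_{e_\ell}^2\), while in the short-distance case it is only an indicator, the corresponding extra \(1/d^2\) being absorbed into the tree-factor of the other boundary edge via \(\max(1/d_{v_1v_2'}^2,1/d_{v_{n-1}v_n'}^2)\le 1/d_{v_1v_2'}^2+1/d_{v_{n-1}v_n'}^2\). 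Multiplying these bounds over \(\ell=1,\dots,K\) gives the skeleton of the claimed inequality, with \(\mc{T}^X\) appearing as the product of the retained edges and \(\mc{F}^X\) as the set of peeled edges.

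The main combinatorial step is then to check that each multipole \(S\in X\setminus\{[v_1]^X\}\) is an internal multipole of exactly one ear (counting the base cycle as an ear). This follows from the strictness of the ear decomposition: internal vertices of \(P_\ell\) are disjoint, in the quotient, from \(P_1/X\cup\cdots\cup P_{\ell-1}/X\), so each non-pivot multipole is introduced at a unique stage, at which it is internal. Combined with \(r_{v_i}r_{v_i'}\le r_{[v_i]^X}^2\), this yields
\[
\prod_{\ell=1}^{K}\prod_{i\text{ internal in }P_\ell}r_{v_i}r_{v_i'}
\le \prod_{S\in X\,:\,S\neq[v_1]^X} r_S^2,
\]
which is the desired radii prefactor. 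The peeled subgraph \(\mc{T}^X\) lies in \(\mathsf{T}^X\): its quotient by \(X\) is a spanning tree on \(X\), obtained by opening the base cycle once and opening each subsequent ear once.

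Finally, the sum over \(T'\in B((x_i,y_i)_{i\in V_X},T)\) is used to absorb the \(\max\)-bound that appears in the short-distance case of Lemma~\ref{lemma:along cycles}. For every ear whose two boundary edges are short (i.e.\ satisfy \(d_{ij}<\max(r_i,r_j)\)), either choice of which one to peel produces a valid tree; the two choices differ only on short-distance edges and thus yield distinct elements of \(B((x_i,y_i),T)\), while the corresponding factors \(g_{ij}\) are indicators and agree after the split. Summing the splitting \(\max(1/a^2,1/b^2)\le 1/a^2+1/b^2\) over all such short-distance ears produces exactly the sum over \(T'\in B\) on the right-hand side; the long-distance boundary edges, by contrast, are forced to belong to \(\mc{F}^X\cap F\) and contribute the genuine \(\max(r,r)^2/d^2\) factors in \(g_{ij}\). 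The main obstacle is bookkeeping: keeping track of which multipole contributes its \(r_S^2\) in which ear, and matching, for every \(T'\in B\), the resulting peeled set with the fixed \(F=\mc{F}^X(\cdot,E)\) used in the statement; both are handled by the strictness of the ear decomposition and by the case analysis of Definition~\ref{def:gij}.
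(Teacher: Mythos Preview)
Your approach is essentially the same as the paper's: iterate Lemma~\ref{lemma:along cycles} over the ears of the strict ear decomposition, collect the $r_S^2$ factors from the internal multipoles of each ear, and absorb the short-edge ambiguities into the sum over $T'\in B$. The one step you omit is the conversion of the accumulated constant $C^{|E|}$ into $C^{|V_X|}$; the paper handles this by noting that minimal $2$-edge-connectivity forces $E\setminus T$ to be a forest relative to $X$, hence $|E|\le 2(|V_X|-1)$---this is in fact the only place minimality (as opposed to mere $2$-edge-connectivity) enters.
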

	
	\begin{proof}
		Applying iteratively Lemma \ref{lemma:along cycles} to every ear of the strict ear decomposition and to the base cycle, we get
		\begin{equation}\label{eq:peeling first}
			\prod_{ij\in E}a_{ij}^{\abs}\leq C^{|E|}\prod_{S\in X:S\neq [v_1]^X}r_{S}^2\sum_{T'\in B((x_i,y_i)_{i\in V_X},T) }\prod_{ij\in T'}\Bigr(\frac{1}{d_{ij}^2}\indic_{d_{ij}\leq 16 \ve_0 \Cut}\indic_{\mc{B}_{ij}^c}\Bigr)\prod_{ij\in F}g_{ij}.
		\end{equation}
		
		Since $E$ is minimally 2-edge-connected relative to $X$, for every $T\in \mathsf{T}^X$, $E\setminus T$ is a forest relative to $X$. In particular, $|E\setminus T|\leq |V_X|-1$, which yields 
		\begin{equation*}
			|E|\leq 2(|V_X|-1).
		\end{equation*}
		Inserting this into \eqref{eq:peeling first} concludes the proof.
	\end{proof}

	We now state a technical lemma which will be used later in the activity controls. We begin with a useful definition \cite{MR1154585}.

	\begin{definition}[Pseudoforest]\label{def:pseudo forest}
		Let $V\subset [N]$. 
		We denote by $\PF(V)$ the set of edges $F$ on $V$ such that $(V,F)$ is a pseudoforest, i.e., each connected component of $(V,F)$ has at most one cycle. Equivalently, $F\in \PF(V)$ if and only if $(V,F)$ admits an orientation such that the out-degree of every vertex is at most $1$.
	\end{definition}

	\begin{lemma}[Properties of the peeling]\label{lemma:technical peeling}
		Let $X$ be a subpartition of $[N]$ such that for each $S\in X$, $|S|\le p(\beta)$. Let $(x_i,y_i)_{i\in V_X}\in(\Lambda^2)^{|V_X|}$. Let $E\in \Eulc^X$. Set
		\begin{equation*}
			T\coloneqq \mc{T}^X( (x_i,y_i)_{i\in V_X},E)\quad \text{and}\quad F\coloneqq \mc{F}^X((x_i,y_i)_{i\in V_X},E).
		\end{equation*}
		\begin{enumerate}
			\item 
			Then, $F\in \PF(V_X)$. Moreover,
			\begin{equation}\label{eq:sum PF}
				\sum_{F\in \PF(V_X) }\prod_{ij\in F}g_{ij}\leq \prod_{i\in V_X}\left( 1+\sum_{j\in V_X:j\neq i}g_{ij}\right).
			\end{equation}
			\item We have
			\begin{equation}\label{eq:hate F}
				\prod_{ij\in F}g_{ij}\leq \min\Bigr(\frac{\max_{i\in V_X}r_i}{\max_{e\in T}d_{e}},1\Bigr)^2.
			\end{equation}
			\item Let $T_0'\in \mathsf{T}^X$. Then, on $\cap_{i,j\in V_X, i\neq j} \mc{B}_{ij}^c$, there exists a constant $C$ depending on $p(\beta)$ such that 
			\begin{equation}\label{eq:claimBT}
				|\{T_0\in \mathsf{T}^X:T_0'\in B((x_i,y_i)_{i\in V_X},T_0)\}|\leq C^{|V_X|}\prod_{i\in V_X}\log_M \Bigr(\frac{16 \ve_0\Cut}{r_i}\Bigr).
			\end{equation}
		\end{enumerate}
		
	\end{lemma}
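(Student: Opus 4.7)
The three parts can be addressed essentially independently, with part~(3) being the main technical obstacle.

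For part~(1), I would work from the orientation characterization of pseudoforests in Definition~\ref{def:pseudo forest}. Inspecting Definition~\ref{def:peeling mini}, each oriented peeled edge $\vec{\mathcal S}(P_\ell)$ has its tail at an \emph{internal} vertex of the ear $P_\ell$ (explicitly $v_2'$ or $v_{n-1}$ in rule~(2), and analogously in the base-cycle and $e_0$ cases). In a strict ear decomposition, the internal vertices of distinct ears are pairwise disjoint, so each vertex of $V_X$ is the tail of at most one peeled edge. Hence $\vec{\mathcal F}^X((x_i,y_i),E)$ orients $F$ with out-degree $\leq 1$, establishing $F\in \PF(V_X)$. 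For the summation bound \eqref{eq:sum PF}, I would parametrize pseudoforests by such orientations: each vertex contributes either no out-edge (weight $1$) or a single out-edge to some $j$ (weight $g_{ij}$), giving
\[
\sum_{F\in \PF(V_X)}\prod_{ij\in F}g_{ij} \;\leq\; \prod_{i\in V_X}\Bigl(1+\sum_{j\neq i}g_{ij}\Bigr),
\]
from which \eqref{eq:sum PF} follows after using the trivial bound $g_{ij}\leq 1$ from \eqref{def:gij} to control the cross terms in the product expansion.

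For part~(2), I would split on whether $e_0$ is peeled. By inspection of \eqref{def:gij} every $g_{ij}$ is bounded by $1$. If $e_0\in F$, then Definition~\ref{def:peeling mini} case~(1) applies, requiring $d_{e_0}\geq \max_i r_i$, so $g_{e_0}=(\max_i r_i)^2/d_{e_0}^2$. Since $d_{e_0}\geq \max_{e\in T}d_e$, we obtain
\[
\prod_{ij\in F}g_{ij}\;\leq\; g_{e_0} \;\leq\; \Bigl(\tfrac{\max_i r_i}{\max_{e\in T}d_e}\Bigr)^2.
\]
If instead $e_0\notin F$, i.e., $e_0\in T$, then case~(1) did \emph{not} fire so $d_{e_0}<\max_i r_i$; in this subcase $\max_{e\in T}d_e=d_{e_0}<\max_i r_i$, so the right-hand side of \eqref{eq:hate F} equals $1$, and the claim holds trivially from $g_{ij}\leq 1$.

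The main obstacle is part~(3). I would decompose $T_0=L\cup S$, where $L\coloneqq T_0\cap\{ij:d_{ij}\geq \max(r_i,r_j)\}$ is forced to equal $T_0'\cap\{ij:d_{ij}\geq \max(r_i,r_j)\}$ and $S\subset\{ij:d_{ij}<\max(r_i,r_j)\}$ collects the ``short'' edges. The count thus reduces to bounding the number of short-edge completions of the fixed long subforest $L$ to an element of $\mathsf{T}^X$. On $\bigcap_{k\neq l}\mc{B}_{kl}^c$, any short edge $ij$ incident to a vertex $i$ satisfies
\[
M\min\bigl(\max(r_i,\lambda),\max(r_j,\lambda)\bigr)\;\leq\; d_{ij}\;<\; \max(r_i,r_j),
\]
so $d_{ij}$ can lie in at most $\log_M(16\ve_0\Cut/r_i)$ dyadic shells around $\{x_i,y_i\}$ (using $r_j\leq 16\ve_0\Cut$). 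A packing argument exploiting the separation $d_{kl}\geq M\min(\max(r_k,\lambda),\max(r_l,\lambda))$ enforced by $\mc{B}_{kl}^c$ bounds the number of candidate partners $j$ within any fixed dyadic shell by a constant depending only on $p(\beta)$. Organizing the count as a rooted-tree enumeration of completions of $L$, where each non-root vertex is assigned at most one short ``parent'', yields the product bound $C^{|V_X|}\prod_{i\in V_X}\log_M(16\ve_0\Cut/r_i)$. The delicate points are to verify that the rooted enumeration does not overcount tree completions with prescribed long skeleton~$L$, and that the packing constant is controlled solely in terms of $p(\beta)$ (and in particular is independent of $\lambda$), which is precisely where the geometric separation from $\mc{B}_{kl}^c$ is crucial.
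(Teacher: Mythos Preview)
Your plan matches the paper's approach in all three parts. Two minor corrections are worth flagging.

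In part~(1), the product bound $\prod_{i\in V_X}\bigl(1+\sum_{j\neq i}g_{ij}\bigr)$ that you derive is exactly what the paper proves and what is actually used downstream (see \eqref{eq:sumPF1}); the $\sum_{i\in V_X}$ in the displayed statement \eqref{eq:sum PF} is evidently a typo, and your attempt to pass from the product to a sum via $g_{ij}\leq 1$ does not work (take all $g_{ij}=1$: the left-hand side is at least the number of spanning trees $|V_X|^{|V_X|-2}$, while the sum is only $|V_X|^2$). No such reduction is needed.

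In part~(2), the implication ``$e_0\in F\Rightarrow$ case~(1) of Definition~\ref{def:peeling mini} fires'' is not justified: when $d_{e_0}<\max_i r_i$, the edge $e_0$ can still land in $F$ via rule~(2) or~(3) if it happens to sit at an endpoint of its ear. The clean fix (and the paper's version) is to split instead on whether $d_{e_0}\geq \max_i r_i$: if so, case~(1) fires, $e_0\in F$, and your computation goes through; if not, then $\max_{e\in T}d_e\leq d_{e_0}<\max_i r_i$ and the right-hand side of \eqref{eq:hate F} equals~$1$.

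For part~(3), the paper first uses the symmetry $T_0'\in B(\cdot,T_0)\Leftrightarrow T_0\in B(\cdot,T_0')$ to reduce the count to $|B(\cdot,T_0')|$, then runs exactly your dyadic-shell packing argument (two candidates $j_1,j_2$ in the same shell around $i$ satisfy $d_{j_1j_2}\leq M\min(r_{j_1},r_{j_2})$, forcing them into a common multipole of size at most $p(\beta)$) to bound the short-edge candidates at each vertex by $C\log_M(16\ve_0\Cut/r_i)$, and concludes by taking the product over $i\in V_X$. Your rooted-tree enumeration is a valid way to organize that last step.
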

	
	\begin{proof}
		Let $\vec{F}\coloneqq \vec{\mc{F}}^X(\cdot,E)$. We claim that every $i\in V_X$ has out degree $\le 1$ in $\vec{F}$. 
		
		Indeed, by construction, when $i\to j$ is added to $\vec{F}$, then $i$ is an internal vertex of an ear, glued to the anterior ear at the vertex $j$. Therefore, by  definition of a strict ear decomposition, an edge which is outgoing from $i$ can only be discarded  once. Hence,  $F\in \PF(V_X)$. 
		
		For every $F\in \PF(V_X)$, let $O(F)$ be an orientation of $F$ such that the out degree of every vertex is at most $1$. Then,
		\begin{equation*}
			\sum_{F\in \PF(V_X) }\prod_{ij\in F}g_{ij}=\sum_{F\in \PF(V_X)}\prod_{i\to j\in O(F)} g_{ij}\leq  \prod_{i\in V_X}\left( 1+\sum_{j\in V_X:j\neq i}g_{ij}\right).
		\end{equation*}
		where the $1$ serves to include the possibility of an empty product.

		We turn to the proof of \eqref{eq:hate F}. Suppose that $\max_{e\in T}d_e\geq \max_{i\in V_X}r_i$, otherwise the result is clear. Let $e_0$ be the index of the largest $d_e$ for $e\in T\cup F$. Then, $d_{e_0}\geq \max_{i\in V_X}r_i$. Therefore, by  definition of the peeling algorithm (see Definition \ref{def:peeling mini}, case $\ell=\ell_0$), we have that $e_0\in F$. Therefore, by \eqref{def:gij}, 
		\begin{equation*}
			g_{e_0}\leq \Bigr(\frac{\max_{i\in V_X}r_i}{d_{e_0} }\Bigr)^2\leq \Bigr(\frac{\max_{i\in V_X}r_i}{\max_{e\in T}d_{e} }\Bigr)^2=\min\Bigr(\frac{\max_{i\in V_X}r_i}{\max_{e\in T}d_{e}},1\Bigr)^2.
		\end{equation*}
		Using that $g_{ij}\leq 1$ for every $ij\in F\setminus\{e_0\}$,  this proves \eqref{eq:hate F}.

		Finally, we establish \eqref{eq:claimBT}. Notice that $T_0'\in B((x_i,y_i)_{i\in V_X},T_0)$ is equivalent to $T_0\in B((x_i,y_i)_{i\in V_X},T_0')$. Hence,
		\begin{equation*}
			|\{T_0\in \mathsf{T}^X:T_0'\in B((x_i,y_i)_{i\in V_X},T_0)\}|=|B((x_i,y_i)_{i\in V_X},T_0')|. 
		\end{equation*}

		Let $i\in V_X$. For every $t\in (\lambda,16\ve_0\Cut)$, let
		\begin{equation*}
			\mc{N}_i(t)=\{ j\in V_X:j\neq i,r_j\geq r_i, Mr_i\leq d_{ij}\leq r_j, d_{ij}\in [t,\tfrac{M}{3}t)\}.
		\end{equation*}
		Since we work on $\cap_{i\neq j} \mc{B}_{ij}^c$, we may restrict  to the situation where $r_j \ge r_i$ and $d_{ij} \ge M r_i$.
		Notice that if $j_1$, $j_2\in \mc{N}_i(t)$, then using $\dist(A,C)\leq \dist(A,B)+\diam(B)+\dist(B,C)$, we get
		\begin{equation*}d_{j_1j_2}\leq \frac{2M}{3}t+r_i \leq M\min(r_{j_1},r_{j_2}).
		\end{equation*}Therefore, recalling Definition \ref{def:multipoles mc}, we see that $j_1$ and $j_2$ are in the same multipole. It follows that $|\mc{N}_i(t)|\leq p(\beta)$. Therefore, summing over $\tfrac{M}{2}$-adic scales, we get that 
		\begin{equation*}
			|\{ j\in V_X:j\neq i,r_j\geq r_i, Mr_i\leq d_{ij}\leq r_j\}|\leq C\log_M\Bigr(\frac{16\ve_0\Cut}{r_i}\Bigr).
		\end{equation*}
		
		Taking the product of this over $i\in V_X$, this concludes the proof of \eqref{eq:claimBT}.
	\end{proof}

	\subsection{Control of bounded clusters}\label{sub:bounded lower proof}
	
	We now proceed to the proof of Proposition \ref{prop:bounded lower}.

	\begin{proof}[Proof of Proposition \ref{prop:bounded lower}]
		Let us define
		\begin{equation}\label{defI1}
			I_1(S)=\int_{(\Lambda^2)^{|S|} } \indic_{\mc{B}_S}\prod_{ i,j\in S:i<j}e^{-\beta \tilde{v}_{ij}}\indic_{\mc{A}_{ij}}\prod_{i\in S }e^{-L\frac{|x_i-y_i|^2}{(\ve_0\Cut )^2}} \prod_{i\in S} e^{\beta \g_\lambda(x_i-y_i)}\indic_{|x_i-y_i|\leq \ve_0\Cut}\dd x_i \dd y_i.
		\end{equation}
		Observe that
		\begin{equation*}
			\Msf_{\ve_0}^{-}(S)=\frac{1}{(NC_{\beta,\lambda,\ve_0})^{|S|}}I_1(S).
		\end{equation*}
		Taking $Z=\emptyset$, $\tau= (\tau_0)$ and $w=\tilde{v}$ in Lemma \ref{lemma:lowerM+}, we get that there exists $C>0$ depending only on  $\beta$, $M$ and $|S|$  such that
		\begin{equation*}
			\frac{1}{C}N\lambda^{(2-\beta)|S|}\lambda^{2(|S|-1)} \leq I_1(S)\leq CN \lambda^{(2-\beta)|S|}\lambda^{2(|S|-1)}.
		\end{equation*}
		Moreover, by \eqref{eq:Clambda bound}, there exists $C>0$ depending on $\beta$ such that $C_{\beta,\lambda,\ve_0}\geq \frac{1}{C}\lambda^{2-\beta}$. This proves \eqref{eq:MX -low} and \eqref{eq:MX -up}.
		
		Let us now prove \eqref{eq:Kdip bound}. Set 
		\begin{equation*}
			I^\dip_{k}(E)=\int_{(\Lambda^2)^{|V|}}\prod_{ij\in E}f_{ij}^v \prod_{i\in V}e^{\beta \g_\lambda(x_i-y_i)}\dd x_i \dd y_i.
		\end{equation*}
		In view of Definition \ref{def:dipole activity}, the definition of $f_{ij}$ and \eqref{def:Clambda}, we have that 
		\begin{equation*}
			\Ksf_{\beta,\lambda}^\dip(V)=\sum_{E\in \mc{G}_c(V)} \frac{I^\dip_{|V|}(E)}{(NC_{\beta,\lambda})^{|V|}},
		\end{equation*}
		where $\mc{G}_c(V)$ stands for the set of collections of edges $E$ on $V$ such that $(V,E)$ is connected.
		
		In Lemma \ref{lemma:int dipole}, we show that there exists $C>0$ depending on $\beta$ and $|V|$ such that for every $E\in \mc{G}_c(V)$, 
		\begin{equation*}
			| I^\dip_{k}(E)|\leq CN\lambda^{k(2-\beta)+2(k-1)}.
		\end{equation*}
		Besides, recall  from Lemma \ref{lem411} that there exists $C>0$ depending on $\beta$ such that $C_{\beta,\lambda}\geq \frac{1}{C}\lambda^{2-\beta}$. Hence, combining the above relations, and using that the number of graphs is bounded by a constant depending only on $|V|$ establishes \eqref{eq:Kdip bound}.

		It therefore remains to prove \eqref{eq:K- bound}. Let $k=|V_X|$. Recall \eqref{eq:activity lower}.
		We use that 
		\begin{equation}\label{eq:stt}
			\begin{split}
				|\Ksf_{\ve_0}^-(X)|&\leq |\mathsf{E}^X|^2\max_{\substack{n\geq 0, X_1,\ldots,X_n\mathrm{ disjoint} \\ E_1\in \mathsf{E}^{X_1},\ldots,E_n\in \mathsf{E}^{X_n}\\ F\in \mathsf{E}^{\Coarse_X(X_1,\ldots,X_n)}}}\left|\dE_{\Psf_{X}^{-,\ve_0}}\left[\prod_{ij\in E_1\cup\cdots\cup E_n }f_{ij}^{\tilde{v}}\prod_{ij\in \cup_l\mc{E}^\inter(X_l)}\indic_{\mc{B}_{ij}^c}\prod_{ij\in F}(-\indic_{\mc{B}_{ij}})\right]\right|\\
				&\leq 2^{2\binom{k}{2}}\max_{\substack{n\geq 0, X_1,\ldots,X_n\mathrm{ disjoint} \\ E_1\in \mathsf{E}^{X_1},\ldots,E_n\in \mathsf{E}^{X_n}\\ F\in \mathsf{E}^{\Coarse_X(X_1,\ldots,X_n)} }} \left|\dE_{\Psf_{X}^{-,\ve_0}}\left[\prod_{ij\in E_1\cup\cdots \cup E_n}f_{ij}^{\tilde{v}}\prod_{ij\in\cup_l\mc{E}^\inter(X_l)}\indic_{\mc{B}_{ij}^c}\prod_{ij\in F}(-\indic_{\mc{B}_{ij}})\right]\right|.
			\end{split}
		\end{equation}
		Moreover, by the parity argument of Lemma \ref{lemma:cancellation odd}, there exists $C>0$ depending on $k$ such that
		\begin{multline*}
			|\Ksf_{\ve_0}^-(X)|\leq C\max_{\substack{n\geq 0, X_1,\ldots,X_n\mathrm{ disjoint}}} \Bigr(\prod_{l=1}^n\max_{\substack{E_{l,1}\in \Eul^{X_l},\\ E_{l,2}\subset \mc{E}^{\inter}(X_l):\\ 
					E_{l,1}\cap E_{l,2}=\emptyset \\
					E_{l,1 }\cup E_{l,2}\in \mathsf{E}^{X_l} } } \Bigr)\max_{F\in \mathsf{E}^{\Coarse_X(X_1,\ldots,X_n)}}  \\ \left|\dE_{\Psf_{X}^{-,\ve_0}}\left[\prod_{ij\in \cup_l E_{l,1}}a_{ij}^{\tilde{v}} \prod_{ij\in \cup_l E_{l,2}}b_{ij}^{\tilde{v}}\prod_{ij\in\cup_l\mc{E}^\inter(X_l)}\indic_{\mc{B}_{ij}^c}\prod_{ij\in F}(-\indic_{\mc{B}_{ij}})\right]\right| .
		\end{multline*}
		Recall from Definition \ref{def:abs ab} that there exists $C>0$ depending on $\beta$ such that $|a_{ij}^{\tilde{v}}|\indic_{\mc{B}_{ij}^c}\leq Ca_{ij}^\abs$ and $|b_{ij}^{\tilde{v}}|\indic_{\mc{B}_{ij}^c}+\indic_{\mc{B}_{ij}}\leq Cb_{ij}^\abs$. Thus, taking the maximum according to the connected components of the graph $(V_X,\cup_l E_{1,l}\cup \mc{E}^\intra(X))$ first, we get that there exists $C>0$ depending on $\beta$ and $k$ such that
		\begin{multline*}
			|\Ksf_{\ve_0}^-(X)|\leq C\max_{\substack{n\geq 0, X_1,\ldots,X_n\mathrm{ disjoint}}}\Bigr(\prod_{l=1}^n \max_{E_l\in \Eulc^{X_l}}\Bigr)\max_{T^b\in \mathsf{T}^{\Coarse_X(X_1,\ldots,X_n)}}
			\dE_{\Psf_{X}^{-,\ve_0}}\left[\prod_{ij\in \cup_l E_{l}}a_{ij}^{\abs} \prod_{ij\in T^b}b_{ij}^{\abs}\right].
		\end{multline*}

		For every $l\in [n]$, let $\hat{S}_l(X_1,\ldots,X_n)$ be the $S\in X_l$ such that $r_S$ is maximal. Applying Corollary \ref{coro:prod a} and the estimate \eqref{eq:hate F} of Lemma \ref{lemma:technical peeling}, we get that there exists $C>0$ depending on $\beta, M$ and $k$ such that 
		\begin{multline}\label{eq:int rhs}
			|\Ksf_{\ve_0}^-(X)|\leq C\max_{\substack{n\geq 0, X_1,\ldots,X_n\mathrm{ disjoint}}}\Bigr(\prod_{l=1}^n \max_{E_l\in \Eulc^{X_l}}\max_{T_l^a\in \mathsf{T}^{X_l} }\Bigr) \max_{T^b\in \mathsf{T}^{\Coarse_X(X_1,\ldots,X_n)}}\\
			\frac{1}{(NC_{\beta,\lambda,\ve_0})^k \Msf_{\ve_0}^-(X)}\int_{(\Lambda^2)^k } \prod_{l=1}^n \prod_{\substack{S\in X_l:\\ S\neq \hat{S}_l(X_1,\ldots,X_n) }}r_S^2 \prod_{ij\in T_l^a} \frac{1}{d_{ij}^2}\indic_{d_{ij}\leq 16\ve_0\Cut}\indic_{\mc{B}_{ij}^c}\prod_{ij\in T^b}b_{ij}^\abs \min\Bigr(\frac{\max_{i\in V_{X}}r_i}{\max_{ij\in  \cup_l T_l^a}d_{ij}},1\Bigr)^2\\ \times \prod_{S\in X}\indic_{\mc{B}_S}\prod_{i\in V_X}e^{\beta \g_\lambda(x_i-y_i)}\indic_{|x_i-y_i|\leq \ve_0 \Cut}\dd x_i\dd y_i.
		\end{multline}

		In Lemma \ref{lemma:integral small -}, we control the integral in the right-hand side of \eqref{eq:int rhs}. Inserting \eqref{eq:bound J1} into \eqref{eq:int rhs}, there exists $C>0$ depending on $\beta,M$ and $k$ such that
		\begin{equation*}
			|\Ksf_{\ve_0}^-(X)|\leq C\frac{N\lambda^{(2-\beta)k}}{(NC_{\beta,\lambda,\ve_0})^k \Msf_{\ve_0}^-(X)} \gamma_{\beta,\lambda,k},
		\end{equation*}
		where $\gamma_{\beta,\lambda,k}$ is as in Definition \ref{def:gamma beta}. Recall that $C_{\beta,\lambda,\ve_0}\geq \frac{1}{C}\lambda^{2-\beta}$. Moreover by \eqref{eq:MX -low} and \eqref{eq:MX -up}, there exists $C>0$ depending on $\beta$, $M$ and $k$ such that $\Msf_{\ve_0}^-(X)\geq \frac{1}{C}\Msf^0_{\ve_0} (X)$. Assembling the above concludes the proof of \eqref{eq:K- bound}.
	\end{proof}

	\subsection{Control of expansion errors}\label{sub:exp -}

	We can now complete the proof of Proposition \ref{prop:expansion -}.

	\begin{proof}[Proof of Proposition \ref{prop:expansion -}]
		We first prove \eqref{eq:M-diff}. Let $I_1(S)$ be as in \eqref{defI1} and
		\begin{equation*}
			I_1'(S)\coloneqq \int_{(\Lambda^2)^{|S|} } \indic_{\mc{B}_S}\prod_{ i,j\in S:i<j}e^{-\beta v_{ij}}\indic_{\mc{A}_{ij}}\prod_{i\in S} e^{\beta \g_\lambda(x_i-y_i)}\dd x_i \dd y_i.
		\end{equation*}
		By Definition \ref{def:multipolemeasure}, notice that
		\begin{equation}\label{eq:M-M0}
			\Msf_{\ve_0}^-(S)-  \Msf^0_{\infty} (S)=\frac{1}{(N C_{\beta,\lambda,\ve_0})^{|S|}}(I_1(S)-I_1'(S))+I_1'(S)\left(\frac{1}{(N C_{\beta,\lambda,\ve_0})^{|S|}}-\frac{1}{(N C_{\beta,\lambda})^{|S|}}\right).
		\end{equation}
		Taking $Z=\emptyset$  and $\tau = 8 \ve_0 \Cut$ in  Lemma \ref{lemma:general exp M}, we obtain that there exists $C>0$ depending on $\beta,M,|S|$ and $\ve_0$ such that
		\begin{equation*}
			| I_1(S)-I_1'(S)|\leq CN\Cut^{-2}\lambda^{(2-\beta)|S|}.
		\end{equation*}
		Therefore, by \eqref{eq:Clambda bound}, there exists $C>0$ depending on $\beta,M,|S|$ and $\ve_0$ such that
		\begin{equation}\label{eq:hii1}
			\frac{1}{(N C_{\beta,\lambda,\ve_0})^{|S|}}|I_1(S)-I_1'(S)| \leq CN^{1-|S|}\Cut^{-2}.
		\end{equation}
		On the other hand, by \eqref{eq:comparisons Cve0}, there exists $C>0$ depending on $\beta$, $|S|$, and $\ve_0$ such that
		\begin{equation*}
			\left|\frac{1}{C_{\beta,\lambda,\ve_0}^{|S|}}-\frac{1}{C_{\beta,\lambda}^{|S|}}\right|\leq C\frac{\Cut^{-2}}{\lambda^{(2-\beta)|S|}}.
		\end{equation*}
		In Lemma \ref{lemma:lowerM+} in the Appendix, see \eqref{eq:int M lower},  we show that there exists $C>0$ depending on $\beta$, $M$, and $|S|$ such that
		\begin{equation*}
			I_1'(S)\leq CN\lambda^{(2-\beta)|S|+2(|S|-1)}.
		\end{equation*}
		Hence, there exists $C>0$ depending on $\beta,M,|S|$ and $\ve_0$ such that
		\begin{equation}\label{eq:hii2}
			I_1'(S)\left|\frac{1}{(N C_{\beta,\lambda,\ve_0})^{|S|}}-\frac{1}{(N C_{\beta,\lambda})^{|S|}}\right|\leq CN^{1-|S|}\lambda^{2(|S|-1)} \Cut^{-2}\leq CN^{1-|S|}\Cut^{-2}.
		\end{equation}
		Combining \eqref{eq:hii1} and \eqref{eq:hii2} proves \eqref{eq:M-diff}.

		Let us now prove \eqref{eq:K-diff}. Let $X$ be a subpartition of $[N]$ and set $k\coloneqq |V_X|$. Recall  $\Ksf_{\ve_0}^-(X)$ and $\Ksf_\infty^0$ are   defined in Definition \ref{def:activity lower}. 	Denote
		\begin{multline*}
			I_2(E_1,\ldots,E_n,F)= \int_{(\Lambda^2)^{k} } \prod_{ij\in \cup_l E_l}f_{ij}^{\tilde{v}}\prod_{ij\in F}(-\indic_{\mc{B}_{ij}})\prod_{ij\in \cup_{l=1}^n \mc{E}^\inter(X_l)}\indic_{\mc{B}_{ij}^c}  \left(\prod_{S\in X}\indic_{\mc{B}_S} \prod_{ij\in S:i<j}e^{-\beta \tilde{v}_{ij}}\indic_{\mc{A}_{ij}}\right) \\ \times\prod_{i\in V_X}e^{-L\frac{|x_i-y_i|^2}{(\ve_0\Cut)^2}}\prod_{i\in V_X}e^{\beta \g_\lambda(x_i-y_i)}\indic_{|x_i-y_i|\leq \ve_0\Cut}\dd x_i \dd y_i
		\end{multline*}
		and
		\begin{multline*}
			I_2'(E_1,\ldots,E_n,F)= \int_{(\Lambda^2)^{k} } \prod_{ij\in \cup_l E_l}f_{ij}^{v}\prod_{ij\in F}(-\indic_{\mc{B}_{ij}})\prod_{ij\in \cup_{l=1}^n \mc{E}^\inter(X_l)}\indic_{\mc{B}_{ij}^c}  \left(\prod_{S\in X}\indic_{\mc{B}_S} \prod_{ij\in S:i<j}e^{-\beta v_{ij}}\indic_{\mc{A}_{ij}}\right) \\ \times\prod_{i\in V_X}e^{\beta \g_\lambda(x_i-y_i)}\dd x_i \dd y_i
		\end{multline*}
		Fix $X_1,\ldots,X_n\subset X$ disjoint, $E_1\in \mathsf{E}^{X_1},\ldots,E_n\in \mathsf{E}^{X_n}$ and $F\in \mathsf{E}^{\Coarse_X(X_1,\ldots,X_n)}$. We may write
		\begin{equation*}
			\dE_{\Psf_{X}^{-,\ve_0}}\left[\prod_{ij\in E_1\cup \cdots \cup E_n}f^{\tilde{v}}_{ij}\prod_{ij\in \mc{E}^\inter(X_1)\cup \cdots \cup \mc{E}^\inter(X_n)}\indic_{\mc{B}_{ij}^c}\prod_{ij\in F}(-\indic_{\mc{B}_{ij}})\right]=\frac{I_2(E_1,\ldots,E_n,F)}{\Msf_{\ve_0}^-(X)(NC_{\beta,\lambda,\ve_0})^{k}}
		\end{equation*}
		and
		\begin{equation*}
			\dE_{\Psf_{X}^{0,\infty}}\left[\prod_{ij\in E_1\cup \cdots \cup E_n}f^{v}_{ij}\prod_{ij\in \mc{E}^\inter(X_1)\cup \cdots \cup \mc{E}^\inter(X_n)}\indic_{\mc{B}_{ij}^c}\prod_{ij\in F}(-\indic_{\mc{B}_{ij}})\right]=\frac{I_2'(E_1,\ldots,E_n,F)}{\Msf^0_{\infty} (X)(NC_{\beta,\lambda})^{k}} .
		\end{equation*}
		Next, we use 
		\begin{align}\label{eq:rr3 -}
			& \left| \dE_{\Psf_{X}^{-,\ve_0}}\left[\prod_{ij\in E_1\cup \cdots \cup E_n}f^{\tilde{v}}_{ij}\prod_{ij\in \mc{E}^\inter(X_1)\cup \cdots \cup \mc{E}^\inter(X_n)}\indic_{\mc{B}_{ij}^c}\prod_{ij\in F}(-\indic_{\mc{B}_{ij}})\right]\right.\\ \notag &\qquad -\left. \dE_{\Psf_{X}^{0,\infty}}\left[\prod_{ij\in E_1\cup \cdots \cup E_n}f^{v}_{ij}\prod_{ij\in \mc{E}^\inter(X_1)\cup \cdots \cup \mc{E}^\inter(X_n)}\indic_{\mc{B}_{ij}^c}\prod_{ij\in F}(-\indic_{\mc{B}_{ij}})\right] \right| \\ \notag & \qquad\leq \frac{1}{\Msf_{\ve_0}^-(X)(NC_{\beta,\lambda,\ve_0})^k}\Bigr| I_2(E_1,\ldots,E_n,F)- I_2'(E_1,\ldots,E_n,F)\Bigr|\\ \notag &\qquad + |I_2'(E_1,\ldots,E_n,F)|\times \left|\frac{1}{(NC_{\beta,\lambda,\ve_0})^k\Msf_{\ve_0}^-(X)}-\frac{1}{(NC_{\beta,\lambda})^k\Msf^0_{\infty} (X)}\right|.
		\end{align}
		In Lemma \ref{lemma:general exp K}, we prove that there exists $C>0$ depending on $\beta,M, k$ and $\ve_0$ such that
		\begin{equation*}
			\Bigr|I_2(E_1,\ldots,E_n,F)-I_2'(E_1,\ldots,E_n,F)\Bigr|\leq CN\Cut^{-2}\lambda^{(2-\beta)k}.
		\end{equation*}
		Therefore, by Proposition \ref{prop:bounded lower} and \eqref{eq:Clambda bound}, there exists $C>0$ depending on $\beta,M, k$ and $\ve_0$ such that
		\begin{equation}\label{eq:rr1 -}
			\frac{1}{\Msf_{\ve_0}^-(X)(NC_{\beta,\lambda,\ve_0})^k}\Bigr| 
			I_2(E_1,\ldots,E_n,F)- I_2'(E_1,\ldots,E_n,F)\Bigr|\leq \frac{CN^{1-k}}{\Msf^0_{\ve_0} (X)}\Cut^{-2}.
		\end{equation}
		Notice that 
		\begin{equation*}
			\left|\frac{1}{(NC_{\beta,\lambda,\ve_0})^k\Msf_{\ve_0}^-(X)}-\frac{1}{(NC_{\beta,\lambda})^k\Msf^0_{\infty} (X)}\right|=\left| \frac{1}{\prod_{S\in X}I_1(S) }-  \frac{1}{\prod_{S\in X}I_1'(S)}\right|.
		\end{equation*}

		Combining Lemma \ref{lemma:lowerM+} and Lemma \ref{lemma:general exp M}, we get that there exists $C>0$ depending on $\beta,M,k$ and $\ve_0$ such that
		\begin{equation*}
			\frac{\left|\prod_{S\in X}I_1(S)-\prod_{S\in X}I_1'(S)\right|}{\prod_{S\in X} I_1(S)}\leq C\Cut^{-2}\lambda^{-2(\max_{S\in X}|S|-1)}.
		\end{equation*}
		Hence, by \eqref{eq:int M lower}, there exists $C>0$ depending on $\beta,M,k$ and $\ve_0$ such that
		\begin{equation*}
			\left| \frac{1}{\prod_{S\in X}I_1(S) }-  \frac{1}{\prod_{S\in X}I_1'(S)}\right|\leq  C\frac{\Cut^{-2}\lambda^{-2(\max_{S\in X}|S|-1)} }{N^{|X|}\lambda^{(2-\beta)k}\prod_{S\in X}\lambda^{2(|S|-1)}}.
		\end{equation*}
		Moreover, one can check that there exists $C>0$ depending on $\beta,M,k$ and $\ve_0$ such that
		\begin{equation}\label{bornI2}
			|I_2'(E_1,\ldots,E_n,F)|\leq CN\lambda^{(2-\beta)k+2(k-1)}.
		\end{equation}
		Assembling the two above displays yields the existence of $C>0$ depending on $\beta,M,k$ and $\ve_0$ such that
		\begin{multline*}
			|I_2'(E_1,\ldots,E_n,F)|\times  \left| \frac{1}{\prod_{S\in X}I_1(S) }-  \frac{1}{\prod_{S\in X}I_1'(S)}\right|\\ \leq CN^{1-|X|}\Cut^{-2}\frac{1}{\prod_{S\in X}\lambda^{2(|S|-1)}}\lambda^{2(k-1)}\lambda^{-2(\max_{S\in X}|S|-1)} \leq CN^{1-|X|}\Cut^{-2}\frac{1}{\prod_{S\in X}\lambda^{2(|S|-1)}}.
		\end{multline*}
		Therefore, by \eqref{eq:MX -low}, there exists $C>0$ depending on $\beta,M,k$ and $\ve_0$ such that
		\begin{equation}\label{eq:rr2 -}
			|I_2'(E_1,\ldots,E_n,F)|\times  \left| \frac{1}{\prod_{S\in X}I_1(S) }-  \frac{1}{\prod_{S\in X}I_1'(S)}\right|\leq \frac{CN^{1-k}}{\Msf^0_{\infty} (X)}\Cut^{-2}.  
		\end{equation}
		Inserting \eqref{eq:rr1 -} and \eqref{eq:rr2 -} into \eqref{eq:rr3 -} gives 
		\begin{multline*}
			\left| \dE_{\Psf_{X}^{-,\ve_0}}\left[\prod_{ij\in E_1\cup \cdots \cup E_n}f^{\tilde{v}}_{ij}\prod_{ij\in \mc{E}^\inter(X_1)\cup \cdots \cup \mc{E}^\inter(X_n)}\indic_{\mc{B}_{ij}^c}\prod_{ij\in F}(-\indic_{\mc{B}_{ij}})\right]\right.\\ -\left. \dE_{\Psf_{X}^{0,\infty}}\left[\prod_{ij\in E_1\cup \cdots \cup E_n}f^{v}_{ij}\prod_{ij\in \mc{E}^\inter(X_1)\cup \cdots \cup \mc{E}^\inter(X_n)}\indic_{\mc{B}_{ij}^c}\prod_{ij\in F}(-\indic_{\mc{B}_{ij}})\right] \right| \leq \frac{CN^{1-k}}{\Msf^0_{\infty} (X)}\Cut^{-2},
		\end{multline*}
		for some constant $C>0$ depending on $\beta,M,k$ and $\ve_0$. This concludes the proof of \eqref{eq:K-diff}. The estimate \eqref{eq:K-diff0} is proved similarly using the estimate \eqref{def:estimateI2 bis} in Lemma \ref{lemma:general exp K}.

		The proof of \eqref{eq:Kdip diff} follows by combining \eqref{eq:comparisons Cve0} and the integral estimate of Lemma \ref{lemma:I3}.
	\end{proof}

	\subsection{Study of limiting activities}\label{sub:limiting}

	\begin{proof}[Proof of Lemma \ref{lemma:limiting}] Let $G:(\dR^2)^{n-1}\to \dR$ be integrable. Define 
		\begin{equation*}
			I_N\coloneqq \int_{\Lambda^{n}}G(z_2-z_1,z_3-z_1,\ldots,z_n-z_1)\dd z_1\ldots \dd z_n,
		\end{equation*}
		where we recall that $\Lambda=\Lambda(N)=[0,\sqrt{N}]^2$. Then, we claim that 
		\begin{equation}\label{eq:claim int G}
			\lim_{N\to \infty}\frac{I_N}{N}=\int_{(\dR^2)^{n-1}}G(y_1,\ldots,y_{n-1})\dd y_1\ldots \dd y_{n-1}.
		\end{equation}
		Indeed,
		\begin{equation*}
			\frac{I_N}{N}=\int_{(\dR^2)^{n-1}}G(y_1,\ldots,y_{n-1})V_N(y_1,\ldots,y_{n-1})\dd y_1\ldots \dd y_{n-1},
		\end{equation*}
		where 
		\begin{equation*}
			V_N(y_1,\ldots,y_{n-1})=\frac{1}{N} |\{x\in \Lambda:x+y_1\in \Lambda,\ldots, x+y_{n-1}\in \Lambda\}|,
		\end{equation*}
		which is the normalized measure of the intersection $\Lambda\cap (\Lambda-y_1)\cap \cdots \cap (\Lambda-y_{n-1})$. We can notice that $0\leq V_N\leq 1$. Moreover, $V_N$ converges pointwise to $1$. Therefore, by dominated convergence, we obtain \eqref{eq:claim int G}.

		Let $S\subset [N]$ be such that $|S|\leq p^*(\beta)$. Recall that 
		\begin{equation*}
			\Msf^0_{\infty} (S)=\frac{1}{\displaystyle{\left(\int_{\Lambda^2} e^{\beta \g_\lambda(x-y)}\dd x\dd y\right)^{|S|} }}\int_{(\Lambda^2)^{|S|} } \indic_{\mc{B}_S}\prod_{i,j\in S:i<j}e^{-\beta v_{ij}}\indic_{\mc{A}_{ij}}\prod_{i\in S}e^{\beta \g_\lambda(x_i-y_i)} \dd x_i \dd y_i.
		\end{equation*}
		Since $\beta>2$, the map $e^{\beta \g_\lambda}:\dR^2\to \dR$ is integrable. Therefore, by \eqref{eq:claim int G},
		\begin{equation}\label{eq:conv1}
			\int_{\Lambda^2} e^{\beta \g_\lambda(x-y)}\dd x\dd y=N \int_{\dR^2}e^{\beta \g_\lambda(y)}\dd y+o(N).
		\end{equation}
		Recall from \eqref{eq:introduceGS} that there exists a map $G_S:(\dR^2)^{2|S|-1}\to \dR$ such that
		\begin{equation*}
			\indic_{\mc{B}_S}\prod_{i,j\in S:i<j}e^{-\beta v_{ij}}\indic_{\mc{A}_{ij}}\prod_{i\in S}e^{\beta \g_\lambda(x_i-y_i)}=  G_{S}(x_2-x_1,\ldots,x_{|S|}-x_1,y_1-x_1,\ldots,y_{|S|}-x_1).
		\end{equation*}
		Proceeding as in the proof of \eqref{eq:int M lower}, one can see that $G_S$ is integrable (which uses crucially the fact that $|S|\leq p^*(\beta)$). Thus, by \eqref{eq:claim int G}
		\begin{multline}\label{eq:conv2}
			\int_{(\Lambda^2)^{|S|} } \indic_{\mc{B}_S}\prod_{i,j\in S:i<j}e^{-\beta v_{ij}}\indic_{\mc{A}_{ij}}\prod_{i\in S}e^{\beta \g_\lambda(x_i-y_i)} \dd x_i \dd y_i\\=N\int_{(\dR^2)^{2|S|-1}} G_S(y_1,\ldots,y_{2|S|-1})\dd y_1\ldots \dd y_{2|S|-1}+o(N).
		\end{multline}
		Combining \eqref{eq:conv1} and \eqref{eq:conv2} shows that 
		\begin{equation*}
			\lim_{N\to \infty} N^{|S|-1}\Msf_\infty^0(S)=\frac{\int_{(\dR^2)^{2|S|-1}} G_S(y_1,\ldots,y_{2|S|-1})\dd y_1\ldots \dd y_{2|S|-1}}{\Bigr(\int_{\dR^2}e^{\beta \g_\lambda(y)}\dd y\Bigr)^{|S|}}=\msf_{\beta,\lambda}(|S|).
		\end{equation*}
		
		The proof of \eqref{eq:proof little kdip} and \eqref{def:Kbeta un} proceeds along the same lines, resting critically on the bounds $|S|\leq p^*(\beta)$ and $|V_X|\le p^*(\beta)$. Notice that for $E\in \mc{G}_c(V)$ and $F\in \mathsf{E}^X$ the maps $G_{S,E}'$ and $G_{X,F}''$ from \eqref{eq:GS'} and \eqref{eq:GX''} can both be written as a sum of integrable functions and a sum of functions of integral $0$ (the integrable parts correspond to restricting the subgraphs with odd bond to Eulerian graphs).
	\end{proof}
	
	\subsection{Kruskal's algorithm and Penrose resummation}\label{sub:Penrose}
	
	We have already established the bounded-cluster control of Proposition \ref{prop:bounded lower}, and it remains only to prove the absolute convergence asserted in Proposition \ref{prop:absolute lower} which concerns large clusters. In \eqref{eq:stt}, we appealed to the crude bound
	\[
	|\mathsf{E}^X| \le 2^{\binom{|V_X|}{2}},
	\]
	which grows like $\exp(O(|V_X|^2))$. Since clusters can now be arbitrarily large, this estimate is too coarse to close the argument. We must therefore control the sum over graphs more delicately.

	We recall the well-known Kruskal's algorithm used to produce a spanning tree on a general connected graph.

	\begin{definition}[Kruskal's algorithm]\label{def:Kruskal}
		Let $X$ be a subpartition of $[N]$. Consider the lexicographic order on the edges in $\mc{E}^\inter(X)$. For every $E\subset\mc{E}^\inter(X)$, we say that $(V_X,E)$ has a cycle relative to $X$ if the augmented graph $(V_X,E\cup \mc{E}^\intra(X))$ has a cycle that is not included in a single component of $X$.
		\begin{enumerate}
			\item Let $E\in \mathsf{E}^X$. We construct a spanning tree $T\in \mathsf{T}^X$ with $T\subset E$ by selecting the edges in $E$ in increasing order, discarding the edges that form a cycle relative to $X$ among the prior edges. This defines a map
			\begin{equation}\label{def:T}
				\mc{T}^\KA_X:\mathsf{E}^X\to \mathsf{T}^X.
			\end{equation}
			
			\item Let $T\in \mathsf{T}^X$. There exists a maximal $\mc{E}^\KA_{\max}(X,T)\in \mathsf{E}^X$ such that $\mc{T}^\KA_X(\mc{E}_{\max}^\KA(X,T))=T$. The set of edges $\mc{E}^\KA_{\max}(X,T)$ can be constructed as follows: select all edges in $T$. Then select the edges $e\in \mc{E}^\inter(X)\setminus T$ such that $e$ is larger than every edge in the unique $T$-path between its endpoints.
		\end{enumerate}
	\end{definition}
	
	Below is the well known Penrose lemma, see for instance \cite[Section 4]{Bauerschmidt2016FerromagneticSS}. It allows to factor out the contribution of trees from a cluster expansion series.
	
	\begin{lemma}[Penrose resummation]\label{lemma:penrose}
		Let $X$ be a subpartition of $[N]$. For every $ij\in \mc{E}^\inter(X)$, let $c_{ij}\in \dR$. We have
		\begin{equation}\label{eq:Kbis}
			\sum_{H\in \mathsf{E}^X}\prod_{ij\in H}c_{ij}=\sum_{T\in \mathsf{T}^X} \prod_{ij\in T}c_{ij}\prod_{ij\in \mc{E}_{\max}^\KA(X,T)\setminus T}(1+c_{ij}).
		\end{equation}
	\end{lemma}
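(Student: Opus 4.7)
The plan is a direct application of the classical partition-and-resum trick underlying Penrose's tree-graph identity, adapted to the ``relative to $X$'' setting. I would partition $\mathsf{E}^X$ according to the spanning tree produced by Kruskal's algorithm: writing $T(H)\coloneqq \mc{T}^\KA_X(H)$, group $H$'s by $T(H)=T$ and sum. The whole proof reduces to the combinatorial identity
\begin{equation*}
\{H\in \mathsf{E}^X:\mc{T}^\KA_X(H)=T\}=\{H\in \mathsf{E}^X:T\subset H\subset \mc{E}^\KA_{\max}(X,T)\},
\end{equation*}
for every $T\in \mathsf{T}^X$. Once this is in hand, \eqref{eq:Kbis} follows at once: the left-hand side equals
\begin{equation*}
\sum_{T\in \mathsf{T}^X}\prod_{ij\in T}c_{ij}\sum_{H'\subset \mc{E}^\KA_{\max}(X,T)\setminus T}\prod_{ij\in H'}c_{ij}
=\sum_{T\in \mathsf{T}^X}\prod_{ij\in T}c_{ij}\prod_{ij\in \mc{E}^\KA_{\max}(X,T)\setminus T}(1+c_{ij}),
\end{equation*}
where the factorization of the inner sum over subsets uses only the distributive law.

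For the fiber identity itself, the forward inclusion is immediate from the definition of $\mc{E}^\KA_{\max}(X,T)$ in Definition \ref{def:Kruskal} as the maximal set whose Kruskal output is $T$. For the reverse inclusion, fix $H$ with $T\subset H\subset \mc{E}^\KA_{\max}(X,T)$ and list the edges of $H$ in lexicographic order as $e_1,\dots,e_{|H|}$. Let $A_k\subset H$ be the set of accepted edges after Kruskal has processed $e_1,\dots,e_k$ on $H$. The plan is to prove by induction that $A_k=T\cap\{e_1,\dots,e_k\}$. If $e_{k+1}\in T$, it cannot close a cycle relative to $X$ with $A_k\subset T$, because $T\in \mathsf{T}^X$ has no such cycle, so $e_{k+1}$ is accepted. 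If $e_{k+1}\in H\setminus T\subset \mc{E}^\KA_{\max}(X,T)\setminus T$, then by construction of $\mc{E}^\KA_{\max}(X,T)$ the edge $e_{k+1}$ was discarded during the Kruskal run on $\mc{E}^\KA_{\max}(X,T)$, meaning it closes a cycle relative to $X$ with the set of $T$-edges lexicographically preceding it. Since those very edges are exactly $A_k$, the same cycle obstruction is present in the $H$ run, and $e_{k+1}$ is discarded.

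The main obstacle is precisely this last inductive step: one must argue that the Kruskal dynamics on the two sets $H$ and $\mc{E}^\KA_{\max}(X,T)$ coincide on the common lexicographically-earlier initial segment, which hinges on the key observation that the accepted sets at every intermediate stage are both equal to $T\cap\{e_1,\dots,e_k\}$. A small additional check (needed for $\mc{T}^\KA_X$ to land in $\mathsf{T}^X$) is that whenever $H$ contains two edges between the same pair of blocks $S,S'\in X$, the second is discarded: this is because once the first such edge has been added, any further $i'j'$ with $i'\in S$, $j'\in S'$ closes a cycle through the intra-multipole edges of $S$ and $S'$, i.e.\ a cycle relative to $X$ in the sense of Definition \ref{def:Kruskal}. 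With this observation and the fiber identity, the reorganization described above yields \eqref{eq:Kbis} by purely formal algebraic manipulations, with no analytic input.
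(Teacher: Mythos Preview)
Your proof is correct and follows essentially the same approach as the paper: partition $\mathsf{E}^X$ into Kruskal fibers $\{H:\mc{T}^\KA_X(H)=T\}$, identify each fiber with the Boolean interval $[T,\mc{E}^\KA_{\max}(X,T)]$, and factor the inner sum. The paper's proof simply asserts the fiber identity (``$\mc{T}^\KA_X(H)=T$ is equivalent to $T\subset H$ and $H\setminus T\subset \mc{E}^\KA_{\max}(X,T)\setminus T$'') without justification, whereas you supply the inductive argument and the auxiliary check that $\mc{T}^\KA_X$ lands in $\mathsf{T}^X$; these additions are welcome but not a different route.
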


	\begin{proof}
		One has
		\begin{equation*}
			\begin{split}
				\sum_{H\in \mathsf{E}^X}\prod_{ij\in H}c_{ij}&=\sum_{T\in \mathsf{T}^X} \sum_{H:\mc{T}^\KA_X(H)=T} \prod_{ij\in H}c_{ij}\\
				&=\sum_{T\in \mathsf{T}^X} \prod_{ij\in T}c_{ij}\sum_{H:\mc{T}^\KA_X(H)=T} \prod_{ij\in H\setminus T}c_{ij}.
			\end{split}
		\end{equation*}
		Note that $\mc{T}^\KA_X(H)=T$ is equivalent to $T\subset H$ and $H\setminus T\subset \mc{E}_{\max}^\KA(X,T)\setminus T$. Hence
		\begin{equation*}
			\begin{split}
				\sum_{H\in \mathsf{E}^X}\prod_{ij\in H}c_{ij}&=\sum_{T\in \mathsf{T}^X} \prod_{ij\in T}c_{ij}\sum_{H'\subset \mc{E}_{\max}^\KA(X,T)\setminus T} \prod_{ij\in H'}c_{ij}\\
				&=\sum_{T\in \mathsf{T}^X} \prod_{ij\in T}c_{ij}\prod_{ij\in \mc{E}^\KA_{\max}(X,T)\setminus T}(1+c_{ij}).
			\end{split}
		\end{equation*}
	\end{proof}
	
	A consequence of the Penrose resummation lemma is the following bound on the Ursell function recalled in Definition \ref{def:UrsellI}. This bound is known as Rota's theorem and will be used in the proof of Proposition \ref{prop:absolute lower} in Section \ref{sub:abs lower}.
	
	\begin{remark}[Rota's theorem] Recall that for any connected graph $G$, the Ursell function $\mathrm{I}(G)$ is defined by 
		\begin{equation*}
			\mathrm{I}(G)=\sum_{H\subset G}(-1)^{|E(H)|},
		\end{equation*}
		where the sum runs over all spanning graphs $H$ of $G$ (i.e.~$H$ connected and with edges included in those of $G$). Recalling the notation from Definition \ref{def:Kruskal}, one can rewrite this as
		\begin{equation*}
			\mathrm{I}(G)=\sum_{T\subset G}\sum_{H: H\subset \mc{E}^\KA_{\max}(X,T)\setminus T}(-1)^{|E(T)|+|E(H)|}=\sum_{T\subset G} (-1)^{ |E(T)|} \prod_{ij\in (\mc{E}^\KA_{\max}(X,T)\setminus T)\cap E(G)}(1-1),
		\end{equation*}
		where the sum runs over spanning trees $T$ of $G$. The above product is either $0$ if nonempty or $1$ if empty. Therefore
		\begin{equation}\label{eq:rota}
			|\mathrm{I}(G)|\leq \sum_{T\subset G}1.
		\end{equation}
	\end{remark}


	We next apply the Penrose resummation lemma to replace the sum over graphs with integrable weights $-\indic_{\mc{B}_{ij}}$ and $b_{ij}^{\tilde{v}}$ by a sum over trees with weight $b_{ij}^\abs$.

	\begin{lemma}[Penrose resummation for integrable weights]\label{lemma:LXi}
		Let $X$ be a subpartition of $[N]$. Let $a_{ij}^{\abs}$ and $b_{ij}^{\abs}$ be as in Definition \ref{def:abs ab}. 
		Then, there exists $C_1>0$ depending on $\beta$, $p(\beta)$ and $M$ such that
		\begin{equation}\label{eq:Kstep2''}
			|\Ksf_{\ve_0}^-(X)|\leq C_1^{|V_X|}\sum_{n=0}^\infty \frac{1}{n!}\sum_{\substack{X_1,\ldots,X_n\subset X\\ \mathrm{disjoint} }} \sum_{T^{b}\in \mathsf{T}^{\Coarse_X(X_1,\ldots,X_n)}}\dE_{\Psf_X^{-,\ve_0}}\left[\prod_{l=1}^n \mc{L}_{C_1}(X_l)\prod_{ij\in T^{b}}b_{ij}^{\abs} \right],
		\end{equation}
		where for every subpartition $X'$ of $[N]$,
		\begin{multline}\label{def:mcL}
			\mc{L}_{C_1}(X')\coloneqq \sum_{k=0}^{\infty}\frac{1}{k!}\sum_{\substack{X_1',\ldots,X_k'\subset X'\\ \mathrm{disjoint} } }\sum_{E_1\in \Eulc^{X_1'}}\ldots \sum_{E_k\in \Eulc^{X_k'}} \sum_{\tilde{T}^{b}\in \mathsf{T}^{\Coarse_{X'}(X_1',\ldots,X_k') } }\\ \prod_{ij\in E_1\cup \cdots \cup E_k}(C_1a_{ij}^{\abs})\prod_{ij\in \tilde{T}^{b}}b_{ij}^{\abs}\prod_{ij\in \mc{E}^\inter(X')} e^{C_1a_{ij}^{\abs}} \prod_{ij\in \mc{E}^\inter(X')}\indic_{\mc{B}_{ij}^c}.
		\end{multline}
		
	\end{lemma}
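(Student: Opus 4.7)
The plan is to start from the expansion in Corollary \ref{coro:reduction Euler}, take absolute values using the bounds of Remark \ref{remarkabbi}, and then apply two layers of Penrose resummation (Lemma \ref{lemma:penrose}) to convert sums over connected subgraphs into sums over spanning trees, matching the structure of \eqref{def:mcL}. Using $|a_{ij}^{\tilde v}|\indic_{\mc{B}_{ij}^c}\leq Ca_{ij}^\abs$, $|b_{ij}^{\tilde v}|\indic_{\mc{B}_{ij}^c}\leq Cb_{ij}^\abs$, and $\indic_{\mc{B}_{ij}}\leq b_{ij}^\abs$, I would first reduce to a positive sum over $(X_1,\ldots,X_n)$, $(E_{l,1}\in \Eul^{X_l})$, $(E_{l,2})$ with $E_{l,1}\cup E_{l,2}\in \mathsf{E}^{X_l}$, and $F\in \mathsf{E}^{\Coarse_X(X_1,\ldots,X_n)}$, of products of $a^\abs$ and $b^\abs$ weights against the surviving indicators $\indic_{\mc{B}_{ij}^c}$ on every edge of $\cup_l \mc{E}^\inter(X_l)$.

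The first Penrose layer is applied to the outer $F$-sum: Lemma \ref{lemma:penrose} with $c_{ij}=-\indic_{\mc{B}_{ij}}$ converts $\sum_{F\in \mathsf{E}^{\Coarse_X}}\prod_{ij\in F}(-\indic_{\mc{B}_{ij}})$ into a sum over spanning trees $T^b\in \mathsf{T}^{\Coarse_X(X_1,\ldots,X_n)}$ weighted by $-\indic_{\mc{B}_{ij}}$ times non-tree factors $\prod_{ij\in \mc{E}^\KA_{\max}\setminus T^b}(1-\indic_{\mc{B}_{ij}})$; after bounding in absolute value this yields $\sum_{T^b}\prod_{T^b}b^\abs_{ij}\prod(1+b^\abs_{ij})$. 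The second layer is internal to each $X_l$: I decompose the Eulerian graph $E_{l,1}$ into its $k_l$ connected components relative to $X_l$, obtaining a sub-subpartition $(X_{l,j}')_{j\leq k_l}$ of $X_l$ together with Eulerian-connected edge sets $E_j\in \Eulc^{X_{l,j}'}$ (the $1/k_l!$ arises from summing over ordered such decompositions). The constraint $E_{l,1}\cup E_{l,2}\in \mathsf{E}^{X_l}$ then forces $E_{l,2}\in \mathsf{E}^{\Coarse_{X_l}(X_{l,1}',\ldots,X_{l,k_l}')}$, and a second application of Lemma \ref{lemma:penrose} with $c_{ij}=Cb^\abs_{ij}$ reduces this sum to a spanning tree $\tilde T^b$ weighted by $b^\abs$, together with non-tree factors $\prod(1+Cb^\abs_{ij})$.

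The residual non-tree factors produced by both Penrose applications are bounded by $\prod_{ij\in \mc{E}^\inter(X_l)} e^{Ca^\abs_{ij}}$ on the event $\{\mc{B}_{ij}^c\}$ enforced by the surviving indicators. Indeed, since $a^\abs_{ij}$ carries the factor $\indic_{\mc{B}_{ij}^c}$, and since on $\mc{B}_{ij}^c$ one has $d_{ij}\geq M\min(r_i,r_j)$, hence $a^\abs_{ij}\leq 1/M \leq 1$, the product $b^\abs_{ij}\indic_{\mc{B}_{ij}^c}$ reduces to $(a^\abs_{ij})^2\leq a^\abs_{ij}$. This gives $(1+Cb^\abs_{ij})\indic_{\mc{B}_{ij}^c}\leq e^{C' a^\abs_{ij}}$, and extending the product from non-tree edges to all of $\mc{E}^\inter(X_l)$ (which costs nothing since all factors are $\geq 1$) produces exactly the factor $\prod_{ij\in \mc{E}^\inter(X_l)} e^{C_0 a^\abs_{ij}}$ appearing in \eqref{def:mcL}. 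Repackaging the inner sum over $(X_{l,j}',E_j,\tilde T^b)$ as $\mc{L}_{C_0}(X_l)$ and the outer tree as $T^b$ yields \eqref{eq:Kstep2''}.

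The main obstacle is the combinatorial bookkeeping: correctly distinguishing edges lying in $\mc{E}^\inter(X_l)$ (within a single outer cluster) from those in $\Coarse_X(X_1,\ldots,X_n)$ (between clusters), ensuring that the two layers of Penrose interact only through non-overlapping edge sets, and verifying that the factorial $1/k_l!$ in \eqref{def:mcL} emerges from ordered decompositions of $E_{l,1}$ into Eulerian components. A secondary delicate point is keeping the prefactor $C_0$ universal (depending only on $\beta$, $M$, $p(\beta)$): the constants coming from Remark \ref{remarkabbi}, from the Penrose factor $(1+Cb^\abs)\leq e^{Ca^\abs}$, and from the $|V_X|$-exponential overhead must all be absorbed into a single $C_0^{|V_X|}$-type prefactor as stated.
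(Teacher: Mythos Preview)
Your proposal is correct and follows essentially the same approach as the paper: start from Corollary \ref{coro:reduction Euler}, decompose each Eulerian graph $E_{l,1}$ into its connected components relative to $X_l$ (producing the inner $\frac{1}{k!}$ sum in $\mc{L}_{C_0}$), apply Penrose resummation to both the outer $F$-sum and the inner $E_{l,2}$-sum, and finally bound the non-tree factors by $e^{Ca_{ij}^{\abs}}$ using $b^{\abs}_{ij}\indic_{\mc{B}_{ij}^c}=(a^{\abs}_{ij})^2\leq Ca^{\abs}_{ij}$. One small imprecision: the constraint $E_{l,1}\cup E_{l,2}\in\mathsf{E}^{X_l}$ does not force $E_{l,2}\in\mathsf{E}^{\Coarse_{X_l}(\cdot)}$ outright, since $E_{l,2}$ may also contain edges inside the components $X'_{l,j}$ (disjoint from $E_{l,1}$); the paper handles this by noting $E_{l,2}$ ranges over $\mc{E}^\KA_{\max}(Y_0,\tilde T^b)\cup\bigcup_j(\mc{E}^\inter(X'_{l,j})\setminus E_j)$, and these extra internal edges are absorbed into the same $\prod e^{Ca_{ij}^{\abs}}$ factor.
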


	\medskip
	
	\begin{proof}
		Recall from Corollary \ref{coro:reduction Euler} the formula \eqref{eq:expE11}. 
	Fix $n\geq 1$, $X_1,\ldots,X_n\subset X$ disjoint.  Note that even though the $X_i$'s are connected, the $X_i\cap E_1$ may not be.
	
	\paragraph{\bf{Step 1: summing over graphs for a given component}}
	For every $X'\subset X$, set 
	\begin{equation*}
		\mathcal U(X')\coloneqq \sum_{ E_{1}\in \Eul^{X'} }  \left(\sum_{\substack{E_{2}:E_{1}\cup E_{2}\in \mathsf{E}^{X'}\\
				E_{1}\cap E_{2}=\emptyset  }} \prod_{ij\in E_{1}}a^{\tilde{v}}_{ij}\prod_{ij\in E_{2} }b^{\tilde{v}}_{ij}\right)\prod_{ij\in \mc{E}^\inter(X')}\indic_{\mc{B}_{ij}^c} .
	\end{equation*}
	We will later take $X'\in \{X_1,\ldots,X_n\}$.

	We first resum this according to the connected components $X_1',\ldots,X_k'$ relative to $X$ that contain at least two multipoles of the graph $(V_{X'},E_1)$. This yields
	\begin{equation}\label{eq:FX}
		\mathcal U(X')=\left(\sum_{k=1}^\infty \frac{1}{k!}\sum_{\substack{X_1',\ldots,X_k'\\ \mathrm{disjoint}}} \sum_{E_1\in \Eulc^{X_1'},\ldots,E_k\in \Eulc^{X_k'} }  \sum_{\substack{E'\subset \mc{E}^\inter(X')\\ (E_1\cup \cdots \cup E_k)\cap E'=\emptyset \\ E_1\cup \cdots \cup E_k\cup E'\in \mathsf{E}^{X'}}}\prod_{ij\in E_{1}\cup\cdots \cup E_k }a^{\tilde{v}}_{ij}\prod_{ij\in E' }b^{\tilde{v}}_{ij}\right)\prod_{ij\in \mc{E}^\inter(X')}\indic_{\mc{B}_{ij}^c}.
	\end{equation}

	Fix $k\geq 1$, $X'_1,\ldots, X'_k\subset X'$ disjoint and denote for shorthand $Y_0\coloneqq \Coarse_{X'}(X_1',\ldots,X_k')$. Let $E_1\in \mathsf{E}^{X'_1},\ldots,E_k\in \mathsf{E}^{X'_k}$. We use an argument similar to Lemma \ref{lemma:penrose}. First,  as in Step 7 of the proof of Lemma \ref{lemma:start low},  one can observe the following: if $E'\subset \mc{E}^\inter(X')$, then 
	\begin{equation}\label{eq:tree cont}
		E_1\cup \cdots \cup E_k\cup E'\in \mathsf{E}^{X'} \\ \Longleftrightarrow \text{there exists $T^{b}\in \mathsf{T}^{Y_0 }$ such that $T^{b}\subset E'$}.
	\end{equation}

	We now let $\mc{T}_{Y_0}^\KA$ be the peeling map given  by Kruskal's algorithm, see Definition \ref{def:Kruskal}. Let us also recall from Definition \ref{def:Kruskal} that for every $T\in \mathsf{T}^{Y_0}$, $\mc{E}^\KA_{\max}(Y_0,T)$ stands for the maximal set of edges in $\mc{E}^\inter(Y_0)$ whose peeling by Kruskal's algorithm equals $T$. By \eqref{eq:tree cont}, we have
	\begin{equation*}
		\sum_{\substack{E'\subset \mc{E}^\inter(X'):\\ (E_1\cup \cdots \cup E_k)\cap E'=\emptyset \\ E_1\cup \cdots \cup E_k\cup E'\in \mathsf{E}^{X'}}} \prod_{ij\in E'} b_{ij}^{\tilde{v}}=\sum_{T^{b}\in \mathsf{T}^{Y_0}} \sum_{\substack{E'\subset \mc{E}^\inter(X'):\\ (E_1\cup \cdots \cup E_k)\cap E'=\emptyset \\ \mc{T}_{Y_0}^{\KA}(E_1\cup \cdots \cup E_k\cup E')=T^{b}}}  \prod_{ij\in E'} b_{ij}^{\tilde{v}}.
	\end{equation*}
	Fix $T^{b}\in \mathsf{T}^{Y_0}$. Then, if $E'\subset \mc{E}^\inter(X')$, we have 
	\begin{multline*}
		(E_1\cup \cdots \cup E_k)\cap E'=\emptyset \quad \text{and}\quad  \mc{T}_{Y_0}^{\KA}(E_1\cup \cdots \cup E_k\cup E')=T^{b}\\ \Longleftrightarrow E'\subset \mc{E}^\KA_{\max}(Y_0,T^{b})\cup\bigcup_{l=1}^k (\mc{E}^\inter(X_l')\setminus E_l)\quad \text{and}\quad T^{b}\subset E'.  
	\end{multline*}
	Thus, and this is our Penrose resummation, 
	\begin{multline*}
		\sum_{\substack{E'\subset \mc{E}^\inter(X'):\\ (E_1\cup \cdots \cup E_k)\cap E'=\emptyset \\ \mc{T}_{Y_0}^{\KA}(E_1\cup \cdots \cup E_k\cup E')=T^b}}\prod_{ij\in E'} b_{ij}^{\tilde{v}}= \sum_{\substack{E'\subset \mc{E}^\KA_{\max}(Y_0,T^b)\\ \cup\cup_{l=1}^k (\mc{E}^\inter(X_l')\setminus E_l),\\ T^b\subset E'} } \prod_{ij\in E'} b_{ij}^{\tilde{v}}\\
		=\sum_{\substack{E'':E''\subset \mc{E}^\KA_{\max}(Y_0,T^b)\setminus T^b\\ \cup\cup_{l=1}^k (\mc{E}^\inter(X_l')\setminus E_l)} }\prod_{ij\in T^b\cup E''} b_{ij}^{\tilde{v}}
		= \prod_{ij\in T^b} b_{ij}^{\tilde{v}}\prod_{\substack{ij\in \mc{E}^\KA_{\max}(Y_0,T^b)\setminus T^b\\ \cup\cup_{l=1}^k (\mc{E}^\inter(X_l')\setminus E_l) }}(1+b_{ij}^{\tilde{v}})\leq \prod_{ij\in T^b}b_{ij}^{\tilde{v}}\prod_{ij\in \mc{E}^\inter(X')}(1+b_{ij}^{\tilde{v}}).
	\end{multline*}
	Inserting this into \eqref{eq:FX} gives 
	\begin{multline*}
		\mathcal U (X')=\sum_{k=1}^{\infty}\frac{1}{k!}\sum_{\substack{X_1',\ldots,X_k'\subset X'\\ \mathrm{disjoint} } }\sum_{E_1\in \Eulc^{X_1'}}\ldots \sum_{E_k\in \Eulc^{X_k'}} \sum_{T^b\in \mathsf{T}^{Y_0 } }\\ \prod_{ij\in E_1\cup \cdots \cup E_k}a_{ij}^{\tilde{v}}\prod_{ij\in T^b}b_{ij}^{\tilde{v}}\prod_{ij\in \mc{E}^\inter(X')} (1+b_{ij}^{\tilde{v}}) \prod_{ij\in \mc{E}^\inter(X')}\indic_{\mc{B}_{ij}^c}.
	\end{multline*}
	By Lemma \ref{lemma:errortilde}, there exists $C>0$ depending only on $\beta$ such that $|a_{ij}^{\tilde{v}}|\leq Ca_{ij}^{\abs}$ and $|b_{ij}^{\tilde{v}}|\leq Cb_{ij}^{\abs}$. Inserting this into the last display, we deduce that there exists $C>0$ depending on $\beta$ such that, given $T^{b}\in \mathsf{T}^{Y_0}$,
	\begin{multline*}
		|\mathcal U(X')|\leq C^{|V_{X'}|}\sum_{k=1}^{\infty}\frac{1}{k!}\sum_{\substack{X_1',\ldots,X_k'\subset X'\\ \mathrm{disjoint} } }\sum_{E_1\in \Eulc^{X_1'}}\ldots \sum_{E_k\in \Eulc^{X_k'}} \sum_{T^b\in \mathsf{T}^{\Coarse_{X'}(X_1',\ldots,X_k') } }\\ \prod_{ij\in E_1\cup \cdots \cup E_k}(Ca_{ij}^{\abs})\prod_{ij\in T^b}b_{ij}^{\abs}\prod_{ij\in \mc{E}^\inter(X')} |1+b_{ij}^{\tilde{v}}|.
	\end{multline*}
	Using $|1+b_{ij}^{\tilde{v}}|\leq e^{b_{ij}^{\tilde{v}}}\leq e^{Cb_{ij}^{\abs}}\leq e^{Ca_{ij}^{\abs}}$, there exists $C>0$ depending only on $\beta$ such that
	\begin{equation}\label{eq:bound FX'}
		|\mc{U}(X')|\leq C^{|V_{X'}|}\mc{L}_C(X'),
	\end{equation}
	where $\mc{L}_C$ is as in \eqref{def:mcL}.
	
	\paragraph{\bf{Step 2: summing over the edges in $F$}}
	
	We now rewrite the term
	\begin{equation*}
		\sum_{F\in \mathsf{E}^{\Coarse_X(X_1,\ldots,X_n)}}\prod_{ij\in F}(-\indic_{\mc{B}_{ij}})
	\end{equation*} from \eqref{eq:expE11}.
	Denote for shorthand $Y_1\coloneqq \Coarse_X(X_1,\ldots,X_n)$.
	Arguing by resummation as in Step 1, we can write 
	\begin{equation*}
		\sum_{F\in \mathsf{E}^{Y_1}}\prod_{ij\in F}(-\indic_{\mc{B}_{ij}}) =\sum_{\tilde{T}^b\in \mathsf{T}^{Y_1}}\sum_{E'\subset \mc{E}^\KA_{\max}(Y_1,\tilde{T}^b)\setminus \tilde{T}^b } \prod_{ij\in \tilde{T}^b\cup E' }(-\indic_{\mc{B}_{ij}}).
	\end{equation*}
	Fix $\tilde{T}^b\in \mathsf{T}^{Y_1}$. We have 
	\begin{equation*}
		\sum_{E'\subset \mc{E}_{\max}^\KA(Y_1,\tilde{T}^b)\setminus \tilde{T}^b } \prod_{ij\in E' }(-\indic_{\mc{B}_{ij}})=\prod_{ij\in \mc{E}_{\max}^\KA(Y_1,\tilde{T}^b)\setminus \tilde{T}^b }(1-\indic_{\mc{B}_{ij}})=\prod_{ij\in \mc{E}_{\max}^\KA(Y_1,\tilde{T}^b)\setminus \tilde{T}^b}\indic_{\mc{B}_{ij}^c}.
	\end{equation*}
	Summing over $\tilde T^b$ and using the above displays,  we find
	\begin{equation}\label{eq:486}
		\left| \sum_{F\in \mathsf{E}^{\Coarse_X(X_1,\ldots,X_n)}}\prod_{ij\in F}(-\indic_{\mc{B}_{ij}})\right|\leq   \sum_{\tilde{T}^b\in \mathsf{T}^{Y_1}}\prod_{ij\in \tilde{T}^b}\indic_{\mc{B}_{ij}}.
	\end{equation}
	By Definition \ref{def:abs ab}, $\indic_{\mc{B}_{ij}}\leq b_{ij}^{\abs}$. Hence,
	\begin{equation}\label{eq:bound mergeF}
		\left| \sum_{F\in \mathsf{E}^{\Coarse_X(X_1,\ldots,X_n)}}\prod_{ij\in F}(-\indic_{\mc{B}_{ij}})\right|\leq   \sum_{\tilde{T}^b\in \mathsf{T}^{Y_1}}\prod_{ij\in \tilde{T}^b}b_{ij}^{\abs}.
	\end{equation}
	
	\paragraph{\bf{Step 3: conclusion}}
	
	With the notation \eqref{eq:FX}, we have 
	\begin{equation*}
		\Ksf_{\ve_0}^-(X)= \sum_{n=0}^\infty \frac{1}{n!}\sum_{\substack{X_1,\ldots,X_n\subset X\\ \mathrm{disjoint} }}\sum_{F\in \mathsf{E}^{\Coarse_X(X_1,\ldots,X_n)}} \dE_{\Psf_X^{-,\ve_0} }\left[ \mathcal U(X_1)\cdots \mathcal U(X_n)\prod_{ij\in F}(-\indic_{\mc{B}_{ij}})\right]. 
	\end{equation*}
	Therefore, inserting the bounds \eqref{eq:bound FX'} and \eqref{eq:bound mergeF}, this proves the result. 
\end{proof}

\subsection{Rewriting the sum over Eulerian graphs as a sum over trees}

Our aim is now to bound the term $\mc{L}_{C_1}(X)$ defined in \eqref{def:mcL}. We rewrite the sum over Eulerian graphs by incorporating the result of the peeling procedure of Definition \ref{def:peeling lower bound}. Using the key inequality \eqref{eq:peeling} of Corollary \ref{coro:prod a}, we bound the product of the weights by a quantity involving the product over the set of edges $F$ of the $g_{ ij}$'s defined in \eqref{def:gij}, and a product of squares of dipole lengths.

\begin{lemma}[Summing over Eulerian graphs]\label{lemma:sum Euler}	
	Let $X$ be a subpartition of $[N]$ such that for every $S\in X$, one has $|S|\leq p(\beta)$. Let $X_1,\ldots,X_n\subset X$ be disjoint. Let $\mc{L}_{C_1}(X)$ be as in \eqref{def:mcL}. For every $l\in [n]$, denote 
	\begin{equation}\label{def:hatSl}
		\hat{S}_l(X_1,\ldots,X_n)\coloneqq \mathrm{argmax}\{r_{S}: S \in X_l\},
	\end{equation}
	where $r_S$ is as in Definition \ref{def:rS}.
	
	Then, there exists a constant $C>0$ depending on $\beta$, $p(\beta)$ and $M$ such that
	\begin{equation}\label{eq:Eul0}
		\mc{L}_{C_1}(X) \le e^{C|V_{X}|}\sum_{n=0}^{\infty}\frac{1}{n!}\sum_{\substack{X_1,\ldots,X_n\subset X\\ \mathrm{disjoint} } } H_{X}(X_1,\ldots,X_n),
	\end{equation}
	where
	\begin{multline}\label{def:H}
		H_{X}(X_1,\ldots,X_n)\coloneqq \prod_{l=1}^n\left(\prod_{\substack{S\in X_l:\\ S\neq \hat{S}_l(X_1,\ldots,X_n) }}r_{S}^2\sum_{T_l^a\in \mathsf{T}^{X_l}}\prod_{ij\in T_l^a}\frac{1}{d_{ij}^2}\indic_{d_{ij}\leq 16\ve_0\Cut }\indic_{\mc{B}_{ij}^c} \right)\left(\sum_{\substack{T^b\in \mathsf{T}^{\Coarse_{X}(X_1,\ldots,X_n)}}}\prod_{ij\in T^b}b_{ij}^{\abs}\right) \\ \times \left(\prod_{i\in V_{X}}\log_M\Bigr(\frac{\ve_0\Cut}{r_i}\Bigr)\right) \exp\left(C_1\sum_{ij\in \mc{E}^\inter(X)} a_{ij}^{\abs}\right)\left(\prod_{i\in V_{X}}\Big( 1+\sum_{j\in V_{X}:j\neq i}g_{ij}\Big)\right).
	\end{multline}
\end{lemma}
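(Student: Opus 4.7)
The plan is to bound the sum defining $\mc{L}_{C_0}(X)$ in \eqref{def:mcL} component by component and then match the result with \eqref{def:H}. The key tool, Corollary \ref{coro:prod a}, applies only to minimally 2-edge-connected graphs, so for each connected Eulerian graph $E_l \in \Eulc^{X_l'}$ appearing in \eqref{def:mcL} I first decompose $E_l = E_l^\star \sqcup R_l$, where $E_l^\star \coloneqq \Peeled_X(E_l)$ is the minimally 2-edge-connected skeleton of Definition \ref{def:peeling into minimal} and $R_l \coloneqq E_l \setminus E_l^\star$. Since the assignment $E_l \mapsto E_l^\star$ is deterministic and $a_{ij}^{\abs} \geq 0$, the sum over $E_l$ is dominated by an independent sum over $E_l^\star$ and over $R_l \subset \mc{E}^\inter(X_l') \setminus E_l^\star$:
\begin{equation*}
\sum_{E_l \in \Eulc^{X_l'}} \prod_{ij \in E_l}(C_0 a_{ij}^{\abs}) \;\leq\; \sum_{E_l^\star} \prod_{ij \in E_l^\star}(C_0 a_{ij}^{\abs}) \prod_{ij \in \mc{E}^\inter(X_l') \setminus E_l^\star}\!\!(1 + C_0 a_{ij}^{\abs}).
\end{equation*}
The residual product is dominated by $\exp\!\big(C_0 \sum_{ij \in \mc{E}^\inter(X_l')} a_{ij}^{\abs}\big)$; taking the product over $l = 1, \dots, k$ and combining with the $\exp$-factor already present in \eqref{def:mcL} recovers, up to enlarging the constant, the exponential factor in \eqref{def:H}.

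For each minimally 2-edge-connected $E_l^\star$, I parametrize it by the pair $(T_l, F_l) \coloneqq (\mc{T}^X(\cdot, E_l^\star), \mc{F}^X(\cdot, E_l^\star))$ coming from the ear-based peeling of Definition \ref{def:peeling mini}; then $E_l^\star = T_l \sqcup F_l$, and by Lemma \ref{lemma:technical peeling}(1) one has $F_l \in \PF(V_{X_l'})$. Corollary \ref{coro:prod a}, applied to $X_l'$, gives
\begin{equation*}
\prod_{ij \in E_l^\star} a_{ij}^{\abs} \;\leq\; C^{|V_{X_l'}|} \!\!\prod_{\substack{S \in X_l' \\ S \neq \hat{S}_l}}\!\! r_S^2 \sum_{T' \in B((x_i, y_i)_{i \in V_{X_l'}},\, T_l)} \prod_{ij \in T'}\frac{\indic_{d_{ij} \leq 16\ve_0\Cut}\indic_{\mc{B}_{ij}^c}}{d_{ij}^2} \prod_{ij \in F_l} g_{ij},
\end{equation*}
which already exhibits the products of $r_S^2$ and of $d_{ij}^{-2}$ appearing in \eqref{def:H}.

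It remains to interchange summations and apply the quantitative estimates of Lemma \ref{lemma:technical peeling}. Summing first over $T_l$ and then over $T' \in B(\cdot, T_l)$, I swap the order to sum over $T'$ first: the multiplicity $|\{T_l \in \mathsf{T}^{X_l'} : T' \in B(\cdot, T_l)\}|$ is at most $C^{|V_{X_l'}|}\prod_{i \in V_{X_l'}}\log_M(16\ve_0\Cut / r_i)$ by Lemma \ref{lemma:technical peeling}(3), producing the logarithmic factor of \eqref{def:H} after renaming $T'$ as $T_l^a$. The remaining sum over pseudoforests $F_l \in \PF(V_{X_l'})$ is handled by Lemma \ref{lemma:technical peeling}(1), yielding $\prod_{i \in V_{X_l'}}\bigl(1 + \sum_j g_{ij}\bigr)$. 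Forming the product over $l = 1, \dots, k$, extending the logarithmic and pseudoforest products from $V_{X_l'}$ to $V_X$ (valid since every factor is $\geq 1$), and relabeling $k \to n$, $X_l' \to X_l$, reproduces the structure of \eqref{def:H}; the piece $\sum_{\tilde T^b}\prod b_{ij}^{\abs}$ in \eqref{def:mcL} passes through directly to $\sum_{T^b}\prod b_{ij}^{\abs}$ in \eqref{def:H}, and all accumulated per-component constants $C^{|V_{X_l'}|}$ collapse into the prefactor $e^{C|V_X|}$ via $\sum_l |V_{X_l'}| \leq |V_X|$.

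The chief difficulty is combinatorial bookkeeping rather than a new estimate: one must verify that the chain of decompositions $E_l \mapsto (E_l^\star, R_l) \mapsto (T_l, F_l, R_l)$ introduces only a multiplicity that is absorbed by the exponential, logarithmic, and pseudoforest factors, and that the distinguished block $\hat S_l$ (whose $r_S^2$ is genuinely omitted in Corollary \ref{coro:prod a}) is identified correctly at every step — this omitted factor is precisely the cancellation that will eventually underlie the sharp $\lambda^{2(|V_X|-1)}$ scaling of the cluster bound.
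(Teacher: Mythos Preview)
Your proposal is correct and follows essentially the same approach as the paper: decompose each $E_l\in\Eulc^{X_l'}$ into its minimally 2-edge-connected skeleton plus a residual, apply Corollary~\ref{coro:prod a} to the skeleton, push the residual into the exponential via $1+C_0a_{ij}^{\abs}\le e^{C_0a_{ij}^{\abs}}$, and then invoke Lemma~\ref{lemma:technical peeling}(1) and~(3) to control the pseudoforest sum and the $B(\cdot,T_l)$ multiplicity. One minor caveat: your justification that extending the logarithmic product from $\cup_l V_{X_l'}$ to $V_X$ is ``valid since every factor is $\geq 1$'' is not literally correct---the factors $\log_M(\ve_0\Cut/r_i)$ can drop below $1$ when $r_i$ is near $\ve_0\Cut$---but since $\log_M(16\ve_0\Cut/r_i)\ge\log_M 16>0$ the missing factors cost at most a constant per vertex, which is absorbed into $e^{C|V_X|}$; the paper's own proof glosses over the same extension.
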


\medskip

\begin{proof}
	Fix $l\in [n]$. Recall $\mc{T}^{X_l}$ and $\mc{F}^{X_l}$ from Definition \ref{def:peeling lower bound}. By Lemma \ref{lemma:technical peeling}, item (1), 
	\begin{equation}\label{sommedesa}
		\sum_{E\in \Eulc^{X_l}}\prod_{ij\in E}(C_1a_{ij}^{\abs}) =\sum_{T_l^a\in \mathsf{T}^{X_l}} \sum_{F_l\in \PF(V_{X_l}) } \sum_{\substack{E\in \Eulc^{X_l}:\\ \mc{T}^{X_l}(\cdot,E)=T_l^a\\ \mc{F}^{X_l}(\cdot,E)={F}_l}}\prod_{ij\in E}(C_1a_{ij}^{\abs}),
	\end{equation}
	where we recall from Definition \ref{def:pseudo forest} that $\PF(V_{X_l})$ stands for the set of pseudoforests on $V_{X_l}$. Fix $T_l^a\in \mathsf{T}^{X_l}$ and $F_l\in \PF(V_{X_l})$. Let $E\in\Eulc^{X_l}$ be such that $\mc{T}^{X_l}(\cdot,E)=T_l^a$ and $\mc{F}^{X_l}(\cdot,E)=F_l$. We have
	\begin{equation*}
		\prod_{ij\in E}(C_1a_{ij}^{\abs})=\prod_{ij\in T_l^a\cup F_l}(C_1a_{ij}^{\abs}) \prod_{ij\in E\setminus(T_l^a\cup F_l)}(C_1a_{ij}^{\abs}).
	\end{equation*}
	Since $T_l^a\cup F_l$ is minimally 2-edge-connected relative to $X_l$ (since $T_l^a\cup F_l=\Peeled_{X_l}(E)$, where $\Peeled_X(E)$ is as in Definition \ref{def:peeling into minimal}), we have $|T_l^a\cup F_l|\leq 2(|V_{X_l}|-1)$. Hence, 
	\begin{equation*}
		\prod_{ij\in E}(C_1a_{ij}^{\abs})\leq C_1^{2(|V_{X_l}|-1)}\prod_{ij\in T_l^a\cup F_l}a_{ij}^{\abs} \prod_{ij\in E\setminus(T_l^a\cup F_l)}(C_1a_{ij}^{\abs}).
	\end{equation*}
	By  Corollary \ref{coro:prod a}, there exists a constant $C>0$ depending on $\beta$ such that
	\begin{equation}\label{eq:paij}
		\prod_{ij\in T_l^a\cup F_l }a_{ij}^{\abs}\leq C^{|V_{X_l}|}\prod_{S\in X_l, S\neq \hat{S}_l}r_{S}^2 \sum_{T \in B(\cdot,T_l^a) } \prod_{ij\in T}\Bigr(\frac{1}{d_{ij}^2}\indic_{d_{ij}\leq 16\ve_0 \Cut}\indic_{\mc{B}_{ij}^c} \Bigr)\prod_{ij\in F_l}g_{ ij},
	\end{equation}
	where $g_{ ij}$ is as in \eqref{def:gij} and $\hat{S}_l$ is as in \eqref{def:hatSl}. 
	Therefore,
	\begin{multline*}
		\sum_{F_l\in \PF(V_{X_l}) } \sum_{\substack{E\in \Eulc^{X_l}:\\ \mc{T}^{X_l}(\cdot,E)=T_l^a \\\\ {\mc{F}}^{X_l}(\cdot,E)={F}_l }}\prod_{ij\in E}(C_1 a_{ij}^{\abs})\leq C^{|V_{X_l}|}\prod_{S\in X_l, S\neq \hat{S}_l}r_{S}^2 \sum_{T \in B(\cdot, T_l^a) } \prod_{ij\in T}\Bigr(\frac{1}{d_{ij}^2}\indic_{d_{ij}\leq 16\ve_0 \Cut}\indic_{\mc{B}_{ij}^c} \Bigr) \\ \times \left(\sum_{F_l\in \PF(V_{X_l}) }\prod_{ij\in F_l}g_{ ij}\right)
		\sum_{E'\subset \mc{E}^\inter(X_l) }\prod_{ij\in E'}C_1a_{ij}^\abs.
	\end{multline*}
	Using $1+C_1a_{ij}^\abs\leq e^{C_1a_{ij}^\abs}$, we get 
	\begin{multline}\label{eq:sumg}
		\sum_{F_l\in \PF(V_{X_l}) } \sum_{\substack{E\in \Eulc^{X_l}:\\ \mc{T}^{X_l}(\cdot,E)=T_l^a \\\\ {\mc{F}}^{X_l}(\cdot,E)={F}_l }}\prod_{ij\in E}(C_1a_{ij}^{\abs})\leq C^{|V_{X_l}|}\prod_{S\in X_l, S\neq \hat{S}_l}r_{S}^2 \sum_{T \in B(\cdot, T_l^a) } \prod_{ij\in T}\Bigr(\frac{1}{d_{ij}^2}\indic_{d_{ij}\leq 16\ve_0 \Cut}\indic_{\mc{B}_{ij}^c} \Bigr) \\ \times \left(\sum_{F_l\in \PF(V_{X_l}) }\prod_{ij\in F_l}g_{ ij}\right)
		\prod_{ij\in \mc{E}^\inter(X_l)}e^{C_1a_{ij}^\abs}.
	\end{multline}
	By \eqref{eq:sum PF},
	\begin{equation}\label{eq:sumPF1}
		\sum_{F_l\in \PF(V_{X_l}) }\prod_{ij\in F_l}g_{ ij}\leq \prod_{i\in V_{X_l}}\left(1+\sum_{j\in V_{X_l}:j\neq i}g_{ij}\right).
	\end{equation}
	Finally, using \eqref{eq:claimBT},
	\begin{align}\notag &
		\sum_{T_l^a\in \mathsf{T}^{X_l}} \sum_{T \in B(\cdot, T_l^a) } \prod_{ij\in T}\Bigr(\frac{1}{d_{ij}^2}\indic_{d_{ij}\leq 16\ve_0 \Cut}\indic_{\mc{B}_{ij}^c} \Bigr)  \leq \sum_{T\in \mathsf{T}^{X_l}}|\{T'\in \mathsf{T}^{X_l}:T\in B(\cdot,T') \}|\prod_{ij\in T}\Bigr(\frac{1}{d_{ij}^2}\indic_{d_{ij}\leq 16\ve_0 \Cut}\indic_{\mc{B}_{ij}^c} \Bigr)
		\\   \label{eq:change sum T} & \qquad \qquad \qquad  \leq C^{|V_{X_l}|}\prod_{i\in V_{X_l}}\log_M \Bigr(\frac{16\ve_0\Cut}{r_i}\Bigr)  \sum_{T\in \mathsf{T}^{X_l}} \prod_{ij\in T}\Bigr(\frac{1}{d_{ij}^2}\indic_{d_{ij}\leq 16\ve_0 \Cut}\indic_{\mc{B}_{ij}^c} \Bigr).
	\end{align}
	Combining \eqref{eq:sumg}, \eqref{eq:sumPF1}, \eqref{eq:change sum T} and summing over $T_l^a$, there exists $C>0$ depending only on  $\beta$ such that
	\begin{multline}\label{eq:sumEul}
		\sum_{E\in \Eulc^{X_l}}\prod_{ij\in E}(C_1a_{ij}^{\abs})  \leq C^{|V_{X_l}|}\prod_{S\in X_l:S\neq \hat{S}_l}r_{S}^2 \sum_{T_l^a\in \mathsf{T}^{X_l}} \Bigr(\prod_{ij\in T_l^a}\frac{1}{d_{ij}^2}\indic_{d_{ij}\leq 16\ve_0\Cut}\indic_{\mc{B}_{ij}^c} \Bigr)\\ \times \prod_{ij\in \mc{E}^\inter(X_l)} e^{C_1a_{ij}^\abs }\prod_{i\in V_{X_l}}\Big(1+\sum_{j\in V_{X_l}:j\neq i}g_{ij}\indic_{\mc{B}_{ij}^c}\Big)\prod_{i\in V_{X_l}}\log_M \Bigr(\frac{16\ve_0\Cut}{r_i}\Bigr).
	\end{multline}
	Taking the product over $l$  gives the desired result.
\end{proof}   

We now address the last two types of terms appearing in the function $H_X$ from \eqref{def:H}. This analysis will rely on the geometric properties of the configurations.

\begin{lemma}[Control of the interactions]\label{lemma:control sum intera} 
	Let $\beta\in (2,\infty)$ and $p(\beta)$ be as Definition \ref{def:pbeta}. Let $X$ be a partition of $[N]$ such that for every $S\in X$, one has $|S|\leq p(\beta)$. Define the event
	\begin{equation*}
		\mc{D}=\bigcap_{i\in V_{X}}\{r_i\leq \ve_0\Cut\}\cap \bigcap_{ij\in \mc{E}^\inter(X)}\mc{B}_{ij}^c.
	\end{equation*}
	Then, there exists a constant $C>0$ depending on $p(\beta)$ such that on the event $\mc{D}$, for every $i\in V_X$,
	\begin{equation}\label{eq:claim int}
		\sum_{j\in V_{X}:ij\in \mc{E}^\inter(X)}a_{ij}^{\abs}\indic_{r_j\geq r_i} \leq \frac{C}{M}\left( \log \Bigr(\frac{16 \ve_0\Cut}{r_i}\Bigr)+1\right).
	\end{equation}
	Moreover, with $g_{ij}$ being as in \eqref{def:gij}, there exists a constant $C>0$ depending on $p(\beta)$ such that on the event $\mc{D}$, for every $i\in V_X$,
	\begin{equation}\label{eq:claim int2}
		\sum_{j\in V_{X}:j\neq i}g_{ ij}\indic_{r_j\geq r_i}\leq C\Bigr(\frac1{M^2}\log^2\Bigr(\frac{16 \ve_0\Cut}{r_i}\Bigr)+1\Bigr).
	\end{equation}
\end{lemma}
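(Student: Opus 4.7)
The plan is to dyadically decompose each sum by the distance scale $d_{ij}$—and, in the appropriate regime, also by the radius scale $r_j$—then apply a packing-plus-multipole counting argument analogous to Lemma \ref{lemma:technical peeling}(3) at each scale.

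For \eqref{eq:claim int}, first observe that on $\mc{B}_{ij}^c$ with $r_j\ge r_i$ one has $d_{ij}>M\max(r_i,\lambda)\ge Mr_i$, and that in both branches of the definition of $a_{ij}^{\abs}$ the bound $a_{ij}^{\abs}\le r_i/d_{ij}$ holds. Split according to whether $r_j\ge d_{ij}$ (regime I) or $r_j<d_{ij}$ (regime II). In regime I, the argument of Lemma \ref{lemma:technical peeling}(3) shows that all contributing $j$'s in a band $d_{ij}\in[t,\tfrac{M}{2}t)$ with $t\ge Mr_i$ lie in a common multipole, so at most $p(\beta)$ of them contribute, each with weight at most $r_i/t$; the geometric series in $t$ sums to $O(p(\beta)/M)$. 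In regime II, further decompose $r_j\in[s,2s)$ with $r_i\le s<t\sim d_{ij}$. Multipoles of dipole size $\sim s$ are separated by $\gtrsim Ms$, so in an annulus at distance $\sim t$ of area $O(t^2)$ there are at most $O\bigl(t^2/(Ms)^2\bigr)$ such multipoles, each containing at most $p(\beta)$ dipoles; each summand is bounded by $r_is/t^2$, so the pair $(s,t)$ contributes $O\bigl(p(\beta)r_i/(M^2s)\bigr)$. The geometric series in $s$ gives $O(p(\beta)/M^2)$ per $t$-scale, and summing over the $O\bigl(\log(\tfrac{\ve_0\Cut}{r_i})\bigr)$ many values of $t$ yields $O\bigl(p(\beta)\log(\tfrac{\ve_0\Cut}{r_i})/M^2\bigr)$. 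Adding the two regimes and using $\log/M^2\le \log/M$ produces the desired $\tfrac{C}{M}\bigl(\log(\tfrac{16\ve_0\Cut}{r_i})+1\bigr)$.

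For \eqref{eq:claim int2}, split into the three cases of \eqref{def:gij}. The single edge $ij=e_0$ contributes at most $1$. In the case $d_{ij}<r_j$ (giving $g_{ij}=1$), the argument of Lemma \ref{lemma:technical peeling}(3) again yields at most $p(\beta)$ contributing $j$'s per dyadic band of $d_{ij}$, and the remaining multipole scales in $r_j$ (which can occur only at ratios $\ge M/3$, by the triangle-inequality computation $d_{j_1j_2}\le d_{ij_1}+d_{ij_2}+r_i$) are absorbed into the additive constant of the stated bound. In the remaining case $d_{ij}\ge r_j$ (so $g_{ij}=r_j^2/d_{ij}^2$), the doubly-dyadic $(s,t)$ decomposition as in regime II above gives $O\bigl(p(\beta)t^2/(Ms)^2\bigr)$ summands at the pair $(s,t)$, each of weight $s^2/t^2$, hence $O(p(\beta)/M^2)$ per pair; summing over the $O\bigl(\log^2(\tfrac{\ve_0\Cut}{r_i})\bigr)$ many pairs produces the main term $O\bigl(p(\beta)\log^2(\tfrac{16\ve_0\Cut}{r_i})/M^2\bigr)$.

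The main obstacle is the bookkeeping at each dyadic scale: one must combine the multipole cardinality bound $|S|\le p(\beta)$ with the spatial separation $d_{j_1 j_2}>M\min(\max(r_{j_1},\lambda),\max(r_{j_2},\lambda))$ that follows from $\mc{B}_{j_1 j_2}^c$, and treat the two regimes $r_j\gtrless d_{ij}$ separately because the relevant packing dimension (and hence the exponent in $s$) depends on which of $r_j$ and $d_{ij}$ is larger. The triangle-inequality computation $d_{j_1j_2}\le d_{ij_1}+d_{ij_2}+r_i$—used both to collapse a dyadic band into a single multipole in regime I and to control how fast $r_j$-scales can proliferate in the $g_{ij}$ analysis—is the same ingredient that drove the proof of Lemma \ref{lemma:technical peeling}(3), and is the key geometric input throughout.
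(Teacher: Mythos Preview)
Your approach is essentially identical to the paper's: both split $\sum a_{ij}^{\abs}$ according to whether $r_j\lessgtr d_{ij}$, handle the $r_j<d_{ij}$ regime by a doubly-dyadic $(s,t)$ decomposition with the ball-packing count $\mathcal{N}(s,t)\lesssim t^2/(Ms)^2+1$, and handle the $r_j\ge d_{ij}$ regime by observing that all $j$'s in a single distance band lie in one multipole (hence at most $p(\beta)$ of them); the treatment of $\sum g_{ij}$ follows the same case split. One small caveat: in the $g_{ij}=1$ case ($d_{ij}<r_j$), your claim that the contribution is ``absorbed into the additive constant'' is not quite right---summing over the $O\bigl(\log_M(\ve_0\Cut/r_i)\bigr)$ distance bands gives a term of order $p(\beta)\log/\log M$, not $O(1)$, and fitting this into the stated right-hand side requires the constant $C$ to depend on $M$ as well (the paper's own proof is equally elliptical at exactly this step, simply rewriting the sum without bounding it).
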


\medskip

\begin{proof} 
	Denote $p\coloneqq p(\beta)$.
	
	Fix $i\in V_{X}$. Let us write
	\begin{equation*}
		\sum_{j\in V_{X}: ij\in \mc{E}^\inter(X) }a_{ij}^\abs\leq C(A_1+A_2).
	\end{equation*}
	where
	\begin{equation}\label{B1}
		A_1\coloneqq \sum_{ j\in V_{X}, j\neq i}\frac{r_ir_j}{d_{ij}^2}\indic_{ r_j<d_{ij} <16\ve_0\Cut}\indic_{\mc{B}_{ij}^c}\indic_{r_j\geq r_i},
	\end{equation}
	\begin{equation}\label{B2}
		A_2\coloneqq \sum_{j\in V_{X}, j\neq i}\frac{r_i}{d_{ij}}\indic_{d_{ij}\leq  r_j}\indic_{\mc{B}_{ij}^c}\indic_{r_j\geq r_i}.
	\end{equation}

	Let us first consider the sum in \eqref{B1}. Recall that $\mc{B}_{ij}^c$ is the event where $d_{ij}\geq M\min(\max(r_i,\lambda),\max(r_j,\lambda))$. Therefore,
	\begin{equation}
		A_1\leq \sum_{ j\in V_{X}, j\neq i}\frac{r_ir_j}{d_{ij}^2}\indic_{\max(r_j,Mr_i)<d_{ij} <16\ve_0\Cut}\indic_{r_j\geq r_i}.
	\end{equation}

	Let us introduce 
	\begin{equation}\label{cardN}\mc{E}_1(s,t)\coloneqq \{ j \in V_{X}:r_i \leq r_j ,  s< r_j <2 s,\max(Mr_i,r_j)\leq d_{ij}\leq 16\ve_0\Cut, t<d_{ij} < 2t\}.\end{equation}
	\begin{equation*}
		\mathcal{N}(s,t)\coloneqq |\mc{E}_1(s,t)|.
	\end{equation*}

	Note that in view of the constraints $d_{ij} \in (Mr_i, 16\ve_0\Cut)$ and   $ r_j \in [r_i, \ve_0\Cut]$, the function $\mathcal N(s,t)$ vanishes identically for $2t\le \max(M r_i,s)$, for $t \ge 16\ve_0\Cut$, and for $s \leq \frac{r_i}{2}$.

	We can give a geometric bound on $\mathcal N$: 
	assume that $j_1, j_2, \dots j_q $ are $q$ elements of $\mc{E}_1(s,t)$, which in addition belong to different multipoles. By the definition of multipoles, and since their sizes are larger than $s$, their distance must exceed $Ms$. Thus there are $q$ balls of radius $(M-1)s$ which are disjoint and with centers included in the ball centered at $x_i$ and of outer radius $ 2t+r_i$. The sum of areas of these balls thus cannot exceed the area of the ball of radius $2t+r_i+Ms$ which implies that  
	$q ((M-1)s)^2 \le  (t+(M+1)s)^2$, unless $q=1$.

	Since on the event $\mc{D}$, multipoles are of cardinality bounded by $p$, adding over different multipoles gives
	\begin{equation} \label{cardN2}
		\mathcal{N}(s,t)\le C\left(\frac{t^2}{M^2s^2}+ 1\right) \indic_{ \frac{1}{2} \max(Mr_i,s)\le t \leq 16\ve_0\Cut}\indic_{s\geq \tfrac{r_i}{2}}.
	\end{equation}
	We may now write 
	\begin{multline*}
		\sum_{ j\in V_{X}}\frac{r_j}{d_{ij}^2}\indic_{\max(r_j,Mr_i)<d_{ij} <16\ve_0\Cut}\indic_{r_j\geq r_i}\\ \leq C\sum_{j\in V_{X}}  \indic_{\max(r_j,Mr_i)<d_{ij} <16 \ve_0\Cut}\indic_{r_j\geq r_i}\iint \frac{s}{t^2}\frac{1}{ts}\indic_{(t,2t)}(d_{ij})\indic_{(s,2s)}(r_j)\indic_{ \frac{1}{2} \max(Mr_i,s)\le t \leq 16\ve_0\Cut}\indic_{s\geq \tfrac{r_i}{2}}\dd s \dd t.
	\end{multline*}
	Therefore, inverting the order of summation and integration gives 
	\begin{equation*}
		\sum_{j\in V_{X}}\frac{r_j}{d_{ij}^2}\indic_{\max(r_j,Mr_i)<d_{ij} <16\ve_0\Cut}\indic_{r_j\geq r_i}\leq C\iint \frac{1}{t^3}\mc{N}(s,t)\indic_{ \frac{1}{2} \max(Mr_i,s)\le t \leq 16\ve_0\Cut}\indic_{s\geq \tfrac{r_i}{2}}\dd s\dd t. 
	\end{equation*}
	Inserting the estimate \eqref{cardN2}, we obtain
	\begin{align*}
		\sum_{j\in V_{X}}\frac{r_j}{d_{ij}^2}\indic_{\max(r_j,Mr_i)<d_{ij} <16\ve_0\Cut}\indic_{r_j\geq r_i}& \le C  \int_{t=\tfrac{M}{2} r_i}^{16 \ve_0\Cut}\int_{s= \tfrac{r_i}{2}}^{2t}\frac{1}{t^3} \left(\frac{t^2}{M^2s^2}+ 1\right)
		\dd s \dd t\\
		& \le C'\left( \frac{1}{M^2 r_i} \log \frac{16\ve_0\Cut}{r_i}+ \frac{1}{M r_i}\right) .\end{align*}
	Inserting into \eqref{B1}  yields
	\begin{equation}\label{bornB1}
		|A_1|\le  \frac{C}{M}\left( \log \frac{16\ve_0\Cut}{r_i}+1\right).\end{equation}

	We next turn to the right-hand side of \eqref{B2}. Notice that 
	\begin{equation*}
		A_2= \sum_{j\in V_{X}:j\neq i}\frac{r_i}{d_{ij}}\indic_{Mr_i\le d_{ij}\leq r_j }\indic_{r_j\geq r_i}.
	\end{equation*}
	Define 
	\begin{equation*}
		\mc{E}_2(t)= \left\{ j \in V_{X}: \max(t,Mr_i)<d_{ij} < \min(2t,r_j)\right\}\quad \text{and}\quad \mc{N}(t)\coloneqq |\mc{E}_2(t)|.
	\end{equation*}
	We observe that if $j_1,j_2\in \mc{E}_2(t)$, then, using $\dist(A,C)\leq \dist(A,B)+\diam(B)+\dist(B,C)$ and the fact that $M>20$, 
	\begin{equation*}
		d_{j_1j_2}\leq 4t\leq 4\min(r_{j_1},r_{j_2})+r_i\le M \min (r_{j_1}, r_{j_2}).
	\end{equation*}
	Therefore, $j_1$ and $j_2$ are in the same multipole. Since on the event $\mc{D}$ multipoles are of cardinality bounded by $p$, we deduce that $|\mc{E}_2(t)|=:\mc{N}(t)\leq p$.

	Applying the same reasoning as for $A_1$, we obtain
	\begin{equation}\label{eq:B2i}
		\begin{split}
			\sum_{j\in V_{X}}\frac{1}{d_{ij}}\indic_{Mr_i\leq d_{ij} \le r_j}\indic_{r_j\geq r_i}
			&\leq C\sum_{j\in V_{X}}\indic_{Mr_i\leq d_{ij} \le r_j}\indic_{r_j\ge r_i} \int_{\tfrac{M}{2}r_i}^{16 \ve_0\Cut} \frac{1}{t^2}\indic_{d_{ij}\in (t,2t)} \dd t \\
			&\leq C\int_{\tfrac{M}{2}r_i}^{16 \ve_0\Cut} \frac{1}{t^2}\mc{N}(t)\dd t\leq \frac{Cp}{Mr_i}.
		\end{split}
	\end{equation}
	Inserting into \eqref{B2}, we obtain 
	$|A_2|\le \frac{C}M .$
	Combined with \eqref{bornB1}, this proves \eqref{eq:claim int}.

	We finally prove \eqref{eq:claim int2}.  By definition \eqref{def:gij}, one may write 
	\begin{equation*}
		\sum_{j\in V_{X}: j\neq i}g_{ ij}\indic_{r_j\geq r_i} \le 1+ \sum_{j\in V_{X}:d_{ij}\geq r_j\geq r_i}\Bigr(\frac{r_j}{d_{ij}}\Bigr)^2 \indic_{d_{ij}\le 16 \ve_0 \Cut} \indic_{\mc{B}_{ij}^c}+\sum_{j\in V_{X}:r_j> d_{ij},r_j\geq r_i}\indic_{\mc{B}_{ij}^c}.
	\end{equation*}
	Arguing as above,  we deduce that 
	\begin{align*}
		\sum_{j\in V_{X}:d_{ij}\geq r_j\geq r_i}\Bigr(\frac{r_j}{d_{ij}}\Bigr)^2\indic_{\mc{B}_{ij}^c} \indic_{d_{ij}\le 16 \ve_0 \Cut}&
		\le C \int^{16 \ve_0\Cut}_{t=\frac{M}{2} r_i}\int_{s= \tfrac{r_i}{2}}^{2t} \frac{1}{ts} \frac{s^2}{t^2} \mc{N} (s,t) \dd s \dd t\\
		& \le C\int^{16 \ve_0\Cut}_{t=\frac{M}{2} r_i}\int_{s= \tfrac{r_i}{2}}^{2t}  \frac{s}{t^3}   \left( \frac{t^2}{M^2 s^2} +1\right) \dd s \dd t\\
		& \le C\left(\frac{1}{M^2}\log^2 \Bigr(\frac{ 16 \ve_0\Cut}{r_i}\Bigr)+1\right).
	\end{align*}
	On the other hand, 
	$$\sum_{j\in V_{X}:r_j> d_{ij},r_j\geq r_i}\indic_{\mc{B}_{ij}^c }= \sum_{j\in V_{X}:r_j> d_{ij}\ge M \max(r_i,\lambda) ,r_j\geq r_i}1.$$
	Hence, proceeding as above and using $\mc{N}(t)\leq p$, we get 
	\begin{equation*}
		\sum_{j\in V_{X}:r_j> d_{ij},r_j\geq r_i}\indic_{\mc{B}_{ij}^c }\leq  C\log\Bigr(\frac{16\ve_0\Cut}{r_i}+1\Bigr).
	\end{equation*}
	This  proves \eqref{eq:claim int2}.
	
\end{proof}

\subsection{Simplified bound on the activity}

Observe that both the tree $T^b$ in \eqref{eq:Kstep2''} and the trees $\tilde{T}^{b}$ in each $\mc{L}_{C_1}(X_l)$ carry the weight $b_{ij}^{\abs}$. Thus, using the bound on the interaction provided by Lemma~\ref{lemma:control sum intera} and the result of Lemma \ref{lemma:sum Euler} gives a much simpler bound on the activity, by summing first over the connected components of the Eulerian graph.

\begin{lemma}\label{lemma:simplification}
	Let $\beta\in (2,\infty)$.
	Let $X$ be a subpartition of $[N]$ such that for every $S\in X$, $|S|\leq p(\beta)$, with $p(\beta)$ as in Definition \ref{def:pbeta}. There exists $C>0$ depending only on $\beta$ and $M$ and $C_0>0$ depending only on $\beta$ such that 
	\begin{equation*}
		|\Ksf_{\ve_0}^-(X)|\leq \frac{e^{C|V_X|}}{(N\lambda^{(2-\beta)})^{|V_X|}\Msf_{\ve_0}^0(X)}\sum_{n=0}^\infty \frac{1}{n!}\sum_{\substack{X_1,\ldots,X_n\subset X\\ \mathrm{disjoint} }} \prod_{l=1}^n\sum_{T_l^a\in \mathsf{T}^{X_l}}\sum_{T^b \in \mathsf{T}^{\Coarse_X(X_1,\ldots,X_n)}} \mc{J}_{C_1}(\cup_{l=1}^n T_l^a,T^b),
	\end{equation*}
	where for every $T_1^a\in \mathsf{T}^{X_1},\ldots,T_n^a\in \mathsf{T}^{X_n}$, $T^b\in \mathsf{T}^{\Coarse_X(X_1,\ldots,X_n)}$,
	\begin{multline}\label{def:IT'}
		\mc{J}_{C_0}(\cup_{l=1}^n T_l^a,T^b)\coloneqq \int_{(\Lambda^2)^{|V_X|}} \prod_{ij\in T^b}b_{ij}^{\abs}\prod_{l=1}^n\Bigr(\prod_{S\in X_l, S\neq \hat{S}_l}r_{S}^2\Bigr)
		\prod_{ij\in \cup_{l=1}^nT_l^a }\Bigr(\frac{1}{d_{ij}^2}\indic_{ d_{ij}\leq 16\ve_0\Cut}\indic_{\mc{B}_{ij}^c} \Bigr) \\ \times \prod_{S\in X}\indic_{\mc{B}_S} \prod_{i\in V_X}\Bigr(\Bigr(\frac{\ve_0\Cut}{r_i}\Bigr)^{\frac{C_0}{M}}e^{\beta \g_\lambda(x_i-y_i)}\indic_{|x_i-y_i|\leq \ve_0\Cut} \dd x_i \dd y_i\Bigr).
	\end{multline}
\end{lemma}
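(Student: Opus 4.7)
The approach is to chain Lemma~\ref{lemma:LXi} (Penrose resummation of the $f^{\tilde v}$-expansion) with Lemma~\ref{lemma:sum Euler} (peeling of Eulerian graphs), and then close the estimate via the geometric inequalities of Lemma~\ref{lemma:control sum intera}. First, Lemma~\ref{lemma:LXi} bounds $|\Ksf_{\ve_0}^-(X)|$ by an outer sum over disjoint subpartitions $(X_l^{\mathrm{out}})_l$ of $X$, with a product $\prod_l\mc{L}_{C_0}(X_l^{\mathrm{out}})$ and an outer tree weighted by $b^{\abs}$, integrated against $\Psf_X^{-,\ve_0}$. Then I would apply Lemma~\ref{lemma:sum Euler} inside each $\mc{L}_{C_0}(X_l^{\mathrm{out}})$, expanding it as an inner sum over disjoint subpartitions $(Y_{l,j})_j$ of $X_l^{\mathrm{out}}$ and producing $H_{X_l^{\mathrm{out}}}$, which carries inner trees $T_{l,j}^a\in\mathsf{T}^{Y_{l,j}}$, an inner coarsening tree, powers $r_S^2$, and the geometric multipliers $\exp(C_0\sum a^{\abs})$, $\prod_i(1+\sum_jg_{ij})$, and $\prod_i\log_M(\ve_0\Cut/r_i)$.

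Next, I would flatten the nested double sum into a single sum over disjoint subpartitions $(X_l)_{l=1}^n$ of $X$ (the $Y_{l,j}$'s relabeled), with $T^a_l\in\mathsf{T}^{X_l}$ coming from the inner trees and $T^b\in\mathsf{T}^{\Coarse_X(X_1,\ldots,X_n)}$ obtained by concatenating the outer Penrose tree with the inner coarsening trees; a direct check shows this union spans the refined coarsening relative to $X$. The flattening is performed as an inequality, which is harmless since any overcount of each flat configuration by nested ones is absorbed into the prefactor $e^{C|V_X|}$. On the event $\bigcap_{ij}\mc{B}_{ij}^c\cap\{r_i\le\ve_0\Cut\}$ already enforced by the indicators in $\mc{L}_{C_0}$, inequalities \eqref{eq:claim int}--\eqref{eq:claim int2} give $\exp(C_0\sum a^{\abs})\cdot\prod_i(1+\sum_jg_{ij})\le e^{C|V_X|}\prod_i(\ve_0\Cut/r_i)^{CC_0/M}$, and the leftover logarithms are absorbed into the same type of power via $\log x\le C_\alpha x^\alpha$ for any $\alpha>0$, producing the exponent $C_0/M$ appearing in $\mc{J}_{C_0}$.

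Finally, I would expand the expectation against $\Psf_X^{-,\ve_0}$ as an explicit integral, producing the prefactor $1/(\Msf_{\ve_0}^-(X)(NC_{\beta,\lambda,\ve_0})^{|V_X|})$. Combining \eqref{eq:Clambda bound} with \eqref{eq:MX -low}--\eqref{eq:M-diff} yields $\Msf_{\ve_0}^-(X)(NC_{\beta,\lambda,\ve_0})^{|V_X|}\ge c\Msf_{\ve_0}^0(X)(N\lambda^{2-\beta})^{|V_X|}$ up to $e^{C|V_X|}$, which is exactly the denominator of the target. The internal Gibbs factors $\prod_S\prod_{i<j\in S}e^{-\beta\tilde v_{ij}}\indic_{\mc{A}_{ij}}\prod_ie^{-L(r_i/(\ve_0\Cut))^2}$ are bounded pointwise on $\bigcap_S\mc{B}_S$ by $e^{C|V_X|}$: on $\mc{A}_{ij}\cap\mc{B}_{ij}$, Lemma~\ref{lemma:Aij} forces $d_{ij}\ge\min(r_i,r_j)$, Lemma~\ref{lemma:errortilde} then gives $|\tilde v_{ij}|\le C$, and since $|S|\le p(\beta)$ each multipole contributes only $\binom{|S|}{2}=O(1)$ bounded pair factors, while the Gaussian tail is trivially $\le 1$. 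The main obstacle is the combinatorial flattening: one must carefully verify that the union of the outer Penrose tree with the inner coarsening trees indeed yields a valid spanning tree relative to $\Coarse_X(X_1,\ldots,X_n)$, and that the factorial prefactors, together with the pseudoforest sum in \eqref{eq:sumPF1} and the tree-counting in \eqref{eq:change sum T}, combine cleanly into the single $1/n!$ and the appearance of $\mathsf{T}^{X_l}$ in the final statement without overcounting that would prevent fitting the exponent $C_0/M$ in $\mc{J}_{C_0}$.
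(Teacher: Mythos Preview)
Your approach is essentially the paper's: chain Lemmas~\ref{lemma:LXi}, \ref{lemma:sum Euler}, \ref{lemma:control sum intera}, flatten the nested double sum into a single sum over disjoint $(X_l)_l$ (the paper refers to this as ``permuting the sum over $n$ and the sum over $k$, proceeding as in the proof of Lemma~\ref{lemma:start low}, Step~7''), expand $\Psf_X^{-,\ve_0}$, and pass from $\Msf_{\ve_0}^-$ to $\Msf_{\ve_0}^0$ via Proposition~\ref{prop:bounded lower}.

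One small correction: your claim that Lemma~\ref{lemma:Aij} forces $d_{ij}\ge\min(r_i,r_j)$ on $\mc{A}_{ij}\cap\mc{B}_{ij}$ is not right. Lemma~\ref{lemma:Aij} only bounds the two \emph{opposite-charge} distances $|x_i-y_j|$ and $|y_i-x_j|$ from below; the same-charge distance $|x_i-x_j|$ can be arbitrarily small on $\mc{A}_{ij}$, so Lemma~\ref{lemma:errortilde} (which assumes $d_{ij}\ge\min(r_i,r_j)$) does not apply directly. The correct and equally easy argument, noted in the proof of Lemma~\ref{lemma:lowerM+}, is that on $\mc{A}_{ij}$ the interaction $\tilde v_{ij}$ is bounded from \emph{below} (though not from above): the attractive terms are controlled by the matching constraint, while the repulsive same-charge terms only help the lower bound. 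This gives $e^{-\beta\tilde v_{ij}}\indic_{\mc{A}_{ij}}\le C$, which is all you need; the gap is cosmetic and does not affect the structure of your argument.
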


\begin{proof}
	By combining Lemmas \ref{lemma:LXi}, \ref{lemma:sum Euler} and  \ref{lemma:control sum intera}, there exists $C>0$ depending only on  $\beta$ such that 
	\begin{multline}\label{eq:K*}
		|\Ksf_{\ve_0}^-(X)|\leq C^{|V_X|}\sum_{n=0}^\infty \frac{1}{n!}\sum_{\substack{X_1,\ldots,X_n\subset X\\ \mathrm{disjoint} }} \sum_{T^{b}\in \mathsf{T}^{\Coarse_X(X_1,\ldots,X_n)}}\\ \dE_{\Psf_X^{-,\ve_0}}\left[\prod_{l=1}^n \mc{L}'(X_l)\prod_{ij\in T^{b}}b_{ij}^{\abs} \prod_{i\in V_X}\Bigr(\frac{\ve_0\Cut}{r_i}\Bigr)^{\frac{C}{M}} \right],
	\end{multline}
	where 
	\begin{multline*}
		\mc{L}'(X')\coloneqq \sum_{k=0}^{\infty}\frac{1}{k!}\sum_{\substack{X_1',\ldots,X_k'\subset X'\\ \mathrm{disjoint} } }\sum_{T_1^a\in \mathsf{T}^{X_1'}}\ldots \sum_{T_k^a\in \mathsf{T}^{X_k'}} \sum_{\tilde T^b\in \mathsf{T}^{\Coarse_{X'}(X_1',\ldots,X_k') } }\\ \prod_{l=1}^k \prod_{S\in X_l':S\neq \hat{S}_l}r_S^2 \prod_{ij\in T_1^a\cup\cdots\cup T_k^a }\frac{1}{d_{ij}^2}\indic_{d_{ij}\leq 16\ve_0\Cut}\indic_{\mc{B}_{ij}^c} \prod_{ij\in \tilde T^b}b_{ij}^{\abs}.
	\end{multline*}
	The expression \eqref{eq:K*} can be simplified by permuting the sum over $n$ and the sum over $k$:
	\begin{multline*}
		|\Ksf_{\ve_0}^-(X)|\leq C^{|V_X|}\sum_{n=0}^\infty \frac{1}{n!}\sum_{\substack{X_1,\ldots,X_n\subset X\\ \mathrm{disjoint} }} \sum_{T_1^a\in \mathsf{T}^{X_1}}\ldots \sum_{T_n^a\in \mathsf{T}^{X_n}}\sum_{T^b\in \mathsf{T}^{\Coarse_{X}(X_1,\ldots,X_n)}}\\ \dE_{\Psf_X^{-,\ve_0}} \left[\prod_{ij\in T^b}b_{ij}^{\abs} \prod_{l=1}^n\prod_{S\in X_l, S\neq \hat{S}_l}r_{S}^2
		\prod_{ij\in \cup_{l=1}^nT_l^a }\Bigr(\frac{1}{d_{ij}^2}\indic_{d_{ij}\leq 16\ve_0\Cut}\indic_{\mc{B}_{ij}^c} \Bigr) \prod_{i\in V_X}\Bigr(\frac{\ve_0\Cut}{r_i}\Bigr)^{\frac{C}{M}}\right].
	\end{multline*}
	Using the definition of $\Psf_X^{-,\ve_0}$, this shows that there exists $C>0$ depending only on $\beta$ and $M$ and $C_0>0$ depending only on  $\beta$ such that 
	\begin{equation*}
		|\Ksf_{\ve_0}^-(X)|\leq \frac{e^{C|V_X|}}{(N\lambda^{(2-\beta)})^{|V_X|}\Msf_{\ve_0}^-(X)}\sum_{n=0}^\infty \frac{1}{n!}\sum_{\substack{X_1,\ldots,X_n\subset X\\ \mathrm{disjoint} }} \prod_{l=1}^n\sum_{T_l^a\in \mathsf{T}^{X_l}}\sum_{T^b \in \mathsf{T}^{\Coarse_X(X_1,\ldots,X_n)}} \mc{J}_{C_0}(\cup_{l=1}^n T_l^a,T^b),
	\end{equation*}
	Finally, applying Proposition \ref{prop:bounded lower} to $\Msf_{\ve_0}^-$ and $\Msf_{\ve_0}^0$, there exists $C>0$ depending on $\beta$ and $M$ such that 
	\begin{equation*}
		\Msf_{\ve_0}^-(X)\geq C^{-|V_X|}\Msf_{\ve_0}^0(X).
	\end{equation*}
	Inserting this into the last display concludes the proof.
\end{proof}

In Section \ref{sub:app unbounded}, we prove the following lemma:

\begin{lemma}[Integration and summation over trees]\label{lemma:integration summation}
	
	Let $\beta\in (2,\infty)$ and $p(\beta)$ be as in Definition \ref{def:pbeta}. Let $X$ be a subpartition of $[N]$ such that for every $S\in X$, $|S|\leq p(\beta)$. Let $X_1,\ldots,X_n\subset X$ be disjoint. For every $T_1^a\in \mathsf{T}^{X_1},\ldots,T_n^a\in \mathsf{T}^{X_n}$, $T^b\in \mathsf{T}^{\Coarse_X(X_1,\ldots,X_n)}$, let $\mc{J}_{C_0}(\cup_{l=1}^n T_l^a,T^b)$ be as in \eqref{def:IT'}. Suppose that 
	\begin{equation*}
		|V_X|> \begin{cases}
			\frac{4}{4-\beta}& \text{if $\beta\in (2,4)$}\\
			2p_0 & \text{if $\beta\geq 4$}.
		\end{cases}
	\end{equation*}
	Then, for $M$ large enough with respect to $\beta$ and $p(\beta)$, and $\lambda$ small enough with respect to $\beta, p(\beta)$ and $M$, there exists a constant $C>0$ depending on $\beta$, $M$ and $p(\beta)$ such that
	\begin{multline}\label{eq:sum IT s beta}
		\sum_{\substack{T_1^a\in \mathsf{T}^{X_1},\ldots,T_n^a\in \mathsf{T}^{X_n} \\ T^b\in \mathsf{T}^{\Coarse_X(X_1,\ldots,X_n)}}} \mc{J}_{C_0}(\cup_{l=1}^n T_l^a,T^b)\\ \leq C^{|V_X|}N |\Coarse_X(X_1,\ldots,X_n)|^{|\Coarse_X(X_1,\ldots,X_n)|}\prod_{l=1}^n|X_l|^{|X_l|} \ve_0^{2\alpha(\beta)|V_X|-2 }\lambda^{(2-\beta)|V_X|}\delta_{\beta,\lambda},
	\end{multline}
	where $\alpha(\beta)$ is as in \eqref{def:alphabeta}.
\end{lemma}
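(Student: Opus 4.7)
The plan is to evaluate the integral $\mc{J}_{C_0}$ by successively integrating over the centers $z_i = \tfrac12(x_i + y_i)$ along the spanning tree $T := T^b \cup \bigcup_l T_l^a$ of $V_X$ relative to $X$, then integrate over the dipole vectors $\vec{r}_i$, and finally sum over trees. By translation invariance, one pins a single center in $\Lambda$ (producing a factor $N$) and then reduces to integrating relative positions along the edges of $T$. For each intra-component edge $ij \in T_l^a$, I will integrate over the separation variable $z_j - z_i$ against the weight $d_{ij}^{-2}\indic_{\mc{B}_{ij}^c}\indic_{d_{ij} \le 16\ve_0\Cut}$; on the range $d_{ij} \in (M\max(r_i,r_j,\lambda), \,16\ve_0\Cut)$ this produces a logarithmic factor $\log(\ve_0\Cut / \max(r_{[i]^{X_l}}, r_{[j]^{X_l}}, \lambda))$, which can be absorbed into the weights $(\ve_0\Cut/r_i)^{C_0/M}$ after taking $M$ large (as in Lemma~\ref{lemma:control sum intera}). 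For each inter-component edge $ij \in T^b$, I integrate the weight $b_{ij}^{\abs} = (a_{ij}^{\abs})^2 + \indic_{\mc{B}_{ij}}$: the quadratic part yields $\int d^{-4}\,d\,dd \cdot r_i^2 r_j^2 \asymp r_i^2 r_j^2 / \max(r_i,r_j,\lambda)^2$, and the indicator part yields a factor $\asymp \max(r_i,r_j,\lambda)^2$. Both produce a bounded contribution after combining with the two free $r$-factors from the adjacent multipoles.

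After all spatial integrations, what remains is a one-dimensional radial integral of the form announced in the introduction. The factor $\prod_{S \neq \hat S_l} r_S^2$ boosts every multipole except one per component to a quadratic exponent in the dipole length; within each multipole, $r_S$ is realized by the largest dipole, so bounding $r_S \le r_i$ and using $|S| \le p(\beta)$ puts the exponents in the shape needed for the quadratic estimate \eqref{eq:quad intro}. Combined with the dipole weights $e^{\beta \g_\lambda(r_i)} \asymp \max(r_i,\lambda)^{-\beta}$ and the regularizer $(\ve_0\Cut/r_i)^{C_0/M}$, the integral over each $r_i$ reduces, after a careful accounting of which vertex sits at which position in $T$, to
\begin{equation*}
\int_0^{\ve_0\Cut} \max(r,\lambda)^{(4-\beta)(|V_X|-1) - \beta} r \, dr \,\cdot\, \lambda^{(2-\beta)(\text{terms})}.
\end{equation*}
For $\beta \in (2,4)$ and $|V_X| > 2/(4-\beta)$, the integrand is integrable at $0$ and dominated by the upper cut-off, producing $(\ve_0\Cut)^{(4-\beta)|V_X|-2}$ up to a $|\log\lambda|$ at $\beta = \beta_{|V_X|}$; using $R_{\beta,\lambda}^{4-\beta} = \lambda^{2-\beta}$, this equals $\lambda^{(2-\beta)|V_X|} \ve_0^{2\alpha(\beta)|V_X|-2} \delta_{\beta,\lambda}$. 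For $\beta \ge 4$, the same integral is dominated by the top cut-off $\Cut = \lambda^{-2p_0}$ and, under $|V_X| > 2p_0$, gives $\lambda^{(2-\beta)|V_X|}\ve_0^{2\alpha(\beta)|V_X|-2}\lambda^{2p_0} = \lambda^{(2-\beta)|V_X|}\ve_0^{2\alpha(\beta)|V_X|-2}\delta_{\beta,\lambda}$.

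Finally, I sum over trees. A Cayley-type count bounds $|\mathsf{T}^{X_l}|$ by $|X_l|^{|X_l|-2}$ multiplied by a factor $\prod_{S \in X_l} |S|^{\deg_{T_l^a}(S)} \le p(\beta)^{2(|X_l|-1)}$ for the choice of incident vertex inside each block, and similarly $|\mathsf{T}^{\Coarse_X(X_1,\ldots,X_n)}| \le |\Coarse_X|^{|\Coarse_X|}$ times an analogous block correction, all absorbed into $C^{|V_X|}$. The main obstacle I anticipate is bookkeeping the exponents in the radial integral after the spatial integration, especially because the $r_S^2$ factors only decorate non-maximal multipoles per component and the tree degrees are not uniform; this requires a careful reduction using the handshaking lemma on $T_l^a$ and $T^b$, together with the observation that the logarithmic factors from the $T_l^a$-integration and the $(\ve_0\Cut/r_i)^{C_0/M}$ regularizer combine to shift the exponent in the $r_i$-integral by a small constant that does not affect the leading power but ensures integrability at $r_i = \lambda$. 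A second subtle point is the treatment of the boundary case $\beta = \beta_p$, where the logarithmic correction built into $\delta_{\beta,\lambda}$ must be tracked explicitly from the radial integral at the critical exponent.
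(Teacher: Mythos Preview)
Your overall strategy—integrate the centers along the spanning tree, reduce to a radial integral, then sum over trees—has the right shape, but the step where you ``absorb the logarithmic factors into the weights $(\ve_0\Cut/r_i)^{C_0/M}$ and then count trees by Cayley'' has a genuine gap that would break the downstream absolute convergence.

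Here is the problem. Integrating $d_{ij}^{-2}$ over $[M\min(r_i,r_j),16\ve_0\Cut]$ produces a factor $\sim L_i:=\log(16\ve_0\Cut/r_i)$ per edge, so for a fixed tree $T_l^a$ the total log weight is bounded by $\prod_i L_i^{\deg_{T_l^a}(i)}$. A vertex of degree $d$ then carries $L_i^{\,d}$, and absorbing this into a single $(\ve_0\Cut/r_i)^{C_0/M}=e^{L_iC_0/M}$ costs $\sup_{L}L^{\,d}e^{-LC_0/M}=(dM/(eC_0))^{d}$. For the star tree one vertex has $d=|X_l|-1$, so the absorption constant is $\sim(|X_l|M/C_0)^{|X_l|}$. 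Multiplying by your tree count $|X_l|^{|X_l|-2}$ gives a total $\sim|X_l|^{2|X_l|}$, not the $|X_l|^{|X_l|}$ required in \eqref{eq:sum IT s beta}. That extra factor is fatal: tracing through Step~2 of Lemma~\ref{lem:cvgentseries}, the $k_0^{k_0}$ in \eqref{eq:KF large} becomes $k_0^{2k_0}$, and after combining with the cluster count $(N/k_0)^{k_0}$ one is left with $\sum_k(C\ve_0^{\alpha(\beta)})^{k}k^{k}$, which diverges for every $\ve_0>0$.

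The paper circumvents this by summing over trees \emph{before} absorbing the logs, using a weighted Cayley identity (Lemma~\ref{lemma:Cayley}):
\[
\sum_{\bar T\text{ tree on }X_l}\prod_{S\in X_l} x_S^{\deg_{\bar T}(S)}=|X_l|^{|X_l|-2}\Bigl(\prod_{S} x_S\Bigr)\Bigl(\tfrac{1}{|X_l|}\sum_{S} x_S\Bigr)^{|X_l|-2},
\]
applied with $x_S\asymp\max_{i\in S}L_i$ (times $|S|\le p(\beta)$ for the choice of incident vertex). The right-hand side equals $(\prod_S x_S)(\sum_S x_S)^{|X_l|-2}$; when one $x_{S^*}=L^*$ dominates, this is $\sim(L^*)^{|X_l|-1}$, and absorbing that into $e^{L^*C_0/M}$ costs $\sim(|X_l|M/C_0)^{|X_l|}$, which is precisely the allowed $C^{k}|X_l|^{|X_l|}$. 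The point is that the averaging in Cayley's formula already accounts for the fact that most trees are not stars; bounding each tree uniformly and then multiplying by the number of trees double-counts this.

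Two smaller corrections. For $\beta\ge4$ the exponent $(4-\beta)k-3\le-3<-1$, so the radial integral is dominated by the \emph{lower} cutoff $\lambda$, not the upper cutoff $\Cut$ as you wrote; the answer $\lambda^{(2-\beta)k}\lambda^{2(k-1)}$ then needs the hypothesis $k>2p_0$ plus the $(\ve_0\Cut/\lambda)^{2C_0k/M}$ correction (and $\lambda$ small relative to $\ve_0$) to reach $\ve_0^{k-2}\lambda^{(2-\beta)k}\lambda^{2p_0}$. And for $\beta\in(2,4)$, after the $(\cdot)^{C_0/M}$ shift the effective radial exponent is $\ge\tfrac{4-\beta}{2}k-3$, which is strictly $>-1$ exactly under the stated hypothesis $k>4/(4-\beta)$ (not $2/(4-\beta)$); in particular no $|\log\lambda|$ arises here, and the log in $\delta_{\beta,\lambda}$ is pure slack for this lemma.
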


\medskip

Let us briefly comment on the proof of Lemma \ref{lemma:integration summation}. 
Suppose to simplify that $\beta\in (2,4)$ and that $X$ is made of pure dipoles only. Integrating the distances $d_{ij}$ for $ij\in \cup_l T_l^a\cup T^b$ reduces the problem to controlling
\begin{equation*}
	\int \frac{\prod_{i\in V_X} r_i^2 }{\max_{j\in V_X}r_j^2}\Bigr(\frac{2\ve_0R_{\beta,\lambda}}{r_i}\Bigr)^{\frac{C_0}{M}} \prod_{i\in V_X}e^{\beta\g_\lambda(r_i)}\indic_{r_i\leq \ve_0R_{\beta,\lambda}}\dd \vr_i.
\end{equation*}
Switching to polar coordinates and denoting $i_0$ the index of the largest $r_i$, we are led  to integrating
\begin{equation*}
	\prod_{i\in V_X:i\neq i_0}\Bigr(r_i^{3-\beta}\Bigr(\frac{\ve_0R_{\beta,\lambda}}{r_i}\Bigr)^{\frac{C_0}{M}}\indic_{r_i\leq r_{i_0}} \Bigr)r_{i_0}^{1-\beta}\Bigr(\frac{2\ve_0R_{\beta,\lambda}}{r_{i_0}}\Bigr)^{\frac{C}{M}}.
\end{equation*}
The point is that, since $\beta\in (2,4)$, we can take $M$ large enough so that $3-\beta-\frac{C_0}{M}>-1$, forcing each $r_i$ to concentrate at its upper limit, i.e.~$\ve_0\Cut$. In turn, the ``many-body interaction term''
\begin{equation*}
	\prod_{i\in V_X}\frac{2\ve_0R_{\beta,\lambda}}{r_i}
\end{equation*}
disappears upon integration. This argument depends critically on the  fact that we expand around the multipole model; otherwise, we could not make the exponent in front of that error arbitrarily small.

\subsection{Absolute convergence of the cluster series} \label{sub:abs lower}

We may now assemble all the above steps and sum over the number of multipoles of given cardinality. Let $\pi$ be a partition of $[N]$ and $X\subset \pi$ be a subpartition.
If $X$ contains only pure dipoles, i.e.~is a subpartition into singletons, then, 
\begin{equation}\label{eq:denom}
	\Msf_{\ve_0}^0(X)=1.
\end{equation}
Combining Lemma \ref{lemma:simplification}, \eqref{eq:sum IT s beta} and the above display yields 
\begin{equation*}
	|\Ksf_{\ve_0}^-(X)|\leq C^{|V_X|} |V_X|^{|V_X|} \frac{1}{N^{|V_X|-1}} \ve_0^{(4-\beta)|V_X|-2} \Cut^{-2}.
\end{equation*}
Therefore, supposing for simplicity that $\beta \in (2,4)$, 
\begin{equation}\label{eq:sumXX}
	\sum_{X:|X|=|V_X|>\frac{4}{4-\beta} }|\Ksf_{\ve_0}^-(X)|=\sum_{k>\frac4{4-\beta}}\sum_{X\subset \pi:|X|=|V_X|=k}|\Ksf_{\ve_0}^-(X)|\leq \sum_{k>\frac4{4-\beta}}\Bigr(C^k k^k \frac{1}{N^{k-1} } \ve_0^{(4-\beta)k} \Cut^{-2}\Bigr)\sum_{X\subset \pi:|X|=|V_X|=k}1.
\end{equation}
Moreover, 
\begin{equation*}
	\sum_{X\subset \pi:|X|=|V_X|=k}1\leq \binom{|\pi|}{k}\leq \frac{|\pi|^k}{k!}\leq \frac{N^k}{k!}. 
\end{equation*}
Inserting this into \eqref{eq:sumXX} and using Stirling's formula, we deduce that for $\ve_0\in (0,1)$ small enough, the series in the right-hand side of \eqref{eq:sumXX} is absolutely convergent.

When $X$ is a general partition with $n_i'$ multipoles of cardinality $i$ for every $i=1,\ldots,p$, the lower bound \eqref{eq:denom} is replaced, in view of the estimate \eqref{eq:MX -low} of Proposition \ref{prop:bounded lower}, by
\begin{equation*}
	\Msf_{\ve_0}^0(X)\geq \frac{1}{C^{|V_X|}}\prod_{i=1}^p \lambda^{2(i-1)n_i'}N^{|X|-|V_X|}.
\end{equation*}
Combining this with \eqref{eq:Kstep2''} gives
\begin{equation*}
	|\Ksf_{\ve_0}^-(X)|\lesssim C^k |X|^{|X|} \frac{1}{N^{|X|-1}} \ve_0^{(4-\beta)k}\prod_{i=1}^p \lambda^{-|X| 2(i-1)n_i'}.
\end{equation*}

By computing the number of ways to choose a subpartition $X$ of $\pi$ having $n_1'$ pure dipoles, $\ldots$, $n_p'$ $2p$-poles, and using the crucial assumption \eqref{eq:bornenk}, we deduce from the last display that, for $\ve_0\in (0,1)$ small enough, the cluster series converges absolutely. 

Let us now prove this completely.

\begin{lemma}\label{lem:cvgentseries}
	Let $\beta\in (2,\infty)$.  Let $\pi$ be a partition of $[N]$ such that for every $S\in \pi$, one has $|S|\leq p(\beta)$ where $p(\beta)$ is as in Definition \ref{def:pbeta}. For every $k\in [p(\beta)]$, let $n_k$ be the number of parts of $\pi$ of cardinality $k$. Assume that these $n_k$'s satisfy \eqref{eq:assnk}.
	
	Then, for $M$ large enough with respect to $\beta$ and $p(\beta)$, $\ve_0$ small enough with respect to $\beta,M$ and $p(\beta)$, and $\lambda$ small enough, there exists $C>0$ depending only on $\beta, M$, $p(\beta)$, and $\ve_0$ such that
	\begin{equation}\label{eq:bb12}
		\sum_{X\in \mc{P}(\pi):|X|>1}|\Ksf_{\ve_0}^-(X)|20^{|V_X|} \indic_{|V_X|>p(\beta)}\leq CN \delta_{\beta,\lambda}
	\end{equation}
	where $\delta_{\beta,\lambda}$ is as in \eqref{defdelta}.
	In particular, the series $\sum_{X\in \mc{P}(\pi)} \Ksf_{\ve_0}^-(X)$ is convergent.

	Moreover, for $\ve_0$ small enough, there exists $C>0$ depending on $\beta,M$, $p(\beta)$ and $\ve_0$ such that
	\begin{equation}\label{eq:anch lem all}
		\max_{S_0\in \pi}\sum_{X\in \mc{P}(\pi):S_0\in X,|X|>1 }|\Ksf_{\ve_0}^-(X)| 20^{|V_X|}\leq C\Bigr(\lambda^2+\lambda^{2\frac{\beta-2}{4-\beta}-2(p^*(\beta)-1)} \indic_{\beta\in (2,4)}\Bigr).
	\end{equation}
\end{lemma}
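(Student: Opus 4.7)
The argument combines the pointwise upper bound on $|\Ksf_{\ve_0}^-(X)|$ furnished by Lemmas~\ref{lemma:simplification} and~\ref{lemma:integration summation} with the profile-dependent lower bound on $\Msf^0_{\ve_0}(X)$ from Proposition~\ref{prop:bounded lower}, then sums over subpartitions $X\in\mc{P}(\pi)$ organized by the profile $(n_k'(X))_k\coloneqq (\#_k X)_k$; assumption~\eqref{eq:assnk} is essential to control the multinomial counts. Inserting Lemma~\ref{lemma:integration summation} into Lemma~\ref{lemma:simplification}, the factor $\lambda^{(2-\beta)|V_X|}$ on the right cancels the $\lambda^{-(2-\beta)|V_X|}$ coming from $(N\lambda^{(2-\beta)})^{-|V_X|}$, and the lower bound $\Msf^0_{\ve_0}(X)\geq C^{-|X|}N^{|X|-|V_X|}\prod_k \lambda^{2(k-1)n_k'(X)}$ (immediate from \eqref{eq:MX -low} and multiplicativity) yields
\begin{equation}\label{eq:plan-Kbd}
    |\Ksf_{\ve_0}^-(X)|\leq \frac{C^{|V_X|}\,\ve_0^{2\alpha(\beta)|V_X|-2}\,\delta_{\beta,\lambda}}{N^{|X|-1}\prod_k \lambda^{2(k-1)n_k'(X)}},
\end{equation}
once the combinatorial sum $\sum_n \tfrac{1}{n!}\sum_{X_1,\ldots,X_n\text{ disj}}|\Coarse|^{|\Coarse|}\prod_l|X_l|^{|X_l|}$ has been absorbed into $C^{|V_X|}$ via Cayley's formula and Stirling (the $n!^{-1}$ is crucial here).

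The next step is to sum \eqref{eq:plan-Kbd} over $X\in\mc{P}(\pi)$ at fixed profile $(n_k')_k$. Using $\prod_k \binom{n_k}{n_k'}\leq \prod_k n_k^{n_k'}/n_k'!$ together with \eqref{eq:assnk} (which gives $n_k\leq \ve_0^{-\alpha(\beta)}N\lambda^{2(k-1)}$ for $k\geq 2$ and $n_1\leq N$), the factor $\prod_k \lambda^{2(k-1)n_k'}$ in the multinomial cancels exactly the denominator of \eqref{eq:plan-Kbd}, and combining the $N^{|X|}$ from the multinomial with the $N^{1-|X|}$ from \eqref{eq:plan-Kbd} leaves a global $N$, yielding
\begin{equation*}
    \sum_{X:\,\mathrm{prof}(X)=(n_k')} |\Ksf_{\ve_0}^-(X)|\,20^{|X|}\leq N\delta_{\beta,\lambda}\cdot \frac{(20C)^{|V_X|}\,\ve_0^{\,2\alpha(\beta)|V_X|-\alpha(\beta)(|X|-n_1')-2}}{\prod_k n_k'!}.
\end{equation*}
Since $p(\beta)$ is finite, summing over $(n_k')_{k=1}^{p(\beta)}$ with $|V_X|>p(\beta)$ factorizes as $\prod_{k=1}^{p(\beta)}\sum_{n_k'\geq 0}(C\ve_0^{\alpha(\beta)c_k})^{n_k'}/n_k'!$ for suitable exponents $c_k>0$, uniformly convergent for $\ve_0$ small depending on $\beta, M, p(\beta)$. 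This produces \eqref{eq:bb12}; dropping the indicator $\indic_{|V_X|>p(\beta)}$ and using \eqref{eq:K- bound} for the bounded part $|V_X|\leq p(\beta)$ yields the absolute convergence of $\sum_X \Ksf_{\ve_0}^-(X)$.

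For the anchored estimate \eqref{eq:anch lem all}, I would repeat the above with one part $S_0\in X$ of cardinality $s\coloneqq |S_0|$ fixed, replacing $\binom{n_s}{n_s'}$ by $\binom{n_s-1}{n_s'-1}\leq n_s^{n_s'-1}/(n_s'-1)!$. This removes one factor of $n_s\leq \ve_0^{-\alpha(\beta)}N\lambda^{2(s-1)}$, eliminating the global $N$-prefactor at the cost of an extra $\lambda^{-2(s-1)}$ in the denominator. Splitting $|V_X|\leq p(\beta)$ from $|V_X|>p(\beta)$: the bounded contribution, dominated by $|V_X|=2$ with $X=\{S_0,S_1\}$ both pure dipoles, satisfies $|\Ksf|\leq C\lambda^2/N$ by \eqref{eq:K- bound}, so summing over $S_1$ yields $C\lambda^2$; the unbounded contribution is bounded by $C\lambda^{-2(s-1)}\delta_{\beta,\lambda}=C\lambda^{2(\beta-2)/(4-\beta)-2(s-1)}$ for $\beta\in(2,4)$, and since $s\leq p^*(\beta)$ this is dominated by $\lambda^{2(\beta-2)/(4-\beta)-2(p^*(\beta)-1)}$, matching the second term of \eqref{eq:anch lem all}.

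The main technical obstacle lies in the first step, namely controlling the combinatorial sum over trees and subpartitions by $C^{|V_X|}$ rather than the naive $|V_X|^{O(|V_X|)}$: this requires the standard cluster-expansion bookkeeping (Cayley's formula plus Stirling applied to the tree weights $|\Coarse|^{|\Coarse|}\prod_l|X_l|^{|X_l|}$, with factorials absorbed against the $n!^{-1}$ in Lemma~\ref{lemma:simplification} and against the multinomial denominators in the profile summation), and is precisely why $M$ must be taken large and $\ve_0$ small depending on $\beta$ and $p(\beta)$.
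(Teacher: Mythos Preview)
Your overall strategy is exactly that of the paper: combine Lemma~\ref{lemma:simplification} with Lemma~\ref{lemma:integration summation}, insert the lower bound on $\Msf^0_{\ve_0}(X)$, and then sum over $X\in\mc{P}(\pi)$ organized by the profile $(n_k')_k$, using assumption~\eqref{eq:assnk} to cancel the $\lambda^{-2(k-1)n_k'}$ factors. The anchored estimate is handled by the same ``remove one binomial factor'' trick, again as in the paper.

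There is, however, a genuine gap in your first step. The pointwise bound~\eqref{eq:plan-Kbd} cannot hold as stated: the combinatorial sum $\sum_n \frac{1}{n!}\sum_{X_1,\ldots,X_n\,\text{disj}}|\Coarse_X(X_1,\ldots,X_n)|^{|\Coarse|}\prod_l|X_l|^{|X_l|}$ is \emph{not} bounded by $C^{|V_X|}$. Indeed, the single term $n=0$ already contributes $|X|^{|X|}$, which beats any exponential in $|X|$. The paper's computation (leading to \eqref{eq:KF large}) shows that this sum is bounded by $C^{|V_X|}|X|^{|X|}$, and the factor $|X|^{|X|}=k_0^{k_0}$ must be carried forward to the profile-summation stage. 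There it is cancelled against the $\prod_k n_k'!$ coming from $\prod_k\binom{n_k}{n_k'}$: writing $\binom{n_k}{n_k'}\leq e^{Ck}(N/k_0)^{k_0}\prod_k(\gamma_k/\gamma_k')^{\gamma_k'k_0}$ and invoking the entropy bound $\prod_k(\gamma_k')^{-\gamma_k'}\leq p(\beta)$ absorbs $k_0^{k_0}$ into $C^{k_0}$. Your final paragraph alludes to this (``against the multinomial denominators in the profile summation''), but the intermediate assertion~\eqref{eq:plan-Kbd} is false and the presentation would need to be reorganized so that $|X|^{|X|}$ is visible in the activity bound and explicitly cancelled downstream.

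A second, smaller gap: Lemma~\ref{lemma:integration summation} requires $|V_X|>q_0$ with $q_0=4/(4-\beta)$ (or $2p_0$), which is strictly larger than $p(\beta)$ when $\beta\in(2,4)$. The intermediate range $p(\beta)<|V_X|\leq q_0$ is finite but nonempty, and must be handled separately via the bounded-cluster estimate~\eqref{eq:K- bound} of Proposition~\ref{prop:bounded lower} (this is \eqref{eq:KF small} in the paper). Your write-up passes directly from ``large'' to ``bounded by $p(\beta)$'' and omits this case.
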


\begin{proof}Let $p\coloneqq p(\beta)$ and 
	\begin{equation}\label{eq:beta proof}
		q_0\coloneqq \begin{cases}
			\lfloor\frac{4}{4-\beta}\rfloor +1& \text{if $\beta\in (2,4)$}\\
			2p_0 & \text{if $\beta\geq 4$}.
		\end{cases}
	\end{equation}
	{\bf Step 1: summing over multipole sizes}.
	Recall that for every subpartition $X$ of $[N]$ and every $i\geq 1$, $\#_i X$ stands for the number of blocks of cardinality $i$ in $X$.
	We begin by summing the cluster series \eqref{eq:bb12} according to the number of multipoles of cardinality $i$, for every $i\in [p]$. We have
	\begin{equation*}\begin{split}\sum_{X\in \mc{P}(\pi):|V_X|>p}|\Ksf_{\ve_0}^-(X)|&=\sum_{\substack{n_1',\ldots,n_p':\\ n_1'+2n_2'+\cdots+pn_p'>p}}\sum_{X\in \mc{P}(\pi):\forall i\in [p], \#_i X=n_i'}|\Ksf_{\ve_0}^-(X)|\\ &\leq \sum_{\substack{n_1',\ldots,n_p':\\ n_1'+2n_2'+\cdots+pn_p'>p}}|\{X\in \mc{P}(\pi):\forall i, \#_i X=n_i'\}|\max_{X\in \mc{P}(\pi):\forall i\in [p], \#_i X=n_i'}|\Ksf_{\ve_0}^-(X)|.
	\end{split}\end{equation*}
	Fix $n_1',\ldots,n_p'\geq 0$ such that $$k\coloneqq n_1'+2n_2'+\cdots+pn_p'>p.$$ Set $$k_0\coloneqq n_1'+\cdots+n_p'\leq k.$$  Since  there are $n_i$ multipoles of cardinality $i$ in $\pi$ and since $X$ contains $n_i'$ multipoles of cardinality $i$, we have $\binom{n_i}{n_i'}$ choices for each $i\in [p]$, in other words,
	\begin{equation*}
		|\{X\in \mc{P}(\pi):\forall i\in [p], \#_i X=n_i'\}|=\prod_{i=1}^p \binom{n_i}{n_i'}.
	\end{equation*}
	For every $i\in [p]$, set $\gamma_i\coloneqq \frac{n_i}{N}$ and $\gamma_i'\coloneqq \frac{n_i'}{k_0}$. 
	By the  standard inequality $ \binom{n}{m}\le n^m /m!$ and  Stirling's formula, we have 
	\begin{equation}\label{eq:binom} \binom{n}{m}\le e^{Cm}\frac{n^m}{m^m}.\end{equation}
	Hence,
	\begin{equation*}
		\prod_{i=1}^p \binom{n_i}{n_i'}  \leq e^{Ck}\prod_{i=1}^p \frac{n_i^{n_i'}}{(n_i')^{n_i'}}\leq e^{Ck}\Bigr(\frac{N}{k_0}\Bigr)^{k_0}\prod_{i=1}^p \Bigr(\frac{\gamma_i}{\gamma_i'}\Bigr)^{\gamma_i'k_0},
	\end{equation*}
	with the convention that $(\frac{\gamma_i}{\gamma_i'})^{\gamma_i'}=1$ if $\gamma_i'=0$. Therefore, 
	\begin{equation}\label{eq:sumoverX}
		\sum_{X\in \mc{P}(\pi):\forall i\in [p], \#_i X=n_i'}|\Ksf_{\ve_0}^-(X)|\leq e^{Ck} \Bigr(\frac{N}{k_0}\Bigr)^{k_0} \prod_{i=1}^p \Bigr(\frac{\gamma_i}{\gamma_i'}\Bigr)^{\gamma_i'k_0} \max_{X\in \mc{P}(\pi):\forall i\in [p], \#_i X=n_i'}|\Ksf_{\ve_0}^-(X)|.
	\end{equation}
	Fix $X\in \mc{P}(\pi)$ such that for every $i\in [p]$, $\#_iX=n_i'$. Recall that in this setting  $k=|V_X|$ and $k_0=|X|$.

	\medskip
	
	\paragraph{\bf{Step 2: summing over $X_1,\ldots,X_n$}}  
	Suppose first that $k>q_0$, where $q_0$ is as in \eqref{eq:beta proof}.
	
	Recall from Lemma \ref{lemma:simplification} that there exists $C_0>0$ depending on $\beta$ and $C>0$ depending only on $\beta$ and $M$ such that
	\begin{multline}\label{eq:all tog 2}
		|\Ksf_{\ve_0}^-(X)|\leq \frac{e^{C|V_X|}}{(N\lambda^{(2-\beta)})^{|V_X|}\Msf_{\ve_0}^0(X)}\\ \times \sum_{n=0}^\infty \frac{1}{n!}\sum_{\substack{X_1,\ldots,X_n\subset X\\ \mathrm{disjoint} } } \sum_{T_1^a\in \mathsf{T}^{X_1},\ldots,T_n^a\in \mathsf{T}^{X_n}}\sum_{T^b\in \mathsf{T}^{\Coarse_X(X_1,\ldots,X_n)}}\mc{J}_{C_0}(\cup_{l=1}^n T_l^a,T^b),
	\end{multline}
	where $\mc{J}_{C_0}$ is controlled in Lemma \ref{lemma:integration summation}. 
	
	Observe that  $|\Coarse_X(X_1,\ldots,X_n)|=k_0-k_0'+n $. 
	We now sum \eqref{eq:all tog 2} according to the cardinality of $X_1 \cup \dots \cup X_n\subset X$, which we denote by $k_0'$ (notice $ n\le k_0'\le k_0$). Choosing the elements of $\pi$ that belong to $X_1 \cup \dots \cup X_n$ gives $\binom{k_0}{k_0'}$ choices.
	Then, we partition these $k_0'$ elements into $X_1 \cup \dots \cup X_n$ and sum according to the cardinality $m_1, \dots, m_n$ of $X_1, \dots, X_n$. There are $\frac{(k_0')!}{m_1 !\dots m_n!}$ ways to choose the $X_i$'s accordingly. 
	
	Thus, in view of \eqref{eq:sum IT s beta}, there exists a constant $C>0$ depending only on $\beta$, $p(\beta)$ and $M$ such that
	\begin{multline}\label{eq:i0''}
		\sum_{n=0}^\infty \frac{1}{n!} \sum_{\substack{X_1,\ldots,X_n\subset X\\ \mathrm{disjoint} }} \sum_{\substack{T_1^a\in \mathsf{T}^{X_1},\ldots,T_n^a\in \mathsf{T}^{X_n} \\ T^b\in \mathsf{T}^{\Coarse_X(X_1,\ldots,X_n)}}} \mc{J}_{C_0}(\cup_{l=1}^n T_l^a,T^b)\\ \leq C^k N \ve_0^{2\alpha(\beta)|V_X|-2 }\lambda^{(2-\beta)|V_X|}\delta_{\beta,\lambda}\\
		\times \sum_{k_0'=0}^{k_0} \binom{k_0}{k_0'}\sum_{n=0}^{k_0'} \frac{1}{n!}\sum_{m_1+\cdots+ m_n=k_0',m_i\geq 1}\frac{(k_0')!}{m_1!\ldots m_n!}m_1^{m_1} \ldots m_n^{m_n} (k_0-k_0'+n)^{ k_0-k_0'+n}.
	\end{multline}
	By Stirling's formula and since $m_1+\dots +m_n=k_0'\le k$, the factor
	$\frac{m_1^{m_1} \ldots m_n^{m_n} } {m_1!\ldots m_n!}$ can be absorbed into $e^{Ck}$.
	On the other hand, by the stars and bars theorem and \eqref{eq:binom}, we have 
	\begin{equation*}
		\sum_{m_1+\cdots +m_n=k_0',m_i\geq 1}1=\binom{k_0'-1}{n-1}\le e^{Ck_0} \frac{(k_0')^n}{n^n} .
	\end{equation*}
	Therefore,
	\begin{multline*}
		\sum_{k_0'=0}^{k_0}  \binom{k_0}{k_0'}\sum_{n=0}^{k_0'}\frac{1}{n!} \sum_{m_1+\cdots +m_n=k_0',m_i\geq 1}\frac{(k_0')!}{m_1!\ldots m_n!}m_1^{m_1}\ldots m_n^{m_n}(k_0-k_0'+n)^{ k_0-k_0'+n}\\
		\le e^{Ck}  \sum_{k_0'=0}^{k_0}  \binom{k_0}{k_0'}\sum_{n=0}^{k_0'} \frac{1}{n!}(k_0')! (k_0-k_0'+n)^{ k_0-k_0'+n} \frac{(k_0')^n}{n^n}.
	\end{multline*}
	By Stirling's formula and since $k_0'\le k_0$ and $k_0-k_0'+n\le k_0$, 
	\begin{equation*}
		\frac{1}{n!}(k_0')! (k_0-k_0'+n)^{ k_0-k_0'+n}\frac{(k_0')^n}{n^n}\leq C^{k_0}\frac{1}{n^{2n}}k_0^{k_0+2n}.
	\end{equation*}
	It follows that 
	\begin{equation*}
		\sum_{n=0}^{k_0'}\frac{1}{n!}(k_0')! (k_0-k_0'+n)^{ k_0-k_0'+n}
		\frac{(k_0')^n}{n^n}
		\leq C^{k_0}k_0^{k_0}.
	\end{equation*}
	
	Thus
	\begin{multline}\label{eq:final sum}
		\sum_{k_0'=0}^{k_0}  \binom{k_0}{k_0'}\sum_{n=0}^{k_0'}\frac{1}{n!} \sum_{m_1+\cdots +m_n=k_0',m_i\geq 1}\frac{(k_0')!}{m_1!\ldots m_n!}m_1^{m_1}\ldots m_n^{m_n}(k_0-k_0'+n)^{ k_0-k_0'+n}\\ \leq  C^{k_0} k_0^{k_0}\sum_{k_0'=0}^{k_0}  \binom{k_0}{k_0'}=(2C)^{k_0} k_0^{k_0}.
	\end{multline}
	Inserting this into \eqref{eq:i0''}, and recalling that $k_0=|X|$, yields
	\begin{equation*}
		\sum_{n=0}^\infty \frac{1}{n!} \sum_{\substack{X_1,\ldots,X_n\subset X\\ \mathrm{disjoint} }} \sum_{\substack{T_1^a\in \mathsf{T}^{X_1},\ldots,T_n^a\in \mathsf{T}^{X_n} \\ T^b\in \mathsf{T}^{\Coarse_X(X_1,\ldots,X_n)}}} \mc{J}_{C_0}(\cup_{l=1}^n T_l^a,T^b) \leq C^k N \ve_0^{2\alpha(\beta)|V_X|-2 }\lambda^{(2-\beta)|V_X|}\delta_{\beta,\lambda}|X|^{|X|}.
	\end{equation*}
	Hence, by \eqref{eq:all tog 2}, if $k>q_0$, then there exists $C>0$ depending on $\beta$, $M$, and $p(\beta)$ such that
	\begin{equation}\label{eq:KwL b}
		|\Ksf_{\ve_0}^-(X)|\leq e^{Ck}\frac{N}{N^k\mathsf{M}^0_{\ve_0}(X) }\ve_0^{2\alpha(\beta) k -2} \delta_{\beta,\lambda}|X|^{|X|}.
	\end{equation}
	On the other hand, by the estimate \eqref{eq:MX -low} of Proposition \ref{prop:bounded lower}, there exists $C>0$ depending on $\beta$, $M$ and $p(\beta)$ such that for every $S\in X$, 
	\begin{equation*}
		\mathsf{M}_{\ve_0}^{0}(S)\geq \frac{1}{CN^{|S|-1}}\lambda^{2(|S|-1)}.
	\end{equation*}
	Therefore, there exists $C>0$ depending on $\beta$, $M$, and $p(\beta)$ such that
	\begin{equation*}
		\mathsf{M}_{\ve_0}^{0}(X)=\prod_{S\in X}\mathsf{M}_{\ve_0}^{0}(S)\geq \frac{1}{C^{|X|} N^{k-|X|}}\prod_{i=1}^p \lambda^{2(i-1)n_i'}.
	\end{equation*}
	Inserting this into \eqref{eq:KwL b} and using $k_0=|X|$, there exists $C>0$ depending only on $\beta$, $p(\beta)$ and $M$ such that
	\begin{equation}\label{eq:KF large}
		|\Ksf_{\ve_0}^-(X)|\leq e^{Ck}\frac{N}{N^{k_0} }\ve_0^{2\alpha(\beta) k-2} \delta_{\beta,\lambda} k_0^{k_0} \prod_{i=1}^p \lambda^{-2(i-1)n_i'}.
	\end{equation}

	Next, let us treat the case of small clusters $k\in \{p+1,\ldots,q_0\}$. Then, by Proposition \ref{prop:bounded lower}, there exists $C>0$ depending on $\beta,p(\beta),M$ and $\ve_0$ such that 
	\begin{equation*}
		|\Ksf_{\ve_0}^-(X)|\leq e^{Ck}\frac{N}{N^k\mathsf{M}_{\ve_0}^{0}(X) }
		\delta_{\beta,\lambda}.
	\end{equation*}	Therefore, there exists $C>0$ depending on $\beta$, $p(\beta)$, $M$ and $\ve_0$ such that
	\begin{equation}\label{eq:KF small}
		|\Ksf_{\ve_0}^-(X)|\leq e^{Ck}\frac{N}{N^{k_0} }\delta_{\beta,\lambda} k_0^{k_0} \prod_{i=1}^p \lambda^{-2(i-1)n_i'}.
	\end{equation}

	\paragraph{\bf{Step 3: proof of \eqref{eq:bb12}.}}
	Inserting \eqref{eq:KF large} into \eqref{eq:sumoverX},  and using $n_i'= k_0 \gamma_i'$, we obtain that if $k>q_0$, then there exists $C>0$ depending on $\beta, M$ and $p(\beta)$ such that
	\begin{equation*}
		\sum_{X\in \mc{P}(\pi):\forall i, \#_i X=n_i'}|\Ksf_{\ve_0}^-(X)|20^{|V_X|}\leq e^{Ck} \ve_0^{2\alpha(\beta) k-2} N\delta_{\beta,\lambda}
		\prod_{i=1}^p \Bigr(\frac{\gamma_i}{\gamma_i'}\lambda^{-2(i-1)}\Bigr)^{\gamma_i'k_0}.
	\end{equation*}
	By assumption \eqref{eq:assnk}, for every $i=2,\ldots,p$, we have $\gamma_i\leq \ve_0^{-\alpha(\beta)} \lambda^{2(i-1)}$. Therefore,
	\begin{equation*}
		\ve_0^{2\alpha(\beta)k} \prod_{i=1}^p \Bigr(\frac{\gamma_i}{\gamma_i'}\lambda^{-2(i-1)}\Bigr)^{\gamma_i'k_0}\leq \ve_0^{\alpha(\beta)k}\prod_{i=1}^p (\gamma_i')^{-\gamma_i'k_0}.
	\end{equation*}
	Moreover, by the convexity of $x\mapsto x\log x$ and $\sum_i\gamma_i'=1$, we check that 
	$\prod_i(\gamma_i')^{-\gamma_i'k_0}\le C^{k_0}$.

	Thus, there exists $C>0$ depending on $\beta$, $p(\beta)$ and $M$ such that for every $k>q_0$,
	\begin{equation}\label{eq:series lemma}
		\sum_{X\in \mc{P}(\pi):|V_X|=k}|\Ksf_{\ve_0}^-(X)|20^{|V_X|}\leq e^{Ck} \ve_0^{\alpha(\beta) k-2}N
		\delta_{\beta,\lambda}.
	\end{equation}
	Similarly, there exists $C>0$ depending on $\beta$, $p(\beta)$, $M$ and $\ve_0$ such that for every $k\in \{p+1,\ldots,q_0\}$,
	\begin{equation}\label{eq:smalli}
		\sum_{X\in \mc{P}(\pi):|V_X|=k}|\Ksf_{\ve_0}^-(X)|20^{|V_X|}\leq e^{Ck} N
		\delta_{\beta,\lambda}.
	\end{equation}
	
	Choosing $\ve_0\in (0,1)$ small enough, we see that the series \eqref{eq:series lemma} in $k$ is convergent. Using this along with \eqref{eq:smalli} concludes the proof of \eqref{eq:bb12}.\\

	\paragraph{\bf{Step 4: proof of the anchored estimates.}}
	Fix $S_0\in \mc{P}(\pi)$ and let $r\coloneqq |S_0|$.

	Let $n_1',\ldots,n_{p}'\geq 0$ with $n_r'\geq 1$ and set $k_0\coloneqq n_1'+\cdots+n_p'$. Suppose that $k\coloneqq n_1'+2n_2'+\cdots+pn_{p}'>q_0$. 
	The number of ways to choose $X$ with $n_i'$ multipoles of cardinality $i$ for every $i\in [p]$, and such that $S_0\in X$, is given by
	\begin{equation*}
		\binom{n_r-1}{n_r'-1}\prod_{i\in [p]:i\neq r }\binom{n_i}{n_i'}=\frac{n_r'}{n_r}\prod_{i=1}^p \binom{n_i}{n_i'}
	\end{equation*}
	By Stirling's formula, there exists $C>0$ such that 
	\begin{equation*}
		\frac{n_r'}{n_r}\prod_{i=1}^p \binom{n_i}{n_i'}\leq C^{k_0}\Bigr(\frac{n_r}{n_r'}\Bigr)^{n_r'-1}\prod_{i\in [p]:i\neq r}\Bigr(\frac{n_i}{n_i'}\Bigr)^{n_i'}.
	\end{equation*}
	Combining \eqref{eq:KF large} and the above, there exists $C>0$ depending on $\beta,p$ and $M$ such that 
	\begin{multline}\label{eq:ldis}
		\sum_{X\in \mc{P}(\pi):\forall i, \#_iX=n_i',S_0\in X}|\Ksf_{\ve_0}^-(X)|20^{|V_X|}\\\leq C^k\ve_0^{2\alpha(\beta)k-2} N^{1-k_0} k_0^{k_0}\Bigr(\frac{n_r}{n_r'\lambda^{2(r-1)} }\Bigr)^{n_r'-1}\frac{1}{\lambda^{2(r-1)}}\prod_{i\in [p]:i\neq r}\Bigr(\frac{n_i}{n_i' \lambda^{2(i-1)}}\Bigr)^{n_i'}
		\delta_{\beta,\lambda}.
	\end{multline}
	By \eqref{eq:assnk}, for every $i\in \{2,\ldots,p\}$,
	$n_i\leq N\ve_0^{-\alpha(\beta)}\lambda^{2(i-1)}.$
	Inserting this into \eqref{eq:ldis} and using that $n_1'+\cdots+n_p'=k_0$, we deduce that there exists $C>0$ depending on $\beta,p$ and $M$ such that 
	\begin{equation*}
		\sum_{X\in \mc{P}(\pi):\forall i, \#_iX=n_i',S_0\in X}|\Ksf_{\ve_0}^-(X)|20^{|V_X|}  \leq C^k \ve_0^{\alpha(\beta)k-2}N^{1-k_0} N^{k_0-1} \frac{k_0^{k_0}}{\prod_{i\in [p]}(n_i')^{n_i'}} \frac{1}{\lambda^{2(r-1)}} 
		\delta_{\beta,\lambda}.
	\end{equation*}
	Summing over the $n_i'$, we get 
	\begin{equation*}
		\sum_{X\in \mc{P}(\pi):|V_{X}|=k,S_0\in X}|\Ksf_{\ve_0}^-(X)|20^{|V_X|} \leq C^k \ve_0^{\alpha(\beta)k-2}\frac{1}{\lambda^{2(r-1)}} \delta_{\beta,\lambda}.
	\end{equation*}
	For $\ve_0$ small enough, the above series in $k$ is convergent. Therefore, there exists $C>0$ depending on $\beta,p$, $M$ and $\ve_0$ such that 
	\begin{equation}\label{eq:Large}
		\sum_{X\in \mc{P}(\pi):|V_{X}|>q_0,S_0\in X}|\Ksf_{\ve_0}^-(X)|20^{|V_X|} \leq C \frac{1}{\lambda^{2(r-1)}} \delta_{\beta,\lambda}.
	\end{equation}
	Using the estimate \eqref{eq:KF small},  proceeding similarly we obtain that there exists $C>0$ depending on $\beta,p$, $M$ and $\ve_0$ such that 
	\begin{equation}\label{eq:Med}
		\sum_{X\in \mc{P}(\pi):|V_{X}|>p,S_0\in X}|\Ksf_{\ve_0}^-(X)|20^{|V_X|}\leq C \frac{1}{\lambda^{2(r-1)}} \delta_{\beta,\lambda}.
	\end{equation}
	Similarly, using the bounded cluster estimate of Proposition \ref{prop:bounded lower} for clusters smaller than $p^*(\beta)$, we get that there exists $C>0$ depending on $\beta,p$, $M$ and $\ve_0$ such that 
	\begin{equation}\label{eq:Small}
		\sum_{X\in \mc{P}(\pi):|V_{X}|\leq p,S_0\in X,|X|>1}|\Ksf_{\ve_0}^-(X)|20^{|V_X|}\leq C \lambda^2.
	\end{equation}
	Combining \eqref{eq:Small}, \eqref{eq:Med} and \eqref{eq:Large} yields 
	\begin{equation*}
		\max_{S_0\in \pi}\sum_{X\in \mc{P}(\pi):S_0\in X,|X|>1 }|\Ksf_{\ve_0}^-(X)| 20^{|V_X|}\leq C\Bigr(\lambda^2+\frac{1}{\lambda^{2(r-1)}} \delta_{\beta,\lambda}\Bigr).
	\end{equation*}
	For $\beta\in (2,4)$, we have 
	\begin{equation*}
		\frac{1}{\lambda^{2(r-1)}} R_{\beta,\lambda}^{-2}\leq \frac{1}{\lambda^{2(p^*(\beta)-1)}} R_{\beta,\lambda}^{-2}=\lambda^{2\frac{\beta-2}{4-\beta}-2(p^*(\beta)-1)}.
	\end{equation*}
	For $\beta\geq 4$, we have
	\begin{equation*}
		\frac{1}{\lambda^{2(r-1)}} \lambda^{2p_0}\leq \lambda^2.
	\end{equation*}
	This concludes the proof of \eqref{eq:anch lem all}.
\end{proof}

We next  prove the absolute convergence of the series \eqref{eq:bb2} using the Kotecký-Preiss \cite{KoteckyPreiss1986} anchored-norm criterion, following roughly the presentation of the proof of Theorem~4.9 in \cite{Bauerschmidt2016FerromagneticSS}.

\begin{proof}[Proof of Proposition \ref{prop:absolute lower}]
	Denote $p\coloneqq p(\beta)$.
	In view of the previous lemma, it remains to prove \eqref{eq:bb2}. 
	
	Since $\Ksf_{\ve_0}^-(X)=0$ if $|X|\in \{0,1\}$, we have
	\begin{multline}\label{eq:rhs2}
		\sum_{\substack{X_1,\ldots,X_n\subset \pi \\\mathrm{connected}\\ \exists i, |V_{X_i}|>p }}|\Ksf_{\ve_0}^-(X_1)\cdots \Ksf_{\ve_0}^-(X_n) \mathrm{I}(G(X_1,\ldots,X_n))|\\ \leq n\sum_{\substack{X_1,\ldots,X_n\subset \pi \\ \forall i, |X_i|>1\\\mathrm{connected},  |V_{X_1}|>p }}|\Ksf_{\ve_0}^-(X_1)\cdots \Ksf_{\ve_0}^-(X_n) \mathrm{I}(G(X_1,\ldots,X_n))|.
	\end{multline}
	Then, by Rota's theorem,  \eqref{eq:rota}, for every connected graph $G$ on $[n]$, we have
	\begin{equation*}
		|\mathrm{I}(G)|\leq \sum_{T\subset G}1,
	\end{equation*}
	where the sum runs over every spanning tree of $G$. Fix a spanning tree $T$ of $G$. We have
	\begin{align*}
		&\sum_{\substack{X_1,\ldots,X_n\subset \pi \\\mathrm{connected}, |V_{X_1}|>p }}|\Ksf_{\ve_0}^-(X_1)\cdots \Ksf_{\ve_0}^-(X_n) \mathrm{I}(G(X_1,\ldots,X_n))|\\ \notag &\hspace{2cm} \leq \sum_{\substack{T \, \mathrm{ tree}\\ \mathrm{on} \, [n]}}\sum_{\substack{X_1,\ldots,X_n\subset \pi \\\mathrm{connected}\\ \forall i, |X_i|>1, |V_{X_1}|>p }} \indic_{T\subset G(X_1,\ldots,X_n)}|\Ksf_{\ve_0}^-(X_1)\cdots \Ksf_{\ve_0}^-(X_n)|\\ \notag
		&\hspace{2cm} =\sum_{\substack{T \, \mathrm{ tree}\\ \mathrm{on} \,  [n]}}\sum_{\substack{X_1,\ldots,X_n\subset \pi \\\mathrm{connected}\\ \forall i, |X_i|>1, |V_{X_1}|>p }}\prod_{ij\in T}\indic_{X_i\cap X_j\neq \emptyset } |\Ksf_{\ve_0}^-(X_1)\cdots \Ksf_{\ve_0}^-(X_n)|.
	\end{align*}
	We then sum according to the cardinality of the $X_i$'s:
	\begin{multline}\label{eq:rhs}
		\sum_{\substack{X_1,\ldots,X_n \\\mathrm{connected}\\|V_{X_1}|>p}}|\Ksf_{\ve_0}^-(X_1)\cdots \Ksf_{\ve_0}^-(X_n) \mathrm{I}(G(X_1,\ldots,X_n))|\\\leq  \sum_{\substack{T \ \mathrm{ tree}\\ \mathrm{on}\, [n] }}\sum_{k_1,\ldots,k_n\geq 2} \sum_{\substack{X_1,\ldots,X_n:\\ \forall i, |X_i|=k_i\\|V_{X_1}|>p }}\prod_{ij\in T}\indic_{X_i\cap X_j\neq \emptyset }|\Ksf_{\ve_0}^-(X_1)\cdots \Ksf_{\ve_0}^-(X_n)|.
	\end{multline}
	
	Let $T$ be a rooted tree on $[n]$ with root equal to $1$. Let $k_1,\ldots,k_n\geq 2$. Let $i_1\in [n]$ be a leaf of $T$ and let $p(i_1)$ be the unique neighbor of $i_1$ in $T$. Since $i_1p(i_1)\in T$, we have $X_{i_1}\cap X_{p(i_1)}\neq \emptyset$. The only term in the right-hand side of \eqref{eq:rhs} that depends on $X_{i_1}$ is $\indic_{X_{i_1}\cap X_{p(i_1)}\neq \emptyset} |\Ksf_{\ve_0}^-(X_{i_1})|$. Summing it, we get 
	\begin{equation*}
		\sum_{X_{i_1}\in \mc{P}(\pi): |X_{i_1}|=k_{i_1}}\indic_{X_{i_1}\cap X_{p(i_1)}\neq \emptyset} |\Ksf_{\ve_0}^-(X_{i_1})|\leq \left(\sum_{S_{i_1}\in X_{p(i_1)}}1\right) \sup_{S_0\in \pi}\sum_{X\in \mc{P}(\pi):S_0\in X,|X|=k_{i_1}} |\Ksf_{\ve_0}^-(X)|.
	\end{equation*}
	Let us denote by $\Anch$ the anchored quantity introduced in Lemma \ref{lem:cvgentseries}:
	\begin{align*}
		\Anch\coloneqq \sup_{S_0\in \pi}\sum_{X\in\mc{P}(\pi):S_0\in X,|X|>1}|\Ksf_{\ve_0}^-(X)|20^{|V_X|}.
	\end{align*}
	By the definition of $\Anch$, we have 
	\begin{equation*}
		\sup_{S_0\in \pi}\sum_{X:S_0\in X,|X|=k_{i_1}} |\Ksf_{\ve_0}^-(X)|\leq \Anch \times 20^{-k_{i_1}}.
	\end{equation*}
	Since there are at most $k_{p(i_1)}$ choices for an element in $X_{i_1}\cap X_{p(i_1)}$ (called the anchor at the vertex $i_1$), we get by combining the above displays that
	\begin{equation*}
		\sum_{X_{i_1}: |X_{i_1}|=k_{i_1}}\indic_{X_{i_1}\cap X_{p(i_1)}\neq \emptyset} |\Ksf_{\ve_0}^-(X_{i_1})|\leq \Anch \times k_{p(i_1)} 20^{-k_{i_1}}.  
	\end{equation*}
	
	We deduce that 
	\begin{multline*}
		\sum_{\substack{X_1,\ldots,X_n:\\ \forall i, |X_i|=k_i\\ |V_{X_1}|>p }}\prod_{ij\in T}\indic_{X_i\cap X_j\neq \emptyset }|\Ksf_{\ve_0}^-(X_1)\cdots \Ksf_{\ve_0}^-(X_n)|\\ \leq \Anch \times k_{p(i_1)} 20^{-k_{i_1}} \sum_{\substack{(X_i)_{i\in [n]\setminus \{i_1\}}:\\  \forall i, |X_i|=k_i }}\prod_{ij\in T\setminus\{i_1p(i_1)\} }\indic_{X_i\cap X_j\neq \emptyset }\prod_{i\in [n]\setminus\{i_1\}} |\Ksf_{\ve_0}^-(X_i)|.
	\end{multline*}
	
	Choose a leaf $i_2$ of the residual tree $T\setminus \{i_1p(i_1)\}$ and repeat the previous step. Iterating leaf removal, the tree eventually reduces to its root (the vertex $1$). We obtain 
	\begin{multline*}
		\sum_{\substack{X_1,\ldots,X_n:\\  \forall i, |X_i|=k_i\\ |V_{X_1}|>p }}\prod_{ij\in T}\indic_{X_i\cap X_j\neq \emptyset }|\Ksf_{\ve_0}^-(X_1)\cdots \Ksf_{\ve_0}^-(X_n)|\\ \leq \Anch^{n-1} \prod_{i\in [n]} k_i^{\deg_T(i)}\prod_{i\in [n]}20^{-k_i} \sum_{X\in \mc{P}(\pi):|V_X|>p }|\Ksf_{\ve_0}^-(X)|20^{|V_X|}.
	\end{multline*}
	Inserting this into \eqref{eq:rhs2} gives 
	\begin{multline*}
		\sum_{\substack{X_1,\ldots,X_n \\\mathrm{connected}\\ \exists i, |V_{X_i}|>p } }|\Ksf_{\ve_0}^-(X_1)\cdots \Ksf_{\ve_0}^-(X_n) \mathrm{I}(G(X_1,\ldots,X_n))|\\ \leq   Nn  \Anch^{n-1} \sum_{k_1,\ldots,k_n\geq 2}\prod_{i=1}^n 20^{-k_i} \sum_{\substack{T\ \mathrm{ tree}\\ \mathrm{on}\, [n] }} \prod_{i=1}^n k_i^{\deg_T(i)}\sum_{X\in \mc{P}(\pi):|V_X|>p }|\Ksf_{\ve_0}^-(X)|20^{|V_X|}.
	\end{multline*}
	By Cayley's formula, the number of trees with degree sequence $(d_1,\ldots,d_n)$ on a set of $n$ points is given by $\frac{(n-2)!}{\prod_i (d_i-1)!}$. Therefore, 
	\begin{equation*}
		\sum_{\substack{T\ \mathrm{ tree}\\ \mathrm{on}\,  [n] }} \prod_{i=1}^n k_i^{\deg_T(i)}\leq \sum_{d_1,\ldots,d_n}\frac{(n-2)!}{\prod_{i=1}^n(d_i-1)!}  \prod_{i=1}^n k_i^{d_i}\leq (n-2)! \prod_{i=1}^n k_i e^{k_i}.
	\end{equation*}
	Combining the last two displays yields
	\begin{multline*}
		\sum_{\substack{X_1,\ldots,X_n \\\mathrm{connected}\\ \exists i, |V_{X_i}|>p } }|\Ksf_{\ve_0}^-(X_1)\cdots \Ksf_{\ve_0}^-(X_n) \mathrm{I}(G(X_1,\ldots,X_n))|\\  \leq    nn!\Anch^{n-1} \sum_{X\in \mc{P}(\pi):|V_X|>p }|\Ksf_{\ve_0}^-(X)|20^{|V_X|} \left(\sum_{k=2}^\infty  e^{2k}20^{-k}\right)^{n}. 
	\end{multline*}
	Hence,
	\begin{multline}\label{eq:concXXn}
		\sum_{\substack{X_1,\ldots,X_n \\\mathrm{connected}\\ \exists i, |V_{X_i}|>p } }|\Ksf_{\ve_0}^-(X_1)\cdots \Ksf_{\ve_0}^-(X_n) \mathrm{I}(G(X_1,\ldots,X_n))|\\ \leq    n n!\Bigr(\frac{1}{20e^{-2}-1}\Bigr)^n\Anch^{n-1}\sum_{X\in \mc{P}(\pi):|V_X|>p }|\Ksf_{\ve_0}^-(X)|20^{|V_X|}.
	\end{multline}
	By \eqref{eq:anch lem all},
	\begin{equation}\label{eq:normK}
		\Anch=O_{\beta,M,p,\ve_0}\Bigr(\lambda^2+\lambda^{2\frac{\beta-2}{4-\beta}-2(p^*(\beta)-1)} \indic_{\beta\in (2,4)}\Bigr). 
	\end{equation}
	For $\beta\in (2,4)$, by definition of $p^*(\beta)$ \eqref{defpstar} ,we have $2\frac{\beta-2}{4-\beta}-2(p^*(\beta)-1)>0$. Therefore, combining \eqref{eq:concXXn} and \eqref{eq:normK}, we conclude that for $\lambda$ small enough, there exists $C>0$ depending on $\beta, M,p$ and $\ve_0$ such that
	\begin{equation*}
		\sum_{n=1}^\infty \frac{1}{n!}	\sum_{\substack{X_1,\ldots,X_n \\\mathrm{connected}\\ \exists i, |V_{X_i}|>p } }|\Ksf_{\ve_0}^-(X_1)\cdots \Ksf_{\ve_0}^-(X_n) \mathrm{I}(G(X_1,\ldots,X_n))|\leq C\sum_{X\in \mc{P}(\pi):|V_X|>p }|\Ksf_{\ve_0}^-(X)|20^{|V_X|}.
	\end{equation*}
	Inserting \eqref{eq:bb12} concludes the proof of \eqref{eq:bb2}.
\end{proof}

\subsection{Simplifying the multipole activity via M\"obius inversion}\label{sub:simp mult}

Let us recall the standard M\"obius inversion on the partition lattice; see \cite[Prop.~3.7.1; Ex.~3.9.2, (3.37)]{StanleyEC1} for a modern exposition and \cite{Rota1964} for the original source.

\begin{lemma}[M\"obius inversion on the partition lattice]\label{lemma:mobius}
	Let $E$ be a finite set. Let $g:\mathbf{\Pi}(E) \to \dR$, where we recall that $\mathbf{\Pi}(E)$ stands for the set of partitions of $E$.
	\begin{enumerate}
		\item Define
		\begin{equation*}
			f:\pi \in \mathbf{\Pi}(E)\mapsto \sum_{\sigma\leq \pi}g(\sigma),
		\end{equation*}
		where the sum is over refinements $\sigma$ of $\pi$.
		Then, for every $\sigma\in \mathbf{\Pi}(E)$, we have
		\begin{equation*}
			g(\sigma)=\sum_{\pi\leq \sigma}\mu(\pi,\sigma)f(\pi),
		\end{equation*}   
		where $\mu$ is the M\"obius function of the partition lattice defined as follows: for each block $B\in \sigma$, let $k_B$ be the number of blocks of $\pi$ contained in $B$. If $\pi\leq \sigma$, then
		\begin{equation}\label{eq:Mobiuslattice}
			\mu(\pi,\sigma)=\prod_{B\in \sigma }(-1)^{k_B-1}(k_B-1)!.
		\end{equation}
		If $\pi \nleq \sigma$, then $\mu(\pi,\sigma)=0$.
		\item Define
		\begin{equation*}
			f:\pi \in \mathbf{\Pi}(E)\mapsto \sum_{\sigma\geq  \pi}g(\sigma),
		\end{equation*}
		where the sum is over coarsenings $\sigma$ of $\pi$. Then, for every $\sigma\in \mathbf{\Pi}(E)$, we have
		\begin{equation}\label{eq:otherdirection}
			g(\sigma)=\sum_{\pi\geq \sigma}\mu(\sigma,\pi)f(\pi),
		\end{equation}
		where $\mu$ is as in \eqref{eq:Mobiuslattice}. 
	\end{enumerate}
\end{lemma}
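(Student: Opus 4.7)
The plan is to recognize this as the standard M\"obius inversion principle on a finite poset, applied to the partition lattice $\mathbf{\Pi}(E)$ ordered by refinement, and then to identify the M\"obius function explicitly via the factorization of intervals.

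First I would recall the general principle: for any finite poset $(P,\le)$ there is a unique function $\mu:P\times P\to \mathbb{Z}$, supported on $\{(x,y):x\le y\}$, satisfying $\sum_{x\le z\le y}\mu(x,z)=\delta_{x,y}$, and the equivalence
\[
f(y)=\sum_{x\le y}g(x)\ \ \forall y\ \ \Longleftrightarrow\ \ g(y)=\sum_{x\le y}\mu(x,y)f(x)\ \ \forall y
\]
is a purely formal consequence of this defining relation. Thus both (1) and (2) reduce to computing $\mu(\pi,\sigma)$ on the partition lattice (for (2) we use the fact that the reversed poset has M\"obius function $(x,y)\mapsto \mu(y,x)$, so that (2) follows from the analogue of (1) on $\mathbf{\Pi}(E)^{\mathrm{op}}$).

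The key step is the explicit identification of $\mu(\pi,\sigma)$ when $\pi\le\sigma$. I would exploit the factorization of intervals: the interval $[\pi,\sigma]$ in $\mathbf{\Pi}(E)$ is order-isomorphic to the product $\prod_{B\in\sigma}\mathbf{\Pi}(\pi|_B)$, where $\pi|_B$ is the partition of $B$ induced by $\pi$ and has exactly $k_B$ blocks. Under this isomorphism, $\pi$ corresponds to the bottom element and $\sigma$ to the top element. Since the M\"obius function of a product poset factors as the product of M\"obius functions, and since $[\pi|_B,\{B\}]$ is isomorphic to the full partition lattice $\mathbf{\Pi}_{k_B}$ of a $k_B$-element set (from its minimum to its maximum), we obtain
\[
\mu(\pi,\sigma)=\prod_{B\in\sigma}\mu_{k_B}(\hat 0,\hat 1),
\]
where $\mu_n$ denotes the M\"obius function of $\mathbf{\Pi}_n$.

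It then remains to show $\mu_n(\hat 0,\hat 1)=(-1)^{n-1}(n-1)!$. The slickest route is to integrate the defining recursion against a clever test function; concretely, I would evaluate
\[
\sum_{\pi\in\mathbf{\Pi}_n}\mu_n(\hat 0,\pi)\prod_{B\in\pi}(|B|-1)!
\]
in two different ways, or—more elementarily—argue by induction on $n$ using the recursion
\[
\mu_n(\hat 0,\hat 1)=-\sum_{\pi<\hat 1}\mu_n(\hat 0,\pi)
\]
together with the factorization already established to rewrite $\mu_n(\hat 0,\pi)=\prod_{B\in\pi}(-1)^{|B|-1}(|B|-1)!$ for $\pi<\hat 1$. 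Summing over partitions by their block-size profile reduces the identity to an exponential generating function computation: if $\phi(x)=\sum_{n\ge 1}(-1)^{n-1}(n-1)!\,x^n/n!$ then the recursion becomes $\phi(x)+\exp(\phi(x))-1=\hbox{(something identifiable)}$, and one checks $\phi(x)=\log(1+x)$, giving $\mu_n(\hat 0,\hat 1)=(-1)^{n-1}(n-1)!$.

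The only real obstacle is the bookkeeping in this last step: verifying that the block-size sum telescopes to the desired generating-function identity. Once the M\"obius function is in hand, (1) follows by direct application of the inversion principle, and (2) follows either by the dual inversion formula on $\mathbf{\Pi}(E)^{\mathrm{op}}$ (noting that intervals of the reversed partition lattice have the same factorization structure, so the same M\"obius function appears with arguments reversed) or by repeating the same interval-factorization argument verbatim in the order-reversed poset.
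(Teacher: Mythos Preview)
The paper does not prove this lemma; it merely states it and cites Stanley's \emph{Enumerative Combinatorics~I} (Prop.~3.7.1 and Ex.~3.9.2) and Rota's original 1964 paper. Your proposal is a correct sketch of the standard proof one finds in those references: general M\"obius inversion on a finite poset, interval factorization $[\pi,\sigma]\cong\prod_{B\in\sigma}\mathbf{\Pi}_{k_B}$, and the identification $\mu_n(\hat 0,\hat 1)=(-1)^{n-1}(n-1)!$ via the exponential formula.
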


As a corollary of Lemma \ref{lemma:mobius}, we equate the activity $\Ksf_{\beta,\lambda}^{\mult}$ from Definition \ref{def:inter activity} with the activity $\Ksf^{0}_\infty(X)$ of Definition \ref{def:activity lower}.

\begin{coro}\label{coro:equality mult}
	For every subpartition $X$ of $[N]$, we have  
	\begin{equation*}
		\Ksf_{\beta,\lambda}^{\mult}(X)=\Ksf_{\infty}^0(X).
	\end{equation*}
\end{coro}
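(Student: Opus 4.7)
The plan is to prove the identity by a direct algebraic expansion of the Mayer product defining $\Ksf^\mult_{\beta,\lambda}(X)$, followed by a reorganization of the edge sum according to the connected components relative to $X$. The partition-lattice M\"obius inversion of Lemma \ref{lemma:mobius} is the conceptual organizing principle, but the proof reduces to combinatorial bookkeeping. No analytic estimate is needed; the measures agree by construction since $\Psf_X^{0,\infty} = \Psf^X_{\beta,\lambda}$ (indeed $\mu_{\beta,\lambda,\infty}=\mu_{\beta,\lambda}$).

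First I would record the pointwise identity
\[
e^{-\beta v_{ij}}\indic_{\mc{B}_{ij}^c}-1 = f_{ij}^v\,\indic_{\mc{B}_{ij}^c} - \indic_{\mc{B}_{ij}},
\]
which rests on the inclusion $\mc{B}_{ij}^c\subset\mc{A}_{ij}$. This inclusion follows from Lemma \ref{lemma:Aij} together with $M>10$: assuming $r_i\le r_j$, $d_{ij}\leq \min(|y_j-x_i|,|y_i-x_j|)$, hence on $\mc{B}_{ij}^c$ both these distances exceed $Mr_i>r_i$. Substituting this identity into \eqref{defK} and expanding the product over $ij\in E$, one gets
\[
\Ksf^\mult(X) = \sum_{E\in\mathsf{E}^X}\sum_{A\subset E}\dE_{\Psf^X_{\beta,\lambda}}\Bigl[\prod_{ij\in A}f_{ij}^v\indic_{\mc{B}_{ij}^c}\prod_{ij\in E\setminus A}(-\indic_{\mc{B}_{ij}})\Bigr].
\]

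Next I would rewrite this as a double sum over disjoint pairs $(A,B)$ of subsets of $\mc{E}^\inter(X)$ with $A\sqcup B\in\mathsf{E}^X$, where $B\coloneqq E\setminus A$. For fixed $A$, I identify the non-trivial connected components relative to $X$ of $(V_X,A\cup\mc{E}^\intra(X))$ as a subpartition $\{X_1,\ldots,X_n\}$ of $X$, and set $E_l\coloneqq A\cap\mc{E}^\inter(X_l)\in\mathsf{E}^{X_l}$. Then $B$ splits canonically as $B=B_1\sqcup B_2$ with $B_1\subset\bigcup_l\mc{E}^\inter(X_l)\setminus A$ (within the super-multipoles $V_{X_l}$) and $B_2\subset \mc{E}^\inter(\Coarse_X(X_1,\ldots,X_n))$ (between different super-multipoles). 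The key combinatorial observation is that the constraint $A\sqcup B\in\mathsf{E}^X$ is equivalent to $B_2\in\mathsf{E}^{\Coarse_X(X_1,\ldots,X_n)}$, while $B_1$ is unconstrained.

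Finally I would sum out $B_1$:
\[
\sum_{B_1\subset\cup_l\mc{E}^\inter(X_l)\setminus A}\prod_{ij\in B_1}(-\indic_{\mc{B}_{ij}})=\prod_{ij\in\cup_l\mc{E}^\inter(X_l)\setminus A}\indic_{\mc{B}_{ij}^c},
\]
which combines with the factors $\indic_{\mc{B}_{ij}^c}$ already present for $ij\in A$ into the full product $\prod_{ij\in\cup_l\mc{E}^\inter(X_l)}\indic_{\mc{B}_{ij}^c}$. Renaming $B_2$ as $F$ and introducing the $1/n!$ factor that compensates for ordering the $X_l$, the resulting expression matches term-by-term the series definition of $\Ksf^0_\infty(X)$ in \eqref{eq:activity lower}, concluding the proof. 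The main (mild) obstacle is the constraint-reduction step in the third paragraph — verifying that once $A$ has made each $V_{X_l}$ connected relative to $X$, the connectivity requirement on $A\sqcup B$ depends only on $B_2$ and not on $B_1$ — but this is immediate from the definition of the coarsening.
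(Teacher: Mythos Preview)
Your proof is correct and takes a genuinely different, more elementary route than the paper's. The paper argues indirectly: it introduces an auxiliary function $f(P)$ (a normalized restricted partition function on the event $\{\Pi^\mult=\tilde{\pi}^P\}$) and shows that $f$ admits two cluster-type expansions, one in terms of products of $\tilde{\Ksf}_\infty^0$ and one in terms of products of $\tilde{\Ksf}_{\beta,\lambda}^\mult$; M\"obius inversion on the partition lattice (Lemma~\ref{lemma:mobius}) then forces the two activities to agree block by block. Your argument bypasses all of this by working directly on the Mayer product: the pointwise identity $e^{-\beta v_{ij}}\indic_{\mc{B}_{ij}^c}-1=f_{ij}^v\indic_{\mc{B}_{ij}^c}-\indic_{\mc{B}_{ij}}$ (resting on $\mc{B}_{ij}^c\subset\mc{A}_{ij}$), followed by the connected-component reorganization and the free summation over $B_1$, reproduces the definition of $\Ksf_\infty^0$ term by term. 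What your approach buys is brevity and transparency for this particular identity; what the paper's buys is a conceptual explanation---both activities are the ``connected part'' of the same object---which is the template they reuse for the analogous identity $\Ksf_{\beta,\infty,\ve_0}^\aux=\Ksf_{\beta,\lambda,\ve_0}^\dip$ in Corollary~\ref{coro:mobius}. One small caveat: your bijection between edge sets $A$ and tuples $(n,(X_l),(E_l))$ implicitly uses the convention that the $X_l$ in the definition of $\Ksf_\infty^0$ satisfy $|X_l|\geq 2$, which is indeed the paper's intent (see the remark following \eqref{eq:res2}).
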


\medskip

\begin{proof}
	Fix a partition $\pi$ of $[N]$. Recall that $\mathbf{\Pi}(\pi)$ stands for the set of partitions of $\pi$ (into metablocks).
	
	For every $P\in \mathbf{\Pi}(\pi)$ with blocks $B_1,\ldots,B_k$, we let $\tilde{\pi}^P$ be the coarsening of $\pi$ along $P$, i.e.~$\tilde{\pi}^P=\Coarse_\pi(B_1,\ldots,B_k)$. Note that $\tilde{\pi}^P$ is a partition of $[N]$ and that $\pi$ is a refinement of $\tilde{\pi}^P$.
	
	For every $P\in \mathbf{\Pi}(\pi)$, we set
	\begin{equation*}
		f(P)=\frac{1}{(C_{\beta,\lambda}N)^{N}\Msf_{\infty}^0(\tilde{\pi}^P)} \int_{\sigma_N=\Id,\Pi_\mult=\tilde{\pi}^P}e^{-\beta \F_\lambda(\vec{X}_N,\vec{Y}_N)}\dd \vec{X}_N\dd \vec{Y}_N.
	\end{equation*}
	Proceeding as in the proof of Lemma \ref{lemma:start low} shows that for every $P\in \mathbf{\Pi}(\pi)$,
	\begin{equation}\label{eq:f(P)}
		f(P)=\sum_{n=0}^\infty \frac{1}{n!}\sum_{\substack{X_1,\ldots,X_n\subset \tilde{\pi}^P \\ \mathrm{disjoint}}}\Ksf_{\infty}^0(X_1)\cdots \Ksf_{\infty}^0(X_n).
	\end{equation}
	Define 
	\begin{equation*}
		\tilde{\Ksf}_{\infty}^0(X)=\begin{cases}
			{\Ksf}_{\infty}^0(X)& \text{if $|X|\geq 2$}\\
			1 & \text{if $|X|\in \{0,1\}$}.
		\end{cases}
	\end{equation*}
	One can rewrite \eqref{eq:f(P)} as 
	\begin{equation}\label{eq:fpig}
		f(P)=\sum_{Q\geq P} g_1(Q),
	\end{equation}
	where for every $Q\in \mathbf{\Pi}(\pi)$,
	\begin{equation*}
		g_1(Q)\coloneqq \prod_{X\in Q}\tilde{\Ksf}_\infty^0(X).
	\end{equation*}

	Let us now prove a second expansion of the form \eqref{eq:fpig}. Fix $P\in \mathbf{\Pi}(\pi)$. Recall that 
	\begin{equation*}
		\{\sigma_N=\Id,\Pi_\mult=\tilde{\pi}^P\}=\bigcap_{i,j\in [N]:i\neq j}\mc{A}_{ij}\cap \bigcap_{S\in \tilde{\pi}^P}\mc{B}_S \bigcap_{ij\in \mc{E}^\inter(\tilde{\pi}^P)}\mc{B}_{ij}^c. 
	\end{equation*}
	Moreover, since by Remark \ref{remark:the inclusion}, $\mc{B}_{ij}^c\subset\mc{A}_{ij}$, 
	\begin{equation*}
		\{\sigma_N=\Id,\Pi_\mult=\tilde{\pi}^P\}= \bigcap_{S\in \tilde{\pi}^P}\left(\mc{B}_S\bigcap_{i,j\in S:i< j}\mc{A}_{ij}\right) \bigcap_{ij\in \mc{E}^\inter(\tilde{\pi}^P)}\mc{B}_{ij}^c. 
	\end{equation*}
	It follows that
	\begin{multline*}
		\int_{\sigma_N=\Id,\Pi_\mult=\tilde{\pi}^P}e^{-\beta \F_\lambda(\XN,\YN)}\dd \XN \dd \YN \\=\int \prod_{ij\in \mc{E}^\inter(\tilde{\pi}^P)}e^{-\beta v_{ij}}\indic_{\mc{B}_{ij}^c}\prod_{S\in \tilde{\pi}^P}\indic_{\mc{B}_S}\prod_{i,j\in S:i< j}\indic_{\mc{A}_{ij}}\prod_{i\in [N]}e^{\beta \g_\lambda(x_i-y_i)}\dd x_i \dd y_i.
	\end{multline*}
	For every $ij\in \mc{E}^\inter(\pi)$, write $e^{-\beta v_{ij}}\indic_{\mc{B}_{ij}^c}=1+(e^{-\beta v_{ij}}\indic_{\mc{B}_{ij}^c}-1)$. Then, normalizing by the hierarchical multipole partition function, we obtain 
	\begin{equation*}
		f(P)=\sum_{E\subset \mc{E}^\inter(\tilde{\pi}^P)}\dE_{\Psf_{\beta,\lambda}^{\tilde\pi^P}}\left[ \prod_{ij\in E}(e^{-\beta v_{ij}}\indic_{\mc{B}_{ij}^c}-1)\right].
	\end{equation*}
	Splitting according to the connected components of the augmented graph $([N],E\cup \mc{E}^\intra(\tilde{\pi}^P))$ yields 
	\begin{equation}\label{eq:fP22}
		f(P)=\sum_{n=0}^\infty \frac{1}{n!}\sum_{\substack{X_1,\ldots,X_n\subset \tilde{\pi}^P \\ \mathrm{disjoint}}}\Ksf_{\beta,\lambda}^{\mult}(X_1)\cdots \Ksf_{\beta,\lambda}^{\mult}(X_n).
	\end{equation}
	Define 
	\begin{equation*}
		\tilde{\Ksf}^\mult_{\beta,\lambda}(X)=\begin{cases}
			{\Ksf}_{\beta,\lambda}^\mult(X)& \text{if $|X|\geq 2$}\\
			1 & \text{if $|X|\in \{0,1\}$}.
		\end{cases}
	\end{equation*}
	Then, one can rewrite \eqref{eq:fP22} as 
	\begin{equation}\label{eq:fpig2}
		f(P)=\sum_{Q\geq P} g_2(Q),
	\end{equation}
	where for every $Q\in \mathbf{\Pi}(\pi)$,
	\begin{equation*}
		g_2(Q)\coloneqq \prod_{X\in Q}\tilde{\Ksf}_{\beta,\lambda}^{\mult}(X).
	\end{equation*}

	Since \eqref{eq:fpig} and \eqref{eq:fpig2} hold for every $P\in \mathbf{\Pi}(\pi)$, we get by applying M\"obius inversion \eqref{eq:otherdirection} to $E=\pi$ that 
	\begin{equation*}
		g_1\equiv g_2.
	\end{equation*}
	Let $X\subset \pi$. Taking $P$ to be the partition of $\pi$ with blocks $X$ and blocks $\{S\}$ for every $S\in \pi\setminus X$, we get
	\begin{equation*}
		g_1(P)=\tilde{\Ksf}_{\infty}^0(X)=\tilde{\Ksf}_{\beta,\lambda}^{\mult}(X)=g_2(P),
	\end{equation*} 
	hence for every $|X|\geq 2$, we have $\Ksf_{\infty}^0(X)=\Ksf_{\beta,\lambda}^{\mult}(X)$. Recalling that for $|X|\in \{0,1\}$, we have $\Ksf_{\infty}^0(X)=\Ksf_{\beta,\lambda}^{\mult}(X)=0$, this concludes the proof of the corollary. 
\end{proof}

\subsection{Proof of the LDP lower bound}\label{sub:proof low}
We conclude the proof of the lower bound.

\medskip

\begin{prop}[Large deviations lower bound]\label{prop:lower bound}
	Let $\beta\in (2,\infty)$ and let $p(\beta)$ be as in Definition \ref{def:pbeta}. Let $\mc{Z}_\beta$ be as in \eqref{def:Zbeta} and $\mc{I}_{\beta,p(\beta),\lambda}$ be as in \eqref{def:rate function}. Let $n_1,\ldots,n_{p(\beta)}\in \mathbb{N}$ be such that $$n_1+2n_2+\cdots+p(\beta) n_{p(\beta)}=N.$$ Assume that for every $k\in \{2,\ldots,p(\beta)\}$ the assumption \eqref{eq:assnk} holds.
	
	For every $k\in \{1,\ldots,p(\beta)\},$ let $\mc{N}_k$ be the number of $2k$-poles in $[N]$. The event $\mc{A}$ being as in \eqref{defeventA}, we have
	\begin{multline}\label{eq:lowerA}
		\log \int_{\mc{A}}e^{-\beta \F_\lambda}\dd \XN \dd \YN\geq N\log N+ \log (N!)+(2-\beta)N\log \lambda+N\log \mc{Z}_\beta-N \mc{I}_{\beta,p(\beta),\lambda}(\gamma_1,\ldots,\gamma_{p(\beta)})\\+O_{\beta,M,p(\beta),\ve_0}(N\delta_{\beta,\lambda}).
	\end{multline}
\end{prop}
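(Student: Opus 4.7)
The plan is to take the logarithm of the lower bound provided by Lemma~\ref{lemma:start low} and track, term by term, how each factor contributes to the right-hand side of \eqref{eq:lowerA}. Writing $\gamma_k \coloneqq n_k/N$ and using Stirling's formula, the entropic prefactor contributes
\[
\log\frac{N!}{\prod_{k=1}^{p(\beta)}(k!)^{n_k}\,n_k!}
= N\log N\,\sum_{k}(k-1)\gamma_k \;-\;N\sum_{k}\gamma_k\bigl(\log k!+\log\gamma_k-1\bigr) \;-\;N + O(\log N),
\]
using $\sum_k k\gamma_k =1$. The factor $(C_{\beta,\lambda,\ve_0}N)^N$ yields $N\log N + N(2-\beta)\log\lambda + N\log\mc{Z}_\beta + O(N\Cut^{-2})$ thanks to \eqref{devtCbeta}. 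Finally, by multiplicativity of $\Msf^-_{\ve_0}$ and the combination of \eqref{eq:MX -low}, \eqref{eq:M-diff} and Lemma~\ref{lemma:limiting}, one has for each $k\le p(\beta)$
\[
\log \Msf^-_{\ve_0}(S^{(k)}) = -(k-1)\log N + \log \msf_{\beta,\lambda}(k) + O\!\Bigl(\Cut^{-2}\lambda^{-2(k-1)}\Bigr),
\]
so that $\log\Msf^-_{\ve_0}(\pi) = -N\log N\sum_k(k-1)\gamma_k + N\sum_k\gamma_k\log\msf_{\beta,\lambda}(k) + O(N\delta_{\beta,\lambda})$, where the error is absorbed by using the assumption \eqref{eq:assnk} that $n_k \lesssim N\lambda^{2(k-1)}$. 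Summing these three contributions causes the $N\log N\sum_k(k-1)\gamma_k$ terms to cancel, leaving precisely the non-Mayer part of $\log N! + N(2-\beta)\log\lambda + N\log\mc{Z}_\beta - N\mc{I}_{\beta,p(\beta),\lambda}(\gamma)$ modulo an $O(N\delta_{\beta,\lambda})$ error.

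The remaining task is to analyze the logarithm of the cluster series
\[
\mc{S}(\pi) \coloneqq \sum_{n=0}^{\infty}\frac{1}{n!}\sum_{\substack{X_1,\dots,X_n\in\mc{P}(\pi)\\\mathrm{disjoint}}}\Ksf^-_{\ve_0}(X_1)\cdots\Ksf^-_{\ve_0}(X_n).
\]
Proposition~\ref{prop:absolute lower}, specifically \eqref{eq:bb1}, together with \eqref{eq:K- bound} and a routine Kotecký--Preiss type check using the anchored estimate \eqref{eq:anch lem all}, guarantees the absolute convergence required to apply the polymer resummation Lemma~\ref{lemma:resum2}. This rewrites $\log \mc{S}(\pi)$ as the connected cluster series with the Ursell function $\mathrm{I}(G(X_1,\dots,X_n))$. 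By \eqref{eq:bb2}, the contribution of terms in which at least one $|V_{X_i}|>p(\beta)$ is $O(N\delta_{\beta,\lambda})$, so one may restrict to tuples with all $|V_{X_i}|\le p(\beta)$.

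On this restricted range, the estimate \eqref{eq:K-diff} allows replacement of each $\Ksf^-_{\ve_0}(X_i)$ by $\Ksf^0_\infty(X_i)$ at cost $O(N\delta_{\beta,\lambda})$, and Corollary~\ref{coro:equality mult} identifies $\Ksf^0_\infty(X_i) = \Ksf^{\mult}_{\beta,\lambda}(X_i)$. Using Lemma~\ref{lemma:limiting} to replace $N^{|X_i|-1}\Ksf^{\mult}_{\beta,\lambda}(X_i)$ by $\ksf^{\mult}_{\beta,\lambda}((\#_k X_i)_k)$ (again at cost $O(N\delta_{\beta,\lambda})$), one arrives at a sum over connected tuples of subpartitions weighted by the Ursell function.

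The main point--and the step I expect to be the principal obstacle--is to show that this connected sum, after further $O(N\delta_{\beta,\lambda})$ approximation, reduces to the hypertree sum defining the Mayer part of $\mc{I}_{\beta,p(\beta),\lambda}$ in Definition~\ref{def:rate function}. This is done by first resumming over the choice of the underlying partition $\pi$ of $[N]$ (equivalently, over $(m_1,\dots,m_{p(\beta)})$ via Definition~\ref{def:canonical}), which produces the prefactor $\prod_i \gamma_i^{m_i}/m_i!$; the combinatorial identity
\[
\sum_{\substack{X_1,\dots,X_n\;\mathrm{connected}\\V_{X_1}\cup\cdots\cup V_{X_n}=V,\ |V_{X_i}|\le p}}\mathrm{I}(G(X_1,\dots,X_n))\prod_i F(X_i)
\;=\;(-1)^n\!\!\sum_{(X_1,\dots,X_n)\in\Htrees_n(\cdot)}\prod_i F(X_i)\;+\;O(\delta_{\beta,\lambda})
\]
then holds because non-hypertree connected configurations force some $X_i, X_j$ to overlap in at least two elements or to contain internal cycles, events whose contribution is estimated via the same peeling and quadratic-estimate mechanism that drove Lemma~\ref{lemma:integration summation}. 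Collecting the two expansions produces exactly $-N\mc{I}_{\beta,p(\beta),\lambda}(\gamma)$ up to the advertised error, which closes the proof.
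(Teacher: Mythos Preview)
Your overall structure is correct and mirrors the paper: apply Lemma~\ref{lemma:start low}, expand the entropic/multipole/normalization factors via Stirling, \eqref{devtCbeta}, \eqref{eq:M-diff} and Lemma~\ref{lemma:limiting}, then take logs of the cluster series via Proposition~\ref{prop:absolute lower} and Lemma~\ref{lemma:resum2}, restrict to $|V_{X_i}|\le p(\beta)$ via \eqref{eq:bb2}, pass from $\Ksf^-_{\ve_0}$ to $\Ksf^0_\infty=\Ksf^{\mult}_{\beta,\lambda}$ via \eqref{eq:K-diff} and Corollary~\ref{coro:equality mult}, and finally reduce to hypertrees.

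The gap is in your last paragraph. The reduction to hypertrees is \emph{not} an analytic estimate driven by peeling or the quadratic mechanism of Lemma~\ref{lemma:integration summation}; it is a pure $N$-power count. Once you are on the restricted range $|V_{X_i}|\le p(\beta)$, Lemma~\ref{lemma:limiting} gives $\Ksf^{\mult}_{\beta,\lambda}(X_i)=N^{1-|X_i|}\bigl(\ksf^{\mult}_{\beta,\lambda}(\#X_i)+o_N(1)\bigr)$. Summing over connected tuples with $|X_1\cup\cdots\cup X_n|=k$ contributes a factor $\binom{N}{k}\asymp N^k/k!$ (from choosing the union inside $\pi$), so the total order is $N^{\,k+n-\sum_i|X_i|}$. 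This exponent equals $1$ precisely when $(X_1,\ldots,X_n)\in\Htrees_n$; otherwise it is $\le 0$ and the contribution is $o(N)$, absorbed into the error. Your claimed ``combinatorial identity'' with an $O(\delta_{\beta,\lambda})$ remainder obtained via peeling does not exist in the paper and is not needed; invoking the machinery of Lemma~\ref{lemma:integration summation} here is a red herring, since at this stage one is manipulating already-bounded finite-size activities, not integrals. Also note that $\pi$ is fixed throughout (chosen in Lemma~\ref{lemma:start low}); what produces the prefactor $\prod_i\gamma_i^{m_i}/m_i!$ is counting subpartitions $Y\subset\pi$ with block profile $(m_1,\ldots,m_p)$ via $\prod_i\binom{n_i}{m_i}$, not ``resumming over $\pi$''.
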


\medskip

\begin{proof}
	Denote $p\coloneqq p(\beta)$.
	
	\paragraph{\bf{Step 1: expanding the multipole terms}}
	First, since labels do not matter,
	\begin{equation*}
		\int_{\mc{A}}e^{-\beta \F_\lambda}\dd \XN \dd \YN=N! \int_{\mc{A}}\indic_{\sigma_N=\Id}e^{-\beta \F_\lambda}\dd \XN \dd \YN.
	\end{equation*}
	By Lemma \ref{lemma:start low}, we have
	\begin{equation}\label{eq:comb}
		\int_{\mc{A}} \indic_{\sigma_N=\Id}e^{-\beta \F_\lambda}\geq\frac{N!}{1^{n_1}(2!)^{n_2}\cdots (p!)^{n_p}n_1!\cdots n_p!} (C_{\beta,\lambda, \ve_0}N)^N \Msf_{\ve_0}^-(\pi)\sum_{n=0}^{\infty} \frac{1}{n!}\sum_{\substack{X_1,\ldots,X_n\\ \mathrm{disjoint}}}\Ksf_{\ve_0}^-(X_1)\cdots \Ksf_{\ve_0}^-(X_n).
	\end{equation}
	By Proposition \ref{prop:expansion -} and Lemma \ref{lemma:limiting}, for every $S\subset \pi$,
	\begin{equation*}
		\Msf_{\ve_0}^{-}(S)=\frac{1}{N^{|S|-1}}\Bigr(\msf_{\beta,\lambda}(|S|)+O_{\beta,M,p,\ve_0}(\Cut^{-2})\Bigr).
	\end{equation*}
	Using that $\msf_{\beta,\lambda}(1)=1$, it follows that
	\begin{equation*}
		\prod_{S\in \pi}\Msf_{\ve_0}^{-}(S)=\frac{N^{n_1+\ldots+n_{p}}}{N^{N}}\prod_{k=2}^{p}\msf_{\beta,\lambda}(k)^{n_k}\prod_{k=2}^pe^{\frac{n_k}{\msf_{\beta,\lambda}(k) }O_{\beta,M,p,\ve_0}(\Cut^{-2}) } .
	\end{equation*}
	By the lower bound on the multipole partition function of Proposition \ref{prop:absolute lower} and the fact that $C_{\beta,\lambda}\leq C\lambda^{2-\beta}$, we get that for every $k\in \{1,\ldots,p\}$, there exists a constant $C>0$ depending on $\beta$, $M$, and $p$ such that 
	\begin{equation*}
		\msf_{\beta,\lambda}(k)\geq \frac{1}{C}\lambda^{2(k-1)}.
	\end{equation*}
	Therefore, by the assumption \eqref{eq:assnk}, 
	\begin{equation}\label{eq:tZ}
		\prod_{S\in \pi}\Msf_{\ve_0}^{-}(S)=\frac{N^{n_1+\ldots+n_{p}}}{N^{N}}\left(\prod_{k=2}^{p}\msf_{\beta,\lambda}(k)^{n_k}\right)e^{O_{\beta,M,p,\ve_0}(N\Cut^{-2}) } .
	\end{equation}
	Combining \eqref{eq:comb} and \eqref{eq:tZ}, we obtain
	\begin{multline*}
		\frac{1}{(NC_{\beta,\lambda,\ve_0})^N}\int_{\mc{A}}\indic_{\sigma_N=\Id} e^{-\beta \F_\lambda }\geq\frac{N!}{1^{n_1}(2!)^{n_2}\cdots (p!)^{n_{p}}n_1!\cdots n_{p}!} \frac{N^{n_1+\cdots+n_{p}}}{N^{N}}\prod_{k=2}^{p}\msf_{\beta,\lambda}(k)^{n_k}\\ \times\sum_{n=0}^\infty \frac1{n!}\sum_{\substack{X_1,\ldots,X_n\in \mc{P}(\pi)\\ \mathrm{disjoint}}}\Ksf_{\ve_0}^-(X_1)\cdots \Ksf_{\ve_0}^-(X_n) e^{O_{\beta,M,p,\ve_0}(N\Cut^{-2})}.
	\end{multline*}
	Therefore, by Stirling's formula,
	\begin{multline}\label{eq:asym}
		\frac{1}{(NC_{\beta,\lambda,\ve_0})^N}\int_{\mc{A}}\indic_{\sigma_N=\Id} e^{-\beta \F_\lambda }\geq
		e^{-N}\prod_{k=1}^p \Bigr(\frac{Ne}{k!n_k}\Bigr)^{n_k}\prod_{k=2}^{p}\msf_{\beta,\lambda}(k)^{n_k}\\ \times  \sum_{n=0}^\infty \frac1{n!}\sum_{\substack{X_1,\ldots,X_n\in \mc{P}(\pi)\\ \mathrm{disjoint}}}\Ksf_{\ve_0}^-(X_1)\cdots \Ksf_{\ve_0}^-(X_n) e^{O_{\beta,M,p,\ve_0}(N\Cut^{-2})}.
	\end{multline}
	In view of \eqref{devtCbeta} and of Definition \ref{def:rate function}, we will have proved the result once we show that 
	\begin{multline}\label{claimprop446}
		\log \sum_{n=0}^{+\infty}\frac{1}{n!}\sum_{\substack{X_1,\ldots,X_n\in \mc{P}(\pi)\\ \mathrm{disjoint}}}\Ksf_{\ve_0}^-(X_1)\cdots \Ksf_{\ve_0}^-(X_n)
		\\
		=
		-N\sum_{n=1}^{\infty}\frac{(-1)^n}{n!}\sum_{m_1,\ldots,m_p\in \mathbb{N} }\prod_{i=1}^p \frac{\gamma_i^{m_i}}{m_i!}\sum_{\substack{(X_1,\ldots,X_n)\in \Htrees_n(Y(m_1,\ldots,m_p))\\ X_1\cup \cdots \cup X_n=Y(m_1,\ldots,m_p)}}\ksf_{\beta,\lambda}^\mult(\#X_1)\indic_{|V_{X_1}|\leq p}\cdots \ksf_{\beta,\lambda}^\mult(\#X_n)\indic_{|V_{X_n}|\leq p} 
		\\+O_{\beta,M,p,\ve_0}\left(N\delta_{\beta,\lambda}\right),
	\end{multline} with $\Htrees_n$ defined in  \eqref{def:Htrees} and $Y(m_1,\ldots,m_p)$ as in Definition \ref{def:canonical}.
	
	\paragraph{\bf{Step 2: expansion of the perturbative term}}

	By Proposition \ref{prop:absolute lower}, the series 
	\begin{equation*}
		\sum_{n\geq 1}\frac{1}{n!}\sum_{\substack{X_1,\ldots,X_n\in \mc{P}(\pi)\\ \mathrm{connected}}}\Ksf_{\ve_0}^-(X_1)\cdots \Ksf_{\ve_0}^-(X_n)\mathrm{I}(G(X_1,\ldots,X_n))
	\end{equation*}
	is absolutely convergent, which allows one to write by Lemma \ref{lemma:resum2},
	\begin{multline*}
		\log \sum_{n=0}^{+\infty}\frac{1}{n!}\sum_{\substack{X_1,\ldots,X_n\in \mc{P}(\pi)\\ \mathrm{disjoint}}}\Ksf_{\ve_0}^-(X_1)\cdots \Ksf_{\ve_0}^-(X_n)\\=\sum_{n=1}^{+\infty}\frac{1}{n!}\sum_{\substack{X_1,\ldots,X_n\in \mc{P}(\pi)\\ \mathrm{connected}}}\Ksf_{\ve_0}^-(X_1)\cdots \Ksf_{\ve_0}^-(X_n)\mathrm{I}(G(X_1,\ldots,X_n)).
	\end{multline*}
	Moreover, by \eqref{eq:bb2}, 
	\begin{multline*}
		\log \sum_{n=0}^{+\infty}\frac{1}{n!}\sum_{\substack{X_1,\ldots,X_n\in \mc{P}(\pi)\\ \mathrm{disjoint}}}\Ksf_{\ve_0}^-(X_1)\cdots \Ksf_{\ve_0}^-(X_n)\\=\sum_{n=1}^{+\infty}\frac{1}{n!}\sum_{\substack{X_1,\ldots,X_n\in \mc{P}(\pi)\\ \mathrm{connected}}} \indic_{|V_{X_1}|\leq p}\Ksf_{\ve_0}^-(X_1)\cdots \indic_{|V_{X_n}|\leq p}\Ksf_{\ve_0}^-(X_n) \mathrm{I}(G(X_1,\ldots,X_n))+O_{\beta,M,p,\ve_0}(N\delta_{\beta,\lambda}).
	\end{multline*}
	Then, by Proposition \ref{prop:expansion -}, one can check that
	\begin{multline*}
		\log \sum_{n=0}^{+\infty}\frac{1}{n!}\sum_{\substack{X_1,\ldots,X_n\in \mc{P}(\pi)\\ \mathrm{disjoint}}}\Ksf_{\ve_0}^-(X_1)\cdots \Ksf_{\ve_0}^-(X_n)\\=\sum_{n=1}^{+\infty}\frac{1}{n!}\sum_{\substack{X_1,\ldots,X_n\in \mc{P}(\pi)\\ \mathrm{connected}}} \indic_{|V_{X_1}|\leq p} \Ksf_{\infty}^0(X_1 )\dots \indic_{|V_{X_n}|\leq p}\Ksf_{\infty}^0(X_n) \mathrm{I}(G(X_1,\ldots,X_n))+O_{\beta,M,p,\ve_0}\left(N\delta_{\beta,\lambda}\right).
	\end{multline*}	
	Using Corollary \ref{coro:equality mult} to equate $\Ksf_{\infty}^0$ with $\Ksf_{\beta,\lambda}^{\mult}$, we thus get 
	\begin{multline}\label{eq:clS1}
		\log \sum_{n=0}^{+\infty}\frac{1}{n!}\sum_{\substack{X_1,\ldots,X_n\in \mc{P}(\pi)\\ \mathrm{disjoint}}}\Ksf_{\ve_0}^-(X_1)\cdots \Ksf_{\ve_0}^-(X_n)\\=\sum_{n=1}^{+\infty}\frac{1}{n!}\sum_{\substack{X_1,\ldots,X_n\in \mc{P}(\pi)\\ \mathrm{connected}}} \indic_{|V_{X_1}|\leq p} \Ksf_{\beta,\lambda}^{\mult}(X_1 )\dots \indic_{|V_{X_n}|\leq p}\Ksf_{\beta,\lambda}^{\mult}(X_n) \mathrm{I}(G(X_1,\ldots,X_n))+O_{\beta,M,p,\ve_0}\left(N\delta_{\beta,\lambda}\right).
	\end{multline}	
	
	\paragraph{\bf{Step 3: summing over unlabeled quantities}} 
	
	First, we show that only hypertrees contribute an order $N$ term in \eqref{eq:clS1}. Let $n\geq 1$ and recall $\Htrees_n$ from \eqref{def:Htrees}.
	Let us prove that the contribution of non-hypertrees is negligible. Summing according to the cardinality of $X_1\cup \cdots \cup X_n$ gives
	\begin{multline}\label{eq:no survive}
		\sum_{\substack{X_1,\ldots,X_n\in \mc{P}(\pi)\\ \mathrm{connected}\\ (X_1,\ldots,X_n)\notin \Htrees_n(\pi) }}\Bigr|\Ksf_{\beta,\lambda}^{\mult}(X_1)\indic_{|V_{X_1}|\leq p} \cdots \Ksf_{\beta,\lambda}^{\mult}(X_n) \indic_{|V_{X_n}|\leq p}\mathrm{I}(G(X_1,\ldots,X_n))\Bigr|\\\leq \sum_{k=1}^\infty \frac{N^k}{k!}\max_{Y\in \mc{P}(\pi):|Y|=k} \sum_{\substack{X_1,\ldots,X_n\in \mc{P}(\pi)\\ \mathrm{connected}\\ (X_1,\ldots,X_n)\notin \Htrees_n(\pi) \\X_1\cup \cdots \cup X_n=Y  }}\Bigr|\Ksf_{\beta,\lambda}^{\mult}(X_1)\indic_{|V_{X_1}|\leq p} \cdots \Ksf_{\beta,\lambda}^{\mult}(X_n) \indic_{|V_{X_n}|\leq p}\mathrm{I}(G(X_1,\ldots,X_n))\Bigr|.
	\end{multline}
	If $|V_{X_i}|\leq p$, then by the scaling of $\Ksf_{\beta,\lambda}^{\mult}$ in Lemma \ref{lemma:limiting},
	\begin{equation}\label{eq:asKT}
		\Ksf_{\beta,\lambda}^{\mult}(X_i)=N^{1-|X_i|}(\ksf_{\beta,\lambda}^{\mult}(\#_1X_i,\ldots,\#_pX_i)+o_N(1)).
	\end{equation}
	Therefore, the term $(X_1,\ldots,X_n)$  contributes to the limit after dividing by $N$ if and only if 
	\begin{equation}\label{eq:cond non vanish}
		\Bigr|\bigcup_{i=1}^n X_i\Bigr|+n-\sum_{i=1}^n |X_i|=1.
	\end{equation}
	(Notice that the left-hand side is always smaller than $1$). One can check by induction that \eqref{eq:cond non vanish} is equivalent to $(X_1,\ldots,X_n)\in \Htrees_n(\pi)$. Thus, 
	\begin{equation*}
		\sum_{\substack{X_1,\ldots,X_n\in \mc{P}(\pi)\\ \mathrm{connected}\\ (X_1,\ldots,X_n)\notin \Htrees_n(\pi) }}\Bigr|\Ksf_{\beta,\lambda}^{\mult}(X_1)\indic_{|V_{X_1}|\leq p} \cdots \Ksf_{\beta,\lambda}^{\mult}(X_n) \indic_{|V_{X_n}|\leq p}\mathrm{I}(G(X_1,\ldots,X_n))\Bigr|=o(N).
	\end{equation*}
	Hence, recalling that if $G(X_1,\ldots,X_n)$ is a tree, then $\mathrm{I}(G(X_1,\ldots,X_n))=(-1)^{n-1}$, we obtain
	\begin{multline}\label{eq:bf1}
		\sum_{\substack{X_1,\ldots,X_n\in \mc{P}(\pi)\\ \mathrm{connected}}}\Ksf_{\beta,\lambda}^{\mult}(X_1)\indic_{|V_{X_1}|\leq p} \cdots \Ksf_{\beta,\lambda}^{\mult}(X_n) \indic_{|V_{X_n}|\leq p}\mathrm{I}(G(X_1,\ldots,X_n))\\=(-1)^{n-1}\sum_{ (X_1,\ldots,X_n)\in \Htrees_n(\pi) }\Ksf_{\beta,\lambda}^{\mult}(X_1)\indic_{|V_{X_1}|\leq p} \cdots \Ksf_{\beta,\lambda}^{\mult}(X_n) \indic_{|V_{X_n}|\leq p}.
	\end{multline}
	We now sum the above display according to the number of multipoles of cardinality $i$ in $X_1\cup \cdots \cup X_n$:
	\begin{multline}\label{eq:bf2}
		\sum_{(X_1,\ldots,X_n)\in \Htrees_n(\pi) }\Ksf_{\beta,\lambda}^{\mult}(X_1)\indic_{|V_{X_1}|\leq p} \cdots \Ksf_{\beta,\lambda}^{\mult}(X_n) \indic_{|V_{X_n}|\leq p}\\=\sum_{m_1,\ldots,m_p\geq 0}\sum_{\substack{Y\in \mc{P}(\pi):\\ \forall i\in [p], \#_iY=m_i}}  \sum_{\substack{(X_1,\ldots,X_n)\in \Htrees_n(\pi) \\ X_1\cup \cdots \cup X_n=Y}}\Ksf_{\beta,\lambda}^{\mult}(X_1)\indic_{|V_{X_1}|\leq p} \cdots \Ksf_{\beta,\lambda}^{\mult}(X_n)\indic_{|V_{X_n}|\leq p}.
	\end{multline}
	Given $m_1,\ldots,m_p$ notice that the number of ways to choose $Y\subset \pi$ with $m_i$ parts of cardinality $i$ for every $i\in [p]$ is given by 
	\begin{equation*}
		\prod_{i=1}^p \binom{n_i}{m_i}.
	\end{equation*}
	Moreover, recalling that $n_i=N\gamma_i$ and using Stirling's formula,
	\begin{equation*}
		\log  \prod_{i=1}^p \binom{n_i}{m_i}=\log \left(N^{m_1+\cdots +m_p}\prod_{i=1}^p \frac{\gamma_i^{m_i}}{m_i!}\right)+O(\log N).
	\end{equation*}
	Recall from Definition \ref{def:canonical} $Y(m_1,\ldots,m_p)$ the canonical partition of $[m_1+2m_2+\cdots+pm_p]$ with $m_i$ blocks of cardinality $i$ for every $i\in [p]$. Then, combining the two above displays with \eqref{eq:asKT}, we obtain 
	\begin{multline}\label{eq:bf3}
		\sum_{\substack{Y\in \mc{P}(\pi):\\ \forall i\in [p], \#_iY=m_i}}  \sum_{\substack{ (X_1,\ldots,X_n)\in \Htrees_n(\pi)\\ X_1\cup \cdots \cup X_n=Y}}\Ksf_{\beta,\lambda}^{\mult}(X_1)\indic_{|V_{X_1}|\leq p} \cdots \Ksf_{\beta,\lambda}^{\mult}(X_n)\indic_{|V_{X_n}|\leq p}\\
		=N \prod_{i=1}^p \frac{\gamma_i^{m_i}}{m_i!}\sum_{\substack{(X_1,\ldots,X_n)\in \Htrees_n(Y(m_1,\ldots,m_p))\\ X_1\cup \cdots \cup X_n=Y(m_1,\ldots,m_p)}}\ksf_{\beta,\lambda}^\mult(\#X_1)\indic_{|V_{X_1}|\leq p}\cdots \ksf_{\beta,\lambda}^\mult(\#X_n)\indic_{|V_{X_n}|\leq p}+o(N).
	\end{multline}

	Combining \eqref{eq:bf1}, \eqref{eq:bf2} and \eqref{eq:bf3} gives
	\begin{multline}\label{eq:asin}
		\sum_{\substack{X_1,\ldots,X_n\in \mc{P}(\pi)\\ \mathrm{connected}}}\Ksf_{\beta,\lambda}^{\mult}(X_1)\indic_{|V_{X_1}|\leq p} \cdots \Ksf_{\beta,\lambda}^{\mult}(X_n) \indic_{|V_{X_n}|\leq p}\mathrm{I}(G(X_1,\ldots,X_n)) \\=N(-1)^{n-1} \sum_{m_1,\ldots,m_p}\prod_{i=1}^p \frac{\gamma_i^{m_i}}{m_i!}\sum_{\substack{(X_1,\ldots,X_n)\in \Htrees_n(Y(m_1,\ldots,m_p))\\ X_1\cup \cdots \cup X_n=Y(m_1,\ldots,m_p)}}\ksf_{\beta,\lambda}^\mult(\#X_1)\indic_{|V_{X_1}|\leq p}\cdots \ksf_{\beta,\lambda}^\mult(\#X_n)\indic_{|V_{X_n}|\leq p}\\+o(N).
	\end{multline}
	Summing over $n$ and combining this with \eqref{eq:clS1} concludes the proof of \eqref{claimprop446}, hence the result.
\end{proof}

\section{Large deviations upper bound}\label{section:upper}

In this section, we establish an upper bound on the partition function. Unlike in the case of the lower bound, we cannot assume that all dipoles have size smaller than $\varepsilon_0 \Cut$ and that all multipoles have cardinality smaller than $p(\beta)$. We must consider all configurations. However, the absolute convergence of the cluster expansion series in Proposition \ref{prop:absolute lower} crucially relied on the bounded (by $\ve_0 \Cut$) size of dipoles and the bounded cardinality of multipoles.

\subsection{Strategy for the upper bound}

The key idea of the proof is to isolate a family of ``good dipoles'' and to perform a cluster expansion on the partition function of this subsystem, following the approach of Section \ref{section:lower}.  
A major difficulty is that good dipoles do not form an isolated system: they still interact with bad dipoles.  
A dipole is declared good if it satisfies the following three conditions:

\begin{enumerate}
	\item its length is smaller than $\ve_0\Cut $;  
	\item it belongs to a multipole of cardinality smaller than $p(\beta)$, and the number of such multipoles is not too large;
	\item it contains no “internal’’ bad dipole of either of the two previously defined types whose size is smaller than its own.
\end{enumerate}

In Section \ref{sub:bad points} we introduce a reduced model by applying the energy lower bound of Proposition \ref{prop:mino} with carefully chosen radii.  
For a good point, the radius is taken to be a fixed fraction of its distance to the closest bad point, with a cut-off at $\varepsilon_{0}\Cut$.  
Furthermore, the radii of bad points are chosen small enough to prevent overcrowding of too many same-sign particles.

In Section \ref{sub:isolating} we isolate the good dipoles and write their partition function as a cluster-expansion series, within the framework of Section~\ref{sub:pert}.  
The corresponding activity estimates are stated in Section~\ref{sub:stat upper}.

A key fact in Section \ref{section:lower} was that the graphs weighted by $a_{ij}^{\tilde v}$ contributing to the activity are Eulerian, hence 2-edge-connected.  
Here, because of the interaction with bad dipoles, the analogous graphs with odd weights are not always Eulerian or 2-edge-connected.  
We decompose these graphs into a 2-core consisting of a tree of 2-edge-connected blocks, to which pendant trees are attached.  
By a parity argument, every leaf of a pendant tree interacts with bad points.  
Using fine geometric properties of the configuration and our choice of radii, we show that the smallness of the good-bad interactions compensates for the loss of 2-edge-connectivity.  
This is done in Sections \ref{sub:pendant} and \ref{sub:quadra}.  
Since the contributing graphs obey exactly the same bound as in Corollary \ref{coro:prod a}, we can import the analysis of Section \ref{section:lower} to establish the activity estimates.

Thanks to the choice of radii for bad points, we may bound their energy from below, allowing us to reduce to a system of non-interacting dipoles.  
Sections \ref{sub:type 12}, \ref{sub:type1} and \ref{sub:type2} control the total number of bad points.  
Together with the cluster-expansion estimate for good dipoles, this yields the desired LDP upper bound.

Combining this upper bound with the LDP lower bound of Proposition \ref{prop:lower bound} proves the full LDP of Theorem \ref{theorem:LDP}.  
As a corollary, we obtain an expansion of the partition function expressed through the infimum of the rate function $\mathcal{I}_{\beta,p(\beta),\lambda}$ of Definition \ref{def:rate function}.  
Our final aim is to simplify that expression and derive the expansion of Theorem \ref{theorem:expansion}. By the LDP and the activity estimates, the partition function can be approximated by a sum over clusters whose graphs have small connected components and small multipoles.  
A M\"obius inversion transform shows that this auxiliary truncated activity coincides with the dipole activity of Definition \ref{def:dipole activity}, which in turn identifies the limiting free energy.

\subsection{Definition of bad points and of a reduced model}\label{sub:bad points}

We first define clusters of positive charges and clusters of negative charges.

Roughly speaking, we grow balls of equal radii equal to $\tau$ around all the positive charges. As soon as  $\tau$ equals $\frac{1}{4}r_i$ (where $r_i$ is the distance from $x_i$ to its match $y_{\sigma_N(i)}$), we remove the point and its associated ball. Moreover, as soon as a ball intersects with  $q(\beta)$ (a fixed parameter) other balls, we remove $B(x_i,\tau)$ and all the balls touching it. The erased points are declared to be a cluster. Stop the process when $\tau$ reaches $\ve_0\Cut$ and declare each remaining charge to be a (singleton) cluster.


\begin{definition}[$q$-clustering of same-sign particles]\label{def:continuous clustering}
	Let $q\geq 2$ be a fixed parameter.
	Let $(\vec{X}_N, \vec{Y}_N)$ be a configuration in $(\Lambda^2)^N$ such that $\sigma_N[\vec{X}_N, \vec{Y}_N]=\Id$, where~$\sigma_N$ is the stable matching from Definition~\ref{def:stable match}.  Recall $r_i = |y_i-x_i|$. A~$q$-clustering of~$\{x_i\}$ is a partition of $[N]$ and a list of radii given by the following iteration: 
	\begin{itemize}
		\item Define~$I_0 \coloneqq  [N]$
		and $\Clus_0^+\coloneqq \emptyset.$
		\item Having defined~$I_{n-1}$ and~$\Clus_{n-1}^+$,
		\begin{enumerate}
			\item If $I_{n-1}=\emptyset$, stop and let~$\Clus_{n-1}^+$ be the final cluster; otherwise
			\item Let $\tau$ be given by the minimum of $\frac{1}{4}\min\{r_i : i\in I_{n-1} \}$ and the minimal $t$ such that some ball $B(x_i,t)$ for $i\in I_{n-1}$ intersects at least $q$ other balls $B(x_j,t)$ for $j\in I_{n-1}$.
			\begin{itemize}
				\item If $\tau\geq \ve_0\Cut$, then set $\Clus_{n}^+ \coloneqq \Clus_{n-1}^+\sqcup \{\{i\}:i\in I_{n-1}\}$ and $I_n\coloneqq \emptyset$.
				\item If $\tau=\frac{1}{4}r_i$ for some $i\in I_{n-1}$, set $\Clus_n^+\coloneqq \Clus_{n-1}^+\sqcup \{\{i\}\}$, $\tau_i^+\coloneqq \frac{1}{4}r_i$ and $I_n\coloneqq I_{n-1}\setminus\{i\}$.
				\item If $\tau$ is such that $B(x_i,\tau)$ intersects at least $q$ other balls $B(x_j,\tau)$, set $\Clus_n^+=\Clus_{n-1}^+\sqcup J$ and $I_n=I_{n-1}\setminus J$
				where $J$ is the set of $j\in I_{n-1}$ such that $B(x_j,\tau)$ intersects $B(x_i,\tau)$.  (Notice $i\in J$). 
			\end{itemize}
		\end{enumerate}
		\item Denote $\Clus_N^+$ as the set of clusters obtained and for every $i\in [N]$, denote $\tau_i^+$ as the maximum between $\lambda$ and the value of $\tau$ at which $B(x_i,\tau)$ has been removed. By exchanging the roles of the positive and negative charges above, define a $q$-clustering of negative charges, denoted $\Clus_N^-$ with an associated list of radii $\tau_i^-$.
	\end{itemize}
\end{definition}

Up to the removal of a set of measure $0$, we will assume that at the time of removal $\tau$, $B(x_i, \tau)$ intersects exactly $q$ other balls.

We first declare bad the dipoles containing one point that has disappeared before its radius has reached $\max(\frac{1}{4}r_i,\lambda)$.

\begin{definition}[Bad points of type 1]\label{def:type1}
	Let us define
	\begin{equation}\label{def:qbeta}q(\beta)\coloneqq 
		\begin{cases} \lfloor \frac{3}{2-\beta/2} \rfloor & \text{if $\beta\in (2,4)$}\\
			p_0 & \text{if $\beta\geq 4$}.
		\end{cases}
	\end{equation}
	Let $\tau_i^+$ and $\tau_i^-$ be the radii given by the $q(\beta)$-clustering algorithm of Definition \ref{def:continuous clustering}. 
	
	On the event $\{\sigma_N=\Id\}$, we define
	\begin{equation*}
		J_\bad^{1,+}\coloneqq \{i\in [N]:\tau_i^+<\max(\tfrac{1}{4}r_i,\lambda)\} \quad \text{and}\quad J_\bad^{1,-}\coloneqq \{i\in [N]:\tau_i^-<\max(\tfrac{1}{4}r_i,\lambda)\}
	\end{equation*}
	and also let
	\begin{equation*}
		I_\bad^1\coloneqq \{i\in [N]:i\in J_\bad^{1,+} \text{ or }i\in J_\bad^{1,-}\}.
	\end{equation*}
\end{definition}

Next, we include every dipole that belongs to a multipole whose size exceeds $p(\beta)$ and all small-cardinality multipoles whenever their total count surpasses the permitted threshold.

\begin{definition}[Bad points of type $2$]\label{def:type2}
	Recall $p(\beta)$ from Definition \ref{def:pbeta}. We let $I_\bad^{2,1}$ be the set of $i\notin I_\bad^1$ such that $|[i]^{\Pi_\mult}|>p(\beta)$. 
	
	Moreover, for every $k\in \{1,\ldots,p(\beta)\}$, we let 
	\begin{equation*}
		I_\bad^{2,2,k}\coloneqq \begin{cases}
			\{ i\notin I_\bad^1\cup I_\bad^{2,1}:|[i]^{\Pi_\mult}|=k\} & \text{if $|\{ i\notin I_\bad^1\cup I_\bad^{2,1}:|[i]^{\Pi_\mult}|=k\}|>\ve_0^{-\alpha(\beta)} N\lambda^{2(k-1)}$}\\
			\emptyset & \text{otherwise}.
		\end{cases} 
	\end{equation*}
	Set
	\begin{equation*}
		I_\bad^{2,2}\coloneqq \bigcup_{k\in [p(\beta)]}I_{\bad}^{2,2,k}
	\end{equation*}
	and 
	\begin{equation*}
		I_\bad^2\coloneqq I_\bad^{2,1}\cup I_\bad^{2,2}.
	\end{equation*}
\end{definition}

Finally, we discard the dipoles having an ``internal'' bad dipole and take the closure by multipole of this set.

\begin{definition}[Bad points of type $3$]\label{def:type3}
	We let $J_\bad^{3}$ be the set of $i\in [N]\setminus (I_\bad^1 \cup I_\bad^2)$ such that there exists $j\in I_\bad^1\cup I_\bad^2$ such that 
	\begin{equation}\label{nelledef}
		d_{ij}\leq {M\max(r_i,\lambda)}.
	\end{equation}
	We then let 
	\begin{equation*}
		I_\bad^{3}\coloneqq \Bigr\{ i\in [N]\setminus (I_\bad^1 \cup I_\bad^2):[i]^{\Pi_\mult}\cap J_\bad^{3}\neq \emptyset\Bigr\}.
	\end{equation*}
\end{definition}

\begin{definition}[Good and bad points]\label{def:bad points}We let
	\begin{equation*}
		I_\bad\coloneqq I_\bad^1\cup I_\bad^2\cup I_\bad^3
	\end{equation*}
	and finally set
	\begin{equation*}
		I_\good\coloneqq [N]\setminus I_\bad.
	\end{equation*}
\end{definition}

\begin{remark}\label{remarkgoodbad}
	\begin{enumerate}
		\item
		Note that for the definition of the bad indices, we have taken the closure under the multipole relation, so that $i\in I_\good$ and $j\in I_\bad$ imply that $i$ and $j$ are not in the same multipole.
		\item
		Since $q(\beta)\geq p(\beta)$, we have $i\notin I_\bad^1\cup I_\bad^2$ if and only if $|[i]^{\Pi_\mult}|\leq p(\beta)$ and $r_i\le \ve_0\Cut$.
	\end{enumerate}
	
\end{remark}

\begin{definition}\label{def:dist bad} We let $Z_\bad=\{\{x_i,y_i\}, i \in I_\bad\}$  and 
	for every $i\in I_\good$, we let 
	\begin{equation}\label{eq:dibad}
		d_{i,\bad}\coloneqq \min(\dist(\{x_i,y_{i}\},Z_\bad),4\ve_0\Cut).
	\end{equation}
	For every $S\subset I_\good$, we let 
	\begin{equation}\label{def:Sbad}
		d_{S,\bad}=\inf_{i\in S}d_{i,\bad}.
	\end{equation}
\end{definition}

We next apply the result of the ball growth method of Proposition \ref{prop:mino} by choosing the enlargement radii as follows:

\begin{definition}[Interaction for the energy lower bound]\label{def:vijZ} 
	Let $Z=Z_\bad$, let $\ve_0\in (0,1)$.  
	\begin{itemize}
		\item For every $i\in I_\good$, on the event $\{\sigma_N=\Id\}$, we let
		\begin{equation}\label{deftauz}
			\tau_i^{+,Z}=\tau_i^{-,Z}=
			\max(\tfrac{1}{4}d_{i,\bad},\lambda).
		\end{equation}
		We will also sometimes denote by $\tau_i^Z$ the common value of $\tau_i^{+,Z}$ and $\tau_i^{-,Z}$.
		\item For every $i\in I_\bad$, on the event $\{\sigma_N=\Id\}$, we let 
		\begin{equation*}
			\tau_i^{+,Z}=\max(\tau_i^+,\lambda)\  \text{and}\quad   \tau_i^{-,Z}=\max(\tau_i^-,\lambda)
		\end{equation*}
		where $\tau_i^+$ and $\tau_i^-$ are as in Definition \ref{def:continuous clustering}.  Note that $\lambda\leq \tau_i^{\pm,Z}\le \ve_0 \Cut$.
	\end{itemize}
	For every $i,j\in [N]$ with $i\neq j$, on the event $\{\sigma_N=\Id\}$, define
	\begin{equation}\label{def:v+}
		v_{i,j,+}^Z\coloneqq \Bigr(\g_\lambda-\g * \delta_0^{(\tau_i^{+,Z})}*\delta_0^{(\tau_j^{+,Z})}\Bigr) (x_i-x_j) -\Bigr(\g_\lambda-\g * \delta_0^{(\tau_i^{-,Z})}*\delta_0^{(\tau_j^{+,Z})}\Bigr) (y_{i}-x_j) 
	\end{equation}
	and
	\begin{equation}\label{def:v-}
		v_{i,j,-}^Z\coloneqq \Bigr(\g_\lambda-\g * \delta_0^{(\tau_i^{+,Z})}*\delta_0^{(\tau_j^{-,Z})}\Bigr) (x_i-y_{j}) -\Bigr(\g_\lambda-\g * \delta_0^{(\tau_i^{-,Z})}*\delta_0^{(\tau_j^{-,Z})}\Bigr) (y_{i}-y_{j}),
	\end{equation}
	and 
	\begin{equation}\label{defvijZ}
		v_{ij}^Z\coloneqq  v_{i,j,+}^Z-v_{i,j,-}^Z.\end{equation}
\end{definition}

Recalling that $v_{i,j,+}^Z=v_{i,j,+}'$, we see that $v_{i,j,+}^Z$ is nonzero only if $B(x_i, \tau_i^{+,Z})\cup B(y_{i}, \tau_i^{-,Z})$ intersects $B(x_j, \tau_j^{+,Z})$. The definition of the radii will ensure that the good points do not interact too much with bad points and that each bad point does not have an attractive interaction with more than a bounded number of charges per scale.

\begin{lemma}[Geometric properties]\label{lemma:geo} 
	Recall that the constant $M$ in Definition \ref{def:multipoles} satisfies $M>20$. Let $\Pi_\mult$ be the partition of $[N]$ into multipoles. Recall $d_{i,\bad}$ from Definition \ref{def:dist bad}. Let $d_{i,j}^+=\dist(\{x_i,y_i\},x_j)$ and $d_{i,j}^-=\dist(\{x_i,y_i\},y_j)$. Let $(\vec{X}_N,\vec{Y}_N)\in (\Lambda^2)^N$ be such that $\sigma_N[\vec{X}_N,\vec{Y}_N]=\Id$.
	\begin{enumerate}
		\item Let $i\in I_\good$. We have
		\begin{equation}\label{eq:geo1}
			d_{i,\bad}\geq \max(r_i,\lambda).
		\end{equation}
		\item Let $i\in I_\good$ and $j\in I_\good$ be such that $i$ and $j$ belong to the same multipole. Then, there exists $C>0$ depending on $M$ and $p(\beta)$ such that
		\begin{equation}\label{eq:geo1'}
			d_{i,\bad}\leq Cd_{j,\bad}.
		\end{equation}
		\item Let $i \in I_{\good}$ and $j\in I_\bad$ be such that $v_{i,j,+}^Z\neq 0$. Then, 
		\begin{equation}\label{eq:geo2}
			d_{i,j}^+\leq 2\tau_j^{+,Z}\quad\text{and}\quad \max(r_i, \lambda)\le \max(r_j,\lambda). 
		\end{equation}
		\item Let $i \in I_{\good}$ and $j\in I_\bad$ be such that $v_{i,j,-}^Z\neq 0$. Then, 
		\begin{equation}\label{eq:geo3}
			d_{i,j}^-\leq 2\tau_j^{-,Z}\quad\text{and}\quad \max(r_i, \lambda)\le \max(r_j,\lambda). 
		\end{equation}
		\item Let $i_1,\ldots,i_K \in I_{\good}$. Suppose that 
		\begin{equation*}
			v_{i_1i_2}^Z\cdots  v_{i_{K-1}i_K}^Z\neq 0.
		\end{equation*}
		Then,
		\begin{equation}\label{eq:geo4}
			d_{i_K,\bad}\leq 6^{K-1} d_{i_1,\bad}.
		\end{equation}
		\item Let $i_1,\ldots,i_K\in I_\good$ and $i_1',\ldots,i_K'\in I_\good$ all distinct. Suppose that for every $l\in \{1,\ldots,K-1\}$, $[i_l']^{\Pi_\mult}=[i_{l+1}]^{\Pi_\mult}$. Suppose that 
		\begin{equation*}
			v_{i_1 i_1'}^Z \ldots v_{i_Ki_K'}^Z\neq 0.
		\end{equation*}
		Then, there exists a constant $C>0$ depending on $M$ and $p(\beta)$ such that 
		\begin{equation}\label{eq:geo4'}
			d_{i_K',\bad}\leq C^{K}d_{i_1,\bad}.
		\end{equation}
		\item Let $t\geq 0$ and $A\geq \frac{1}{2}$. Then, there exists a constant $C>0$ depending on $q(\beta)$ and $A$ such that for every $z\in \Lambda$,
		\begin{equation}\label{eq:geo5}
			|\{i\in I_\bad: x_i\in B(z,At),\tau_i^{+,Z}\geq t  \text{ and }\tau_i^{+,Z}>\lambda \}|\leq C,
		\end{equation}
		\begin{equation}\label{eq:geo6}
			|\{i\in I_\bad: y_i\in B(z,At),\tau_i^{-,Z}\geq t\text{ and }\tau_i^{-,Z}>\lambda \}|\leq C.
		\end{equation}
	\end{enumerate}
\end{lemma}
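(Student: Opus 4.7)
The plan is to treat the seven items in three groups, propagating consequences forward as we go.

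\textbf{Items (1) and (2): distance to bad points for good indices.} For item (1), fix $i\in I_\good$. Since $i\notin I_\bad^3$, we have $i\notin J_\bad^3$, so no $j\in I_\bad^1\cup I_\bad^2$ satisfies $d_{ij}\le M\max(r_i,\lambda)$. For $j\in I_\bad^3$, pick $j'\in[j]^{\Pi_\mult}\cap J_\bad^3$; since $i$ is good, $[i]^{\Pi_\mult}\cap J_\bad^3=\emptyset$, hence $j'\notin[i]^{\Pi_\mult}$, so $j$ and $i$ lie in distinct multipoles, giving $d_{ij}>M\min(\max(r_i,\lambda),\max(r_j,\lambda))$. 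Combining these with the cap $d_{i,\bad}\le 4\ve_0\Cut$ and the fact that $\max(r_i,\lambda)\le\ve_0\Cut$ for good $i$ (Remark \ref{remarkgoodbad}) yields $d_{i,\bad}\ge\max(r_i,\lambda)$. For item (2), if $i,j\in I_\good$ lie in the same multipole $S$, then $|S|\le p(\beta)$ bounds the diameter of $S$: iterating the multipole relation $d_{kk'}\le M\min(\max(r_k,\lambda),\max(r_{k'},\lambda))$ along a path in $S$ of length at most $p(\beta)-1$ gives $\mathrm{dist}(\{x_i,y_{\sigma_N(i)}\},\{x_j,y_{\sigma_N(j)}\})\le C_{M,p(\beta)}\max(r_i,r_j,\lambda)$. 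The triangle inequality then yields $d_{i,\bad}\le d_{j,\bad}+C_{M,p(\beta)}\max(r_i,r_j,\lambda)\le C'd_{j,\bad}$, using item (1) to absorb the second term.

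\textbf{Items (3) and (4): nontrivial good--bad interactions.} These are essentially a Newton's-theorem consequence of the definition \eqref{def:vijp}: if $v_{i,j,+}^Z\neq 0$, one of the four pairs of enlarged balls must intersect, so $d_{i,j,+}\le\tau_i^{+,Z}+\tau_j^{+,Z}+\tau_i^{-,Z}\le 2\max(\tau_i^{\pm,Z},\tau_j^{\pm,Z})$. For the good index $i$, $\tau_i^{\pm,Z}=\max(\tfrac14 d_{i,\bad},\lambda)$, while for the bad $j$, $\tau_j^{\pm,Z}\le\ve_0\Cut$. Using item (1) and the definition of $d_{i,\bad}$, any ball centered at $\{x_i,y_{\sigma_N(i)}\}$ of radius $\tfrac14 d_{i,\bad}$ cannot reach $B(x_j,\tau_j^{+,Z})$ unless $\tau_j^{+,Z}\ge\tfrac14 d_{i,\bad}\ge\tfrac14\max(r_i,\lambda)$; since $j$ is bad, $\tau_j^{+,Z}\le\tfrac14\max(r_j,\lambda)$ is implied (for bad of type 1 by Definition \ref{def:type1}, for other types by the $q$-clustering stopping rule), forcing $\max(r_i,\lambda)\le\max(r_j,\lambda)$ and $d_{i,j,+}\le 2\tau_j^{+,Z}$. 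Item (4) is symmetric.

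\textbf{Items (5) and (6): geometric growth along chains.} For item (5), along each edge $i_li_{l+1}$ of the chain, $v_{i_li_{l+1}}^Z\neq 0$ implies $B(z_{i_l},\tau_{i_l}^{Z})\cap B(z_{i_{l+1}},\tau_{i_{l+1}}^Z)\neq\emptyset$, hence $d_{i_li_{l+1}}\le 2\max(\tau_{i_l}^Z,\tau_{i_{l+1}}^Z)\le\tfrac12\max(d_{i_l,\bad},d_{i_{l+1},\bad})$. Combined with the triangle inequality $d_{i_{l+1},\bad}\le d_{i_l,\bad}+d_{i_li_{l+1}}+\max(r_{i_l},r_{i_{l+1}})$ and item (1), we obtain $d_{i_{l+1},\bad}\le 4 d_{i_l,\bad}$; iterating gives (5). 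For item (6), each ``jump'' from $i_l'$ to $i_{l+1}$ inside a common multipole is controlled by item (2), and each inter-multipole step by item (5); compounding the two multiplicative constants yields $C^K$.

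\textbf{Item (7): counting bad balls of fixed scale in a fixed region.} This is where the design of the $q(\beta)$-clustering algorithm will do the work. By construction, at the removal time $\tau$ of any $x_i\in I_\bad$ with $\tau_i^{+,Z}>\lambda$, the ball $B(x_i,\tau_i^{+,Z})$ meets \emph{exactly} $q(\beta)$ other balls of radius $\tau_i^{+,Z}$ among still-active same-sign centers, and these $q(\beta)+1$ centers are then removed together. This means the radii $\tau_i^{+,Z}$ behave like a locally finite scale: the balls $B(x_i,\tau_i^{+,Z}/2)$ at fixed time do not overlap more than $q(\beta)$-fold, so a standard volume packing shows that at most $C(q(\beta),A)$ centers with $\tau_i^{+,Z}\ge t$ can sit in $B(z,At)$. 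The hardest technical step is to make this precise under the ``overlapping removal'' process: one argues by induction on the steps of the algorithm that any collection of $K$ centers in $B(z,At)$ with radii $\ge t$ spans a cluster of size $\le q(\beta)K$ removed at comparable times, giving the bound.

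I expect item (1), or rather the careful bookkeeping showing that indices in $I_\bad^3$ (which are bad only by \emph{propagation} through multipole closure) still lie at distance at least $\max(r_i,\lambda)$ from any good $i$, to be the main obstacle, as it requires combining the multipole separation condition with the fact that good and bad points can never share a multipole. The remaining items follow from iterating items (1)--(4) with careful constant tracking.
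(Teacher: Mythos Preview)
Your sketch has a genuine gap in item (1), precisely at the point you flagged as the main obstacle. For $j\in I_\bad^1\cup I_\bad^2$ your argument is fine: $i\notin J_\bad^3$ gives $d_{ij}>M\max(r_i,\lambda)$. But for $j\in I_\bad^3$, knowing that $i$ and $j$ lie in distinct multipoles only gives $d_{ij}>M\min(\max(r_i,\lambda),\max(r_j,\lambda))$, and when $\max(r_j,\lambda)\ll\max(r_i,\lambda)$ this says nothing useful---a tiny type-3 bad dipole could a priori sit well inside the $\max(r_i,\lambda)$-neighborhood of $\{x_i,y_i\}$ without violating multipole separation. The ``cap $d_{i,\bad}\le 4\ve_0\Cut$'' and the bound on $r_i$ do not rescue this. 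The paper closes the gap by a chaining argument: assuming $d_{ij}\le\max(r_i,\lambda)$ and $\max(r_j,\lambda)\le\tfrac13\max(r_i,\lambda)$, one shows $j\notin J_\bad^3$ (else tracing to the witnessing $k\in I_\bad^1\cup I_\bad^2$ would force $i\in J_\bad^3$), so there exists $l\in J_\bad^3$ in $j$'s multipole; then one bounds $r_l\le r_i/2$ (else $i\leftrightarrow l$), and finally the $k\in I_\bad^1\cup I_\bad^2$ witnessing $l\in J_\bad^3$ satisfies $d_{ik}\le M\max(r_i,\lambda)$ via $d_{ij}+d_{jl}+d_{lk}$, contradicting $i\notin J_\bad^3$. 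This two-hop trace through the type-3 definition is the missing ingredient.

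Your sketch for item (7) is also off target. There is no need for volume packing or induction on algorithm steps. The paper's argument is a one-line contradiction after reducing to $A=\tfrac12$: if more than $q(\beta)+1$ centers $x_i$ with $\tau_i^{+,Z}\ge t$ and $\tau_i^{+,Z}>\lambda$ lie in $B(z,\tfrac{t}{2})$, then at the time $\tau=\min_i\tau_i^{+,Z}\ge t$, all their balls $B(x_i,\tau)$ pairwise intersect (diameter $\le t\le\tau$), so some ball meets $q(\beta)+1$ others---but the clustering algorithm would have removed that cluster strictly before reaching radius $\tau$, contradicting $\tau=\min_i\tau_i^{+,Z}$. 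The remaining items (2)--(6) are essentially as in the paper, though in (5) your triangle inequality carries a spurious $\max(r_{i_l},r_{i_{l+1}})$ term; the paper uses simply $d_{i_{l+1},\bad}\le d_{i_l,\bad}+d_{i_li_{l+1}}$ and then bounds $d_{i_li_{l+1}}\le\tfrac{8}{3}\max(\tfrac14 d_{i_l,\bad},\lambda)$ via the radius definitions.
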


\medskip

\begin{remark}\label{remcrucialetau}
	In particular, \eqref{eq:geo1}  and \eqref{deftauz} imply that for every $i\in I_\good$,
	\begin{equation*}
		\tau_i^{+,Z}=\tau_i^{-,Z}\geq \frac{r_i}{4}, 
	\end{equation*}
	and thus  $r_i\le 2( \tau_i^{+,Z}+ \tau_i^{-,Z})$.  
	
	If   $i \in I_\bad$, then by construction, we always have $\tau_i^{\pm,Z} \le \max( \frac14 r_i, \lambda)$. Thus, if $i\in I_\bad$, 
	either $\frac{1}{4}r_i<\lambda$, in which case, by construction, $\tau_i^{+,Z}=\tau_i^{-,Z}=\lambda$
	and $r_i\le 2( \tau_i^{+,Z}+ \tau_i^{-,Z})$, or $\frac{1}{4}r_i \ge \lambda$, in which case 
	$2(\tau_i^{+,Z} +\tau_i^{-,Z}) \le  4\max( \frac14 r_i, \lambda) \le  r_i$.
	
\end{remark}
\medskip

\begin{proof} 
	We begin by proving \eqref{eq:geo1}. Recall $d_{ij}=\dist(\{x_i,y_i\},\{x_j,y_j\})$. Let $i\in I_\good$ and $j\in I_\bad$ be such that
	\begin{equation}\label{eq:geo0j}
		d_{ij}\leq \max(r_i,\lambda) \quad \text{and}\quad \max(r_j,\lambda)\leq \frac{1}{3}\max(r_i,\lambda).
	\end{equation}
	
	$\bullet$ Since $i\notin J_\bad^3$,  by Definition \ref{def:type3}, $j\notin I_\bad^1\cup I_\bad^2$, hence $j\in I_\bad^3$. Now, suppose by contradiction that $j\in J_\bad^3$. Then, there exists $k\in I_\bad^1\cup I_\bad^2$ such that $d_{jk}\leq M\max(r_j,\lambda)$. Hence, using 
	\begin{equation}\label{eq:distAC}
		\dist(A,C)\leq \dist(A,B)+\diam(B)+\dist(B,C),
	\end{equation}
	we get
	\begin{equation*}
		d_{ik}\leq d_{ij}+d_{jk}+r_j\leq \max(r_i,\lambda) +M\max(r_j,\lambda)+r_j\leq (\tfrac43+\tfrac{M}{3})\max(r_i,\lambda)<M\max(r_i,\lambda).
	\end{equation*}
	Hence, $i\in J_\bad^3$, which is a contradiction. Therefore $j\in I_\bad^3\setminus J_\bad^3$. It follows that there exists $l\in J_\bad^3$ such that $j$ and $l$ are in the same multipole. 
	
	$\bullet$ Since $j$ and $l$ are in the same multipole,
	\begin{equation*}
		d_{jl}\leq M\max(\min(r_j,r_l),\lambda)\leq M\max(r_j,\lambda)\leq \frac{M}{3}\max(r_i,\lambda).
	\end{equation*}
	Now, suppose by contradiction that $r_l\geq \frac{r_i}{2}$. 
	We also have $d_{ij}\leq \max(r_i,\lambda)$ and therefore by \eqref{eq:distAC},
	\begin{equation*}
		d_{il}\leq d_{ij}+d_{jl}+{r_j}\leq (\tfrac{4}{3}+\tfrac{M}{3})\max(r_i,\lambda) \leq (\tfrac{4}{3}+\tfrac{M}{3})2\max(r_l,\lambda)\leq M\max(r_l,\lambda).
	\end{equation*}
	The above display shows that 
	\begin{equation*}
		d_{il}\leq M\max(\min(r_i,r_l),\lambda)
	\end{equation*}
	and therefore $i$ and $l$ belong to the same multipole, which is impossible since $i\in I_\good$ and $l\in I_\bad$ (see item (1) in Remark \ref{remarkgoodbad}). Thus, $r_l\leq \frac{r_i}{2}$. 
	
	$\bullet$ Since $l\in J_\bad^3$, there exists $k\in I_\bad^1\cup I_\bad^2$ such that $d_{lk}\leq M\max(r_l,\lambda).$ It follows from \eqref{eq:distAC} that 
	\begin{multline*}
		d_{ik}\leq d_{ij}+d_{jl}+d_{lk}+{r_j+r_l} \leq \max(r_i,\lambda)+M\max(r_j,\lambda)+M\max(r_l,\lambda)+{ \tfrac{r_i}{2}+\tfrac{r_i}{2}} \\ \leq (1+\tfrac{M}{3}+\tfrac{M}{2}+1)\max(r_i,\lambda)\leq M\max(r_i,\lambda). 
	\end{multline*}
	Hence $i\in J_\bad^3$ which is impossible. Thus, we deduce that the set of $j$'s satisfying \eqref{eq:geo0j} is empty.

	Hence, if $d_{ij}\leq \max(r_i,\lambda)$, then $\max(r_j,\lambda)\geq \frac{1}{3}\max(r_i,\lambda)$ and therefore
	\begin{equation*}
		d_{ij}\leq 3\max(\min(r_i,r_j),\lambda),
	\end{equation*}
	implying that $i$ and $j$ are in the same multipole. This proves \eqref{eq:geo1}.

	We turn to the proof of \eqref{eq:geo1'}. Suppose that $i\leftrightarrow j$, i.e.~that $d_{ij}\leq M\max(\min(r_i,r_j),\lambda)$. Then, by \eqref{eq:distAC},
	\begin{equation}
		d_{j,\bad}\leq d_{i,\bad}+r_i+d_{ij}\leq d_{i,\bad}+(M+1)\max(r_i,\lambda).
	\end{equation}
	By \eqref{eq:geo1}, $\max(r_i,\lambda)\leq d_{i,\bad}$, which implies that 
	\begin{equation}\label{eq:compa1}
		d_{j,\bad}\leq (M+2)d_{i,\bad}.
	\end{equation}
	Similarly, 
	\begin{equation}\label{eq:compa2}
		d_{i,\bad}\leq (M+2)d_{j,\bad}.
	\end{equation}
	Now let $i$ and $j\in I_\good$ in the same multipole. Recall that by definition of $I_\good$, one has $|[i]^{\Pi_\mult}|\leq p(\beta)$. Hence, taking a path in the multipole $[i]^{\Pi_\mult}$ (of length smaller than $p(\beta)-1$), we deduce \eqref{eq:geo1'} from \eqref{eq:compa1} and \eqref{eq:compa2}.

	We now prove \eqref{eq:geo2}. Recall from Definition \ref{def:vijZ}, for $i\in I_\good$, we have $\tau_i^{+,Z}=\tau_i^{-,Z}=:\tau_i^Z$. By \eqref{eqvijpp} (or Newton's theorem), $v^Z_{i,j,+}\neq 0$ implies that 
	\begin{equation}\label{dijtitj} d_{i,j}^+\le \tau_i^Z+\tau_j^{+,Z}.\end{equation}
	By definition, we have
	$\tau_i^Z \leq \max(\frac{1}{4}d_{ij},\lambda)\leq \max(\frac{1}{4}d_{i,j}^+,\lambda)$. Therefore, by \eqref{dijtitj}, 
	\begin{equation*}
		d_{i,j}^+\le \max(\tfrac{1}{4}d_{i,j}^+,\lambda)+ \tau_j^{+,Z}.
	\end{equation*}
	Hence, since $\tau_j^{+,Z}\geq \lambda$,
	\begin{equation*}
		\max(d_{i,j}^+,\lambda)\leq 2\tau_j^{+,Z}.
	\end{equation*}
	This proves the first part of the statement. In particular,  since $j\in I_\bad$, by Remark \ref{remcrucialetau}, $\tau_j^{+,Z}\leq \max(\tfrac{r_j}{4},\lambda)$, hence $d_{i,j}^+\leq 2\max(r_j,\lambda)$.  If $\max(r_i,\lambda)\geq \max(r_j,\lambda)$, then this would imply that $i$ and $j$ are in the same multipole, hence the contradiction by item (1) in Remark \ref{remarkgoodbad}. This proves \eqref{eq:geo2}. The proof of \eqref{eq:geo3} is similar.

	Let us prove \eqref{eq:geo4}. Let $k\in \{1,\ldots,K-1\}$. Using \eqref{eq:distAC}, we get 
	\begin{equation}\label{eq:tri}
		d_{i_{k+1},\bad}\leq d_{i_k i_{k+1}}+d_{i_k,\bad}+r_{i_k}.
	\end{equation}

	Since $v^Z_{i_ki_{k+1}}\neq 0$, by Lemma \ref{lemma:error'}, we have 
	\begin{equation}\label{eq:yyyy}
		d_{i_ki_{k+1}}\leq \tau_{i_k}^{Z}+\tau_{i_{k+1}}^{Z}.
	\end{equation}
	Recall that 
	\begin{equation*}
		\tau_{i_{k+1}}^{Z}\leq \max(\tfrac{1}{4}d_{i_{k+1},\bad},\lambda).
	\end{equation*}
	Hence, by \eqref{eq:distAC} and \eqref{eq:tri},
	\begin{equation*}
		\tau_{i_{k+1}}^{Z}\leq \max(\tfrac{1}{4}d_{i_{k},\bad},\lambda)+\frac{1}{4}d_{i_ki_{k+1}}+\frac{1}{4}r_{i_k}.
	\end{equation*}
	Using that 
	\begin{equation*}
		\tau_{i_k}^{Z}\leq \max(\tfrac{1}{4}d_{i_{k},\bad},\lambda),
	\end{equation*}
	we therefore get from \eqref{eq:yyyy} that
	\begin{equation*}
		d_{i_ki_{k+1}}\leq \frac{8}{3}\max(\tfrac{1}{4}d_{i_k,\bad},\lambda)+\frac{1}{3}r_{i_k} \leq 3\max(d_{i_k,\bad},\lambda)+\frac{1}{3}r_{i_k}.
	\end{equation*}
	Thus, inserting \eqref{eq:geo1}, 
	\begin{equation*}
		d_{i_ki_{k+1}}\leq 4\max(d_{i_k,\bad},\lambda).
	\end{equation*}
	Hence, using \eqref{eq:tri}, we get
	\begin{equation}\label{eq:sim}
		d_{i_{k+1},\bad}\leq 6\max(d_{i_k,\bad},\lambda),
	\end{equation}
	which concludes the proof of \eqref{eq:geo4} in view of \eqref{eq:geo1}.

	Let us prove \eqref{eq:geo4'}. By \eqref{eq:geo1'}, there exists $C_0>0$ depending on $M$ and $p(\beta)$ such that for every $i,j$ in the same multipole,
	\begin{equation}\label{eq:same mult}
		d_{i,\bad}\leq C_0d_{j,\bad},
	\end{equation}
	
	Let $k\in \{1,\ldots,K-1\}$. By \eqref{eq:sim}, 
	\begin{equation*}
		d_{i_k',\bad}\leq 6\max(d_{i_k,\bad},\lambda)=6d_{i_k,\bad}.
	\end{equation*}
	Therefore, by \eqref{eq:same mult},
	\begin{equation*}
		d_{i_{k+1},\bad}\leq 6C_0 d_{i_k,\bad}.
	\end{equation*}
	We deduce that 
	\begin{equation*}
		d_{i_K,\bad}\leq (6C_0)^{K-1}d_{i_1,\bad}
	\end{equation*}
	and that 
	\begin{equation*}
		d_{i_K',\bad}\leq 6(6C_0)^{K-1} d_{i_1,\bad},
	\end{equation*}
	which proves the result.

	Finally, we turn to the proof of \eqref{eq:geo5}. By covering the ball $B(z,At)$ with $O(A^2)$ balls of radius $\frac{t}{2}$, it is enough to prove the result for $A=\frac{1}{2}$ (with a constant that may depend on $A$).
	
	Fix $z\in\Lambda$ and $t\ge 0$, and set
	\[
	I\coloneqq \{ i\in I_\bad: x_i\in B(z,\tfrac{t}{2}),\ \tau_i^{+,Z}\ge t, \tau_i^{+,Z}>\lambda \}.
	\]
	Assume for contradiction that $|I|\ge q(\beta)+2$. Let
	\[
	\tau\coloneqq \min_{i\in I}\tau_i^{+,Z},
	\]
	and choose $i_0\in I$ such that $\tau_{i_0}^{+,Z}=\tau$ (possible since $I$ is finite).
	Since $\tau>\lambda$, define
	\[
	s\coloneqq \frac{\tau+\lambda}{2},
	\]
	so that $\lambda<s<\tau$.
	
	For any $j\in I\setminus\{i_0\}$, we have $\tau_j^{+,Z}\ge \tau> s$, hence at scale $s$ the balls $B(x_{i_0},s)$ and $B(x_j,s)$ are both still present in the clustering dynamics.
	Moreover, since $x_{i_0},x_j\in B(z,\tfrac{t}{2})$, we have $|x_{i_0}-x_j|\le t$. On the other hand,
	\[
	2s=\tau+\lambda>\tau\ge t,
	\]
	so $|x_{i_0}-x_j|<2s$ and therefore $B(x_{i_0},s)\cap B(x_j,s)\neq\emptyset$.
	Thus, at scale $s$, the ball $B(x_{i_0},s)$ intersects all the balls $\{B(x_j,s): j\in I\setminus\{i_0\}\}$, in particular at least $q(\beta)+1$ of them.
	
	By the definition of $\tau_{i_0}^{+,Z}$ as the disappearance time of $B(x_{i_0},\cdot)$ in the clustering algorithm, this is impossible: the algorithm would have removed $B(x_{i_0},\cdot)$ at or before scale $s<\tau$, contradicting $\tau_{i_0}^{+,Z}=\tau$.
	Hence $|I|\le q(\beta)+1$, proving \eqref{eq:geo5} for $A=\frac{1}{2}$, and therefore for all $A\ge \frac{1}{2}$.
	
	The proof of \eqref{eq:geo6} is identical, replacing $x_i,\tau_i^{+,Z}$ by $y_i,\tau_i^{-,Z}$.
	
\end{proof}

\begin{lemma}[Control of good-bad interactions]\label{lemma:energy goodbad}
	Let $(\vec{X}_N,\vec{Y}_N)\in (\Lambda^2)^N$ be such that $\sigma_N[\vec{X}_N,\vec{Y}_N]=\Id$. Suppose that the $r_i$ are all distinct. Let $d_{i,\bad}$ be as in \eqref{eq:dibad}. 
	
	Then, there exists a constant $C>0$ depending on $q(\beta)$ such that for every $i\in I_\good$,
	\begin{equation}\label{eq:interaction gb}
		\sum_{j\in I_\bad}|v^Z_{ij}|\leq C\frac{r_i}{d_{i,\bad}}.
	\end{equation}
\end{lemma}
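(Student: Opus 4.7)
The plan is to decompose $v^Z_{ij}=v^Z_{i,j,+}-v^Z_{i,j,-}$ and treat the two halves symmetrically, so I focus on $v^Z_{i,j,+}$. For $i\in I_\good$, Definition \ref{def:vijZ} gives $\tau_i^{+,Z}=\tau_i^{-,Z}$, and for $j\in I_\bad$ the chain of inequalities $d_{i,j}^+\geq d_{ij}\geq d_{i,\bad}\geq \max(r_i,\lambda)$ (using \eqref{eq:geo1}) places us in the regime $d_{i,j}^+\geq r_i$ where the sharp bound \eqref{eqvijpp} of Lemma \ref{lemma:error'} applies. This produces the pointwise estimate
\[
|v^Z_{i,j,+}|\leq C\frac{r_i}{\max(d_{i,j}^+,\lambda)}=C\frac{r_i}{d_{i,j}^+},
\]
the last equality using $d_{i,j}^+\geq\lambda$.

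By \eqref{eq:geo2}, the contribution is nonzero only when $d_{i,j}^+\leq 2\tau_j^{+,Z}$. I will then partition the contributing indices into dyadic shells $d_{i,j}^+\in[2^kd,2^{k+1}d)$, $k\geq 0$, with $d\coloneqq d_{i,\bad}$. On the $k$-th shell, $\tau_j^{+,Z}\geq 2^{k-1}d$ and $x_j$ lies in a ball of radius $O(2^kd)$ around the midpoint of $\{x_i,y_i\}$. If one can show the per-shell count $N_k$ is bounded by a constant $C=C(q(\beta))$, the total collapses to a convergent geometric series
\[
\sum_{j\in I_\bad}|v^Z_{i,j,+}|\leq \sum_{k\geq 0}\frac{C N_k\,r_i}{2^kd}\leq \frac{C'r_i}{d_{i,\bad}},
\]
which is the desired estimate; the analogous argument using \eqref{eq:geo3} and \eqref{eq:geo6} handles $v^Z_{i,j,-}$.

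The main obstacle is thus the per-shell count $N_k\leq C$. When $2^{k-1}d>\lambda$, this follows directly from \eqref{eq:geo5} applied with $z\in\{x_i,y_i\}$, $t=2^{k-1}d$ and $A$ an absolute constant, because the hypothesis $\tau_j^{+,Z}>\lambda$ is then automatic. The delicate situation is $2^{k-1}d\leq\lambda$, which, combined with the lower bound $d\geq\lambda$, can occur only for a bounded number of indices $k$. I plan to handle this short-range regime by exploiting the structure of $I_\bad$: because $i\in I_\good\setminus J_\bad^3$, no index in $I_\bad^1\cup I_\bad^2$ lies within $M\max(r_i,\lambda)$ of $i$, so every nearby bad index must lie in $I_\bad^3$. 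Each such index belongs to a multipole of cardinality at most $p(\beta)$ anchored at some vertex of $J_\bad^3$, and a packing argument based on the multipole rule of Definition \ref{def:multipoles} bounds the number of distinct such multipoles intersecting $B(z_i,O(\lambda))$ by a constant depending only on $M$ and $p(\beta)$. This supplies the missing uniform bound on $N_k$ in the short-range regime and closes the proof.
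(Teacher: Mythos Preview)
Your proposal is correct and follows essentially the paper's approach: decompose into $v^Z_{i,j,\pm}$, apply the pointwise bound \eqref{eqvijpp} (valid since $d_{i,j}^+\geq d_{i,\bad}\geq r_i$ and $\tau_i^{+,Z}=\tau_i^{-,Z}$), restrict to $d_{i,j}^+ \leq 2\tau_j^{+,Z}$ via \eqref{eq:geo2}, and control the sum over dyadic shells using \eqref{eq:geo5} for the per-shell count.

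Your short-range workaround is unnecessary, however. Since $i \in I_\good$ and $j \in I_\bad$ are never in the same multipole (Remark~\ref{remarkgoodbad}), the multipole rule gives $d_{ij} > M\min(\max(r_i,\lambda),\max(r_j,\lambda)) \geq M\lambda$ for every $j \in I_\bad$, hence $d = d_{i,\bad} > M\lambda > 10\lambda$. Thus $2^{k-1}d > 5\lambda > \lambda$ for all $k \geq 0$, and the hypothesis $\tau_j^{+,Z}>\lambda$ of \eqref{eq:geo5} is automatic on every shell. The paper exploits exactly this: it writes the indicator with $M\lambda\leq d_{i,\bad}$ and then integrates $\int_{\frac12 d_{i,\bad}}^{2\ve_0\Cut}\frac{\mc N_i(t)}{t^2}\,dt$, so the lower limit already exceeds $5\lambda$. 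Your packing argument for $I_\bad^3$ multipoles in a ball of radius $O(\lambda)$ is correct as stated, but the case it addresses is vacuous once you use the sharper lower bound $d_{i,\bad}>M\lambda$ in place of $d_{i,\bad}\geq\lambda$.
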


\medskip

\begin{proof}
	First, we write 
	\begin{equation}\label{eq:v+-}
		\sum_{j\in I_\bad}v^Z_{ij}=\sum_{j\in I_\bad}v^Z_{i,j,+}-\sum_{j\in I_\bad}v^Z_{i,j,-},
	\end{equation}
	where $v_{i,j,+}^Z$ and $v_{i,j,-}^Z$ are as in \eqref{def:v+} and \eqref{def:v-}. 
	
	Let $j\in I_\bad$. By Lemma \ref{lemma:geo}, if $v_{i,j,+}^Z\neq 0$, then $d_{i,j}^+\leq 2\tau_{j}^{Z,+}$ and $\max(r_i,\lambda)\leq  \max(r_j,\lambda)$. Moreover, since dipoles in $I_\good$ and $I_\bad$ are not in the same multipole,  by definition \eqref{eq:dibad} we have 
	\begin{equation*}
		d_{i,\bad}\geq \min(M\max(\min(r_i,r_j),\lambda),4\ve_0 \Cut)=\min(M\max(r_i,\lambda),4\ve_0\Cut).
	\end{equation*}
	Besides, by \eqref{eq:geo2}, we have $d_{i,j}^+\leq 2\tau_j^{+,Z}$. Moreover, $d_{i,j}^+\geq d_{i,\bad}\ge \max(r_i,\lambda)$ by \eqref{eq:geo1}, and since $i\in I_\good$, we have $\tau_i^{+,Z}=\tau_i^{-,Z}$.  We may thus apply Lemma \ref{lemma:error'} and, in particular, \eqref{eqvijpp},  to find  
	\begin{equation}\label{eq:B+ini}
		\sum_{j\in I_\bad}|v_{i,j,+}^Z|\leq C\sum_{j\in I_\bad}\frac{r_i}{d_{i,j}^+}\indic_{M\lambda\leq d_{i,\bad}\leq d_{i,j}^+\leq 2\tau_j^{+,Z}}. 	\end{equation}
	As in the proof of Lemma \ref{lemma:control sum intera}, let us define for all $t\geq 0$,
	\begin{equation}\label{def:Nit}
		\mc{E}_i(t)\coloneqq \Bigr\{ j\in I_\bad:d_{i,\bad}\leq d_{i,j}^+\leq 2\tau_j^{+,Z}, d_{ij}^+\in (t,2t)\Bigr\}\quad \text{and}\quad \mc{N}_i(t)\coloneqq | \mc{E}_i(t)|.
	\end{equation}
	By \eqref{eq:B+ini},
	\begin{equation}\label{eq522}
		\sum_{j\in I_\bad}|v_{i,j,+}^Z|\leq Cr_i\sum_{\substack{ j\in I_\bad:\\ d_{i,\bad}\leq d^+_{ij}\leq 2\tau_j^{+,Z} }} \int_{\tfrac{1}{2}d_{i,\bad} }^{2\ve_0\Cut}\frac{1}{t^2}\indic_{d_{i,j}^+\in (t,2t)}\dd t\leq Cr_i \int_{\tfrac{1}{2}d_{i,\bad}}^{2\ve_0\Cut}\frac{1}{t^2}\mc{N}_i(t)\dd t.
	\end{equation}
	Notice that
	\begin{equation*}
		\mc{E}_i(t)\subset \Bigr\{j\in I_\bad: x_j\in B(x_i,2t), \tau_{j}^{+,Z}\geq  \max(\tfrac{t}{2},\lambda)\Bigr\}\cup \Bigr\{j\in I_\bad: x_j\in B(y_i,2t), \tau_{j}^{+,Z}\geq  \max(\tfrac{t}{2},\lambda)\Bigr\}.
	\end{equation*}Therefore, by \eqref{eq:geo5}, there exists a constant $C>0$ depending on $q(\beta)$ such that $\mc{N}_i(t)\leq C$. Hence, inserting into \eqref{eq522}, there exists a constant $C>0$ depending on $q(\beta)$ such that
	\begin{equation}\label{eq:ibb0}
		\sum_{j\in I_\bad}|v_{i,j,+}^Z|\leq C\frac{r_i}{d_{i,\bad}}.
	\end{equation}
	A similar bound can be proven for the $v_{i,j,-}^Z$, hence \eqref{eq:interaction gb}.
\end{proof}

\begin{lemma}[Control of bad-bad interactions]\label{lemma:badbad}
	Let $(\vec{X}_N,\vec{Y}_N)\in (\Lambda^2)^N$ be such that $\sigma_N[\vec{X}_N,\vec{Y}_N]=\Id$. Suppose that the $r_i$ are all distinct. Then, there exists a constant $C>0$ depending on $q(\beta)$ such that
	\begin{equation}\label{eq:interaction bb}
		\sum_{i,j\in I_\bad:i<j}v_{ij}^Z\geq -C|I_\bad|.
	\end{equation}   
\end{lemma}

\begin{proof}
	We begin by writing 
	\begin{equation*}
		\sum_{i,j\in I_\bad:i<j}v_{ij}^Z= \sum_{i\neq j\in I_\bad}v_{ij}^Z\indic_{r_j\geq r_i}.
	\end{equation*}
	Fix $i\in I_\bad$. First, decompose the sum of the interactions as follows:
	\begin{equation}\label{eq:deco}
		\sum_{j\in I_\bad:j\neq i}v_{ij}^Z\indic_{r_j\geq r_i}= \sum_{j\in I_\bad:j\neq i}v_{i,j,+}^Z\indic_{r_j\geq r_i}+\sum_{j\in I_\bad:j\neq i}(-v_{i,j,-}^Z)\indic_{r_j\geq r_i}.
	\end{equation}  
	We bound the first term, the second one is analogous. First, decompose the sum of the $v_{i,j,+}^Z$ into
	\begin{multline}\label{eq:vZZZ}
		\sum_{j\in I_\bad:j\neq i}v_{i,j,+}^Z\indic_{r_j\geq r_i}=\sum_{j\in I_\bad:j\neq i}v_{i,j,+}^Z\indic_{r_j\geq r_i}\indic_{d_{i,j}^+\geq r_i}\indic_{\tau_j^{+,Z}=\lambda} +   \sum_{j\in I_\bad:i\neq j}v_{i,j,+}^Z\indic_{d_{i,j}^+<r_i} \indic_{r_j\geq r_i} \\+ \sum_{j\in I_\bad:j\neq i}v_{i,j,+}^Z\indic_{d_{i,j}^+\geq \max(r_i,\lambda)}\indic_{\tau_j^{+,Z}>\lambda}\indic_{r_j\geq r_i} + \sum_{j\in I_\bad:j\neq i}v_{i,j,+}^Z\indic_{ r_i\le d_{i,j}^+<\lambda}\indic_{\tau_j^{+,Z}>\lambda}\indic_{r_j\geq r_i} .
	\end{multline}

	\paragraph{\bf{Step 1: the case $\tau_j^{+,Z}=\lambda$ and $d_{i,j}^+\geq r_i$}}
	
	By Newton's theorem, if \begin{equation}\label{eq:nonzero}\Bigr(\g_\lambda-\g * \delta_0^{(\tau_i^{+,Z})}*\delta_0^{(\lambda)}\Bigr) (x_i-x_j)\neq 0,
	\end{equation}then
	\begin{equation*}
		|x_i-x_j|\leq \tau_i^{+,Z}+\lambda.
	\end{equation*}
	Moreover, since $\g_\lambda=\g*\delta_0^{(\lambda)}*\delta_0^{(\lambda)}$, then if $\Bigr(\g_\lambda-\g * \delta_0^{(\tau_i^{+,Z})}*\delta_0^{(\lambda)}\Bigr) (x_i-x_j)\neq 0$ then $\tau_i^{+,Z}>\lambda$, which implies $\tau_i^{+,Z}\leq \frac{r_i}{4}$ (since by Remark \ref{remcrucialetau}, $\tau_i^{+,Z}\leq \max(\frac{1}{4}r_i,\lambda)$), hence $r_i\geq 4\lambda$. Thus, if \eqref{eq:nonzero} holds, and if in addition $\tau_j^{+,Z}=\lambda$ and $d_{i,j}^+\geq r_i$, then $r_i\geq 4\lambda$ and 
	\begin{equation*}
		r_i\le d_{i,j}^+\leq |x_i-x_j|\leq \frac{r_i}{4}+\lambda,
	\end{equation*}
	which implies $r_i\leq \frac{4}{3}\lambda$ and $r_i\geq 4\lambda$. Clearly, this is impossible. Thus,
	\begin{equation*}
		\sum_{j\in I_\bad:j\neq i}\Bigr(\g_\lambda-\g * \delta_0^{(\tau_i^{+,Z})}*\delta_0^{(\lambda)}\Bigr) (x_i-x_j)\indic_{d_{i,j}^+\geq r_i}\indic_{\tau_j^{+,Z}=\lambda}=0.
	\end{equation*}
	Arguing similarly for the interaction of $y_i$ with the $x_j$'s, we obtain 
	\begin{equation}\label{eq:vr1}
		\sum_{j\in I_\bad:j\neq i}v_{i,j,+}^Z \indic_{d_{i,j}^+\geq r_i}\indic_{\tau_j^{+,Z}=\lambda} =0.
	\end{equation}

	\paragraph{\bf{Step 2: control for $d_{i,j}^+<r_i$}}
	First, by Newton's theorem, if $|x_j-y_i|>\tau_i^{-,Z}+\tau_j^{+,Z}$, then
	\begin{equation*}
		\Bigr(\g_\lambda-\g * \delta_0^{(\tau_i^{-,Z})}*\delta_0^{(\tau_j^{+,Z})}\Bigr) (x_j-y_i)=0.
	\end{equation*}
	Therefore, 
	\begin{equation*}
		v_{i,j,+}^Z\indic_{|x_j-y_i|>\tau_i^{-,Z}+\tau_j^{+,Z}}= \Bigr(\g_\lambda-\g * \delta_0^{(\tau_i^{+,Z})}*\delta_0^{(\tau_j^{+,Z})}\Bigr) (x_i-x_j)\indic_{|x_j-y_i|>\tau_i^{-,Z}+\tau_j^{+,Z}}.
	\end{equation*}
	Since $\tau_i^{+,Z}\geq \lambda$ and $\tau_j^{+,Z}\geq \lambda$, we deduce from the monotonicity property of Lemma \ref{lemma:mono} that
	\begin{equation}\label{eq:nonneg}
		\Bigr(\g_\lambda-\g * \delta_0^{(\tau_i^{+,Z})}*\delta_0^{(\tau_j^{+,Z})}\Bigr) (x_i-x_j)\geq 0.  
	\end{equation}
	Hence, combining the above two displays yields
	\begin{equation}\label{eq:v+case1}
		v_{i,j,+}^Z\indic_{|x_j-y_i|>\tau_i^{-,Z}+\tau_j^{+,Z}}\geq 0.  
	\end{equation}

	Now suppose that $|x_j-y_i|\leq \tau_i^{-,Z}+\tau_j^{+,Z}$. Recall that by Lemma \ref{lemma:stable}, $\{\sigma_N=\Id\}=\cap_{ij}\mc{A}_{ij}$. On the event $\mc{A}_{ij}$, by compatibility of the matching (see Lemma \ref{lemma:Aij}), we have 
	$|x_j-y_i|\ge r_i$.
	It follows that on the event $\mc{A}_{ij}$,
	\begin{equation}\label{eq:key}
		r_i\leq |x_j-y_i|\leq \tau_i^{-,Z}+\tau_j^{+,Z}\leq \max(\tfrac{1}{4}r_i,\lambda)+\tau_j^{+,Z}.
	\end{equation}
	Thus, on the event $\mc{A}_{ij}$, if $v_{i,j,+}^Z\leq 0$ and $d_{i,j}^+\leq r_i$, then 
	\begin{equation}\label{eq:v+case2}
		\tau_j^{+,Z}\geq \max(\tfrac{3r_i}{4}\indic_{r_i\geq 4\lambda},\lambda)\geq \max(\tfrac{1}{4}r_i,\lambda),
	\end{equation}
	which gives by \eqref{eq:key}
	\begin{equation}\label{eq:bd2tau}
		d_{i,j}^+\leq r_i\leq  2\tau_j^{+,Z}.
	\end{equation}
	$\bullet$ Suppose that $\tau_j^{+,Z}=\lambda$. Then, since the interaction between $x_i$ and $x_j$ is always non-negative by \eqref{eq:nonneg}, if $v_{i,j,+}^Z<0$, then 
	\begin{equation}\label{eq:dis}
		\Bigr(\g_\lambda-\g * \delta_0^{(\tau_i^{-,Z})}*\delta_0^{(\lambda)}\Bigr) (y_i-x_j)\neq 0,
	\end{equation}
	which implies that $\tau_i^{-,Z}\neq \lambda$, hence $r_i>4\lambda$. But \eqref{eq:dis} also implies that $r_i\leq |y_i-x_j|\leq \frac{r_i}{4}+\lambda$, which gives $r_i\leq \frac{4}{3}\lambda$. These two conditions are, therefore, incompatible. Thus,
	\begin{equation}\label{eq:vr21}
		\sum_{j\in I_\bad:j\neq i}v_{i,j,+}^Z\indic_{d_{i,j}^+\leq r_i}\indic_{\tau_j^{+,Z}=\lambda}\geq 0.
	\end{equation}

	It remains to treat the case where $\tau_j^{+,Z}>\lambda$, assuming still that $d_{i,j}^+<r_i\le r_j$. Our goal is to show that there exists a constant $C>0$ such that 
	\begin{equation}\label{eq:minoC}
		v_{i,j,+}^Z\geq -C.
	\end{equation}
	{By the matching, we must have  $|x_j-y_i|\ge r_i$ 
		hence 
		$|x_j-x_i|\leq   r_i+ |x_j-y_i|\le  2|x_j-y_i|$.\\
		$\bullet$ Suppose that $\tau_j^{+,Z}\geq \frac{1}{8}r_i$. By definition
		\begin{equation*}
			v_{i,j,+}^Z= \g_\lambda(x_i-x_j) - \g_\lambda (y_i-x_j)+ f*\delta_0^{(\tau_i^{-,Z})}(y_i-x_j)
			-f*\delta_0^{(\tau_i^{+,Z})} (x_i-x_j)
		\end{equation*}
		where $f:= \g*\delta_0^{(\tau_j^{+,Z})}$.
		Using $|x_i-x_j|\le 2 |y_i-x_j|$ and the properties of $\g_\lambda$, we easily find that 
		$ \g_\lambda(x_i-x_j) - \g_\lambda (y_i-x_j)\ge - C$.
		Then we may write 
		\begin{equation*}
			f*\delta_0^{(\tau_i^{-,Z})}(y_i-x_j)
			-f*\delta_0^{(\tau_i^{+,Z})} (x_i-x_j)
			= \int f(y_i-x_j-w) \delta_{0}^{(\tau_i^{-,Z})}(w)-f(x_i-x_j-w)  \delta_{0}^{(\tau_i^{+,Z})}(w).\end{equation*}
		Using that $f$ is radial decreasing and $ |y_i-x_j|\le |x_i-x_j|+r_i$ we find that 
		\begin{equation*}	
			\int f(y_i-x_j-w) \delta_{0}^{(\tau_i^{-,Z})}(w)-f(x_i-x_j-w)  \delta_{0}^{(\tau_i^{+,Z})}(w)
			\ge - \|\nab f\|_{L^\infty} (r_i+ 2\tau_i^{-,Z}+2\tau_i^{+,Z}).
		\end{equation*}Since by definition $ \|\nab f\|_{L^\infty} \le C (\tau_j^{+,Z})^{-1}$, 
		$\tau_j^{+,Z}\geq \max(\frac{1}{8}r_i,\lambda)$ and  $\tau_i^{\pm,Z}\leq \max(\tfrac{1}{4}r_i,\lambda)$,  we conclude that the right-hand side is also bounded below, and consequently \eqref{eq:minoC} holds.\\
		$\bullet$ Suppose that $\tau_j^{+,Z}\leq \frac{1}{8}r_i$ (which implies $r_i>8\lambda$ since $\tau_j^{+,Z}>\lambda$). 
		Since $\tau_i^{+,Z}\geq 0$, we always have, by Lemma \ref{lemma:mono}, that 
		\begin{equation*}
			\Bigr(\g_\lambda-\g*\delta_0^{(\tau_i^{+,Z})}*\delta_0^{(\tau_j^{+,Z})}\Bigr)(x_j-x_i)\geq 0.
		\end{equation*}
		Moreover, we have $|x_j-y_i|\ge r_i>\tau_j^{+,Z}+\tau_i^{-,Z}$
		since $\tau_i^{-,Z}\le \max(\tfrac14 r_i,\lambda)$  for $i \in I_\bad$,
		hence by Newton's theorem
		\begin{equation*}
			\Bigr(\g_\lambda-\g*\delta_0^{(\tau_i^{-,Z})}*\delta_0^{(\tau_j^{+,Z})}\Bigr)(x_j-y_i)=0.
		\end{equation*}
		Thus, \eqref{eq:minoC} holds.}

	By \eqref{eq:v+case1}, \eqref{eq:v+case2}, \eqref{eq:bd2tau} and \eqref{eq:minoC},
	\begin{equation*}
		\sum_{j\in I_\bad:j\neq i}v_{i,j,+}^Z\indic_{d_{i,j}^+< r_i}\indic_{\tau_j^{+,Z}>\lambda}\indic_{r_j\geq r_i}\geq -C\sum_{j\in I_\bad:j\neq i}\indic_{\tau_j^{+,Z}\geq \tfrac{r_i}{4}}\indic_{\tau_j^{+,Z}>\lambda}\indic_{d_{i,j}^+\leq \min(2\tau_j^{+,Z},r_i)}\indic_{r_j\geq r_i}.  
	\end{equation*}
	By Lemma \ref{lemma:geo} (see \eqref{eq:geo5}), there exists a constant $C>0$ depending on $q(\beta)$ such that 
	\begin{equation*}
		\sum_{j\in I_\bad:j\neq i}\indic_{\tau_j^{+,Z}\geq \tfrac{r_i}{4}}\indic_{\tau_j^{+,Z}>\lambda}\indic_{d_{i,j}^+\leq \min(2\tau_j^{+,Z},r_i)}\indic_{r_j\geq r_i}\leq C.
	\end{equation*}
	Thus, there exists a constant $C>0$ depending on $q(\beta)$ such that 
	\begin{equation}\label{eq:vr22}
		\sum_{j\in I_\bad:i\neq j}v_{i,j,+}^Z\indic_{d_{i,j}^+< r_i}\indic_{\tau_j^{+,Z}>\lambda}\indic_{r_j\geq r_i} \geq -C. 
	\end{equation}
	Combining with  \eqref{eq:vr21}, we conclude that there exists a constant $C>0$ such that
	\begin{equation}\label{eq:vr2}
		\sum_{j\in I_\bad:i\neq j}v_{i,j,+}^Z\indic_{d_{i,j}^+< r_i}\indic_{r_j\geq r_i} \geq -C. 
	\end{equation}

	\paragraph{\bf{Step 3: control for $d_{i,j}^+\geq \max(r_i,\lambda)$ and $\tau_j^{+,Z}>\lambda$}}
	
	By Newton's theorem, if $v_{i,j,+}^Z\neq 0$, then $d_{i,j}^+\leq \tau_j^{+,Z}+\max(\tau_i^{+,Z},\tau_i^{-,Z})$. Moreover, by the construction of the radii, 
	\begin{equation*}
		\max(\tau_i^{+,Z},\tau_i^{-,Z})\leq \max(\tfrac{1}{4}r_i,\lambda).
	\end{equation*}
	Hence, if $v_{i,j,+}^Z\neq 0$ and $d_{i,j}^+\geq r_i$, then  $r_i\le d_{i,j}^+\le \tau_j^{+,Z}+\max(\frac14r_i, \lambda)$ and 
	\begin{equation*}
		\tau_j^{+,Z}\geq \frac{r_i}{2}\indic_{r_i\geq 4\lambda}+\lambda\indic_{r_i\leq 4\lambda}.
	\end{equation*}
	In particular, $\tau_j^{+,Z}\geq  \max(\frac{r_i}{4},\lambda)\ge \max(\tau_i^{+,Z}, \tau_i^{-,Z})$ and 
	\begin{equation}\label{eq:HHH}
		d_{i,j}^+\leq 2\tau_j^{+,Z}.
	\end{equation}
	Moreover, by \eqref{diffvijcassuppl2} in Lemma \ref{lemma:error'}, we have
	\begin{equation*}
		|v_{i,j,+}^Z| \le C \left( \frac{r_i}{\max(d_{i,j}^+,\lambda)} + \frac{r_i}{\tau_j^{+,Z}} + \frac{\max(\tau_i^{+,Z}, \tau_i^{-,Z})^2}{(\tau_j^{+,Z})^2} \right) \indic_{d_{i,j}^+\leq   \max(\tau_i^{+,Z},\tau_i^{-,Z}) + \tau_j^{+,Z}}.
	\end{equation*}
	Using that $\tau_j^{+,Z}\ge\hal d_{i,j}^+$ and that $\tau_i^{\pm,Z}\le \max(\frac{r_i}{4},\lambda)$, we deduce that 
	\begin{equation*}
		|v_{i,j,+}^Z|\leq C\frac{\max(r_i,\lambda)}{d_{i,j}^+}.
	\end{equation*}
	Combining this with \eqref{eq:HHH}, it follows that 
	\begin{equation*}
		\sum_{j\in I_\bad:j\neq i}|v_{i,j,+}^Z|\indic_{d_{i,j}^+\geq \max(r_i,\lambda),\tau_j^{+,Z}>\lambda }\indic_{r_j\geq r_i }\leq C\sum_{j\in I_\bad:j\neq i}\frac{\max(r_i,\lambda)}{d_{i,j}^+}\indic_{\max(r_i,\lambda)\leq d_{i,j}^+\leq 2\tau_j^{+,Z},\tau_j^{+,Z}>\lambda }.
	\end{equation*}
	Define for all $t\geq 0$,
	\begin{equation*}
		\mc{E}_i(t)\coloneqq \Bigr\{ j\in I_\bad:\max(r_i,\lambda) \leq d_{i,j}^+\leq 2\tau_j^{+,Z},\tau_j^{+,Z}>\lambda, d_{ij}^+\in (t,2t)\Bigr\}\quad \text{and}\quad \mc{N}_i(t)\coloneqq | \mc{E}_i(t)|.
	\end{equation*}
	We get
	\begin{equation*}
		\sum_{j\in I_\bad:j\neq i}\frac{\max(r_i,\lambda)}{d_{ij}^+}\indic_{\max(r_i,\lambda)\leq d_{ij}^+\leq 2\tau_j^{+,Z},\tau_j^{+,Z}>\lambda }\leq C\max(r_i,\lambda)\int_{\tfrac{1}{2}\max(r_i,\lambda)}^{2\ve_0\Cut}\frac{1}{t^2}\mc{N}_i(t)\dd t.
	\end{equation*}
	By Lemma \ref{lemma:geo}, estimate \eqref{eq:geo5}, there exists a constant $C>0$ depending on $q(\beta)$ such that $\mc{N}_i(t)\leq C$. Hence, there exists $C>0$ depending on $q(\beta)$ such that
	\begin{equation}\label{eq:vr3}
		\sum_{j\in I_\bad:j\neq i}|v_{i,j,+}^Z|\indic_{d_{i,j}^+\geq \max( r_i,\lambda)}\indic_{\tau_j^{+,Z}>\lambda } \indic_{r_j\geq r_i }\leq C.
	\end{equation}
	
	\paragraph{\bf{Step 4: control for $r_i\le d_{i,j}^+<\lambda$ and $\tau_j^{+,Z}>\lambda$.}}
	In that case, since for $i \in I_\bad$ we have  $\lambda\le \tau_i^{\pm, Z}\le \max(\frac14r_i,\lambda)$ (see Remark \ref{remcrucialetau}), we have $\tau_i^{\pm, Z}=\lambda$. Inserting this, $r_i\le d_{i,j}^+<\lambda$ and $\tau_j^{+,Z}>\lambda$ into \eqref{diffvijcassuppl2}, we find that  $|v_{i,j,+}^Z|\le C$ and thus
	$$\sum_{j\in I_\bad:j\neq i}|v_{i,j,+}^Z|\indic_{ r_i\le d_{i,j}^+<\lambda}\indic_{\tau_j^{+,Z}>\lambda } \indic_{r_j\geq r_i }\leq C \sum_{j\in I_\bad:j\neq i}\indic_{ r_i\le d_{i,j}^+<\lambda}\indic_{\tau_j^{+,Z}>\lambda } \indic_{r_j\geq r_i }.$$
	Then, applying the estimate \eqref{eq:geo5} of Lemma \ref{lemma:geo} gives that the sum on the right-hand side is bounded by a constant $C>0$ depending on $q(\beta)$ hence
	\begin{equation}\label{eq:vr4}
		\sum_{j\in I_\bad:j\neq i}|v_{i,j,+}^Z|\indic_{r_i\leq d_{i,j}^+<\lambda}\indic_{\tau_j^{+,Z}>\lambda } \indic_{r_j\geq r_i }\leq C.
	\end{equation}
	
	\paragraph{\bf{Step 5: conclusion}}
	Combining \eqref{eq:vr1}, \eqref{eq:vr2}, \eqref{eq:vr3}  and \eqref{eq:vr4} gives the existence of a constant $C>0$ depending on $q(\beta)$ such that
	\begin{equation*}
		\sum_{j\in I_\bad:j\neq i}v_{i,j,+}^Z\indic_{r_j\geq r_i}\geq -C.  
	\end{equation*}
	Proceeding similarly for the $-v_{i,j,-}^Z$ proves the result by using the decomposition \eqref{eq:deco} (noting that the steps involving an inequality do indeed work in the same way).
	
\end{proof}

Let us finally record here  the main energy lower bound that we will use.
Let $I_1\coloneqq I_\good \cup \{i\in  I_\bad, \frac14r_i<\lambda\}$, and $I_2\coloneqq  I_\bad\cap \{ i\in  I_\bad, \frac14r_i\ge \lambda\}$. Then $[N]=I_1\sqcup I_2$ and by Remark \ref{remcrucialetau}, the sets $I_1$ and $I_2$ satisfy the assumptions of  the energy lower bound of Proposition \ref{prop:mino}. Thus, we know that  there exists $L>0$, depending on $\beta$, such that, if $\sigma_N[\vec{X}_N,\vec{Y}_N]=\Id$, 
\begin{equation*}		\F_\lambda(\vec{X}_N,\vec{Y}_N)\geq -\sum_{i\in I_1}\g_\lambda(x_i-y_i)-\frac{1}{2}\sum_{i\in  I_2}(\g(\tau_i^{+,Z})+\g(\tau_i^{-,Z}))+\sum_{i<j}v_{ij}^Z-\frac{L}{\beta}\sum_{i\in I_1} \Bigr(\frac{|x_i-y_i|}{\tau_i^{+,Z}}\Bigr)^2-\frac{L}{\beta}|I_2|,
\end{equation*}
where the interactions $v_{ij}^Z$ and the radii $\tau_i^{\pm,Z}$ are given by Definition \ref{def:vijZ}. 
By the definition of $I_1$ and $I_2$ and the $\tau_i^{\pm,Z}$'s, we can easily rearrange  and absorb terms to obtain that 
\begin{multline}\label{eq:b1}
	\F_\lambda(\vec{X}_N,\vec{Y}_N)\geq -\sum_{i\in I_\good}\g_\lambda(x_i-y_i)-\frac{1}{2}\sum_{i\in  I_\bad}(\g(\tau_i^{+,Z})+\g(\tau_i^{-,Z}))+\sum_{i<j}v_{ij}^Z\\-\frac{L}{\beta}\sum_{i\in I_\good} \Bigr(\frac{|x_i-y_i|}{\tau_i^{+,Z}}\Bigr)^2-\frac{L}{\beta}|V_\bad|.
\end{multline}

\subsection{Definitions: hierarchical model and  activities with frozen bad points}\label{sub:multi u}

We now introduce the hierarchical multipole model that we will use in this section.

\begin{definition}\label{def:mult upper}
	Let $V_\bad\subset [N]$ and $Z\in (\Lambda^{2})^{|V_{\bad}|}$. Let $X$ be a subpartition of $[N]$ such that $V_X\cap V_\bad=\emptyset$. Let us define
	\begin{equation}\label{def:IgoodZ}
		I_\good^Z\coloneqq \bigcap_{j\in I_\bad^1\cup I_\bad^2}\{i\in [N]:d_{ij}> M\max(r_i,\lambda)\}\cap \bigcap_{j\in I_\bad^3 }\{i\in [N]:d_{ij}> M\max(\min(r_i,r_j),\lambda)\}.
	\end{equation}
	Note that $I_\good\subset I_\good^Z$.	We let $\Psf_X^{+,\ve_0,Z}$ be the probability measure
	\begin{multline}\label{eq:defPXw u}
		\dd \Psf^{+,\ve_0,Z}_X \propto \prod_{S\in X}\Bigr(\indic_{\mc{B}_S}\prod_{ i,j\in S:i<j}e^{-\beta v^Z_{ij}}\indic_{\mc{A}_{ij}}\Bigr) \\ \times \prod_{i\in V_X}\Bigr(e^{\beta \g_\lambda(x_i-y_i)+L(\frac{x_i-y_i}{\tau_i^Z})^2}\indic_{|x_i-y_i|\leq \ve_0\Cut}\indic_{i\in I_\good^Z}  \Bigr) \prod_{i\in V_X}\dd x_i \dd y_i
	\end{multline}
	where $L$ is the constant in \eqref{eq:b1}
	and $\Msf_{\ve_0}^{+,Z}(X)$ be the normalization constant
	\begin{equation}\label{def:Mw u}
		\Msf_{\ve_0}^{+,Z}(X)\coloneqq \prod_{S\in X}\dE_{(\mu_{\beta, \lambda, \ve_0})^{\otimes |S| }}\left[\indic_{\mc{B}_S}\prod_{ i,j\in S:i<j}e^{-\beta v^Z_{ij}}\indic_{\mc{A}_{ij}}\prod_{i\in S}\indic_{i\in I_\good^Z} \right].
	\end{equation}
\end{definition}

\begin{definition}\label{def:iinfty}
	Let $V_\bad\subset [N]$ and $Z\in (\Lambda^2)^{|V_\bad|}$. For every $i\in [N]\setminus V_\bad$, we let
	\begin{equation*}
		f_{i\infty}^{v^Z}\coloneqq \prod_{j\in V_\bad}e^{-\beta v_{ij}^Z}-1.
	\end{equation*}
\end{definition}

\begin{definition}\label{def:activity upper}
	Let $V_\bad\subset [N]$ and set $V_\good\coloneqq [N]\setminus V_\bad$. Let $Z\in (\Lambda^2)^{|V_\bad|}$. 
	
	We define $\Ksf_{\ve_0}^{+,Z}:\mathbf{\Pi}_\sub(V_\good)\to \dR$ to be given for every $X\in \mathbf{\Pi}_\sub(V_\good)$ by
	\begin{multline*}
		\Ksf_{\ve_0}^{+,Z}(X)\coloneqq \sum_{V^\infty\subset V_X} \sum_{n=0}^\infty \frac{1}{n!}\sum_{\substack{X_1,\ldots,X_n\subset X\\ \mathrm{disjoint} }} \sum_{E_1\in \mathsf{E}^{X_1}}\cdots\sum_{E_n\in \mathsf{E}^{X_n}} \sum_{F\in \mathsf{E}^{\Coarse_X(X_1,\ldots,X_n)} }\\ \dE_{\Psf_{X}^{+,\ve_0,Z}}\left[\prod_{ij\in E_1\cup \cdots \cup E_n}f^{v^Z}_{ij}\prod_{ij\in \cup_l\mc{E}^\inter(X_l)}\indic_{\mc{B}_{ij}^c}\prod_{ij\in F}(-\indic_{\mc{B}_{ij}})\prod_{i\in V^\infty }f_{i\infty}^{v^Z}\right],
	\end{multline*}
	where $\mc{E}^\inter(X)$ is as in Definition \ref{def:part notions}, $\mathsf{E}^X$ as in Definition \ref{def:quotient}, $\Coarse_X(X_1,\ldots,X_n)$ as in \eqref{def:merge}, $f_{ij}^{v^Z}$ as in \eqref{def:mayer}, and $f_{i\infty}^{v^Z}$ as in \eqref{def:iinfty}.

	Notice that for $|X|\in \{0,1\}$, we have $\Ksf_{\ve_0}^{+,Z}(X)=0$.
\end{definition}

\subsection{Isolating good points and perturbative expansion around the hierarchical model}\label{sub:isolating}

In this section, we rewrite the partition function of good points, defined in Definition \ref{def:bad points}, as a cluster expansion series. This process largely follows the steps of Section~\ref{sub:pert}.

\begin{lemma}\label{lemma:start upper}
	Let $\beta\in (2,\infty)$ and $p(\beta)$ be as in Definition \ref{def:pbeta}. Let $\mc{N}_1,\ldots,\mc{N}_{p(\beta)}$ be the number of multipoles of size $1,\ldots,p(\beta)$ in $I_\good$. Let $n_1,\ldots,n_{p(\beta)}\geq 0$
	satisfying  for every $k\in \{2,\dots, p(\beta)\}$
	\begin{equation}\label{bornesnkbs}
		n_k\le \ve_0^{-\alpha(\beta)}\lambda^{2(k-1)}N,\end{equation}
	where we recall that $\alpha(\beta)$ is as in \eqref{def:alphabeta}. 
	Set $N'=n_1+2n_2+\cdots+p(\beta)n_{p(\beta)}$.	
	Define
	\begin{equation}\label{eq:defevA}
		\mc{A}\coloneqq \{\mc{N}_1=n_1,\ldots,\mc{N}_{p(\beta)}=n_{p(\beta)},|I_\good|=N'\}.
	\end{equation}
	Let $V_\bad\subset [N]$ of cardinality $N-N'$ and set $V_\good\coloneqq [N]\setminus V_\bad$. 
	Let $\pi$ be a partition of $V_\good$ such that for every $k=1,\ldots,p(\beta)$,
	\begin{equation*}
		|\{S\in \pi:|S|=k\}|=n_k
	\end{equation*}
	and such that for every $k>p(\beta)$,
	\begin{equation*}
		|\{S\in \pi:|S|=k\}|=0.
	\end{equation*}
	Recall $C_{\beta,\lambda,\ve_0}$ from \eqref{def:Clambda}, $\Msf_{\ve_0}^{+,Z}$ from Definition \ref{def:mult upper} and $\Ksf_{\ve_0}^{+,Z}$ from \eqref{def:activity upper}.
	
	Then, we have
	\begin{multline}\label{numero539}
		(NC_{\beta,\lambda,\ve_0})^{-N} \int_{\mc{A}\cap\{\sigma_N=\Id\}}e^{-\beta \F_\lambda}\leq \binom{N}{N'}\frac{(N')!}{1^{n_1}(2!)^{n_2}\cdots (p(\beta)!)^{n_{p(\beta)}}n_1!\cdots n_{p(\beta)}!} \frac{1}{(NC_{\beta,\lambda,\ve_0})^{|V_\bad|}} \\ \times \int_{(\Lambda^{2})^{|V_{\bad}|}} \indic_{\{ I_\bad= V_\bad\}}W(Z) \prod_{i,j\in V_\bad:i<j}(e^{-\beta v^Z_{ij}}\indic_{\mc{A}_{ij}})\prod_{i\in V_\bad}\frac{1}{(\tau_i^{+,Z})^{\frac{\beta}{2}}}\frac{1}{(\tau_i^{-,Z})^{\frac{\beta}{2}}}  \prod_{S\in \pi}\Msf_{\ve_0}^{+,Z}(S)\dd Z,
	\end{multline}
	where
	\begin{equation}\label{def:H upper}
		W:Z\in (\Lambda^{2})^{|V_{\bad}|}\mapsto \sum_{n=0}^{\infty}\frac{1}{n!}\sum_{\substack{X_1,\ldots,X_n\in \mc{P}(\pi)\\ \mathrm{disjoint} }} \Ksf_{\ve_0}^{+,Z}(X_1)\cdots \Ksf_{\ve_0}^{+,Z}(X_n).
	\end{equation}
\end{lemma}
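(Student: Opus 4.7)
\medskip
\noindent\textbf{Proof plan.}
The strategy is to mimic the proof of Lemma \ref{lemma:start low}, treating the bad points as a frozen external source. First, decompose the event $\mc{A}\cap\{\sigma_N=\Id\}$ by enumerating the admissible values of the bad-index set and of the multipole partition of good indices. By particle-label exchangeability, summing over the $\binom{N}{N'}$ choices of a bad-index set $V_\bad\subset[N]$ of size $N-N'$ and, for each fixed $V_\bad$, over all partitions $\pi$ of $V_\good=[N]\setminus V_\bad$ with block profile $(n_1,\dots,n_{p(\beta)})$, produces exactly the combinatorial prefactor $\binom{N}{N'}\,(N')!\bigl/\prod_k(k!)^{n_k}n_k!$ on the right of \eqref{numero539}; on each summand the integrand is identical thanks to relabeling invariance.

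Next, insert the energy lower bound \eqref{eq:b1} to obtain the pointwise estimate
\begin{equation*}
e^{-\beta\F_\lambda}\le e^{L|V_\bad|}\prod_{i\in V_\good}\!\bigl(e^{\beta\g_\lambda(x_i-y_i)}\,e^{L|x_i-y_i|^2/(\tau_i^{+,Z})^2}\bigr)\prod_{i\in V_\bad}\!\frac{1}{(\tau_i^{+,Z})^{\beta/2}(\tau_i^{-,Z})^{\beta/2}}\prod_{i<j}e^{-\beta v_{ij}^{Z}},
\end{equation*}
using $e^{\beta\g(\tau)/2}=\tau^{-\beta/2}$ from $\g=-\log|\cdot|$. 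Split the product $\prod_{i<j}e^{-\beta v_{ij}^Z}$ along the partition $V_\bad\sqcup V_\good$: retain the bad–bad interactions (together with the $\indic_{\mc{A}_{ij}}$ from $\{\sigma_N=\Id\}$ between bad–bad pairs) inside the outer $Z$-integration; the good–good and good–bad interactions will be expanded in the inner integration over $V_\good$. The $\indic_{\mc{A}_{ij}}$ constraints between good–bad pairs and the bad–bad $\mc{B}^c$ constraints are absorbed into $\{I_\bad=V_\bad\}$ and into the good-side indicator $\indic_{i\in I_\good^Z}$ appearing in $\Psf_X^{+,\ve_0,Z}$.

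Now apply the cluster expansion to the good points along the lines of Steps 4--7 of the proof of Lemma \ref{lemma:start low}. For good–good pairs, split according to whether the pair lies inside a common block of $\pi$ (keep the full Boltzmann weight, which by definition of $\Pi^\mult=\pi$ carries $\indic_{\mc{B}_{ij}}$) or in distinct blocks; in the latter case write $e^{-\beta v_{ij}^Z}\indic_{\mc{A}_{ij}}=1+f_{ij}^{v^Z}$ and keep the factor $\indic_{\mc{B}_{ij}^c}$ coming from $\Pi^\mult=\pi$. For the good–bad interactions, the key observation is the identity
\begin{equation*}
\prod_{i\in V_\good}\prod_{j\in V_\bad}e^{-\beta v_{ij}^{Z}}=\prod_{i\in V_\good}\bigl(1+f_{i\infty}^{v^Z}\bigr)=\sum_{V^\infty\subset V_\good}\prod_{i\in V^\infty}f_{i\infty}^{v^Z},
\end{equation*}
which factorizes cleanly across the good indices because each $f_{i\infty}^{v^Z}$ depends only on a single good vertex. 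Expanding both products and then resumming according to the $\pi$-connected components of the augmented graph $(V_\good,E\cup\mc{E}^\intra(\pi))$ that involve at least two blocks of $\pi$, followed by expanding the residual $\indic_{\mc{B}_{ij}^c}=1-\indic_{\mc{B}_{ij}}$ factors between distinct components exactly as in Step 6 of Lemma \ref{lemma:start low}, yields a multiplicative decomposition whose per-component factor is precisely the activity $\Ksf_{\ve_0}^{+,Z}(X)$ of Definition \ref{def:activity upper}, with $V^\infty\subset V_X$ recording which good vertices of that component pick up a good–bad $f_{i\infty}$-bond. Normalizing each block $S\in\pi$ by $\Msf_{\ve_0}^{+,Z}(S)$ (which absorbs the factor $(NC_{\beta,\lambda,\ve_0})^{-N'}$ into the inner integration via the dipole measure $\mu_{\beta,\lambda,\ve_0}$) produces the announced formula.

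The main obstacle is the compatibility of the good–bad expansion with the standard multipole cluster resummation: one must verify that the $f_{i\infty}^{v^Z}$ bonds, being ``single-vertex'' decorations, do not create new connectivity between distinct multipole components and therefore factor through the resummation over $\pi$-components, so that the sum $\sum_{V^\infty\subset V_X}$ ends up localized inside each $\Ksf_{\ve_0}^{+,Z}(X)$. A secondary subtlety is the treatment of the $L$-quadratic factor $e^{L|x_i-y_i|^2/(\tau_i^{+,Z})^2}$: it enters the definition of $\Psf_X^{+,\ve_0,Z}$ but not of $\Msf_{\ve_0}^{+,Z}(S)$; on $I_\good^Z$ we have $\tau_i^{+,Z}\ge\tfrac14 d_{i,\bad}\ge\tfrac M4\max(r_i,\lambda)$ by \eqref{eq:geo1} and Definition \ref{def:vijZ}, so $|x_i-y_i|/\tau_i^{+,Z}\le 4/M$ is uniformly bounded and the $L$-factor can be absorbed (together with the $e^{L|V_\bad|}$ prefactor) into the normalization constants without producing an extraneous multiplicative error.
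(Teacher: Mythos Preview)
Your proposal is correct and follows the paper's proof closely: fix $V_\bad$ and $\pi$ by exchangeability to extract the combinatorial prefactor, insert the energy lower bound \eqref{eq:b1}, then cluster-expand the good points with the good--bad interactions packaged as single-vertex bonds $f_{i\infty}^{v^Z}$, which factor through the $\pi$-component resummation exactly as you say. One minor correction: the bound $\tau_i^{+,Z}\ge\tfrac{M}{4}\max(r_i,\lambda)$ is too strong---from \eqref{eq:geo1} and Definition~\ref{def:vijZ} one only gets $\tau_i^{+,Z}\ge\tfrac14\max(r_i,\lambda)$ (cf.\ Remark~\ref{remcrucialetau}), but this still gives $|x_i-y_i|/\tau_i^{+,Z}\le 4$, which is all that is needed; the mismatch you flag between $\Psf_X^{+,\ve_0,Z}$ (with the $L$-factor) and $\Msf_{\ve_0}^{+,Z}$ (without it) is a typo in Definition~\ref{def:mult upper}---the paper's proof carries the $L$-factor through $W'(Z)$ and treats $\Msf_{\ve_0}^{+,Z}$ as the true normalizer of $\Psf_X^{+,\ve_0,Z}$.
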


\begin{remark}
	We should clarify a slight abuse of notation in \eqref{numero539}. Strictly speaking, the radius $\tau_i^{\pm,Z}$ for $i \in I_{\bad}$ is a function of the entire configuration. However, because the algorithm declares all points in overcrowded balls ``bad'' at the same time, conditioning on the event $I_{\bad} = V_{\bad}$ ensures that $\tau_i^{\pm,Z}$ depends only on the variables $(x_i, y_i)$ for $i \in V_{\bad}$.
\end{remark}

\begin{proof}Denote $p\coloneqq p(\beta)$.
	\paragraph{\bf{Step 1: starting point}}
	Recall $I_\good$ and $I_\bad$ defined in Definition \ref{def:bad points}. For every $V\subset [N]$, we let $I_\bad^V$ be the set of bad points when computed only among $(x_i,y_i)_{i\in V}$.

	Let $\Pi_\mult^{I_\good}$ be the partition of $I_\good$ into multipoles (see Definition \ref{def:multipoles mc}). Recall that by definition, $I_\good\subset [N]\setminus I_\bad^2$ and therefore every $S\in \Pi_\mult^{V_\good}$ such that $S\cap I_\good\neq \emptyset$ satisfies $S\subset I_\good$ and $|S|\leq p$. Hence $|I_\good|=\mc{N}_1+2\mc{N}_2+\ldots +p\mc{N}_{p}$ and for $\mc{A}$ to be nonempty, we need $N'=n_1+2n_2+\cdots +p n_{p}$. \medskip
	
	\paragraph{\bf{Step 2: fixing dipoles, good dipoles and good multipoles}}
	We wish to give an upper bound on $\int_{\mc{A}} e^{-\beta \F_\lambda}$. First, since particles are undistinguishable, one may write 
	\begin{equation*}
		\int_{\mc{A}} e^{-\beta \F_\lambda}=N!\int_{\mc{A}\cap \{\sigma_N=\Id\}} e^{-\beta \F_\lambda}.
	\end{equation*}
	One may write
	\begin{equation}\label{eqFcond1}
		\int_{\mc{A}\cap \{\sigma_N=\Id\}} e^{-\beta \F_\lambda}=\binom{N}{N'}\int_{\mc{A}\cap \{\sigma_N=\Id, I_\bad=V_\bad\} } e^{-\beta \F_\lambda}.
	\end{equation}
	By \eqref{eq:choicespi},
	\begin{equation}\label{eqFcond2}
		\int_{\mc{A}\cap \{\sigma_N=\Id, I_\bad=V_\bad\}} e^{-\beta \F_\lambda}= \frac{(N')!}{1^{n_1}(2!)^{n_2}\cdots (p!)^{n_{p}}n_1!\cdots n_{p}!} \int_{\{\sigma_N=\Id,I_\bad=V_\bad,\Pi_\mult^{V_\good}=\pi\} } e^{-\beta \F_\lambda}.
	\end{equation}
	
	\paragraph{\bf{Step 3: rewriting the event $\{\sigma_N=\Id,I_\bad=V_\bad,\Pi_\mult^{V_\good}=\pi\}$ as an intersection of simpler events }}
	Recalling the notation $I_\good^Z$ from \eqref{def:IgoodZ}, in view of the definition of $I_\bad$ and of \eqref{bornesnkbs}, by Remark~\ref{remarkgoodbad}, item (2),
	we have
	\begin{multline}\label{eq:decevent}
		\{\sigma_N=\Id\}\cap \{I_\bad=V_\bad\}=\{\sigma_N=\Id\}\cap \{I_\bad^{V_\bad}=V_\bad\}\\ \cap \bigcap_{i\in V_\good}\Bigr(\{|x_i-y_i|\leq \ve_0\Cut\}\cap \{|[i]^{\Pi_\mult^{V_\good} }|\leq p\}\Bigr) \cap \bigcap_{i\in V_\good}\{i\in I_\good^Z\} .
	\end{multline}
	
	Moreover, 
	\begin{equation}\label{eq:remplacevent}
		\{\sigma_N=\Id\}\cap  \{\Pi_\mult^{V_\good}=\pi\}=  \{\sigma_N=\Id\}\cap \bigcap_{S\in \pi}\mc{B}_S\bigcap_{ij\in \mc{E}^\inter(\pi)}\mc{B}_{ij}^c.
	\end{equation}
	Also recall that 
	\begin{equation}\label{eq:siAA}
		\{\sigma_N=\Id\}=\bigcap_{ij:i<j}\mc{A}_{ij}.   
	\end{equation}

	.
	\paragraph{\bf{Step 4: isolating good points}}
	
	Combining \eqref{eq:decevent} -- \eqref{eq:siAA}, \eqref{eqFcond1}, \eqref{eqFcond2} and \eqref{eq:b1}, we obtain that for some $C>0$ depending only on $\lambda$,
	\begin{multline*}
		\int_{\mc{A}\cap \{\sigma_N=\Id\}}e^{-\beta \F_\lambda}\leq e^{C(N-N')}\binom{N}{N'}\frac{(N')!}{1^{n_1}(2!)^{n_2}\cdots (p!)^{n_{p}}n_1!\cdots n_{p}!}\\ \times \int_{(\Lambda^{2})^{|V_{\bad}|} } \indic_{I_\bad^{V_\bad}=V_\bad}\prod_{i,j\in V_\bad:i<j}(e^{-\beta v_{ij}^Z}\indic_{\mc{A}_{ij}}) W'(Z) \prod_{i\in V_\bad}\frac{1}{(\tau_i^{+,Z})^{\frac{\beta}{2}}}\frac{1}{(\tau_i^{-,Z})^{\frac{\beta}{2}}} \dd Z,
	\end{multline*}
	where for every bad point configuration $Z\in (\Lambda^2)^{|V_\bad|}$,
	\begin{multline*}
		W'(Z)\coloneqq  \int_{(\Lambda^2)^{(N-|V_\bad|)}}\left( \prod_{i\in V_\bad,j\in V_\good } e^{-\beta v_{ij}^Z}\indic_{\mc{B}_{ij}^c\cap\mc{A}_{ij} } \right)\prod_{ij\in \mc{E}^\inter(\pi)}e^{-\beta v_{ij}^Z}\indic_{\mc{B}_{ij}^c\cap\mc{A}_{ij} }\\ \times \prod_{S\in \pi}\left(\indic_{\mc{B}_S}\prod_{ i,j\in S:i<j}e^{-\beta v_{ij}^Z}\indic_{\mc{A}_{ij} }\right) \prod_{i\in V_\good}\left(\indic_{|x_i-y_i|\leq \ve_0\Cut}\indic_{|[i]^{\Pi_\mult^{V_\good}}|\leq p}\indic_{i\in I_\good^Z }  \right) \\ \times \prod_{i\in V_\good}\left(e^{\beta \g_\lambda(x_i-y_i)+L(\frac{|x_i-y_i|}{\tau_i^Z})^2}\right)\prod_{i\in V_\good}\dd x_i \dd y_i.
	\end{multline*}
	where we recall the notation $\mc{E}^\inter(\pi)$ from Definition \ref{def:quotient}. By Remark \ref{remark:the inclusion}, we have $\mc{B}_{ij}^c\subset \mc{A}_{ij}.$ Therefore, 
	\begin{multline*}
		W'(Z)= \int_{(\Lambda^2)^{(N-|V_\bad|)}}\left( \prod_{i\in V_\bad,j\in V_\good } e^{-\beta v_{ij}^Z}\indic_{\mc{B}_{ij}^c} \right)\prod_{ij\in \mc{E}^\inter(\pi)}e^{-\beta v_{ij}^Z}\indic_{\mc{B}_{ij}^c\cap \mc{A}_{ij} }\prod_{S\in \pi}\left(\indic_{\mc{B}_S}\prod_{ i,j\in S:i<j}e^{-\beta v_{ij}^Z}\indic_{\mc{A}_{ij} }\right)\\ \times \prod_{i\in V_\good}\left(\indic_{|x_i-y_i|\leq \ve_0\Cut}\indic_{\Bigr|[i]^{\Pi_\mult^{V_\good}}\Bigr|\leq p}\indic_{i\in I_\good^Z }  \right)\prod_{i\in V_\good}\left(e^{\beta \g_\lambda(x_i-y_i)+L(\frac{|x_i-y_i|}{\tau_i^Z})^2}\right)\prod_{i\in V_\good}\dd x_i \dd y_i.
	\end{multline*}
	
	Since we are restricted to the event where $\Pi_\mult^{V_\good}=\pi$ and since $\pi$ has no elements of cardinality strictly larger than $p$, one may simplify $W'(Z)$ into
	\begin{multline*}
		W'(Z)\coloneqq  \int_{(\Lambda^2)^{(N-|V_\bad|)}}\left( \prod_{i\in V_\bad,j\in V_\good } e^{-\beta v_{ij}^Z} \right)\prod_{ij\in \mc{E}^\inter(\pi)}e^{-\beta v_{ij}^Z}\indic_{\mc{B}_{ij}^c\cap\mc{A}_{ij} }\prod_{S\in \pi}\left(\indic_{\mc{B}_S}\prod_{ i,j\in S:i<j}e^{-\beta v_{ij}^Z}\indic_{\mc{A}_{ij} }\right)\\ \times  \prod_{i\in V_\good}\left(e^{\beta \g_\lambda(x_i-y_i)+L(\frac{|x_i-y_i|}{\tau_i^Z})^2}\indic_{|x_i-y_i|\leq \ve_0\Cut} \indic_{i\in I_\good^Z }\right)\prod_{i\in V_\good}\dd x_i \dd y_i.
	\end{multline*}

	\paragraph{\bf{Step 5: rewriting $W'(Z)$ as a cluster expansion series}}

	As in Section \ref{sub:pert}, we expand the interaction between dipoles in distinct multipoles, the terms $\indic_{\mc{B}_{ij}^c}$, and the interaction between good and bad points.

	Recalling $f_{ij}^{v^Z}$ from Definition \ref{def:mayer}, and $f_{i\infty}^{v^Z}$ from Definition \ref{def:iinfty}, one can write
	\begin{equation*}
		W'(Z)=\sum_{E\subset \mc{E}^\inter(\pi)}\sum_{V^\infty\subset V_\good} I(E,V^\infty),
	\end{equation*}
	where 
	\begin{multline*}
		I(E,V^\infty)\coloneqq  \int_{(\Lambda^2)^{(N-|V_\bad|)}} \prod_{ij\in E}f_{ij}^{v^Z}\prod_{i\in V^\infty}f_{i\infty}^{v^Z} \prod_{S\in \pi}\left(\indic_{\mc{B}_S}\prod_{ i,j\in S:i<j}e^{-\beta v_{ij}^Z}\indic_{\mc{A}_{ij}}\right)\\ \times \prod_{ij\in \mc{E}^\inter(\pi)}\indic_{\mc{B}_{ij}^c} \prod_{i\in V_\good}\left(e^{\beta \g_\lambda(x_i-y_i)+L(\frac{|x_i-y_i|}{\tau_i^Z})^2}\indic_{|x_i-y_i|\leq \ve_0\Cut }\indic_{i\in I_\good^Z } \right)\prod_{i\in V_\good}\dd x_i \dd y_i.
	\end{multline*}
	
	Resumming according to the connected components relative to $\pi$ of the graph $(V_\good,E)$ gives
	\begin{equation*}
		W'(Z)=\sum_{n=0}^\infty \frac{1}{n!}\sum_{\substack{X_1,\ldots,X_n\subset \pi\\ \mathrm{disjoint}}}\sum_{E_1\in \mathsf{E}^{X_1}}\cdots \sum_{E_n\in \mathsf{E}^{X_n}}\sum_{V_1^\infty\subset V_{X_1}}\cdots \sum_{V_n^\infty\subset V_{X_n}}I(\cup_{l=1}^nE_l,\cup_{l=1}^n V_l^\infty).
	\end{equation*}
	Next, we use
	\begin{equation*}
		\prod_{ij\in \mc{E}^\inter(\pi)}\indic_{\mc{B}_{ij}^c} =\prod_{ij\in \cup_l \mc{E}^\inter(X_l)} \indic_{\mc{B}_{ij}^c} \prod_{ij\in \mc{E}^\inter(\Coarse_{\pi}(X_1,\ldots,X_n))}\indic_{\mc{B}_{ij}^c}.
	\end{equation*}
	For every $ij\in \mc{E}^\inter(\Coarse_\pi(X_1,\ldots,X_n))$, we write $\indic_{\mc{B}_{ij}^c}=1-\indic_{\mc{B}_{ij}}$ to obtain
	\begin{equation*}
		\prod_{ij\in \mc{E}^\inter(\Coarse_{\pi}(X_1,\ldots,X_n))}\indic_{\mc{B}_{ij}^c}=\sum_{F\subset \mc{E}^\inter(\Coarse_{\pi}(X_1,\ldots,X_n))}\prod_{ij\in F}(-\indic_{\mc{B}_{ij}}).   
	\end{equation*}
	This gives 
	\begin{multline*}
		W'(Z)\\
		=\sum_{n=0}^\infty \frac{1}{n!}\sum_{\substack{X_1,\ldots,X_n\subset \pi\\ \mathrm{disjoint}}}\sum_{E_1\in \mathsf{E}^{X_1}}\cdots \sum_{E_n\in \mathsf{E}^{X_n}}\sum_{V_1^\infty\subset V_{X_1}}\cdots \sum_{V_n^\infty\subset V_{X_n}} \sum_{F\subset \mc{E}^\inter(\Coarse_{\pi}(X_1,\ldots,X_n))}I_1(\cup_{l=1}^n E_l,\cup_{l=1}^n V_l^\infty,F),
	\end{multline*}
	where, denoting by $X_1,\ldots,X_n$ the connected components relative to $\pi$   of $(V_\good,E)$ with at least two multipoles, 
	\begin{multline*}
		I_1(E,V^\infty,F)\coloneqq  \int_{(\Lambda^2)^{(N-|V_\bad|)}} \prod_{ij\in E}f_{ij}^{v^Z}\prod_{i\in V^\infty}f_{i\infty}^{v^Z} \prod_{S\in \pi}\Bigr(\indic_{\mc{B}_S}\prod_{ i,j\in S:i<j}e^{-\beta v_{ij}^Z}\indic_{\mc{A}_{ij}}\Bigr) \prod_{ij\in \cup_{l=1}^n\mc{E}^\inter(X_l)}\indic_{\mc{B}_{ij}^c}\\ \times\prod_{ij\in F}(-\indic_{\mc{B}_{ij}}) \prod_{i\in V_\good}\left(e^{\beta \g_\lambda(x_i-y_i)+L(\frac{|x_i-y_i|}{\tau_i^Z})^2}\indic_{|x_i-y_i|\leq \ve_0\Cut}\indic_{i\in I_\good^Z }\right)\prod_{i\in V_\good}\dd x_i \dd y_i.
	\end{multline*}
	Next, resumming according to the connected components relative to $\pi$ of the graph $(V_\good,E\cup F)$, we obtain
	\begin{multline*}
		W'(Z)=\sum_{n=0}^\infty \frac{1}{n!}\sum_{\substack{X_1,\ldots,X_n\subset \pi\\ \mathrm{disjoint}}} \int_{(\Lambda^2)^{(N-|V_\bad|)}} \mathcal U(X_1)\cdots \mathcal U(X_n)\prod_{S\in \pi}\Bigr(\indic_{\mc{B}_S}\prod_{ i,j\in S:i<j}e^{-\beta v_{ij}^Z}\indic_{\mc{A}_{ij}}\Bigr)\\ \times \prod_{i\in V_\good}\Bigr(e^{\beta \g_\lambda(x_i-y_i)+L(\frac{|x_i-y_i|}{\tau_i^Z})^2}\indic_{|x_i-y_i|\leq \ve_0\Cut }\indic_{i\in I_\good^Z } \Bigr)\prod_{i\in V_\good}\dd x_i \dd y_i,
	\end{multline*}
	where for every $X\subset \pi$,
	\begin{multline*}
		\mathcal U(X)=\sum_{V^\infty\subset V_X}\sum_{k=0}^\infty \frac{1}{k!}\sum_{\substack{Y_1,\ldots,Y_k\subset X\\ \mathrm{disjoint}}} \sum_{E_1\in \mathsf{E}^{Y_1}}\cdots \sum_{E_k\in \mathsf{E}^{Y_k}}\sum_{F\in \mathsf{E}^{\Coarse_X(Y_1,\ldots,Y_k)}}
		\prod_{ij\in\cup_{l=1}^k E_l}f_{ij}^{v^Z}\prod_{i\in V^\infty}f_{i\infty}^{v^Z}\prod_{ij\in F}(-\indic_{\mc{B}_{ij}})\\ \times \prod_{ij\in \cup_{l=1}^k\mc{E}^\inter(Y_l) } \indic_{\mc{B}_{ij}^c}.
	\end{multline*}
	Therefore, dividing by the normalization constant $\Msf_{\ve_0}^{+,Z}(\pi)$ introduced in Definition \ref{def:mult upper} and using independence over distinct connected components gives
	\begin{equation*}
		W(Z)=\frac{W'(Z)}{\Msf_{\ve_0}^{+,Z}(\pi)}=\sum_{n=0}^{\infty}\frac{1}{n!}\sum_{\substack{X_1,\ldots,X_n \in \mc{P}(\pi)\\ \mathrm{disjoint}}}\Ksf_{\ve_0}^{+,Z}(X_1)\cdots \Ksf_{\ve_0}^{+,Z}(X_n),
	\end{equation*}
	where $\Ksf_{\ve_0}^{+,Z}$ is as in Definition \ref{def:activity upper}.
\end{proof}

\subsection{Statement of the main activity controls}\label{sub:stat upper}

We now state the analogues of Propositions \ref{prop:bounded lower} and \ref{prop:expansion -}.

\begin{prop}[Control on the activity of bounded size clusters with frozen bad points]\label{prop:bounded upper}
	Let $\beta\in (2,+\infty)$ and $p^*(\beta)$ be as in \eqref{defpstar}. Recall $\gamma_{\beta,\lambda,k}$ from Definition \ref{def:gamma beta}. Let $\ve_0\in (0,1)$. 
	Let $V_\bad \subset [N]$ and set $V_\good=[N]\setminus V_\bad$. Let $Z\in (\Lambda^{2})^{|V_{\bad}|}$. Let $M$ be as in Definition \ref{def:multipoles}. Let $S\subset [N]$ be such that $1< |S|\leq p^*(\beta)$ and $X$ be a subpartition of $V_\good$.  
	
	There exists $C>0$ depending only on $\beta$, $M$, $\ve_0$ and $|S|$ such that
	\begin{equation}\label{eq:MX +low}
		\Msf_{\ve_0}^{+,Z}(S)\geq \frac{\lambda^{2(|S|-1)}}{CN^{|S|-1}},
	\end{equation}
	\begin{equation}\label{eq:MX +up}
		\Msf_{\ve_0}^{+,Z}(S)\leq \frac{C\lambda^{2(|S|-1)}}{N^{|S|-1}}.
	\end{equation}
	Moreover, there exists $C>0$ depending only on  $\beta$, $M$, $\ve_0$ and $|V_X|$ such that
	\begin{equation}\label{eq:K+ bound}
		|\Ksf_{\ve_0}^{+,Z}(X)|\leq C\frac{\gamma_{\beta,\lambda,|V_X|}}{\Msf^0_{\ve_0} (X)N^{|V_X|-1} }.
	\end{equation}
\end{prop}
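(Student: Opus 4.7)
The proof parallels that of Proposition~\ref{prop:bounded lower}, with adaptations to accommodate frozen bad points through the modified interaction $v_{ij}^Z$, the indicator $\indic_{i\in I_\good^Z}$, and the factors $f_{i\infty}^{v^Z}$ encoding good-bad interactions.

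For the bounds \eqref{eq:MX +low}-\eqref{eq:MX +up} on $\Msf_{\ve_0}^{+,Z}(S)$, I would write $\Msf_{\ve_0}^{+,Z}(S) = (NC_{\beta,\lambda,\ve_0})^{-|S|}\, I_1^{+,Z}(S)$ by analogy with \eqref{defI1}. Since the radii $\tau_i^{\pm,Z}$ are at least $\lambda$ and since dipoles within a multipole $S$ have length scale $\lambda$, Newton's theorem gives $v_{ij}^Z = v_{ij}$ on the support of $\indic_{i \in I_\good^Z}\indic_{|x_i-y_i|\leq \ve_0 \Cut}\indic_{\mc{B}_S}$. The indicator $\indic_{i \in I_\good^Z}$ further restricts the multipole's barycenter to avoid $|V_\bad|$ discs of radius $O(\lambda)$, which changes the effective volume by a bounded multiplicative factor. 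Combining this with the integral estimates of Lemma~\ref{lemma:lowerM+} (applied to $w = v^Z$, whose proof goes through verbatim since $\tau_i^{\pm,Z}\geq \lambda$) gives both bounds, with constants depending on $\beta, M, \ve_0$, and $|S|$.

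For the activity bound \eqref{eq:K+ bound}, I would follow the five-step template of the proof of \eqref{eq:K- bound}. The number of terms $(V^\infty, n, X_1, \ldots, X_n, E_1, \ldots, E_n, F)$ in Definition~\ref{def:activity upper} is bounded by a constant depending only on $|V_X|$; on each such term, I split every Mayer bond $f_{ij}^{v^Z}$ into odd and even parts $a_{ij}^{v^Z} + b_{ij}^{v^Z}$ as in Definition~\ref{def:awbw}, and similarly split each $f_{i\infty}^{v^Z}$ into odd and even parts in $\vec{r}_i$ (with the midpoint $z_i$ held fixed). Applying Lemma~\ref{lemma:cancellation odd} then forces each vertex of $V_X$ to have even ``effective'' odd-degree, where a vertex $i\in V^\infty$ paired with an odd $f_{i\infty}^{v^Z}$ factor is counted as carrying a virtual odd edge to $\infty$.

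The main obstacle is that, unlike in Proposition~\ref{prop:bounded lower}, the subgraph of odd bonds need not be Eulerian in the strict sense: vertices in $V^\infty$ may have odd real-edge degree, provided they carry an odd $f_{i\infty}^{v^Z}$ factor. As outlined in Section~\ref{sub:more}, one decomposes this subgraph into its 2-core, to which Corollary~\ref{coro:prod a} applies directly, plus pendant trees whose leaves lie in $V^\infty$. Each pendant edge is controlled using the good-bad estimate $\sum_{j\in V_\bad}|v_{ij}^Z| \leq C\, r_i/d_{i,\bad}$ from Lemma~\ref{lemma:energy goodbad}, combined with the subgeometric growth $d_{i_K, \bad} \leq C^K d_{i_1,\bad}$ of \eqref{eq:geo4'} and the lower bound $d_{i,\bad}\geq \max(r_i,\lambda)$ from \eqref{eq:geo1}. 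These geometric estimates provide replacements for the missing ``closing'' odd edges and restore the quadratic estimate $\prod_{S \neq \hat S} r_S^2$ of Corollary~\ref{coro:prod a}. Integrating the resulting expression using (the analogue for the $\Psf^{+,\ve_0,Z}$ measure of) Lemma~\ref{lemma:intergral small -} extracts the factor $\gamma_{\beta,\lambda,|V_X|}$, and dividing by $\Msf_{\ve_0}^{+,Z}(X) \geq C^{-|V_X|}\Msf^0_\infty(X)$, which follows from \eqref{eq:MX +low} and Lemma~\ref{lemma:limiting}, together with $C_{\beta,\lambda,\ve_0} \geq c\,\lambda^{2-\beta}$ from Lemma~\ref{lem411}, yields \eqref{eq:K+ bound}.
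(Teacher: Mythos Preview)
Your overall strategy matches the paper's: Lemma~\ref{lemma:lowerM+} with $w=v^Z$ for the $\Msf$ bounds, and for the $\Ksf$ bound the parity reduction (Corollary~\ref{coro:reduction Euler u}), the core--forest decomposition and pendant-tree analysis of Sections~\ref{sub:core forest}--\ref{sub:quadra} culminating in the quadratic estimate of Corollary~\ref{coro:prod a u}, then integration via Lemma~\ref{lemma:intergral small -}.

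However, one intermediate justification is false and should be deleted: the claim that ``Newton's theorem gives $v_{ij}^Z=v_{ij}$ on the support of $\indic_{i\in I_\good^Z}\indic_{|x_i-y_i|\leq\ve_0\Cut}\indic_{\mc{B}_S}$''. For good $i$ the radius $\tau_i^{+,Z}=\max(\tfrac14 d_{i,\bad},\lambda)$ can be as large as $\ve_0\Cut$, while within a multipole one only has $d_{ij}\leq M\max(\min(r_i,r_j),\lambda)$, which can be of order $\lambda$; the enlarged balls generically overlap and $v_{ij}^Z\neq v_{ij}$. Likewise, dipoles in a multipole are only constrained by $r_i\leq\ve_0\Cut$, not to ``length scale $\lambda$'', and $\indic_{i\in I_\good^Z}$ excludes discs of radius $M\max(r_i,\lambda)$ (depending on the integration variable $r_i$), not of fixed radius $O(\lambda)$. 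Fortunately none of this affects the argument: Lemma~\ref{lemma:lowerM+} is already stated for $w\in\{\tilde v,v^Z\}$, and its proof uses only that $w_{ij}$ is bounded below on $\mc{A}_{ij}$, that $r_i/\tau_i$ is bounded, and handles the $I_\good^Z$ indicator directly (see \eqref{eq:o2}). You should simply invoke the lemma without the extraneous and incorrect justification.
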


\begin{prop}[Expansion errors with frozen bad points]\label{prop:expansion +}
	Let $\beta\in (2,+\infty)$, $p(\beta)$ be as in Definition \ref{def:pbeta} and $\gamma_{\beta,\lambda,n}$ be as in Definition \ref{def:gamma beta}. Let $\ve_0\in (0,1)$.
	Let $V_\bad \subset [N]$ and set $V_\good=[N]\setminus V_\bad$. Let $Z\in (\Lambda^{2})^{|V_{\bad}|}$. Let $M$ be as in Definition \ref{def:multipoles}. Let $S\subset [N]$ be such that $|S|\leq p(\beta)$ and $X$ be a subpartition of $V_\good$ such that $|V_X|\leq p(\beta)$.
	
	Then, there exists $C>0$ depending on $\beta$, $M$, $\ve_0$ and $|S|$ such that
	\begin{equation}\label{eq:M+diff}
		|\Msf_{\ve_0}^{+,Z}(S)-\Msf^0_{\ve_0} (S)|\leq \frac{C}{N^{|S|-1}}\Bigr(\Cut^{-2}+\frac{|V_\bad|}{N}\Bigr).
	\end{equation} 
	Moreover, there exists $C>0$ depending on $\beta$, $M$, $\ve_0$ and $|V_X|$ such that 
	\begin{equation}\label{eq:K+diff}
		|\Ksf_{\ve_0}^{+,Z}(X)-\Ksf_{\ve_0}^0(X)|\leq \frac{C}{N^{|V_X|-1} \Msf^0_{\ve_0} (X)} \Bigr(\Cut^{-2}+\frac{|V_\bad|}{N}\Bigr).
	\end{equation}
\end{prop}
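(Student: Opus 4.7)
The plan is to mimic the proof of Proposition \ref{prop:expansion -}, but accounting carefully for three new differences between the measures $\Psf_X^{+,\ve_0,Z}$ and $\Psf_X^{0,\ve_0}$: (i) the interaction $v_{ij}^Z$ (whose smeared densities use radii $\tau^Z$ depending on the frozen bad points) in place of $v_{ij}$; (ii) the screening weight $e^{L(|x_i-y_i|/\tau_i^Z)^2}$ present in the $+,Z$ model; and (iii) the restriction $\indic_{i\in I_\good^Z}$ forbidding good points from coming too close to bad points. The activity $\Ksf_{\ve_0}^{+,Z}$ carries in addition the factor $\prod_{i\in V^\infty}f_{i\infty}^{v^Z}$ summed over $V^\infty\subset V_X$, which has no counterpart in $\Ksf_{\ve_0}^0$.

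For \eqref{eq:M+diff}, I would write $\Msf^{+,Z}_{\ve_0}(S)-\Msf^0_{\ve_0}(S)$ as a telescoping sum of three differences, each isolating one of (i)--(iii). For (i), I would apply \eqref{eq:diff v} of Lemma \ref{lemma:error'}, using that for $i\in I_\good$ one has $\tau_i^Z\geq \tfrac14 d_{i,\bad}\vee \lambda$ and $d_{i,\bad}\geq \max(r_i,\lambda)$ by \eqref{eq:geo1}; integrating against $\mu_{\beta,\lambda,\ve_0}^{\otimes |S|}$ then produces an error of order $\Cut^{-2}/N^{|S|-1}$ by exactly the argument used in the proof of Proposition \ref{prop:expansion -}. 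For (ii), Taylor-expanding $e^{L(r_i/\tau_i^Z)^2}=1+O((r_i/\tau_i^Z)^2)$ and using $\tau_i^Z\geq \lambda$ together with the integrability of $r_i^2\,e^{\beta \g_\lambda(x_i-y_i)}$ past the divergence threshold yields again a $\Cut^{-2}$ contribution. For (iii), the event $\{i\notin I_\good^Z\}$ is contained in $\bigcup_{j\in V_\bad}\{d_{ij}\leq M\max(r_i,r_j,\lambda)\}$; since the $x$- and $y$-coordinates live in $\Lambda$ of area $N$, each $j\in V_\bad$ contributes at most $O(1/N)$ to $\mu_{\beta,\lambda,\ve_0}(\{i\notin I_\good^Z\})$, producing the $|V_\bad|/N$ error after summing over bad points.

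For \eqref{eq:K+diff}, decompose $\Ksf_{\ve_0}^{+,Z}(X)=\Ksf_{\ve_0}^{+,Z,\emptyset}(X)+\mathrm{Rem}(X)$, where $\Ksf_{\ve_0}^{+,Z,\emptyset}$ is the $V^\infty=\emptyset$ contribution and $\mathrm{Rem}$ collects $V^\infty\neq \emptyset$. The first piece has the same graph-theoretic structure as $\Ksf_{\ve_0}^0(X)$ but integrated against $\Psf^{+,\ve_0,Z}_X$, so the three sources (i)--(iii) propagate into the integral estimate exactly as in the proof of \eqref{eq:K-diff0} (following Lemma \ref{lemma:general exp K}), giving a $(\Cut^{-2}+|V_\bad|/N)/(N^{|V_X|-1}\Msf_{\ve_0}^0(X))$ bound. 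For $\mathrm{Rem}(X)$, use $|f_{i\infty}^{v^Z}|\leq e^{\beta\sum_{j\in V_\bad}|v_{ij}^Z|}-1$ and then the crucial estimate \eqref{eq:interaction gb} of Lemma \ref{lemma:energy goodbad}, which gives $\sum_{j\in V_\bad}|v_{ij}^Z|\leq C r_i/d_{i,\bad}$; after expansion, the expected value of $r_i/d_{i,\bad}$ under the dipole measure is $O(|V_\bad|/N)$, thanks to the uniform bounded-multiplicity estimates \eqref{eq:geo5}--\eqref{eq:geo6} that prevent bad points from accumulating at any scale.

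The main obstacle is the treatment of the perturbative factor $\prod_{i\in V^\infty}f_{i\infty}^{v^Z}$: it cannot be bounded pointwise (atypical configurations with $r_i$ comparable to $\Cut$ make it large), so the gain must come entirely from integration against the dipole measure. Crucially, the $|V_\bad|/N$ factor should appear \emph{once} per cluster rather than once per vertex or once per bad point, which forces us to resum carefully over $V^\infty$ and use the sub-geometric growth of $d_{i,\bad}$ along connected chains of good dipoles established in \eqref{eq:geo4}--\eqref{eq:geo4'}; without such control, the estimate would blow up like $(|V_\bad|/N)^{|V^\infty|}$ times an unfavorable combinatorial factor, spoiling the required uniform bound.
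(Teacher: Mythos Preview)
Your treatment of \eqref{eq:M+diff} and of the $V^\infty=\emptyset$ part of \eqref{eq:K+diff} is correct and matches the paper: the three sources (i)--(iii) are handled via Lemma~\ref{lemma:general exp M} and Lemma~\ref{lemma:general exp K} exactly as you sketch.

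The gap is in $\mathrm{Rem}(X)$. Your claim that $\dE[r_i/d_{i,\bad}]=O(|V_\bad|/N)$ is false for $\beta>3$. A direct computation with a single bad point $z$ gives
\[
\dE_{x_i}\Bigl[\frac{r_i}{d_{i,\bad}}\indic_{d_{i,\bad}<4\ve_0\Cut}\Bigr]\sim \frac{r_i}{N}\int_{Mr_i}^{4\ve_0\Cut}\frac{1}{d}\cdot 2\pi d\,\dd d\sim \frac{r_i\Cut}{N},
\]
and since $\dE[r_i]\sim\lambda$ for $\beta>3$, summing over bad points yields $|V_\bad|\lambda\Cut/N$, which diverges relative to $|V_\bad|/N$ because $\lambda\Cut=\lambda^{(6-2\beta)/(4-\beta)}\to\infty$. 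Nor is this absorbed by $\Cut^{-2}$. A single power of $r_i/d_{i,\bad}$ is simply not enough.

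What the paper does instead is retain the parity coupling between the internal odd bonds $a_{ij}^{v^Z}$ and the boundary odd bonds $a_{i\infty}^{v^Z}$ via Corollary~\ref{coro:reduction Euler u}, reducing to $E\in\Eulc^X(V_1^\infty,V_2^\infty)$. Then Corollary~\ref{coro:prod a u}, equation~\eqref{eq:squares}, gives the crucial quadratic estimate
\[
\prod_{ij\in E}|a_{ij}^{v^Z}|\prod_{i\in V_1^\infty}|a_{i\infty}^{v^Z}|\prod_{i\in V_2^\infty}|b_{i\infty}^{v^Z}|\le C^{|V_X|}\Bigl(\prod_{S\in X}r_S^2\Bigr)\frac{1}{d_{i_0,\bad}^2}\cdot(\text{tree weights}),
\]
with \emph{all} the $r_S^2$ present and a \emph{squared} inverse distance. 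Compared to the $V^\infty=\emptyset$ case this is an extra factor $(\max_i r_i/d_{i_0,\bad})^2$, which is exactly what the integral estimate~\eqref{eq:bound J2} of Lemma~\ref{lemma:intergral small -} converts into $(\Cut^{-2}+|V_\bad|/N)$. Establishing \eqref{eq:squares} is the entire purpose of Sections~\ref{sub:core forest}--\ref{sub:quadra} (core--forest decomposition, pendant-tree and block analysis, heavy-light decomposition); the sub-geometric growth \eqref{eq:geo4}--\eqref{eq:geo4'} you mention is one ingredient there, but bounding $|f_{i\infty}^{v^Z}|$ in absolute value beforehand destroys the parity structure that makes the quadratic estimate possible.
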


Propositions \ref{prop:bounded upper} and  \ref{prop:expansion +} are proved in Section \ref{sub:abs upper}.

In the next proposition, we expand the logarithm of the function $W(Z)$ appearing in Lemma \ref{lemma:start upper}, see \eqref{def:H upper}. We fix a configuration of bad points and show that the series of the $\Ksf_{\ve_0}^{+,Z}(X)$ is absolutely convergent.

\begin{prop}[Absolute convergence of the cluster expansion series with frozen bad points]\label{prop:absolute upper}
	Let $\beta\in (2,+\infty)$ and $p(\beta)$ be as in Definition \ref{def:pbeta}. Let $\ve_0\in (0,1)$.
	Let $V_\bad \subset [N]$ and set $V_\good=[N]\setminus V_\bad$. Let $Z\in (\Lambda^{2})^{|V_{\bad}|}$. Let $M$ be as in Definition \ref{def:multipoles}. Let $\pi$ be a partition of $V_\good$ such that for every $S\in \pi$, one has $|S|\leq p(\beta)$. Assume that for every $k\in \{2,\ldots,p(\beta)\}$,
	$	n_k\coloneqq |\{S\in \pi:|S|=k\}| $ satisfies \eqref{bornesnkbs}.

	For $M$ large enough with respect to $\beta$ and $p(\beta)$, $\ve_0$ small enough with respect to $\beta$, $p(\beta)$ and $M$, and $\lambda$ small enough, there exists $C>0$ depending only on $\beta, M, p(\beta)$ and $\ve_0$ such that
	\begin{equation} \label{eq:rest u}
		\Bigr|\sum_{X\in \mc{P}(\pi)}\Ksf_{\ve_0}^{+,Z}(X)\indic_{\{|V_X|>p(\beta)\}} \Bigr|\leq C(N\delta_{\beta,\lambda} +|V_\bad|),
	\end{equation}
	in particular, the series $\sum_{X\in \mc{P}(\pi)} \Ksf_{\ve_0}^{+,Z}(X)$ is absolutely convergent.
	
	Moreover, for $M$ large enough with respect to $\beta$ and $p(\beta)$, $\ve_0$ small enough with respect to $\beta$ and $M$, and $\lambda$ small enough, there exists $C>0$ depending only on $\beta, M, p(\beta)$ and $\ve_0$ such that
	\begin{equation}\label{eq:rest b u}
		\sum_{n=1}^\infty\frac{1}{n!}\sum_{\substack{X_1,\ldots,X_n\in \mc{P}(\pi):\\ \mathrm{connected}} }|\Ksf_{\ve_0}^{+,Z}(X_1)\cdots \Ksf_{\ve_0}^{+,Z}(X_n) \mathrm{I}(G(X_1,\ldots,X_n))|\indic_{|V_{X_1\cup \cdots \cup X_n}|>p(\beta)}  \leq C(N\delta_{\beta,\lambda}+|V_\bad|).
	\end{equation}
\end{prop}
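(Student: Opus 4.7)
The proposition is the frozen–bad–points analogue of Proposition~\ref{prop:absolute lower}, and the plan is to reproduce the three-step architecture of Section~\ref{sub:abs lower}—parity cancellation reducing to almost Eulerian subgraphs, Penrose resummation collapsing connected graphs onto trees with integrable weights, and a Kotecký–Preiss anchored norm summation—while accounting for the new bonds $f_{i\infty}^{v^Z}$ that couple each good dipole to the frozen configuration $Z$. The starting point is to insert Definition~\ref{def:activity upper} into the series, decompose every Mayer bond $f_{ij}^{v^Z}$ as $a_{ij}^{v^Z}+b_{ij}^{v^Z}$ (Definition~\ref{def:awbw}), and split $f_{i\infty}^{v^Z}=\sum_{k\ge 1}\tfrac{(-\beta)^k}{k!}\bigl(\sum_{j\in V_\bad}v^Z_{ij}\bigr)^k\indic_{\cap_j\mc{A}_{ij}}$ into its parts that are odd and even under $\vec r_i\mapsto -\vec r_i$ (keeping $z_i$ fixed). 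Applying Lemma~\ref{lemma:cancellation odd} with the bad points treated as fixed external charges, only those configurations survive integration in which every good point $i\in V^\infty\cup V_X$ has \emph{even total odd-degree}, counting both odd good–good bonds in $E_1\cup\cdots\cup E_n$ and odd good–bad contributions.

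The structural step—and the main obstacle—is a modified quadratic estimate generalizing Corollary~\ref{coro:prod a}. Let $H$ denote the subgraph of odd good–good bonds. After parity cancellation $H$ is no longer Eulerian relative to $X$; instead, decompose $H$ into its $2$-core (a tree of $2$-edge-connected relative blocks) and attached pendant trees, the leaves of which are exactly the odd-degree vertices. By the parity constraint, each such leaf $i$ must carry an odd number of odd good–bad interactions, and Lemma~\ref{lemma:energy goodbad} gives
\[
\Bigl|\sum_{j\in V_\bad}v^Z_{ij}\Bigr|\le C\,\frac{r_i}{d_{i,\bad}}.
\]
On the $2$-edge-connected blocks one still applies the peeling procedure of Definition~\ref{def:peeling lower bound} and inequality~\eqref{eq:peeling}. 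On each pendant path $i_1\to\cdots\to i_K$ leading from the $2$-core to a leaf $i_K$, the missing closing edge of a loop is replaced by the bad-point bond at $i_K$: the factor $r_{i_K}/d_{i_K,\bad}$ is combined with the geometric growth property
$d_{i_K,\bad}\le C^K d_{i_1,\bad}$ of Lemma~\ref{lemma:geo}(5)–(6), so that—modulo a constant $C^{|V_X|}$—one recovers the bound
\[
\prod_{ij\in H}a_{ij}^{v^Z}\le C^{|V_X|}\!\!\prod_{S\in X:\,S\ne \hat S}r_S^{2}\sum_{T\in\mathsf{T}^X}\prod_{ij\in T}\frac{1}{d_{ij}^2}\indic_{d_{ij}\le 16\ve_0\Cut}\indic_{\mc{B}_{ij}^c}\prod_{ij\in F}g_{ij},
\]
exactly as in Corollary~\ref{coro:prod a}, with one new alternative: a tree edge may be replaced by a virtual bad-point ``edge'' of weight $\le C r_i/d_{i,\bad}$. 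The key point is that, when integrated against $\dd x_i\dd y_i$, such a virtual edge contributes at most $C\log(\ve_0\Cut/\lambda)$ extra rather than making the integral explode, because Lemma~\ref{lemma:geo}(7) bounds the number of bad points accessible at each scale.

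With the quadratic estimate in hand, the remainder of the argument is parallel to Section~\ref{sub:abs lower}. Apply the Penrose resummation of Lemma~\ref{lemma:LXi} to replace the sum over $E_1,\ldots,E_n,F$ by a sum over trees with integrable weights $b^{\abs}_{ij}$, exponentiating the remaining edges into $\exp(C_0\sum a^{\abs}_{ij})$, and then invoke Lemma~\ref{lemma:control sum intera} (which only uses that the non-bad points form multipoles of cardinality $\le p(\beta)$ and that interactions live below $\ve_0\Cut$, both guaranteed on $\mathrm{supp}\,\Psf_X^{+,\ve_0,Z}$). Integration over the good dipoles proceeds as in Lemma~\ref{lemma:integration summation} and produces the factor
$C^{|V_X|}\,N\,\ve_0^{2\alpha(\beta)|V_X|-2}\,\lambda^{(2-\beta)|V_X|}\,\delta_{\beta,\lambda}$ per cluster of size $|V_X|>p(\beta)$, plus a separate contribution of order $|V_\bad|$ coming from clusters that carry at least one active $f_{i\infty}^{v^Z}$ factor (each such ``anchor'' to a bad point produces, after the geometric integration at the leaf, a bound proportional to the number of bad points it can be attached to, which is $O(|V_\bad|)$).

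Finally, sum over subpartitions $X\in\mc{P}(\pi)$ with $|V_X|>p(\beta)$ as in Lemma~\ref{lem:cvgentseries}: the assumption~\eqref{bornesnkbs} on the $n_k$'s is what allows the binomial factor $\prod_k\binom{n_k}{n_k'}$ to absorb the divergent $\lambda^{-2(k-1)n_k'}$ coming from $\Msf^0_\infty(X)^{-1}$, while the geometric factor $\ve_0^{2\alpha(\beta)|V_X|-2}$ makes the tail in $|V_X|$ summable once $\ve_0$ is small. This proves~\eqref{eq:rest u}. To obtain~\eqref{eq:rest b u} one applies Lemma~\ref{lemma:resum2} and Rota's bound~\eqref{eq:rota} exactly as in the proof of Proposition~\ref{prop:absolute lower}, reducing the sum over connected $n$-tuples to a sum over spanning trees, using the anchored-norm
$\sup_{S_0\in\pi}\sum_{X\ni S_0}|\Ksf_{\ve_0}^{+,Z}(X)|20^{|X|}=O(\lambda^2)+O(\tfrac{|V_\bad|}{N})$ obtained by the same derivation as~\eqref{eq:anch lem all}. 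I expect the single genuinely new difficulty to be the quadratic estimate of the second paragraph, where care is needed to keep the combinatorial factor $C^{|V_X|}$ independent of which edges are replaced by bad-point virtual edges; once this is set up, the rest of the argument is a bookkeeping adaptation of Section~\ref{sub:abs lower} with the bad-point interactions treated as controlled external sources.
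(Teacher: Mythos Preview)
Your proposal is essentially correct and follows the paper's architecture closely: the paper establishes the frozen-bad-points analogues of Lemmas~\ref{lemma:LXi} and~\ref{lemma:sum Euler} (these are Lemmas~\ref{lemma:split u} and~\ref{lemma:sum Euler u}), then observes via Lemma~\ref{lemma:simplified u} that the resulting bound on $|\Ksf_{\ve_0}^{+,Z}(X)|$ is \emph{identical} to that of Lemma~\ref{lemma:simplification}, whence the proofs of Lemma~\ref{lem:cvgentseries} and Proposition~\ref{prop:absolute lower} apply verbatim.

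One point is sharper in the paper than in your sketch. You propose that clusters carrying an active $f_{i\infty}^{v^Z}$ factor contribute a ``separate'' $O(|V_\bad|)$ term, obtained by integrating the bad-point attachment and counting bad points; your anchored norm likewise carries an $O(|V_\bad|/N)$ correction. The paper does \emph{not} do this: the quadratic estimate of Corollary~\ref{coro:prod a u}, specifically~\eqref{mainducorollaire}, absorbs the bad-point interactions completely and delivers exactly the same upper bound as Corollary~\ref{coro:prod a}---no $d_{i,\bad}$ or $|V_\bad|$ survives in the final inequality. This is achieved by using $r_i/d_{i,\bad}\le 1$ pointwise (from~\eqref{eq:geo1}) after the heavy--light redistribution of Lemmas~\ref{lemma:pendant} and~\ref{lemma:block} has shifted the leaf denominators $d_{l,\bad}^{-1}$ to branching points at cost $C^{|V_X|}$. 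Consequently the paper's proof actually yields the stronger bound $CN\delta_{\beta,\lambda}$, and the $|V_\bad|$ in the statement is slack; your proposed mechanism of integrating $r_i/d_{i,\bad}$ to pick up a factor of $|V_\bad|$ via a volume count is unnecessary and would muddy the argument. The genuinely delicate part, which you correctly flag, is keeping the combinatorial cost of the pendant-tree and bridge-tree redistribution at $C^{|V_X|}$; the paper handles this via the heavy--light decomposition (Definition~\ref{def:heavy}, Lemma~\ref{lemma:heavy light}) applied separately to each pendant tree and then to the bridge tree of the $2$-core.
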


The proof is a variation on the proof of Proposition \ref{prop:absolute lower} and is provided in Section \ref{sub:abs upper}.

\subsection{Core-forest decomposition} \label{sub:core forest}

In this section we bound the activity $\Ksf_{\ve_0}^{+,Z}(X)$ of a subpartition
$X$ of $[N]$. Following Definition \ref{def:awbw}, decompose each Mayer bond into
\[
f_{ij}^{v^Z}=a_{ij}^{v^Z}+b_{ij}^{v^Z}.
\]

We begin by splitting $f_{i\infty}^{v^Z}$ (Definition \ref{def:iinfty}), the Mayer bond between good and bad points, into its odd and even parts:

\begin{definition}\label{def:ainfty}
	Let $V_\bad\subset [N]$ and $Z\in (\Lambda^2)^{|V_\bad|}$. For every $i\in [N]\setminus V_\bad$, we split $f_{i\infty}^{v^Z}$ into $a_{i\infty}^{v^Z}+b_{i\infty}^{v^Z}$ where
	\begin{equation*}
		a_{i\infty}^{v^Z}\coloneqq -\sum_{k\, \mathrm{ odd}\, }\frac{\beta^k}{k!}\left(\sum_{j\in V_\bad}v_{ij}^Z\right)^k\quad \mathrm{and}\, \quad b_{i\infty}^{v^Z}\coloneqq \sum_{k\, \mathrm{ even},\,  k\neq 0}\frac{\beta^k}{k!}\left(\sum_{j\in V_\bad}v_{ij}^Z\right)^k.
	\end{equation*}
\end{definition}
\begin{remark}
	\label{remarkbornesab}
	In view of Lemma \ref{lemma:energy goodbad}, on the event $I_\bad=V_\bad$, using \eqref{eq:geo1} we may bound
	\begin{equation}\label{bornesabinfty}
		|a_{i\infty}^{v^Z}|\le C\frac{r_i}{d_{i,\bad}},\qquad |b_{i\infty}^{v^Z}|\le C\frac{r_i^2}{d_{i,\bad}^2}
	\end{equation}
	for some constant $C>0$ depending on $\beta$ and $q(\beta)$.
	
	Also  if $i,j\in I_\good$, in view of \eqref{eq:v'} in Lemma \ref{lemma:error'}, and using that $\tau_i^{+,Z}=\tau_i^{-,Z}\in[\max(\frac14r_i,\lambda), \ve_0\Cut]$ for $i\in I_\good$,  we have
	\begin{equation}\label{bornesabZ}
		|a_{ij}^{v^Z}\indic_{\mc{B}_{ij}^c}|\le C a_{ij}^\abs,\quad |b_{ij}^{v^Z}\indic_{\mc{B}_{ij}^c}|\le C b_{ij}^\abs
	\end{equation}
	where $a_{ij}^\abs, b_{ij}^\abs$ are as in Definition \ref{def:abs ab}, and $C>0$ depends only on $\beta$.
\end{remark}

Recall that for the lower bound, Corollary \ref{coro:reduction Euler} shows that when the odd weights $a_{ij}^{\tilde v}$ are used, only Eulerian graphs relative to $X$ contribute. In the present setting, some vertices can interact with bad points through $a_{i\infty}^{v^Z}$ or $b_{i\infty}^{v^Z}$; let $V_1^\infty$ and $V_2^{\infty}$ denote those (disjoint) sets. We will show that the only graphs with odd internal weights that contribute are the following:

\begin{definition}[Eulerian graphs with exceptional vertices]\label{def:Euler off}
	Let $X$ be a subpartition of $[N]$. Let $V_1^\infty\subset V_X$. We say that $E\in \Eul^X(V_1^\infty)$ if
	\begin{enumerate}
		\item $E\subset \mc{E}^\inter(X)$,
		\item for every $S\in X$, 
		\begin{equation*}
			\deg_{E}(S)+\sum_{i\in S}\indic_{i\in V_1^\infty}\quad \mathrm{ is} \,  \mathrm{even}.
		\end{equation*}
	\end{enumerate}
	
	Moreover, we say that $E\in \Eulc^X(V_1^\infty)$ if $E\in \Eul^X(V_1^\infty)$ and if $E$ is connected relative to $X$, i.e.~we set 
	\begin{equation*}
		\Eulc^X(V_1^\infty)\coloneqq \Eul^X(V_1^\infty)\cap \mathsf{E}^X. 
	\end{equation*}
\end{definition}

By using the parity argument of Lemma \ref{lemma:cancellation odd}, we obtain the following:

\begin{coro}\label{coro:reduction Euler u}
	Let $X$ be a subpartition of $[N]$.
	
	We have 
	\begin{multline*}
		\Ksf_{\ve_0}^{+,Z}(X)=\sum_{\substack{V_1^\infty, V_2^\infty \subset V_X:\\ V_1^\infty\cap V_2^\infty=\emptyset }}  \sum_{n=0}^\infty \frac{1}{n!}\sum_{\substack{X_1,\ldots,X_n\subset X\\ \mathrm{disjoint} }}\prod_{l=1}^n
		\Biggr(\sum_{ E_{l,1}\in \Eul^{X_l}(V_{X_l}\cap V_1^\infty) }  \sum_{\substack{E_{l,2}:E_{l,1}\cup E_{l,2}\in \mathsf{E}^{X_l}\\
				E_{l,1}\cap E_{l,2}=\emptyset  }} \Biggr)\\ \sum_{F\in \mathsf{E}^{\Coarse_X(X_1,\ldots,X_n)}}\dE_{\Psf_{X}^{+,\ve_0,Z}}\left[\prod_{l=1}^n\left(\prod_{ij\in E_{l,1}}a^{v^Z}_{ij}\prod_{ij\in E_{l,2} }b^{v^Z}_{ij}\right)\prod_{ij\in \cup_l\mc{E}^\inter(X_l)}\indic_{\mc{B}_{ij}^c}\prod_{ij\in F}(-\indic_{\mc{B}_{ij}})\prod_{i\in V_1^\infty}a_{i\infty}^{v^Z}\prod_{i\in V_2^\infty}b_{i\infty}^{v^Z} \right].
	\end{multline*}
\end{coro}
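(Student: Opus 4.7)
The approach is to start from the defining expansion of $\Ksf_{\ve_0}^{+,Z}(X)$ in Definition \ref{def:activity upper}, decompose each Mayer bond additively into its odd and even parts using Definitions \ref{def:awbw} and \ref{def:ainfty}, and then apply a parity argument modeled on Lemma \ref{lemma:cancellation odd} to kill the terms that vanish by symmetry.

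First I would split $f_{ij}^{v^Z}=a_{ij}^{v^Z}+b_{ij}^{v^Z}$ for every edge $ij$ that appears inside some $E_l$, and split $f_{i\infty}^{v^Z}=a_{i\infty}^{v^Z}+b_{i\infty}^{v^Z}$ for every vertex $i\in V^\infty$. Expanding the two products introduces an auxiliary partition of each $E_l$ into a subset $E_{l,1}$ carrying the odd weight and its complement $E_{l,2}=E_l\setminus E_{l,1}$ carrying the even weight, and a splitting $V^\infty=V_1^\infty\sqcup V_2^\infty$ according to whether one picks $a_{i\infty}$ or $b_{i\infty}$. Since $V^\infty$ itself is free, the combined sum over $V^\infty\subset V_X$ and $V_1^\infty\subset V^\infty$ reindexes as a single sum over disjoint pairs $(V_1^\infty,V_2^\infty)$ of subsets of $V_X$, which matches the outer summation in the target formula.

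Next I would run the parity argument. For each $S_0\in X$, consider the involution that simultaneously swaps $x_i\leftrightarrow y_{\sigma_N(i)}$ (equivalently, $\vr_i\mapsto -\vr_i$ with the midpoint $z_i$ fixed) for every $i\in S_0$. This involution preserves the measure $\Psf_X^{+,\ve_0,Z}$ together with every indicator $\indic_{\mc{B}_S}$, $\indic_{\mc{B}_{ij}}$, $\indic_{\mc{B}_{ij}^c}$ and $\indic_{\mc{A}_{ij}}$ (these quantities depend only on inter-dipole distances, dipole sizes, and the symmetric stable-matching structure); it also fixes every intra-$S_0$ factor $e^{-\beta v_{ij}^Z}$, because $v_{ij}^Z$ is odd in each of $\vr_i,\vr_j$ separately and hence invariant under a simultaneous flip of both; finally, it fixes every $b^{v^Z}$-weight. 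On the other hand it introduces a factor $(-1)$ for each edge $ij$ with exactly one endpoint in $S_0$ contributing $a_{ij}^{v^Z}$, and a factor $(-1)$ for each $i\in S_0\cap V_1^\infty$ contributing $a_{i\infty}^{v^Z}$ (both by Remark \ref{remark:parity}, which extends to $a_{i\infty}^{v^Z}$ since it is an odd power-series in $\sum_{j\in V_\bad}v_{ij}^Z$, itself odd in $\vr_i$). The integrand therefore picks up the total sign $(-1)^{\deg_{\cup_l E_{l,1}}(S_0)+|S_0\cap V_1^\infty|}$, and the integral vanishes unless this parity is even for every $S_0\in X$.

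Finally I would read off the surviving constraint: when $S_0\in X_l$ for some $l$ the only contribution to the boundary degree is from $E_{l,1}$, and the evenness condition is exactly $E_{l,1}\in \Eul^{X_l}(V_{X_l}\cap V_1^\infty,V_{X_l}\cap V_2^\infty)$ of Definition \ref{def:Euler off}; when $S_0\notin\bigcup_l X_l$, evenness requires $|S_0\cap V_1^\infty|$ to be even, but terms violating this already contribute $0$ by the parity argument and may be freely included in the sum without changing its value. Collecting the surviving contributions yields the claimed identity. The main technical point to verify carefully is the invariance of $\indic_{\mc{A}_{ij}}$ and $\indic_{\mc{B}_S}$ under the simultaneous swap inside $S_0$, which reduces (via Lemma \ref{lemma:stable} and Lemma \ref{lemma:Aij}) to the fact that the stable-matching algorithm of Definition \ref{def:stable match} is invariant under simultaneously relabeling positive and negative charges along any subset of already-matched pairs; this is exactly the invariance already exploited in the proof of Corollary \ref{coro:reduction Euler}, and the only genuinely new ingredient here is tracking the sign contributed by the $\infty$-bonds, which is precisely what the Eulerian-with-exceptional-vertices condition of Definition \ref{def:Euler off} encodes.
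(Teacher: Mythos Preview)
Your proposal is correct and follows essentially the same approach as the paper, which simply states that the result follows by the parity argument of Lemma~\ref{lemma:cancellation odd}; you have spelled out the details (the odd/even splitting of both $f_{ij}^{v^Z}$ and $f_{i\infty}^{v^Z}$, the reindexing of the sum over $V^\infty$ into disjoint pairs $(V_1^\infty,V_2^\infty)$, and the observation that terms violating the parity condition for $S_0\notin\bigcup_l X_l$ contribute zero and so may be retained in the sum). One small clarification: for inter-multipole edges $ij\in E_{l,1}\cup E_{l,2}$ the factor $\indic_{\mc{A}_{ij}}$ inside $a_{ij}^{v^Z},b_{ij}^{v^Z}$ is not itself invariant under a one-sided swap, but the accompanying factor $\indic_{\mc{B}_{ij}^c}$ forces $\indic_{\mc{A}_{ij}}=1$ (since $\mc{B}_{ij}^c\subset\mc{A}_{ij}$), so the effective weights reduce to odd/even functions of $v_{ij}^Z$ alone, exactly as in the proof of Corollary~\ref{coro:reduction Euler}.
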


We now weaken the condition $E\in \Eul^X(V_1^\infty)$. Note that $E\in \Eulc^X(V_1^\infty)$ does not guarantee that $E$ is 2-edge-connected relative to $X$, in contrast to the case $V_1^\infty\cup V_2^\infty=\emptyset$ discussed in Lemma~\ref{lemma:euler implies}.

Recall the definition of the 2-core of a graph as introduced in \cite{SEIDMAN1983269} and related notions.

\begin{definition}[Core decomposition of a graph]
	\label{def:2core simple}
	Let $G=(V,E)$ be a finite connected graph.  
	
	\begin{enumerate}
		\item \textbf{$\boldsymbol{2}$-core.}
		The \emph{$2$-core} of $G$, written $C_2(G)$,
		is the unique maximal \emph{induced} subgraph
		whose minimum degree among all vertices (denoted $\delta$) is at least~$2$:
		\[
		C_2(G)=\max\bigl\{H\subseteq G \text{ induced} :
		\delta(H)\ge 2\bigr\}.
		\]
		It can be found by repeatedly deleting every vertex whose
		\emph{current} degree is $<2$ (that is, $0$ or $1$)
		together with all incident edges, until no such vertex remains.
		The resulting subgraph is either empty or connected.
		
		\item \textbf{Pendant trees.}
		The complement $G\setminus C_2(G)$ is a forest made of disjoint trees attached to $C_2(G)$. These components are called \emph{pendant trees}.
		Thus
		\[
		G = C_2(G)\cup
		\bigl(\text{disjoint union of pendant trees}\bigr).
		\]
		
		\item \textbf{Bridge-blocks.}
		A \emph{bridge-block} is any maximal $2$-edge-connected
		subgraph of $C_2(G)$.
		Equivalently, it is a maximal subgraph of $C_2(G)$
		containing no \emph{bridge}--an edge whose deletion
		disconnects $C_2(G)$.
		(Note that a singleton can be a bridge-block.)
		\item \textbf{Bridge-tree.}
		Let $\mathcal{B}$ denote the family of bridge-blocks of $C_{2}(G)$ and write $\mathcal{E}_{\br}$ for the set of bridges of $C_{2}(G)$. Then
		\[
		V\bigl(C_{2}(G)\bigr)=\bigcup_{B\in\mathcal{B}}V(B), 
		\qquad
		E\bigl(C_{2}(G)\bigr)=
		\Bigl(\bigcup_{B\in\mathcal{B}}E(B)\Bigr)\cup
		\mathcal{E}_{\br},
		\]
		so the vertex-sets of distinct blocks are disjoint, the edges of the blocks are pairwise disjoint, and every edge that is not in a block is a bridge. Contracting each block $B\in\mathcal{B}$ to a single vertex and
		retaining every edge in $\mathcal{E}_{\br}$ produces a tree, called the \emph{bridge-tree} of $G$.  Its vertices correspond one-to-one with the bridge-blocks, and its edges correspond to the bridges that join them. 
		
		In other words, setting $Y\coloneqq \{V(B):B\in \mc{B}\}$, we have 
		\begin{equation*}
			(V_Y,\mc{E}_\br) \quad \text{is a tree relative to $Y$.}
		\end{equation*}
	\end{enumerate}
\end{definition}

\begin{figure}[H]
	\centering
	\fbox{\includegraphics[width=0.5\textwidth]{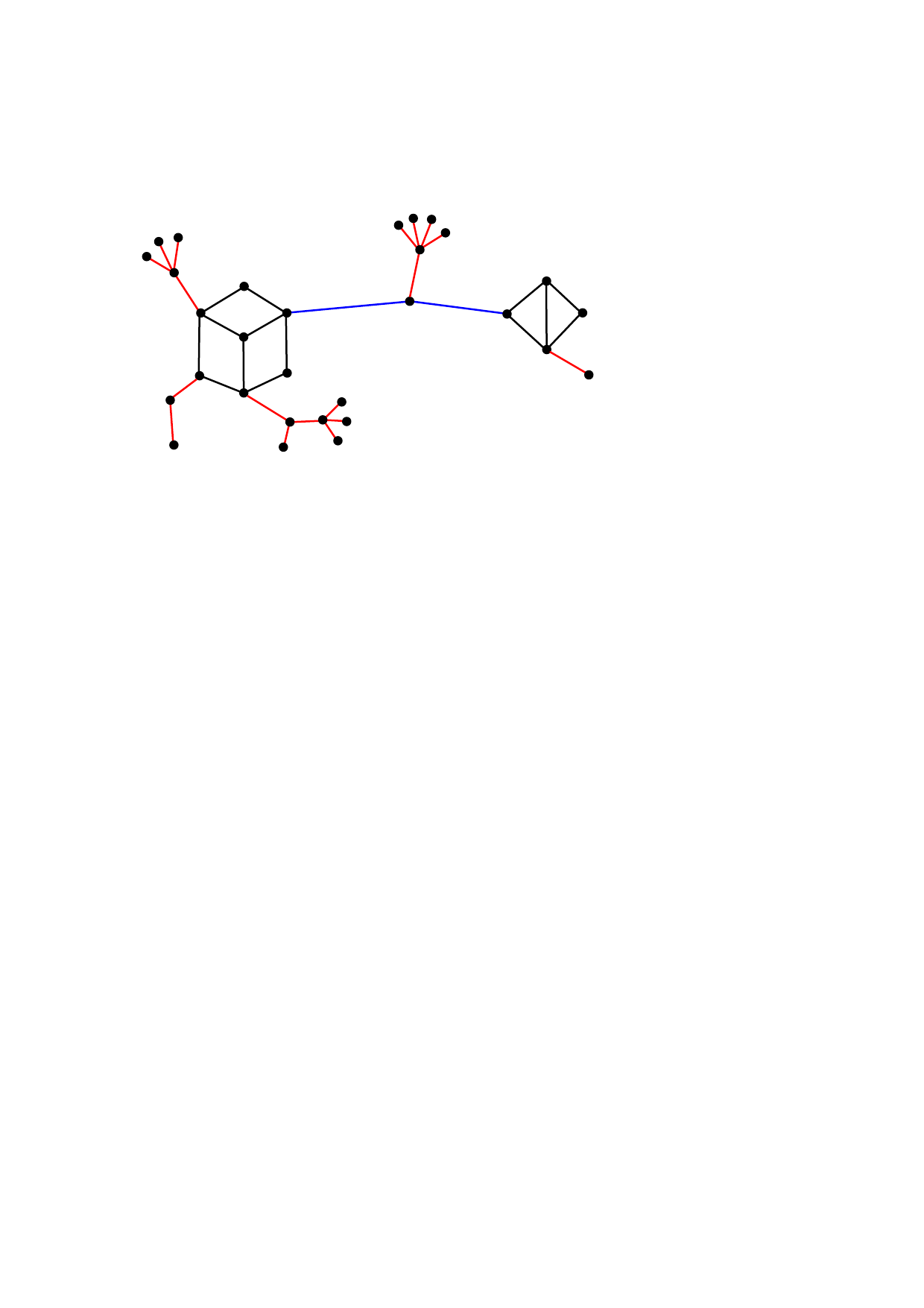}} 
	\caption{In red, the edges of the pendant trees. In blue and black the edges of the $2$-core. In blue the edges of the bridge tree linking the three 2-edge-connected blocks.}
\end{figure}

We now extend Definition \ref{def:2core simple} to the same notion but ``relative to a partition''.

\begin{definition}[Core decomposition of a graph relative to a partition]
	Let $X$ be a subpartition of $[N]$, let $E\in \mathsf{E}^X$ and set $G\coloneqq (V_X,E)$. For every $S\in X$, recall $\deg_E(S)\coloneqq \sum_{i\in S}\deg_E(i)$. 
	
	\begin{enumerate}
		\item \textbf{$\boldsymbol{2}$-core.}
		The \emph{$2$-core} of $G$ relative to $X$, written $C_2(G,X)$,
		is the unique maximal subgraph \emph{induced}  by $V_{X'}$ for $X'\subset X$ such that each $S\in X'$ has degree at least 2. It can be found by repeatedly deleting every $S\in X$ whose
		\emph{current} degree is $<2$ (that is, $0$ or $1$) together with all incident edges, until no such $S$ remains. 
		
		Letting $X'$ be the set of $S\in X$ incident to some edge in $C_2(G,X)$, we have that $C_2(G,X)$ is connected relative to $X'$.
		\item \textbf{Pendant trees.}
		The complement $G\setminus C_2(G,X)$ is a forest made of disjoint trees relative to $X$ attached to $C_2(G,X)$. These components are called \emph{pendant trees}.
		Thus
		\[
		G = C_2(G,X)\cup
		\bigl(\text{disjoint union of pendant trees relative to $X$}\bigr).
		\]
		
		\item \textbf{Bridge-blocks.}
		A \emph{bridge-block} is any maximal $2$-edge-connected subgraph relative to $X$ of $C_2(G,X)$.
		\item \textbf{Bridge-tree.}
		Let $\mathcal{B}$ denote the family of bridge-blocks of $C_{2}(G,X)$ and write $\mathcal{E}_{\br}$ for the set of bridges of $C_{2}(G,X)$ in $\mc{E}^\inter(X)$. Then,
		\[
		V\bigl(C_{2}(G,X)\bigr)=\bigcup_{B\in\mathcal{B}}V(B), 
		\qquad
		E\bigl(C_{2}(G,X)\bigr)=
		\Bigl(\bigcup_{B\in\mathcal{B}}E(B)\Bigr)\cup
		\mathcal{E}_{\br}.
		\]
		Setting $Y\coloneqq \{V(B):B\in \mc{B}\}$, we have 
		\begin{equation*}
			(Y_X,\mc{E}_\br) \text{ is a tree relative to $Y$}.
		\end{equation*}
	\end{enumerate}
\end{definition}

We now simplify the condition that $E\in \Eulc^X(V_1^\infty)$. Recall from Corollary \ref{coro:reduction Euler u} that $V_1^\infty$ stands for the sets of points $i\in V_X$ interacting with bad points with weight $a_{i\infty}^{v^Z}$.

\begin{lemma}[Graph structure]\label{lemma:weaker}
	Let $X$ be a subpartition of $[N]$ and $V_1^\infty \subset V_X$. Suppose that $V_1^\infty\neq \emptyset$ and let
	\begin{equation*}
		X_1^\infty=\{[i]^X:i\in V_1^\infty\}.
	\end{equation*}
	Let $E\in \Eul^X(V_1^\infty)$ and set $G\coloneqq (V_X,E)$. Let $G_1,\ldots,G_n$ be the bridge-blocks and $\mc{E}_\br$ be the bridges of $G$ relative to $X$. For every $i\in [n]$, let $B_i$ be the set of multipoles adjacent to some vertex in $G_i$ or to a vertex in a pendant tree attached to $V_{G_i}$. Let $Y\coloneqq \{V_{G_i}:1\leq i\leq n\}$.

	Then, the following holds:
	\begin{enumerate}
		\item Let $S$ be a leaf of one of the pendant trees, i.e.~let $S\in X$ be such that $\deg_E(S)=1$. Then, $S\in X_1^\infty$.
		\item Suppose that the tree $(V_X,\mc{E}_\br)/Y$ is not reduced to a point. Let $l\in [n]$ be such that $V_{B_l}$ is a leaf of $(V_Y,\mc{E}_\br)/Y$. Then, 
		\begin{equation*}
			B_l\cap X_1^\infty\neq \emptyset.
		\end{equation*}
	\end{enumerate}
\end{lemma}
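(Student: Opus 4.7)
The plan is to derive both claims from the defining parity condition of $\Eulc^X(V_1^\infty,V_2^\infty)$, namely
\[
\deg_E(S)+|S\cap V_1^\infty|\equiv 0\pmod 2\quad\text{for every }S\in X,
\]
combined with the handshaking lemma applied to suitably chosen subgraphs of $(V_X,E)$.

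For item (1), I would simply plug $\deg_E(S)=1$ into the parity condition: this forces $|S\cap V_1^\infty|$ to be odd, hence nonzero, so $S\cap V^\infty\neq\emptyset$ and $S\in X^\infty$.

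For item (2), the plan is to let $H$ be the subgraph of $(V_X,E)$ whose edges are those of $G_l$ together with those of all pendant trees rooted at vertices in $V(G_l)$; by the pendant-tree structure, $V(H)=V_{B_l}$. Because $V(G_l)$ is a leaf of the bridge-tree $(V_X,\mc{E}_\br)/Y$ and two distinct bridge-blocks of $C_2((V_X,E),X)$ cannot be joined by more than one edge of $E$ (otherwise that edge would fail to be a bridge), exactly one edge of $E$ leaves $V_{B_l}$, namely the unique bridge $e_l$ of $\mc{E}_\br$ incident to $V(G_l)$. The handshaking lemma applied to $H$ then yields
\[
\sum_{S\in B_l}\deg_E(S)=2|E(H)|+1,
\]
which is odd. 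Summing the parity condition over $S\in B_l$ gives instead
\[
\sum_{S\in B_l}\deg_E(S)\equiv\sum_{S\in B_l}|S\cap V_1^\infty|=|V_{B_l}\cap V_1^\infty|\pmod 2,
\]
so $|V_{B_l}\cap V_1^\infty|$ is odd and in particular nonzero, producing an $i\in V_1^\infty\subset V^\infty$ with $[i]^X\in B_l$, that is, $B_l\cap X^\infty\neq\emptyset$.

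I do not expect any serious obstacle. The only point requiring some care is the assertion that precisely one edge of $E$ leaves $V_{B_l}$, which I would justify from two standard facts about the core--forest decomposition relative to $X$: each pendant tree meets $C_2((V_X,E),X)$ in a single vertex (otherwise a cycle would survive the peeling and land in the 2-core), and at most one edge between two fixed bridge-blocks can qualify as a bridge of $C_2((V_X,E),X)$.
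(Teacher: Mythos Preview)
Your proposal is correct and takes essentially the same approach as the paper: both item (1) and item (2) are obtained from the defining parity condition combined with the handshaking lemma, and your justification that exactly one edge of $E$ leaves $V_{B_l}$ is the right one. The paper phrases item (2) as a contradiction (assume $B_l\cap X^\infty=\emptyset$, find two multipoles of odd $E_l$-degree, derive a parity violation) rather than your direct summation, but the content is identical.
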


\begin{proof}
	By the definition of $\Eulc^X(V_1^\infty)$, since the degree of $S$ in $(V_X,E)$ relative to $X$ is odd, we must have $S\in X_1^\infty$ (all multipoles not touching $V_1^\infty$ have even degree).

	Let us now prove (2). The argument is the same as in the proof of Lemma \ref{lemma:euler implies}. Let $E_l$ be the set of edges in $E$ adjacent to some $S\in B_l$. Let $a\in V_{B_l}$ and $b\notin V_{B_l}$ be such that $ab\in E$. Suppose by contradiction that 
	\begin{equation*}
		B_l\cap X_1^\infty=\emptyset.
	\end{equation*}
	Since $E\in \Eul^X(V_1^\infty)$, we therefore get that the part $[a]^X$ has even degree relative to $X$ in the graph $(V_{B_l}\cup \{b\} ,E_l\cup \{ab\})$. Hence, $[a]^X$ has odd degree relative to $X$ in the graph $(V_{B_l},E_l)$. By the handshaking lemma, there exists an even number of $S\in B_l$ with odd degree relative to $X$ in the graph $(V_{B_l},E_l)$. Therefore, there must be another $S'\neq [a]^X$ in $B_l$ with odd degree relative to $X$ in the graph $(V_{B_l},E_l)$. Thus, $S'$  also has odd degree relative to $X$ in the graph $(V_{X},E)$, which is a contradiction since $E\in \Eul^X(V_1^\infty)$ and $X_1^\infty \cap B_l=\emptyset.$
\end{proof}

As in Definition \ref{def:peeling into minimal}, we peel each 2-edge-connected block down to a graph that is minimally 2-edge-connected relative to $X$. This reduction leaves the resulting graph with only \(O\bigl(|V_{X}|\bigr)\) edges, ensuring that our activity estimates remain well controlled. By Lemma \ref{lemma:sum Euler}, the edges removed during peeling simply migrate to the interaction term in the exponential term of \eqref{def:H}.

We extend the notation $\Peeled_X$ from Definition \ref{def:peeling lower bound} to general connected graphs by applying it blockwise and adding pendant trees and bridges.

\begin{definition}[Peeling into a minimal skeleton]\label{def:minimal skeleton} 
	Let $X$ be a subpartition of $V_\good$, $E\in \mathsf{E}^X$ and $G\coloneqq (V_X,E)$. Let $G_1,\ldots,G_n$ be the bridge-blocks, $P_1,\ldots,P_k$ be the pendant trees and $\mc{E}_\br$ be the bridges of $G$ relative to $X$. For every $i\in [n]$, let $G_i'\subset G_i$ be a minimal spanning 2-edge-connected (relative to $X_i\coloneqq \Res(X,G_i)$) subgraph of $G_i$ (if multiple exist, select one according to lexicographical order). We let 
	\begin{equation*}
		\Peeled_X(E)\coloneqq \bigcup_{i=1}^n E(G_i')\cup \bigcup_{i=1}^k E(P_i)\cup \mc{E}_\br.
	\end{equation*}
	We say that $\Peeled_X(E)$ is the ``minimal skeleton of~$E$ relative to $X$''. If $\Peeled_X(E)=E$, we say that $E$ is a minimal skeleton relative to $X$.
\end{definition}

From every \emph{minimally 2-edge-connected} block of the reduced skeleton we now
extract a spanning tree using the algorithm described in Definition~\ref{def:peeling mini}.  Adding the pendant trees together with the
collection of bridges then produces a spanning tree of the whole graph relative to~\(X\).

\begin{definition}[Peeling of a minimal skeleton]\label{def:peeling general} 
	Let $X$ be a subpartition of $V_\good$ and let $E\in\mathsf{E}^X$ be a minimal skeleton. Let $G\coloneqq (V_X,E)$. Let $G_1,\ldots,G_n$ be the bridge-blocks, $P_1,\ldots,P_k$ be the pendant trees, and $\mc{E}_\br$ be the bridges of $G$ relative to $X$. For every $i\in [n]$ let $X_i$ be the set of $S\in X$ adjacent to some edge in $G_i$. Recall $\mc{T}^{X}$ and $\mc{F}^{X}$ from Definition \ref{def:peeling lower bound}.
	
	We define
	\begin{equation*}
		\mc{T}_+^{X}(\cdot,E)\coloneqq \bigcup_{i=1}^n \mc{T}^{X_i}(\cdot,E(G_i) )\cup \bigcup_{i=1}^k E(P_i)\cup\mc{E}_{\br},\quad  \text{and}\quad {\mc{F}}_+^X(\cdot,E)\coloneqq \bigcup_{i=1}^n{\mc{F}}^{X_i}(\cdot,E(G_i)).
	\end{equation*}
\end{definition}

\subsection{Study of the $2$-edge connected blocks and their pendant trees}\label{sub:pendant}

By Lemma \ref{lemma:weaker}, the subgraphs with odd internal weights can be reduced to a $2$-core with pendant trees, whose leaves are linked to bad points.

For later use, we need a nice way to decompose a tree into disjoint-edge paths starting at the leaves and ending at branching points or the root. We use the heavy-light decomposition algorithm \cite{SleatorTarjan83,HarelTarjan84}.

\begin{definition}[Heavy-light decomposition]\label{def:heavy} Let $T=(V,E)$ be a finite rooted tree with root $r$. 
	
	For every internal vertex $u$ of $T$, choose one child $c$ whose subtree is largest (breaking ties arbitrarily) and declare the edge $(u,c)$ \emph{heavy};
	all other child edges of~$u$ are \emph{light}.
	Starting at a leaf $i$ distinct from the root, move upward along heavy edges 
	until you meet
	the first light edge (or reach $r$); call the parent endpoint of that light edge~$\loc_T(i)$.
	Because heavy edges form disjoint root-directed chains, each terminating
	at a light edge or at~$r$, the map 	\[
	\loc_T : \Leaves(T)\setminus\{r\} \longrightarrow V,\qquad\loc_T(i)\coloneqq \text{ancestor endpoint of light-edge at end of $i$'s heavy chain},
	\]
	is well defined. Note that if the leaf $i$ is not in a heavy chain, i.e.~adjacent to a light edge, then $\loc_T(i)$ is just the ancestor of the leaf, i.e.~the other endpoint of the edge.
	
\end{definition}

\begin{figure}
	\centering
	\fbox{\includegraphics[width=0.5\textwidth]{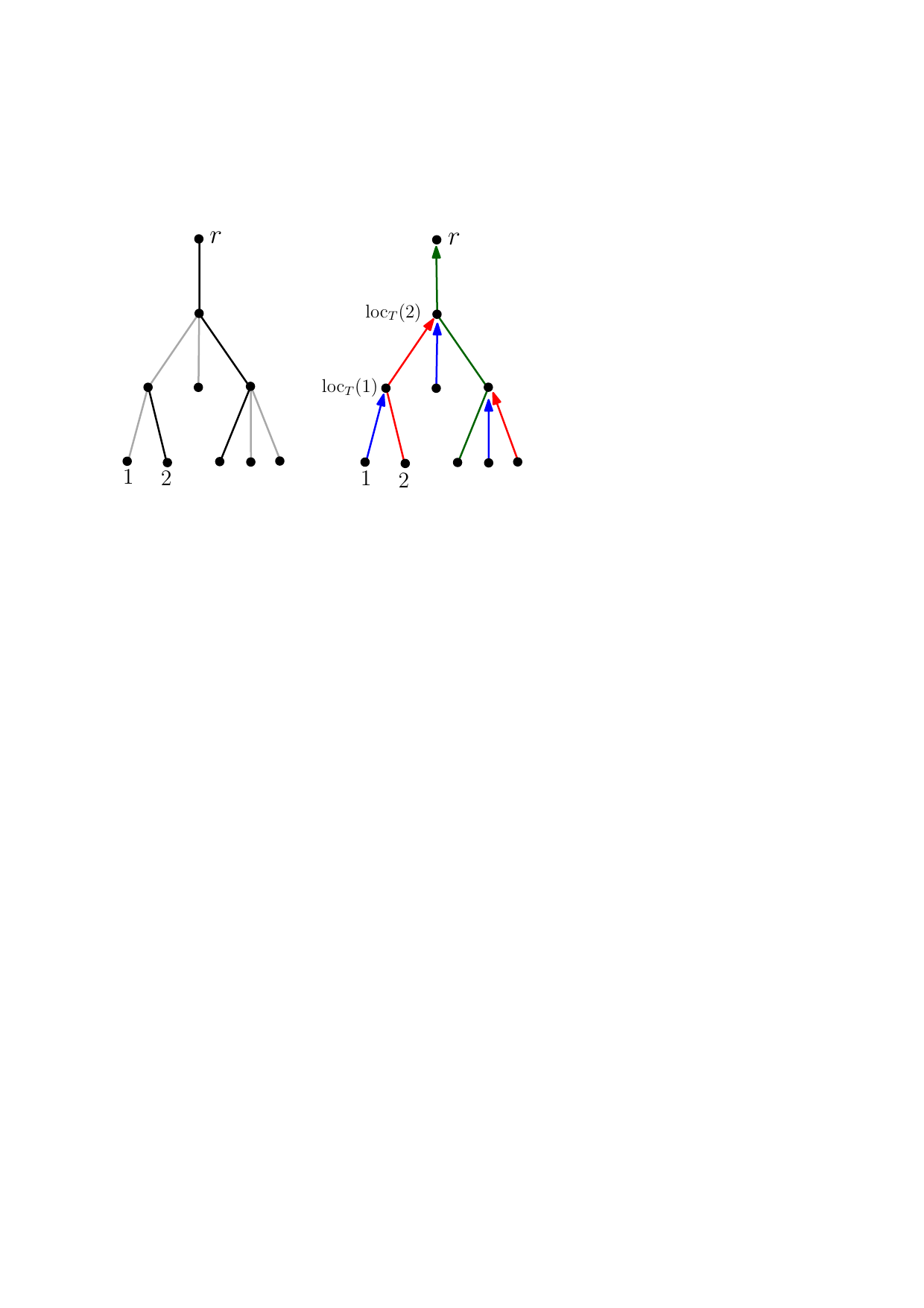}} 
	\caption{Left: rooted tree \(T\) with root \(r\); heavy edges are black, light edges grey. Right: decomposition $T$ into edge disjoint paths. Alternating blue, red and green colors delineate different paths. The head of each colored path starting at a leaf $v$ is the vertex $\loc_T(v)$, which is also the parent of the endpoint of the heavy chain starting at $v$.}
\end{figure}

\begin{lemma}\label{lemma:heavy light}
	Let $T=(V,E)$ be a finite tree with root $r$. Let $\loc_T$ be as in Definition \ref{def:heavy}. For every $i\in \Leaves(T)\setminus \{r\}$, let $P(i,\loc_T(i))$ stand for the set of edges in $T$ between $i$ and $\loc_T(i)$. Then,
	\begin{equation}\label{eq:sqcup}
		E=\bigsqcup_{i\in \Leaves(T)\setminus \{r\} }P(i,\loc_T(i)).
	\end{equation}
	In particular,
	\begin{equation}\label{eq:boundSV}
		\sum_{i\in \Leaves(T)\setminus \{r\} }\dist_T(i,\loc_T(i)) =|V|-1.
	\end{equation}
	Moreover, for every $i\in V\setminus\{r\}$, 
	\begin{equation}\label{eq:degree1}
		|\{l\in \Leaves(T)\setminus \{r\}:\loc_T(l)=i\}|=\max(\deg_{T}(i)-2,0)
	\end{equation}
	and 
	\begin{equation}\label{eq:degree2}
		|\{l\in \Leaves(T)\setminus \{r\}:\loc_T(l)=r\}|=\deg_{T}(r).
	\end{equation}
\end{lemma}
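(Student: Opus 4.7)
The plan is to exploit the structural fact that the heavy edges of $T$ form vertex-disjoint downward chains, each terminating at a unique leaf. Since every internal vertex has exactly one heavy child edge, following heavy edges downward from any vertex yields a unique path ending at a leaf, so each heavy edge lies in exactly one maximal heavy chain, and these chains are indexed bijectively by their bottom leaves. For any leaf $l\in\Leaves(T)\setminus\{r\}$, write $v(l)$ for the topmost vertex of the heavy chain containing $l$; by construction either $v(l)=r$ or the parent edge of $v(l)$ is light, and $\loc_T(l)$ is the parent of $v(l)$ in the second case and $r$ in the first.

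To establish \eqref{eq:sqcup}, I would verify that the paths $P(l,\loc_T(l))$ indexed by non-root leaves are pairwise edge-disjoint and cover $E$. Each such path consists of the heavy chain from $l$ up to $v(l)$, plus (when $v(l)\neq r$) the single terminating light edge joining $v(l)$ to its parent. Disjointness follows because the heavy chains are themselves edge-disjoint and the terminating light edge is determined uniquely by the chain top. For coverage: every heavy edge lies in the unique chain containing it; every light edge $(p,v)$ with $p$ the parent of $v$ is the terminating edge of the heavy chain whose top is $v$ (possibly the trivial one-vertex chain when $v$ is a leaf adjacent to a light parent edge), and hence belongs to $P(l,\loc_T(l))$ for the leaf $l$ at the bottom of that chain. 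The identity \eqref{eq:boundSV} then follows at once by summing path lengths and invoking $|E|=|V|-1$.

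For the counting identities \eqref{eq:degree1} and \eqref{eq:degree2}, I would translate $\loc_T(l)=i$ into a condition on $v(l)$: either $v(l)$ is a light child of $i$, or $i=v(l)=r$. For $i\neq r$, only the first alternative can occur (if $v(l)=i\neq r$, then by definition $\loc_T(l)$ is the parent of $i$, not $i$ itself), so the count equals the number of light children of $i$, which is $(\deg_T(i)-1)-1=\deg_T(i)-2$ whenever $i$ is internal, since $i$ has exactly one heavy child. For $i=r$ both alternatives contribute: one leaf at the bottom of the chain whose top is $r$ itself, plus one leaf per light child of $r$, totalling $1+(\deg_T(r)-1)=\deg_T(r)$.

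No step presents a genuine obstacle; the only conceptual input is the bijection between non-root leaves and heavy chains, after which each assertion reduces to bookkeeping on the children of $i$. I would note in passing that \eqref{eq:degree1} is substantive only for internal $i$: if $i$ is a non-root leaf, no $l$ satisfies $\loc_T(l)=i$, because any element in the image of $\loc_T$ is either $r$ or the parent of some vertex and hence non-leaf.
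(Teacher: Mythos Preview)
Your proof is correct and is precisely the standard argument; the paper itself omits the proof, simply declaring it standard. Your observation that \eqref{eq:degree1} fails literally for non-root leaves (it would assert $0=-1$) is accurate and worth noting: the identity as written holds only for internal vertices $i\neq r$, which is the only case the paper actually invokes downstream.
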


The proof is standard and is therefore omitted.

We now study the product of the odd weights over pendant trees. Recall that by Lemma \ref{lemma:weaker} item (1), every leaf $i$ of the pendant trees interacts with bad points,  hence comes with a weight $a_{i\infty}^{v^Z}$ or $b_{i\infty}^{v^Z}$. By \eqref{bornesabinfty}, we can bound this term by $C\frac{r_i}{d_{i,\bad}}$. We therefore need to estimate the following:

\begin{lemma}\label{lemma:pendant}
	Let $X$ be a subpartition of $V_\good$ such that for every $S\in X$, we have $|S|\leq p(\beta)$. Let $P\in \mathsf{T}^X$ be a rooted tree with root $a\in V_X$ (chosen so that $a$ is adjacent to some edge in $P$). Recall $d_{i,\bad}$ from \eqref{eq:dibad}. There exists a constant $C>0$ depending on $\beta$, $M$ and $p(\beta)$ such that
	\begin{multline*}
		\left(\prod_{ij\in P}|a_{ij}^{v^Z}|\indic_{\mc{B}_{ij}^c}\right)\left(\prod_{i\in \Leaves(P)\setminus \{a\} } \frac{r_i}{d_{i,\bad} }\right)\indic_{V_X\subset I_\good}\indic_{\Pi_\mult^{V_X}=X} \\  \leq C^{|P|}\left(\prod_{S\in X:S\neq [a]^X}r_S^2 \right)\left(\prod_{ij\in P}\frac{1}{d_{ij}^2}\indic_{M\min(r_i,r_j)\leq d_{ij}\leq 16\ve_0 \Cut }\right)\Bigr(\frac{r_a}{d_{[a]^X,\bad}}\Bigr).
	\end{multline*}
\end{lemma}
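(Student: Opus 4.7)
The plan is to bound the left-hand side by $C^{|V(P)|}$ times the right-hand side by means of a tree analogue of the ear-opening argument of Corollary \ref{coro:prod a}. The two critical inputs are the inequality $r_i\le d_{i,\bad}$ for $i\in I_\good$ (Lemma \ref{lemma:geo}\,(1)), which converts excess powers of $r_v$ into allowed powers of $d_{v,\bad}$, and the geometric growth estimate \eqref{eq:geo4'} that compares $d_{\cdot,\bad}$ along a good-vertex path.

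I would first pass to the absolute weight. Every edge $ij\in P$ lies in $\mc{E}^{\inter}(X)$, so $\mc{B}_{ij}^{c}$ holds, and since $V_X\subset I_\good$ implies $\max(r_i,r_j)\le\ve_0\Cut$, Remark \ref{remarkbornesab} gives $|a_{ij}^{v^Z}|\indic_{\mc{B}_{ij}^{c}}\le C\,a_{ij}^{\abs}$ with $C=C(\beta)$. The indicator $\min(r_i,r_j)\le d_{ij}\le 16\ve_0\Cut$ on the right-hand side is then automatic: the upper bound is built into $a_{ij}^{\abs}$, and $\mc{B}_{ij}^{c}$ gives $d_{ij}\ge M\max(\min(r_i,r_j),\lambda)\ge\min(r_i,r_j)$.

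Next I would pass to the quotient tree $\bar P\coloneqq P/X$, rooted at $[a]^X$, and apply the heavy-light decomposition of Definition \ref{def:heavy}. By Lemma \ref{lemma:heavy light}, $E(\bar P)$ splits as a disjoint union $\bigsqcup_{\bar l}\bar Q_{\bar l}$ of leaf-to-$\loc$ paths; since $P\in\mathsf{T}^X$ has a unique $P$-edge per quotient edge, this lifts to an edge-disjoint decomposition of $P$ itself. For one lifted path visiting $v_1,\ldots,v_{m+1}$, combining $a_e^{\abs}\le r_{[i]^X}r_{[j]^X}/d_e^2$ with \eqref{eq:geo4'} yields the path estimate
\begin{equation*}
\prod_{k=1}^{m}a_{v_k v_{k+1}}^{\abs}\cdot\frac{r_{v_1}}{d_{v_1,\bad}}
\;\le\; C^{m}\,\Bigl(\prod_{k=1}^{m}r_{[v_k]^X}^{2}\Bigr)\Bigl(\prod_{k=1}^{m}\frac{1}{d_{v_k v_{k+1}}^{2}}\Bigr)\frac{r_{[v_{m+1}]^X}}{d_{[v_{m+1}]^X,\bad}},
\end{equation*}
i.e.\ the anchor $1/d_{v_1,\bad}$ at the leaf end has been transported to $1/d_{[v_{m+1}]^X,\bad}$ at the $\loc$ end, at the cost of $C^{m}$. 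This is the exact tree analogue of Lemma \ref{lemma:along cycles}, with no closing edge. Vertex-level leaves $l$ lying in a non-leaf multipole would be handled separately via $r_l/d_{l,\bad}\le C\,r_{[l]^X}/d_{[l]^X,\bad}$ (from \eqref{eq:geo1'}), yielding ``excess'' contributions to be absorbed in the final step.

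Multiplying the path-bounds over all leaf multipoles of $\bar P$, edge-disjointness produces $\prod_{ij\in P}1/d_{ij}^{2}$, and the degree identities \eqref{eq:degree1}--\eqref{eq:degree2} of Lemma \ref{lemma:heavy light} collect the $r_{\bar S}$- and $d_{\bar S,\bad}^{-1}$-exponents multipole by multipole. Comparing with the target $\prod_{\bar S\neq[a]^X}r_{\bar S}^{2}\cdot r_a/d_{[a]^X,\bad}$ reveals the surplus as a product of factors $(r_{\bar S}/d_{\bar S,\bad})^{k}$ with $k\ge 0$: at each non-root internal multipole of $\bar P$-degree $d$ the excess is $(r_{\bar S}/d_{\bar S,\bad})^{d-2}$, and at the root of degree $d_a$ it is $(r_{[a]^X}/d_{[a]^X,\bad})^{d_a-1}\cdot(r_{[a]^X}/r_a)$, the last ratio being bounded by $C$ via \eqref{eq:geo1'}. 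Each $r_{\bar S}/d_{\bar S,\bad}$ is $\le C$ by Lemma \ref{lemma:geo}\,(1) and \eqref{eq:geo1'}, and the total number of such excess factors is $O(|V(P)|)$, so the whole surplus is absorbed into $C^{|V(P)|}$. The main difficulty I anticipate is not any single estimate but the exponent bookkeeping at the interface between vertex-level leaves and multipole-level heavy-light paths; the identities \eqref{eq:degree1}--\eqref{eq:degree2} are precisely what guarantees that every surplus term is individually $\le C$.
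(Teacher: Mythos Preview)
Your approach is essentially the paper's: heavy--light decomposition on the quotient tree $P/X$, a per-path estimate that transports the anchor $1/d_{\cdot,\bad}$ from the leaf to the $\loc$ endpoint via \eqref{eq:geo4'}, multiplication over the edge-disjoint paths from Lemma~\ref{lemma:heavy light}, and absorption of the surplus factors using $r_x\le d_{x,\bad}$ from \eqref{eq:geo1}. The path estimate and the exponent bookkeeping via \eqref{eq:degree1}--\eqref{eq:degree2} are correct.

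There is one genuine error at the root. In your path estimate you upgrade the anchor to the multipole level, writing $\frac{r_{[v_{m+1}]^X}}{d_{[v_{m+1}]^X,\bad}}$. That is a legitimate upper bound per path, but it discards information you need at the end: after collecting at the root you must pass from $r_{[a]^X}$ down to $r_a$, and you invoke \eqref{eq:geo1'} to claim $r_{[a]^X}/r_a\le C$. Estimate \eqref{eq:geo1'} compares $d_{i,\bad}$ for vertices in the same multipole, not their radii; the ratio $r_{[a]^X}/r_a$ is genuinely unbounded (a multipole can contain one very short dipole and one long one).

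The paper sidesteps this by keeping the anchor at the \emph{vertex} level throughout: its path estimate \eqref{eq:apply} terminates with $r_x/d_{x,\bad}$ for the specific vertex $x=\loc(w)\in V_X$ that is the $P$-endpoint of the lifted path. Multiplying over leaves gives $\prod_{x\in V_X}(r_x/d_{x,\bad})^{|\{w:\loc(w)=x\}|}$, each factor $\le 1$ by \eqref{eq:geo1}. The key observation is that the root \emph{vertex} $a$ is itself incident to a $P$-edge (this is the situation in every application, where $a=\Root(P)$ is the attachment point to the $2$-core), so one lifted path literally ends at $a$, producing the factor $r_a/d_{a,\bad}$; keep that factor and bound the others by $1$. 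Your fix is simply not to pass to $r_{[v]^X}$ in the anchor; the rest of your argument then goes through unchanged.
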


\medskip

Let us explain the proof of the lemma. Start with the toy case $P=(12)$ and declare $1$ as the root. One should think of $P$ as rooted at $1$ in the 2-core.

By Lemma \ref{lemma:error'}, we have 
\begin{equation*}
	|a_{12}^{v^Z}|\leq C\frac{r_1 r_2 }{\max(d_{12},\lambda)\max(d_{12},r_1,r_2)}\indic_{d_{12}\leq 16\ve_0\Cut} \le C a_{12}^\abs.
\end{equation*}
Therefore,
\begin{equation}\label{eq:displaya}
	|a_{12}^{v^Z}| \frac{r_2}{d_{2,\bad}} \leq C \frac{r_2^2}{d_{12}\max(d_{12},r_1,r_2)}\indic_{d_{12}\leq 16\ve_0\Cut} \frac{r_1}{d_{2,\bad}}.
\end{equation}
Since the tree $12$ is rooted in the 2-core at $1$, the term $r_1^2$ will already appear in the peeling of the 2-core. Hence, to have a purely quadratic estimate, our bound on $|a_{12}^{v^Z}| |f_{2\infty}^{v^Z}|$ should not contain the term $r_1$. The key fact is that, if $a_{12}^{v^Z}\neq 0$, then $v_{12}^Z\neq 0$ and therefore, by the estimate \eqref{eq:geo4} of Lemma~\ref{lemma:geo}, 
$
d_{2,\bad}\geq \frac{1}{6}d_{1,\bad}.
$
Moreover, by \eqref{eq:geo1} in Lemma~\ref{lemma:geo}, since $1$ is a good point, we have $r_1\leq d_{1,\bad}$.
Inserting these two facts into \eqref{eq:displaya}, we obtain
\begin{equation*}
	|a_{12}^{v^Z}| \frac{r_2}{d_{2,\bad}}\leq  C\frac{r_2^2}{d_{12}\max(d_{12},r_1,r_2)}\indic_{d_{12}\leq 16\ve_0\Cut}.
\end{equation*}

For a general tree \(P\in \mathsf{T}^X\), we apply the assignment map (Definition~\ref{def:heavy}) to split \(P\) into edge-disjoint paths, each running from a leaf \(l\) distinct from the root to its assigned branching vertex (or the root) \(\loc_P(l)\).
On such a path, the term \(d_{l,\bad}^{-1}\) that accompanies the leaf is sufficient to absorb the radius of the endpoint, because
\[
d_{l,\bad}^{-1}
\le
6^{\dist_P(l,\loc_P(l))}
\,d_{\loc_P(l),\bad}^{-1}.
\]
Replacing \(d_{l,\bad}^{-1}\) by \(d_{\loc_P(l),\bad}^{-1}\) therefore costs at most the factor
\(6^{\dist_P(l,\loc_P(l))}\).  Summing these distances over all leaf--to--root paths gives a total exponent equal to \(|P|-1\).

Consequently, every extra radius introduced at a branching vertex (or at the root) is canceled by the corresponding leaf denominator, at the harmless overall cost \(C^{|P|}\).

\medskip

\begin{proof}[Proof of Lemma \ref{lemma:pendant}]
	We work on the event where $V_X\subset I_\good$, $\Pi_\mult^{V_X}=X$ and where 
	\begin{equation*}
		\prod_{ij\in P}|a_{ij}^{v^Z}|\neq 0,
	\end{equation*}
	otherwise the bound is clear.
	
	\paragraph{\bf{Step 1: heavy-light decomposition}}
	Let $\loc_{(V_X,P)/X}$ be the map of Definition \ref{def:heavy} for the tree $(V_X,P)/X$, rooted at $[a]^X$. For every $w\in \Leaves(P)\setminus \{a\}$, we let $\loc(w)$ be the unique $i\in V_X$ such that $$[i]^X=\loc_{(V_X,P)/X}([w]^X)$$ and such that the simple path in the quotient from $[w]^X$ to $[i]^X$ lifts into a path with endpoints $w$, $i$. 
	
	For every $w\in \Leaves(P)\setminus \{a\}$, let $P(w,\loc(w))$ stand for the set of edges between $w$ and $\loc(w)$ in $P$.
	Recall that by Lemma \ref{lemma:heavy light},
	\begin{equation*}
		P=\bigsqcup_{w\in \Leaves(P)\setminus \{a\}}P(w,\loc(w)).
	\end{equation*}
	Thus,
	\begin{multline}\label{eq:HL decomp}
		\left(\prod_{ij\in P}|a_{ij}^{v^Z}|\indic_{\mc{B}_{ij}^c}\right)\left(\prod_{i\in \Leaves(P)\setminus \{a\} }\frac{r_i}{d_{i,\bad}} \right)
		=\prod_{w\in \Leaves(P)\setminus \{a\}}   \left(\left(\prod_{ij\in P(w,\loc(w)) }|a_{ij}^{v^Z}|\indic_{\mc{B}_{ij}^c}\right) \frac{r_w}{d_{w,\bad}}\right).
	\end{multline}
	
	\paragraph{\bf{Step 2: equilibrating degrees in each path}}
	Fix $w\in \Leaves(P)\setminus \{a\}$ and let $x\coloneqq \loc(w)$. By Remark \ref{remarkbornesab}, \eqref{bornesabZ}, there exists $C>0$ depending on $\beta$ such that
	\begin{equation*}
		\prod_{ij\in P(w,x)}|a_{ij}^{v^Z} \indic_{\mc{B}_{ij}^c}|\leq C^{|P(w,x)|}\prod_{ij\in P(w,x)}\frac{r_ir_j}{d_{ij}\max(d_{ij},r_i,r_j)}\indic_{d_{ij}\le 16 \ve_0\Cut}.
	\end{equation*}
	
	Notice that
	\begin{equation*}
		\prod_{ij\in P(w,x)}r_i r_j\leq r_w r_x\prod_{S\in \Res(X,P(w,x))\setminus \{[w]^X,[x]^X\}} r_S^2,
	\end{equation*}
	where we recall from Definition \ref{def:part notions} that for every $E\subset \mc{E}^\inter(X)$, $\Res(X,E)$ stands for the set of $S\in X$ adjacent to some edge in $E$. Therefore, combining the above displays, we have
	\begin{multline}\label{eq:partial conc}
		\left(\prod_{ij\in P(w,x)}|a_{ij}^{v^Z}|\indic_{\mc{B}_{ij}^c}\right)  \frac{r_w}{d_{w,\bad} }
		\\
		\leq C^{|P(w,x)|}\left(\prod_{S\in \Res(X,P(w,x))\setminus \{[x]^X\}}r_S^2\right)\left( \prod_{ij\in P(w,x)}\frac{\indic_{d_{ij}\le 16\ve_0\Cut}}{d_{ij}\max(d_{ij},r_i,r_j)}\right) \frac{r_x}{d_{w,\bad}}.
	\end{multline}
	
	By Lemma \ref{lemma:geo}, estimate \eqref{eq:geo4'}, there exists $C>0$ depending on $M$ and $p(\beta)$ such that
	\begin{equation*}
		d_{x,\bad}\leq C^{|P(w,x)|}d_{w,\bad}.
	\end{equation*}

	Therefore, inserting this into \eqref{eq:partial conc}, we get that there exists a constant $C>0$ depending on $\beta$, $M$ and $p(\beta)$ such that
	\begin{multline}\label{eq:apply}
		\left(\prod_{ij\in P(w,x)}|a_{ij}^{v^Z}|\indic_{\mc{B}_{ij}^c}\right)\frac{r_w}{d_{w,\bad}}\\\leq C^{|P(w,x)|}\left(\prod_{S\in \Res(X,P(w,x))\setminus \{[x]^X\}} r_S^2\right)\left( \prod_{ij\in P(w,x)}\frac{\indic_{d_{ij}\le 16\ve_0\Cut}}{d_{ij}\max(d_{ij},r_i,r_j)}\right) \frac{r_x}{d_{x,\bad}}.  
	\end{multline}
	
	\paragraph{\bf{Step 3: conclusion}}
	Applying \eqref{eq:apply} to every $w\in \Leaves(P)\setminus \{a\}$ and using the heavy-light decomposition \eqref{eq:HL decomp} gives 
	\begin{multline}\label{eq:wLeaves}
		\left(\prod_{ij\in P}|a_{ij}^{v^Z}|\indic_{\mc{B}_{ij}^c}\right)\left(\prod_{i\in \Leaves(P)\setminus \{a\}}\frac{r_i}{d_{i,\bad}}\right)\indic_{V_X\subset I_\good}  \leq C^{|P|}\left(\prod_{S\in X:S\neq [a]^X}r_S^2 \right)\\ \times \left(\prod_{ij\in P}\frac{1}{d_{ij}^2}\indic_{
			M\max(\min(r_i,r_j), \lambda)\leq d_{ij}\leq 16\ve_0 \Cut }\right)\prod_{x\in V_X} \Bigr(\frac{r_x}{d_{x,\bad} }\Bigr)^{|\{w\in \Leaves(P)\setminus \{a\}:\loc(w)=x\}|},
	\end{multline}
	where we re-added the constraint $d_{ij} \ge M\max(\min(r_i,r_j), \lambda)$ in view of the fact that $i$ and $j$ cannot be in the same multipole. By \eqref{eq:geo1}, for every $x\in I_\good$, we have $d_{x,\bad}\geq r_x$. Moreover, note that by Lemma \ref{lemma:heavy light}, there exists at least one $w\in \Leaves(P)\setminus \{a\}$ such that $\loc(w)=a$. Therefore, 
	\begin{equation*}
		\prod_{x\in V_X} \Bigr(\frac{r_x}{d_{x,\bad} }\Bigr)^{|\{w\in \Leaves(P)\setminus \{a\}:\loc(w)=x\}|}\leq C^{|P|}\frac{r_a}{d_{a,\bad}}.   
	\end{equation*}
	Inserting this into \eqref{eq:wLeaves}  and using $d_{a,\bad}\geq d_{[a]^X,\bad}$ yields the desired result.
\end{proof}

We now bound the product of the weights over the 2-edge-connected blocks and their pendant trees.

\begin{lemma}\label{lemma:block}
	Let $X$ be a subpartition of $V_\good$ such that for every $S\in X$, we have $|S|\leq p(\beta)$. Let $V_1^\infty\subset V_X$ and let $E\in \Eulc^X(V_1^\infty)$ be such that $\Peeled_X(E)=E$ (i.e.~a connected Eulerian graph with exceptional vertices $V_1^\infty$ by Definition \ref{def:Euler off} and a minimal skeleton by Definition \ref{def:minimal skeleton}). Set $G\coloneqq (V_X,E)$. 
	
	Let $G_1,\ldots,G_n$ be the bridge-blocks, $P_1,\ldots,P_k$ be the pendant trees, and $\mc{E}_\br$ be the bridges of $G$ relative to $X$. For every $l\in [n]$, let $G_l'$ be the union of $G_l$ and the pendant trees attached to it. For every $l\in [n]$, let $B_l$ be the set of multipoles adjacent to some edge in $G_l$ and $B_l'$ be the set of multipoles adjacent to some edge in $G_l'$.

	For every $l=1,\ldots,n$, let $\hat{S}_l$ be the index of the largest $r_S$, $S\in B_l$. Let $T\coloneqq \mc{T}_+^X(\cdot,E)$ and $F\coloneqq {\mc{F}}_+^X(\cdot,E)$ be as in Definition \ref{def:peeling general}. Let $g_{ij}$ be as in Definition \ref{def:gij}.

	If $V_{B_l'}\cap V_1^\infty=\emptyset$, then there exists $C>0$ depending on $\beta, M$ and $p(\beta)$ such that
	\begin{multline}\label{eq:noleaf}
		\left(\prod_{ij\in E(G_l')}|a_{ij}^{v^Z}|\indic_{\mc{B}_{ij}^c}\right)\left(\prod_{i\in V_{B_l'}\cap V_1^\infty }\frac{r_i}{d_{i,\bad} } \right)\indic_{V_X\subset I_\good }\indic_{\Pi_\mult^{V_X}=X}  \\ \leq C^{|V_{B_l'}|}\left(\prod_{S\in B_l':S\neq \hat{S}_l}r_S^2\right){\sum_{T'\in B((x_i,y_i)_{i\in V_X},E(G_l')\cap T) }} \left(\prod_{ij\in T'}\frac{1}{d_{ij}^2}\indic_{ d_{ij}\leq 16\ve_0\Cut}\right)\prod_{ij\in E(G_l)\cap F}g_{ij}.
	\end{multline}
	If $V_{B_l'}\cap V_1^\infty\neq \emptyset$, then there exists $C>0$ depending on $\beta, M$ and $p(\beta)$ such that
	\begin{multline}\label{eq:leaf}
		\left(\prod_{ij\in E(G_l') }|a_{ij}^{v^Z}|\indic_{\mc{B}_{ij}^c}\right)\left(\prod_{i\in V_{B_l'}\cap V_1^\infty}\frac{r_i}{d_{i,\bad}}\right)\indic_{V_X\subset I_\good }\indic_{\Pi_\mult^{V_X}=X} \\ \leq C^{|V_{B_l'}|}\left(\prod_{S\in B_l':S\neq \hat{S}_l}r_S^2\right){\sum_{T'\in B((x_i,y_i)_{i\in V_X},E(G_l')\cap T) }} \left(\prod_{ij\in T'}\frac{1}{d_{ij}^2}\indic_{ d_{ij}\leq 16\ve_0\Cut}\right)\frac{r_{\hat{S}_l}}{d_{\hat{S}_l,\bad}}\prod_{ij\in E(G_l)\cap F}g_{ij}.
	\end{multline}
\end{lemma}

\medskip

\begin{proof}We work on the event where $V_X\subset I_\good$, $\Pi_\mult^{V_X}=X$ and where 
	\begin{equation*}
		\prod_{ij\in E(G_l') }|a_{ij}^{v^Z}|\neq 0.
	\end{equation*}
	For each pendant tree $P$, we denote by $\Root(P)$ the unique vertex
	at which $P$ is attached to the 2-core.\\

	\paragraph{\bf{Step 1: the case $V_{B_l'}\cap V_1^\infty=\emptyset$}}
	By Lemma \ref{lemma:weaker}, the parts $S\in X$ of degree $1$ in $(V_X,E)$ are intersecting $V_1^\infty$. Thus, we deduce that $G_l$ has no pendant tree, hence $G_l'=G_l$. Since $G_l$ is minimally 2-edge-connected relative to $B_l$ (which implies $|E(G_l)|\le 2(|V_{B_l}|-1)$), we deduce from \eqref{bornesabZ} and Corollary \ref{coro:prod a} that
	\begin{equation}\label{eq:2edge}
		\prod_{ij\in E(G_l)  }|a_{ij}^{v^Z}|\indic_{\mc{B}_{ij}^c}\leq C^{|V_{B_l}|}\left(\prod_{S\in B_l:S\neq \hat{S}_l}r_S^2\right){\sum_{T'\in B((x_i,y_i)_{i\in V_X},E(G_l)\cap T) }}\left(\prod_{ij\in T'}\frac{1}{d_{ij}^2}\indic_{d_{ij}\leq 16 \ve_0\Cut}\right)\prod_{ij\in E(G_l)\cap F}g_{ij},
	\end{equation}
	which proves the result.\\

	\paragraph{\bf{Step 2: the case $V_{B_l'}\cap V_1^\infty\neq \emptyset$}} Let us denote by $I_l$ the set of $i\in \{1,\ldots,k\}$ such that $P_i$ is attached to $G_l$, so that $G_l'=G_l\cup \cup_{i\in I_l}P_i$. 
	
	Assume first that $I_l=\emptyset$, i.e.~$G_l'=G_l$. Let $i_0\in V_{B_l'}\cap V_1^\infty$. By the estimate \eqref{eq:geo1} of Lemma \ref{lemma:geo}, for every $i\in I_\good$, we have $\max(r_i,\lambda)\leq d_{i,\bad}$. Thus, we deduce that
	\begin{equation}\label{eq:GlV}
		\prod_{i\in V_{B_l'}\cap V_1^\infty}\frac{r_i}{d_{i,\bad}}\leq  \frac{r_{i_0}}{d_{i_0,\bad}}.
	\end{equation}We next use $r_{i_0}\leq r_{\hat{S}_l}$. By the estimate \eqref{eq:geo4'} of Lemma \ref{lemma:geo}, there exists a constant $C>0$ depending on $M$ and $p(\beta)$ such that
	\begin{equation*}
		d_{\hat{S}_l,\bad}\leq C^{|V_{B_l}|} d_{i_0,\bad}.
	\end{equation*}
	Inserting these into \eqref{eq:GlV}, there exists a constant $C>0$ depending on $M$ and $p(\beta)$ such that
	\begin{equation*}
		\prod_{i\in V_{B_l}\cap V_1^\infty}\frac{r_i}{d_{i,\bad}}\leq C^{|V_{B_l}|} \frac{r_{\hat{S}_l}}{d_{\hat{S}_l,\bad}}.
	\end{equation*}
	Combining this with \eqref{eq:2edge} gives \eqref{eq:leaf} in the case $V_{B_l'}\cap V_1^\infty\neq \emptyset$ and $I_l=\emptyset.$

	Now assume that $I_l\neq \emptyset$. Fix some $r\in I_l$. By Lemma \ref{lemma:weaker}, the leaves of $P_r$ are in $X_1^\infty$. Thus, by Lemma \ref{lemma:pendant}, there exists $C>0$ depending on $\beta$, $M$ and $p(\beta)$ such that
	\begin{multline}\label{eq:pendant1}
		\prod_{ij\in E(P_r)} |a_{ij}^{v^Z}|\indic_{\mc{B}_{ij}^c}\prod_{i\in V(P_r)\cap V_1^\infty}\frac{r_i}{d_{i,\bad}}\leq C^{|V(P_r)|}\frac{r_{\Root(P_r)} }{d_{\Root(P_r),\bad} } \prod_{\substack{S\in \Res(X,E(P_r)):\\ S\neq [\Root(P_r)]^X }}r_S^2\\ \times \prod_{ij\in E(P_r)\cap T}\frac{1}{d_{ij}^2}\indic_{d_{ij}\leq 16\ve_0\Cut},
	\end{multline}
	where we recall that for every set $E\subset \mc{E}^\inter(X)$, $\Res(X,E)$ stands for the set of $S\in X$ adjacent to some edge in $E$.

	For the other pendant trees, we use a rougher estimate:  using Lemma \ref{lemma:pendant} and \eqref{eq:geo1'}, we get that there exists $C>0$ depending on $\beta, M$ and $p(\beta)$ such that for every $u\in I_l\setminus \{r\}$,
	\begin{equation}\label{eq:pendant2}
		\prod_{ij\in E(P_u)} |a_{ij}^{v^Z}|\indic_{\mc{B}_{ij}^c}\prod_{i\in V(P_u)\cap V_1^\infty}\frac{r_i}{d_{i,\bad}}\leq C^{|V(P_u)|} \prod_{\substack{ S\in \Res(X,E(P_u)):\\ S\neq [\Root(P_u)]^X}}r_S^2 \prod_{ij\in E(P_u)\cap T}\frac{1}{d_{ij}^2}\indic_{d_{ij}\leq 16\ve_0\Cut},
	\end{equation}
	where we used that by \eqref{eq:geo1}, for every $a\in V_X$, $r_a\leq d_{a,\bad}$. Thus, combining \eqref{eq:2edge} for the 2-edge-connected part, \eqref{eq:pendant1} and \eqref{eq:pendant2}, we get that there exists $C>0$ depending on $\beta, M$ and $p(\beta)$ such that
	\begin{multline*}
		\left(\prod_{ij\in E(G_l') }|a_{ij}^{v^Z}|\indic_{\mc{B}_{ij}^c}\right)\left(\prod_{i\in V_{B_l'}\cap V_1^\infty}\frac{r_i}{d_{i,\bad}}\right)
		\\ \leq C^{|V_{B_l'}|}\frac{r_{\Root(P_r)}}{d_{\Root(P_r),\bad} } \prod_{S\in B_l':S\neq \hat{S}_l}r_S^2 {\sum_{T'\in B((x_i,y_i)_{i\in V_X},E(G_l')\cap T) }}\prod_{ij\in  T' }\frac{1}{d_{ij}^2}\indic_{d_{ij}\leq 16\ve_0\Cut}\prod_{ij\in E(G_l)\cap F}g_{ij}. 
	\end{multline*}
	By \eqref{eq:geo4'}, there exists $C>0$ depending on $M$ and $p(\beta)$ such that
	\begin{equation*}
		d_{\hat{S}_l,\bad}\leq C^{|V_{B_l}|}d_{\Root(P_r),\bad},
	\end{equation*}
	Using this and $r_{\Root(P_r)}\leq r_{\hat{S}_l}$, this concludes the proof of \eqref{eq:leaf}.
\end{proof}

\subsection{Proof of the quadratic estimate}\label{sub:quadra}

We can now prove the quadratic estimate on the product of the odd weights for minimal skeletons.

\begin{coro}[Quadratic estimate for minimal skeletons]\label{coro:prod a u}
	Let $X$ be a subpartition of $V_\good$ such that for every $S\in X$, we have $|S|\leq p(\beta)$. Let $V_1^\infty\subset V_X$ be such that $V_1^\infty \neq \emptyset$. Let $V_1^\infty\subset V_X$ and let $E\in \Eulc^X(V_1^\infty)$ be such that $\Peeled_X(E)=E$ (i.e.~a connected Eulerian graph with exceptional vertices $V_1^\infty$ by Definition \ref{def:Euler off} and a minimal skeleton by Definition \ref{def:minimal skeleton}). Let $T\coloneqq \mc{T}_+^X(\cdot,E)$ and $F\coloneqq \mc{F}_+^X(\cdot,E)$ be as in Definition \ref{def:peeling general}. Let $g_{ij}$ be as in Definition \ref{def:gij} and $a_{i\infty}^{v^Z}$ be as in Definition~\ref{def:ainfty}. Fix an arbitrary $i_0\in V_{X}$.

	There exists $C>0$ depending on $\beta, M$ and $p(\beta)$ such that
	\begin{multline}\label{eq:squares}
		\left(\prod_{ij\in E}|a_{ij}^{v^Z}|\indic_{\mc{B}_{ij}^c}\right)\left(\prod_{i\in V_1^\infty}|a_{i\infty}^{v^Z}| \right)\indic_{V_X\subset I_\good}\indic_{\Pi_\mult^{V_X}=X} \leq C^{|V_X|}\left(\prod_{S\in X}r_{S}^2\right)\\ \times \sum_{T'\in B((x_i,y_i)_{i\in V_X}, T) }\left(\prod_{ij\in T'}\frac{1}{d_{ij}^2}\indic_{\min(r_i,r_j)\leq d_{ij}\leq 16\ve_0\Cut} \right)\frac{1}{d_{i_0,\bad}^2} 
		\prod_{ij\in F}g_{ij}.
	\end{multline}
	In particular, letting $\hat{S}$ be the $S\in X$ such that $r_S$ is maximal, there exists $C>0$ depending on $\beta, M$ and $p(\beta)$ such that
	\begin{multline}\label{mainducorollaire}
		\left(\prod_{ij\in E}|a_{ij}^{v^Z}|\indic_{\mc{B}_{ij}^c}\right)\left(\prod_{i\in V_1^\infty}|a_{i\infty}^{v^Z}| \right)\indic_{V_X\subset I_\good}\indic_{\Pi_\mult^{V_X}=X}\leq C^{|V_X|}\left(\prod_{S\in X:S\neq \hat{S} }r_{S}^2\right)\\ \times {\sum_{T'\in B((x_i,y_i)_{i\in V_X}, T) }} \left(\prod_{ij\in T'}\frac{1}{d_{ij}^2}\indic_{\min(r_i,r_j)\leq d_{ij}\leq 16 \ve_0\Cut} \right)\prod_{ij\in F}g_{ij}.
	\end{multline}
\end{coro}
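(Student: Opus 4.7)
The plan is to combine the block-level bounds of Lemma~\ref{lemma:block} with a case analysis driven by the parity/structural Lemma~\ref{lemma:weaker}, the extraction of two independent factors of $r/d_{\bad}$ being the crux. First I would apply the core--forest decomposition of Definition~\ref{def:2core simple} to $G=(V_X,E)$, splitting $E$ into its bridge-blocks $G_1,\dots,G_n$, their attached pendant trees (together written $G_l'$), and the bridges $\mc{E}_\br$ that form the bridge tree on $Y=\{V(G_l)\}$. To each $G_l'$ apply Lemma~\ref{lemma:block}: use \eqref{eq:noleaf} when $V(G_l')\cap V^\infty=\emptyset$, and \eqref{eq:leaf} otherwise (the latter supplying the extra factor $r_{\hat S_l}/d_{\hat S_l,\bad}$). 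Bridges $ij\in\mc{E}_\br$ belong to the spanning tree $T$ of Definition~\ref{def:peeling general}, so by \eqref{bornesabZ} each contributes $|a_{ij}^{v^Z}|\indic_{\mc{B}_{ij}^c}\leq C\,a_{ij}^{\abs}\leq C\,r_{[i]^X}r_{[j]^X}/d_{ij}^2$. The surplus $r$-factors from bridges are absorbed into block-level $r_S^2$-factors via a heavy--light traversal of the bridge tree analogous to Lemma~\ref{lemma:pendant}, using \eqref{eq:geo4'} to render the $d_{\cdot,\bad}$ of neighbouring blocks comparable up to $C^{|V_X|}$.

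The key step is to extract two factors of $r/d_{\bad}$, whose product becomes $r_{\hat S}^2/d_{i_0,\bad}^2$ after \eqref{eq:geo1'} and \eqref{eq:geo4'}. In the case $n\geq 2$ the bridge tree has at least two leaves, each a block which by Lemma~\ref{lemma:weaker}(2) intersects $V^\infty$; applying \eqref{eq:leaf} at two distinct leaves yields the required pair. The main obstacle is the single-block case $n=1$, where Lemma~\ref{lemma:block} produces only one factor $r_{\hat S}/d_{\hat S,\bad}$. Here the parity constraint built into $\Eulc^X(V_1^\infty,V_2^\infty)$ comes to the rescue: a handshake count on the degree formula in Definition~\ref{def:Euler off} forces $|V_1^\infty|$ to be even, so whenever $V^\infty\neq\emptyset$ one has either $|V_1^\infty|\geq 2$ or $V_2^\infty\neq\emptyset$. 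In the first sub-case, apply $|a_{i\infty}^{v^Z}|\leq Cr_i/d_{i,\bad}$ at two distinct $V_1^\infty$-vertices (rather than contracting all but one to $1$ as in the proof of Lemma~\ref{lemma:block}); in the second, apply the quadratic bound $|b_{i\infty}^{v^Z}|\leq C(r_i/d_{i,\bad})^2$ from Remark~\ref{remarkbornesab} at a single $V_2^\infty$-vertex. In either sub-case, \eqref{eq:geo1} and \eqref{eq:geo1'} convert the collected $d_{\cdot,\bad}$-factors into $d_{i_0,\bad}^{-2}$ up to $C^{|V_X|}$.

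Assembling the block--pendant estimates, the heavy--light bridge analysis and the two $r/d_{\bad}$-factors produced above yields \eqref{eq:squares}. Finally, \eqref{mainducorollaire} follows immediately from \eqref{eq:squares}: since $r_{\hat S}\leq d_{\hat S,\bad}\leq C^{|V_X|}d_{i_0,\bad}$ by \eqref{eq:geo1} and \eqref{eq:geo4'}, one has $r_{\hat S}^2/d_{i_0,\bad}^2\leq C^{2|V_X|}$, so dropping the $\hat S$-term from $\prod_{S\in X}r_S^2$ together with the factor $d_{i_0,\bad}^{-2}$ produces the second stated inequality.
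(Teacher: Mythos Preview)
The proposal is correct and follows essentially the same approach as the paper: core--forest decomposition, the block-level estimates of Lemma~\ref{lemma:block}, bridge bounds combined with a heavy--light traversal of the bridge tree (the paper's Steps~2--4), and the parity argument for the single-block case $n=1$ (the paper's Step~1). Your framing of the two surviving $r/d_{\bad}$-factors as ``coming from two leaf blocks'' is equivalent to the paper's accounting, which locates them at the root after the heavy--light redistribution; the derivation of \eqref{mainducorollaire} from \eqref{eq:squares} via $r_{\hat S}\le C^{|V_X|}d_{i_0,\bad}$ is exactly how the paper's ``In particular'' is meant to be read.
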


\medskip

Suppose that $(V_X,E)$ has bridge-blocks $G_1,\ldots,G_n$. Consider the tree $\tilde{T}$ formed by the bridges of $(V_X,E)$ relative to $X$ as a tree over $[n]$. Again, we will use the terms $d_{\hat{S}_l,\bad}^{-1}$ in \eqref{eq:leaf} to cancel the extra radii at the branching points or the root of $\tilde{T}$. As in Lemma \ref{lemma:pendant}, we use the geometric estimate \eqref{eq:geo4} to replace $d_{\hat{S}_l,\bad}^{-1}$ by a well-chosen $d_{S,\bad}^{-1}$ up to a cost equal to $C^{\dist_{\tilde{T}}(\hat{S}_l,S)}$. 

Particular care should be taken to show that the total replacement cost is $e^{O(|V_X|)}$. (This prevents us from  working  directly on the quotient graph and the vertices $\hat{S}_l$, hence the intricate proof.)

\begin{proof}
	We work on the event where $V_X\subset I_\good$, $\Pi_\mult^{V_X}=X$ and where
	\begin{equation*}
		\prod_{ij\in E}|a_{ij}^{v^Z}|\neq 0.
	\end{equation*}
	Let $G_1,\ldots,G_n$ be the bridge-blocks, $P_1,\ldots,P_k$ be the pendant trees, and $\mc{E}_\br$ be the bridges of $G$ relative to $X$.\\
	
	\paragraph{\bf{Step 1: the case $n=1$}}
	
	To respect the parity condition, we must have that $|V_1^\infty|\neq 1$. Hence, since $V_1^\infty\neq \emptyset$, we have $|V_1^\infty|\geq 2$. Let $i_1, i_2\in V_1^\infty$ be such that $i_1\neq i_2$. Since $E$ is minimally 2-edge-connected relative to $X$, notice that $|E|\leq 2(|V_X|-1)$. Hence, by  Remark~\ref{remarkbornesab} and Corollary~\ref{coro:prod a}, there exists $C>0$ depending on $\beta$ such that
	\begin{multline*}
		\left(\prod_{ij\in E}|a_{ij}^{v^Z}|\indic_{\mc{B}_{ij}^c}\right)\left(\prod_{i\in V_1^\infty}|a_{i\infty}^{v^Z}| \right)\indic_{V_X\subset I_\good}\leq C^{|V_X|} \left(\prod_{S\in X:S\neq \hat{S} }r_{S}^2\right)\\ \times {\sum_{T'\in B((x_i,y_i)_{i\in V_X}, T) }}\left(\prod_{ij\in T'}\frac{1}{d_{ij}^2}\indic_{\min(r_i,r_j)\leq d_{ij}\leq 16\ve_0\Cut} \right)\left(\prod_{ij\in F}g_{ij}\right)|a^{v^Z}_{i_1\infty}||a^{v^Z}_{i_2\infty}|.
	\end{multline*}
	Notice that we have used the fact that for every $i\in V_X\setminus \{i_1,i_2\}$, by \eqref{bornesabinfty} and \eqref{eq:geo1}, there exists $C>0$ depending on $\beta$ such that $|a^{v^Z}_{i\infty}|\leq C$. Using Remark~\ref{remarkbornesab} again, there exists $C>0$ depending on $\beta$ such that
	\begin{multline*}
		\left(\prod_{ij\in E}|a_{ij}^{v^Z}|\indic_{\mc{B}_{ij}^c}\right)\left(\prod_{i\in V_1^\infty}|a_{i\infty}^{v^Z}| \right)\indic_{V_X\subset I_\good}\leq C^{|V_X|} \left(\prod_{S\in X:S\neq \hat{S} }r_{S}^2\right)\\ \times {\sum_{T'\in B((x_i,y_i)_{i\in V_X}, T) }}\left(\prod_{ij\in T'}\frac{1}{d_{ij}^2}\indic_{\min(r_i,r_j)\leq d_{ij}\leq 16\ve_0\Cut} \right)\left(\prod_{ij\in F}g_{ij}\right) \frac{r_{\hat{S}}^2}{d_{i_1,\bad}d_{i_2,\bad}}.
	\end{multline*}
	Using \eqref{eq:geo4} in Lemma \ref{lemma:geo}, we get that there exists $C>0$ depending on $M$ and $p(\beta)$ such that 
	\begin{equation*}
		\frac{1}{d_{[i_1]^X,\bad}d_{[i_2]^X,\bad}}\leq C^{|V_X|}\frac{1}{d_{i_0,\bad}^2}.
	\end{equation*}
	This proves \eqref{eq:squares} in the case $n=1$. In the rest of the proof, we assume that $n\geq 2$.\\
	
	\paragraph{\bf{Step 2: heavy-light decomposition}}
	
	For every $l\in [n]$, let $G_l'$ be the union of $G_l$ and the pendant trees attached to it. For every $l\in [n]$, let $B_l$ be the set of multipoles adjacent to some edge in $G_l$ and $B_l'$ be the set of multipoles adjacent to some edge in $G_l'$. Let $Y\coloneqq \{V_{B_l}:1\leq l\leq n\}$.\\
	
	Consider the quotient graph $\tilde{T}\coloneqq (V_X,\mc{E}_\br)/Y$ viewed as a tree on $[n]$, with arbitrary root $a\in [n]$. Let $\loc_{\tilde{T}}$ be the allocation map of Definition \ref{def:heavy}. We define 
	\begin{equation*}
		X^*\coloneqq \bigcup_{l\in \Leaves(\tilde{T})\setminus \{a\} } \{ S\in B_l:\deg_{\mc{E}_\br}(S)=1\}.
	\end{equation*}
	
	Let $l\in \Leaves(\tilde{T})\setminus\{a\}$ and let $u$ be the unique neighbor of $\loc_{\tilde{T}}(l)$ on the path from $l$ to $\loc_{\tilde{T}}(l)$ in $\tilde{T}$. Let $S$ be the unique element in $X^*\cap B_l$. We denote by $\loc(S)$ the unique $S'\in B_{\loc(l)}$ such that there exists $S''\in B_u$ and an edge in $\mc{E}_\br$ adjacent to $S''$ and $S'$ (see the figure below).

	\begin{figure}[H]
		\centering
		\fbox{\includegraphics[width=0.7\textwidth]{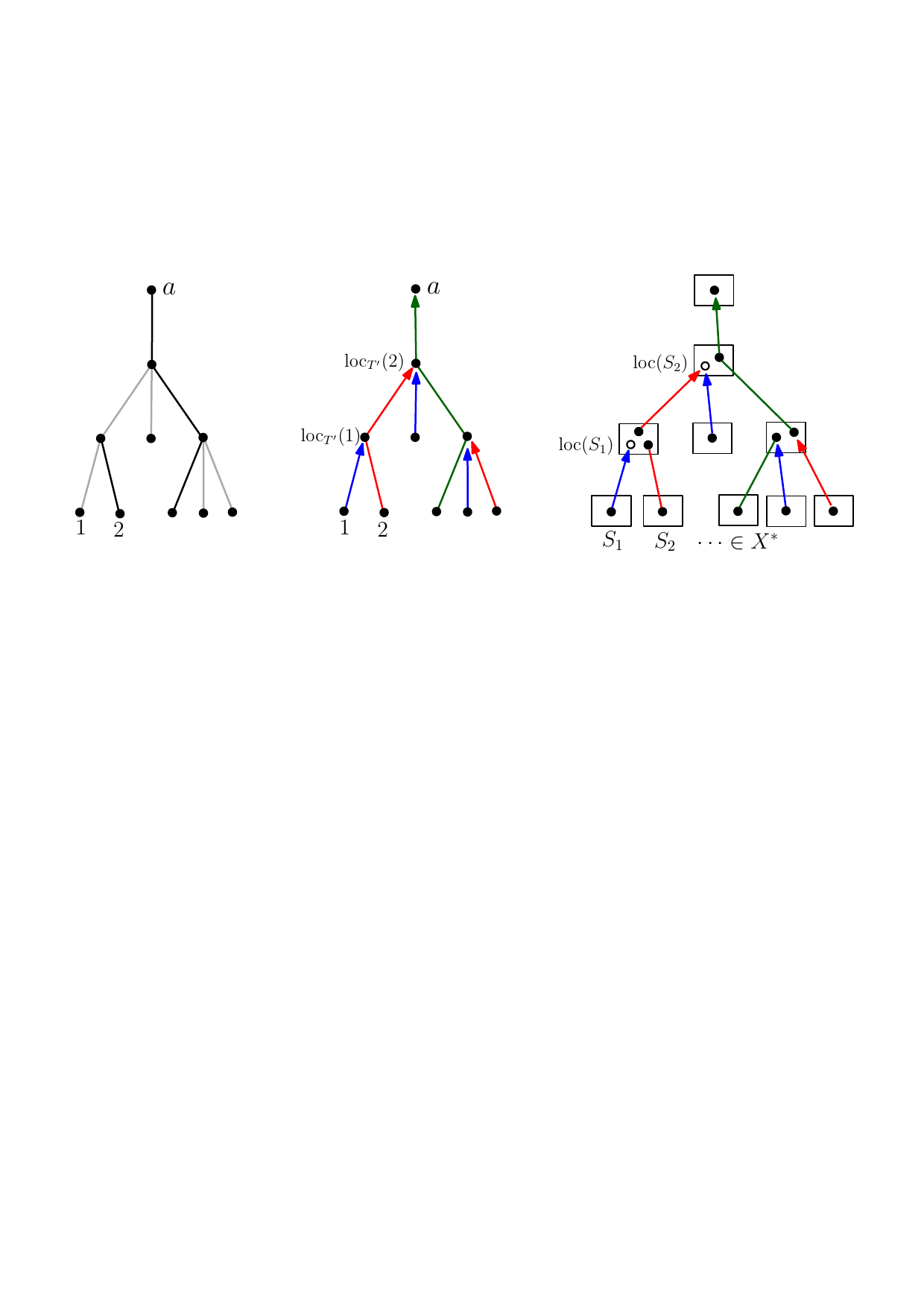}} 
		\caption{Left and center: the quotient graph $\tilde{T}$ rooted at $a$ and its heavy-light decomposition. Right: the graph $(V_X,\mc{E}_\br)/X$. Each square represents a block $B_l$; each vertex a multipole $S\in B_l$ adjacent to some edge in $\mc{E}_\br$. The nodes $S_1,S_2,\dots$ along the bottom are elements of $X^*$, and the circles stand for $\loc(S_1)$ and $\loc(S_2)$.}\label{figure:lift}
	\end{figure}

	\paragraph{\bf{Step 3: bounding the weights on bridges}}
	By Remark \ref{remarkbornesab}, estimate \eqref{bornesabZ}, there exists $C>0$ depending on $\beta$ such that
	\begin{equation}\label{eq:bridges}
		\begin{split}
			\prod_{ij\in \mc{E}_\br}|a_{ij}^{v^Z}|\indic_{\mc{B}_{ij}^c}&\leq C^{|\mc{E}_\br|} \prod_{i\in V_X}r_i^{\deg_{\mc{E}_\br}(i)} \prod_{ij\in \mc{E}_\br}\frac{1}{d_{ij}\max(d_{ij},r_i,r_j)}\\
			&\leq  C^{|\mc{E}_\br|} \prod_{S\in X} r_{S}^{\deg_{\mc{E}_\br}(S)} \prod_{ij\in \mc{E}_\br}\frac{1}{d_{ij}^2}.
		\end{split}
	\end{equation}
	We next use that by Lemma \ref{lemma:energy goodbad}, there exists $C>0$ depending on $\beta$ such that
	\begin{equation}\label{eq:rep simp}
		\prod_{i\in V_1^\infty}|a_{i\infty}^{v^Z}|\leq C^{|V_X|}\prod_{i\in V_1^\infty}\frac{r_i}{d_{i,\bad}}.
	\end{equation}
	Combining \eqref{eq:bridges} and \eqref{eq:rep simp} with Lemma \ref{lemma:block}, there exists $C>0$ depending on $\beta, M$ and $p(\beta)$ such that 
	\begin{multline*}
		\left(\prod_{ij\in E}|a_{ij}^{v^Z}|\indic_{\mc{B}_{ij}^c}\right) \left(\prod_{i\in V_1^\infty}|a_{i\infty}^{v^Z}| \right)\leq C^{|V_X|}\left(\prod_{S\in X\setminus\cup_l \{\hat{S}_l\}}r_S^2\right)\left( \prod_{S\in X} r_{S}^{\deg_{\mc{E}_\br}(S)} \right)\\ \times \prod_{l\in [n]:V_{B_l'}\cap V_1^\infty\neq \emptyset } \frac{r_{\hat{S}_l}}{d_{\hat{S}_l,\bad}}
		{\sum_{T'\in B((x_i,y_i)_{i\in V_X}, T) }}\prod_{ij\in T'}\frac{1}{d_{ij}^2}\indic_{d_{ij}\leq 16\ve_0\Cut}\prod_{ij\in F}g_{ij},
	\end{multline*}where we recall that for every $l\in [n]$, $\hat{S}_l$ is the index $S$ in $B_l$ such that $r_S$ is maximal.

	By Lemma \ref{lemma:weaker}, if $l$ is a leaf of $\tilde{T}$, then $V_{B_l'}\cap V_1^\infty\neq \emptyset$. Using this and \eqref{eq:geo1}, we obtain
	\begin{multline*}
		\left(\prod_{ij\in E}|a_{ij}^{v^Z}|\indic_{\mc{B}_{ij}^c}\right) \left(\prod_{i\in V_1^\infty}|a_{i\infty}^{v^Z}| \right) \leq C^{|V_X|}\left(\prod_{S\in X\setminus\cup_l \{\hat{S}_l\}}r_S^2\right)\prod_{S\in X} r_{S}^{\deg_{\mc{E}_\br}(S)} \frac{r_{\hat{S}_a}}{d_{\hat{S}_a,\bad}} 	\\ \times	\prod_{l\in \Leaves(\tilde{T})\setminus \{a\}}  \frac{r_{\hat{S}_l}}{d_{\hat{S}_l,\bad}}{\sum_{T'\in B((x_i,y_i)_{i\in V_X}, T) }}\prod_{ij\in T'}\frac{1}{d_{ij}^2}\indic_{d_{ij}\leq 16\ve_0\Cut}\prod_{ij\in F}g_{ij}.
	\end{multline*}
	Hence, using \eqref{eq:geo4'} again to replace $d_{\hat{S}_l,\bad}$ by $d_{S,\bad}$ with $S$ the unique element in $B_l\cap X^*$, we find that there exists $C>0$ depending on $\beta, M$ and $p(\beta)$ such that 
	\begin{multline*}
		\left(\prod_{ij\in E}|a_{ij}^{v^Z}|\indic_{\mc{B}_{ij}^c}\right) \left(\prod_{i\in V_1^\infty}|a_{i\infty}^{v^Z}| \right)\leq C^{|V_X|}\left(\prod_{S\in X\setminus\cup_l \{\hat{S}_l\}}r_S^2\right)\prod_{S\in X} r_{S}^{\deg_{\mc{E}_\br}(S)} \frac{r_{\hat{S}_a}}{d_{\hat{S}_a,\bad}} \\ \times \prod_{l\in \Leaves(\tilde{T})\setminus \{a\} }r_{\hat{S}_l}\prod_{S\in X^*}\frac{1}{d_{S,\bad}} 
		{\sum_{T'\in B((x_i,y_i)_{i\in V_X}, T) }}\prod_{ij\in T'}\frac{1}{d_{ij}^2}\indic_{d_{ij}\leq 16\ve_0\Cut}\prod_{ij\in F}g_{ij}.  
	\end{multline*}
	Notice that
	\begin{equation*}
		\prod_{S\in X} r_{S}^{\deg_{\mc{E}_\br}(S)}=\left(\prod_{l\neq a:\deg_{\tilde{T}}(l)\leq 2}\prod_{S\in B_l}r_S^{\deg_{\mc{E}_\br}(S)}\right)\left(\prod_{\substack{ l:\deg_{\tilde{T}}(l)>2\\ \mathrm{or }\, l=a}}\prod_{S\in B_l} r_{S}^{\deg_{\mc{E}_\br}(S)}\right).
	\end{equation*}
	Moreover, by the definition of $\hat{S}_l$,
	\begin{equation*}
		\prod_{l\neq a:\deg_{\tilde{T}}(l)\leq 2}\prod_{S\in B_l}r_S^{\deg_{\mc{E}_\br}(S)}\leq \prod_{l\neq a:\deg_{\tilde{T}}(l)=1}r_{\hat{S}_l}\prod_{l\neq a:\deg_{\tilde{T}}(l)=2}r_{\hat{S}_l}^2.
	\end{equation*}
	Therefore, combining the  above two displays, we get
	\begin{equation*}
		\prod_{S\in X} r_{S}^{\deg_{\mc{E}_\br}(S)}\prod_{l\in \Leaves(\tilde{T})\setminus \{a\}} r_{\hat{S}_l}\leq \prod_{l\neq a:\deg_{\tilde{T}}(l)\leq 2} r_{\hat{S}_l}^2  \left(\prod_{\substack{ l:\deg_{\tilde{T}}(l)>2\\ \mathrm{or }\, l=a}}\prod_{S\in B_l} r_{S}^{\deg_{\mc{E}_\br}(S)}\right).
	\end{equation*}
	Thus, there exists $C>0$ depending on $\beta, M$ and $p(\beta)$ such that 
	\begin{multline}\label{eq:obs}
		\left(\prod_{ij\in E}|a_{ij}^{v^Z}|\indic_{\mc{B}_{ij}^c}\right) \left(\prod_{i\in V_1^\infty}|a_{i\infty}^{v^Z}| \right)\leq C^{|V_X|}\left(\prod_{S\in X\setminus\cup_l \{\hat{S}_l\}}r_S^2\right) \prod_{l\neq a:\deg_{\tilde{T}}(l)\leq 2} r_{\hat{S}_l}^2\\ \times  \left(\prod_{\substack{ l:\deg_{\tilde{T}}(l)>2\\ \mathrm{or}\, l=a}}\prod_{S\in B_l} r_{S}^{\deg_{\mc{E}_\br}(S)}\right)\frac{r_{\hat{S}_a}}{d_{\hat{S}_a,\bad}} \prod_{S'\in X^*}\frac{1}{d_{S',\bad}} 
		{\sum_{T'\in B((x_i,y_i)_{i\in V_X}, T) }}\prod_{ij\in T'}\frac{1}{d_{ij}^2}\indic_{d_{ij}\leq 16\ve_0\Cut}\prod_{ij\in F}g_{ij}.  
	\end{multline}

	\paragraph{\bf{Step 4: compensation of high degree multipoles by distances to bad points}}
	We now bound 
	\begin{equation*}
		\frac{r_{\hat{S}_a}}{d_{\hat{S}_a,\bad}}\left(\prod_{\substack{ l:\deg_{\tilde{T}}(l)>2\\ \mathrm{or }\, l=a}}\prod_{S\in B_l} r_{S}^{\deg_{\mc{E}_\br}(S)}\right) \prod_{S'\in X^*}\frac{1}{d_{S',\bad}}.  
	\end{equation*} Let us write 
	\begin{equation*}
		\prod_{S'\in X^*}\frac{1}{d_{S',\bad}}=\prod_{S\in X}\prod_{\substack{S'\in X^*:\\\loc(S')=S}}\frac{1}{d_{S',\bad}}.
	\end{equation*}
	By \eqref{eq:geo4'}, there exists $C>0$ depending on $M$ and $p(\beta)$ such that
	\begin{equation*}
		d_{S,\bad}\leq C^{\dist_{T}(S,S')}d_{S',\bad},
	\end{equation*}
	where we recall that $T\coloneqq \mc{T}_+^X(\cdot,E)$ and where $\dist_T(S,S')$ is the length of the path in $T$ joining $S$ and $S'$. Therefore,
	\begin{equation*}
		\prod_{S'\in X^*}\frac{1}{d_{S',\bad}}\leq C^{\sum_{S'\in X^*} \dist_T(S',\loc(S'))} \prod_{S\in X}\Bigr(\frac{1}{d_{S, \bad}}\Bigr)^{|\{S'\in X^*:\loc(S')=S\}|}.
	\end{equation*}
	Using  \eqref{eq:boundSV} for the quotient graph, the paths of the heavy-light decomposition are edge-disjoint and therefore 
	\begin{equation*}
		\sum_{S'\in X^*} \dist_T(S',\loc(S')) \leq |V_X|-1.
	\end{equation*}
	
	Thus, there exists $C>0$ depending on $M$ and $p(\beta)$ such that
	\begin{equation}\label{eq:hl}
		\prod_{S'\in X^*}\frac{1}{d_{S',\bad}}\leq C^{|V_X|}\prod_{S\in X} \Bigr(\frac{1}{d_{S,\bad}}\Bigr)^{|\{S'\in X^*:\loc(S')=S\}|}.
	\end{equation}
	
	Let $l\in [n]$ be such that $l\neq a$. For every $S\in X$, denote by $\Child_{\mc{E}_\br}(S)$ the set of children of $S$ in $\mc{E}_\br$.
	In the tree $(V_X,\mc{E}_\br)/X$, each block $B_l$ has a unique vertex $S_l$ incident to the parent edge (i.e., the edge on the path from that vertex to the root). For this vertex,
	\begin{equation*}
		\deg_{\mc{E}_\br}(S_l)=|\Child_{\mc{E}_\br}(S_l)|+1,
	\end{equation*}
	whereas for every $S\in B_l\setminus \{S_l\}$,
	\begin{equation*}
		\deg_{\mc{E}_\br}(S)=|\Child_{\mc{E}_\br}(S)|.
	\end{equation*}
	Consequently,
	\begin{equation}\label{eq:jj1}
		\prod_{S\in B_l}r_{S}^{\deg_{\mc{E}_\br}(S)}\leq r_{\hat{S}_l}\prod_{S\in B_l}r_S^{|\Child_{\mc{E}_\br}(S)|}.
	\end{equation}
	By the construction of Definition \ref{def:heavy}, exactly one edge to the children of $l$ is heavy. Let $S_l'$ be the unique $S\in B_l$ adjacent to the heavy edge of the children of $l$ (see Figure \ref{figure:lift} and notice that we can have $S_l=S_l'$). Observe that for this vertex,
	\begin{equation*}
		|\{S'\in X^*:\loc(S')=S_l'\}|=|\Child_{\mc{E}_\br}(S_l')|-1,
	\end{equation*}
	whereas for every $S\in B_l\setminus\{S_l'\}$,
	\begin{equation*}
		|\{S'\in X^*:\loc(S')=S\}|=|\Child_{\mc{E}_\br}(S)|.
	\end{equation*}
	Therefore, combining the above two displays,
	\begin{equation}\label{eq:jj2}
		\prod_{S\in B_l}r_S^{|\Child_{\mc{E}_\br}(S)|}\leq r_{\hat{S}_l}\prod_{S\in B_l}r_S^{|\{S'\in X^*:\loc(S')=S\}|}.
	\end{equation}
	
	Combining \eqref{eq:jj1}, \eqref{eq:jj2} and \eqref{eq:geo1'}, we obtain the existence of $C>0$ depending on $\beta, M$ and $p(\beta)$ such that
	\begin{equation}\label{eq:noleafa}
		{\left(\prod_{S\in B_l}r_S^{\deg_{\mc{E}_\br}(S)}\right)\prod_{S\in X}\Bigr(\frac{1}{d_{S,\bad}}\Bigr)^{|\{S'\in X^*:\loc(S')= S |\}}\leq r_{\hat{S}_l}^2.}
	\end{equation}
	
	It remains to consider the vertices in $B_a$. For every $S\in B_a$, we have 
	\begin{equation}\label{eq:SSSS}
		|\{S'\in X^*:\loc(S')=S\}|=\deg_{\mc{E}_\br}(S). 
	\end{equation}
	Fix some $S_0\in B_a$ of non-zero degree in $\mc{E}_\br$. Using \eqref{eq:SSSS} and \eqref{eq:geo1}, we get
	\begin{equation}
		\left(\prod_{S\in B_a}r_S^{\deg_{\mc{E}_\br}(S)} \right)\prod_{S'\in X^*}\Bigr(\frac{1}{d_{S,\bad}}\Bigr)^{|\{S\in X:\loc(S')= S |\}}\leq \frac{r_{S_0}}{d_{S_0,\bad}}.
	\end{equation}
	Hence, using \eqref{eq:geo4}, we deduce that there exists $C>0$ depending on $M$ and $p(\beta)$ such that  
	\begin{equation}\label{eq:leafa}
		\left(\prod_{S\in B_a}r_S^{\deg_{\mc{E}_\br}(S)} \right)\prod_{S\in X}\Bigr(\frac{1}{d_{S,\bad}}\Bigr)^{|\{S'\in X^*:\loc(S')= S |\}}\leq C^{|V_X|}\frac{r_{\hat{S}_a}}{d_{\hat{S}_a,\bad}}.
	\end{equation}

	Thus, taking the product of  \eqref{eq:noleafa} for $l $ such $\deg_{\tilde{T}}(l)>2$ and $l\neq a$,   multiplying by 
	\eqref{eq:leafa} and then by $ \frac{r_{\hat{S}_a}}{d_{\hat{S}_a,\bad}} $  gives the existence of $C>0$ depending on $M$ and $p(\beta)$ such that
	\begin{equation}\label{eq:insertme2}
		\frac{r_{\hat{S}_a}}{d_{\hat{S}_a,\bad}} \left(\prod_{\substack{ l:\deg_{\tilde{T}}(l)>2\\ \mathrm{or }\, l=a}} \prod_{S\in B_l} r_{S}^{\deg_{\mc{E}_\br}(S)}\right) \prod_{S'\in X^* }\frac{1}{d_{S',\bad}}\leq C^{|V_X|}\left(\prod_{\substack{ l:\deg_{\tilde{T}}(l)>2\\ \mathrm{or }\, l=a}} r_{\hat{S}_l}^2\right)\frac{1}{d_{\hat{S}_a,\bad}^2}.
	\end{equation}
	\paragraph{\bf{Step 5: conclusion}}
	Notice that by using \eqref{eq:geo4}, one can replace $d_{\hat{S}_a,\bad}$ by $d_{i_0,\bad}$ up to a multiplicative constant $C^{|V_X|}$, with $C$ depending on $M$ and $p(\beta)$. Inserting \eqref{eq:insertme2} into \eqref{eq:obs} gives the desired result.
\end{proof}

\subsection{Proof of the bounded-cluster results with frozen bad points}\label{sub:abs upper}

\begin{proof}[Proof of Proposition \ref{prop:bounded upper}]
	Denote $k\coloneqq |V_X|$. Taking $w=v^Z$ in Lemma \ref{lemma:lowerM+} proves the estimates \eqref{eq:MX +low} and \eqref{eq:MX +up}.

	Let us now prove \eqref{eq:K+ bound}. By Corollary \ref{coro:reduction Euler u}, 	\begin{multline*}
		\Ksf_{\ve_0}^{+,Z}(X)=\sum_{\substack{V_1^\infty, V_2^\infty \subset V_X:\\ V_1^\infty\cap V_2^\infty=\emptyset }}  \sum_{K=1}^\infty \frac{1}{K!}\sum_{\substack{Y_1,\ldots,Y_K\subset X\\ \mathrm{disjoint} }}\prod_{l=1}^K
		\Biggr(\sum_{ E_{l,1}\in \Eul^{Y_l}(V_{Y_l}\cap V_1^\infty ) }  \sum_{\substack{E_{l,2}:E_{l,1}\cup E_{l,2}\in \mathsf{E}^{Y_l}\\
				E_{l,1}\cap E_{l,2}=\emptyset  }} \Biggr)\\ \sum_{F\in \mathsf{E}^{\Coarse_X(Y_1,\ldots,Y_K)}}\dE_{\Psf_{X}^{+,\ve_0,Z}}\left[\prod_{l=1}^K\left(\prod_{ij\in E_{l,1}}a^{v^Z}_{ij}\prod_{ij\in E_{l,2} }b^{v^Z}_{ij}\right)\prod_{ij\in \cup_l\mc{E}^\inter(Y_l)}\indic_{\mc{B}_{ij}^c}\prod_{ij\in F}(-\indic_{\mc{B}_{ij}})\prod_{i\in V_1^\infty}a_{i\infty}^{v^Z}\prod_{i\in V_2^\infty}b_{i\infty}^{v^Z} \right].
	\end{multline*}
	We first use that by Lemma \ref{lemma:error'},
	\begin{equation*}
		\indic_{\mc{B}_{ij}}+|b_{ij}^{v^Z}|\indic_{\mc{B}_{ij}^c} \leq b_{ij}^\abs,
	\end{equation*}
	where $b_{ij}^\abs$ is as in Definition \ref{def:abs ab}.
	
	Let $X_1,\ldots,X_n$ be the connected components relative to $X$ with at least two multipoles of the graph $(V_X,\cup_{l=1}^K E_{l,1})$. One may extract from $(V_X,\cup_{l=1}^KE_{l,2}\cup F)$ a tree $T^b\in \mathsf{T}^{\Coarse_X(X_1,\ldots,X_n)}$. Therefore, there exists $C>0$ depending on $\beta, M$ and $k$ such that
	\begin{multline*}
		|\Ksf_{\ve_0}^{+,Z}(X)|\leq  C\max_n\max_{\substack{V_1^\infty,V_2^\infty\subset V_X\\ \mathrm{disjoint} } } \max_{\substack{ X_1,\ldots,X_n\subset X\\ \mathrm{disjoint} }}\prod_{l=1}^n\max_{E_l\in \Eul^{X_l}(V_1^\infty\cap V_{X_l})} \max_{T^b\in \mathsf{T}^{\Coarse_X(X_1,\ldots,X_n)}}\\
		\dE_{\Psf_{X}^{+,\ve_0,Z}}\left[\prod_{l=1}^n \prod_{ij\in E_l} (|a_{ij}^{v^Z}|\indic_{\mc{B}_{ij}^c}) \prod_{ij\in T^b}b_{ij}^\abs \prod_{i\in V_1^\infty}|a_{i\infty}^{v^Z}|\prod_{i\in V_2^\infty}|b_{i\infty}^{v^Z}|\right].
	\end{multline*}
	Combining \eqref{bornesabinfty} and \eqref{eq:geo1}, one can bound every term by $b_{i\infty}^{v^Z}$ by a constant and we get
	\begin{multline*}
		|\Ksf_{\ve_0}^{+,Z}(X)|\leq  C\max_n\max_{V_1^\infty\subset V_X} \max_{\substack{ X_1,\ldots,X_n\subset X\\ \mathrm{disjoint} }}\prod_{l=1}^n\max_{E_l\in \Eul^{X_l}(V_1^\infty\cap V_{X_l})} \max_{T^b\in \mathsf{T}^{\Coarse_X(X_1,\ldots,X_n)}}\\
		\dE_{\Psf_{X}^{+,\ve_0,Z}}\left[\prod_{l=1}^n \prod_{ij\in E_l} (|a_{ij}^{v^Z}|\indic_{\mc{B}_{ij}^c}) \prod_{ij\in T^b}b_{ij}^\abs \prod_{i\in V_1^\infty}|a_{i\infty}^{v^Z}|\right].
	\end{multline*}
	Fix $l\in \{1,\ldots,n\}$ and denote by $\hat{S}_l$ the $S\in X_l$ such that $r_S$ is maximal. Let also $T_l\coloneqq \mc{T}^{X}_+(E_{l})$ and $F_l\coloneqq \mc{F}^{X}_+(E_{l})$ be as in Definition \ref{def:peeling general}. By  Remark \ref{remarkbornesab}, Corollary \ref{coro:prod a} or Corollary \ref{coro:prod a u} \eqref{mainducorollaire} (according to whether $V_1^\infty=\emptyset $ or not) there exists $C>0$ depending on $\beta$, $M$ and $k$ such that
	\begin{equation*}
		\prod_{ij\in E_{l}}|a_{ij}^{v^Z}|\indic_{\mc{B}_{ij}^c} \prod_{i\in V_1^{\infty}\cap V_{X_l}}|a_{i\infty}^{v^Z}| \leq C\prod_{S\in X_l:S\neq \hat{S}_l}r_S^2 {\sum_{T'\in B((x_i,y_i)_{i\in V_X}, T_l) }} \prod_{ij\in T'}\frac{1}{d_{ij}^2}\indic_{d_{ij}\leq 16\ve_0\Cut}\indic_{\mc{B}_{ij}^c} \prod_{ij\in F_l}g_{ij}.
	\end{equation*}
	
	Thus, using the estimate \eqref{eq:hate F} of Lemma \ref{lemma:technical peeling}, we deduce that there exists a constant $C>0$ depending on $\beta$, $M$ and $k$ such that 
	\begin{multline*}
		|\Ksf_{\ve_0}^{+,Z}(X)|\leq \frac{C}{\Msf_{\ve_0}^{+,Z}(X)(NC_{\beta,\lambda,\ve_0})^k }\max_n\max_{\substack{ X_1,\ldots,X_n\subset X\\ \mathrm{disjoint} }}\max_{T_1^a\in \mathsf{T}^{X_1},\ldots,T_n^a\in \mathsf{T}^{X_n} } \max_{T^b\in \mathsf{T}^{\Coarse_X(X_1,\ldots,X_n)}}\\ \int_{(\Lambda^{2})^k} \left(\prod_{l=1}^n \prod_{S\in X_l:S\neq \hat{S}_l }r_S^2\right)\prod_{ij\in T^b}b_{ij}^\abs\prod_{ij\in \cup T^a}\frac{1}{d_{ij}^2}\indic_{\mc{B}_{ij}^c} \indic_{ d_{ij}\leq 16 \ve_0\Cut}\min\Bigr(\frac{\max_{i\in V_X}r_i}{\max_{e\in \cup_l T_l^a}d_e },1\Bigr)^2 \prod_{S\in X}\indic_{\mc{B}_S} \\ \times \prod_{i\in V_X}e^{\beta \g_\lambda(r_i)}\dd \vr_i.
	\end{multline*}
	Thus, inserting the estimate \eqref{eq:bound J1} of Lemma \ref{lemma:integral small -} in the appendix, we deduce that there exists a constant $C>0$ depending on $\beta$, $M$ and $k$ such that
	\begin{equation}\label{eq:o4}
		|\Ksf_{\ve_0}^{+,Z}(X)|\leq \frac{C}{\Msf_{\ve_0}^{+,Z}(X)(NC_{\beta,\lambda,\ve_0})^k }  N\lambda^{(2-\beta)k}\gamma_{\beta,\lambda,k}.
	\end{equation}
	By the estimates \eqref{eq:int M lower} of Lemma \ref{lemma:lowerM+},  we know that there exists a constant $C>0$ depending on $\beta, M$ and $k$ such that 
	\begin{equation}\label{eq:o5}
		\Msf_{\ve_0}^{+,Z}(X)\geq C^{-1}\Msf^0_{\infty} (X).
	\end{equation}
	Therefore, combining \eqref{eq:o4} with \eqref{eq:o5} and \eqref{eq:Clambda bound}, we conclude that there exists $C>0$ depending on $\beta,M$ and $k$ such that 
	\begin{equation}\label{eq:empty}
		|\Ksf_{\ve_0}^{+,Z}(X)|\leq \frac{CN^{1-k} }{\Msf^0_{\infty} (X) }\gamma_{\beta,\lambda,k}
	\end{equation}
	This concludes the proof of the proposition.
\end{proof}

\begin{proof}[Proof of Proposition \ref{prop:expansion +}]

	Let us first prove \eqref{eq:M+diff}. Define
	\begin{multline*}
		I_1(S)=\int_{(\Lambda^2)^{|S|} }\prod_{i\in S} \indic_{i\in I_\good^Z} \indic_{\mc{B}_S}\prod_{ i,j\in S:i<j}e^{-\beta v^Z_{ij}}\indic_{\mc{A}_{ij}}\prod_{i\in S }e^{L\frac{|x_i-y_i|^2}{(\tau_i^Z)^2}}  \prod_{i\in S} e^{\beta \g_\lambda(x_i-y_i)}\indic_{|x_i-y_i|\leq \ve_0\Cut}\dd x_i \dd y_i.
	\end{multline*}
	and
	\begin{equation*}
		I_1'(S)=\int_{(\Lambda^2)^{|S|} }\indic_{\mc{B}_S}\prod_{ i,j\in S:i<j}e^{-\beta v_{ij}}\indic_{\mc{A}_{ij}}\prod_{i\in S} e^{\beta \g_\lambda(x_i-y_i)}\dd x_i \dd y_i.
	\end{equation*}
	Notice that
	\begin{equation*}
		\Msf_{\ve_0}^{+,Z}(S)=\frac{1}{(NC_{\beta,\lambda,\ve_0})^{|S|} }I_1(S)
	\end{equation*}
	and 
	\begin{equation*}
		\Msf^0_{\infty} (S)=\frac{1}{(NC_{\beta,\lambda})^{|S|} }I_1'(S).
	\end{equation*}
	By Lemma \ref{lemma:general exp M}, there exists $C>0$ depending on $\beta$, $M$, $\ve_0$ and $|S|$ such that
	\begin{equation*}
		|I_1(S)-I_1'(S)|\leq C\lambda^{(2-\beta)|S|}( N\Cut^{-2}+|V_\bad|).
	\end{equation*}
	Therefore, proceeding as in \eqref{eq:M-M0}, we get the existence of $C>0$ depending on $\beta$, $M$, $\ve_0$, and $|S|$ such that 
	\begin{equation*}
		|\Msf_{\ve_0}^{+,Z}(S)-\Msf^0_{\infty} (S)|\leq \frac{C}{N^{|S|}} (N\Cut^{-2}+ |V_\bad|).
	\end{equation*}

	We now turn to the proof of \eqref{eq:K+diff}. Denote $k\coloneqq |V_X|$. Recall  $\K_{\ve_0}^{+,Z}$ from Definition \ref{def:activity upper}.
	

	Let $V^\infty\subset V_X$ be such that $V^\infty\neq \emptyset$. We claim that there exists $C>0$ depending on $\beta, M,\ve_0$ and $k$ such that 
	\begin{multline}\label{eq:non empty}
		\max_K\max_{\substack{ Y_1,\ldots,Y_K\subset X\\ \mathrm{disjoint} }}\prod_{l=1}^K \max_{E_l\in \mathsf{E}^{Y_l}}\max_{F\in \mathsf{E}^{\Coarse_X(Y_1,\ldots,Y_K)} } \\ \left|  \dE_{\Psf_{X}^{+,\ve_0,Z}}\left[\prod_{ij\in E_1\cup \cdots \cup E_K}f^{v^Z}_{ij}\prod_{ij\in \cup_l\mc{E}^\inter(Y_l)}\indic_{\mc{B}_{ij}^c}\prod_{ij\in F}(-\indic_{\mc{B}_{ij}})\prod_{i\in V^\infty }f_{i\infty}^{v^Z}\right]\right|\leq \frac{C}{\Msf_{\ve_0}^{+,Z}(X)N^k}(N\Cut^{-2}+|V_\bad|).
	\end{multline}
	
	Indeed, expanding each Mayer bond into an odd part and an even part, using Corollary \ref{coro:prod a u} \eqref{eq:squares} (taking out $r_{\hat S}^2=\max r_i^2$ from the rhs product) and the fact that $\indic_{\mc{B}_{ij}}+|b_{ij}^{v^Z}|\indic_{\mc{B}_{ij}^c}\leq b_{ij}^\abs$, we obtain the existence of $C>0$ depending on $\beta,M,\ve_0$ and $k$ such that
	\begin{multline*}
		\max_K\max_{\substack{ Y_1,\ldots,Y_K\subset X\\ \mathrm{disjoint} }}\prod_{l=1}^K \max_{E_l\in \mathsf{E}^{Y_l}}\max_{F\in \mathsf{E}^{\Coarse_X(Y_1,\ldots,Y_K) }} \\ \left|  \dE_{\Psf_{X}^{+,\ve_0,Z}}\left[\prod_{ij\in E_1\cup \cdots \cup E_K}f^{v^Z}_{ij}\prod_{ij\in \cup_l\mc{E}^\inter(Y_l)}\indic_{\mc{B}_{ij}^c}\prod_{ij\in F}(-\indic_{\mc{B}_{ij}})\prod_{i\in V^\infty }f_{i\infty}^{v^Z}\right]\right|\\
		\leq  \frac{C}{\Msf_{\ve_0}^{+,Z}(X)(NC_{\beta,\lambda,\ve_0})^k }\max_n\max_{\substack{ X_1,\ldots,X_n\subset X\\ \mathrm{disjoint} }}\max_{T_1^a\in \mathsf{T}^{X_1},\ldots,T_n^a\in \mathsf{T}^{X_n} } \max_{T^b\in \mathsf{T}^{\Coarse_X(X_1,\ldots,X_n)}}\\ \int_{(\Lambda^{2})^k}\min\Bigg( \frac{\max_{i\in V_X}r_i^2}{\min_{i\in V_X}d_{i,\bad}^2} , 1\Bigg)\left(\prod_{l=1}^n \prod_{S\in X_l:S\neq \hat{S}_l }r_S^2\right)\prod_{ij\in T^b}b_{ij}^\abs\prod_{ij\in T^a}\indic_{\min(r_i,r_j)\leq d_{ij}\leq 16\ve_0\Cut} \\ \times \min\Bigr(\frac{\max_{i\in V_X}r_i}{\max_{e\in \cup_l T_l^a}d_e },1\Bigr)^2 \prod_{S\in X}\indic_{\mc{B}_S}\prod_{i\in V_X}e^{\beta \g_\lambda(r_i)}\indic_{r_i\le \ve_0\Cut}\dd \vr_i.
	\end{multline*}
	Thus, using the estimate \eqref{eq:bound J2} from Lemma \ref{lemma:integral small -} to bound the above integral, we deduce that \eqref{eq:non empty} holds.

	In view of \eqref{eq:non empty}, we can reduce to the situation where $V^\infty=\emptyset$ up to small errors, and obtain  
	\begin{multline}\label{eq:rr4}
		\Ksf_{\ve_0}^{+,Z}(X)-\Ksf_{\infty}^0(X)= \sum_{n=0}^\infty \frac{1}{n!}\sum_{\substack{X_1,\ldots,X_n\subset X\\ \mathrm{disjoint} }} \sum_{E_1\in \mathsf{E}^{X_1}}\cdots\sum_{E_n\in \mathsf{E}^{X_n}} \sum_{F\in \mathsf{E}^{\Coarse_X(X_1,\ldots,X_n)} }\\ \left(\dE_{\Psf_X^{+,\ve_0,Z}}\left[\prod_{ij\in E}f_{ij}^{v^Z}\prod_{ij\in \cup_l\mc{E}^\inter(X_l)}\indic_{\mc{B}_{ij}^c}\prod_{ij\in F}(-\indic_{\mc{B}_{ij}}) \right]-\dE_{\Psf_X^{0,\infty}}\left[\prod_{ij\in E}f_{ij}^{v} \prod_{ij\in \cup_l\mc{E}^\inter(X_l)}\indic_{\mc{B}_{ij}^c}\prod_{ij\in F}(-\indic_{\mc{B}_{ij}})\right] \right)\\ +O_{\beta,M,\ve_0,k}\left(\frac{1}{\Msf_{\ve_0}^{+,Z}(X)N^k}(N\Cut^{-2}+|V_\bad|)\right).
	\end{multline}
	
	Fix $X_1,\ldots,X_n\subset X$ disjoint and $E_1\in \mathsf{E}^{X_1},\ldots,E_n\in \mathsf{E}^{X_n}$ and $F\in \mathsf{E}^{\Coarse_X(X_1, \dots, X_n)}$.  Denote 
	\begin{multline*}
		I_2(E_1,\ldots,E_n,F)= \int_{(\Lambda^2)^{k} }\prod_{i\in V_X}\indic_{i\in I_\good^Z} \prod_{ij\in \cup_l E_l}f_{ij}^{v^Z}\prod_{ij\in F}(-\indic_{\mc{B}_{ij}})\prod_{ij\in \cup_{l=1}^n \mc{E}^\inter(X_l)}\indic_{\mc{B}_{ij}^c} \\ \times \left(\prod_{S\in X}\indic_{\mc{B}_S} \prod_{ij\in S:i<j}e^{-\beta v^Z_{ij}}\right)\prod_{i\in V_X}e^{L\frac{|x_i-y_i|^2}{(\tau_i^Z)^2}}\prod_{i\in V_X}e^{\beta \g_\lambda(x_i-y_i)}\indic_{|x_i-y_i|\leq \ve_0\Cut}\dd x_i \dd y_i
	\end{multline*}
	and 
	\begin{multline*}
		I_2'(E_1,\ldots,E_n,F)= \int_{(\Lambda^2)^{k} } \prod_{ij\in \cup_l E_l}f_{ij}^{v}\prod_{ij\in F}(-\indic_{\mc{B}_{ij}})\prod_{ij\in \cup_{l=1}^n \mc{E}^\inter(X_l)}\indic_{\mc{B}_{ij}^c} \\ \times \left(\prod_{S\in X}\indic_{\mc{B}_S} \prod_{ij\in S:i<j}e^{-\beta v_{ij}}\right)\prod_{i\in V_X}e^{\beta \g_\lambda(x_i-y_i)}\dd x_i \dd y_i.
	\end{multline*}
	We have
	\begin{equation*}
		\dE_{\Psf_X^{+,\ve_0,Z}}\left[\prod_{ij\in E}f_{ij}^{v^Z}\prod_{ij\in \cup_l\mc{E}^\inter(X_l)}\indic_{\mc{B}_{ij}^c}\prod_{ij\in F}(-\indic_{\mc{B}_{ij}}) \right]  =\frac{1}{\Msf_{\ve_0}^{+,Z}(X)(NC_{\beta,\lambda,\ve_0})^{k}}I_2(E_1,\ldots,E_n,F)
	\end{equation*}
	and
	\begin{equation*}
		\dE_{\Psf_X^{0,\infty}}\left[\prod_{ij\in E}f_{ij}^{v}\prod_{ij\in \cup_l\mc{E}^\inter(X_l)}\indic_{\mc{B}_{ij}^c}\prod_{ij\in F}(-\indic_{\mc{B}_{ij}}) \right]  =\frac{1}{\Msf_{\infty}^{0}(X)(NC_{\beta,\lambda})^{k}}I_2'(E_1,\ldots,E_n,F).
	\end{equation*}
	Hence,
	\begin{multline*}
		\left| \dE_{\Psf_X^{+,\ve_0,Z}}\left[\prod_{ij\in E}f_{ij}^{v^Z}\prod_{ij\in \cup_l\mc{E}^\inter(X_l)}\indic_{\mc{B}_{ij}^c}\prod_{ij\in F}(-\indic_{\mc{B}_{ij}}) \right]- \dE_{\Psf_X^{0,\infty}}\left[\prod_{ij\in E}f_{ij}^{v}\prod_{ij\in \cup_l\mc{E}^\inter(X_l)}\indic_{\mc{B}_{ij}^c}\prod_{ij\in F}(-\indic_{\mc{B}_{ij}}) \right]\right|\\ \leq
		\frac{1}{(NC_{\beta,\lambda,\ve_0})^k \Msf_{\ve_0}^{+,Z}(X)} \Bigr| I_2(E_1,\ldots,E_n,F)- I_2'(E_1,\ldots,E_n,F)\Bigr|\\+ {|I_2'(E_1,\ldots,E_n,F)|}\times \left|\frac{1}{(NC_{\beta,\lambda,\ve_0})^k\Msf_{\ve_0}^{+,Z}(X)}-\frac{1}{(NC_{\beta,\lambda})^k\Msf^0_{\infty} (X)}\right|.
	\end{multline*}
	By Lemma \ref{lemma:general exp K}, there exists $C>0$ depending on $\beta, M,k$ and $\ve_0$ such that
	\begin{equation*}
		\Bigr|I_2(E_1,\ldots,E_n,F)-I_2'(E_1,\ldots,E_n,F)\Bigr|\leq C(N\Cut^{-2}+ |V_\bad|)\lambda^{(2-\beta)k}.
	\end{equation*}
	Therefore, proceeding as in \eqref{eq:rr2 -}, we get the existence of $C>0$ depending on $\beta,M,k$ and $\ve_0$ such that
	\begin{multline*}
		\left|\dE_{\Psf_X^{+,\ve_0,Z}}\left[\prod_{ij\in E}f_{ij}^{v^Z}\prod_{ij\in \cup_l\mc{E}^\inter(X_l)}\indic_{\mc{B}_{ij}^c}\prod_{ij\in F}(-\indic_{\mc{B}_{ij}}) \right]-\dE_{\Psf_X^{0,\infty}}\left[\prod_{ij\in E}f_{ij}^{v} \prod_{ij\in \cup_l\mc{E}^\inter(X_l)}\indic_{\mc{B}_{ij}^c}\prod_{ij\in F}(-\indic_{\mc{B}_{ij}})\right] \right|\\ \leq \frac{CN^{1-k}}{\Msf^0_{\infty} (X)}\Bigr( \Cut^{-2}+\frac{|V_\bad|}{N}\Bigr).
	\end{multline*}
	By \eqref{eq:rr4}, this concludes the proof of \eqref{eq:K+diff}.
\end{proof}

\subsection{Absolute convergence of the cluster series with frozen bad points}

We turn to the proof of Proposition \ref{prop:absolute upper}. We first state the analogue of Lemma \ref{lemma:LXi}.

\begin{lemma}\label{lemma:split u}
	Let $V_\bad\subset [N]$, $V_\good=[N]\setminus V_\bad$ and $X$ be a subpartition of $V_\good$ such that for every $S\in X$, one has $|S|\leq p(\beta)$. Let $a_{ij}^{\abs}$ and $b_{ij}^{\abs}$ be as in Definition \ref{def:abs ab}. 
	Then, there exists $C_1>0$ depending on $\beta$, $p(\beta)$ and $M$ such that
	\begin{multline}\label{eq:Eul res}
		|\Ksf_{\ve_0}^{+,Z}(X)|\leq C_1^{|V_X|}\sum_{V_1^\infty\subset V_X}\sum_{n=0}^\infty \frac{1}{n!}\sum_{\substack{X_1,\ldots,X_n\subset X\\ \mathrm{disjoint} }} \sum_{\substack{T^{b}\in \mathsf{T}^{\Coarse_X(X_1,\ldots,X_n)}}} \\  \dE_{\Psf_X^{+,\ve_0,Z}}\left[\mc{L}_{C_1}^+(X_1,V_1^\infty)\cdots \mc{L}_{C_1}^+(X_n,V_1^\infty)  \prod_{ij\in T^{b}} b_{ij}^{\abs}\right],
	\end{multline}
	where for every subpartition $X'$ of $[N]$,
	\begin{multline}\label{def:mcL+}
		\mc{L}_{C_1}^+(X',V_1^\infty)\coloneqq \sum_{k=0}^{\infty}\frac{1}{k!}\sum_{\substack{X_1',\ldots,X_k'\subset X'\\ \mathrm{disjoint} } }\sum_{E_1\in \Eulc^{X_1'}(V_{X_1'}\cap V_1^\infty) }\ldots \sum_{E_k\in \Eulc^{X_k'}(V_{X_k'}\cap V_1^\infty ) } \\ \sum_{\tilde{T}^{b}\in \mathsf{T}^{\Coarse_{X'}(X_1',\ldots,X_k') } }\prod_{i\in V_{X'}\cap V_1^\infty}|a_{i\infty}^{v^Z}|\prod_{ij\in E_1\cup \cdots \cup E_k}a_{ij}^{\abs}\prod_{ij\in \tilde{T}^{b}}b_{ij}^{\abs}\prod_{ij\in \mc{E}^\inter(X')} e^{C_1a_{ij}^{\abs}} \prod_{ij\in \mc{E}^\inter(X')}\indic_{\mc{B}_{ij}^c}.
	\end{multline}
\end{lemma}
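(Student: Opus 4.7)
The plan is to mirror the argument of Lemma \ref{lemma:LXi}, starting from the parity-split expansion already provided by Corollary \ref{coro:reduction Euler u}, which keeps an oriented decomposition of each Mayer bond into its odd part $a_{ij}^{v^Z}$ and its even part $b_{ij}^{v^Z}$, and splits each good--bad bond $f_{i\infty}^{v^Z}$ into $a_{i\infty}^{v^Z}$ and $b_{i\infty}^{v^Z}$. The extra bookkeeping compared with Lemma \ref{lemma:LXi} is that we must also track the two disjoint sets $V_1^\infty,V_2^\infty\subset V_X$ of vertices that carry, respectively, an odd or an even good--bad weight. Because these are single-vertex weights, they factor over any partition of $V_X$ into multipole-connected pieces, so the structure of the proof is unchanged.

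First, I would fix $(V_1^\infty,V_2^\infty)$ and resum the sum over $(E_{l,1}, E_{l,2})_l$ in Corollary \ref{coro:reduction Euler u} according to the connected components of the graph $(V_X,\cup_l E_{l,1}\cup \mc{E}^\intra(X))$ that contain at least two multipoles or at least one vertex of $V^\infty=V_1^\infty\cup V_2^\infty$. This produces disjoint subpartitions $X_1',\ldots,X_k'\subset X'$ with $E_{l}\in \Eulc^{X_l'}(V_{X_l'}\cap V_1^\infty,V_{X_l'}\cap V_2^\infty)$, exactly as in Step~1 of the proof of Lemma \ref{lemma:LXi}; the parity constraint in Definition \ref{def:Euler off} is local at each $S\in X$, so it is preserved when passing to connected components.

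Second, inside each such multipole-component $X'$, I would run a Penrose resummation (Lemma \ref{lemma:penrose}) via Kruskal's algorithm (Definition \ref{def:Kruskal}) on the even bonds $b_{ij}^{v^Z}$: as in \eqref{eq:FX} and the lines following it, this writes the $E_{l,2}$-sum as a single tree $\tilde T^b\in \mathsf{T}^{\Coarse_{X'}(X_1',\ldots,X_k')}$ carrying weight $b_{ij}^{v^Z}$, times a product $\prod_{ij\in\mc{E}^\inter(X')}(1+b_{ij}^{v^Z})$ over all the other edges. I would then bound, using $|1+b_{ij}^{v^Z}|\le e^{|b_{ij}^{v^Z}|}$ and the estimates $|a_{ij}^{v^Z}|\indic_{\mc{B}_{ij}^c}\le C\,a_{ij}^{\abs}$, $|b_{ij}^{v^Z}|\indic_{\mc{B}_{ij}^c}+\indic_{\mc{B}_{ij}}\le C\,b_{ij}^{\abs}\le C'(a_{ij}^{\abs}+\indic_{\mc{B}_{ij}})$ from Remark \ref{remarkbornesab}, to replace the inner weights on $E_{l,1}\cup \tilde T^b$ by $a_{ij}^{\abs}$ and $b_{ij}^{\abs}$, and the ambient factor by $\prod e^{C_0 a_{ij}^{\abs}}$, with a universal $C_0>0$ depending only on $\beta$. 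This yields exactly the definition \eqref{def:mcL+} of $\mc{L}_{C_0}^+(X',V_1^\infty,V_2^\infty)$, with the vertex weights $|a_{i\infty}^{v^Z}|$ and $|b_{i\infty}^{v^Z}|$ attached to the vertices in $V_{X'}\cap V_1^\infty$ and $V_{X'}\cap V_2^\infty$ respectively.

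Third, I would treat the outer $F$-sum in Definition \ref{def:activity upper} exactly as in Step~2 of Lemma \ref{lemma:LXi}: a Kruskal--Penrose resummation of $\sum_{F\in \mathsf{E}^{\Coarse_X(X_1,\ldots,X_n)}}\prod_{ij\in F}(-\indic_{\mc{B}_{ij}})$ produces a single tree $T^b\in \mathsf{T}^{\Coarse_X(X_1,\ldots,X_n)}$ with weight $\indic_{\mc{B}_{ij}}$, which is absorbed into $b_{ij}^{\abs}$ on that tree via \eqref{eq:486}. Collecting the contributions, summing the outer $(V_1^\infty,V_2^\infty)$ pair outside the expectation (this is possible because their factorization is preserved by Step~1), and reabsorbing the combinatorial factors into $C_0^{|V_X|}$ gives \eqref{eq:Eul res}.

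The main technical care is the bookkeeping in Step~1: one must verify that the parity condition $E_l\in \Eul^{X_l}(V_{X_l}\cap V_1^\infty,V_{X_l}\cap V_2^\infty)$ in Corollary \ref{coro:reduction Euler u} descends correctly to each connected component $X_j'$, so that each $E_j$ appearing inside $\mc{L}_{C_0}^+$ lies in $\Eulc^{X_j'}(V_{X_j'}\cap V_1^\infty,V_{X_j'}\cap V_2^\infty)$; since the parity constraint in Definition \ref{def:Euler off} is purely local at each $S\in X$, and since $V_1^\infty,V_2^\infty$ are sets of vertices (not edges), this descent is automatic. Everything else is a direct transcription of the corresponding steps in the proof of Lemma \ref{lemma:LXi}.
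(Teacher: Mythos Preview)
Your proposal is correct and follows essentially the same approach as the paper, which simply states that the proof ``uses Corollary \ref{coro:reduction Euler u} and is entirely similar to the proof of Lemma \ref{lemma:LXi}.'' Your three-step outline (inner resummation into odd-connected pieces, Penrose on even bonds inside each $X_l$, Penrose on the outer $F$-sum) is exactly the transcription the paper has in mind, with the single-vertex good--bad weights $|a_{i\infty}^{v^Z}|,|b_{i\infty}^{v^Z}|$ carried along as you describe.
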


The proof uses Corollary \ref{coro:reduction Euler u} and is entirely similar to the proof of Lemma \ref{lemma:LXi}.

We now sum over Eulerian graphs as in Lemma \ref{lemma:sum Euler} to give a bound on the quantity $\mc{L}_{C_1}^+(X,V_1^\infty)$ defined in \eqref{def:mcL+}.

\begin{lemma}\label{lemma:sum Euler u}
	Let $V_\bad\subset [N]$, $V_\good=[N]\setminus V_\bad$ and $X$ be a subpartition of $V_\good$ such that for every $S\in X$, one has $|S|\leq p(\beta)$. Let $V_1^\infty\subset V_X$. Let $k\coloneqq |V_X|$. There exists a constant $C>0$ depending on $\beta$, $p(\beta)$ and $M$ such that
	\begin{equation*}
		\mc{L}_{C_0}^+(X,V_1^\infty) \le e^{C|V_{X}|}\sum_{n=0}^{\infty}\frac{1}{n!}\sum_{\substack{X_1,\ldots,X_n\subset X\\ \mathrm{disjoint} } } H_{X}(X_1,\ldots,X_n),
	\end{equation*}
	where $H_X(X_1,\ldots,X_n)$ is as in \eqref{def:H}.
\end{lemma}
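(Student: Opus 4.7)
The plan is to mirror the proof of Lemma \ref{lemma:sum Euler} step by step, with the single crucial replacement being the substitution of Corollary \ref{coro:prod a} by its ``with frozen bad points'' counterpart, namely the quadratic estimate \eqref{mainducorollaire} of Corollary \ref{coro:prod a u}. The key point is that the latter has the same right-hand side structure as the former: a product of $r_S^2$ over multipoles other than the one containing the largest dipole, times $\prod_{ij\in T}d_{ij}^{-2}\indic_{d_{ij}\le 16\ve_0\Cut}$ along a tree, times $\prod_{ij\in F}g_{ij}$ along a pseudoforest. Thus the bookkeeping of the proof of Lemma \ref{lemma:sum Euler} can be lifted verbatim, provided one first reduces to the minimal-skeleton setting where that corollary applies.

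First, fix disjoint $X_1',\dots,X_k'\subset X$ and, for each $l\in[k]$, fix $E_l\in\Eulc^{X_l'}(V_{X_l'}\cap V_1^\infty,V_{X_l'}\cap V_2^\infty)$. Following \eqref{sommedesa}--\eqref{eq:paij}, I decompose $E_l$ as the union of its minimal skeleton $\Peeled_{X_l'}(E_l)$ (Definition~\ref{def:minimal skeleton}) and the remaining edges $E_l\setminus\Peeled_{X_l'}(E_l)$. Each minimal skeleton decomposes via its core--forest structure into bridge-blocks (each of which is minimally $2$-edge-connected relative to $X_l'$), pendant trees, and bridges; on this structure I apply Definition \ref{def:peeling general} to obtain a spanning tree $T_l^a$ and a pseudoforest $F_l$. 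Then \eqref{mainducorollaire} gives
\[
\Bigl(\prod_{ij\in \Peeled_{X_l'}(E_l)}|a_{ij}^{v^Z}|\indic_{\mc{B}_{ij}^c}\Bigr)\prod_{i\in V_{X_l'}\cap V_1^\infty}|a_{i\infty}^{v^Z}|\prod_{i\in V_{X_l'}\cap V_2^\infty}|b_{i\infty}^{v^Z}|\le C^{|V_{X_l'}|}\Bigl(\prod_{S\in X_l':S\neq\hat S_l}r_S^2\Bigr)\prod_{ij\in T_l^a}\frac{\indic_{d_{ij}\le 16\ve_0\Cut}}{d_{ij}^2}\prod_{ij\in F_l}g_{ij}.
\]
The remaining edges $E_l\setminus\Peeled_{X_l'}(E_l)$ are absorbed as in Lemma \ref{lemma:sum Euler}: since each $a_{ij}^{\abs}\le C$, the bound $\prod_{ij\in E_l\setminus\Peeled_{X_l'}(E_l)}(C_0a_{ij}^{\abs})\le\prod_{ij\in\mc{E}^\inter(X_l')}e^{C_0a_{ij}^{\abs}}$ holds, and those factors are already present in the definition \eqref{def:mcL+} of $\mc{L}_{C_0}^+$.

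Next, as in \eqref{eq:sumg}--\eqref{eq:change sum T}, I sum $\prod_{ij\in F_l}g_{ij}$ over pseudoforests by \eqref{eq:sum PF} of Lemma \ref{lemma:technical peeling}, producing the factor $\prod_{i\in V_{X_l'}}(1+\sum_{j\in V_{X_l'}\setminus\{i\}}g_{ij})$, which is exactly the geometric factor appearing in $H_X$. The sum over all spanning trees compatible with a given $T_l^a$ produces the logarithmic factor $\prod_{i\in V_{X_l'}}\log_M(\ve_0\Cut/r_i)$ through \eqref{eq:claimBT}. At this stage the proof reduces to exactly the same sum-over-disjoint-$X_l'$ bookkeeping as in the last paragraph of the proof of Lemma \ref{lemma:sum Euler}, with the sum over the connected components of the Eulerian subgraph reorganised using the same ``permute the sums over $n$ and $k$'' argument as in Step 2 of Lemma \ref{lemma:simplification}.

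I do not expect any substantial obstacle: all the hard work already went into establishing Corollary \ref{coro:prod a u}, whose conclusion was deliberately tailored to match the shape of Corollary \ref{coro:prod a} so that this lemma reduces to a direct transcription. The only minor care needed is to absorb the interaction weights $|a_{i\infty}^{v^Z}|$ and $|b_{i\infty}^{v^Z}|$ on the vertices of $V^\infty$ into the quadratic estimate before peeling, which is handled internally by \eqref{mainducorollaire}, and to verify that the geometric factor in $H_X$ (arising from the pseudoforest sum and the log factors) matches term-for-term with the bound produced here.
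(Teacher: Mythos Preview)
Your proposal is correct and matches the paper's approach, which is a two-line sketch instructing one to parallel Lemma~\ref{lemma:sum Euler} with Corollary~\ref{coro:prod a u} \eqref{mainducorollaire} substituted for Corollary~\ref{coro:prod a}. The only minor point you omit is the case distinction the paper makes explicit: when $V_{X_l'}\cap(V_1^\infty\cup V_2^\infty)=\emptyset$ for some $l$, one falls back on Corollary~\ref{coro:prod a} itself, since Corollary~\ref{coro:prod a u} is stated under the hypothesis $V^\infty\neq\emptyset$.
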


\begin{proof}
	The argument parallels that of Lemma~\ref{lemma:sum Euler}. 
	Insert the quadratic estimate from Corollary~\ref{coro:prod a u} \eqref{mainducorollaire} when 
	\(V_1^\infty \neq \emptyset\), and from 
	Corollary~\ref{coro:prod a} when \(V_1^\infty = \emptyset\); 
	this yields the desired result.
\end{proof}

Combining Lemma \ref{lemma:split u}, Lemma \ref{lemma:sum Euler u} and Lemma \ref{lemma:control sum intera}  and the multipole partition function estimates from Propositions \ref{prop:bounded lower} and \ref{prop:bounded upper}, we obtain the analogue of Lemma \ref{lemma:simplification}:

\begin{lemma}\label{lemma:simplified u}
	Let $V_\bad \subset [N]$, $V_\good\coloneqq [N]\setminus V_\bad$ and $Z\in (\Lambda^2)^{|V_\bad|}$. Let $X$ be a subpartition of $V_\good$ such that for every $S\in X$, one has $|S|\leq p(\beta)$. Then, there exists $C>0$ depending only on $\beta$ and $M$ and $C_0>0$ depending only on $\beta$ such that 
	\begin{equation}
		|\Ksf_{\ve_0}^{+,Z}(X)|\leq \frac{e^{C|V_X|}}{(N\lambda^{(2-\beta)})^{|V_X|}\Msf_{\ve_0}^0(X)}\sum_{n=0}^\infty \frac{1}{n!}\sum_{\substack{X_1,\ldots,X_n\subset X\\ \mathrm{disjoint} }} \prod_{l=1}^n\sum_{T_l^a\in \mathsf{T}^{X_l}}\sum_{T^b \in \mathsf{T}^{\Coarse_X(X_1,\ldots,X_n)}} \mc{J}_{C_0}(\cup_{l=1}^n T_l^a,T^b),
	\end{equation}
	where for every $T_1^a\in \mathsf{T}^{X_1},\ldots,T_n^a\in \mathsf{T}^{X_n}$, $T^b\in \mathsf{T}^{\Coarse_X(X_1,\ldots,X_n)}$, $ \mc{J}_{C_0}(\cup_{l=1}^n T_l^a,T^b)$ is as in \eqref{def:IT'}.
\end{lemma}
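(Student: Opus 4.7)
The plan is to mirror the argument of Lemma \ref{lemma:simplification}, replacing each ingredient by its frozen-bad-points counterpart established earlier in the section. The starting point is the Penrose-resummed decomposition of $\Ksf_{\ve_0}^{+,Z}(X)$ given in Lemma \ref{lemma:split u}: the activity is bounded by a double sum, indexed by disjoint pairs $(V_1^\infty,V_2^\infty)\subset V_X$ and by disjoint subpartitions $(X_1,\ldots,X_n)$ of $X$, of an expectation under $\Psf_X^{+,\ve_0,Z}$ of a product of the factors $\mc{L}_{C_0}^+(X_l,V_1^\infty,V_2^\infty)$ and a tree product $\prod_{ij\in T^b} b_{ij}^\abs$.

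First I would apply Lemma \ref{lemma:sum Euler u} to each factor $\mc{L}_{C_0}^+(X_l,V_1^\infty,V_2^\infty)$, which absorbs the dependence on $V_1^\infty,V_2^\infty$ (thanks to the quadratic estimate of Corollary \ref{coro:prod a u}) into a bound of the form $e^{C|V_{X_l}|}\sum_{n_l\ge 1}\tfrac{1}{n_l!}\sum_{X_1^l,\ldots,X_{n_l}^l\subset X_l}H_{X_l}(X_1^l,\ldots,X_{n_l}^l)$. Then I would swap the summation over partitions of $X$ with the internal partitions of each $X_l$, exactly as in Step 3 of the proof of Lemma \ref{lemma:simplification}, so that the double-layer sum over Eulerian components collapses into a single sum over disjoint subpartitions of $X$. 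The sum over $(V_1^\infty,V_2^\infty)$ contributes only the harmless combinatorial factor $3^{|V_X|}$, which gets absorbed into $e^{C|V_X|}$.

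Next I would dispose of the three remaining terms inside $H_X$, following Lemma \ref{lemma:simplification}: the exponential $\exp\bigl(C_0\sum_{ij\in\mc{E}^\inter(X)}a_{ij}^\abs\bigr)$ and the product $\prod_{i\in V_X}(1+\sum_{j\ne i}g_{ij})$ are controlled by Lemma \ref{lemma:control sum intera} (valid on the event $\{|S|\le p(\beta)\}\cap\bigcap\mc{B}_{ij}^c$, both of which are enforced under $\Psf_X^{+,\ve_0,Z}$ by the definition of $I_\good^Z$ and the support constraint $r_i\le\ve_0\Cut$), yielding the multiplicative factor $\prod_{i\in V_X}(\ve_0\Cut/r_i)^{C/M}$. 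The logarithmic factor $\prod_i\log_M(\ve_0\Cut/r_i)$ inside $H_X$ is then absorbed into this power, possibly after slightly shrinking $C/M$. What is left is precisely the integrand defining $\mc{J}_{C_0}$ in \eqref{def:IT'}.

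Finally, I would divide and multiply by the normalization $\Msf_{\ve_0}^{+,Z}(X)$ to pass from the expectation to an integral against $\prod_{i\in V_X}e^{\beta\g_\lambda(x_i-y_i)}\indic_{r_i\le\ve_0\Cut}$; the factor $(NC_{\beta,\lambda,\ve_0})^{-|V_X|}$ is replaced by $(N\lambda^{2-\beta})^{-|V_X|}$ up to a constant via \eqref{devtCbeta}. The last step is to compare $\Msf_{\ve_0}^{+,Z}(X)$ with $\Msf_{\ve_0}^0(X)$: the lower bounds \eqref{eq:MX +low} from Proposition \ref{prop:bounded upper} and \eqref{eq:MX -low} from Proposition \ref{prop:bounded lower}, both yielding the same scale $\gamma_{\beta,\lambda,|S|}/N^{|S|-1}$ block-by-block, give $\Msf_{\ve_0}^{+,Z}(X)\ge C^{-|V_X|}\Msf_{\ve_0}^0(X)$. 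I expect no real obstacle here: the main subtlety—recovering the quadratic estimate in the presence of bad points—has already been done in Corollary \ref{coro:prod a u} and used in Lemma \ref{lemma:sum Euler u}, so this lemma is just the bookkeeping step that packages the previous work into the form needed for the absolute convergence proof of Proposition \ref{prop:absolute upper}.
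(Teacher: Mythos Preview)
Your proposal is correct and follows exactly the paper's approach: the paper's proof is literally the one-line sentence ``Combining Lemma \ref{lemma:split u}, Lemma \ref{lemma:sum Euler u} and Lemma \ref{lemma:control sum intera} and the multipole partition function estimates from Propositions \ref{prop:bounded lower} and \ref{prop:bounded upper}, we obtain the analogue of Lemma \ref{lemma:simplification}'', and you have correctly unpacked this into the same sequence of steps as in Lemma \ref{lemma:simplification}, including the sum-swap and the $3^{|V_X|}$ factor from the $(V_1^\infty,V_2^\infty)$ summation. One small correction: the event $\bigcap_{ij\in\mc{E}^\inter(X_l)}\mc{B}_{ij}^c$ required by Lemma \ref{lemma:control sum intera} is \emph{not} enforced by $I_\good^Z$ (which only concerns distances from good dipoles to bad points, see \eqref{def:IgoodZ}) but rather by the explicit factor $\prod_{ij\in\mc{E}^\inter(X')}\indic_{\mc{B}_{ij}^c}$ already present in the definition \eqref{def:mcL+} of $\mc{L}_{C_0}^+$, so Lemma \ref{lemma:control sum intera} must be applied before that indicator is dropped.
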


We finally prove the absolute convergence result of Proposition \ref{prop:absolute upper}.

\begin{proof}[Proof of Proposition \ref{prop:absolute upper}]
	By Lemma \ref{lemma:simplified u}, the activity $\Ksf_{\ve_0}^{+,Z}(X)$ satisfies the exact same bound as $\Ksf_{\ve_0}^-(X)$ in Lemma \ref{lemma:simplification}. Therefore, since the proofs of Lemma \ref{lem:cvgentseries} and Proposition \ref{prop:absolute lower} only depend on this final bound and on the estimate on bounded clusters already established in Proposition \ref{prop:bounded upper}, this concludes the proof of Proposition \ref{prop:absolute upper}.
\end{proof}

\subsection{Reduction to a system of non interacting bad points}\label{sub:hier u}
Recall  Lemma \ref{lemma:start upper}.
In Proposition~\ref{prop:absolute upper}, we have proved that the cluster series \eqref{eq:rest b u} is absolutely convergent, which implies by Lemma~\ref{lemma:resum2} that for any bad point configuration $Z\in (\Lambda^2)^{|V_\bad|}$, the quantity $W(Z)$ is positive and
\begin{equation*}
	\log W(Z)=\sum_{n=1}^\infty\frac{1}{n!}\sum_{\substack{X_1,\ldots,X_n\in \mc{P}(\pi):\\ \mathrm{connected}} }\Ksf_{\ve_0}^{+,Z}(X_1)\cdots \Ksf_{\ve_0}^{+,Z}(X_n) \mathrm{I}(G(X_1,\ldots,X_n)).   
\end{equation*}
Moreover, by the estimate \eqref{eq:rest b u} of Proposition~\ref{prop:absolute upper}, one can restrict the sum to clusters of cardinality smaller than $p(\beta)$, and by Proposition~\ref{prop:expansion +}, one can replace $\Ksf_{\ve_0}^{+,Z}$ by $ \Ksf_{\ve_0}^0$ and we will show that   
\begin{multline}\label{eq:noZ}
	\log W(Z)=  \sum_{n=1}^\infty\frac{1}{n!}\sum_{\substack{X_1,\ldots,X_n\in \mc{P}(\pi):\\ \mathrm{connected}} }\Ksf_{\ve_0}^0(X_1)\cdots \Ksf_{\ve_0}^0(X_n) \mathrm{I}(G(X_1,\ldots,X_n))\indic_{|V_{X_1\cup \ldots \cup X_n}|\leq p(\beta)}\\+O(N\delta_{\beta,\lambda}+|V_\bad|).
\end{multline}
Note that the right-hand side of \eqref{eq:noZ} depends only on the bad points configuration $Z$ through the cardinality of $V_\bad$. Moreover, by Lemma \ref{lemma:badbad}, there exists a constant $C>0$ depending on $\beta$ and $p(\beta)$ such that 
\begin{equation}\label{eq:580}
	\prod_{i,j\in V_\bad:i<j}e^{-\beta v_{ij}^Z}\indic_{\mc{A}_{ij}}\leq e^{C|V_\bad|}.
\end{equation}
It therefore remains to control the bad points partition function defined by
\begin{equation}\label{def:Z bad}
	\Zsf_{\bad,N-N'}\coloneqq \int_{(\Lambda^2)^{|V_\bad|} } \indic_{\{ I_\bad=V_\bad\}}\prod_{i,j\in V_\bad:i<j}\indic_{\mc{A}_{ij}}\prod_{i\in V_\bad}\frac{1}{(\tau_i^{+,Z})^{\frac{\beta}{2}}} \frac{1}{(\tau_i^{-,Z})^{\frac{\beta}{2}}}\prod_{i\in V_\bad}\dd x_i \dd y_i.
\end{equation}
The main result will be:
\begin{prop}\label{prop:bad points}
	Let $\beta\in (2,\infty)$. Let $\Zsf_{\bad,N-N'}$ be as in \eqref{def:Z bad}. Let $t\geq 1$. 
	
	Then, taking $\lambda$ and $\ve_0$ small enough with respect to $\beta,M$ and $p(\beta)$, there exists a constant $C>0$ depending on $\beta$, $M$, $\ve_0$, $p(\beta)$  and $t$ such that
	\begin{equation*}
		\binom{N}{N-N'}\frac{\Zsf_{\bad,N-N'}}{(NC_{\beta,\lambda,\ve_0})^{N-N'}}e^{t(N-N')}\leq e^{CN\delta_{\beta,\lambda} }.
	\end{equation*}   
\end{prop}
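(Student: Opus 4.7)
The strategy is to decompose $V_\bad$ according to the three types of bad points (Definitions \ref{def:type1}--\ref{def:type3}) and bound each contribution to $\Zsf_{\bad,N-N'}$ separately. For type-3 points we first exploit that $I_\bad^3$ is the multipole-closure of $J_\bad^3$ and that every multipole intersecting $I_\bad^3$ has cardinality at most $p(\beta)$ (otherwise it would lie in $V_\bad^{2,1}$), so $|V_\bad^3|\le Cp(\beta)\,|V_\bad^1\cup V_\bad^2|$. Since for $i\in V_\bad^3$ the radii still satisfy $\tau_i^{\pm,Z}\ge \max(r_i/4,\lambda)$ by construction, each such dipole contributes only a bounded factor controlled by the standard multipole estimate \eqref{eq:int M lower} of Lemma \ref{lemma:lowerM+}, which can be absorbed into a slight enlargement of $t$. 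This reduces the problem to bounding the contributions of $V_\bad^1$ and $V_\bad^2$ alone.

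For the type-2 contribution we again invoke Lemma \ref{lemma:lowerM+}: a multipole of cardinality $k$ integrates to at most $CN\lambda^{(2-\beta)k+2(k-1)}$. For type-$2,1$ multipoles, i.e.~cardinality $k>p(\beta)\ge p^*(\beta)$, after dividing by $(NC_{\beta,\lambda,\ve_0})^k$ each such multipole contributes at most $C\Cut^{-2}\le C\delta_{\beta,\lambda}$ per constituent dipole; summing over $k$ and over the number of such multipoles via Stirling yields $e^{CN\delta_{\beta,\lambda}}$. For type-$2,2,k$ points, the threshold $\#_k(\Pi^\mult)>\ve_0^{-\alpha(\beta)}N\lambda^{2(k-1)}$ combined with the pointwise bound $\Msf^0_{\ve_0}(S)\le C\lambda^{2(k-1)}/N^{k-1}$ from Proposition \ref{prop:bounded lower} and Stirling on $\binom{N}{m}$ produces a Chernoff-type rate-function estimate of order $\exp(-cN\lambda^{2(k-1)}\ve_0^{-\alpha(\beta)})\le \exp(-CN\delta_{\beta,\lambda})$ once $\ve_0$ is small enough (in particular $\ve_0^{-\alpha(\beta)}\gg |\log\delta_{\beta,\lambda}|$).

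The main obstacle is the $V_\bad^1$ analysis. By Definition \ref{def:continuous clustering}, each $i\in J_\bad^{1,+}$ belongs to a same-sign cluster of exactly $q(\beta)+1$ positive charges confined to a ball of radius $\tau<\max(r_i/4,\lambda)$, and symmetrically for $J_\bad^{1,-}$. The plan is to enumerate the disjoint same-sign clusters produced by the ball-growth algorithm and stratify each by its stopping radius $\tau\in[\lambda,\ve_0\Cut]$. Per positive cluster of $q(\beta)+1$ charges at radius $\tau$: fixing the cluster center in $\Lambda$ (volume $\sim N$) and integrating the other $q(\beta)$ positions inside $B(\cdot,\tau)$ contributes $\tau^{2q(\beta)}$, while the weights $\prod(\tau_i^{+,Z})^{-\beta/2}$ give $\tau^{-(q(\beta)+1)\beta/2}$. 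The partner $y_i$-integrations, with weight $(\tau_i^{-,Z})^{-\beta/2}$, are handled via a separate enumeration of the concurrent negative clusters --- either $\sigma_N(i)\notin J_\bad^{1,-}$, in which case $\tau_i^{-,Z}\ge \max(r_i/4,\lambda)$ and $y_i$ integrates against a dipole-type weight giving $\sim NC_{\beta,\lambda,\ve_0}$, or $\sigma_N(i)\in J_\bad^{1,-}$ and we fold it into a negative cluster treated identically by symmetry. Integration over $\tau$ then reduces to
\[
\int_\lambda^{\ve_0\Cut}\tau^{2q(\beta)-(q(\beta)+1)\beta/2}\,d\tau,
\]
and the choice $q(\beta)=\tfrac{3}{2-\beta/2}-1$, equivalent to $(q(\beta)+1)(2-\beta/2)=3$, is precisely calibrated so that the exponent of $\tau$ equals $1$ and the integral is $O(\Cut^2)$. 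Collecting factors and dividing by $(NC_{\beta,\lambda,\ve_0})^{q(\beta)+1}\sim (N\lambda^{2-\beta})^{q(\beta)+1}$, each such cluster yields a net contribution of order $N\delta_{\beta,\lambda}$. For $\beta\ge 4$, the analogous bound with $q(\beta)=p_0$ uses $\lambda^{2p_0}=\delta_{\beta,\lambda}$ directly.

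Finally, summing the $V_\bad^1$ contributions over the number $m$ of same-sign clusters and the associated positions weights against the combinatorial entropy $\binom{N}{N-N'}e^{t(N-N')}\le (CN/(N-N'))^{N-N'}e^{t(N-N')}$ (Stirling), and folding in the type-$2$ bounds from the previous paragraph, collapses into a Poisson-type series whose sum is $e^{CN\delta_{\beta,\lambda}}$, provided $\ve_0$ and $\lambda$ are chosen small enough (so that the per-cluster factor $N\delta_{\beta,\lambda}$ beats the entropy cost per cluster). The delicate combinatorial point is ensuring that the per-cluster factorization is compatible with the matching indicators $\indic_{\mc{A}_{ij}}$ and with the coupled positive/negative clusterings; the key here is that the matching compatibility across disjoint same-sign clusters, enforced by the geometric estimates of Lemma \ref{lemma:geo}, does not introduce any additional coupling beyond a harmless multiplicative constant per cluster.
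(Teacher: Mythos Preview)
Your overall decomposition by bad-point type matches the paper's strategy (equation~\eqref{eq:non interacting} together with Lemmas~\ref{lemma:type3}--\ref{lemma:type22}), but the execution has a genuine gap in the type-3 step and a misapplied lemma in the type-2 step.

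\textbf{Type-3 cardinality bound.} The claim $|V_\bad^3|\le Cp(\beta)\,|V_\bad^1\cup V_\bad^2|$ is not justified and is in general false. The multipole-closure argument only gives $|I_\bad^3|\le p(\beta)|J_\bad^3|$; what is missing is a bound of $|J_\bad^3|$ by $|I_\bad^1\cup I_\bad^2|$. But a single $j\in I_\bad^1\cup I_\bad^2$ can trigger arbitrarily many $i\in J_\bad^3$: the condition $d_{ij}\le M\max(r_i,\lambda)$ involves $r_i$, not $r_j$, so dipoles $i$ of increasing size arranged concentrically around $j$ all lie in $J_\bad^3$. Consequently you cannot ``absorb type-3 into a slight enlargement of $t$''. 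The paper instead uses a phase-space volume argument (Lemma~\ref{lemma:type3}): for $i\in J_\bad^3$, the position $x_i$ is confined to a union of balls of radius $O(r_i)$ around the $N_1+N_2$ points of $V_\bad^1\cup V_\bad^2$, so integrating $x_i$ then $r_i$ contributes a factor $\sim(N_1+N_2)/N$ per point after normalization by $NC_{\beta,\lambda,\ve_0}$. Combined with the entropy $(N/N_3)^{N_3}$, the sum over $\tilde N_3$ collapses to $e^{e^{Ct}(N_1+N_2)}$, which is then absorbed into the type-1 and type-2 bounds.

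\textbf{Type-$(2,1)$ multipoles.} You invoke Lemma~\ref{lemma:lowerM+} for multipoles of cardinality $k>p(\beta)$, but that lemma requires $|S|\le p^*(\beta)$, so the bound $\lambda^{2(k-1)}$ does not apply. The paper (Lemma~\ref{lemma:type2}) computes the per-multipole integral directly: for $\beta\in(2,4)$ and $k>p(\beta)$ one gets $(\ve_0 R_{\beta,\lambda})^{(4-\beta)k-2}$, and the factor $\ve_0^{(4-\beta)k}$ is essential to sum the series over arbitrarily large $k$. Without this gain in $\ve_0$ (or in $\lambda$ for $\beta\ge4$), the sum over multipole sizes diverges; your sketch does not produce this factor.

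Your type-1 outline is along the right lines---the key exponent computation $(q(\beta)+1)(2-\beta/2)=3$ is correct and matches the paper---but the partner handling (``either $\sigma_N(i)\notin J_\bad^{1,-}$ \dots\ or fold into a negative cluster'') needs the careful re-clustering of Lemma~\ref{lemma:type1}, Steps~2--3, to make the positive and negative cluster integrations genuinely factorize.
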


As a first step, 
for every $N_1,N_2,N_3\geq 0$ such that $N_1+N_2+N_3=N-N'$, define 
\begin{multline}\label{def:Z bad second}
	\Zsf_{\bad,N_1,N_2,N_3}\coloneqq \int_{(\Lambda^2)^{|V_\bad|} }\prod_{i=1}^3 \indic_{\{ I_\bad^i=V_\bad^i\}}\prod_{i,j\in V_\bad:i<j}\indic_{\mc{A}_{ij}}  \prod_{i\in V_\bad}\frac{1}{(\tau_i^{+,Z})^{\frac{\beta}{2}}} \frac{1}{(\tau_i^{-,Z})^{\frac{\beta}{2}}}\prod_{i\in V_\bad}\dd x_i \dd y_i,
\end{multline}
where $V_\bad^1, V_\bad^2$, $V_\bad^3\subset [N]$ are arbitrary disjoint sets of cardinality $N_1, N_2, N_3$.

We have
\begin{equation}\label{eq:non interacting}
	\Zsf_{\bad,N-N'} \leq \max_{\substack{N_1,N_2,N_3:\\N_1+N_2+N_3=N-N'}}\frac{(N-N')!}{N_1!N_2!N_3!} \Zsf_{\bad,N_1,N_2,N_3},
\end{equation}
which reduces to bounding $\Zsf_{\bad, N_1, N_2, N_3}$.

\subsection{Reduction to bad points of type 1 and 2}\label{sub:type 12}

We begin by controlling bad points of type 3 by bad points of types 1 and 2.

\begin{lemma}\label{lemma:type3}
	Let $\Zsf_{\bad,N_1,N_2,N_3}$ be as in \eqref{def:Z bad second}. 
	There exists a constant $C>0$ depending on $\beta$, $M$, and $p(\beta)$ such that for all $t\geq 1$,
	\begin{equation*}
		\max_{N_3}\left(\Bigr(\frac{N}{N_3}\Bigr)^{N_3} \frac{1}{(NC_{\beta,\lambda,\ve_0})^{N_3}}  \Zsf_{\bad,N_1,N_2,N_3}e^{tN_3} \right)\leq e^{e^{Ct}(N_1+N_2)} \Zsf_{\bad,N_1}^{(1)}\Zsf_{\bad,N_2}^{(2)},
	\end{equation*}
	where 
	\begin{equation}\label{def:ZZbad1}
		\Zsf_{\bad,N_1}^{(1)}\coloneqq \int_{(\Lambda^2)^{N_1} } \indic_{I_\bad^1=V_\bad^1}\prod_{i,j\in V_\bad^1 :i<j}\indic_{\mc{A}_{ij}}\left(\prod_{i\in V_\bad^1}\frac{1}{(\tau_i^{+,Z})^{\frac{\beta}{2}}(\tau_i^{-,Z})^{\frac{\beta}{2}} }  \right)\prod_{i\in V_\bad^1}\dd x_i \dd y_i
	\end{equation}
	and
	\begin{equation}\label{def:ZZbad2}
		\Zsf_{\bad,N_2}^{(2)}\coloneqq \int_{(\Lambda^2)^{N_2}} \indic_{I_\bad^2=V_\bad^2 }\prod_{i,j\in V_\bad^2 :i<j}\indic_{\mc{A}_{ij}}\prod_{i\in V_\bad^2}e^{\beta \g_\lambda(x_i-y_i)}\dd x_i \dd y_i.
	\end{equation}
\end{lemma}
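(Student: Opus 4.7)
The overall strategy is to integrate out the positions of the $V_\bad^3$ points conditionally on the positions of $V_\bad^1\cup V_\bad^2$, showing that each type-$3$ integration produces a small multiplicative factor. After this reduction the remaining integral will factor into pieces that are respectively bounded by $\Zsf^{(1)}_{\bad,N_1}$ and $\Zsf^{(2)}_{\bad,N_2}$, and the leftover combinatorial cost will be absorbed into $\exp(e^{Ct}(N_1+N_2))$.

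The first step is to set up an ``anchor'' map $p:V_\bad^3\to V_\bad^1\cup V_\bad^2$. By Definitions \ref{def:type3} and \ref{def:bad points}, any $i\in V_\bad^3$ either itself lies in $J_\bad^3$, or belongs to the same multipole as an element of $J_\bad^3$; in either case, iterating along the multipole (whose cardinality is bounded by $p(\beta)$, since otherwise $i\in I_\bad^{2,1}$) and using the geometric estimate \eqref{eq:geo1'} from Lemma \ref{lemma:geo} shows that $\{x_i,y_i\}$ lies within $C\max(r_i,\lambda)$ of $\{x_{p(i)},y_{p(i)}\}$ for a constant $C=C(\beta,M,p(\beta))$. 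Since $i\notin I_\bad^1$, we also have $\tau_i^{\pm,Z}\ge\tfrac14\max(r_i,\lambda)$, hence
\[
\frac{1}{(\tau_i^{+,Z})^{\beta/2}(\tau_i^{-,Z})^{\beta/2}}\le C\max(r_i,\lambda)^{-\beta}.
\]
Using coordinates $(x_i,\vec r_i)$ with $\vec r_i=y_i-x_i$ and estimating $\indic_{\mc{A}_{ij}}\le 1$ for pairs that join a type-$3$ point to another bad point, one obtains the one-body estimate
\[
\int\dd x_i\dd\vec r_i\,\indic_{x_i\in B(\{x_{p(i)},y_{p(i)}\},C\max(r_i,\lambda))}\max(r_i,\lambda)^{-\beta}\le C\int_0^{\ve_0\Cut}r\max(r,\lambda)^{2-\beta}\dd r.
\]
The right-hand side, divided by $C_{\beta,\lambda,\ve_0}\asymp\lambda^{2-\beta}$, is of order $\ve_0^{4-\beta}$ when $\beta<4$ and of order $\lambda^2$ when $\beta\geq 4$ (with a harmless logarithmic correction at $\beta=4$); in particular it is $o(1)$ as $\lambda,\ve_0\to 0$.

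The second step is to integrate out $V_\bad^3$ from $\Zsf_{\bad,N_1,N_2,N_3}$ in the order dictated by the assignment map (leaves of the anchor forest first). The clustering radii $\tau_i^{\pm,Z}$ for $i\in V_\bad^1$ can only grow when the $V_\bad^3$ points are removed, and likewise the multipole graph for $V_\bad^2$ only becomes sparser; hence, bounding the cross-type matching constraints $\indic_{\mc{A}_{ij}}\le 1$ and replacing $(\tau_i^{+,Z})^{-\beta/2}(\tau_i^{-,Z})^{-\beta/2}$ for $i\in V_\bad^2$ by $Ce^{\beta\g_\lambda(x_i-y_i)}$ (the two weights are comparable on the relevant regime) yields, after integration of the $V_\bad^3$ coordinates,
\[
\Zsf_{\bad,N_1,N_2,N_3}\le C^{N_3}\Bigl(\int_0^{\ve_0\Cut}r\max(r,\lambda)^{2-\beta}\dd r\Bigr)^{N_3}\sum_p\Zsf^{(1)}_{\bad,N_1}\Zsf^{(2)}_{\bad,N_2},
\]
where the sum is over anchor maps $p:V_\bad^3\to V_\bad^1\cup V_\bad^2$, of which there are at most $(N_1+N_2)^{N_3}$.

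Combining the three factors $(N/N_3)^{N_3}$, $(NC_{\beta,\lambda,\ve_0})^{-N_3}$ and $e^{tN_3}$ with the per-point integral bound just obtained, the multiplicative cost per type-$3$ point becomes $Ce^t(N_1+N_2)/N_3$ (up to a factor $o(1)$ which can be absorbed into $C$). Maximizing over $N_3$ via the elementary inequality $\sup_{k\ge 0}(x/k)^k=e^{x/e}$ yields
\[
\max_{N_3\ge 0}\Bigl(\frac{Ce^t(N_1+N_2)}{N_3}\Bigr)^{N_3}\le\exp\!\bigl(C'(N_1+N_2)e^t\bigr)\le\exp\!\bigl(e^{C''t}(N_1+N_2)\bigr),
\]
which is the desired bound. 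The main obstacle in this plan is technical: one must verify carefully that replacing $(\tau_i^{\pm,Z})$ from the full bad configuration by the corresponding radii from $V_\bad^1$ alone preserves the inequality, and that the events $\{I_\bad^i=V_\bad^i\}$ factor appropriately under the removal of $V_\bad^3$; both points rely on monotonicity of the $q(\beta)$-clustering under removal of points and on the fact that, by the closure property in Definition~\ref{def:type3}, no type-$3$ point can create a new multipole bond among the remaining type-$1$ or type-$2$ points.
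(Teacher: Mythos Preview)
Your strategy---integrate out the type-$3$ points first, then factor---matches the paper's. But your anchor construction has a genuine gap.

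You claim that every $i\in V_\bad^3$ lies within $C\max(r_i,\lambda)$ of some $p(i)\in V_\bad^1\cup V_\bad^2$. This is false. By the very definition of $J_\bad^3$, any $i\in I_\bad^3\setminus J_\bad^3$ satisfies $d_{ik}>M\max(r_i,\lambda)$ for \emph{every} $k\in I_\bad^1\cup I_\bad^2$, and this lower bound can be saturated with arbitrarily large ratio. Concretely, take a two-dipole multipole $\{i,j\}$ with $r_i=1$, $r_j=R\gg 1$, $d_{ij}\le M$ (so they are multipole neighbours), and suppose the unique type-$1/2$ bad point $k$ sits near the endpoint of $j$ opposite to $i$. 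Then $d_{jk}\le Mr_j$ so $j\in J_\bad^3$, but $d_{ik}\approx R$, so $i\notin J_\bad^3$ and your anchoring distance is of order $R$, not $O(r_i)$. Your appeal to \eqref{eq:geo1'} is also misplaced: Lemma~\ref{lemma:geo} is stated for points in $I_\good$, whereas here every point is bad.

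The paper avoids this by a \emph{two-step} integration. It first fixes $J_\bad^3=\tilde V_\bad^3$ and anchors each $i\in I_\bad^3\setminus J_\bad^3$ to some $j\in J_\bad^3$ in its own multipole; the multipole chain gives $d_{ij}\le C(M,p(\beta))\max(r_i,\lambda)$, producing the volume factor $\max(r_i,\lambda)^2$ and $\tilde N_3$ anchor choices (Step~2). Only then does it anchor $j\in J_\bad^3$ to $V_\bad^1\cup V_\bad^2$ at distance $\le M\max(r_j,\lambda)$, producing the volume factor $\max(r_j,\lambda)^2$ and $(N_1+N_2)$ choices (Step~3). The two radii $r_i$ and $r_j$ are genuinely different, which is why the anchoring cannot be collapsed into one step; your count $(N_1+N_2)^{N_3}$ of anchor maps should instead be (roughly) $\tilde N_3^{\,N_3-\tilde N_3}(N_1+N_2)^{\tilde N_3}$, with an additional sum over $\tilde N_3\in[N_3/p(\beta),N_3]$.
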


\medskip

\begin{proof}
	\
	\paragraph{\bf{Step 1: fixing $J_\bad^3$}}
	Recall  \eqref{def:Z bad second}.
	Since dipoles in $V_\bad\setminus (I_\bad^1\cup I_\bad^2)$ are in multipoles of cardinality smaller than $p(\beta)$, choosing one vertex in $J_\bad^3$ (that is, having an internal bad dipole) per multipole, there exists $\tilde{V}_\bad^{3}\subset J_\bad^3$ such that 
	\begin{equation*}
		\tilde{V}_\bad^3\subset V_\bad^3,\qquad |\tilde{V}_\bad^3|\geq \frac{1}{p(\beta)}|V_\bad^3|.
	\end{equation*}
	Therefore, 
	\begin{equation*}
		\Zsf_{\bad,N_1,N_2,N_3}\leq \sum_{\tilde{N}_3=\lfloor\frac{1}{p(\beta)}N_3\rfloor}^{N_3}\binom{N_3}{\tilde{N}_3}\max_{\tilde{V}_\bad^3\subset V_\bad^3:|\tilde{V}_\bad^3|=\tilde{N}_3 } I(N_1,N_2,N_3,\tilde{N}_3),
	\end{equation*}
	where 
	\begin{multline*}
		I(N_1,N_2,N_3,\tilde{N}_3)\coloneqq \int_{(\Lambda^2)^{|V_\bad|} } \indic_{ I_\bad^1=V_\bad^1, I_\bad^2=V_\bad^2 }\indic_{J_\bad^3 =\tilde{V}_\bad^3}\indic_{I_\bad^3=V_\bad^3} \prod_{i,j\in V_\bad:i<j}\indic_{\mc{A}_{ij}}\\ \times \prod_{i\in V_\bad}\frac{1}{(\tau_i^{+,Z})^{\frac{\beta}{2}}} \frac{1}{(\tau_i^{-,Z})^{\frac{\beta}{2}}}\prod_{i\in V_\bad}\dd x_i \dd y_i.
	\end{multline*}
	Recall that, from Definition \ref{def:type1} and Remark \ref{remcrucialetau},  for every $i\in V_\bad^2\cup V_\bad^3$, we have 
	\begin{equation*}
		\tau_i^{+,Z}=\tau_i^{-,Z}=\max(\tfrac{1}{4}r_i,\lambda).
	\end{equation*}
	Therefore, there exists a constant $C>0$ depending on $\beta$ such that 
	\begin{multline*}
		I(N_1,N_2,N_3,\tilde{N}_3)\leq C^{N_2+N_3}\int_{(\Lambda^2)^{|V_\bad|} } \indic_{ I_\bad^1=V_\bad^1, I_\bad^2=V_\bad^2 }\indic_{J_\bad^3 =\tilde{V}_\bad^3}\indic_{I_\bad^3=V_\bad^3} \prod_{i,j\in V_\bad:i<j}\indic_{\mc{A}_{ij}}\\ \times \prod_{i\in  V_\bad^2\cup V_\bad^3}e^{\beta \g_\lambda(x_i-y_i)} \prod_{i\in V_\bad^1}\frac{1}{(\tau_i^{+,Z})^{\frac{\beta}{2}}} \frac{1}{(\tau_i^{-,Z})^{\frac{\beta}{2}}}\prod_{i\in V_\bad}\dd x_i \dd y_i.
	\end{multline*}

	\paragraph{\bf{Step 2: integrating variables in $V_\bad^3\setminus J_\bad^3$}}
	Recall that if $i\in I_\bad^3\setminus J_\bad^3$, then $r_i\leq \ve_0\Cut$ (otherwise $i$ would be in $I_\bad^1$). Moreover, if $i\in I_\bad^3\setminus J_\bad^3$, it is in the same multipole as some $j\in J^3_\bad$, i.e.~there exists $j\in J_\bad^3$ such that 
	$$d_{ij}\leq M\max(\min(r_i,r_j),\lambda)\leq M\max(r_i,\lambda).$$ 
	Hence, if $i\in I_\bad^3\setminus J_\bad^3$, then
	\begin{equation*}
		x_i\in \bigcup_{j\in J_\bad^3}\Bigr(B(x_j,2M\max(r_i,\lambda) )\cup B(y_j,2M \max(r_i,\lambda))\Bigr).
	\end{equation*}
	
	Performing the change of variables $(x_i,y_i)\mapsto (x_i,\vec{r}_i)=(x_i,y_i-x_i)$ and integrating out the variables $x_i$ for $i\in V_\bad^3\setminus\tilde{V}_\bad^3$, we get that there exists $C>0$ depending on $\beta$ and $M$ such that
	\begin{multline*}
		I(N_1,N_2,N_3,\tilde{N}_3)
		\leq C^{N_2+N_3}\int \indic_{I_\bad^1=V_\bad^1, I_\bad^2=V_\bad^2,J_\bad^3=\tilde{V}_\bad^3 }\prod_{i\in V_\bad^3\setminus\tilde{V}_\bad^3}\Bigr(\max(r_i,\lambda)^{2-\beta }\tilde{N}_3\Bigr)\\ \times   \prod_{i,j\in V_\bad^1 \cup V_\bad^2\cup \tilde{V}_\bad^3 :i<j}\indic_{\mc{A}_{ij}}\left(\prod_{i\in V_\bad^2 \cup\tilde{V}_\bad^3 }e^{\beta \g_\lambda(x_i-y_i)}  \right)\prod_{i\in V_\bad^1}\frac{1}{(\tau_i^{+,Z})^{\frac{\beta}{2}}} \frac{1}{(\tau_i^{-,Z})^{\frac{\beta}{2}}}\\ \prod_{i\in V_\bad^1\cup V_\bad^2\cup \tilde{V}_\bad^3 }\dd x_i \dd y_i\prod_{i\in V_\bad^3\setminus \tilde{V}_\bad^3}\dd \vr_i.
	\end{multline*}

	$\bullet$ Suppose that $\beta\in (2,4)$. Then, there exists $C>0$ depending on $\beta$ and $M$ such that 
	\begin{equation*}
		I(N_1,N_2,N_3,\tilde{N}_3)
		\leq C^{N_2+N_3}(R_{\beta,\lambda}^{4-\beta}\tilde{N}_3)^{N_3-\tilde{N}_3}    J(N_1,N_2,\tilde{N}_3),
	\end{equation*}
	where
	\begin{multline*}
		J(N_1,N_2,\tilde{N}_3)\coloneqq   \int_{(\Lambda^2)^{N_1+N_2+\tilde{N}_3 } } \indic_{ I_\bad^1=V_\bad^1, I_\bad^2=V_\bad^2 }\indic_{J_\bad^3 =\tilde{V}_\bad^3} \prod_{i,j\in V_\bad^1\cup V_\bad^2\cup \tilde{V}_\bad^3 :i<j}\indic_{\mc{A}_{ij}}\\ \times \prod_{i\in V_\bad^2\cup \tilde{V}_\bad^3}e^{\beta \g_\lambda(x_i-y_i)} \prod_{i\in  V_\bad^1}\frac{1}{(\tau_i^{+,Z})^{\frac{\beta}{2}}} \frac{1}{(\tau_i^{-,Z})^{\frac{\beta}{2}}}\prod_{i\in V_\bad^1\cup V_\bad^2\cup \tilde{V}_\bad^3 }\dd x_i \dd y_i.
	\end{multline*}
	Hence, using \eqref{eq:Clambda bound} and the fact that $R_{\beta,\lambda}^{4-\beta}=\lambda^{2-\beta}$, we obtain the existence of $C>0$ depending on $\beta$ and $M$ such that
	\begin{equation*}
		\frac{1}{(NC_{\beta,\lambda,\ve_0})^{N_3-\tilde{N}_3}} I(N_1,N_2,N_3,\tilde{N}_3)\leq C^{N_2+N_3}\Bigr(\frac{N_3}{N}\Bigr)^{N_3-\tilde{N}_3}J(N_1,N_2,\tilde{N}_3).
	\end{equation*}
	
	$\bullet$ Suppose that $\beta>4$. Then, there exists $C>0$ depending on $\beta$ and $M$ such that
	\begin{equation*}
		I(N_1,N_2,N_3,\tilde{N}_3)
		\leq C^{N_2+N_3}(\lambda^{4-\beta}\tilde{N}_3)^{N_3-\tilde{N}_3}    J(N_1,N_2,\tilde{N}_3).
	\end{equation*}
	Therefore, there exists $C>0$ depending on $\beta$ and $M$ such that
	\begin{equation*}
		\frac{1}{(NC_{\beta,\lambda,\ve_0})^{N_3-\tilde{N}_3}} I(N_1,N_2,N_3,\tilde{N}_3)\leq C^{N_2+N_3}\Bigr(\frac{N_3}{N}\Bigr)^{N_3-\tilde{N}_3}\lambda^{2(N_3-\tilde{N}_3)} J(N_1,N_2,\tilde{N}_3).
	\end{equation*}
	
	$\bullet$ Suppose that $\beta=4$. Then, there exists a constant $C>0$ depending on $p(\beta)$ and $M$ such that
	\begin{equation*}
		I(N_1,N_2,N_3,\tilde{N}_3)
		\leq C^{N_2+N_3}|\log \lambda|^{N_3-\tilde{N}_3}    J(N_1,N_2,\tilde{N}_3).
	\end{equation*}
	Therefore, there exists $C>0$ depending on $p(\beta)$ such that
	\begin{equation*}
		\frac{1}{(NC_{\beta,\lambda,\ve_0})^{N_3-\tilde{N}_3}} I(N_1,N_2,N_3,\tilde{N}_3)\leq C^{N_2+N_3}\Bigr(\frac{N_3}{N}\Bigr)^{N_3-\tilde{N}_3}(\lambda^{2}|\log\lambda|)^{N_3-\tilde{N}_3} J(N_1,N_2,\tilde{N}_3).
	\end{equation*}
	
	Therefore, for every $\beta>2$, there exists $C>0$ depending on $\beta, M$ and $p(\beta)$ such that
	\begin{equation}\label{eq:I4out}
		\frac{1}{(NC_{\beta,\lambda,\ve_0})^{N_3-\tilde{N}_3}} I(N_1,N_2,N_3,\tilde{N}_3)\leq C^{N_2+N_3}\Bigr(\frac{N_3}{N}\Bigr)^{N_3-\tilde{N}_3}J(N_1,N_2,\tilde{N}_3).
	\end{equation}

	\paragraph{\bf{Step 3: integrating variables in $J_\bad^3$}}
	
	By Definition \ref{def:type3}, if $i\in J_\bad^{3}$, then $r_i\leq \ve_0\Cut$ and there exists $j\in I_\bad^1\cup I_\bad^2$ such that $d_{ij}\leq M\max(r_i,\lambda)$.  Therefore, if $i\in J_\bad^3$, then
	\begin{equation*}
		x_i\in \bigcup_{j\in I_\bad^1\cup I_\bad^2 }\Bigr(B(x_j,2M\max(r_i,\lambda))\cup B(y_j,2M\max(r_i,\lambda))\Bigr).
	\end{equation*}
	Performing the change of variables $(x_i,y_i)\mapsto (x_i,\vec{r}_i)=(x_i,y_i-x_i)$ and integrating out the variables $x_i$ for $i\in \tilde{V}_\bad^3$ thus gives the existence of $C>0$ depending on $\beta$ and $M$ such that 
	\begin{multline*}
		J(N_1,N_2,\tilde{N}_3)
		\leq C^{\tilde{N}_3}\int \indic_{I_\bad^1=V_\bad^1, I_\bad^2=V_\bad^2}\prod_{i\in \tilde{V}_\bad^3}\Bigr(\max(r_i,\lambda)^{2-\beta }(N_1+N_2)\Bigr) \prod_{i,j\in V_\bad^1 \cup V_\bad^2 :i<j}\indic_{\mc{A}_{ij}}\\ \times  \left(\prod_{i\in V_\bad^2 }e^{\beta \g_\lambda(x_i-y_i)}  \right)\prod_{i\in V_\bad^1}\frac{1}{(\tau_i^{+,Z})^{\frac{\beta}{2}}} \frac{1}{(\tau_i^{-,Z})^{\frac{\beta}{2}}}\prod_{i\in V_\bad^1\cup V_\bad^2 }\dd x_i \dd y_i\prod_{i\in \tilde{V}_\bad^3}\dd \vr_i.
	\end{multline*}
	
	$\bullet$ Suppose that $\beta\in (2,4)$. Then, 
	\begin{equation*}
		J(N_1,N_2,\tilde{N}_3)
		\leq C^{\tilde{N}_3}  \Bigr(R_{\beta,\lambda}^{4-\beta }(N_1+N_2)\Bigr)^{\tilde{N}_3} \Zsf_{\bad,N_1}^{(1)}\Zsf_{\bad,N_2}^{(2)},
	\end{equation*}
	where $\Zsf_{\bad}$ are as in \eqref{def:ZZbad1}, \eqref{def:ZZbad2}.

	Thus, since $C_{\beta,\lambda,\ve_0}\geq \frac{1}{C}\lambda^{2-\beta}=\frac{1}{C}R_{\beta,\lambda}^{4-\beta}$, we get that
	\begin{equation*}
		\frac{1}{(NC_{\beta,\lambda,\ve_0})^{\tilde{N}_3}}  J(N_1,N_2,\tilde{N}_3) 
		\leq \Bigr(C\frac{N_1+N_2}{N}\Bigr)^{\tilde{N}_3}\Zsf_{\bad,N_1}^{(1)}\Zsf_{\bad,N_2}^{(2)}.
	\end{equation*}
	
	$\bullet$ Suppose that $\beta> 4$. Then, there exists $C>0$ depending on $\beta$ such that 
	\begin{equation*}
		\frac{1}{(NC_{\beta,\lambda,\ve_0})^{\tilde{N}_3}}  J(N_1,N_2,\tilde{N}_3) 
		\leq \Bigr(C\frac{N_1+N_2}{N}\Bigr)^{\tilde{N}_3}\lambda^{2\tilde{N}_3} \Zsf_{\bad,N_1}^{(1)}\Zsf_{\bad,N_2}^{(2)}.
	\end{equation*}
	
	$\bullet$ Suppose that $\beta=4$. Then, there exists $C>0$ depending on $p(\beta)$ such that
	\begin{equation*}
		\frac{1}{(NC_{\beta,\lambda,\ve_0})^{\tilde{N}_3}}  J(N_1,N_2,\tilde{N}_3) 
		\leq \Bigr(C\frac{N_1+N_2}{N}\Bigr)^{\tilde{N}_3}(\lambda^2|\log\lambda|)^{\tilde{N}_3} \Zsf_{\bad,N_1}^{(1)}\Zsf_{\bad,N_2}^{(2)}.
	\end{equation*}

	Thus, for every $\beta>2$, we deduce that there exists $C>0$ depending on $\beta, M$ and $p(\beta)$ such that 
	\begin{equation}\label{eq:J4out}
		\frac{1}{(NC_{\beta,\lambda,\ve_0})^{\tilde{N}_3}}  J(N_1,N_2,\tilde{N}_3) 
		\leq \Bigr(C\frac{N_1+N_2}{N}\Bigr)^{\tilde{N}_3}\Zsf_{\bad,N_1}^{(1)}\Zsf_{\bad,N_2}^{(2)}.
	\end{equation}

	\paragraph{\bf{Step 4: summing over $\tilde{N}_3$}}
	
	Combining \eqref{eq:I4out} and \eqref{eq:J4out}, we get that there exists $C>0$ depending on $\beta, M$ and $p(\beta)$ such that
	\begin{align*}
		&\Bigr(\frac{N}{N_3}\Bigr)^{N_3}  \frac{1}{(NC_{\beta,\lambda,\ve_0})^{N_3}} \Zsf_{\bad,N_1,N_2,N_3}\\ &\leq e^{C(N_1+N_2+N_3)} \sum_{\tilde{N}_3=\lfloor\frac{1}{p(\beta)}N_3\rfloor}^{N_3}\binom{N_3}{\tilde{N}_3}\Bigr(\frac{N_1+N_2}{N}\Bigr)^{\tilde{N}_3}\Bigr(\frac{N_3}{N}\Bigr)^{N_3-\tilde{N}_3 }\Bigr(\frac{N}{N_3}\Bigr)^{N_3} \Zsf_{\bad,N_1}^{(1)}\Zsf_{\bad,N_2}^{(2)}
		\\ &=e^{C(N_1+N_2+N_3)} \sum_{\tilde{N}_3=\lfloor\frac{1}{p(\beta)}N_3\rfloor}^{N_3}\binom{N_3}{\tilde{N}_3}\Bigr(\frac{N_1+N_2}{N}\Bigr)^{\tilde{N}_3}\Bigr(\frac{N}{N_3}\Bigr)^{\tilde{N}_3 }\Zsf_{\bad,N_1}^{(1)}\Zsf_{\bad,N_2}^{(2)}.
	\end{align*}
	Since $\tilde{N}_3\leq N_3\leq p(\beta) \tilde{N}_3$, there exists a constant $C>0$ depending on $p(\beta)$ such that
	\begin{equation*}
		\binom{N_3}{\tilde{N}_3}\leq e^{CN_3} \le e^{Cp(\beta)\tilde N_3}.
	\end{equation*}
	Therefore, there exists $C>0$ depending on $\beta, M$ and $p(\beta)$ such that for all $t\geq 1$,
	\begin{equation*}
		\begin{split}
			\Bigr(\frac{N}{N_3}\Bigr)^{N_3}  \frac{1}{(NC_{\beta,\lambda,\ve_0})^{N_3}} \Zsf_{\bad,N_1,N_2,N_3}e^{tN_3} &\leq e^{C(N_1+N_2)} \sum_{\tilde{N}_3=1}^{N}e^{Ct\tilde{N}_3}\Bigr(\frac{N_1+N_2}{N}\Bigr)^{\tilde{N}_3}\Bigr(\frac{N}{\tilde{N}_3}\Bigr)^{\tilde{N}_3 }\Zsf_{\bad,N_1}^{(1)}\Zsf_{\bad,N_2}^{(2)} \\
			&\leq e^{C(N_1+N_2)} \sum_{\tilde{N}_3=1}^{N}\frac{e^{Ct\tilde{N}_3}}{\tilde N_3!}\Bigr(\frac{N_1+N_2}{N}\Bigr)^{\tilde{N}_3}N^{\tilde{N}_3} \Zsf_{\bad,N_1}^{(1)}\Zsf_{\bad,N_2}^{(2)}.
		\end{split}
	\end{equation*}
	Thus, summing over $\tilde{N}_3$, we recognize the exponential function, and  deduce that there exists $C>0$ depending on $\beta, M$ and $p(\beta)$ such that
	\begin{equation*}
		\max_{N_3}\left(\Bigr(\frac{N}{N_3}\Bigr)^{N_3}  \frac{1}{(NC_{\beta,\lambda,\ve_0})^{N_3}} \Zsf_{\bad,N_1,N_2,N_3}e^{tN_3}\right)\leq e^{e^{Ct}(N_1+N_2)} \Zsf_{\bad,N_1}^{(1)}\Zsf_{\bad,N_2}^{(2)},
	\end{equation*}
	as claimed.
\end{proof}

\subsection{Control on bad points of type 1}
\label{sub:type1}

Next, we control the number of bad points of type 1.

\begin{lemma}\label{lemma:type1}
	Let $\beta\in (2,\infty)$ and let $q(\beta)$ be as in Definition \ref{def:qbeta}. Let $\Zsf_{\bad,N_1}^{(1)}$ be as in \eqref{def:ZZbad1}. Then, there exists a constant $C>0$ depending on $\beta$, $\ve_0$, and $p(\beta)$ such that for all $t\geq 1$,
	\begin{equation*}
		\max_{N_1}\left(\Bigr(\frac{N}{N_1}\Bigr)^{N_1}  \frac{1}{(NC_{\beta,\lambda,\ve_0})^{N_1} }\Zsf_{\bad,N_1}^{(1)}e^{t N_1}\right) \leq e^{e^{Ct} N\delta_{\beta,\lambda} }.
	\end{equation*}
\end{lemma}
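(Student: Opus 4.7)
The plan is to exploit the defining property of $q(\beta)$: by Definitions~\ref{def:continuous clustering}--\ref{def:type1}, any $i\in J_\bad^{1,\pm}$ lies in a same-sign cluster of at least $q(\beta)+1$ particles at some dyadic scale $R\in[\lambda,\ve_0\Cut]$, and $q(\beta)$ is tuned so that a cluster of that size produces a marginally integrable power of $R$. First I would decompose
\[
\indic_{\{I_\bad^1=V_\bad^1\}}\,\le\,\sum_{\pi^+,\pi^-}\sum_{(R_C)_C}\prod_{C\in\pi^+\cup\pi^-}\indic_{\{C\text{ is a same-sign cluster at dyadic scale }R_C\}},
\]
where $\pi^{\pm}$ ranges over partitions of $J_\bad^{1,\pm}$ into blocks of cardinality at least $q(\beta)+1$ and each $R_C$ is dyadic in $[\lambda,\ve_0\Cut]$. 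The constraints $\mc{A}_{ij}$ will only be used to ensure that for every purely positive-bad $i$ the partner $y_i$ satisfies $r_i\ge 4R_C$, and analogously on the negative side.

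For a fixed positive cluster $C$ of size $k$ at scale $R$, one position $x_{i_0}\in\Lambda$ integrates to $N$ and the remaining $k-1$ positions lie in $B(x_{i_0},cR)$, contributing $R^{2(k-1)}$; the weight $\prod_{i\in C}(\tau_i^{+,Z})^{-\beta/2}$ is bounded by $R^{-k\beta/2}$. For purely positive-bad $i$, the change of variables $y_i\mapsto \vec r_i$ combined with $r_i\ge 4R$ reduces the $y$-integration to $\int_{|\vec r|\ge 4R}(\tau_i^{-,Z})^{-\beta/2}\,d\vec r$; using $\tau_i^{-,Z}\ge \tfrac14 r_i$ this is bounded by $C\,\Cut^{2-\beta/2}$ for $\beta<4$ (with analogous powers for $\beta\ge 4$). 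After dividing by $(NC_{\beta,\lambda,\ve_0})^k$, Lemma~\ref{lem411} together with $\Cut^{4-\beta}\asymp\lambda^{2-\beta}$ collapses the contribution of one positive cluster to
\[
C^k\,N^{1-k}\,R^{2(k-1)-k\beta/2}\,\lambda^{k(\beta-2)/2}\,\ve_0^{k(2-\beta/2)}.
\]
The crucial point, and the very reason for the threshold $q(\beta)+1=\tfrac{3}{2-\beta/2}$ in Definition~\ref{def:qbeta}, is that $2(k-1)-k\beta/2=k(2-\beta/2)-2\ge 1$ whenever $k\ge q(\beta)+1$. Summing over dyadic scales is therefore dominated by $R\sim\ve_0\Cut$, which, once multiplied by $\binom{N}{k}$ to enumerate the choice of $k$ bad points inside the cluster, yields a ``per-cluster slot'' bound of the form $CN\delta_{\beta,\lambda}\,\ve_0^{c(k)}/k!$ with $c(k)\to\infty$ as $k\to\infty$. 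Negative clusters are treated symmetrically.

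The last step is a standard exponential resummation over cluster profiles. Summing the per-cluster bounds with the multinomial factors $N_{1,\pm}!/\prod_k(k!)^{n_k^\pm}n_k^\pm!$ and using Stirling to match $(N/N_1)^{N_1}$ with $\binom{N}{N_1}/e^{N_1}$ up to polynomial factors, the sum becomes an exponential generating function whose exponent is at most
\[
N\delta_{\beta,\lambda}\,\sum_{k\ge q(\beta)+1}\frac{\ve_0^{c(k)}\,e^{tk}}{k!}\,\le\,e^{Ct}\,N\delta_{\beta,\lambda},
\]
once $\ve_0$ is small enough relative to $t$; the factor $e^{tN_1}=\prod_C e^{t|C|}$ is precisely absorbed in this rescaling. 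Since the maximum over $N_1$ is dominated by the resummed series, this yields the announced bound $e^{e^{Ct}N\delta_{\beta,\lambda}}$.

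The hard part will be the honest treatment of doubly-bad points $i\in J_\bad^{1,+}\cap J_\bad^{1,-}$, for which both $\tau_i^{\pm,Z}$ are small and the product decomposition of the Gibbs weight breaks down: the $y$-variable of such a dipole is not free but is itself frozen into an independent negative cluster, so the sharp $y$-integral bound above does not apply directly. I would handle this either by first enumerating the joint structure of the two clusterings around doubly-bad points (at the cost of a factor $e^{CN_1}$ absorbed in $e^{tN_1}$), or by appealing to the geometric estimates \eqref{eq:geo5}--\eqref{eq:geo6} of Lemma~\ref{lemma:geo}, which bound the number of doubly-bad points per cluster by a constant depending only on $q(\beta)$ and $p(\beta)$, reducing the analysis to a controlled perturbation of the one-sided situation outlined above.
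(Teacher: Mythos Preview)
Your approach captures the correct scaling intuition---that clusters of size $q(\beta)+1$ produce a marginally integrable power of the scale---but it has two structural gaps that the paper resolves by a different and cleaner route.

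First, your claim that $\pi^{\pm}$ partitions $J_{\bad}^{1,\pm}$ into blocks of cardinality \emph{at least} $q(\beta)+1$ is incorrect. The clustering algorithm (Definition~\ref{def:continuous clustering}) stops at $\tau=\ve_0\Cut$, so any $i$ with $r_i>4\ve_0\Cut$ that was not absorbed into an overcrowded block becomes a singleton with $\tau_i^{+,Z}=\ve_0\Cut<\tfrac14 r_i$, hence lies in $J_{\bad}^{1,+}$ as a block of size $1$. These long-dipole singletons contribute a factor $(\ve_0\Cut)^{-\beta/2}$ per side and must be accounted for separately; in the paper they appear as case~(2) in Step~2 and are folded into the final bound via the identity $(\ve_0\Cut)^{(2-\beta/2)-2}=(\ve_0\Cut)^{-\beta/2}$.

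Second, and more seriously, keeping the matching $\sigma=\Id$ forces you to track dipoles that span a positive cluster and a negative cluster simultaneously, which is precisely the doubly-bad case you flag. Your proposed fixes (enumerating the joint structure, or invoking \eqref{eq:geo5}--\eqref{eq:geo6}) do not directly work: the geometric estimates bound the number of bad points per ball, not the coupling between $\pi^+$ and $\pi^-$; and enumerating the joint structure reproduces the full complexity of the problem. The paper's key move is to \emph{remove the matching first} (Step~1), writing $\Zsf_{\bad,N_1}^{(1)}=\tfrac{1}{N_1!}\int\ldots$ without $\prod\indic_{\mc A_{ij}}$. After this, the $x$-positions and $y$-positions are decoupled except through the (now random) matching $\sigma_{N_1}$, which only needs to be tracked for the ``good'' charges: the paper pays $N_1$ per good charge to record its partner (Step~3), and the resulting factor $N_1^{2N_1-\tilde N_1^+-\tilde N_1^-}$ is later absorbed by the $\tfrac{1}{N_1!}$ via Stirling. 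With the matching removed, every one of the $2N_1$ charges belongs to exactly one augmented cluster (overcrowded block on one side, possibly with attached good partners from the other side), and the integral factorizes cleanly (Steps~4--5). This decoupling is what makes the doubly-bad issue disappear, and it is not available in your formulation.
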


\medskip

\begin{proof}\
	
	\paragraph{\bf{Step 1: removing the pairing}}
	First, by Lemma \ref{lemma:stable},
	\begin{equation*}
		\prod_{i,j\in V_\bad^1:i\neq j}\indic_{\mc{A}_{ij}}=\indic_{\{\sigma_{N_1}=\Id\}},
	\end{equation*}
	where $\sigma_{N_1}$ is the stable matching over $V_\bad^1$ introduced in Definition \ref{def:stable match}. Hence, since there are $N_1!$ permutations of $V_\bad^1$, and since particles are indistinguishable,
	\begin{equation}\label{eq:dematch0}
		\Zsf_{\bad,N_1}^{(1)} =\frac{1}{N_1!}\int \indic_{I_\bad^1=V_\bad^1}\prod_{i\in V_\bad^1}\frac{1}{(\tau_i^{+,Z})^{\frac{\beta}{2}} }\prod_{i\in V_\bad^1}\frac{1}{(\tau_i^{-,Z})^{\frac{\beta}{2}} }\prod_{i\in V_\bad^1}\dd x_i \dd y_i.
	\end{equation}
	Note that we are here abusing notation: the $\tau_i^{\pm,Z}$'s appearing in the formula are to be defined as in Definition \ref{def:vijZ} with $y_i$ replaced by $y_{\sigma_{N_1}}(i)$. We will use the same abuse of notation for the variables $J_\bad^{1,\pm}$, $I_\bad^1$ (see Definition \ref{def:type1}) and $r_i=|y_{\sigma_{N_1}(i)}-x_i|$.

	\paragraph{\bf{Step 2: re-clustering}}
	Recall the random subpartition $\Clus_{N_1}^+$ of $[N]$ defined in the clustering algorithm of Definition \ref{def:continuous clustering}. With an abuse of notation, we view $\Clus_{N_1}^+$ as a partition of $V_\bad^1$. Recall $J_{\bad}^{1,+}$ and $J_\bad^{1,-}$ from Definition \ref{def:type1}, with the abuse of notation referred to above. Note that we can restrict to the event of full measure where the distances $|x_i-x_j|$  and $|y_i-y_j|$ are all distinct which ensures that at disappearance parameter, clusters are either of cardinality $1$ or of cardinality $q(\beta)+1$.
	Given $S\in \Clus_{N_1}^+$, we thus  have three cases:
	\begin{enumerate}
		\item $|S|=q(\beta)+1$, which implies that $S\subset J_\bad^{1,+}$. We call such a cluster overcrowded.
		\item $S=\{i\}$ for some $i\in I_\bad^1$ such that $\tau_i^{+,Z}\geq \ve_0\Cut$, which implies $S\subset J_\bad^{1,+}$,
		\item $S=\{i\}$ for some $i \in J_\bad^{1,-}\setminus J_{\bad}^{1,+} $, which implies $\tau_i^{+,Z}=\max(\tfrac{1}{4}r_i,\lambda)$. We say that $x_i$ is ``good''.
	\end{enumerate}
	The same holds for negative clusters. 
	
	Notice that if $i\in I_\bad^1$ is such that $\tau_i^{+,Z}\geq \ve_0 \Cut$, then $r_i\geq \ve_0\Cut$ and therefore $y_i$ cannot be good:
	\begin{equation}\label{eq:implies}
		\tau_i^{+,Z}\geq \ve_0 \Cut\quad \Longrightarrow  \quad i\in J_\bad^{1,-}. 
	\end{equation}
	Moreover, $x_i$ and its partner $y_{\sigma_{N_1}(i)}$ cannot both be good; otherwise, the dipole would not lie in $I_{\bad}^{1}$:
	\begin{equation}\label{eq:implies2}
		\{i\in I_\bad^1\setminus J_\bad^{1,+}: \sigma_{N_1}(i)\in I_\bad^{1}\setminus J_\bad^{1,-}\}=\emptyset. 
	\end{equation}

	Now, we enlarge each overcrowded cluster of same-sign particles by adding partners that are good. For every $S\in \Clus_{N_1}^+$ such that $|S|=q(\beta)+1$, we let $\Good^-(S)$ denote the set of good negative partners,
	\begin{equation*}
		\Good^-(S)\coloneqq \left\{i\in S:\sigma_{N_1}(i)\in I_\bad^1\setminus J_\bad^{1,-}  \right\}.
	\end{equation*}
	Similarly, for every $S\in \Clus_{N_1}^-$, we let $\Good^+(S)$ denote the set of good positive partners,
	\begin{equation*}
		\Good^+(S)\coloneqq \left\{i\in S:\sigma_{N_1}^{-1}(i)\in I_\bad^1\setminus J_\bad^{1,+} \right\}.
	\end{equation*}
	
	In view of \eqref{eq:implies} and \eqref{eq:implies2}, we can partition the points into  
	\begin{multline}\label{eq:recluster}
		\bigcup_{i\in I_\bad^1}  \{x_i,y_i\}=\bigcup_{S\in \Clus^+_{N_1}:|S|=q(\beta)+1}\left\{\{x_i:i\in S\}\cup \{y_{\sigma_{N_1}(i)}:i\in \Good^-(S)\} \right\}\cup \bigcup_{S\in \Clus^+_{N_1} :|S|=1} \{x_i:i\in S\} \\ \cup \bigcup_{S\in \Clus^-_{N_1}:|S|=q(\beta)+1}\left\{ \{y_i:i\in S\}\cup \{x_{\sigma_{N_1}^{-1}(i)}:i\in \Good^+(S)\} \right\} \cup \bigcup_{S\in \Clus^-_{N_1} :|S|=1} \{y_i:i\in S\}. 
	\end{multline}

	Denote by $\tilde{N}_1^+$ the cardinality of $J_\bad^{1,+}$ and by $\tilde{N}_1^-$ the cardinality of $J_\bad^{1,-}$, and split the partition function according to the values of $\tilde{N}_1^+$ and $\tilde{N}_1^-$: by \eqref{eq:dematch0}, 
	\begin{equation}\label{eq:z dematch}
		\Zsf_{\bad,N_1}^{(1)}=\frac{1}{N_1!}\sum_{\tilde{N}_1^+\leq N_1,\tilde{N}_1^-\leq N_1}\binom{N_1}{\tilde{N}_1^+}\binom{N_1}{\tilde{N}_1^-}I(\tilde{N}_1^+,\tilde{N}_1^-),
	\end{equation}
	where 
	\begin{equation*}
		I(\tilde{N}_1^+,\tilde{N}_1^-)\coloneqq \int_{(\Lambda^2)^{N_1} } \indic_{J_\bad^{1,+}=\tilde{V}_\bad^{1,+}} \indic_{J_\bad^{1,-}=\tilde{V}_\bad^{1,-}}\indic_{I_\bad^1=V_\bad^1}\prod_{i\in V_\bad^1}\frac{1}{(\tau_i^{+,Z})^{\frac{\beta}{2}} }\prod_{i\in V_\bad^1}\frac{1}{(\tau_i^{-,Z})^{\frac{\beta}{2}} }\prod_{i\in V_\bad^1}\dd x_i \dd y_i,
	\end{equation*}
	where $\tilde{V}_\bad^{1,+}\subset V_\bad^1$ and $\tilde{V}_\bad^{1,-}\subset V_\bad^1$ are arbitrary disjoint sets such that $|\tilde{V}_\bad^{1,+}|=\tilde{N}_1^+$ and $|\tilde{V}_\bad^{1,-}|=\tilde{N}_1^-$. \\
	
	\paragraph{\bf{Step 3: assigning a partner to good charges}} We now restrict to the event 
	$I_\bad^1=V_\bad^1, J_\bad^{1,+}=\tilde{V}_\bad^{1,+}, J_\bad^{1,-}=\tilde{V}_\bad^{1,-}$.
	Recall that by \eqref{eq:implies2}, if $i\in V_\bad^1\setminus \tilde{V}_\bad^{1,+}$, then $\sigma_{N_1}(i)\in \tilde{V}_\bad^{1,-}$. Moreover, there are clearly fewer than $N_1$ choices for the value of $\sigma_{N_1}(i)$. Similarly, if $i\in V_\bad^1\setminus \tilde{V}_\bad^{1,-}$, then $\sigma^{-1}_{N_1}(i)\in \tilde{V}_\bad^{1,+}$, with fewer than $N_1$ choices for $\sigma^{-1}_{N_1}(i)$. Thus,
	\begin{equation}\label{eq:z match}
		I(\tilde{N}_1^+,\tilde{N}_1^-)\leq N_1^{N_1-\tilde{N}_1^+}N_1^{N_1-\tilde{N}_1^-} \max_{f^+,f^- }I(\tilde{N}_1^+,\tilde{N}_1^-,f^+,f^-),
	\end{equation}
	where the maximum is over injective maps 
	$f^+:V_\bad^1\setminus \tilde{V}_\bad^{1,-}\to \tilde{V}_\bad^{1,+}$ and $f^-:V_\bad^1\setminus \tilde{V}_\bad^{1,+}\to \tilde{V}_\bad^{1,-}$, and where  
	\begin{multline*}
		I(\tilde{N}_1^+,\tilde{N}_1^-,f^+,f^-)\coloneqq \int_{(\Lambda^2)^{N_1}} \indic_{J_\bad^{1,+}=\tilde{V}_\bad^{1,+}} \indic_{J_\bad^{1,-}=\tilde{V}_\bad^{1,-}}\indic_{I_\bad^1=V_\bad^1}\prod_{i\in V_\bad^1\setminus \tilde{V}_\bad^{1,+}}\indic_{\sigma_{N_1}(i)=f^-(i)} \\ \times \prod_{i\in V_\bad^1\setminus \tilde{V}_\bad^{1,-}}\indic_{\sigma_{N_1}^{-1}(i)=f^+(i)} \prod_{i\in V_\bad^1}\frac{1}{(\tau_i^{+,Z})^{\frac{\beta}{2}} }\prod_{i\in V_\bad^1}\frac{1}{(\tau_i^{-,Z})^{\frac{\beta}{2}} }\prod_{i\in V_\bad^1}\dd x_i \dd y_i,
	\end{multline*}
	Fix two injective maps $f^+:V_\bad^1\setminus \tilde{V}_\bad^{1,-}\to V_\bad^{1,+}$ and $f^-:V_\bad^1\setminus \tilde{V}_\bad^{1,+}\to V_\bad^{1,-}$.\\

	\paragraph{\bf{Step 4: splitting according to the partition of $\tilde{V}_\bad^{1,+}$ and $\tilde{V}_\bad^{1,-}$ }}
	Let us denote 
	\begin{equation*}
		\widetilde{\Clus}^+\coloneqq \{S\in \Clus_{N_1}^+: S\subset \tilde{V}_\bad^{1,+}\}  \quad \mathrm{and}\quad   \widetilde{\Clus}^-\coloneqq \{S\in \Clus_{N_1}^-: S\subset \tilde{V}_\bad^{1,-}\}.
	\end{equation*}

	We will denote by $k_0^+$ the cardinality of $\widetilde{\Clus}^+$ and by $k_0^-$ the cardinality of $\widetilde{\Clus}^-$.

	For every $V\subset [N]$ and $k_0\geq 1$, let $\mathbf{\Pi}^{k_0}(V)$ be the set of partitions $\pi$ of $V$ with $k_0$ components such that for every $S\in \pi$, $|S|=q(\beta)+1$  or $|S|=1$. We have 
	\begin{equation}\label{eq:z sum}
		I(\tilde{N}_1^+,\tilde{N}_1^-,f^+,f^-)=\sum_{k_0^+,k_0^-} \sum_{\pi^+\in \mathbf{\Pi}^{k_0^+}(\tilde{V}_\bad^{1,+}) }\sum_{\pi^-\in \mathbf{\Pi}^{k_0^-}(\tilde{V}_\bad^{1,-}) }I(\tilde{N}_1^+,\tilde{N}_1^-,f^+,f^-,\pi^+,\pi^-),
	\end{equation}
	where \begin{multline*} 
		I(\tilde{N}_1^+,\tilde{N}_1^-,f^+,f^-,\pi^+,\pi^-)\coloneqq   \int_{(\Lambda^2)^{N_1}} \indic_{J_\bad^{1,+}=\tilde{V}_\bad^{1,+}}\indic_{J_\bad^{1,-}=\tilde{V}_\bad^{1,-}} \indic_{I_\bad^1=V_\bad^1}\indic_{\widetilde{\Clus}^+=\pi^+ } \indic_{\widetilde{\Clus}^-=\pi^- }\\ \times \prod_{i\in V_\bad^1\setminus \tilde{V}_\bad^{1,+}}\indic_{\sigma_{N_1}(i)=f^-(i)}\prod_{i\in V_\bad^1\setminus \tilde{V}_\bad^{1,-}}\indic_{\sigma_{N_1}^{-1}(i)=f^+(i)} \prod_{i\in V_\bad^1}\frac{1}{(\tau_i^{+,Z})^{\frac{\beta}{2}} }\prod_{i\in V_\bad^1}\frac{1}{(\tau_i^{-,Z})^{\frac{\beta}{2}} }\prod_{i\in V_\bad^1}\dd x_i \dd y_i,
	\end{multline*}
	
	\paragraph{\bf{Step 5: integration}}
	Denote $D$ as the set of dyadic scales from $\lambda$ to $\ve_0\Cut:$
	\begin{equation*}
		D=\Bigr\{\lambda 2^k:0\leq k\leq 1+\Bigr\lfloor\log_2\Bigr(\frac{\ve_0\Cut}{\lambda}\Bigr)\Bigr\rfloor\Bigr\}.
	\end{equation*}
	
	For every $S\in \widetilde{\Clus}^+$, fix some $\iota_S^+\in S$ 
	and for every $S\in \widetilde{\Clus}^-$, fix some $\iota_S^-\in S$. For every sequence of scales $(d_S^+)_{S\in\widetilde{\Clus}^+}$ and $(d_S^-)_{S\in\widetilde{\Clus}^-}$ taking values in $D$, set 
	\begin{align*}
		&J(\tilde{N}_1^+,\tilde{N}_1^-,f^+,f^-,\pi^+,\pi^-,(d_S^+), (d_S^-)) \\ &\coloneqq \int_{(\Lambda^2)^{N_1}}\prod_{S\in \widetilde{\Clus}^+:|S|=q(\beta)+1} \left(\prod_{i\in S}\indic_{|x_i-x_{\iota_S}|\leq 4d_S^+}\frac{1}{(d_S^+)^{\frac{\beta}{2} |S|}}\prod_{i\in \Good^-(S)}\frac{\indic_{|x_i-y_{f^-(i)}|\leq \ve_0 \Cut}}{\max(|x_i-y_{f^-(i)}|,\lambda)^{\frac{\beta}{2}}}  \right) \\ &
		\times\prod_{S\in \widetilde{\Clus}^-:|S|=q(\beta)+1}\left( \prod_{i\in S}\indic_{|y_i-y_{\iota_S}|\leq 4d_S^-}\frac{1}{(d_S^-)^{\frac{\beta}{2} |S|}}\prod_{i\in \Good^+(S)}\frac{\indic_{|y_i-x_{f^+(i)}|\leq \ve_0 \Cut }}{\max(|y_i-x_{f^+(i)}|,\lambda)^{\frac{\beta}{2}}}  \right)\\ &\times \prod_{S\in \widetilde{\Clus}^+:|S|=1}(\ve_0 \Cut)^{-\frac{\beta}{2}} \prod_{S\in \widetilde{\Clus}^-:|S|=1}(\ve_0 \Cut)^{-\frac{\beta}{2}} \prod_{i\in V_\bad^1}\dd x_i \dd y_i.
	\end{align*}

	Observe that for every $S\in \widetilde{\Clus}^+$,
	\begin{equation}\label{eq:ww1}
		\indic_{\tau_{\iota_S}^{+,Z} \in [d_S^+,2d_S^+)}\leq \prod_{i\in S}\indic_{|x_i-x_{\iota_S}|\leq 4d_S^+}.
	\end{equation}
	Moreover, by the construction of the clustering algorithm, the radii $\tau_i^{+,Z}$ are equal within each cluster, and the same holds for the $\tau_i^{-,Z}$. Thus, splitting the phase space according to the dyadic scale of $\tau_{\iota_S^+}^{+,Z}$ and $\tau_{\iota_S^-}^{-,Z}$ in each cluster and using \eqref{eq:ww1}, we get
	\begin{equation*}
		I(\tilde{N}_1^+,\tilde{N}_1^-,f^+,f^-,\pi^+,\pi^-) \leq \prod_{S\in \widetilde{\Clus}^+} \sum_{d_S^+\in D } \prod_{S\in \widetilde{\Clus}^-} \sum_{d_S^-\in D } J(\tilde{N}_1^+,\tilde{N}_1^-,f^+,f^-,\pi^+,\pi^-,(d_S^+), (d_S^-)).
	\end{equation*}

	Fix $(d_S^+)_{S\in \widetilde{\Clus}^+} $ and $(d_S^-)_{S\in \widetilde{\Clus}^-}$. 
	
	$\bullet$ Suppose that $\beta\in (2,4)$. Integrating for each cluster over all indices except one (which is either $x_{\iota_S}$ or $y_{\iota_S}$) and then integrating out that last variable over the whole domain $\Lambda$, one obtains that there exists $C>0$ depending on $\beta$ and $q(\beta)$ such that
	\begin{align*}
		&J(\tilde{N}_1^+,\tilde{N}_1^-,f^+,f^-,\pi^+,\pi^-,(d_S^+), (d_S^-))\\ &\leq (CN)^{k_0^++k_0^-} \prod_{\substack{S\in \widetilde{\Clus}^+:\\|S|=q(\beta)+1}}(d_S^+)^{(2-\frac{\beta}{2})|S|-2}(\ve_0 R_{\beta,\lambda})^{(2-\frac{\beta}{2})|\Good^-(S)|} \\ &\times\prod_{\substack{S\in \widetilde{\Clus}^-:\\|S|=q(\beta)+1}}(d_S^-)^{(2-\frac{\beta}{2})|S|-2}(\ve_0 R_{\beta,\lambda})^{(2-\frac{\beta}{2})|\Good^+(S)|} \prod_{\substack{S\in \widetilde{\Clus}^+:\\|S|=1}}(\ve_0 R_{\beta,\lambda})^{-\frac{\beta}{2}} \prod_{\substack{S\in \widetilde{\Clus}^-:\\|S|=1}}(\ve_0 R_{\beta,\lambda})^{-\frac{\beta}{2}}.
	\end{align*}
	By definition \eqref{def:qbeta}, we have $(2-\frac{\beta}{2})(q(\beta)+1)-2\geq 1>0$, and therefore, summing over $d_S^+$ and $d_S^-$ in $D$  gives 
	\begin{multline*}
		I(\tilde{N}_1^+,\tilde{N}_1^-,f^+,f^-,\pi^+,\pi^-)\leq (CN)^{k_0^++k_0^-} \prod_{\substack{S\in \widetilde{\Clus}^+:\\|S|=q(\beta)+1}}(\ve_0R_{\beta,\lambda})^{(2-\frac{\beta}{2})|S|-2}(\ve_0 R_{\beta,\lambda})^{(2-\frac{\beta}{2})|\Good^-(S)|} \\ \times\prod_{\substack{S\in \widetilde{\Clus}^-:\\|S|=q(\beta)+1}}(\ve_0R_{\beta,\lambda})^{(2-\frac{\beta}{2})|S|-2}(\ve_0 R_{\beta,\lambda})^{(2-\frac{\beta}{2})|\Good^+(S)|} \prod_{\substack{S\in \widetilde{\Clus}^+:\\|S|=1}}(\ve_0 R_{\beta,\lambda})^{(2-\frac{\beta}{2})-2} \prod_{\substack{S\in \widetilde{\Clus}^-:\\|S|=1}}(\ve_0 R_{\beta,\lambda})^{(2-\frac{\beta}{2})-2}.
	\end{multline*}
	Using \eqref{eq:recluster} to count the number of points, one can check that 
	\begin{equation*}
		I(\tilde{N}_1^+,\tilde{N}_1^-,f^+,f^-,\pi^+,\pi^-)\leq (CN)^{k_0^++k_0^-} (\ve_0 R_{\beta,\lambda})^{(4-\beta)N_1}(\ve_0R_{\beta,\lambda})^{-2(k_0^++k_0^-)}.
	\end{equation*}
	Recalling that $R_{\beta,\lambda}^{4-\beta}=\lambda^{2-\beta}$, we find that there exists $C>0$ depending on $\beta, q(\beta)$ and $\ve_0$ such that
	\begin{equation}\label{eq:I4-}
		I(\tilde{N}_1^+,\tilde{N}_1^-,f^+,f^-,\pi^+,\pi^-)\leq (CN)^{k_0^++k_0^-} \lambda^{(2-\beta)N_1}R_{\beta,\lambda}^{-2(k_0^++k_0^-)}.
	\end{equation}
	
	$\bullet$ Suppose that $\beta \geq 4$. We have
	\begin{multline*}
		I(\tilde{N}_1^+,\tilde{N}_1^-,f^+,f^-,\pi^+,\pi^-)\leq (CN)^{k_0^++k_0^-} \prod_{\substack{S\in \widetilde{\Clus}^+:\\|S|=q(\beta)+1}}\lambda^{(2-\frac{\beta}{2})|S|-2}\lambda^{(2-\frac{\beta}{2})|\Good^-(S)|} \\ \times\prod_{\substack{S\in \widetilde{\Clus}^-:\\|S|=q(\beta)+1}}\lambda^{(2-\frac{\beta}{2})|S|-2}\lambda^{(2-\frac{\beta}{2})|\Good^+(S)|} \prod_{\substack{S\in \widetilde{\Clus}^+:\\|S|=1}}(\ve_0 \Cut)^{-\frac{\beta}{2}} \prod_{\substack{S\in \widetilde{\Clus}^-:\\|S|=1}}(\ve_0 \Cut)^{-\frac{\beta}{2}}.
	\end{multline*}
	Notice that by Definition \ref{def:Rlambda p0},
	\begin{equation*}
		(\ve_0 \Cut)^{-\frac{\beta}{2}}=\ve_0^{-\frac{\beta}{2}}\lambda^{\beta p_0} = \lambda^{2-\beta} \ve_0^{-\frac{\beta}{2}}\lambda^{\beta p_0-2+\beta},
	\end{equation*} and also, since $\beta\geq 4$ and $p_0\geq 1$, we have $\beta p_0-2+\beta\geq 2p_0$. Thus, since $\lambda\leq 1$,
	\begin{equation*}
		(\ve_0 \Cut)^{-\frac{\beta}{2}}\leq \lambda^{2-\beta} \ve_0^{-\frac{\beta}{2}} \lambda^{2p_0}.
	\end{equation*}

	It follows that there exists $C>0$ depending on $\beta$, $p_0$, and $\ve_0$ such that
	\begin{equation}\label{eq:I4+}
		I(\tilde{N}_1^+,\tilde{N}_1^-,f^+,f^-,\pi^+,\pi^-)\leq (CN)^{k_0^++k_0^-} \lambda^{(2-\beta)N_1} \lambda^{2p_0(k_0^++k_0^-)}.
	\end{equation}

	Assembling \eqref{eq:I4-} and \eqref{eq:I4+} gives the existence of $C>0$ depending on $\beta$, $q(\beta)$, and $\ve_0$ such that
	\begin{equation}\label{eq:Igen}
		I(\tilde{N}_1^+,\tilde{N}_1^-,f^+,f^-,\pi^+,\pi^-)\leq (CN)^{k_0^++k_0^-} \lambda^{(2-\beta)N_1} \delta_{\beta,\lambda}^{k_0^++k_0^-}, 
	\end{equation}
	where $\delta_{\beta, \lambda}$ is as in \eqref{defdelta}.

	\paragraph{\bf{Step 6: conclusion}}
	Observe that there exists a constant $C>0$ depending on $q(\beta)$ such that
	\begin{equation*}
		|\mathbf{\Pi}^{k_0^+}(\tilde{V}_\bad^{1,+})|\leq C^{k_0^+}\frac{(\tilde{N}_1^+)^{\tilde{N}_1^+}}{k_0^+!}\quad \mathrm{and}\, \quad |\mathbf{\Pi}^{k_0^-}(\tilde{V}_\bad^{1,-})|\leq C^{k_0^-}\frac{(\tilde{N}_1^-)^{\tilde{N}_1^-}}{k_0^-!}.
	\end{equation*}
	Therefore, combining the above two displays with \eqref{eq:z dematch}, \eqref{eq:z match}, \eqref{eq:z sum} and \eqref{eq:Igen}, we obtain 
	\begin{multline*}
		\Zsf_{\bad,N_1}^{(1)}\leq \frac{1}{N_1!}\sum_{\tilde{N}_1^+\leq N_1,\tilde{N}_1^-\leq N_1}\binom{N_1}{\tilde{N}_1^+}\binom{N_1}{\tilde{N}_1^-}N_1^{N_1-\tilde{N}_1^+}N_1^{N_1-\tilde{N}_1^-}N^{k_0^++k_0^-} \lambda^{(2-\beta)N_1 }\\ \times \delta_{\beta, \lambda}^{k_0^++k_0^-}  \frac{(\tilde{N}_1^+)^{\tilde{N}_1^+}}{k_0^+!}\frac{(\tilde{N}_1^-)^{\tilde{N}_1^-}}{k_0^-!}.
	\end{multline*}
	By \eqref{eq:binom}, 
	\begin{equation*}
		\binom{N_1}{\tilde{N}_1^+}\binom{N_1}{\tilde{N}_1^-}  (\tilde{N}_1^+)^{\tilde{N}_1^+}(\tilde{N}_1^-)^{\tilde{N}_1^-} \leq \Bigr(\frac{CN_1}{\tilde{N}_1^+}  \Bigr)^{\tilde{N}_1^+} \Bigr(\frac{CN_1}{\tilde{N}_1^-}  \Bigr)^{\tilde{N}_1^-}  (\tilde{N}_1^+)^{\tilde{N}_1^+}(\tilde{N}_1^-)^{\tilde{N}_1^-}\le (CN_1)^{\tilde{N}_1^++\tilde{N}_1^-}.
	\end{equation*}
	Hence,
	\begin{equation*}
		\binom{N_1}{\tilde{N}_1^+}\binom{N_1}{\tilde{N}_1^-}  (\tilde{N}_1^+)^{\tilde{N}_1^+}(\tilde{N}_1^-)^{\tilde{N}_1^-}  N_1^{N_1-\tilde{N}_1^+}N_1^{N_1-\tilde{N}_1^-}\leq (CN_1)^{2N_1}.
	\end{equation*}
	It follows that 
	\begin{equation*}
		\Zsf_{\bad,N_1}^{(1)}\leq \frac{1}{N_1!}\sum_{\tilde{N}_1^+\leq N_1,\tilde{N}_1^-\leq N_1}(CN_1)^{2N_1} \lambda^{(2-\beta)N_1 }\delta_{\beta, \lambda}^{k_0^++k_0^-} \frac{N^{k_0^+} }{k_0^+!}\frac{N^{k_0^-}}{k_0^-!}.
	\end{equation*}
	By Stirling's formula again,
	\begin{equation*}
		\Zsf_{\bad,N_1}^{(1)}\leq \sum_{\tilde{N}_1^+\leq N_1,\tilde{N}_1^-\leq N_1}(CN_1)^{N_1} \frac{N^{k_0^+} }{k_0^+!}\frac{N^{k_0^-}}{k_0^-!}\lambda^{(2-\beta)N_1 }\delta_{\beta, \lambda}^{k_0^++k_0^-}.
	\end{equation*}

	Then, using $C_{\beta,\lambda,\ve_0}\geq \frac{1}{C}\lambda^{2-\beta}$ from \eqref{eq:Clambda bound}, we get that there exists $C>0$ depending on $\beta$, $q(\beta)$, and $\ve_0$ such that for all $t\geq 1$,
	\begin{equation*}
		\Bigr(\frac{N}{N_1}\Bigr)^{N_1} \frac{\Zsf_{\bad,N_1}^{(1)}}{(NC_{\beta,\lambda,\ve_0})^{N_1}}e^{tN_1}\leq  \sum_{k_0^+,k_0^-}e^{Ct(k_0^++k_0^-)}\delta_{\beta, \lambda}^{k_0^++k_0^-} \frac{N^{k_0^+} }{k_0^+!}\frac{N^{k_0^-}}{k_0^-!}= e^{2 e^{Ct}N\delta_{\beta, \lambda} },
	\end{equation*}
	proving the lemma.
\end{proof}

\subsection{Control of bad points of type 2}\label{sub:type2}

Recalling Definition  \ref{def:type2}, we write 
\begin{equation}\label{eq:part type2}
	\Zsf_{\bad,N_2}^{(2)}\leq \frac{N_2!}{N_2'!N_1''!\cdots N_{p(\beta)}''! } \max_{N_{2}',(N_k'')_{k\in [p(\beta)]}}\Zsf_{\bad,N_{2'}}^{(2,1)}\prod_{k=1}^{p(\beta)} \Zsf_{\bad,N_{k}''}^{(2,2,k)},
\end{equation}
where 
\begin{equation}\label{def:ZZbad2'}
	\Zsf_{\bad,N_2'}^{(2,1)}\coloneqq \int_{(\Lambda^2)^{N_2'}} \indic_{I_\bad^{2,1}=V_\bad^{2,1} }\prod_{i,j\in V_\bad^{2,1} :i<j}\indic_{\mc{A}_{ij}}\prod_{i\in V_\bad^{2,1}}e^{\beta \g_\lambda(x_i-y_i)}\dd x_i \dd y_i
\end{equation}
and for every $k\in [p(\beta)]$,
\begin{equation}\label{def:Zbadk}
	\Zsf_{\bad,N_k''}^{(2,2,k)}\coloneqq \int_{(\Lambda^2)^{N_k''}} \indic_{I_\bad^{2,2,k}=V_\bad^{2,2,k} }\prod_{i,j\in V_\bad^{2,2,k} :i<j}\indic_{\mc{A}_{ij}}\prod_{i\in V_\bad^{2,2,k}}e^{\beta \g_\lambda(x_i-y_i)}\dd x_i \dd y_i,
\end{equation}
for some arbitrary disjoint sets $V_\bad^{2,1},V_\bad^{2,2,1},\ldots,V_{\bad}^{2,2,p(\beta)}$ of cardinality $N_2'$, $N_1'',\ldots,N_{p(\beta)}''$.

We begin by bad points of type $(2,1)$, which are those in multipoles of cardinality larger than $p(\beta)$.

\begin{lemma}\label{lemma:type2}
	Let $\beta\in (2,\infty)$ and let $p(\beta)$ be as in  Definition \ref{def:pbeta}. Let $\Zsf_{\bad,N_2}^{(2,1)}$ be as in \eqref{def:ZZbad2'}. Let $t\geq 1$.
	
	\begin{enumerate}
		\item Suppose that $\beta\in (2,4)$. There exists $\tilde{\ve}_0$ depending on $\beta, M$ and $t$ such that for all $\ve_0\in (0,\tilde{\ve}_0)$, there exists a constant $C>0$ depending on $\beta,M$ and $\ve_0$ such that
		\begin{equation*}
			\max_{N_2}\left(\Bigr(\frac{N}{N_2}\Bigr)^{N_2}\frac{ \Zsf_{\bad,N_2}^{(2,1)}}{(NC_{\beta,\lambda,\ve_0})^{N_2}}e^{tN_2}\right) \leq e^{Ce^{t}N R_{\beta,\lambda}^{-2}{(1+|\log\lambda|\indic_{\beta=\beta_{p(\beta)+1}})} }.
		\end{equation*}   
		\item Suppose that $\beta\geq 4$. There exists $\tilde{\lambda}$ depending on $\beta, M, p_0$ and $t$ such that for all $\lambda\in (0,\tilde{\lambda})$, there exists $C>0$ depending on $\beta, M$ and $p_0$ such that 
		\begin{equation*}
			\max_{N_2}\left(\Bigr(\frac{N}{N_2}\Bigr)^{N_2}\frac{ \Zsf_{\bad,N_2}^{(2,1)}}{(NC_{\beta,\lambda,\ve_0})^{N_2}}e^{tN_2}\right) \leq e^{Ce^{t}N \lambda^{2p_0}}.
		\end{equation*}  
	\end{enumerate}
\end{lemma}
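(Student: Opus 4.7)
The plan adapts the combinatorial reductions of Lemmas~\ref{lemma:type3} and~\ref{lemma:type1} to the multipole structure that defines $I_\bad^{2,1}$. First, by Lemma~\ref{lemma:stable} the matching factor $\prod_{i,j\in V_\bad^{2,1}:i<j}\indic_{\mc A_{ij}}$ equals $\indic_{\sigma_{N_2'}=\Id}$, and particle indistinguishability extracts an overall $1/N_2'!$. Next, on $\{I_\bad^{2,1}=V_\bad^{2,1}\}$ the induced multipole partition $\pi$ of $V_\bad^{2,1}$ has all parts of cardinality $>p(\beta)$ and automatically satisfies $r_i\le\ve_0\Cut$ for every $i$ (bad points of type $(2,1)$ are not of type $1$). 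Decomposing along $\pi$ and dropping the inter-multipole separation constraints $\indic_{\mc B_{ij}^c}$ between dipoles in distinct parts yields
\begin{equation*}
\Zsf_{\bad,N_2'}^{(2,1)}\le\sum_{\substack{\pi\in\mathbf{\Pi}(V_\bad^{2,1})\\\forall S\in\pi:\,|S|>p(\beta)}}\prod_{S\in\pi}\tilde I(S),
\end{equation*}
with $\tilde I(S)\coloneqq\int\indic_{\mc B_S}\prod_{i,j\in S:i<j}\indic_{\mc A_{ij}}\prod_{i\in S}e^{\beta\g_\lambda(r_i)}\indic_{r_i\le\ve_0\Cut}\,\dd x_i\,\dd y_i$.

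The heart of the argument is the per-multipole estimate $\tilde I(S)/(NC_{\beta,\lambda,\ve_0})^k\le C^k\gamma_{\beta,\lambda,k}/N^{k-1}$ for $|S|=k>p(\beta)$; by Definition~\ref{def:gamma beta} this gives $R_{\beta,\lambda}^{-2}/N^{k-1}$ for $\beta\in(2,4)$ and $\lambda^{2(k-1)}/N^{k-1}\le\lambda^{2p_0}/N^{k-1}$ for $\beta\ge 4$. The dominant contribution comes from $k=p(\beta)+1$; each extra dipole costs a factor $\lambda^{2}/N$ (respectively a power of $\ve_0$) and is summable for $\ve_0$ (respectively $\lambda$) small. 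Given this bound, summing over partitions using the counting $N_2'!/\prod_k(k!^{n_k}n_k!)$, combining with the $1/N_2'!$ from the matching step, the prefactor $\binom{N}{N_2'}\sim(Ne/N_2')^{N_2'}$, and the weight $e^{tN_2'}$, reduces the problem to $\sum_n(NA)^n/n!$ with per-multipole activity $A\lesssim e^{t}\delta_{\beta,\lambda}$; the series is then bounded by $\exp(Ce^tN\delta_{\beta,\lambda})$, as claimed.

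The main obstacle is the sharp per-multipole bound on the \emph{non-interacting} integral $\tilde I(S)$. A direct geometric computation---parametrize by $R=\max_{i\in S}r_i$, confine the $k$ dipole centers to a ball of radius $O_{k,M}(R)$ by multipole connectivity, integrate the $k-1$ non-maximal radii against $r^{1-\beta}\dd r$ on $[\lambda,R]$, and finally integrate $\int_\lambda^{\ve_0\Cut}R^{2k-1-\beta}\dd R$ for the maximal dipole---reproduces the target for $\beta\ge 4$ (the $R$-integral is dominated at the lower limit $R\sim\lambda$, where the matching exclusion is trivial and the bound $\lambda^{2p_0}$ follows), but for $\beta\in(2,4)$, $k>p^*(\beta)$ it is dominated at the upper limit $R\sim\ve_0R_{\beta,\lambda}$ and overshoots the desired $R_{\beta,\lambda}^{-2}$ by a factor $R_{\beta,\lambda}^{2(k-1)}$. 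The analogue bound for the \emph{interacting} $\Msf^0_{\ve_0}(S)$ in Proposition~\ref{prop:bounded lower} avoids this because the Boltzmann weight $\prod e^{-\beta v_{ij}}$, when combined with the Eulerian peeling machinery of Sections~\ref{sub:peeling}--\ref{sub:quadra}, suppresses precisely the stretched-out multipole configurations; that weight is absent here and so must be replaced.

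The resolution I envision is to rerun the peeling procedure on the non-interacting integrand, with the role of $\prod e^{-\beta v_{ij}}$ played by $\prod\indic_{\mc A_{ij}}$: each matching indicator excludes an angular fraction of order $(\min(r_i,r_j)/d_{ij})^2$, and opening a spanning tree of the multipole graph as in Lemma~\ref{lemma:along cycles} trades the redundant inter-dipole distances for these exclusions, yielding the quadratic estimate $\prod_{i\in S\setminus\{\hat S\}}r_i^{\,2}\prod_{e\in T}d_e^{-2}$ of Corollary~\ref{coro:prod a} and hence the sharp bound $\gamma_{\beta,\lambda,k}/N^{k-1}$ per multipole. An equivalent cleaner route would be to establish the upper-bound companion of Lemma~\ref{lemma:badbad}---namely $\sum_{i,j\in S:i<j}v_{ij}\le C|S|$ on $\bigcap\mc A_{ij}$---which would give $\tilde I(S)\le e^{C\beta|S|}(NC_{\beta,\lambda,\ve_0})^k\Msf^0_{\ve_0}(S)$ and reduce the estimate directly to Proposition~\ref{prop:bounded lower}, the constant $e^{C\beta|S|}$ being absorbed by the smallness of $\ve_0$ (or $\lambda$).
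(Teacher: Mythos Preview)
Your overall scheme --- decompose by the multipole partition~$\pi$, drop inter-multipole separation, bound each block integral $\tilde I(S)$, then sum over partitions and exponentiate --- is exactly the paper's route. The problem lies in your per-multipole estimate.

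The ``obstacle'' you describe for $\beta\in(2,4)$ is self-inflicted: confining all $k$ dipole centers to a ball of radius $O_{k,M}(R)$ with $R=\max_i r_i$ throws away the actual multipole constraint. By Definition~\ref{def:multipoles}, each edge $ij$ of the multipole graph satisfies $d_{ij}\le M\max(\min(r_i,r_j),\lambda)$, controlled by the \emph{minimum} of the two radii, not the maximum. The paper exploits this via a spanning tree $T_S$ of $\{\mc B_{ij}\}$ and the elementary Lemma~\ref{lemma:ordering tree}, which converts $\prod_{ij\in T_S}\max(\min(r_i,r_j),\lambda)^2$ into $\prod_{i\neq\iota_S}\max(r_i,\lambda)^2$; this is precisely the quadratic estimate you seek in your fourth paragraph, obtained with no peeling, no Eulerian structure, and no interaction weight. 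One then integrates the non-maximal radii against $r^{3-\beta}$ (not $r^{1-\beta}$), which for $\beta<4$ is dominated at the \emph{upper} limit $r_{\iota_S}$, collapsing everything to $\int_0^{\ve_0\Cut}\max(r,\lambda)^{(4-\beta)k-3}\,\dd r$. For $k>p^*(\beta)$ this gives $(\ve_0 R_{\beta,\lambda})^{(4-\beta)k-2}=\ve_0^{(4-\beta)k-2}\lambda^{(2-\beta)k}R_{\beta,\lambda}^{-2}$; the factor $\ve_0^{(4-\beta)k}$ is what makes the sum over $k$ converge once $\ve_0$ is small. This is verbatim the argument in the proof of the upper bound \eqref{eq:int M lower} of Lemma~\ref{lemma:lowerM+}, just with the final integral handled by Lemma~\ref{lemsupplapp} in the regime $k>p^*(\beta)$; note that the Boltzmann factor $\prod e^{-\beta v_{ij}}$ plays no role there (it is immediately bounded by a constant using $v_{ij}\ge -C$ on $\mc A_{ij}$).

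Both of your proposed remedies are therefore unnecessary, and neither is correct as stated. The matching indicator $\indic_{\mc A_{ij}}$ does \emph{not} exclude an angular fraction of order $(\min(r_i,r_j)/d_{ij})^2$: by Lemma~\ref{lemma:Aij} it only forbids $y_j\in B(x_i,r_i)$ and $y_i\in B(x_j,r_i)$, which at large separation is an $O(\min(r_i,r_j)/d_{ij})$ constraint in one angle, not a quadratic one. And the hoped-for upper bound $\sum_{i<j}v_{ij}\le C|S|$ on $\mc B_S\cap\bigcap\mc A_{ij}$ is false --- two positive charges $x_i,x_j$ in the same multipole can be at distance $\sim\lambda$ while their partners stay at distance $\sim r_S$, making $v_{ij}\sim\log(r_S/\lambda)$ arbitrarily large. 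Even if it held, reducing to $\Msf^0_{\ve_0}(S)$ would be circular, since the upper bound on $\Msf^0_{\ve_0}(S)$ in Proposition~\ref{prop:bounded lower} is itself proved by first dropping the interaction and bounding exactly your $\tilde I(S)$.
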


\medskip

\begin{proof}
	Let $\mathbf{\Pi}'(V_\bad^{2,1})$ be the set of partitions $\pi$ of $V_\bad^{2,1}$ such that for every $S\in \pi$, we have $|S|>p(\beta)$. One can write 
	\begin{equation}\label{eq:sumJpi}
		\Zsf_{\bad,N_2}^{(2,1)}=\sum_{\pi \in \mathbf{\Pi}'(V_\bad^{2,1})}\mc{J}(\pi),
	\end{equation}
	where 
	\begin{equation*}
		\mc{J}(\pi)\coloneqq \int_{(\Lambda^2)^{N_2} } \indic_{\{ I_\bad^{2,1}=V_\bad^{2,1}\}}\indic_{\Pi_\mult=\pi} \prod_{i,j\in V_\bad^{2,1} :i<j}\indic_{\mc{A}_{ij}} \prod_{i\in V_\bad^{2,1}}e^{\beta \g_\lambda(x_i-y_i)}\dd x_i \dd y_i.
	\end{equation*}
	Fix $\pi\in \mathbf{\Pi}'(V_\bad^{2,1})$. Recall that $\mc{T}_c(S)$ stands for the set of collections of edges $E$ on $S$ such that $(S,E)$ is a tree. Also, by a union bound,
	\begin{equation*}
		\indic_{\mc{B}_S}\leq \sum_{T_S\in \mc{T}_c(S)}\prod_{ij\in T_S}\indic_{\mc{B}_{ij}}\leq \left(\max_{T_S\in \mc{T}_c(S)}\prod_{ij\in T_S}\indic_{\mc{B}_{ij}}\right)|S|^{|S|-2},
	\end{equation*}
	where we have used that by Cayley's formula, the number of trees on $S$ equals $|S|^{|S|-2}$. 
	
	For every $S\in \pi$, choosing $\iota_S\in S$ as the index of the largest $r_i$ for $i\in S$, giving  $|S|$ choices, one has
	\begin{equation}\label{eq:bJpi}
		\mc{J}(\pi)\leq e^{CN_2}\prod_{S\in \pi}|S|^{|S|-2}|S|\max_{(T_S),(\iota_S)}\mc{J}'(\pi,(T_S),(\iota_S)),
	\end{equation}
	where 
	\begin{equation*}
		\mc{J}'(\pi,(T_S),(\iota_S))= \int_{(\Lambda^2)^{N_2} } \prod_{S\in \pi}\left(\prod_{ij\in T_S}\indic_{\mc{B}_{ij}} \prod_{i\in S}\indic_{|x_i-y_i|\leq r_{\iota_S}\leq \ve_0\Cut} e^{\beta \g_\lambda(x_i-y_i)}\right)\prod_{i\in V_\bad^{2,1}}\dd x_i \dd y_i.
	\end{equation*}For every $S\in \pi$, fix a tree $T_S$ on $S$ and an index $\iota_S$ in $S$.
	Recall that $\mc{B}_{ij}=\{d_{ij}\leq M\min(\max(r_i,\lambda),\max(r_j,\lambda))\}$ and $d_{ij}=\dist(\{x_i,y_i\},\{x_j,y_j\})$. For every $ij\in \cup_S T_S$, we split the phase space according to the variables achieving $d_{ij}$. We let
	\begin{equation*}
		\begin{split}
			\mc{D}_{ij}^1&=\{(x_q,y_q)_{q\in S}: d_{ij}= |x_i-x_j|\},\\
			\mc{D}_{ij}^2&=\{(x_q,y_q)_{q\in S}: d_{ij}= |x_i-y_j|\},\\
			\mc{D}_{ij}^3&=\{ (x_q,y_q)_{q\in S}:d_{ij}= |y_i-x_j|\},\\
			\mc{D}_{ij}^4&=\{ (x_q,y_q)_{q\in S}: d_{ij}= |y_i-y_j|\}.
		\end{split}
	\end{equation*}
	Define
	\begin{multline*}
		\mc{J}''(\pi,(T_S^l)_{l\in [4],S\in \pi},(\iota_S))\\ \coloneqq   \int_{(\Lambda^2)^{N_2} } \prod_{S\in \pi}\left( \prod_{l=1}^4\prod_{ij\in T_S^l}\indic_{\mc{B}_{ij}} \indic_{\mc{D}_{ij}^l}
		\prod_{i\in S}\indic_{|x_i-y_i|\leq r_{\iota_S}\leq \ve_0\Cut} e^{\beta \g_\lambda(x_i-y_i)}\right)\prod_{i\in V_\bad^{2,1}}\dd x_i \dd y_i.
	\end{multline*}
	Notice that
	\begin{equation}\label{eq:j'split}
		\mc{J}'(\pi,(T_S),(\iota_S))=\sum_{(T_S^1),(T_S^2),(T_S^3),(T_S^4)} \mc{J}''(\pi,(T_S^l)_{l\in [4],S\in \pi},(\iota_S)), 
	\end{equation}
	where the sum is over sequences  of trees such that for every $S\in \pi$, $T_S^1,T_S^2,T_S^3,T_S^4$ are disjoint and $T_S^1\cup T_S^2\cup T_S^3\cup T_S^4=T_S$. Let us fix $T_S^1, T_S^2, T_S^3,T_S^4$ accordingly. Define the variables 
	\begin{equation*}
		z_{ij}=\begin{cases}
			x_{i}-x_j & \text{if $ij\in T_S^1$}\\
			x_i-y_j & \text{if $ij\in  T_S^2 $}\\
			y_i - x_j & \text{if $ij\in T_S^3$}\\
			y_i-y_j & \text{if $ij\in T_S^4$}.
		\end{cases}
	\end{equation*}
	
	Then, perform the change of variables
	\begin{equation*}
		(x_i,y_i)_{i\in V_\bad^{2,1}}\mapsto ((x_{\iota_S})_{S\in \pi},(z_{ij})_{ij\in \cup_{S\in \pi}T_S},(\vec{r}_i)_{i\in V_\bad^{2,1}}).
	\end{equation*}
	Integrating out the variables $(x_{\iota_S})_{S\in \pi}$ yields 
	\begin{multline*}
		\mc{J}''(\pi,(T_S^l)_{l\in [4],S\in \pi},(\iota_S)) \leq N^{|\pi|}\prod_{S\in \pi}\int_{(\Lambda^2)^{N_2}} \prod_{ij\in T_S}\indic_{|z_{ij}|\leq M\min(\max(r_i,\lambda),\max(r_j,\lambda)) }\\ \times \left(\prod_{i\in S}\indic_{r_i\leq r_{\iota_S}\leq \ve_0\Cut}e^{\beta \g_\lambda(r_i)}\right)\prod_{ij\in T_S}\dd z_{ij}\prod_{i\in S}\dd \vec{r}_i. 
	\end{multline*}
	Integrating out the variables $z_{ij}$ yields
	\begin{multline*}
		\mc{J}''(\pi,(T_S^l)_{l\in [4],S\in \pi},(\iota_S))\leq (CM^2)^{N_2}N^{|\pi|}\prod_{S\in \pi}\int_{(\Lambda^2)^{N_2}} \prod_{ij\in T_S}\min(\max(r_i,\lambda),\max(r_j,\lambda))^2\\ \times \left(\prod_{i\in S}\indic_{r_i\leq r_{\iota_S}\leq \ve_0\Cut}e^{\beta \g_\lambda(r_i)}\right)\prod_{i\in S}\dd \vec{r}_i. 
	\end{multline*}
	By Lemma \ref{lemma:ordering tree}, on the event where $r_{\iota_S}=\max_{i\in S}r_i$,
	\begin{equation*}
		\prod_{ij\in T_S}\min(\max(r_i,\lambda),\max(r_j,\lambda))^2\leq \prod_{i\in S:i\neq \iota_S}\max(r_i,\lambda)^2.
	\end{equation*}
	Therefore, integrating out the variables $\vec{r}_i$, $i\neq \iota_S$ and performing a polar change of coordinates gives 
	\begin{equation}\label{eq:j'' last}
		\mc{J}''(\pi,(T_S^l)_{l\in [4],S\in \pi},(\iota_S))\leq (CM^2)^{N_2}N^{|\pi|}\prod_{S\in \pi}\int_0^{\ve_0\Cut} \max(r,\lambda)^{(4-\beta)|S|-3}\dd r. 
	\end{equation}
	
	$\bullet$ Suppose $\beta\in (2,4)$. In this case, recall that $\Cut=R_{\beta,\lambda}$. By assumption, for every $S\in \pi$, $|S|>p(\beta)$ with $p(\beta)=p^*(\beta)$. Hence, for $\beta\in (\beta_{p(\beta)},\beta_{p(\beta)+1}]$, (recall  \eqref{defbetap}) we have $(4-\beta)|S|-3>-1$ and therefore 
	\begin{equation*}
		\mc{J}''(\pi,(T_S^l)_{l\in [4],S\in \pi},(\iota_S))\leq (CM^2)^{N_2}N^{|\pi|} \prod_{S\in \pi}(\ve_0R_{\beta,\lambda})^{(4-\beta)|S|-2}{(1+|\log\lambda|\indic_{\beta=\beta_{p(\beta)+1}})}.  
	\end{equation*}
	Therefore, recalling that $R_{\beta,\lambda}^{4-\beta}=\lambda^{2-\beta}$, we get 
	\begin{equation}\label{eq:J'' 1}
		\mc{J}''(\pi,(T_S^l)_{l\in [4],S\in \pi},(\iota_S))\leq (CM^2)^{N_2}N^{|\pi|} \prod_{S\in \pi}\lambda^{(2-\beta)|S|}\ve_0^{(4-\beta)|S|}(\ve_0R_{\beta,\lambda})^{-2}{(1+|\log\lambda|\indic_{\beta=\beta_{p(\beta)+1}})}.  
	\end{equation}
	
	$\bullet$ Suppose that $\beta\geq 4$. Then, $(4-\beta)|S|-3\leq -3<-1$. Therefore, by \eqref{eq:j'' last}, we have
	\begin{equation}\label{eq:J'' 2}
		\mc{J}''(\pi,(T_S^l)_{l\in [4],S\in \pi},(\iota_S))\leq (CM^2)^{N_2}N^{|\pi|}\prod_{S\in \pi} \lambda^{(4-\beta)|S|-2}=(CM^2)^{N_2}N^{|\pi|}\prod_{S\in \pi} \lambda^{(2-\beta)|S|}\lambda^{2(|S|-1)}.
	\end{equation}
	
	For every $m\geq 0$, let us define
	\begin{equation*}
		f_\beta(m)=\begin{cases}
			\ve_0^{(4-\beta)m}(\ve_0R_{\beta,\lambda})^{-2}{(1+|\log\lambda|\indic_{\beta=\beta_{p(\beta)+1}})} & \text{if $\beta\in (2,4)$}\\
			\lambda^{2(m-1)} & \text{if $\beta\geq 4$}.
		\end{cases}
	\end{equation*}
	One can regroup  the cases \eqref{eq:J'' 1} and \eqref{eq:J'' 2} into 
	\begin{equation*}
		\mc{J}''(\pi,(T_S^l)_{l\in [4],S\in \pi},(\iota_S))\leq (CM^2)^{N_2}N^{|\pi|}\prod_{S\in \pi} \lambda^{(2-\beta)|S|}f_\beta(|S|).
	\end{equation*}
	Since $C_{\beta,\lambda,\ve_0}\geq \frac{1}{C}\lambda^{2-\beta}$, we deduce using \eqref{eq:bJpi} and \eqref{eq:j'split} that
	\begin{equation*}
		\frac{1}{C_{\beta,\lambda,\ve_0}^{N_2}}\mc{J}(\pi)\leq (CM^2)^{N_2} N^{|\pi|}\prod_{S\in \pi} |S|!|S| f_\beta(|S|).
	\end{equation*}

	We now sum this over $\pi\in  \mathbf{\Pi}'(V_\bad^{2,1})$. We first sum according to the number of components $k_0$ of the partition, then divide by $k_0!$ since blocks are indistinguishable and then according to the sizes of the blocks. Given $k_0$ labeled blocks, the number of ways to assign $m_i$ elements in the block $i$ for every $i=1,\ldots,k_0$ is given by $\frac{N_2!}{m_1!\ldots m_{k_0}!}$. Thus,
	\begin{multline}\label{eq:sumpiJ}
		\sum_{\pi\in \mathbf{\Pi}'(V_\bad^{2,1})} \frac{1}{C_{\beta,\lambda,\ve_0}^{N_2}}\mc{J}(\pi)=\sum_{k_0=1}^{N_2} \sum_{\pi\in \mathbf{\Pi}'(V_\bad^{2,1}):|\pi|=k_0} \frac{1}{C_{\beta,\lambda,\ve_0}^{N_2}}\mc{J}(\pi) \\
		\leq (CM^2 )^{N_2}\sum_{k_0=1}^{N_2}\frac{1}{k_0!}\sum_{m_1+\cdots+m_{k_0}=N_2:m_i>p(\beta)}
		\frac{N_2!}{m_1!\cdots m_{k_0}!}
		N^{k_0}\prod_{i=1}^{k_0}m_i! m_i f_\beta(m_i).
	\end{multline}
	Therefore
	\begin{multline*}
		\Bigr(\frac{N}{N_2}\Bigr)^{N_2} \frac{ \Zsf_{\bad,N_2}^{(2,1)} }{(NC_{\beta,\lambda,\ve_0})^{N_2}}\\ \leq (CM^2)^{N_2} \Bigr(\frac{N}{N_2}\Bigr)^{N_2}\frac{1}{N^{N_2}}\sum_{k_0=1}^{N_2}\frac{N_2!}{k_0!}N^{k_0}\sum_{m_1+\cdots+m_{k_0}=N_2:m_i>p(\beta)}\prod_{i=1}^{k_0}\Bigr(m_i f_\beta(m_i)\Bigr).
	\end{multline*}
	Applying Stirling's formula to $N_2!$, we get
	\begin{equation*}
		\Bigr(\frac{N}{N_2}\Bigr)^{N_2}\frac{ \Zsf_{\bad,N_2}^{(2,1)} }{(NC_{\beta,\lambda,\ve_0})^{N_2}} \leq (CM^2)^{N_2}\sum_{k_0=1}^{N_2}\frac{N^{k_0}}{k_0!}\sum_{m_1+\cdots+m_{k_0}=N_2:m_i>p(\beta)}\prod_{i=1}^{k_0}m_i f_\beta(m_i).
	\end{equation*}
	By the above display, there exists a constant $C>0$ depending on $\beta$ and $M$ such that for all $t\geq 1$,
	\begin{equation*}
		\begin{split}
			\max_{N_2}\Bigr(\frac{N}{N_2}\Bigr)^{N_2}\frac{\Zsf_{\bad,N_2}^{(2,1)} }{(NC_{\beta,\lambda,\ve_0})^{N_2}}e^{tN_2}&\leq \sum_{N_2=0}^N \binom{N}{N_2}\frac{ \Zsf_{\bad,N_2}^{(2,1)} }{(NC_{\beta,\lambda,\ve_0})^{N_2}}e^{tN_2}\\
			&\leq\sum_{N_2=0}^N \sum_{k_0=1}^{N_2}\frac{N^{k_0}}{k_0!}\sum_{m_1+\cdots+m_{k_0}=N_2:m_i>p(\beta)}\prod_{i=1}^{k_0}\Bigr(m_i f_\beta(m_i)C^{m_i}e^{tm_i}\Bigr)\\
			&\leq  \sum_{k_0=1}^{N}\frac{N^{k_0}}{k_0!}\left(\sum_{m>p(\beta)}m f_\beta(m)C^{m}e^{tm}\right)^{k_0}.
		\end{split}
	\end{equation*}

	$\bullet$ Suppose that $\beta\in (2,4)$. There exists $\tilde{\ve}_0$ depending on $\beta, M$ and $t$ such that for all $\ve_0\in (0,\tilde{\ve}_0)$, there exists $C'>0$ depending on $\beta$, $M$, $\ve_0$ and $t$ such that
	\begin{equation*}
		\sum_{m>p(\beta)}m f_\beta(m)C^{m}e^{tm}\leq C'R_{\beta,\lambda}^{-2}{(1+|\log\lambda|\indic_{\beta=\beta_{p(\beta)+1}})}.
	\end{equation*}
	Therefore, for $\ve_0\in (0,\tilde{\ve}_0)$ there exists a constant $C>0$ depending on $\beta$, $M$, $\ve_0$, and $t$ such that
	\begin{equation*}
		\max_{N_2}\Bigr(\frac{N}{N_2}\Bigr)^{N_2}\frac{\Zsf_{\bad,N_2}^{(2,1)} }{(NC_{\beta,\lambda,\ve_0})^{N_2}}e^{tN_2}\leq \sum_{k_0=1}^{N}\frac{N^{k_0}}{k_0!}(CR_{\beta,\lambda}^{-2}{(1+|\log\lambda|\indic_{\beta=\beta_{p(\beta)+1}})})^{k_0}\leq e^{CNR_{\beta,\lambda}^{-2}{(1+|\log\lambda|\indic_{\beta=\beta_{p(\beta)+1}})}},
	\end{equation*}
	which concludes the proof in the case $\beta\in (2,4)$.
	
	$\bullet$ Suppose that $\beta\geq 4$. Recall that $f_\beta(m)=\lambda^{2(m-1)}$. Then, taking $\lambda$ small enough with respect to $\beta$, $M$, $t$ and $p_0$, we get that there exists $C>0$ depending on $\beta$, $M$, $t$ and $p_0$ such that
	\begin{equation*}
		\max_{N_2}\Bigr(\frac{N}{N_2}\Bigr)^{N_2}\frac{\Zsf_{\bad,N_2}^{(2,1)} }{(NC_{\beta,\lambda,\ve_0})^{N_2}}e^{tN_2}\leq  \sum_{k_0=1}^{N}\frac{N^{k_0}}{k_0!} (C\lambda^{2p(\beta)})^{k_0}\leq e^{NC\lambda^{2p(\beta)}}=e^{NC\lambda^{2p_0}}.
	\end{equation*}
	
\end{proof}

Next, we give a control on bad points of type $(2,2)$, which are those in multipoles of cardinality $k\in \{1,\ldots,p(\beta)\}$ such that $\mc{N}_k$ is above the authorized threshold.

\begin{lemma}\label{lemma:type22}
	Let $k\in \{1,\ldots,p(\beta)\}$ and let $\Zsf_{\bad,N''_k}^{(2,2,k)}$ be as in \eqref{def:Zbadk}. Then, there exists a constant $C>0$ depending on $\beta,p(\beta)$ and $M$ such that
	\begin{equation*}
		\Bigr(\frac{N}{N''_k}\Bigr)^{N_k''}\frac{ \Zsf_{\bad,N_k''}^{(2,2,k)}}{(NC_{\beta,\lambda,\ve_0})^{N_k''}}\leq C^{N_k''}\Bigr(\frac{N\lambda^{2(k-1)}}{N_k''}\Bigr)^{\frac{1}{k}N_k''}.
	\end{equation*}
\end{lemma}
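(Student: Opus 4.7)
The plan is to decompose $\Zsf_{\bad,N_k''}^{(2,2,k)}$ according to the multipole partition of $V_\bad^{2,2,k}$, estimate each single-multipole integral separately, and then carry out a combinatorial Stirling count. First, because the event $\{I_\bad^{2,2,k}=V_\bad^{2,2,k}\}$ forces the multipole partition $\Pi_\mult$ restricted to $V_\bad^{2,2,k}$ to consist entirely of blocks of size exactly $k$, I will bound the indicator by a sum over partitions $\pi$ of $V_\bad^{2,2,k}$ into $n_k \coloneqq N_k''/k$ blocks of size $k$, retaining only the connectivity indicators $\indic_{\mc B_S}$ and dropping the finer inter-block separation constraints $\indic_{\mc{B}_{ij}^c}$ (each bounded by $1$). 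This yields a factorized upper bound
\[
\Zsf_{\bad,N_k''}^{(2,2,k)} \leq \frac{N_k''!}{(k!)^{n_k}\,n_k!}\,\max_{\pi}\prod_{S\in\pi}J(S),
\]
where $J(S) \coloneqq \int \indic_{\mc B_S}\prod_{i,j\in S:i<j}\indic_{\mc A_{ij}}\prod_{i\in S}e^{\beta\g_\lambda(r_i)}\,\dd x_i\,\dd y_i$.

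Next, I will estimate $J(S)$ by the standard multipole integral argument that underlies Lemma \ref{lemma:lowerM+} (and that is reused in Steps 2--3 of Lemma \ref{lemma:type2}): fix the position of a chosen vertex $x_{\iota_S}\in\Lambda$ for a factor $N$, integrate each Gibbs weight $e^{\beta\g_\lambda(r_i)}$ over $\vr_i\in\R^2$ for a factor $\sim \lambda^{2-\beta}$, and then use a spanning-tree choice together with the ordering lemma (Lemma \ref{lemma:ordering tree}) to handle the multipole connectivity constraint $\mc B_S$, extracting the factor $\lambda^{2(k-1)}$. The resulting estimate is $J(S) \leq CN\lambda^{(2-\beta)k+2(k-1)}$.

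Finally, I will assemble the pieces: using $C_{\beta,\lambda,\ve_0} \geq \lambda^{2-\beta}/C$ from Lemma \ref{lem411} cancels the $\lambda^{2-\beta}$ powers in $(NC_{\beta,\lambda,\ve_0})^{-N_k''}$, leaving the reduced factor $\lambda^{2(k-1)n_k}\,N^{n_k - N_k''}$. Multiplying by $(N/N_k'')^{N_k''}$ and applying Stirling's approximation to the multinomial coefficient $N_k''!/((k!)^{n_k}n_k!) \lesssim C^{N_k''}(N_k'')^{(k-1)n_k}$ reorganizes the $N$-dependence as $N^{n_k}$ and collapses the $N_k''$-dependence, producing a bound of the form $C^{N_k''}(N\lambda^{2(k-1)}/N_k'')^{N_k''}$ after consolidation.

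The main obstacle is Step 2: extracting the sharp $\lambda^{2(k-1)}$ factor from the multipole constraint requires a careful tree decomposition of pairwise distances, because the dipole sizes $r_i$ may range up to $\ve_0\Cut$ rather than being concentrated near $\lambda$. The ordering procedure of Lemma \ref{lemma:ordering tree} together with the handshaking identity lets one bound $\prod_{ij\in T_S}\min(\max(r_i,\lambda),\max(r_j,\lambda))^2$ by $\prod_{i\ne\iota_S}\max(r_i,\lambda)^2$, but one must verify this estimate survives integration against the singular weight $e^{\beta\g_\lambda(r_i)}$ uniformly for $k\leq p(\beta)$, which is precisely where the constraint $k\leq p(\beta)$ enters to keep the relevant one-dimensional integrals tame. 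The subsequent Stirling bookkeeping that reconciles $n_k$ and $N_k''$ exponents is a routine but delicate combinatorial computation.
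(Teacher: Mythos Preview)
Your approach is essentially identical to the paper's sketch: decompose according to the multipole partition into $n_k = N_k''/k$ blocks of size $k$, bound each single-multipole integral $J(S)$ by $CN\lambda^{(2-\beta)k+2(k-1)}$ via the tree/ordering argument underlying Lemma~\ref{lemma:lowerM+}, and combine with the multinomial count $\tfrac{N_k''!}{(k!)^{n_k}n_k!}\le C^{N_k''}(N_k'')^{(k-1)n_k}$.

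There is, however, an arithmetic slip in your final consolidation. Multiplying your intermediate factor $\lambda^{2(k-1)n_k}N^{n_k-N_k''}$ by $(N/N_k'')^{N_k''}$ and the combinatorial factor $(N_k'')^{(k-1)n_k}$ gives
\[
C^{N_k''}\,N^{n_k}\,\lambda^{2(k-1)n_k}\,(N_k'')^{(k-1)n_k - N_k''}
= C^{N_k''}\Bigl(\frac{N\lambda^{2(k-1)}}{N_k''}\Bigr)^{n_k},
\]
since $(k-1)n_k - N_k'' = (k-1)n_k - kn_k = -n_k$. The exponent is $n_k = N_k''/k$, not $N_k''$ as you claim; your pieces simply do not produce enough small factors to reach exponent $N_k''$. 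The paper's own sketch has the same inconsistency (it writes the index set as $[kN_k'']$ partitioned into $N_k''$ blocks, momentarily treating $N_k''$ as the multipole count rather than $|V_\bad^{2,2,k}|$), so the stated exponent $N_k''$ appears to be a typo. The bound with exponent $n_k$ is what this argument genuinely yields, and since Definition~\ref{def:type2} forces $N\lambda^{2(k-1)}/N_k'' < \ve_0^{\alpha(\beta)} < 1$ whenever $I_\bad^{2,2,k}\neq\emptyset$, the weaker bound still suffices for the application in Proposition~\ref{prop:bad points}.
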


The argument is similar to the one of the preceding lemma, so we record only the estimates. Partition the index set \([N_k'']\) into \(\tfrac{1}{k}N_k''\) blocks of size \(k\). The number of such partitions is
\[
\frac{(N_k'')!}{(k!)^{\tfrac{1}{k}N_k''}(\tfrac{1}{k}N_k'')!}\leq C^{N_k''}(N_k'')^{(1-\tfrac{1}{k}) N_k''}.
\]
Each \(2k\)-pole contributes, after integration, a factor \(N\,\lambda^{(4-\beta)k-2}\). Normalizing by \((N C_{\beta,\lambda,\varepsilon_0})^{N_k''}\) therefore yields
\[
\frac{\Zsf_{\bad,N_k''}^{(2,2,k)} }{(N C_{\beta,\lambda,\varepsilon_0})^{N_k''}}\le C^{N_k''}\,\lambda^{2(k-1)\frac{1}{k}N_k''},
\]
which is the desired bound.

Combining \eqref{eq:non interacting}, \eqref{eq:part type2}, Lemma \ref{lemma:type3}, Lemma \ref{lemma:type1}, Lemma \ref{lemma:type2} and Lemma \ref{lemma:type22} concludes the proof of Proposition \ref{prop:bad points}.

\subsection{Proof of the upper bound}

\begin{prop}[Upper bound]\label{prop:upper bound}
	Let $\beta\in (2,+\infty)$ and $p_0\in \mathbb{N}^*$. Recall $p(\beta)$ from Definition \ref{def:pbeta}.
	
	Let $\mc{N}_1,\ldots,\mc{N}_{p(\beta)}$ be the number of multipoles of cardinality $1,\ldots,p(\beta)$ in $I_\good$. Let $n_1,\ldots,n_{p(\beta)}\in \mathbb{N}$. Define
	\begin{equation*}
		\mc{A}\coloneqq \{\mc{N}_1=n_1,\ldots,\mc{N}_{p(\beta)}=n_{p(\beta)}\}. 
	\end{equation*}
	Let $\mathcal{I}_{\beta,p(\beta),\lambda}$ be as in Definition \ref{def:rate function}. Suppose that for every $k\in \{2,\ldots,p(\beta)\}$,  \eqref{bornesnkbs} holds.

	Then, for $\ve_0$ small enough and for every $t\geq 1$, we have
	\begin{multline*}
		\log \int_{\mc{A}}e^{t|I_\bad|}e^{-\beta \F_\lambda}\dd \XN\dd \YN\leq N\log N+\log (N!)+N(2-\beta)\log\lambda+N\log \mc{Z}_\beta -N \mathcal{I}_{\beta,p(\beta),\lambda}(\tfrac{n_1}{N},\ldots,\tfrac{n_{p(\beta)}}{N})\\+O_{\beta,M,p,\ve_0,t}(N\delta_{\beta,\lambda}).
	\end{multline*}
\end{prop}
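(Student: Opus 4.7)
The plan is to mirror the lower-bound proof in Proposition \ref{prop:lower bound} but with the upper-bound estimates developed in this section, and then to pay separately for the bad points. I would start from Lemma \ref{lemma:start upper}, which after multiplication by $e^{t|I_\bad|}=e^{t(N-N')}$ and absorbing \eqref{eq:580} gives
\[
\int_{\mc{A}\cap\{\sigma_N=\Id\}}e^{t|I_\bad|}e^{-\beta\F_\lambda}\le \binom{N}{N'}\frac{(N')!\,(NC_{\beta,\lambda,\ve_0})^{N'}}{1^{n_1}(2!)^{n_2}\cdots(p(\beta)!)^{n_{p(\beta)}}n_1!\cdots n_{p(\beta)}!}\,\mc{R}_\bad\cdot\sup_{Z}W(Z),
\]
where $\mc{R}_\bad$ denotes the bad-point contribution (including $e^{t(N-N')}$, $\Zsf_{\bad,N-N'}$ and the factor $(NC_{\beta,\lambda,\ve_0})^{N-N'}$) and where I will separately multiply by $\prod_{S\in\pi}\Msf_{\ve_0}^{+,Z}(S)$ and take a sup over $Z$ only after controlling the $Z$-dependence, as explained below. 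Proposition \ref{prop:bad points} furnishes the estimate $\mc{R}_\bad\le e^{C N\delta_{\beta,\lambda}}$ (with $C$ depending on $\beta,M,p(\beta),\ve_0,t$), which is precisely the budget allowed by the right-hand side.

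Next I would treat the multipole factor $\prod_{S\in\pi}\Msf_{\ve_0}^{+,Z}(S)$ and the cluster series $W(Z)$ uniformly in the frozen configuration $Z$. By the bounds \eqref{eq:MX +low}-\eqref{eq:MX +up} of Proposition \ref{prop:bounded upper} and the comparison \eqref{eq:M+diff} of Proposition \ref{prop:expansion +}, combined with assumption \eqref{bornesnkbs}, we obtain
\[
\prod_{S\in\pi}\Msf_{\ve_0}^{+,Z}(S)=\frac{1}{N^{N'-|\pi|}}\prod_{k=1}^{p(\beta)}\msf_{\beta,\lambda}(k)^{n_k}\cdot\exp\bigl(O_{\beta,M,p(\beta),\ve_0}(N\delta_{\beta,\lambda}+|V_\bad|)\bigr),
\]
exactly as in \eqref{eq:tZ}. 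For $W(Z)$, Proposition \ref{prop:absolute upper} ensures absolute convergence of the associated cluster series, so by Lemma \ref{lemma:resum2} we may apply the logarithmic identity; using \eqref{eq:rest u}-\eqref{eq:rest b u} we truncate to clusters with $|V_X|\le p(\beta)$ at cost $O(N\delta_{\beta,\lambda}+|V_\bad|)$, and the swap \eqref{eq:K+diff} allows replacing $\Ksf_{\ve_0}^{+,Z}$ by $\Ksf_{\ve_0}^0$ (and then by $\Ksf_{\infty}^0$ via \eqref{eq:K-diff0}) with the same error budget. The key point is that the resulting series no longer depends on $Z$ and coincides, by Corollary \ref{coro:equality mult}, with the one expressed via $\Ksf_{\beta,\lambda}^\mult$.

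At this stage the argument is entirely parallel to Steps~2--3 of the proof of Proposition \ref{prop:lower bound}: after summing over hypertrees and using the scaling \eqref{def:Kbeta un} from Lemma \ref{lemma:limiting} together with the standard observation that only hypertrees $(X_1,\ldots,X_n)\in\Htrees_n(\pi)$ yield a contribution of order $N$, one obtains
\[
\log W(Z)=-N\sum_{n\ge1}\frac{(-1)^{n-1}}{n!}\sum_{m_1,\ldots,m_{p(\beta)}}\prod_i\frac{\gamma_i^{m_i}}{m_i!}\sum_{\substack{(X_l)\in\Htrees_n(Y(m_1,\ldots,m_{p(\beta)}))\\ \cup_l X_l=Y(m_1,\ldots,m_{p(\beta)})}}\prod_l\ksf_{\beta,\lambda}^\mult(\#X_l)+O(N\delta_{\beta,\lambda}+|V_\bad|),
\]
with $\gamma_i=n_i/N$. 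Combining the combinatorial $N!$ factor from Stirling, the $\log\mc Z_\beta$ obtained from \eqref{devtCbeta}, the $(2-\beta)\log\lambda$ from $C_{\beta,\lambda,\ve_0}$, the multipole factor $\prod_k\msf_{\beta,\lambda}(k)^{n_k}$ and the cluster contribution above exactly reconstructs $-N\mc{I}_{\beta,p(\beta),\lambda}(\gamma_1,\ldots,\gamma_{p(\beta)})$ by Definition \ref{def:rate function}.

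The main difficulty I anticipate is ensuring that the $Z$-dependent error $|V_\bad|$ is absorbed into the right-hand side. This is not automatic because $|V_\bad|=N-N'$ may a priori be much larger than $N\delta_{\beta,\lambda}$; however, the factor $e^{t|V_\bad|}$ in the integrand is precisely what allows us to invoke Proposition \ref{prop:bad points} with $t$ replaced by $t+C$ (absorbing the $|V_\bad|$ errors produced in the steps above), thereby converting every occurrence of $|V_\bad|$ into an additional $O(N\delta_{\beta,\lambda})$ after integration against $\Zsf_{\bad,N-N'}$. The remaining technical care concerns the uniformity of the $Z$-error bounds in Propositions \ref{prop:bounded upper}, \ref{prop:expansion +} and \ref{prop:absolute upper}: one must check that the constants there depend only on $\beta,M,p(\beta),\ve_0$ and not on $Z$, which is indeed the case as those estimates only use the geometric lemmas of Section \ref{sub:bad points} and the peeling machinery of Sections \ref{sub:pendant}--\ref{sub:quadra}. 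Once this uniformity is in place, taking a supremum over $Z$ before integration and applying Proposition \ref{prop:bad points} closes the proof.
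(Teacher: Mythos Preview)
Your proposal is correct and follows essentially the same route as the paper: start from Lemma \ref{lemma:start upper}, expand the multipole factor via Propositions \ref{prop:bounded upper}--\ref{prop:expansion +}, control $\log W(Z)$ via Proposition \ref{prop:absolute upper} and the swap $\Ksf_{\ve_0}^{+,Z}\to\Ksf_{\ve_0}^0\to\Ksf_\infty^0\to\Ksf_{\beta,\lambda}^\mult$, reduce to hypertrees as in Proposition \ref{prop:lower bound}, and finally absorb all $O(|V_\bad|)$ errors into the exponent so that Proposition \ref{prop:bad points} (applied with $t$ replaced by $t+C$) together with \eqref{eq:580} closes the argument. One small clarification: $|V_\bad|=N-N'$ is fixed once $n_1,\ldots,n_{p(\beta)}$ are, so the errors from the frozen-bad-point propositions are already uniform in the configuration $Z$; there is no need to take a supremum over $Z$ before integrating, and the paper simply pulls the $e^{O(|V_\bad|)}$ factors outside the $Z$-integral directly.
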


\medskip

\begin{proof}
	Denote $p=p(\beta)$. Let $N'\coloneqq n_1+2n_2+\cdots+p n_p$ and let $V_\good\subset [N]$ be an arbitrary set of cardinality $N'$. Set $V_\bad\coloneqq [N]\setminus V_\good$. Let $\pi$ be a partition of $V_\good$ with $n_i$ parts of cardinality $i$ for every $i\in [p]$.
	Recall  Lemma \ref{lemma:start upper}.
	\paragraph{\bf{Step 1: expanding the multipole terms}}
	Combining Proposition \ref{prop:expansion +} and Lemma \ref{lemma:limiting}, we obtain that for every $S\subset \pi$,
	\begin{equation*}
		\Msf_{\ve_0}^{+,Z}(S)=\frac{1}{N^{|S|-1}}\Bigr(\msf_{\beta,\lambda}(|S|)+O_{\beta,M,p,\ve_0}\Bigr(\Cut^{-2}+\frac{|V_\bad|}{N} \Bigr)\Bigr).
	\end{equation*}
	Therefore, since $\msf_{\beta, \lambda}(1)=1$ and $\Msf_{\ve_0}^{+,Z}(S)=1$ when $|S|=1$,  we have
	\begin{equation*}
		\prod_{S\in \pi}\Msf_{\ve_0}^{+,Z}(S)=\frac{N^{n_1+\ldots+n_{p(\beta)}}}{N^{N'}}\prod_{k=2}^{p}\msf_{\beta,\lambda}(k)^{n_k}\prod_{k=2}^pe^{\frac{n_k}{\msf_{\beta,\lambda}(k) }O_{\beta,M,p,\ve_0}(\Cut^{-2}+\frac{|V_\bad|}{N} ) } .
	\end{equation*}
	Recall that by Proposition \ref{prop:bounded lower}, there exists a constant $C>0$ depending on $\beta$, $M$ and $p$ such that 
	\begin{equation*}
		\msf_{\beta,\lambda}(k)\geq \frac{1}{C}\lambda^{2(k-1)}.
	\end{equation*}
	Using this and the assumption \eqref{bornesnkbs}, we deduce that 
	\begin{equation*}
		\prod_{S\in \pi}\Msf_{\ve_0}^{+,Z}(S)=\frac{N^{n_1+\ldots+n_{p}}}{N^{N'}}\left(\prod_{k=2}^{p}\msf_{\beta,\lambda}(k)^{n_k}\right)e^{O_{\beta,M,p,\ve_0}(N\Cut^{-2}+|V_\bad|) } .
	\end{equation*}
	Inserting this into \eqref{numero539}, we deduce that 
	\begin{equation}\label{eq:A1A2}
		\frac{1}{(NC_{\beta,\lambda,\ve_0})^N} \int_{\mc{A}}\indic_{\sigma_N=\Id} e^{-\beta \F_\lambda}\leq T_1 T_2 e^{O_{\beta,M,p,\ve_0}(N\Cut^{-2}+|V_\bad| )},
	\end{equation}
	where 
	\begin{equation*}
		T_1\coloneqq \frac{(N')!}{1^{n_1}(2!)^{n_2}\cdots (p!)^{n_{p}}n_1!\cdots n_{p}!} \frac{N^{n_1+\cdots+n_{p}}}{N^{N'}}\prod_{k=2}^{p}\msf_{\beta,\lambda}(k)^{n_k}
	\end{equation*}
	and 
	\begin{multline}\label{def:A2part}
		T_2\coloneqq \binom{N}{N'}\frac{1}{(NC_{\beta,\lambda,\ve_0})^{|V_\bad|}}\int_{(\Lambda^2)^{|V_\bad|} } \prod_{i,j\in V_\bad:i<j}e^{-\beta v^Z_{ij}}\indic_{\mc{A}_{ij}}W(Z) \indic_{I_\bad=V_\bad}\\ \times \prod_{i\in V_\bad}\frac{1}{(\tau_i^{+,Z})^{\frac{\beta}{2}}}\frac{1}{(\tau_i^{-,Z})^{\frac{\beta}{2}}} \dd Z.
	\end{multline}
	By Stirling's formula,
	\begin{equation*}
		\log\left( \frac{(N')!}{n_1!\cdots n_{p}!} \frac{N^{n_1+\cdots+n_{p}}}{N^{N'}}\right)=\log\left(\Bigr(\frac{N'}{N}\Bigr)^{N'}e^{-N'}\prod_{k=1}^{p}\Bigr(\frac{N}{n_k} \Bigr)^{n_k}e^{n_k}\right)+O(\log N).
	\end{equation*}
	Using that $N'=N-|V_\bad|$, it follows that
	\begin{equation}\label{eq:expansionA1}
		\begin{split}
			T_1&=\Bigr(\frac{N'}{N}\Bigr)^{N'}e^{-N'}\left(\prod_{k=1}^{p}\Bigr(\frac{N}{n_k k!} \Bigr)^{n_k}e^{n_k} \prod_{k=2}^p \msf_{\beta,\lambda}(k)^{n_k} \right)e^{O(\log N) }\\
			&=e^{-N}\left(\prod_{k=1}^{p}\Bigr(\frac{N}{n_k k!} \Bigr)^{n_k}e^{n_k} \prod_{k=2}^p \msf_{\beta,\lambda}(k)^{n_k} \right)e^{O(\log N+|V_\bad|)}.
		\end{split}
	\end{equation}
	
	\paragraph{\bf{Step 2: cluster expansion}}
	
	By Proposition \ref{prop:absolute upper} and Lemma \ref{lemma:resum2}, for $\ve_0$ small enough, we have
	\begin{equation*}
		\log W(Z)=\sum_{n=1}^{\infty}\frac{1}{n!}\sum_{\substack{ X_1,\ldots,X_n\in \mc{P}(\pi)\\ \mathrm{connected} }}\left(\prod_{i=1}^n\indic_{|V_{X_i}|\leq p} \Ksf_{\ve_0}^{+,Z}(X_i)\right)\mathrm{I}(G(X_1,\ldots,X_n))+O_{\beta,M,p,\ve_0}\left(N\delta_{\beta,\lambda}+|V_\bad|\right).
	\end{equation*}
	Furthermore, by Proposition \ref{prop:expansion +}, 
	\begin{multline*}
		\sum_{n=1}^{\infty}\frac{1}{n!}\sum_{\substack{ X_1,\ldots,X_n\in \mc{P}(\pi)\\ \mathrm{connected} }}\left(\prod_{i=1}^n\indic_{|V_{X_i}|\leq p} \Ksf_{\ve_0}^{+,Z}(X_i)\right)\mathrm{I}(G(X_1,\ldots,X_n))\\= \sum_{n=1}^{\infty}\frac{1}{n!}\sum_{\substack{ X_1,\ldots,X_n\in \mc{P}(\pi)\\ \mathrm{connected} }}\left(\prod_{i=1}^n\indic_{|V_{X_i}|\leq p} \Ksf_{\ve_0}^{0}(X_i)\right)\mathrm{I}(G(X_1,\ldots,X_n))+O_{\beta,M,p,\ve_0}\left(N\delta_{\beta,\lambda} +|V_\bad|\right).
	\end{multline*}
	Therefore, combining the  above two displays gives 
	\begin{multline*}
		\log W(Z)= \sum_{n=1}^{\infty}\frac{1}{n!}\sum_{\substack{ X_1,\ldots,X_n\in \mc{P}(\pi)\\ \mathrm{connected} }}\left(\prod_{i=1}^n\indic_{|V_{X_i}|\leq p} \Ksf_{\ve_0}^{0}(X_i)\right)\mathrm{I}(G(X_1,\ldots,X_n))+O_{\beta,M,p,\ve_0}\left(N\delta_{\beta,\lambda} +|V_\bad|\right).
	\end{multline*}
	By the estimate \eqref{eq:K+diff} of Proposition \ref{prop:expansion +}, one can replace $\Ksf_{\ve_0}^{0}$ by $\Ksf_{\infty}^{0}$ up to a well-controlled error term. By Corollary \ref{coro:equality mult}, $\Ksf_{\infty}^{0}$ can be replaced by $\Ksf_{\beta,\lambda}^{\mult}$. Hence, applying \eqref{eq:asin}, we then get that for $\ve_0$ small enough,
	\begin{equation}\label{eq:tWZ}
		\log W(Z)=S_N+O_{\beta,M,p,\ve_0}\left(N\delta_{\beta,\lambda} +|V_\bad|\right) +o(N),
	\end{equation}
	where
	\begin{equation*}
		S_N=-N\sum_{n=1}^{\infty}\frac{(-1)^n}{n!}\sum_{m_1,\ldots,m_p\in \mathbb{N} }\prod_{i=1}^p \frac{\gamma_i^{m_i}}{m_i!}\sum_{\substack{(X_1,\ldots,X_n)\\\in \Htrees_n(Y(m_1,\ldots,m_p))\\ X_1\cup \cdots \cup X_n=Y(m_1,\ldots,m_p)}}\prod_{i=1}^n \ksf_{\beta,\lambda}^\mult((\#_kX_i)_k)\indic_{|V_{X_i}|\leq p},
	\end{equation*}
	where we recall that $Y(m_1,\ldots,m_p)$ is as in Definition \ref{def:canonical}. Inserting this into \eqref{def:A2part} and using \eqref{eq:non interacting} gives 
	\begin{multline}\label{eq:expansionA2}
		T_2= e^{S_N+O_{\beta,M,p,\ve_0}\left(N\delta_{\beta,\lambda} +|V_\bad|\right) }\\ \times \binom{N}{N'}\frac{1}{(NC_{\beta,\lambda,\ve_0})^{|V_\bad|}}\int_{(\Lambda^2)^{|V_\bad|} } \prod_{i,j\in V_\bad:i<j}\indic_{\mc{A}_{ij}} \indic_{I_\bad=V_\bad}e^{-\beta v_{ij}^Z}\prod_{i\in V_\bad}\frac{1}{(\tau_i^{+,Z})^{\frac{\beta}{2}}}\frac{1}{(\tau_i^{-,Z})^{\frac{\beta}{2}}} \dd x_i \dd y_i.
	\end{multline}
	
	\paragraph{\bf{Step 3: conclusion}}
	Combining \eqref{eq:A1A2}, \eqref{eq:expansionA1}, \eqref{eq:expansionA2} and Definition \ref{def:rate function}, we obtain
	\begin{multline*}
		\frac{1}{(NC_{\beta,\lambda,\ve_0})^N} \int_{\mc{A}}\indic_{\sigma_N=\Id} e^{-\beta \F_\lambda}\leq e^{-N\mathcal{I}_{\beta,p,\lambda}(\frac{n_1}{N},\ldots,\frac{n_p}{N})+O_{\beta,M,p,\ve_0}\left(N\delta_{\beta,\lambda} +|V_\bad|\right) } \\ \times \binom{N}{N'}\frac{1}{(NC_{\beta,\lambda,\ve_0})^{|V_\bad|}}\int_{(\Lambda^2)^{|V_\bad|} } \prod_{i,j\in V_\bad:i<j}\indic_{\mc{A}_{ij}} \indic_{I_\bad=V_\bad}e^{-\beta v_{ij}^Z}\prod_{i\in V_\bad}\frac{1}{(\tau_i^{+,Z})^{\frac{\beta}{2}}}\frac{1}{(\tau_i^{-,Z})^{\frac{\beta}{2}}} \dd x_i \dd y_i.
	\end{multline*}
	Hence, for all $t\geq 1$, there exists a constant $C>0$ depending on $\beta, M,p$ and $\ve_0$ such that
	\begin{multline*}
		\frac{1}{(NC_{\beta,\lambda,\ve_0})^N} \int_{\mc{A}}\indic_{\sigma_N=\Id}e^{t|I_\bad|} e^{-\beta \F_\lambda}\leq e^{-N\mc{I}_{\beta,p,\lambda}(\frac{n_1}{N},\ldots,\frac{n_p}{N})}e^{O_{\beta,M,p,\ve_0} (N\delta_{\beta,\lambda}) } \\ \times \binom{N}{N'}\frac{e^{(t+C)|V_\bad|}}{(NC_{\beta,\lambda,\ve_0})^{|V_\bad|}}\int_{(\Lambda^2)^{|V_\bad|} } \prod_{i,j\in V_\bad:i<j}\indic_{\mc{A}_{ij}} \indic_{I_\bad=V_\bad}e^{-\beta v_{ij}^Z}\prod_{i\in V_\bad}\frac{1}{(\tau_i^{+,Z})^{\frac{\beta}{2}}}\frac{1}{(\tau_i^{-,Z})^{\frac{\beta}{2}}} \dd x_i \dd y_i.
	\end{multline*}
	Thus, in view of \eqref{eq:580} and \eqref{def:Z bad}, applying the Laplace transform control of Proposition \ref{prop:bad points}, we conclude that for all $t\geq 1$, 
	\begin{equation}\label{eq:onlyp}
		\frac{1}{(NC_{\beta,\lambda,\ve_0})^N} \int_{\mc{A}}\indic_{\sigma_N=\Id}e^{t|I_\bad|} e^{-\beta \F_\lambda}\leq e^{-N\mathcal{I}_{\beta,p,\lambda}(\frac{n_1}{N},\ldots,\frac{n_p}{N})+O_{\beta,M,p,\ve_0,t}\left(N\delta_{\beta,\lambda} \right)} .
	\end{equation}
	
\end{proof}

\subsection{Proof of the multipole distribution theorem}

\begin{proof}[Proof of Theorem \ref{theorem:LDP}]
	Denote $p\coloneqq p(\beta)$. Recall the notation \eqref{defdelta}.
	Denote $\mc{N}_1^\good,\ldots,\mc{N}_p^\good$ the number of multipoles of cardinality $1,\ldots,p$ in $I_\good$. 
	The proof of item \eqref{eq:ksf thm} is straightforward. By Lemma~\ref{lemma:limiting}, under the assumption of \eqref{eq:ksf thm}, the limits $\msf_{\beta,\lambda}(k)$ and $\ksf_{\beta,\lambda}^{\mult}(\{ \#_m X_i\}_m)$ exist. Moreover, inserting the estimates of Proposition \ref{prop:bounded lower} to control the numerator and the denominator of the quotients defining both limiting activities yields \eqref{eq:msf thm} and \eqref{eq:ksf thm}.\\
	
	\paragraph{\bf{Step 1: control of the number of bad points}}	
	Let us write
	\begin{equation*}
		\P(|I_\bad|=k)\leq \sum_{\substack{n_1,\ldots,n_p:\\ n_1+2n_2+\cdots+pn_p=N-k}}\P(|I_\bad|=k,\forall i\in [p], \mc{N}_i^\good=n_i).
	\end{equation*}
	Hence,
	\begin{equation}\label{eq:max ni}
		\P(|I_\bad|=k)\leq N^p\max_{\substack{n_1,\ldots,n_p:\\ n_1+2n_2+\cdots+pn_p=N-k}}\P(|I_\bad|=k,\forall i\in [p], \mc{N}_i^\good=n_i).
	\end{equation}
	By Markov's inequality,
	\begin{equation}\label{eq:Markov} 
		\begin{split}
			\P(|I_\bad|=k,\forall i\in [p], \mc{N}_i^\good=n_i)&\leq e^{-k}\dE_{\P}\left[e^{|I_\bad|}\prod_{i=1}^p\indic_{\mc{N}_i^\good=n_i}\right]\\
			&=e^{-k}\frac{1}{Z_{N,\beta}^\lambda}\int e^{|I_\bad|}e^{-\beta \F_\lambda}\prod_{i=1}^p\indic_{\mc{N}_i^\good=n_i}\dd \vec{X}_N\dd \vec{Y}_N.
		\end{split}
	\end{equation}
	By Proposition \ref{prop:lower bound}, 
	\begin{equation}\label{eq:expansion Z lower}
		\log Z_{N,\beta}^\lambda\geq \log N!+N\log N+N\log(\lambda^{2-\beta}\mc{Z}_\beta)-N\inf_{\triangle_{p,0}}  \mathcal{I}_{\beta,p,\lambda}+O_{\beta,M,p,\ve_0}(N\delta_{\beta,\lambda}).
	\end{equation}
	Moreover, by Proposition \ref{prop:upper bound}, 
	\begin{multline*}
		\log	\int e^{|I_\bad|}e^{-\beta \F_\lambda}\prod_{i=1}^p\indic_{\mc{N}_i^\good=n_i}\dd \vec{X}_N\dd \vec{Y}_N=\log N!+N\log N+N\log(\lambda^{2-\beta}\mc{Z}_\beta)-N\mathcal{I}_{\beta,p,\lambda}(\tfrac{n_1}{N},\ldots,\tfrac{n_p}{N})\\+O_{\beta,M,p,\ve_0}(N\delta_{\beta,\lambda}).
	\end{multline*}
	Inserting  into \eqref{eq:Markov} and using \eqref{eq:max ni}, we deduce that 
	\begin{equation}\label{eq:boundPPPProb}
		\log \P(|I_\bad|=k)\leq p\log N-k-N \left(\inf_{\triangle_{p,\frac{k}{N}}} \mathcal{I}_{\beta,p,\lambda}-\inf_{\triangle_{p,0}}\mathcal{I}_{\beta,p,\lambda} \right)+O_{\beta,M,p,\ve_0}(N\delta_{\beta,\lambda}).
	\end{equation}

	Let us study how $\inf_{\triangle_{p,x}}  \mathcal{I}_{\beta,p,\lambda}$ depends on $x$. Denote for shorthand
	\begin{equation*}
		\tilde{m}_i\coloneqq \frac{\msf_{\beta,\lambda}(i)}{i!}
	\end{equation*}
	We split $\mc{I}_{\beta,p,\lambda}$ into $f+h$ with
	\begin{equation*}
		f:(\gamma_1,\ldots,\gamma_p)\in \cup_{x\in [0,1]}\triangle_{p,x}\mapsto\sum_{i=1}^p \gamma_i(\log \gamma_i-\log \tilde{m}_i) +1-\sum_{i=1}^p \gamma_i
	\end{equation*}
	and 
	\begin{multline*}
		h:(\gamma_1,\ldots,\gamma_p)\in \cup_{x\in [0,1]}\triangle_{p,x}\\ \mapsto \sum_{n=1}^{\infty}\frac{(-1)^n}{n!}\sum_{m_1,\ldots,m_p\in \mathbb{N} }\prod_{i=1}^p \frac{\gamma_i^{m_i}}{m_i!}\sum_{\substack{(X_1,\ldots,X_n)\\\in \Htrees_n(Y(m_1,\ldots,m_p))\\ X_1\cup \cdots \cup X_n=Y(m_1,\ldots,m_p)}}\prod_{i=1}^n \ksf_{\beta,\lambda}^\mult((\#_kX_i)_k)\indic_{|V_{X_i}|\leq p}. 
	\end{multline*}
	Notice that $h$ is a polynomial and that by \eqref{eq:ksf thm}, the nonzero contributions to the sum are $O(\lambda^2)$. Therefore, there exists $C>0$ depending on $\beta$, $M$ and $p$ such that for every $x\in [0,1]$ and every $(\gamma_1,\ldots,\gamma_{p})\in\triangle_{p,0}$, we have 
	\begin{equation}\label{eq:hquadra}
		|h(\gamma_1,\ldots,\gamma_{p})-h(\gamma_1(1-x),\ldots,\gamma_p(1-x))|\leq Cx\lambda^2.
	\end{equation}
	Hence, there exists $C>0$ depending on $\beta$, $M$ and $p$ such that for every $x\in [0,1]$,
	\begin{equation}\label{eq:doubleT}
		\left|\inf_{\triangle_{p,x}}\mc{I}_{\beta,p,\lambda}- \inf_{\triangle_{p,0}}\mc{I}_{\beta,p,\lambda}\right|\leq\left|\inf_{\triangle_{p,x}}f- \inf_{\triangle_{p,0}}f\right| +Cx\lambda^2.
	\end{equation}
	Next, one can easily check that the minimizer over $\triangle_{p,x}$ of $f$ is given by
	\begin{equation*}
		\gamma_i^x=\tilde{m}_i e^{-\mu_x i},
	\end{equation*}
	where $\mu_x$ is determined by the constraint
	\begin{equation*}
		\sum_{i=1}^p i \tilde{m}_ie^{-\mu_x i}=1-x.
	\end{equation*}
	Moreover,
	\begin{equation}\label{eq:explif}
		\inf_{\triangle_{p,x}}f= f(\gamma_1^x,\ldots,\gamma_p^x)=1-\sum_{i=1}^p \tilde{m}_i e^{-\mu_x i}-\mu_x(1-x).
	\end{equation}
	Now, by \eqref{eq:hquadra},
	\begin{align*}
		\inf_{\triangle_{p,x}} \mathcal{I}_{\beta,p,\lambda}
		&= \inf_{\gamma\in\triangle_{p,0}}\Big( f((1-x)\gamma)+h((1-x)\gamma)\Big) \\
		&\ge \inf_{\gamma\in\triangle_{p,0}}\Big( f((1-x)\gamma)+h(\gamma)\Big) - Cx\lambda^2.
	\end{align*}
	Then,
	\begin{align*}
		\inf_{\gamma\in\triangle_{p,0}}\Big(f((1-x)\gamma)+h(\gamma)\Big)
		&\ge \inf_{\gamma\in\triangle_{p,0}} f((1-x)\gamma) + \inf_{\gamma\in\triangle_{p,0}} h(\gamma) \\
		&= \inf_{\triangle_{p,x}} f + \inf_{\triangle_{p,0}} h.
	\end{align*}
	Also,
	\begin{equation*}
		\inf_{\triangle_{p,0}} \mc{I}_{\beta,p,\lambda}
		= \inf_{\triangle_{p,0}}(f+h)
		\ge \inf_{\triangle_{p,0}} f + \inf_{\triangle_{p,0}} h.
	\end{equation*}
	Subtracting,
	\begin{equation}\label{eq:bracket}
		\inf_{\triangle_{p,x}} \mc{I}_{\beta,p,\lambda} - \inf_{\triangle_{p,0}}\mc{I}_{\beta,p,\lambda}
		\ge \Big(\inf_{\triangle_{p,x}} f - \inf_{\triangle_{p,0}} f\Big) - Cx\lambda^2.
	\end{equation}
	By \eqref{eq:explif},
	\begin{equation*}
		\frac{\dd}{\dd x}\inf_{\triangle_{p,x}} f=\mu_x\geq 0,
	\end{equation*}
	since $\sum_{i=1}^p i \tilde{m}_i>1-x$ at $\mu=0$. It follows that the bracket on the right-hand side of \eqref{eq:bracket} is non-negative; hence there exists $C>0$ depending on $\beta, p$ and $M$ such that for every $x\in [0,1]$,
	\begin{equation*}
		\inf_{\triangle_{p,x}}f- \inf_{\triangle_{p,0}}f\geq -Cx\lambda^2.
	\end{equation*}
	Combining this with \eqref{eq:doubleT} we obtain 
	\begin{equation}\label{eq:comparison inf}
		\inf_{\triangle_{p,x}}  \mathcal{I}_{\beta,p,\lambda}\geq \inf_{\triangle_{p,0}}  \mathcal{I}_{\beta,p,\lambda}-O_{\beta,M,p}(x\lambda^2).
	\end{equation}
	
	Inserting \eqref{eq:comparison inf} into \eqref{eq:boundPPPProb} gives
	\begin{equation*}
		\log \P(|I_\bad|=k )\leq -k(1+O_{\beta,M,p}(\lambda^2)) +O_{\beta,M,p,\ve_0}(N\delta_{\beta,\lambda}).
	\end{equation*}
	Hence,
	\begin{equation}\label{eq:probaIbad}
		\log \P(|I_\bad|=k )\leq -\frac{k}{2} +O_{\beta,M,p,\ve_0}(N\delta_{\beta,\lambda}).
	\end{equation}
	Using that $0\leq N-(\mc{N}_1+2\mc{N}_2+\cdots+p\mc{N}_p)\leq |I_\bad|$ and that $|\{i\in [N]: |x_i-y_{\sigma_N(i)}|\geq R_{\beta,\lambda}\}|\leq |I_\bad|$, this concludes the proof of \eqref{eq:thm bad} and \eqref{eq:thm bad2}. Moreover, since for every $i\in \{2,\ldots,p\}$, $$\mc{N}_i\indic_{\mc{N}_i>\ve_0^{-\alpha(\beta)}N}\leq |I_\bad|,$$ this shows that
	\begin{equation}\label{eq:tail estimate}
		\P(\mc{N}_i>\ve_0^{-\alpha(\beta)}N )\leq e^{-N\ve_0^{-\alpha(\beta)}(1+O_{\beta,M,p,\ve_0}(\lambda^2)) +O_{\beta,M,p,\ve_0}(N\delta_{\beta,\lambda})}.
	\end{equation}
	
	\paragraph{\bf{Step 2: expansion of the partition function}}
	We write 
	\begin{equation*}
		Z_{N,\beta}^\lambda=\sum_{k=0}^N \sum_{\substack{n_1,\ldots,n_p:\\ n_1+2n_2+\cdots+pn_p=N-k}} \int e^{-\beta \F_\lambda}\indic_{|I_\bad|=k}\prod_{i=1}^p \indic_{\mc{N}_i^\good=n_i}\dd \vec{X}_N \dd \vec{Y}_N.
	\end{equation*}
	Proceeding as in Step 1, we get 
	\begin{multline*}
		\log Z_{N,\beta}^\lambda\leq (p+1)\log N+\log N!+N\log N+N\log (\lambda^{2-\beta }\mc{Z}_\beta)+ \max_{x\in [0,1]}(-N x+O_{\beta,M,p,\ve_0}(Nx\lambda^2))\\-N \inf_{\triangle_{p,0}}\mathcal{I}_{\beta,p,\lambda}+O_{\beta,M,p,\ve_0}(N\delta_{\beta,\lambda}).
	\end{multline*}
	Thus, 
	\begin{equation*}
		\log Z_{N,\beta}^\lambda\leq \log N!+N\log N+N\log (\lambda^{2-\beta }\mc{Z}_\beta)-N \inf_{\triangle_{p,0}}\mathcal{I}_{\beta,p,\lambda}+O_{\beta,M,p,\ve_0}(N\delta_{\beta,\lambda}).
	\end{equation*}
	Combining this with \eqref{eq:expansion Z lower} gives  equality, i.e.~proves \eqref{eq:expansion Z}.\\
	
	\paragraph{\bf{Step 3: proof of the large deviations bounds}}
	
	For every $\gamma\in \triangle_{p,0}$ and $r>0$, denote 
	\begin{equation*}
		\bar{B}_{\infty}(\gamma,r)\coloneqq \{(\gamma'_1,\ldots,\gamma'_p)\in (\dR_+)^p:\forall i\in [p], |\gamma_i'-\gamma_i|\leq r\}.
	\end{equation*}
	Let $\gamma\in \triangle_{p,0}$ be such that for every $i\in [p]$, $\gamma_i\leq C_0 \lambda^{2(i-1)}$. Recall that $\mc{N}_1,\ldots,\mc{N}_p$ stand for the numbers of multipoles in $[N]$ of cardinality $1,\ldots,p$. Clearly, for every $i\in [p]$, 
	\begin{equation*}
		\mc{N}_i^\good\leq \mc{N}_i\leq \mc{N}_i^\good+ |I_\bad|.
	\end{equation*}
	Therefore, 
	\begin{equation}\label{eq:sumKK}
		\begin{split}
			\P\left(\forall i\in [p], |\mc{N}_i-N\gamma_i|\leq C_0N\delta_{\beta,\lambda}\right)&\leq \frac{1}{Z_{N,\beta}^\lambda}\sum_{k=0}^N \int \indic_{|I_\bad|=k}\indic_{|\mc{N}_i^\good-N\gamma_i|\leq C_0 N\delta_{\beta,\lambda}+k}e^{-\beta \F_\lambda}\dd \vec{X}_N \dd \vec{Y}_N\\
			&\leq  \frac{1}{Z_{N,\beta}^\lambda}\sum_{k=0}^N e^{-k}\int e^{|I_\bad|}\indic_{|\mc{N}_i^\good-N\gamma_i|\leq C_0 N\delta_{\beta,\lambda}+k}e^{-\beta \F_\lambda}\dd \vec{X}_N \dd \vec{Y}_N.
		\end{split}
	\end{equation}
	Applying Proposition \ref{prop:upper bound} (which we can thanks to the assumption on $\gamma_i$), we therefore get 
	\begin{multline*}
		\log \P\left(\forall i\in [p], |\mc{N}_i-N\gamma_i|\leq C_0N\delta_{\beta,\lambda}\right)\leq -\log Z_{N,\beta}^\lambda  +\log N!+N\log N+N\log(\lambda^{2-\beta}\mc{Z}_\beta)\\
		-N\inf_{x\in [0,1]}\left(x+ \inf_{\bar{B}_{\infty}(\gamma,C_0\delta_{\beta,\lambda}+x) }\mathcal{I}_{\beta,p,\lambda}\right)+O_{\beta,M,p,\ve_0}(N\delta_{\beta,\lambda}).
	\end{multline*}
	Proceeding as in Step 1 shows that
	\begin{equation*}
		\inf_{\bar{B}_{\infty}(\gamma,C_0\delta_{\beta,\lambda}+x) }\mathcal{I}_{\beta,p,\lambda}\geq \mathcal{I}_{\beta,p,\lambda}(\gamma)+O_{\beta,M,p,\ve_0}( (C_0\delta_{\beta,\lambda}+x)\lambda^2).
	\end{equation*}
	Therefore, using \eqref{eq:expansion Z} and \eqref{eq:sumKK}, we get
	\begin{equation}\label{eq:probaupper}
		\log \P\left(\forall i\in [p], |\mc{N}_i-N\gamma_i|\leq C_0N\delta_{\beta,\lambda}\right)\leq -N\left(\mathcal{I}_{\beta,p,\lambda}(\gamma)-\inf_{\triangle_{p,0}}\mathcal{I}_{\beta,p,\lambda}\right)+ O_{\beta,M,p,\ve_0}(N \delta_{\beta,\lambda}).
	\end{equation}
	Fix $n_1,\ldots,n_p$ such that $n_1+2n_2+\cdots+pn_p=N$ and such that for every $i\in [p]$, $|n_i-N\gamma_i|\leq p$ (which is possible since $\gamma\in\triangle_{p,0}$). We use 
	\begin{equation*}
		\P\left(\forall i\in [p], |\mc{N}_i-N\gamma_i|\leq C_0N\delta_{\beta,\lambda}\right)\geq \P(\forall i\in [p], \mc{N}_i^\good=n_i).
	\end{equation*}
	Notice that by the definition of good points, since $N-(n_1+2n_2+\cdots+pn_p)=0$,
	\begin{equation*}
		\bigcap_{i\in [p]} \{ \mc{N}_i^\good=n_i\}= \bigcap_{i\in [p]}\{\mc{N}_i=n_i\} \cap \bigcap_{i\in [N]}\{r_i\leq \ve_0 R_{\beta,\lambda}\}.
	\end{equation*}
	Therefore, we obtain from Proposition \ref{prop:lower bound} and \eqref{eq:expansion Z} that
	\begin{equation*}
		\log \P(\forall i\in [p], \mc{N}_i^\good=n_i) \geq -N\left(\mathcal{I}_{\beta,p,\lambda}(\gamma)-\inf_{\triangle_{p,0}}\mathcal{I}_{\beta,p,\lambda}\right)+ O_{\beta,M,p,\ve_0}(N \delta_{\beta,\lambda}).
	\end{equation*}
	Combining this with \eqref{eq:probaupper} concludes the proof of \eqref{eq:LD bounds}.\\
	
	\paragraph{\bf{Step 4: study of the minimizer}}

	Let us next study the rate function $\mathcal{I}_{\beta,p,\lambda}$
	from  Definition \ref{def:rate function}. 
	By \eqref{eq:ksf thm} and the fact that there are only a finite number of terms in the second sum, there exists $C>0$ depending on $\beta, M$ and $p$ such that for every $i,j\in [p]$ 
	\begin{equation*}
		\Bigr|\partial^2_{\gamma_i\gamma_j}\mathcal{I}_{\beta,p,\lambda}(\gamma_1,\ldots,\gamma_p)-\frac{\delta_{i,j}}{\gamma_i}\Bigr|\leq C\lambda^2.
	\end{equation*}
	Thus, we deduce that for $\lambda$ small enough, the function $\mathcal{I}_{\beta,p,\lambda}$ is strictly convex. Moreover, for $\lambda$ small enough, 
	\begin{equation}\label{eq:strict convexity Ibb}
		\nabla^2 \mc{I}_{\beta,p,\lambda}\geq \frac{1}{2}\mathrm{diag}\Bigl(\frac{1}{\gamma_1},\ldots,\frac{1}{\gamma_p}\Bigr).
	\end{equation}
	Since $\triangle_{p,0}$ is a convex set, we deduce that $\mathcal{I}_{\beta,p,\lambda}|_{\triangle_{p,0}}$ is strictly convex. Hence, it admits a unique minimizer $\gamma^*$. By optimization and the estimate \eqref{eq:ksf thm}, there exists $\mu\in \dR$ such that for every $k\in [p]$, 
	\begin{equation*}
		\log \gamma_k^*=\log \tilde{m}_k -\mu k+\alpha_k,
	\end{equation*}
	where $|\alpha_k|\leq C\lambda^2$, for some $C>0$ depending on $\beta$, $p$ and $M$. Moreover, using that $\tilde{m}_1=1$, we get $\mu=O(\lambda^2)$ and conclude that there exists $C>0$ depending on $\beta,p$ and $M$ such that for every $i\in \{2,\ldots,p\}$,
	\begin{equation*}
		\frac{1}{C}\lambda^{2(i-1)} \leq	\gamma_i^*\leq C\lambda^{2(i-1)}.
	\end{equation*}
	This proves \eqref{eq:minimizer}.\\
	
	\paragraph{\bf{Step 5: proof of \eqref{eq:multipoles number}}}
	By \eqref{eq:LD bounds},
	\begin{equation*}
		\log  \P\left(\mc{N}_i\leq \frac{N}{C_0}\lambda^{2(i-1)}\right) \leq -N\left( \inf_{\gamma\in \triangle_{p,0}:\gamma_i\leq \frac{1}{C_0}\lambda^{2(i-1)}}\mathcal{I}_{\beta,p,\lambda}-\inf_{\triangle_{p,0}} \mathcal{I}_{\beta,p,\lambda} \right)+O(N\delta_{\beta,\lambda}).
	\end{equation*}
	Thus, by \eqref{eq:strict convexity Ibb} and \eqref{eq:minimizer}, we get that for $C_0$ large enough, there exists $C>0$ depending on $\beta$, $p$ and $M$ such that
	\begin{equation}\label{eq:lower bound number}
		\log  \P\left(\mc{N}_i\leq \frac{N}{C_0}\lambda^{2(i-1)}\right) \leq -NC\lambda^{2(i-1)} +O(N\delta_{\beta,\lambda}).
	\end{equation}
	Similarly, for $C_0$ large enough and $\ve_0$ small enough, there exists $C>0$ depending on $\beta$, $p$ and $M$ such that
	\begin{equation*}
		\log  \P(\mc{N}_i\geq C_0N\lambda^{2(i-1)}) \leq -NC\lambda^{2(i-1)} +O(N\delta_{\beta,\lambda}).
	\end{equation*}
	Together with the tail estimate \eqref{eq:tail estimate}, this shows (by taking $\ve_0$ small enough) that for $C_0$ large enough, there exists $C>0$ depending on $\beta$, $p$ and $M$ such that
	\begin{equation*}
		\log  \P(\mc{N}_i\geq C_0N\lambda^{2(i-1)}) \leq -NC\lambda^{2(i-1)} +O(N\delta_{\beta,\lambda}).
	\end{equation*}
	Together with \eqref{eq:lower bound number}, this concludes the proof of \eqref{eq:multipoles number}.
\end{proof}

\subsection{Reduction to small clusters in the multipole expansion}

We now turn to the proof of Theorem \ref{theorem:expansion}. 

Our aim is to prove that 
\begin{multline}\label{eq:goal}
	\log\frac{1}{(NC_{\beta,\lambda,\ve_0})^{N}} \int_{\sigma_N=\Id}e^{-\beta \F_\lambda(\vec{X}_N,\vec{Y}_N)}\dd \vec{X}_N \dd \vec{Y}_N=\sum_{n=0}^\infty \frac{1}{n!}\sum_{\substack{V_1,\ldots,V_n\subset [N]\\ \forall i, |V_i|\leq p(\beta)\\
			\mathrm{disjoint}}}\Ksf_{\beta,\lambda}^\dip(V_1)\cdots \Ksf_{\beta,\lambda}^\dip(V_n)  \\+O(N\delta_{\beta,\lambda}).
\end{multline}

We first perform the multipole-based cluster expansion of Section \ref{sub:pert} but show that we can reduce to connected graphs and multipoles that have total size $\le p(\beta)$, up to a well-controlled error.

\begin{definition}
	Let $p\in \mathbb{N}^*$.
	\begin{enumerate}
		\item We let $G_p([N])$  be the set of edges $E$ on $[N]$ such that the connected components of $([N],E)$ all have cardinality less than $p$. 
		\item For every $\pi\in \mathbf{\Pi}([N])$, we let $G_p([N],\pi)$ be the set of edges $E\subset \mc{E}^\inter(\pi)$ on $[N]$ such that the connected components relative to $\pi$ (see Definition \ref{def:quotient}) of $([N],E)$ all have cardinality at most $p$.
		\item We denote by $\mathbf{\Pi}_p([N])$ the set of partitions of $[N]$ such that every element is of cardinality at most $p$. 
	\end{enumerate}
\end{definition}

For every partition $\pi$ of $[N]$ and $\ve_0\in (0,1)$,  let us define 
\begin{multline}\label{def:first Zpi}
	Z_{\ve_0}(\pi)\coloneqq \sum_{E\in G_p([N],\pi)}  \int_{(\Lambda^2)^N }\prod_{ij\in E}f_{ij}^{v} \prod_{ij\in \mc{E}^\inter(\pi)}\indic_{\mc{B}_{ij}^c}\prod_{S\in \pi}\indic_{\mc{B}_S} \prod_{ij\in \mc{E}^\intra(\pi)}e^{-\beta v_{ij}}\indic_{\mc{A}_{ij}}\\ \times \prod_{i=1}^N e^{\beta \g_\lambda(x_i-y_i)}\indic_{|x_i-y_i|\leq \ve_0\Cut}\dd x_i \dd y_i.
\end{multline}

\begin{prop} \label{prop:rewrite1}
	Let $\ve_0\in (0,1)$. Recall $Z_{\ve_0}(\pi)$ from \eqref{def:first Zpi}. Recall $p(\beta)$ from  Definition \ref{def:pbeta}. We have
	\begin{equation}\label{eq:reduction p}
		\log\int_{\sigma_N=\Id}e^{-\beta \F_\lambda(\vec{X}_N,\vec{Y}_N)}\dd \vec{X}_N \dd \vec{Y}_N =\log \sum_{\pi\in \mathbf{\Pi}_{p(\beta)}([N])} Z_{\ve_0}(\pi)+O_{\beta,p(\beta),M,\ve_0}\left(N\delta_{\beta,\lambda}\right).
	\end{equation}
\end{prop}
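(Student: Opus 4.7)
The strategy is to reduce both sides of \eqref{eq:reduction p} to the same truncated Mayer cluster expansion, up to errors of size $O(N\delta_{\beta,\lambda})$ on each side.

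First, I would decompose
$$\int_{\sigma_N=\Id} e^{-\beta\F_\lambda}\,\dd\vec{X}_N\,\dd\vec{Y}_N = \sum_{\pi\in\mathbf{\Pi}([N])} A(\pi),\quad A(\pi) := \int_{\sigma_N=\Id,\,\Pi_\mult=\pi} e^{-\beta\F_\lambda}\,\dd\vec{X}_N\,\dd\vec{Y}_N,$$
and denote by $\tilde A(\pi)$ the further restriction of $A(\pi)$ to the event $\bigcap_i\{|x_i-y_{\sigma_N(i)}|\leq \ve_0\Cut\}$. Let $\mathbf{\Pi}^\good\subset\mathbf{\Pi}_{p(\beta)}([N])$ consist of those partitions whose block counts $n_k=\#_k(\pi)$ satisfy the hypothesis \eqref{eq:assnk} of Proposition \ref{prop:absolute lower}. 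Summing the tail estimates \eqref{eq:thm bad}, \eqref{eq:thm bad2} and \eqref{eq:multipoles number} of Theorem \ref{theorem:LDP} against the exponential decay in the number of bad (oversized, over-counted, or large-dipole) points, I obtain
$$\log\int_{\sigma_N=\Id} e^{-\beta\F_\lambda}\,\dd\vec{X}_N\,\dd\vec{Y}_N = \log\sum_{\pi\in\mathbf{\Pi}^\good}\tilde A(\pi) + O(N\delta_{\beta,\lambda}).$$

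Next, for each $\pi\in\mathbf{\Pi}^\good$ I would carry out the Mayer-bond expansion of $\tilde A(\pi)$ exactly as in Steps 4--8 of the proof of Lemma \ref{lemma:start low}, but with the true interaction $v_{ij}$ in place of $\tilde v_{ij}$ and the identity $e^{-\beta v_{ij}}\indic_{\mc{B}_{ij}^c}=(1+f_{ij}^v)\indic_{\mc{B}_{ij}^c}$ replacing the energy-upper-bound inequality of Step 3. This yields the exact identity
$$\tilde A(\pi) = (NC_{\beta,\lambda,\ve_0})^N\,\Msf_{\ve_0}^0(\pi)\sum_{n\geq 0}\frac{1}{n!}\sum_{\substack{X_1,\ldots,X_n\in\mc{P}(\pi)\\\text{disjoint}}}\prod_{l=1}^n\Ksf_{\ve_0}^0(X_l).$$
The same manipulation applied to $Z_{\ve_0}(\pi)$, tracking the constraint $E\in G_p([N],\pi)$ through Steps 5--7, produces the analogous identity with the additional restriction $|V_{X_l}|\leq p(\beta)$ on the outermost clusters. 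By the $\Ksf_{\ve_0}^0$-analogue of Proposition \ref{prop:absolute lower} --- whose proof transfers verbatim from $\Ksf_{\ve_0}^-$ since the only ingredients of Sections \ref{sub:parity}--\ref{sub:abs lower} are the parity splitting of the Mayer bond and the pointwise bounds $|a_{ij}^w\indic_{\mc{B}_{ij}^c}|\lesssim a_{ij}^\abs$, $|b_{ij}^w\indic_{\mc{B}_{ij}^c}|\lesssim b_{ij}^\abs$, valid for $w=v$ via \eqref{eqcorobv} just as for $w=\tilde v$ via \eqref{estisurtvij} --- both series converge absolutely. Lemma \ref{lemma:resum2} then converts both identities to log form and expresses the difference as
$$\log\tilde A(\pi)-\log Z_{\ve_0}(\pi) = \sum_{n\geq 1}\frac{1}{n!}\sum_{\substack{X_1,\ldots,X_n\in\mc{P}(\pi)\text{ connected}\\ \exists i,\,|V_{X_i}|>p(\beta)}}\prod_l\Ksf_{\ve_0}^0(X_l)\,\mathrm{I}(G(X_1,\ldots,X_n)),$$
which is bounded in absolute value by $CN\delta_{\beta,\lambda}$, uniformly in $\pi\in\mathbf{\Pi}^\good$, by the $\Ksf_{\ve_0}^0$-analogue of \eqref{eq:bb2}.

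The main obstacle I anticipate lies in Steps 5--7 of the expansion of $Z_{\ve_0}(\pi)$: the constraint $E\in G_p([N],\pi)$ is intrinsically a condition on the connected components of $([N],E\cup\mc{E}^\intra(\pi))$ \emph{before} the $F$-resummation, and one must carefully verify it is preserved --- and equivalent to the cluster-size condition $|V_{X_l}|\leq p(\beta)$ on the outer clusters --- through the merging performed by the $F$-edges in the final resummation. Once the twin identities and the log-difference bound are established for each $\pi\in\mathbf{\Pi}^\good$, summing over $\pi$ and extending the right-hand sum from $\mathbf{\Pi}^\good$ to $\mathbf{\Pi}_{p(\beta)}([N])$ (via a variant of the tail estimate \eqref{eq:multipoles number} adapted to the $Z_{\ve_0}$ sum, using the $\Ksf^0$ activity bounds of Proposition \ref{prop:bounded lower}) concludes the proof of \eqref{eq:reduction p}.
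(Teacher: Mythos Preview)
Your direct-comparison route differs from the paper's, which shows that both sides of \eqref{eq:reduction p} equal $N\log N + N\log(\lambda^{2-\beta}\mc{Z}_\beta) - N\inf_{\triangle_{p,0}}\mathcal{I}_{\beta,p,\lambda} + O(N\delta_{\beta,\lambda})$: the left side via Propositions~\ref{prop:lower bound} and~\ref{prop:upper bound}, and the right side by establishing the identity \eqref{eq:Zpi finite} for $Z_{\ve_0}(\pi)$ (this is where your Steps~5--7 concern is resolved) and then rerunning the rate-function computation of Proposition~\ref{prop:lower bound}. The key structural point is that the paper only ever needs the \emph{size-restricted} $\Ksf_{\ve_0}^0$ series with $|V_{X_l}|\le p(\beta)$, whose convergence is immediate since all clusters are bounded.

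Your argument has a genuine gap at the ``$\Ksf_{\ve_0}^0$-analogue of Proposition~\ref{prop:absolute lower}''. The pointwise bound $|a_{ij}^v\indic_{\mc{B}_{ij}^c}|\lesssim a_{ij}^\abs$ you assert is \emph{false}: by Definition~\ref{def:abs ab}, $a_{ij}^\abs$ carries the indicator $\indic_{d_{ij}\le 16\ve_0\Cut}$, which is inherited from the compact support of $\tilde v_{ij}$ (Lemma~\ref{lemma:errortilde}); estimate~\eqref{eqcorobv} gives only the decay rate of $v_{ij}$, not compact support. This truncation is load-bearing in the large-cluster analysis: it is what makes the interaction sum in Lemma~\ref{lemma:control sum intera} bounded by $\tfrac{C}{M}\log(\Cut/r_i)$ rather than by a quantity involving the cluster diameter, and without it the exponential Penrose factor in \eqref{def:mcL} and the subsequent integration in Lemma~\ref{lemma:integration summation} would acquire factors growing badly in $k$. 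So the proof does not transfer verbatim, and the paper supplies no separate convergence result for the unrestricted $\Ksf_{\ve_0}^0$ series. The rate-function detour is precisely what lets the paper work throughout with the truncated activities $\Ksf_{\ve_0}^-$, $\Ksf_{\ve_0}^{+,Z}$ on the left side and with bounded clusters on the right.
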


\medskip

\begin{proof}
	Denote $p=p(\beta)$.
	As a direct corollary of Proposition \ref{prop:lower bound} and Proposition \ref{prop:upper bound}, we obtain 
	\begin{equation}\label{eq:first start}
		\log \int_{\sigma_N=\Id}e^{-\beta \F_\lambda(\vec{X}_N,\vec{Y}_N)}\dd \vec{X}_N \dd \vec{Y}_N=N\log N+N\log (\lambda^{2-\beta}\mc{Z}_\beta)-N\inf \mathcal{I}_{\beta,p,\lambda}+O_{\beta,p,M,\ve_0}\left(N\delta_{\beta,\lambda}\right).
	\end{equation}
	We now prove that 
	\begin{equation}\label{eq:goal1}
		\log \sum_{\pi\in \mathbf{\Pi}_p([N])} Z_{\ve_0}(\pi) =N\log N+N\log (\lambda^{2-\beta}\mc{Z}_\beta)-N\inf \mc{I}_{\beta,p,\lambda}+O_{\beta,p,M,\ve_0}\left(N\delta_{\beta,\lambda}\right),
	\end{equation}
	which will conclude the proof of the proposition. Let $\mathbf{\Pi}(n_1,\ldots,n_p)([N])$ be the set of partitions of $[N]$ with $n_i$ elements of cardinality $i$ for every $i=1,\ldots,p$. Fix $n_1,\ldots,n_p\geq 0$ such that $n_1+2n_2+\cdots +pn_p=N$ and let $\pi\in \mathbf{\Pi}(n_1,\ldots,n_p)([N])$. By \eqref{eq:choicespi}, 
	\begin{equation*}
		\sum_{\pi'\in \mathbf{\Pi}(n_1,\ldots,n_p)([N])}Z_{\ve_0}(\pi')=\frac{N!}{ 1^{n_1}(2!)^{n_2}\cdots (p!)^{n_p}n_1!\cdots n_p!} Z_{\ve_0}(\pi).
	\end{equation*}
	
	Recall the activities $\Msf_{\ve_0}^0$ from Definition \ref{def:multipolemeasure} and  $\Ksf_{\ve_0}^0$ from Definition \ref{def:activity lower}. Proceeding as in the proof of Lemma \ref{lemma:start low}, we get 
	\begin{equation}\label{eq:Zpi finite}
		Z_{\ve_0}(\pi)=(NC_{\beta,\lambda,\ve_0})^N \Msf_{\ve_0}^0(\pi)\sum_{n=0}^\infty \frac{1}{n!}\sum_{\substack{X_1,\ldots,X_n\subset \pi:\\ \forall i, |V_{X_i}|\leq p \\ \mathrm{disjoint}}}  \Ksf_{\ve_0}^0(X_1)\cdots \Ksf_{\ve_0}^0(X_n).
	\end{equation}
	
	For every $k=1,\ldots,p$, set $\gamma_k\coloneqq \frac{n_k}{N}$. Arguing as in \eqref{eq:tZ}, we have 
	\begin{multline*}
		\log  \sum_{\pi'\in \mathbf{\Pi}(n_1,\ldots,n_p)([N])}Z_{\ve_0}(\pi')=N\log(NC_{\beta,\lambda,\ve_0})-N\sum_{k=1}^p\gamma_k(\log k!+\log \gamma_k-1-\log \msf_{\beta,\lambda}(k))\\+\log \sum_{n=0}^\infty \frac{1}{n!}\sum_{\substack{X_1,\ldots,X_n\subset \pi:\\ \forall i, |V_{X_i}|\leq p \\ \mathrm{disjoint}}}  \Ksf_{\ve_0}^0(X_1)\cdots \Ksf_{\ve_0}^0(X_n)+O_{\beta,p,M,\ve_0}\left(N\delta_{\beta,\lambda}\right).
	\end{multline*}
	By the definition of $\mathcal{I}_{\beta,p,\lambda}$ (see Definition \ref{def:rate function}), it therefore remains to show that
	\begin{multline}\label{eq:claim remain}
		\log \sum_{n=0}^{+\infty}\frac{1}{n!}\sum_{\substack{X_1,\ldots,X_n\in \mc{P}(\pi)\\ \mathrm{disjoint}}}\Ksf_{\ve_0}^0(X_1)\cdots \Ksf_{\ve_0}^0(X_n)
		\\
		=
		-N\sum_{n=1}^{\infty}\frac{(-1)^n}{n!}\sum_{m_1,\ldots,m_p\in \mathbb{N} }\prod_{i=1}^p \frac{\gamma_i^{m_i}}{m_i!}\sum_{\substack{(X_1,\ldots,X_n)\\\in \Htrees_n(Y(m_1,\ldots,m_p))\\ X_1\cup \cdots \cup X_n=Y(m_1,\ldots,m_p)}}\prod_{i=1}^n \ksf_{\beta,\lambda}^\mult((\#_kX_i)_k)\indic_{|V_{X_i}|\leq p}
		\\+O_{\beta,p,M,\ve_0}\left(N\delta_{\beta,\lambda}\right).
	\end{multline}

	Since the clusters in \eqref{eq:Zpi finite} are of bounded cardinality, the absolute convergence of the cluster series 
	\begin{equation*}
		\sum_{n=1}^\infty \frac{1}{n!}\sum_{\substack{X_1,\ldots,X_n\subset \pi:\\ \forall i, |V_{X_i}|\leq p \\ \mathrm{connected}}}  \Ksf_{\ve_0}^0(X_1)\cdots \Ksf_{\ve_0}^0(X_n)\mathrm{I}(G(X_1,\ldots,X_n))
	\end{equation*}
	is clear. Hence, applying Lemma \ref{lemma:resum2}, we have
	\begin{equation*}
		\log \sum_{n=0}^\infty \frac{1}{n!}\sum_{\substack{X_1,\ldots,X_n\subset \pi:\\ \forall i, |V_{X_i}|\leq p \\ \mathrm{disjoint}}}  \Ksf_{\ve_0}^0(X_1)\cdots \Ksf_{\ve_0}^0(X_n)=\sum_{n=1}^\infty \frac{1}{n!}\sum_{\substack{X_1,\ldots,X_n\subset \pi:\\ \forall i, |V_{X_i}|\leq p \\ \mathrm{connected}}}  \Ksf_{\ve_0}^0(X_1)\cdots \Ksf_{\ve_0}^0(X_n)\mathrm{I}(G(X_1,\ldots,X_n)).
	\end{equation*}
	Using Proposition \ref{prop:expansion -} to replace $\Ksf_{\ve_0}^0$ by $\Ksf_{\infty}^0$ up to a small error and Corollary \ref{coro:equality mult} to replace $\Ksf_{\infty}^0$ by $\Ksf_{\beta,\lambda}^\mult$, we thus deduce from estimate \eqref{eq:asin} that
	\begin{multline*}
		\sum_{\substack{X_1,\ldots,X_n\in \mc{P}(\pi)\\ \mathrm{connected}}}\Ksf_{\ve_0}^0(X_1)\indic_{|V_{X_1}|\leq p} \cdots \Ksf_{\ve_0}^0(X_n) \indic_{|V_{X_n}|\leq p}\mathrm{I}(G(X_1,\ldots,X_n)) \\=N(-1)^{n-1}\sum_{m_1,\ldots,m_p\in \mathbb{N} }\prod_{i=1}^p \frac{\gamma_i^{m_i}}{m_i!}\sum_{\substack{(X_1,\ldots,X_n)\\\in \Htrees_n(Y(m_1,\ldots,m_p))\\ X_1\cup \cdots \cup X_n=Y(m_1,\ldots,m_p)}}\prod_{i=1}^p \ksf_{\beta,\lambda}^\mult((\#_kX_i)_k)\indic_{|V_{X_i}|\leq p}+O_{\beta,p,M,\ve_0}(N\delta_{\beta,\lambda}),
	\end{multline*}
	which proves the claim \eqref{eq:claim remain} and therefore the estimate \eqref{eq:goal1}. Combining \eqref{eq:first start} and \eqref{eq:goal1} establishes \eqref{eq:reduction p}.
\end{proof}

\subsection{Switching to a dipole expansion}

We now apply a dipole-based cluster expansion to the right-hand side of \eqref{eq:reduction p}, introducing the dipole activity defined below. This activity is a set function on subsets of $[N]$ (rather than on subpartitions of a fixed partition) and it encodes corrections to the i.i.d.~dipole model rather than to the hierarchical multipole model.

\begin{definition}[Auxiliary dipole activity]\label{def:aux activity} Let $\beta\in (2,\infty)$ and $p\in \mathbb{N}^*$. For every $V\subset [N]$, let us define
	\begin{equation}\label{def:Kaux}
		\Ksf^\aux_{\beta,p,\ve_0}(V)=\sum_{\pi\in \mathbf{\Pi}_{p}(V)}\sum_{E\in G_{p}(V,\pi)}\sum_{\substack{F\subset \mc{E}^\inter(\pi):\\
				E\cup F\in \mathsf{E}^{\pi} }} \dE_{\mu_{\beta, \lambda, \ve_0}^{\otimes |V|}}\left[ \prod_{ij\in E}f_{ij}^{v}\prod_{ij\in F}(-\indic_{\mc{B}_{ij}}) \prod_{S\in \pi}\indic_{\mc{B}_S}\prod_{ij\in \mc{E}^\intra(\pi) }e^{-\beta v_{ij}}\indic_{\mc{A}_{ij}}\right].
	\end{equation}  
	Notice that for $|V|\in \{0,1\}$, we have $\Ksf^\aux_{\beta,p,\ve_0}(V)=0$.

	We will also denote by $\Ksf^\aux_{\beta,p,\ve_0}(V)$ the same sum but with $\mathbf{\Pi}_{p}(V)$ replaced by $\mathbf{\Pi}(V)$ and with $G_p(V,\pi)$ replaced by the set of edges on $V$ included in $\mc{E}^\inter(\pi)$.
\end{definition}

\begin{lemma}\label{lemma:auxiliary}
	Let $\beta\in (2,\infty)$ and $p(\beta)$ be as in Definition \ref{def:pbeta}. Let $\ve_0\in (0,1)$, $Z_{\ve_0}(\pi)$ be as in \eqref{def:first Zpi} and $\Ksf_{\beta,p(\beta),\ve_0}^\aux$ be as in Definition \ref{def:aux activity}. Then, we have 
	\begin{equation}\label{eq:switch1}
		\frac{1}{(NC_{\beta,\lambda,\ve_0})^N}\sum_{\pi\in \mathbf{\Pi}_{p(\beta)}([N])}Z_{\ve_0}(\pi)=\sum_{n=0}^\infty \frac{1}{n!}\sum_{\substack{V_1,\ldots,V_n\subset [N]\\ \mathrm{disjoint} } } \Ksf_{\beta,p(\beta),\ve_0}^\aux(V_1)\cdots \Ksf_{\beta,p(\beta),\ve_0}^\aux(V_n).
	\end{equation}
	Moreover, for every $n\geq 0$, we have
	\begin{multline}\label{eq:switch2}
		\frac{1}{(NC_{\beta,\lambda,\ve_0})^n} \int e^{-\beta \F_\lambda(\vec{X}_n,\vec{Y}_n)}\indic_{\sigma_n=\Id}\prod_{i=1}^n \indic_{|x_i-y_i|\leq \ve_0\Cut}\dd \vec{X}_n \dd \vec{Y}_n\\ =\sum_{k=0}^\infty \frac{1}{k!}\sum_{\substack{V_1,\ldots,V_k\subset [n]\\ \mathrm{disjoint} } } \Ksf_{\beta,\infty,\ve_0}^\aux(V_1)\cdots \Ksf_{\beta,\infty,\ve_0}^\aux(V_k).
	\end{multline}
\end{lemma}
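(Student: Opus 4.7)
The plan is to prove both identities by a single combinatorial resummation organized around the connected components, relative to $\pi$, of the augmented graph $([N], E \cup F \cup \mc{E}^\intra(\pi))$. Independence of the dipoles under the product measure $\mu_{\beta,\lambda,\ve_0}^{\otimes N}$ makes the expectation factor across these components, and each non-trivial component will contribute exactly one factor of $\Ksf^\aux_{\beta,p,\ve_0}$.

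For \eqref{eq:switch1}, I would start from the definition \eqref{def:first Zpi} and write $Z_{\ve_0}(\pi)/(NC_{\beta,\lambda,\ve_0})^N$ as an expectation under $\mu_{\beta,\lambda,\ve_0}^{\otimes N}$. Next I would expand each $\indic_{\mc{B}_{ij}^c}$ as $1-\indic_{\mc{B}_{ij}}$, which introduces a sum over $F\subset \mc{E}^\inter(\pi)$ with weight $\prod_{ij\in F}(-\indic_{\mc{B}_{ij}})$, and then sum the resulting triple sum $\sum_\pi\sum_E\sum_F$ (with $\pi\in\mathbf{\Pi}_{p(\beta)}([N])$ and $E\in G_{p(\beta)}([N],\pi)$) by reorganizing according to the connected components of $([N], E\cup F\cup \mc{E}^\intra(\pi))$ relative to $\pi$. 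A singleton component $\{i\}$ is by definition isolated from $E$, $F$ and $\mc{E}^\intra(\pi)$, so it forces $\{i\}$ to be a block of $\pi$ and contributes a trivial factor of $1$. A non-singleton component with vertex set $V\subset[N]$ induces a restricted partition $\pi_V\in\mathbf{\Pi}_{p(\beta)}(V)$, an edge set $E\cap\mc{E}^\inter(\pi_V)\in G_{p(\beta)}(V,\pi_V)$, and an edge set $F_V\subset \mc{E}^\inter(\pi_V)$ with $(E\cap\mc{E}^\inter(\pi_V))\cup F_V\in \mathsf{E}^{\pi_V}$. Factoring the expectation by independence and summing over these local data on each $V$ reproduces exactly $\Ksf^\aux_{\beta,p(\beta),\ve_0}(V)$ as in \eqref{def:Kaux}, which gives \eqref{eq:switch1}.

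For \eqref{eq:switch2}, the additional step is to first identify the left-hand side with $(NC_{\beta,\lambda,\ve_0})^{-n}\sum_{\pi\in \mathbf{\Pi}([n])}Z_{\ve_0}(\pi)$ evaluated with $p=\infty$. Indeed, in that case the constraint $E\in G_\infty([n],\pi)$ is vacuous, so $\sum_E\prod_{ij\in E}f_{ij}^v = \prod_{ij\in \mc{E}^\inter(\pi)}(1+f_{ij}^v)=\prod_{ij\in \mc{E}^\inter(\pi)}e^{-\beta v_{ij}}\indic_{\mc{A}_{ij}}$; combining with $\prod_{S\in\pi}\indic_{\mc{B}_S}\prod_{ij\in\mc{E}^\inter(\pi)}\indic_{\mc{B}_{ij}^c}=\indic_{\Pi_\mult=\pi}$, summing over $\pi$ via $\sum_\pi\indic_{\Pi_\mult=\pi}=1$, and using $\prod_{i<j}\indic_{\mc{A}_{ij}}=\indic_{\sigma_n=\Id}$, one recovers precisely the integrand of the left-hand side of \eqref{eq:switch2}. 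The same component-based resummation as before, now carried out with no cardinality constraint so that every partition of each component $V$ and every edge set $E\cap\mc{E}^\inter(\pi_V)\subset \mc{E}^\inter(\pi_V)$ is allowed, then produces \eqref{eq:switch2}. The only real obstacle is bookkeeping: one must verify that the map sending $(\pi,E,F)$ on $[N]$ to the collection of local triples $(\pi_V,E\cap\mc{E}^\inter(\pi_V), F_V)$ indexed by the connected components of the augmented graph is a bijection, and that the global constraints descend to the corresponding local constraints on each $V$; the analytic step, namely factoring the expectation, is immediate from the product structure of $\mu_{\beta,\lambda,\ve_0}^{\otimes N}$.
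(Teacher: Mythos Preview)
Your proposal is correct and follows essentially the same route as the paper: expand $\indic_{\mc{B}_{ij}^c}=1-\indic_{\mc{B}_{ij}}$, then resum the triple sum over $(\pi,E,F)$ according to the connected components of $([N],E\cup F\cup\mc{E}^\intra(\pi))$ relative to $\pi$, using independence under $\mu_{\beta,\lambda,\ve_0}^{\otimes N}$ to factor the expectation. Your treatment of \eqref{eq:switch2}, where you first identify the left-hand side with $\sum_{\pi}Z_{\ve_0}(\pi)$ (with $p=\infty$) via the multipole-partition identity and $\sum_\pi\indic_{\Pi_\mult=\pi}=1$, is in fact more explicit than the paper's one-line ``splitting the integral and proceeding as before''; the paper spells this identification out only later, in the proof of Corollary~\ref{coro:mobius}.
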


\medskip

\begin{proof}
	Writing 
	\begin{equation*}
		\prod_{ij\in \mc{E}^\inter(\pi)}\indic_{\mc{B}_{ij}^c}=\prod_{ij\in \mc{E}^\inter(\pi)}(1-\indic_{\mc{B}_{ij}})=\sum_{F\subset \mc{E}^\inter(\pi)}\prod_{ij \in F}(-\indic_{\mc{B}_{ij}}),
	\end{equation*}
	we have
	\begin{multline}\label{eq:resum me}
		\sum_{\pi\in \mathbf{\Pi}_{p(\beta)}([N])}Z_{\ve_0}(\pi)=\sum_{\pi\in \mathbf{\Pi}_{p(\beta)}([N])}\sum_{E\in G_p([N],\pi)}\sum_{F\subset \mc{E}^\inter(\pi)}\\ \int_{(\Lambda^2)^N } \prod_{ij\in F}(-\indic_{\mc{B}_{ij}})\prod_{ij\in E}f_{ij}^{v}\prod_{S\in \pi}\indic_{\mc{B}_S} \prod_{ij\in \mc{E}^\intra(\pi)}e^{-\beta v_{ij}}\indic_{\mc{A}_{ij}}\prod_{i=1}^N e^{\beta \g_\lambda(x_i-y_i)}\indic_{|x_i-y_i|\leq \ve_0\Cut}\dd x_i \dd y_i
	\end{multline}
	We let $\pi'$ be the connected components relative to $\pi$ of the graph $([N],E\cup F)$. Notice that $\pi'$ is a coarsening of $\pi$. Resumming \eqref{eq:resum me} according to $\pi'$ yields the representation \eqref{eq:switch1}.

	Splitting the integral in the left-hand side of \eqref{eq:switch2} and proceeding as in the proof of \eqref{eq:switch1} yields \eqref{eq:switch2}.
\end{proof}

We now claim that the series in the right-hand side of \eqref{eq:switch1} is absolutely convergent and that we can reduce to clusters of cardinality smaller than $p(\beta)$.

\begin{prop}[Absolute convergence of the auxiliary cluster series]\label{prop:rewriteK}
	Let $\beta\in (2,\infty)$ and $p(\beta)$ be as in Definition \ref{def:pbeta}. Let $\Ksf_{\beta,p(\beta),\ve_0}^\aux$ be as in \eqref{def:Kaux}. 
	
	Then, for $\ve_0$ small enough with respect to $\beta$, $p(\beta)$ and $M$, and $\lambda$ small enough, there exists $C>0$ depending only on $\beta, M, p(\beta)$ and $\ve_0$ such that
	\begin{multline}\label{eq:re}
		\log \sum_{n=0}^\infty \frac{1}{n!}\sum_{\substack{V_1,\ldots,V_n\subset [N] \\ \mathrm{disjoint} }}\Ksf_{\beta,p(\beta),\ve_0}^\aux(V_1)\cdots \Ksf_{\beta,p(\beta),\ve_0}^\aux(V_n)\\=\sum_{n=1}^\infty \frac{1}{n!}\sum_{\substack{V_1,\ldots,V_n\subset [N] \\ \mathrm{connected}\\ \forall i,|V_i|\leq p(\beta) }}\Ksf_{\beta,p(\beta),\ve_0}^\aux(V_1)\cdots \Ksf_{\beta,p(\beta),\ve_0}^\aux(V_n)\mathrm{I}(G(V_1,\ldots,V_n))+O_{\beta,p(\beta),\ve_0}\left(N\delta_{\beta,\lambda}\right).
	\end{multline}
\end{prop}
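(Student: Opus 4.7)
The strategy is to establish absolute convergence of the cluster series on the left-hand side of \eqref{eq:re}, apply the combinatorial resummation Lemma \ref{lemma:resum2} to rewrite the logarithm as a sum over connected clusters, and then truncate the sum to clusters with $|V_i|\leq p(\beta)$ at the cost of an additive error $O(N\delta_{\beta,\lambda})$. The two ingredients driving the argument are (i) an \emph{anchored bound} on the auxiliary activity $\Ksf^\aux_{\beta,p(\beta),\ve_0}$ analogous to the Kotecký--Preiss-type estimate of Lemma~\ref{lem:cvgentseries}, and (ii) a term-by-term comparison between the auxiliary expansion and the multipole expansion of Section~\ref{sub:pert}, which allows us to import the estimates already proven for $\Ksf_{\ve_0}^0$.

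First I would fix a point $i_0\in [N]$ and prove
\begin{equation*}
\sum_{\substack{V\subset [N]:\ i_0\in V\\ |V|>p(\beta) }}|\Ksf^\aux_{\beta,p(\beta),\ve_0}(V)|\,20^{|V|}\leq C\,\delta_{\beta,\lambda},
\end{equation*}
for $\ve_0$ small enough and $C$ depending on $\beta,M,p(\beta),\ve_0$. To do so, I unpack Definition~\ref{def:aux activity}: every contribution to $\Ksf^\aux(V)$ is indexed by a partition $\pi\in\mathbf{\Pi}_{p(\beta)}(V)$ together with edge sets $E,F$, and after writing $\indic_{\mc{A}_{ij}}$ in the intra-multipole factors one recognises, up to the normalising multipole partition function, exactly the integrand defining the activity $\Ksf^0_{\ve_0}(\pi')$ for the coarsening $\pi'$ made of the $\pi$-connected components of $(V,E\cup F)$. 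Summing first over the edge sets and next over $\pi$ refining $\pi'$, the auxiliary activity decomposes as
\begin{equation*}
\Ksf^\aux_{\beta,p(\beta),\ve_0}(V)\;=\;\sum_{\pi'\in\mathbf{\Pi}_{p(\beta)}(V)}\Msf^0_{\ve_0}(\pi')\,\Ksf^0_{\ve_0}(\pi')\,\bigl(\text{combinatorial correction from Möbius inversion}\bigr),
\end{equation*}
and the anchored bound \eqref{eq:anch lem all} for the multipole activities transfers directly: the number of refinements $\pi$ of a given $\pi'$ with blocks of size $\leq p(\beta)$ is bounded by $e^{C|V|}$, so summing in $V$ with the geometric weight $20^{|V|}$ only modifies the constants while preserving the factor $\delta_{\beta,\lambda}$.

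Once the anchored bound is in hand, absolute convergence of
$\sum_{n\geq 1}\frac{1}{n!}\sum_{V_1,\ldots,V_n\ \text{connected}}|\Ksf^\aux(V_1)\cdots \Ksf^\aux(V_n)\,\mathrm{I}(G(V_1,\ldots,V_n))|$ follows by the Rota bound \eqref{eq:rota} and the standard tree-anchoring argument used in the proof of Proposition~\ref{prop:absolute lower} (leaf-by-leaf elimination of the spanning tree yields a factor $\Anch^{n-1}$ with $\Anch=O(\lambda^2)$ from the bounded-cluster part of $\Ksf^\aux$). Lemma~\ref{lemma:resum2} is then applicable and converts the left-hand side of \eqref{eq:re} into the full connected series. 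To get \eqref{eq:re} I keep only the clusters with every $|V_i|\leq p(\beta)$; the discarded contribution, in which at least one $V_i$ has $|V_i|>p(\beta)$, is bounded by $N$ times the anchored tail estimate, giving $O(N\delta_{\beta,\lambda})$.

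The principal obstacle is the identification step: the auxiliary activity $\Ksf^\aux(V)$ is a genuinely different object from the multipole activity $\Ksf^0_{\ve_0}(X)$, because $\Ksf^\aux$ is indexed by a \emph{set} $V\subset[N]$ rather than a subpartition $X$, and the two expansions live on different combinatorial lattices. The Möbius inversion on the partition lattice of $V$ — exactly in the spirit of Corollary~\ref{coro:equality mult} — is what glues the two pictures together, and checking that the resulting combinatorial coefficients do not spoil the summability (in particular that the extra constraint $E\in G_{p(\beta)}(V,\pi)$ only produces polynomial factors absorbed by the exponential weight $20^{|V|}$) is the delicate point. Once that accounting is done, everything else is a mechanical rerun of the argument of Lemma~\ref{lem:cvgentseries} and Proposition~\ref{prop:absolute lower}.
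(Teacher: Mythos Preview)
Your Möbius-inversion route has a genuine gap. The claimed decomposition
\[
\Ksf^\aux_{\beta,p(\beta),\ve_0}(V)\;=\;\sum_{\pi'\in\mathbf{\Pi}_{p(\beta)}(V)}\Msf^0_{\ve_0}(\pi')\,\Ksf^0_{\ve_0}(\pi')\times(\text{combinatorial correction})
\]
is not established, and in fact the two activities differ structurally: $\Ksf^0_{\ve_0}(X)$ carries the factors $\prod_{ij\in\mc{E}^\inter(X_l)}\indic_{\mc{B}_{ij}^c}$ and is normalized by $\Msf^0_{\ve_0}(X)$, while in $\Ksf^\aux$ the sum over $E$ is additionally constrained to $G_{p(\beta)}(V,\pi)$. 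More seriously, even granting a usable decomposition, the anchored bound \eqref{eq:anch lem all} of Lemma~\ref{lem:cvgentseries} is proved under the hypothesis \eqref{eq:bornenk} that the number of multipoles of size $k$ in the ambient partition $\pi$ is at most $\ve_0^{-\alpha(\beta)}N\lambda^{2(k-1)}$. This constraint is what absorbs the factor $\prod_i\lambda^{-2(i-1)n_i'}$ coming from the $\Msf^0_{\ve_0}$ denominator (see Step~3 of that proof). When you sum $\Ksf^\aux(V)$ over all $\pi\in\mathbf{\Pi}_{p(\beta)}(V)$ there is no such control on the block profile, so this cancellation is unavailable and the bound cannot be imported.

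The paper's proof avoids this detour entirely. The point you are missing is that the constraint $E\in G_{p(\beta)}(V,\pi)$ together with $\pi\in\mathbf{\Pi}_{p(\beta)}(V)$ is not a complication to be absorbed but the very reason the bound is easy: every connected piece carrying Mayer bonds $f_{ij}^v$ has cardinality at most $p(\beta)$, so all the potentially divergent integrals are already of bounded size. One then runs the same machinery as before (odd--even split, parity reduction to Eulerian subgraphs, Penrose resummation on the $b^\abs$ and $\indic_{\mc{B}_{ij}}$ bonds, Corollary~\ref{coro:prod a} on each bounded Eulerian block, and the integral estimate of Lemma~\ref{lemma:intergral small -}) directly on the integrand of $\Ksf^\aux(V)$, obtaining
\[
|\Ksf^\aux_{\beta,p(\beta),\ve_0}(V)|\leq C^{|V|}\,|V|^{|V|}\,N^{1-|V|}\,\gamma_{\beta,\lambda,|V|}.
\]
From this pointwise bound the anchored estimate and the Kotecký--Preiss conclusion follow exactly as in Proposition~\ref{prop:absolute lower}, with no need for the assumption \eqref{eq:bornenk} on multipole counts.
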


\medskip
\begin{proof}
	Denote $p=p(\beta)$. We need to prove that the series on the right-hand side of \eqref{eq:re} is absolutely convergent and that clusters of size larger than $p(\beta)$ contribute only negligibly. This is easy to prove using that by Definition \ref{def:Kaux}, the activity $\Ksf_{\beta,p(\beta),\ve_0}^\aux(V)$ is obtained by summing over $E\in G_{p(\beta)}(V)$ and $\pi\in \mathbf{\Pi}_{p(\beta)}(V)$ (hence the possibly divergent terms have here bounded cardinality, so that the absolute convergence is easy).
	
	The proof relies on ingredients already used in the proof of Proposition \ref{prop:absolute lower}, so we only sketch the argument. \\
	
	\paragraph{\bf{Step 1: splitting the Mayer bond into an odd and an even part}}
	Let $V\subset [N]$. Write $f_{ij}^v=a_{ij}^v+b_{ij}^v$ with $a_{ij}^v$ and $b_{ij}^v$ as in Definition \ref{def:awbw}. This gives 
	\begin{multline*}
		\Ksf_{\beta,p,\ve_0}^\aux(V)=\sum_{\pi\in \mathbf{\Pi}_{p}(V)}\sum_{E_1\subset \mc{E}^\inter(\pi) }\sum_{\substack{E_2\subset \mc{E}^\inter(\pi):\\
				E_1\cap E_2=\emptyset\\
				E_1\cup E_2\in G_p(V,\pi)}} \sum_{\substack{F\subset \mc{E}^\inter(\pi):\\
				E_1\cup E_2\cup F\in \mathsf{E}^{\pi} }} \\ \dE_{\mu_{\beta, \lambda, \ve_0}^{\otimes |V|}}\left[ \prod_{ij\in E_1}a_{ij}^{v}\prod_{ij\in E_2}b_{ij}^{v}\prod_{ij\in F}(-\indic_{\mc{B}_{ij}}) \prod_{S\in \pi}\indic_{\mc{B}_S}\prod_{ij\in \mc{E}^\intra(\pi) }e^{-\beta v_{ij}}\indic_{\mc{A}_{ij}}\right]
	\end{multline*}
	Moreover, by the parity argument of Lemma \ref{lemma:cancellation odd}, 
	\begin{multline*}
		\Ksf_{\beta,p,\ve_0}^\aux(V)=\sum_{\pi\in \mathbf{\Pi}_{p}(V)}\sum_{E_1\in \Eul^\pi} \sum_{\substack{E_2\subset \mc{E}^\inter(\pi):\\
				E_1\cap E_2=\emptyset\\
				E_1\cup E_2\in G_p(V,\pi)}}
		\sum_{\substack{F\subset \mc{E}^\inter(\pi):\\
				E_1\cup E_2\cup F\in \mathsf{E}^{\pi} }} \\ \dE_{\mu_{\beta, \lambda, \ve_0}^{\otimes |V|}}\left[ \prod_{ij\in E_1}a_{ij}^{v}\prod_{ij\in E_2}b_{ij}^{v}\prod_{ij\in F}(-\indic_{\mc{B}_{ij}}) \prod_{S\in \pi}\indic_{\mc{B}_S}\prod_{ij\in \mc{E}^\intra(\pi) }e^{-\beta v_{ij}}\indic_{\mc{A}_{ij}}\right]
	\end{multline*}
	We now resum according to the connected components $X_1,\ldots,X_n$ of $(V,E_1\cup E_2)$ relative to $\pi$. This yields
	\begin{multline*}
		\Ksf_{\beta,p,\ve_0}^\aux(V)=\sum_{\pi\in \mathbf{\Pi}_{p}(V)}\sum_{n=0}^\infty \frac{1}{n!}\sum_{\substack{ X_1,\ldots,X_n\subset \pi\\ \mathrm{disjoint} }}\Biggl(\prod_{l=1}^n \sum_{E_{1,l}\in \Eul^{X_l}}\sum_{\substack{E_{2,l}\subset \mc{E}^\inter(X_l):\\
				E_{1,l}\cap E_{2,l}=\emptyset\\
				E_{1,l}\cup E_{2,l}\in G_p(V,X_l)\cap \mathsf{E}^{X_l} }} \Biggr) \sum_{F\in \mathsf{E}^{\Coarse_\pi(X_1,\ldots,X_n)} } \\ \dE_{\mu_{\beta, \lambda, \ve_0}^{\otimes |V|}}\left[ \prod_{ij\in \cup_l E_{1,l}}a_{ij}^{v}\prod_{ij\in \cup_l E_{2,l}}b_{ij}^{v}\prod_{ij\in F}(-\indic_{\mc{B}_{ij}}) \prod_{S\in \pi}\indic_{\mc{B}_S}\prod_{ij\in \mc{E}^\intra(\pi) }e^{-\beta v_{ij}}\indic_{\mc{A}_{ij}}\prod_{l=1}^n \prod_{ij\in \mc{E}^\intra(X_l)}\indic_{\mc{B}_{ij}^c} \right].
	\end{multline*}

	\paragraph{\bf{Step 2: Penrose resummation}}
	Fix $\pi\in \mathbf{\Pi}_p(V)$ and $X_1,\ldots,X_n\subset \pi$. As in  \eqref{eq:486}, we have 
	\begin{equation}\label{eq:pen}
		\left|\sum_{F\in \mathsf{E}^{\Coarse_\pi(X_1,\ldots,X_n)} }\prod_{ij\in F} (-\indic_{\mc{B}_{ij}})\right|\leq \sum_{T\in \mathsf{T}^{\Coarse_\pi(X_1,\ldots,X_n)} }\prod_{ij\in T}\indic_{\mc{B}_{ij}}.
	\end{equation}
	Recall that $|a_{ij}^v|\indic_{\mc{B}_{ij}^c}\leq Ca_{ij}^\abs$ and  $\indic_{\mc{B}_{ij}}+|b_{ij}^v|\indic_{\mc{B}_{ij}^c}\leq Cb_{ij}^\abs$, where $a_{ij}^\abs$ and $b_{ij}^\abs$ are as in Definition~\ref{def:awbw}. Since $E_{1,l}\cup E_{2,l}\in G_p(V_{X_l},X_l)$, we have $|V_{X_l}|\leq p$. Therefore, using this, \eqref{eq:pen} and resumming according to the connected components relative to $\pi$ of $(V,\cup_l E_{1,l})$, we get that there exists $C>0$ depending on $\beta,M$ and $p$ such that 
	\begin{multline}\label{eq:s2p}
		|\Ksf_{\beta,p,\ve_0}^\aux(V)|\leq C^{|V|}\sum_{\pi\in \mathbf{\Pi}_{p}(V)}\sum_{n=0}^\infty \frac{1}{n!}\sum_{\substack{ X_1,\ldots,X_n\subset \pi\\ \mathrm{disjoint}\\ \forall i, |V_{X_i}|\leq p }}\left(\prod_{l=1}^n\sum_{E_l\in \Eulc^{X_l}}\right)\sum_{T^b\in \mathsf{T}^{\Coarse_\pi(X_1,\ldots,X_n)} } \\
		\dE_{\mu_{\beta, \lambda, \ve_0}^{\otimes |V|}}\left[ \prod_{ij\in \cup_l E_{l}}a_{ij}^{\abs}\prod_{ij\in T^b}b_{ij}^\abs \prod_{S\in \pi}\indic_{\mc{B}_S} \right].
	\end{multline}
	
	\paragraph{\bf{Step 3: peeling of the graph with odd weights}}
	
	Since $|V_{X_l}|\leq p$, recalling $\Peeled_{X_l}$ from Definition~\ref{def:peeling into minimal}, we obtain that there exists $C>0$ depending on $\beta,p$ and $M$ such that 
	\begin{equation*}
		\dE_{\mu_{\beta, \lambda, \ve_0}^{\otimes |V|}}\left[ \prod_{ij\in \cup_l E_{l}}a_{ij}^{\abs}\prod_{ij\in T^b}b_{ij}^\abs \prod_{S\in \pi}\indic_{\mc{B}_S} \right]\leq C^{|V|} \dE_{\mu_{\beta, \lambda, \ve_0}^{\otimes |V|}}\left[ \prod_{ij\in \cup_l \Peeled_{X_l}(E_l)}a_{ij}^{\abs}\prod_{ij\in T^b}b_{ij}^\abs \prod_{S\in \pi}\indic_{\mc{B}_S} \right].
	\end{equation*}
	Thus, by Corollary \ref{coro:prod a} and \eqref{eq:hate F}, there exists $C>0$ depending on $\beta,p$ and $M$ such that 
	\begin{multline*}
		\dE_{\mu_{\beta, \lambda, \ve_0}^{\otimes |V|}}\left[ \prod_{ij\in \cup_l E_{l}}a_{ij}^{\abs}\prod_{ij\in T^b}b_{ij}^\abs \prod_{S\in \pi}\indic_{\mc{B}_S} \right]\\ \leq C^{|V|} \dE_{\mu_{\beta, \lambda, \ve_0}^{\otimes |V|}}\left[ \prod_{l=1}^n\prod_{\substack{S \in X_l\\S \neq \hat{S}_l}}  r_{S}^2 \prod_{ij\in \cup_l \mc{T}^{X_l}(\cdot,E_l)}\frac{1}{d_{ij}^2}\indic_{d_{ij}\leq 16\ve_0\Cut}\indic_{\mc{B}_{ij}^c} \prod_{ij\in T^b}b_{ij}^\abs \prod_{S\in \pi}\indic_{\mc{B}_S}\min\left(\frac{\max_{i\in V} r_i}{\max_{e\in\cup_l\mc{T}^{X_l}(\cdot,E_l)}d_e},1\right)^2 \right],
	\end{multline*}
	where for every $l\in \{1,\ldots,n\}$, $\hat{S}_l$ stands for the $S\in X_l$ such that $r_S$ is maximal. 
	
	Therefore, inserting the estimate \eqref{eq:bound J1} of Lemma \ref{lemma:integral small -} and the normalization for $\mu_{\beta,\lambda, \ve_0}$ \eqref{def:Clambda}, we get that there exists $C>0$ depending on $\beta,M$ and $p$ such that
	\begin{equation}\label{eq:bgamma}
		\dE_{\mu_{\beta, \lambda, \ve_0}^{\otimes |V|}}\left[ \prod_{ij\in \cup_l E_{l}}a_{ij}^{\abs}\prod_{ij\in T^b}b_{ij}^\abs \prod_{S\in \pi}\indic_{\mc{B}_S} \right]\leq C^{|V|} N^{1-|V|}\gamma_{\beta,\lambda,|V|}. 
	\end{equation}
	
	\paragraph{\bf{Step 4: conclusion}}
	Inserting \eqref{eq:bgamma} into \eqref{eq:s2p}, using Cayley's formula, and proceeding as in the proof of Lemma \ref{lem:cvgentseries} (see Step 2) shows that there exists $C>0$ depending on $\beta,M$ and $p$ such that 
	\begin{equation*}
		|\Ksf_{\beta,p,\ve_0}^\aux(V)|\leq C^{|V|}|V|^{|V|}N^{1-|V|}\gamma_{\beta,\lambda,|V|}.
	\end{equation*}
	Proceeding as in the proof of Proposition \ref{prop:absolute lower}, we obtain that for $\ve_0$ small enough with respect to $\beta$, $p(\beta)$ and $M$, and $\lambda$ small enough, there exists $C>0$ depending only on $\beta, M, p(\beta)$ and $\ve_0$ such that
	\begin{equation*}
		\sum_{n=1}^\infty \frac{1}{n!}\sum_{\substack{V_1,\ldots,V_n\subset [N]\\  \mathrm{connected}\\
				\exists i:|V_i|>p }}|\Ksf_{\beta,p,\ve_0}^\aux(V_1)\cdots \Ksf_{\beta,p,\ve_0}^\aux(V_n)\mathrm{I}(G(V_1,\ldots,V_n))|\leq CN \delta_{\beta,\lambda}.
	\end{equation*}
	This concludes the proof of the proposition.
\end{proof}

\subsection{M\"obius inversion and equality of the dipole activities}

We argue as in Section \ref{sub:simp mult} in order to equate the auxiliary dipole activity of Definition \ref{def:aux activity} to the simple dipole activity of Definition~\ref{def:dipole activity trunc} (for clusters of size smaller than $p(\beta))$.

We begin by stating a direct consequence of M\"obius inversion (Lemma \ref{lemma:mobius}) on the uniqueness of cluster expansion.

\begin{lemma}\label{lemma:uniqueness}
	Let $E$ be a finite set. Let $f:\mc{P}(E)\to \dR$. Suppose that there exists  $K:\mc{P}(E)\to \dR$ satisfying $K(S)=1$ for every $S\subset E$ with $|S|=1$, and such that for every $S\subset E$, 
	\begin{equation*}
		f(S)=\sum_{\pi\in \mathbf{\Pi}(S)}\prod_{S'\in \pi}K(S').
	\end{equation*}
	Then $K$ is uniquely determined by $f$.
\end{lemma}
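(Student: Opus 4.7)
The plan is to proceed by strong induction on $|S|$, peeling off the ``finest'' partition $\pi=\{S\}$ in the expansion to express $K(S)$ in terms of $f(S)$ and the values of $K$ on strictly smaller subsets.

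More precisely, suppose that $K_1,K_2:\mc{P}(E)\to\dR$ both satisfy the hypothesis, i.e.\ $K_i(\{j\})=1$ for every singleton $\{j\}\subset E$ and
\[
f(S)=\sum_{\pi\in\mathbf{\Pi}(S)}\prod_{S'\in\pi}K_i(S')\qquad\text{for all }S\subset E,\ i=1,2.
\]
I would show that $K_1(S)=K_2(S)$ for every $S\subset E$ by induction on $|S|$. The base case $|S|=1$ is immediate from the normalization. For $|S|\ge 2$, I would isolate the unique partition with a single block, namely $\pi=\{S\}$, and write
\[
f(S)=K_i(S)+\sum_{\substack{\pi\in\mathbf{\Pi}(S)\\ \pi\neq\{S\}}}\prod_{S'\in\pi}K_i(S').
\]
For every $\pi\neq\{S\}$ and every block $S'\in\pi$, one has $|S'|<|S|$, so by the inductive hypothesis $K_1(S')=K_2(S')$. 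Subtracting the two identities then yields $K_1(S)=K_2(S)$.

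Since this argument does not actually invoke the partition‑lattice M\"obius function in the form of Lemma \ref{lemma:mobius}, the proof is purely combinatorial and requires no further machinery; alternatively, one could reformulate the hypothesis as $f(S)=\sum_{\pi\le\{S\}}g(\pi)$ with $g(\pi)\coloneqq\prod_{S'\in\pi}K(S')$ and apply the inversion formula on $\mathbf{\Pi}(S)$ directly, which gives an explicit expression for $g(\{S\})=K(S)$ and hence another route to uniqueness. The only subtlety is the bookkeeping of the singleton normalization, which is needed to anchor the induction; without it, one could freely rescale each singleton activity and absorb the change into $K$ on larger sets. There is no essential obstacle, and I expect the full proof to be a few lines.
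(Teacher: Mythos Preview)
Your inductive argument is correct and complete: isolating the one-block partition $\{S\}$ and noting that every other partition of $S$ has all blocks of strictly smaller size gives uniqueness immediately once the singleton normalization anchors the base case.

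The paper takes the slightly heavier route you mention as an alternative: it sets $A(\pi)=\prod_{S'\in\pi}K(S')$ and $B(\pi)=\prod_{S'\in\pi}f(S')$, observes that $B(\pi)=\sum_{\sigma\le\pi}A(\sigma)$ on the partition lattice of $E$, and then invokes Lemma~\ref{lemma:mobius} to invert and obtain $A(\pi)=\sum_{\sigma\le\pi}\mu(\sigma,\pi)B(\sigma)$; specializing to the partition with one block $S$ and the rest singletons yields $K(S)$ explicitly in terms of $f$. The payoff of the paper's approach is an explicit closed formula for $K(S)$ via the M\"obius function of the partition lattice, whereas your induction only establishes uniqueness without writing $K$ down. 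For the purposes of this lemma (and its application in Corollary~\ref{coro:mobius}, where only equality of two activities is needed), uniqueness is all that is required, so your shorter argument is entirely adequate and arguably preferable.
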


\medskip
\begin{proof}
	For every $\pi\in \mathbf{\Pi}_\sub(E)$, introduce
	\begin{equation*}
		A(\pi)=\prod_{S'\in \pi}K(S')\quad \text{and}\quad  B(\pi)=\prod_{S\in \pi }f(S).
	\end{equation*}
	By assumption, for every $S\subset E$,
	\begin{equation*}
		f(S)=\sum_{\pi\in \mathbf{\Pi}(S)} A(\pi).
	\end{equation*}
	Hence, for every $\pi\in \mathbf{\Pi}(E)$,
	\begin{equation*}
		B(\pi)=\sum_{\sigma \leq \pi}A(\sigma).
	\end{equation*}
	Hence, by M\"obius inversion (Lemma \ref{lemma:mobius}), for every $\pi\in \mathbf{\Pi}(E)$, 
	\begin{equation*}
		A(\pi)=\sum_{\sigma\leq \pi}\mu(\sigma,\pi)B(\sigma),
	\end{equation*}
	where $\mu$ is the M\"obius lattice partition \eqref{eq:Mobiuslattice}. 
	
	Let $S\subset E$. Taking $\pi$ to be the partition of $E$ with one block $S$ and blocks into singletons $\{i\}$ for $i\in E\setminus S$ and using that $K(S')=1$ if $|S'|=1$, we obtain 
	\begin{equation*}
		A(\pi)=K(S)=\sum_{\sigma \leq \pi}\mu(\sigma,\pi)B(\sigma).
	\end{equation*}
	Thus, $K$ is uniquely determined by $f$.
\end{proof}

\begin{coro}[Equality of the dipole activities]\label{coro:mobius}
	Let $\ve_0\in (0,1)$ and recall $p(\beta)$ from Definition \ref{def:pbeta}. Let $\Ksf_{\beta,p(\beta),\ve_0}^\aux$ be as in \eqref{def:Kaux} and $\Ksf_{\beta,\lambda,\ve_0}^\dip$ be as in Definition \ref{def:dipole activity trunc}. Then for every $V\subset [N]$ such that $|V|\leq p(\beta)$,  we have
	\begin{equation*}
		\Ksf_{\beta,p(\beta),\ve_0}^\aux(V)=\Ksf_{\beta,\lambda,\ve_0}^\dip(V).
	\end{equation*}
\end{coro}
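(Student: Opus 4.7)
The plan is to identify both activities as arising from two distinct cluster expansions of the \emph{same} underlying set function, and then invoke the uniqueness of the cluster expansion provided by Lemma~\ref{lemma:uniqueness}. The strategy mirrors the proof of Corollary~\ref{coro:equality mult}.

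First, I would define, for every $V \subset [N]$, the auxiliary set function
\begin{equation*}
f(V) \coloneqq \dE_{\mu_{\beta,\lambda,\ve_0}^{\otimes |V|}}\!\left[\,\prod_{i,j\in V:\,i<j} e^{-\beta v_{ij}}\,\indic_{\mc{A}_{ij}}\right],
\end{equation*}
with the convention $f(\emptyset)=1$ and $f(\{i\})=1$. The first expansion is the standard Mayer expansion: write $e^{-\beta v_{ij}}\indic_{\mc{A}_{ij}}=1+f_{ij}^v$, expand the product over pairs in $V$, and resum according to connected components of the resulting graph on $V$. This yields
\begin{equation*}
f(V) = \sum_{\pi \in \mathbf{\Pi}(V)} \prod_{S\in \pi} \tilde{K}_1(S),
\quad\text{where}\quad
\tilde{K}_1(S)=\begin{cases}\Ksf_{\beta,\lambda,\ve_0}^\dip(S) & \text{if } |S|\geq 2,\\ 1 & \text{if } |S|\leq 1.\end{cases}
\end{equation*}

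The second expansion is obtained by mimicking the proof of Lemma~\ref{lemma:auxiliary}, but on the index set $V$ in place of $[N]$ and without restricting the allowed cluster cardinalities. Concretely, I would first split $f(V)$ according to the partition $\pi$ of $V$ into multipole blocks (expanding the indicator $\prod_{ij\in \mc{E}^\inter(\pi)}(\indic_{\mc{B}_{ij}^c}+\indic_{\mc{B}_{ij}})$ and identifying $\pi$ as the multipole partition), then expand the inter-multipole interactions into $1+f_{ij}^v$. Resumming according to the connected components (relative to $\pi$) of the resulting augmented graph, exactly as in the proof of \eqref{eq:switch1}--\eqref{eq:switch2}, one obtains
\begin{equation*}
f(V) = \sum_{\pi \in \mathbf{\Pi}(V)} \prod_{S\in \pi} \tilde{K}_2(S),
\quad\text{where}\quad
\tilde{K}_2(S)=\begin{cases}\Ksf_{\beta,\infty,\ve_0}^\aux(S) & \text{if } |S|\geq 2,\\ 1 & \text{if } |S|\leq 1.\end{cases}
\end{equation*}
Here the cutoff parameter is $\infty$ because no size restriction is imposed in this derivation; alternatively, if $|V|\leq p(\beta)$ one may observe that for $S\subset V$ both the constraints $\pi\in\mathbf{\Pi}_{p(\beta)}(S)$ and $E\in G_{p(\beta)}(S,\pi)$ in Definition~\ref{def:aux activity} are automatic, so that $\Ksf_{\beta,p(\beta),\ve_0}^\aux(S)=\Ksf_{\beta,\infty,\ve_0}^\aux(S)$ for every $S\subset V$.

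Given these two expressions, both displaying $f$ as a sum over partitions of a product of a set function assigning value $1$ to singletons, Lemma~\ref{lemma:uniqueness} applied with $E=V$ forces $\tilde{K}_1\equiv \tilde{K}_2$ on subsets of $V$. Taking $S=V$ and using the observation above to identify $\Ksf_{\beta,\infty,\ve_0}^\aux(V)$ with $\Ksf_{\beta,p(\beta),\ve_0}^\aux(V)$ when $|V|\leq p(\beta)$ gives the claim. The one step requiring care is the second expansion: one must verify that the sequence of resummations (first over the multipole partition $\pi$, then over the Mayer edges, then over the connected components relative to $\pi$ of $(V,E\cup F\cup \mc{E}^\intra(\pi))$) produces exactly the combinatorial expression defining $\Ksf_{\beta,\infty,\ve_0}^\aux$ in \eqref{def:Kaux}. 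This is essentially a repackaging of the computation in the proof of Lemma~\ref{lemma:auxiliary} and is the only point where a careful bookkeeping of the weights $\indic_{\mc{B}_S}$, $\indic_{\mc{B}_{ij}^c}$ and $f_{ij}^v$ is needed.
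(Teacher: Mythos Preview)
Your proposal is correct and follows essentially the same route as the paper. The paper defines the common function $f$ as the normalized integral $\frac{1}{(NC_{\beta,\lambda,\ve_0})^n}\int e^{-\beta\F_\lambda(\vec X_n,\vec Y_n)}\indic_{\sigma_n=\Id}\prod_i\indic_{|x_i-y_i|\le\ve_0\Cut}$ (which coincides with your expectation $f(V)$), obtains the dipole expansion directly and the auxiliary expansion by citing \eqref{eq:switch2}, and then applies Lemma~\ref{lemma:uniqueness} together with the observation $\Ksf^\aux_{\beta,p(\beta),\ve_0}(V)=\Ksf^\aux_{\beta,\infty,\ve_0}(V)$ for $|V|\le p(\beta)$---exactly as you outline.
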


\begin{proof}
	Fix $n\geq 0$. By Lemma \ref{lemma:Aij}, 
	\begin{equation*}
		\{\sigma_n[\vec{X}_n,\vec{Y}_n]=\Id\}=\bigcap_{i,j\in [n]:i\neq j}\mc{A}_{ij}.
	\end{equation*}
	Therefore, 
	\begin{multline*}
		\int e^{-\beta \F_\lambda(\vec{X}_n,\vec{Y}_n)}\indic_{\sigma_n=\Id}\prod_{i=1}^n \indic_{|x_i-y_i|\leq \ve_0\Cut}\dd \vec{X}_n \dd \vec{Y}_n\\=\int \prod_{i,j\in [n]:i<j}e^{-\beta v_{ij}}\indic_{\mc{A}_{ij}}\prod_{i=1}^n e^{\beta \g_\lambda(x_i-y_i)}\indic_{|x_i-y_i|\leq \ve_0 \Cut}\dd \vec{X}_n\dd \vec{Y}_n.  
	\end{multline*}
	Expanding each weight into $e^{-\beta v_{ij}}\indic_{\mc{A}_{ij}}=1+f_{ij}^v$ yields 
	\begin{equation*}
		\int e^{-\beta \F_\lambda(\vec{X}_n,\vec{Y}_n)}\indic_{\sigma_n=\Id}\prod_{i=1}^n \indic_{|x_i-y_i|\leq \ve_0\Cut}\dd \vec{X}_n \dd \vec{Y}_n=\sum_{E}\int \prod_{ij\in E}f^v_{ij} \prod_{i=1}^n e^{\beta \g_\lambda(x_i-y_i)}\indic_{|x_i-y_i|\leq \ve_0 \Cut}\dd \vec{X}_n\dd \vec{Y}_n,
	\end{equation*}
	where the sum is over sets of edges $E$ in the complete graph on $[n]$. Normalizing by $(NC_{\beta,\lambda,\ve_0})^n$ and resumming according to the connected components of $([n],E)$ yields
	\begin{multline}\label{eq:larger2}
		\frac{1}{(NC_{\beta,\lambda,\ve_0})^n} \int e^{-\beta \F_\lambda(\vec{X}_n,\vec{Y}_n)}\indic_{\sigma_n=\Id}\prod_{i=1}^n \indic_{|x_i-y_i|\leq \ve_0\Cut}\dd \vec{X}_n \dd \vec{Y}_n \\=\sum_{k=0}^\infty\frac{1}{k!} \sum_{\substack{V_1,\ldots,V_k\subset [n] \\ \mathrm{disjoint} }}\Ksf_{\beta,\lambda,\ve_0}^\dip(V_1)\cdots \Ksf_{\beta,\lambda,\ve_0}^\dip(V_k).
	\end{multline}
	Now set 
	\begin{equation*}
		\tilde{\Ksf}_{\beta,\lambda,\ve_0}^\dip(V)=\begin{cases}
			\Ksf_{\beta,\lambda,\ve_0}^\dip(V) & \text{if $|V|\geq 2$}\\
			1 & \text{if $|V|=1$}.
		\end{cases}
	\end{equation*}
	Notice that \eqref{eq:larger2} can be rewritten as
	\begin{equation*}
		\frac{1}{(NC_{\beta,\lambda,\ve_0})^n} \int e^{-\beta \F_\lambda(\vec{X}_n,\vec{Y}_n)}\indic_{\sigma_n=\Id}\prod_{i=1}^n \indic_{|x_i-y_i|\leq \ve_0\Cut}\dd \vec{X}_n \dd \vec{Y}_n =\sum_{\pi\in \mathbf{\Pi}([n])}\prod_{S\in \pi}\tilde{\Ksf}_{\beta,\lambda,\ve_0}^\dip(S).
	\end{equation*}
	Indeed, summing the right-hand side of \eqref{eq:larger2} over partitions is the same as summing over subpartitions obtained by removing components of cardinality $1$.

	On the other hand, by \eqref{eq:switch2}, we have the second expansion
	\begin{multline*}
		\frac{1}{(NC_{\beta,\lambda,\ve_0})^n} \int e^{-\beta \F_\lambda(\vec{X}_n,\vec{Y}_n)}\indic_{\sigma_n=\Id}\prod_{i=1}^n \indic_{|x_i-y_i|\leq \ve_0\Cut}\dd \vec{X}_n \dd \vec{Y}_n \\=\sum_{k=0}^\infty\frac{1}{k!} \sum_{\substack{V_1,\ldots,V_k\subset [n]\\ \mathrm{disjoint} }}\Ksf_{\beta,p(\beta),\ve_0}^\aux(V_1)\cdots \Ksf_{\beta,p(\beta),\ve_0}^\aux(V_k).
	\end{multline*}
	Define 
	\begin{equation*}
		\tilde{\Ksf}_{\beta,p(\beta),\ve_0}^\aux(V)=\begin{cases}
			\Ksf_{\beta,p(\beta),\ve_0}^\aux(V) & \text{if $|V|\geq 2$}\\
			1 & \text{if $|V|=1$}.
		\end{cases}
	\end{equation*}
	Proceeding as above we get 
	\begin{equation*}
		\frac{1}{(NC_{\beta,\lambda,\ve_0})^n} \int e^{-\beta \F_\lambda(\vec{X}_n,\vec{Y}_n)}\indic_{\sigma_n=\Id}\prod_{i=1}^n \indic_{|x_i-y_i|\leq \ve_0\Cut}\dd \vec{X}_n \dd \vec{Y}_n =\sum_{\pi\in \mathbf{\Pi}([n])}\prod_{S\in \pi} \tilde{\Ksf}_{\beta,p(\beta),\ve_0}^\aux(S).
	\end{equation*}
	Thus, by Lemma \ref{lemma:uniqueness},
	\begin{equation*}
		\tilde{\Ksf}_{\beta,\infty,\ve_0}^\aux=\tilde{\Ksf}^\dip_{\beta,\lambda,\ve_0},
	\end{equation*}
	which gives 
	\begin{equation*}
		{\Ksf}_{\beta,\infty,\ve_0}^\aux(V)={\Ksf}^\dip_{\beta,\lambda,\ve_0}(V) \quad \text{for every $V\subset [N]$ such that $|V|\geq 2$}.
	\end{equation*}
	Since ${\Ksf}_{\beta,\infty,\ve_0}^\aux(V)={\Ksf}^\dip_{\beta,\lambda,\ve_0}(V)=0$ for $|V|\in \{0,1\}$, we deduce that 
	\begin{equation}\label{eq:fff}
		{\Ksf}_{\beta,\infty,\ve_0}^\aux(V)\equiv{\Ksf}^\dip_{\beta,\lambda,\ve_0}(V) .  
	\end{equation}
	On the other hand, for every $V\subset [N]$ such that $|V|\leq p(\beta)$, we have 
	\begin{equation}\label{eq:K equality}
		\Ksf_{\beta,p(\beta),\ve_0}^\aux(V)=\Ksf_{\beta,\infty,\ve_0}^\aux(V).
	\end{equation}
	Combining \eqref{eq:fff} and \eqref{eq:K equality} concludes the proof.
\end{proof}

\subsection{Proof of the free energy expansion}

\begin{proof}[Proof of Theorem \ref{theorem:expansion}]
	Denote $p\coloneqq p(\beta)$. First, notice that \eqref{eq:th1 ksf} follows from the estimate \eqref{eq:Kdip bound} in Proposition \ref{prop:bounded lower}. Let us now prove the expansions \eqref{eq:th1 exp1} and \eqref{eq:th1 exp2}.
	
	Combining Proposition \ref{prop:rewrite1}, the expression \eqref{eq:switch1} in Lemma \ref{lemma:auxiliary}, Proposition \ref{prop:rewriteK},  and Corollary~\ref{coro:mobius}, we get
	\begin{multline*}
		\log Z_{N,\beta}^\lambda=\log(N!)+ N\log N+N\log (\lambda^{2-\beta}\mc{Z}_\beta)\\+\sum_{n=1}^\infty \frac{1}{n!}\sum_{\substack{V_1,\ldots,V_n\subset [N] \\ \mathrm{connected}\\ \forall i, |V_i|\leq p }}\Ksf_{\beta,\lambda,\ve_0}^\dip(V_1)\cdots \Ksf_{\beta,\lambda,\ve_0}^\dip(V_n)\mathrm{I}(G(V_1,\ldots,V_n))+O_{\beta,p,\ve_0}\left(N\delta_{\beta,\lambda}\right).
	\end{multline*}
	Since $p\leq p^*(\beta)$, we deduce from Proposition \ref{prop:expansion -} (estimate \eqref{eq:Kdip diff}) that 
	\begin{multline}\label{eq:poi}
		\log Z_{N,\beta}^\lambda=\log N!+ N\log N+N\log (\lambda^{2-\beta}\mc{Z}_\beta)\\+\sum_{n=1}^\infty \frac{1}{n!}\sum_{\substack{V_1,\ldots,V_n\subset [N] \\ \mathrm{connected}\\ \forall i, |V_i|\leq p }}\Ksf_{\beta,\lambda}^\dip(V_1)\cdots \Ksf_{\beta,\lambda}^\dip(V_n)\mathrm{I}(G(V_1,\ldots,V_n))+O_{\beta,p,\ve_0}\left(N\delta_{\beta,\lambda}\right),
	\end{multline}
	where we recall that $\Ksf_{\beta,\lambda}^\dip=\Ksf_{\beta,\lambda,\infty}^\dip$ is as in Definition \ref{def:dipole activity}.

	Let us expand the right-hand side of \eqref{eq:poi}. Let $n\geq 1$. First, we show that only hypertrees contribute to the order $N$ term. Recall $\Htrees_n$ from \eqref{def:Htrees}. We first sum according to the cardinality of the union of $V_1,\ldots,V_n$:
	\begin{multline}\label{eq:sumk}
		\sum_{\substack{V_1,\ldots,V_n \subset [N]\\ \mathrm{connected}, \forall i,|V_i|\leq p }}\Ksf_{\beta,\lambda}^\dip(V_1)\cdots \Ksf_{\beta,\lambda}^\dip(V_n)\mathrm{I}(G(V_1,\ldots,V_n))\\=\sum_{k=0}^\infty \binom{N}{k}\sum_{\substack{V_1,\ldots,V_n \subset [N]\\ \mathrm{connected}, \forall i,|V_i|\leq p\\
				V_1\cup \cdots \cup V_n=\{1,\ldots,k\}}}\Ksf_{\beta,\lambda}^\dip(V_1)\cdots \Ksf_{\beta,\lambda}^\dip(V_n)\mathrm{I}(G(V_1,\ldots,V_n)).
	\end{multline}
	By Lemma \ref{lemma:limiting}, we have 
	\begin{equation*}
		\Ksf_{\beta,\lambda}^\dip(V_1)\cdots \Ksf_{\beta,\lambda}^\dip(V_n)=\prod_{i=1}^n N^{1-|V_i|}\left(\ksf_{\beta,\lambda}^\dip(|V_i|)+o_N(1)\right).  
	\end{equation*}
	Using
	\begin{equation*}
		\binom{N}{k}\leq C^k\frac{N^k}{k!},
	\end{equation*}
	we see that the only terms contributing at the limit (after dividing by $N$) are the ones such that 
	\begin{equation*}
		n-\sum_{i=1}^n |V_i| + |V_1\cup\cdots\cup V_n|=1.
	\end{equation*}
	By induction, one can easily check that this is equivalent to
	\begin{equation*}
		(V_1,\ldots,V_n)\in \Htrees_n([k]).
	\end{equation*}
	Thus,
	\begin{multline*}
		\sum_{\substack{V_1,\ldots,V_n \subset [N]\\ \mathrm{connected}, \forall i,|V_i|\leq p }}\Ksf_{\beta,\lambda}^\dip(V_1)\cdots \Ksf_{\beta,\lambda}^\dip(V_n)\mathrm{I}(G(V_1,\ldots,V_n))\\=N\sum_{k=0}^\infty \frac{1}{k!}\sum_{\substack{(V_1,\ldots,V_n)\in \Htrees_n([k]):\\ V_1\cup \cdots \cup V_n=[k]}} \Ksf_{\beta,\lambda}^\dip(V_1)\indic_{|V_1|\leq p}\cdots \Ksf_{\beta,\lambda}^\dip(V_n)\indic_{|V_n|\leq p}\mathrm{I}(G(V_1,\ldots,V_n))+o(N),
	\end{multline*}
	with $o(N)$ uniform in $\lambda$. By definition of $\Htrees$, the connection graph $G(V_1,\ldots,V_n)$ is a tree, hence it is well known that $\mathrm{I}(G(V_1,\ldots,V_n))=(-1)^{n-1}$. Inserting this into the last display, we obtain
	\begin{multline}\label{eq:factKI}
		\sum_{\substack{V_1,\ldots,V_n \subset [N]\\ \mathrm{connected}, \forall i,|V_i|\leq p }}\Ksf_{\beta,\lambda}^\dip(V_1)\cdots \Ksf_{\beta,\lambda}^\dip(V_n)\mathrm{I}(G(V_1,\ldots,V_n))\\=N(-1)^{n-1}\sum_{k=1}^\infty\frac{1}{k!}  \sum_{\substack{(V_1,\ldots,V_n)\in \Htrees_n([k]):\\ V_1\cup \cdots \cup V_n=[k]}}\ksf_{\beta,\lambda}^\dip(|V_1|)\indic_{|V_1|\leq p}\cdots \ksf_{\beta,\lambda}^\dip(|V_n|)\indic_{|V_n|\leq p}+o(N)\\=-N \mathrm{Mayer}_{\beta,p,\lambda}+o(N),
	\end{multline}
	with $o(N)$ uniform in $\lambda$. Inserting this into \eqref{eq:poi} gives
	\begin{equation*}
		\log Z_{N,\beta}^\lambda=\log N!+ N\log N+N\log (\lambda^{2-\beta}\mc{Z}_\beta)-N\mathrm{Mayer}_{\beta,p,\lambda}+O_{\beta,p(\beta)}(N\delta_{\beta,\lambda}).
	\end{equation*}
	Together with \eqref{eq:factKI}, this proves \eqref{eq:th1 exp1} and \eqref{eq:th1 exp2}.
\end{proof}

\appendix

\section{Auxiliary estimates}

\subsection{Control of the multipole partition function}

We start with a couple of preliminary results.

\begin{lemma}\label{lemma:ordering tree}
	Let $T=(V,E)$ be a tree on $V\coloneqq \{1,\ldots,k\}$. Let $f:\dR\to\dR_+$ be an increasing function. Let $r_1,\ldots,r_k\geq 0$ be a family of numbers. Consider an ordering of the $r_i$'s such that $r_{\phi(1)}\geq \cdots \geq r_{\phi(k)}$. We have
	\begin{equation*}
		\prod_{ij\in E}f(\min(r_i,r_j))\leq f(r_{\phi(2)})\cdots f(r_{\phi(k)}).
	\end{equation*}
\end{lemma}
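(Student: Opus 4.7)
The plan is to exploit the tree structure by rooting at the vertex with the largest radius and then using the parent–child bijection between edges and non-root vertices to distribute the weights.

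First, I would let $v^{*} \coloneqq \phi(1)$ be the vertex with the largest $r$-value, and root the tree $T$ at $v^{*}$. Since $T$ is a tree on $k$ vertices, it has exactly $k-1$ edges, and rooting induces a bijection between $E$ and $V \setminus \{v^{*}\}$ which sends each non-root vertex $v$ to its (unique) parent edge $e_v = \{v, \mathrm{parent}(v)\}$. This is the key combinatorial device that matches the $k-1$ edges on the left-hand side of the inequality with the $k-1$ factors $f(r_{\phi(2)}), \ldots, f(r_{\phi(k)})$ on the right-hand side.

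Next, I would observe the trivial bound
\begin{equation*}
\min(r_v, r_{\mathrm{parent}(v)}) \;\le\; r_v
\end{equation*}
for every non-root vertex $v$. Since $f$ is increasing and positive, this yields $f(\min(r_v, r_{\mathrm{parent}(v)})) \le f(r_v)$. Taking the product over $v \in V \setminus \{v^{*}\}$ and invoking the bijection gives
\begin{equation*}
\prod_{ij \in E} f(\min(r_i, r_j)) \;=\; \prod_{v \neq v^{*}} f\bigl(\min(r_v, r_{\mathrm{parent}(v)})\bigr) \;\le\; \prod_{v \neq v^{*}} f(r_v) \;=\; \prod_{i=2}^{k} f(r_{\phi(i)}),
\end{equation*}
where the last equality uses that $\{v : v \neq v^{*}\} = \{\phi(2), \ldots, \phi(k)\}$.

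There is no real obstacle here: the only subtlety is the choice of $v^{*}$ as the root, which ensures that the largest value $f(r_{\phi(1)})$ is precisely the factor \emph{dropped} when passing from the $k$ values $\{f(r_i)\}_{i=1}^k$ to the $k-1$ values appearing on the right-hand side.
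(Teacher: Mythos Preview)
Your proof is correct and is in fact more direct than the paper's argument. The paper proceeds by induction on the size of the tree: it removes a leaf $v$ (with neighbor $v'$), applies the induction hypothesis to the subtree $T'=T\setminus\{v\}$, and then handles the deleted edge $vv'$ by a two-case analysis depending on whether $v=\phi(1)$ or not. Your rooting argument bypasses both the induction and the case split: once the tree is rooted at $\phi(1)$, the parent map gives a one-shot bijection between edges and non-root vertices, and the single inequality $\min(r_v,r_{\mathrm{parent}(v)})\le r_v$ does all the work. The paper's induction effectively rediscovers this bijection one leaf at a time; your formulation makes the structure transparent in a single step.
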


\begin{proof}
	The proof is by induction. Assume that the property holds for any tree of size $k$. Let $T=(V,E)$ be a tree of size $k+1$. Let $v$ be a  leaf of $T$ and $v'$ a neighbor of $v$. Let us suppress the edge $vv'$ from $T$, thus defining a connected subtree $T'=(V\setminus\{v\},E\setminus \{vv'\})$ of size $k$. 
	
	Let $i_0$ be the index of the largest $r_i$ for $i\in V\setminus \{v\}$. Applying the induction hypothesis to $T'$ allows one to write
	\begin{equation}\label{eq:t'}
		\prod_{ij\in E(T')}f(\min(r_i,r_j))\leq \prod_{i\in V\setminus\{v,i_0\}}f(r_i).
	\end{equation}
	
	Assume first that $v=\phi(1)$. Since $f$ is increasing,
	\begin{equation*}
		f(\min(r_v,r_{v'}))\leq f(r_{v'})\leq f(r_{i_0}).
	\end{equation*}
	Combining this with \eqref{eq:t'}, we find that the final product contains all terms but $f(r_{\phi(1)})$, which concludes the proof.
	
	Assume next that $v\neq \phi(1)$. Then $i_0=\phi(1)$. One may write
	\begin{equation*}
		f(\min(r_v,r_{v'}))\leq f(r_v).
	\end{equation*}
	Combined with \eqref{eq:t'} this proves the result since all terms but $f(r_{i_0})$ are present in the final product.
\end{proof}

\begin{lemma}
	\label{lemsupplapp}
	Let $k\in \mathbb{N}^*$. For any $\beta>2$, there exists $C>0$ depending only on $\beta$, $k$ and $\ve_0$,  such that 
	\begin{equation}\label{eq:mm4}
		\frac{1}{C}\lambda^{(2-\beta)k}\gamma_{\beta,\lambda,k}\leq	\int_0^\infty \max(r,\lambda)^{(4-\beta)k-3}\indic_{r\leq \ve_0\Cut}\dd r \le C \lambda^{(2-\beta)k}\gamma_{\beta,\lambda,k}
	\end{equation}
	where $\gamma_{\beta,\lambda,k}$ is as in Definition \ref{def:gamma beta}. If moreover, $k\le p^*(\beta)$,  there exists $C>0$ depending only on $\beta$, $k$ and $\ve_0$,  such that 
	\begin{equation}\label{eq:mm5}
		\int_0^\infty \max(r,\lambda)^{(4-\beta)k-3}\indic_{r\ge \ve_0\Cut}\dd r \le C \lambda^{(2-\beta)k}\Cut^{-2} .
	\end{equation}
\end{lemma}

\begin{proof}
	Let us prove the upper bound in \eqref{eq:mm4}.
	
	Suppose first that $k\leq p^*(\beta)$. Then we have $(4-\beta)k-3<-1$. Thus, there exists a constant $C>0$ depending on $\beta$ and $k$ such that
	\begin{equation*}
		\int \max(r,\lambda)^{(4-\beta)k-3}\indic_{r\leq \ve_0\Cut}\dd r\leq C \lambda^{(4-\beta)k-2}. 
	\end{equation*}

	Next, let us turn to the case where $k\geq p^*(\beta)+1$ (which implies $\beta\in (2,4)$). Suppose that $\beta\in (\beta_{p^*(\beta)},\beta_{p^*(\beta)+1})$, or $\beta=\beta_{p^*(\beta)+1}$ and $k>p^*(\beta)+1$. Then, $(4-\beta)k-3>-1$ and therefore, there exists a constant $C>0$ depending on $\beta$ and $k$ such that 
	\begin{equation*}
		\int \max(r,\lambda)^{(4-\beta)k-3}\indic_{r\leq \ve_0\Cut}\dd r\leq C(\ve_0 R_{\beta,\lambda})^{(4-\beta)k-2}=C\ve_0^{(4-\beta)k-2}\lambda^{(2-\beta)k}R_{\beta,\lambda}^{-2},\end{equation*}
	since $R_{\beta,\lambda}^{4-\beta}=\lambda^{2-\beta}$.
	
	Suppose finally that $\beta=\beta_{p^*(\beta)+1}$ and $k=p^*(\beta)+1$ (which implies $\beta\in (2,4))$. Then $(4-\beta)k-3=\frac{2}{p^*(\beta)+1}k-3=-1$. Hence, there exists $C>0$ depending on $\beta$ and $\ve_0$ such that
	\begin{equation*}
		\int \max(r,\lambda)^{(4-\beta)k-3}\indic_{r\leq \ve_0\Cut}\dd r\leq  C|\log\lambda|.
	\end{equation*}
	One can notice that
	\begin{equation*}
		(\beta-2)k=(2-\tfrac{2}{p^*(\beta)+1})(p^*(\beta)+1)=2p^*(\beta), \quad \Cut^{-2}=R_{\beta,\lambda}^{-2}= \lambda^{2p^*(\beta)}
		.
	\end{equation*}
	Thus, we conclude that there exists a constant $C>0$ depending on $\beta$, $k$ and $\ve_0$ such that 
	\begin{equation*}
		\int \max(r,\lambda)^{(4-\beta)k-3}\indic_{r\leq \ve_0\Cut}\dd r\leq C \Cut^{-2}|\log \lambda|\lambda^{(2-\beta)k}.  
	\end{equation*}
	Combining these three cases, we get the upper bound in \eqref{eq:mm4}, and the lower bound is similar.

	The proof of \eqref{eq:mm5} is analogous, distinguishing the cases $\beta \in (2,4) $ and $\beta\ge 4$.
	
\end{proof}

\begin{lemma}\label{lemma:lowerM+}
	Let $\beta\in (2,+\infty)$ and let $\gamma_{\beta,\lambda,k}$ be as in Definition \ref{def:gamma beta}. Let $S\subset [N]$ be such that $1<|S|\leq p^*(\beta)$ and $V_\bad\subset [N]$ be such that $V_\bad\cap S=\emptyset$. Let $Z=(x_i,y_i)_{i\in V_\bad}$. Recall $I_\good^Z$ from \eqref{def:IgoodZ}. Let $\tau^Z=(\tau_i^Z)_{i\in I_\good}$ be as in Definition \ref{def:vijZ}. Recall $\tau_0$ from Definition \ref{def:vijtilde} and set $\tilde{\tau}_i\coloneqq \tau_0$ for every $i\in [N]$. For $(w,\tau)\in \{(\tilde{v},\tilde{\tau}),(v^Z,\tau^Z)\}$, let
	\begin{multline}\label{def:I1wL}
		I_1^{w,\tau,L,\ve_0,Z }(S)\coloneqq \int_{(\Lambda^2)^{|S|} }\prod_{i\in S} \indic_{i\in I_\good^Z} \indic_{\mc{B}_S}\prod_{ i,j\in S:i<j}e^{-\beta w_{ij}}\indic_{\mc{A}_{ij}}\prod_{i\in S }e^{L\frac{|x_i-y_i|^2}{\tau_i^2}} \\ \times\prod_{i\in S} e^{\beta \g_\lambda(x_i-y_i)}\indic_{|x_i-y_i|\leq \ve_0\Cut}\dd x_i \dd y_i.
	\end{multline}
	Then, for $\lambda$ small enough, there exists a constant $C>0$ depending on $\beta$, $M$ and $|S|$ such that
	\begin{equation}\label{eq:int M lower}
		\frac{1}{C} N\lambda^{(2-\beta)|S|}\lambda^{2(|S|-1)}\le I_1^{w,\tau,L,\ve_0,Z}(S) \leq CN\lambda^{(2-\beta)|S|}\lambda^{2(|S|-1)}.
	\end{equation}
\end{lemma}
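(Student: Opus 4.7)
The estimate \eqref{eq:int M lower} reflects the natural scaling of a multipole partition function: a factor of $N$ from the translational freedom of the cluster, a factor $\lambda^{2-\beta}$ per dipole from the Gibbs weight $e^{\beta\g_\lambda(r_i)}$, and a factor $\lambda^2$ per additional dipole from the inter-dipole positioning imposed by $\mc B_S$. The plan is to derive both bounds from the same multipole change of variables: the upper bound via a union bound over spanning trees of $S$ followed by the one-dimensional radial integral estimate of Lemma \ref{lemsupplapp}, and the lower bound from an explicit canonical configuration on which every factor of the integrand attains its maximal order.

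\textbf{Upper bound.} I would first bound the Gaussian-type factor $e^{L|x_i-y_i|^2/\tau_i^2}$ by a constant: when $w=\tilde v$, the definition $\tau_i=8\ve_0\Cut$ together with the cutoff $|x_i-y_i|\le\ve_0\Cut$ gives $\le e^{L/64}$; when $w=v^Z$, the geometric consequences of the event $\{i\in I_\good^Z\}$ (reproving the analogue of \eqref{eq:geo1} of Lemma \ref{lemma:geo} uniformly in $Z$) yield $\tau_i^Z\ge c\max(r_i,\lambda)\ge c\,r_i$. Then $\prod_{i<j}e^{-\beta w_{ij}}\indic_{\mc A_{ij}}$ is bounded by a constant on $\mc A_{ij}\cap\mc B_{ij}$: for $w=v$ this follows from the explicit formula
\[
e^{-\beta v_{ij}}=\frac{|x_i-x_j|^\beta\,|y_i-y_j|^\beta}{|x_i-y_j|^\beta\,|y_i-x_j|^\beta}
\]
together with the fact that on $\mc A_{ij}\cap\mc B_{ij}$ all four distances are comparable up to a factor depending on $M$; the transfer to $w\in\{\tilde v,v^Z\}$ is then done via Lemmas \ref{lemma:vij} and \ref{lemma:errortilde}. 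Once these two factors are absorbed into the constant, I apply $\indic_{\mc B_S}\le\sum_{T\in\mc T_c(S)}\prod_{ij\in T}\indic_{\mc B_{ij}}$ (with $|\mc T_c(S)|=|S|^{|S|-2}$), split the phase space into $4^{|S|-1}$ regions according to which of the four pairs of points realizes $d_{ij}$ for each $ij\in T$, and change variables to $(x_{i_0},(\vec z_{ij})_{ij\in T},(\vec r_i)_{i\in S})$ for a root $i_0\in S$; integration of $x_{i_0}$ over $\Lambda$ produces the factor $N$, and integration of each $\vec z_{ij}$ over the disc prescribed by $\mc B_{ij}$ produces $CM^2\min(\max(r_i,\lambda),\max(r_j,\lambda))^2$.

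\textbf{Radial reduction.} Choosing $i_0$ to be the index of the largest $r_i$ and applying Lemma \ref{lemma:ordering tree} with $f(t)=\max(t,\lambda)^2$ bounds the remaining tree product by $\prod_{i\in S\setminus\{i_0\}}\max(r_i,\lambda)^2$. Passing to polar coordinates in the $\vec r_i$ and integrating out the angles reduces the estimate to
\[
I_1^{w,\tau,L,\ve_0,Z}(S)\;\le\;C^{|S|}\,N\int_0^{\ve_0\Cut}r_{i_0}\max(r_{i_0},\lambda)^{-\beta}\prod_{i\ne i_0}\Bigl(\int_0^{r_{i_0}}r_i\max(r_i,\lambda)^{2-\beta}\dd r_i\Bigr)\dd r_{i_0}.
\]
The inner integral is $\le C\max(r_{i_0},\lambda)^{4-\beta}$ for $\beta<4$ and $\le C\lambda^{4-\beta}$ for $\beta\ge 4$, both dominated by $C\max(r_{i_0},\lambda)^{4-\beta}$, so the outer integral is bounded by $C\int_0^{\ve_0\Cut}\max(r,\lambda)^{(4-\beta)|S|-3}\dd r$. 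Since $|S|\le p^*(\beta)$ by hypothesis, Lemma \ref{lemsupplapp} yields the bound $C\lambda^{(2-\beta)|S|}\gamma_{\beta,\lambda,|S|}=C\lambda^{(2-\beta)|S|+2(|S|-1)}$, which is the upper half of \eqref{eq:int M lower}.

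\textbf{Lower bound and main obstacle.} For the matching lower bound I would restrict integration to an explicit canonical configuration: pick $x_{i_0}$ in $[\sqrt N/4,3\sqrt N/4]^2\setminus\bigcup_{z\in Z}B(z,(M+1)\lambda)$, whose area is at least $N/8$ for $\lambda$ small since $|V_\bad|\le N$ and the forbidden set has area $\le CN\lambda^2$; place each $x_i$ ($i\ne i_0$) in $B(x_{i_0},\lambda/4)\setminus\bigcup_{j<i}B(x_j,\lambda/16)$, each of area $\gtrsim\lambda^2$ once $|S|$ is bounded; and take each $\vec r_i$ with $|\vec r_i|\in[\lambda/4,\lambda/2]$ and orientation chosen from a positive-measure arc that makes every $\mc A_{ij}$ satisfied. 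On this canonical set both $\mc B_S$ and $\{i\in I_\good^Z\}$ hold by construction (the whole multipole lies in $B(x_{i_0},\lambda)$, far from $Z$), every Gaussian and interaction factor is bounded below by a positive constant, and $e^{\beta\g_\lambda(r_i)}\ge c\lambda^{-\beta}$. The total Lebesgue measure is $\gtrsim N\lambda^{4|S|-2}$, so $I_1^{w,\tau,L,\ve_0,Z}(S)\gtrsim N\lambda^{(4-\beta)|S|-2}=N\lambda^{(2-\beta)|S|+2(|S|-1)}$, matching the upper bound. The main technical obstacle in executing this plan is the uniform-in-$Z$ control of the two a priori unbounded factors $e^{L|x_i-y_i|^2/\tau_i^2}$ and $\prod_{i<j}e^{-\beta w_{ij}}\indic_{\mc A_{ij}}$: they both require that the ball-growth radii $\tau_i^Z$ be comparable to $\max(r_i,\lambda)$, which is geometrically natural but requires tracking the definition of $I_\good^Z$ carefully, in particular the way $I_\bad^3$-points of small radius interact with the distance constraints defining the event.
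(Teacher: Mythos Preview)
Your approach is essentially the paper's: the upper bound via spanning-tree decomposition of $\mc B_S$, the change of variables to $(x_{i_0},(z_{ij})_{ij\in T},(\vec r_i)_{i\in S})$, Lemma \ref{lemma:ordering tree}, and the radial reduction to Lemma \ref{lemsupplapp} all match, and the lower bound via restriction to a canonical $\lambda$-scale configuration is the same idea (the paper uses the event $\{r_i\in(\lambda,2\lambda),\ d_{ij}\in(2\lambda,4\lambda)\}$ and handles the $I_\good^Z$ constraint by a subtraction argument---showing the integral over $\{\exists i_0\notin I_\good^Z\}$ is $O(|V_\bad|\lambda^2)$ times the main term---rather than your direct exclusion of bad-point neighborhoods, but both work). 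One correction: your claim that all four pairwise distances are comparable on $\mc A_{ij}\cap\mc B_{ij}$ fails when $r_i\ll r_j$; the uniform bound $e^{-\beta v_{ij}}\le 4^\beta$ on $\mc A_{ij}$ instead follows from the triangle inequalities $|x_i-x_j|\le r_i+|y_i-x_j|\le 2|y_i-x_j|$ and $|y_i-y_j|\le r_i+|x_i-y_j|\le 2|x_i-y_j|$, using only $|y_i-x_j|,|x_i-y_j|\ge r_i$ from Lemma \ref{lemma:Aij}.
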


\medskip

\begin{proof}
	We first prove the upper bound. Note that on the event $\mc{A}_{ij}$, the interaction $w_{ij}$ is bounded from below (though not from above).  Therefore, using also that $\frac{|x_i-y_i|}{\tau_i}\le C$,   there exists a constant $C>0$ depending on $\beta$, $M$ and $|S|$ such that
	\begin{equation*}
		I_1^{w,\tau,L,\ve_0,Z}(S) \leq C  \int_{(\Lambda^2)^{|S|}} \indic_{\mc{B}_S}\prod_{i\in S}e^{\beta \g_\lambda(x_i-y_i)}\indic_{|x_i-y_i|\leq \ve_0\Cut}\dd x_i\dd y_i.
	\end{equation*}
	Recalling the definition of $\mc{B}_S$ from Definition \ref{def:multipoles mc}, we get 
	\begin{equation*}
		I_1^{w,\tau,L,\ve_0,Z}(S)\leq C\sum_{T\in \mc{T}_c(S)} \int_{(\Lambda^2)^{|S|}} \prod_{ij\in T }\indic_{\mc{B}_{ij}}\prod_{i\in S}e^{\beta \g_\lambda(x_i-y_i)}\indic_{|x_i-y_i|\leq \ve_0\Cut}\dd x_i\dd y_i,
	\end{equation*}
	where we recall that $\mc{T}_c(S)$ stands for the set of collections of edges $T$ on $S$ such that $(S,T)$ is a tree. Hence,
	\begin{equation}\label{eq:g1}
		I_1^{w,\tau,L,\ve_0,Z}(S)\leq C\max_{T\in \mc{T}_c(S)} \int_{(\Lambda^2)^{|S|}} \prod_{ij\in T}\indic_{\mc{B}_{ij}}\prod_{i\in S}e^{\beta \g_\lambda(x_i-y_i)}\indic_{|x_i-y_i|\leq \ve_0\Cut}\dd x_i\dd y_i,
	\end{equation}
	where $C$ depends on $|S|$.
	Fix $T\in \mc{T}_c(S)$ on $S$. As in the proof of  Lemma \ref{lemma:type2}, we now split the phase space according to which points attain the distance $\dist(\{x_i,y_i\},\{x_j,y_j\})$ and let
	\begin{equation}\label{eq:D premier}
		\begin{split}
			\mc{D}_{ij}^1&=\{(x_q,y_q)_{q\in S}: d_{ij}= |x_i-x_j|\},\\
			\mc{D}_{ij}^2&=\{(x_q,y_q)_{q\in S}: d_{ij}= |x_i-y_j|\},\\
			\mc{D}_{ij}^3&=\{ (x_q,y_q)_{q\in S}:d_{ij}= |y_i-x_j|\},\\
			\mc{D}_{ij}^4&=\{ (x_q,y_q)_{q\in S}: d_{ij}= |y_i-y_j|\}.
		\end{split}
	\end{equation}
	One can write 
	\begin{multline}\label{eq:g2}
		\int_{(\Lambda^2)^{|S|}} \prod_{ij\in T}\indic_{\mc{B}_{ij}}\prod_{i\in S}e^{\beta \g_\lambda(x_i-y_i)}\indic_{|x_i-y_i|\leq \ve_0\Cut}\dd x_i\dd y_i\\ \leq \sum_{T^1,T^2,T^3,T^4}\int_{(\Lambda^2)^{|S|}} \prod_{ij\in T}\indic_{\mc{B}_{ij}}\left(\prod_{l=1}^4 \prod_{ij\in T^l}\indic_{\mc{D}_{ij}^l}\right) \prod_{i\in S}e^{\beta \g_\lambda(x_i-y_i)}\indic_{|x_i-y_i|\leq \ve_0\Cut}\dd x_i\dd y_i,
	\end{multline}
	where the sum is over trees $T^1,T^2,T^3,T^4$ such that $T^1\sqcup T^2\sqcup T^3\sqcup T^4=T$. Fix some $T^1,T^2,T^3,T^4$ accordingly. We define the variables
	\begin{equation}\label{eq:g4}
		z_{ij}=\begin{cases}
			x_{i}-x_j & \text{if $ij\in T^1$}\\
			x_i-y_j & \text{if $ij\in  T^2 $}\\
			y_i - x_j & \text{if $ij\in T^3$}\\
			y_i-y_j & \text{if $ij\in T^4$}.
		\end{cases}
	\end{equation}
	Fix some arbitrary $i_0\in S$ and perform the following change of variables
	\begin{equation}\label{eq:g5}
		(x_i,y_i)_{i\in S}\mapsto (x_{i_0},(z_{ij})_{ij\in T},(\vec{r}_i)_{i\in S}),
	\end{equation}
	where $\vr_i\coloneqq y_i-x_i$.
	Integrating the variables $(z_{ij})_{ij\in T}$ and $x_{i_0}$ and recalling that $d_{ij}\le M \max(\min(r_i, r_j), \lambda)$ on $\mc{B}_{ij}$,  there exists $C>0$ depending on $M$ and $|S|$ such that
	\begin{multline}\label{eq:g3}
		\int_{(\Lambda^2)^{|S|}} \prod_{ij\in T}\indic_{\mc{B}_{ij}}\left(\prod_{l=1}^4 \prod_{ij\in T^l}\indic_{\mc{D}_{ij}^l}\right) \prod_{i\in S}e^{\beta \g_\lambda(x_i-y_i)}\indic_{|x_i-y_i|\leq \ve_0\Cut}\dd x_i\dd y_i\\
		\leq C N\int \prod_{ij\in T}\max(\min(r_i,r_j), \lambda)^2 \prod_{i\in S}e^{\beta \g_\lambda(r_i)}\indic_{r_i\leq \ve_0\Cut}\dd \vr_i.
	\end{multline}
	Next, we split the phase space according to the index of the largest $r_i$ for $i$ in $S$:
	\begin{multline}\label{st1}
		\int \prod_{ij\in T}\max(\min(r_i,r_j),\lambda)^2 \prod_{i\in S}e^{\beta \g_\lambda(r_i)}\indic_{r_i\leq \ve_0\Cut}\dd \vr_i\\ \leq |S|\max_{j_0\in S}\int \prod_{ij\in T}\max(\min(r_i,r_j),\lambda)^2 \prod_{i\in S}e^{\beta \g_\lambda(r_i)}\indic_{r_i\leq r_{j_0}\leq \ve_0\Cut}\dd \vr_i.
	\end{multline}
	By Lemma \ref{lemma:ordering tree}, 
	\begin{equation*}
		\prod_{ij\in T}\max(\min(r_i,r_j),\lambda)^2\leq \prod_{i\in S:i\neq j_0}\max(r_i,\lambda)^2. 
	\end{equation*}
	Using this and integrating the variables $\vec{r}_i$ in polar coordinates for $i\neq j_0$, we get that there exists $C>0$ depending on $\beta$ and $|S|$ such that
	\begin{equation}\label{st2}
		\int \prod_{ij\in T}\max(\min(r_i,r_j),\lambda)^2 \prod_{i\in S}e^{\beta \g_\lambda(r_i)}\indic_{r_i\leq \ve_0\Cut}\dd \vr_i\leq C \int_0^\infty \max(r,\lambda)^{(4-\beta)(|S|-1)+(1-\beta)}\indic_{r\leq \ve_0\Cut} \dd r. 
	\end{equation}
	Inserting the estimate \eqref{eq:mm4} of Lemma \ref{lemsupplapp}, we get that there exists $C>0$ depending on $\beta$ and $|S|$ such that
	\begin{equation}\label{eq:g6}
		\int \prod_{ij\in T}\max(\min(r_i,r_j),\lambda)^2 \prod_{i\in S}e^{\beta \g_\lambda(r_i)}\indic_{r_i\leq \ve_0\Cut}\dd \vr_i\leq C \lambda^{(2-\beta)|S|}\gamma_{\beta,\lambda,|S|}.
	\end{equation}
	Combining this with \eqref{eq:g1}, \eqref{eq:g2} and \eqref{eq:g3} concludes the proof of the upper bound.

	We now turn to the proof of  the lower bound. Let $T\in \mc{T}_c(S)$. Define the good event 
	\begin{equation*}
		\mc{C}\coloneqq \bigcap_{i\in S}\{r_i\in (\lambda,2\lambda)\}\cap \bigcap_{ij\in T}\{d_{ij}\in (2\lambda,4\lambda)\}.  
	\end{equation*}
	We begin by writing
	\begin{multline*}
		\int_{(\Lambda^2)^{|S|}}\prod_{i\in S} \indic_{i\in I_\good^Z}\indic_{\mc{B}_S} \prod_{i,j\in S:i<j}e^{-\beta w_{ij}}\indic_{\mc{A}_{ij}}\prod_{i\in S}e^{\beta \g_\lambda(x_i-y_i)}\indic_{|x_i-y_i|\leq \ve_0\Cut}\dd x_i \dd y_i\\
		\geq  \int_{(\Lambda^2)^{|S|}}\indic_{\mc{C}} \prod_{i\in S} \indic_{i\in I_\good^Z} \indic_{\mc{B}_S}\prod_{i,j\in S:i<j}e^{-\beta w_{ij}}\indic_{\mc{A}_{ij}}\prod_{i\in S}e^{\beta \g_\lambda(x_i-y_i)}\indic_{|x_i-y_i|\leq \ve_0\Cut}\dd x_i \dd y_i.
	\end{multline*}
	On the event $\mc{C}\cap \mc{B}_S\cap_{i\in S,j\in S}\mc{A}_{ij}$, examining the definition of $w$, by scaling, there exists a constant $C>0$ depending on $M$ and $|S|$ such that
	\begin{equation*}
		\left|\sum_{i,j\in S:i<j}w_{ij}\right|\leq C.
	\end{equation*}
	It follows that there exists $C>0$ depending on $\beta, M$ and $|S|$ such that 
	\begin{multline*}
		\int_{(\Lambda^2)^{|S|}}\prod_{i\in S} \indic_{i\in I_\good^Z}\indic_{\mc{B}_S} \prod_{i,j\in S:i<j}e^{-\beta w_{ij}}\indic_{\mc{A}_{ij}}\prod_{i\in S}e^{\beta \g_\lambda(x_i-y_i)}\indic_{|x_i-y_i|\leq \ve_0\Cut}\dd x_i \dd y_i \\ 
		\geq  \frac{1}{C}\int_{(\Lambda^2)^{|S|}}\indic_{\mc{C}} \prod_{i\in S} \indic_{i\in I_\good^Z} \indic_{\mc{B}_S}\prod_{i,j\in S:i<j}\indic_{\mc{A}_{ij}}\prod_{i\in S}e^{\beta \g_\lambda(x_i-y_i)}\indic_{|x_i-y_i|\leq \ve_0\Cut}\dd x_i \dd y_i.
	\end{multline*}
	Notice that there exists $C>0$ depending on $\beta$ and $|S|$ such that
	\begin{equation}\label{eq:o1}
		\int_{(\Lambda^2)^{|S|}}\indic_{\mc{C}}  \indic_{\mc{B}_S}\prod_{i,j\in S:i<j}\indic_{\mc{A}_{ij}}\prod_{i\in S}e^{\beta \g_\lambda(x_i-y_i)}\indic_{|x_i-y_i|\leq \ve_0\Cut}\dd x_i \dd y_i\geq \frac{N}{C}\lambda^{(4-\beta)|S|-2}.
	\end{equation}
	On the other hand,
	\begin{multline*}
		\int_{(\Lambda^2)^{|S|}}\indic_{\mc{C}}\Bigr(1-\prod_{i\in S}\indic_{i\in I_\good^Z}\Bigr)   \indic_{\mc{B}_S}\prod_{i,j\in S:i<j}\indic_{\mc{A}_{ij}}\prod_{i\in S}e^{\beta \g_\lambda(x_i-y_i)}\indic_{|x_i-y_i|\leq \ve_0\Cut}\dd x_i \dd y_i \\
		\leq \sum_{i_0\in S}\int_{(\Lambda^2)^{|S|}}\indic_{\mc{C}}\indic_{i_0\notin I_\good^Z}\indic_{\mc{B}_S}\prod_{i\in S}e^{\beta \g_\lambda(x_i-y_i)}\indic_{|x_i-y_i|\leq \ve_0\Cut}\dd x_i \dd y_i.
	\end{multline*}
	Notice that from \eqref{def:IgoodZ},
	\begin{equation*}
		i_0\notin I_\good^Z\Longrightarrow \exists j\in V_\bad\text{ such that }d_{i_0j}\leq M\max(r_{i_0},\lambda).
	\end{equation*}
	Therefore, arguing as in \eqref{st1}--\eqref{st2}, there exists $C>0$ depending on $\beta$ and $|S|$ such that
	\begin{equation}\label{eq:o2}
		\int_{(\Lambda^2)^{|S|}}\indic_{\mc{C}}\Bigr(1-\prod_{i\in S}\indic_{i\in I_\good^Z}\Bigr)   \indic_{\mc{B}_S}\prod_{i,j\in S:i<j}\indic_{\mc{A}_{ij}}\prod_{i\in S}e^{\beta \g_\lambda(x_i-y_i)}\indic_{|x_i-y_i|\leq \ve_0\Cut}\dd x_i \dd y_i \leq C|V_\bad|\lambda^2 \times\lambda^{(4-\beta)|S|-2}.
	\end{equation}
	Since $|V_\bad|\leq N$, assembling \eqref{eq:o1} and \eqref{eq:o2} concludes the proof of \eqref{eq:int M lower}.
\end{proof}

\subsection{Integrals for bounded clusters}

\begin{lemma}\label{lemma:integral small -}
	Let $\beta\in (2,+\infty)$. Recall $\gamma_{\beta,\lambda,k}$ from Definition \ref{def:gamma beta}. Let $V_\bad \subset [N]$, $V_\good =[N]\backslash V_\bad$, and 
	let $X$ be a subpartition of $V_\good$ such that $V_\bad\cap V_X=\emptyset$. Let $X_1,\ldots,X_n\subset X$ be disjoint. Let $T^b\in \mathsf{T}^{\Coarse_X(X_1,\ldots,X_n)}$ and for every $l\in [n]$, let $T_l^a\in \mathsf{T}^{X_l}$. Set $T^a\coloneqq \cup_{l\in [n]}T_l^a$. Let $Z=(x_i,y_i)_{i\in V_\bad}$ and $d_{i,\bad}$ be as in Definition \ref{def:dist bad}. 
	
	Define
	\begin{multline}\label{def:J1}
		J\coloneqq \int_{(\Lambda^2)^{|V_X|} } \min\Bigr(\frac{\max_{i\in V_X} r_i}{\max_{e\in T^a}d_{e} },1\Bigr)^2 \left(\prod_{l=1}^n \prod_{S\in X_l:S\neq \hat{S}_l}r_S^2 \right)\left(\prod_{ij\in T^a}\frac{1}{d_{ij}^2}\indic_{ d_{ij}\leq 16\ve_0\Cut }\indic_{\mc{B}_{ij}^c} \right)\prod_{ij\in T^b}b_{ij}^{\abs} \\ \times \prod_{S\in X}\indic_{\mc{B}_S}\prod_{i\in V_X}e^{\beta \g_\lambda(x_i-y_i)}\indic_{|x_i-y_i|\leq \ve_0\Cut}\dd x_i \dd y_i,
	\end{multline}
	with the convention that $\max_{e\in T^a}d_e=0^+$ if $T^a=\emptyset$, and 
	\begin{multline}\label{def:J1'}
		J'\coloneqq \int_{(\Lambda^2)^{|V_X|} } \min\left(\frac{\max_{i\in V_X}r_i}{\min_{i\in V_X}d_{i,\bad}},1 \right)^2 
		\min\Bigr(\frac{\max_{i\in V_X}r_i}{\max_{e\in \cup_l T_l^a}d_e },1\Bigr)^2 
		\left(\prod_{l=1}^n \prod_{S\in X_l:S\neq \hat{S}_l}r_S^2\right)
		\\ \times \prod_{ij\in T^b}b_{ij}^{\abs}
		\prod_{ij\in T^a }\frac{1}{d_{ij}^2}\indic_{ d_{ij}\leq 16\ve_0\Cut }
		\indic_{\mc{B}_{ij}^c}\prod_{S\in X}\indic_{\mc{B}_S} \prod_{i\in V_X}e^{\beta \g_\lambda(x_i-y_i)}\indic_{|x_i-y_i|\leq \ve_0\Cut}\dd x_i \dd y_i.
	\end{multline}
	There exists a constant $C>0$ depending on $\beta,M,\ve_0$ and $|V_X|$ such that
	\begin{equation}\label{eq:bound J1}
		J\leq CN\lambda^{(2-\beta)|V_X|}\gamma_{\beta,\lambda,|V_X|}.
	\end{equation}

	Moreover, there exists $C>0$ depending on $\beta,M,\ve_0$ and $|V_X|$ such that
	\begin{equation}\label{eq:bound J2}
		J'\leq C\lambda^{(2-\beta)|V_X|} (N\Cut^{-2}+|V_\bad|).
	\end{equation}
\end{lemma}

\medskip

\begin{proof}
	Denote $k\coloneqq |V_X|$.
	Let us first prove \eqref{eq:bound J1}. Recall from Definition \ref{def:multipoles mc} that 
	\begin{equation}\label{st3}
		\indic_{\mc{B}_S}\leq \sum_{R_S\in \mc{T}_c(S) }\prod_{ij\in R_S}\indic_{\mc{B}_{ij}}.
	\end{equation}
	By Cayley's formula, the number of such trees is $|S|^{|S|-2}$. Therefore, there exists a constant $C>0$ depending on $k$ such that 
	\begin{equation}\label{eq:mm0}
		J\leq C^{|X|}\max_{(R_S)}J_1((R_S)),
	\end{equation}
	where 
	\begin{multline*}
		J_1((R_S))\coloneqq \int_{(\Lambda^2)^{k} } \min\Bigr(\frac{\max r_i}{\max_{e\in T^a} d_{e} },1\Bigr)^2 \left(\prod_{l=1}^n \prod_{S\in X_l:S\neq \hat{S}_l}r_S^2 \right)\prod_{ij\in T^b}b_{ij}^{\abs}\prod_{ij\in T^a }\frac{1}{d_{ij}^2}\indic_{\min(r_i,r_j)\leq d_{ij}\leq 16\ve_0\Cut }\\ \times \left(\prod_{S\in X}\prod_{ij\in R_S}\indic_{\mc{B}_{ij}}\right)\prod_{i\in V_X}e^{\beta \g_\lambda(x_i-y_i)}\indic_{|x_i-y_i|\leq \ve_0\Cut}\dd x_i \dd y_i.
	\end{multline*}
	For every $S\in X$, fix some $R_S\in \mc{T}_c(S)$.
	
	Let
	\begin{equation*}
		\begin{split}
			\mc{D}_{ij}^1&=\{(x_q,y_q)_{q\in V_X}: d_{ij}= |x_i-x_j|\},\\
			\mc{D}_{ij}^2&=\{(x_q,y_q)_{q\in V_X}: d_{ij}= |x_i-y_j|\},\\
			\mc{D}_{ij}^3&=\{ (x_q,y_q)_{q\in V_X}:d_{ij}= |y_i-x_j|\},\\
			\mc{D}_{ij}^4&=\{ (x_q,y_q)_{q\in V_X}: d_{ij}= |y_i-y_j|\}.
		\end{split}
	\end{equation*}
	Define 
	\begin{multline*}
		J_2((R_S^m),(T^{a,m}),(T^{b,m}))\\\coloneqq \int_{(\Lambda^2)^{k} } \min\Bigr(\frac{\max_{i\in V_X} r_i}{\max_{e\in T^a} d_{e} },1\Bigr)^2 \left(\prod_{l=1}^n \prod_{S\in X_l:S\neq \hat{S}_l}r_S^2 \right)\prod_{ij\in T^b}b_{ij}^{\abs}\prod_{ij\in T^a}\frac{1}{d_{ij}^2}\indic_{\min(r_i,r_j)\leq d_{ij}\leq 16\ve_0\Cut }\\ \times \left(\prod_{m=1}^4\prod_{ij\in \cup_{S\in X}R_S^m\cup T^{a,m}\cup T^{b,m}}\indic_{\mc{D}_{ij}^m} \right) \left(\prod_{S\in X}\prod_{ij\in R_S}\indic_{\mc{B}_{ij}}\right)\prod_{i\in V_X}e^{\beta \g_\lambda(x_i-y_i)}\indic_{|x_i-y_i|\leq \ve_0\Cut}\dd x_i \dd y_i. 
	\end{multline*}
	There exists $C>0$ depending on $k$ such that
	\begin{equation}\label{eq:mm1}
		J_1((R_S))\leq C \max_{(R_S^m),(T^{a,m}),(T^{b,m})} J_2((R_S^m),(T^{a,m}),(T^{b,m})),
	\end{equation}
	where the maximum is taken over trees  $(R_S^m), (T^{a,m}),(T^{b,m})$ such that $\sqcup_{m=1}^4 R_S^m=R_S$, $\sqcup_{m=1}^4T^{a,m}=T^a$ and $\sqcup_{m=1}^4T^{b,m}=T^b$. Fix $(R_S^m), (T^{a,m}),(T^{b,m})$ accordingly. 
	
	Let
	\begin{equation}\label{def:zij' b}
		z_{ij}=\begin{cases}
			x_{i}-x_j & \text{if $ij\in T^{a,1}\cup T^{b,1}\cup R^1_S$}\\
			x_i-y_j & \text{if $ij\in  T^{a,2}\cup T^{b,2}\cup R^2_S  $}\\
			y_i - x_j & \text{if $ij\in T^{a,3}\cup T^{b,3}\cup R^3_S$}\\
			y_i-y_j & \text{if $ij\in T^{a,4}\cup T^{b,4}\cup R^4_S$}.
		\end{cases}
	\end{equation}
	Fix some arbitrary $i_0\in V_X$. We perform the change of variables
	\begin{equation*}
		(x_i,y_i)_{i\in V_X}\mapsto (x_{i_0},(z_{ij})_{ij\in \cup_S R_S\cup T^a\cup T^b}, (\vr_i)_{i\in V_X}).
	\end{equation*}
	We integrate the variables $z_{ij}$ and $x_{i_0}$ as in Lemma \ref{lemma:lowerM+}, for $ij\in T^a$ not achieving $\max_{e\in T^a} d_e$, we obtain $\log \frac{\max_{e\in T^a} d_e}{\min_{i\in V_X} r_i}$ then separate the integration over the maximizing edge $e$ into regions $\min(r_i, r_j)\le d_{e}\le \max(r_i,r_j)$ and $ \max(r_i,r_j)\le  d_e$. For $ij \in T^b$ we separate the integration region according to $\min(r_i,r_j)\le d_{ij}\le \max(r_i,r_j) $ and $\max(r_i,r_j)\le d_{ij}\le 16\ve_0\Cut$. We thus obtain the existence of a constant $C>0$ depending on $\beta$, $M$ and $k$ such that
	\begin{multline*}
		J_2((R_S^m),(T^{a,m}),(T^{b,m}))\leq CN\max_{S_0\in X, (\iota_S)}\\ \int \Bigr(\log \frac{2\max_{i\in V_X}r_i}{\min_{i\in V_X}r_i}\Bigr)^{k} \prod_{l=1}^n \prod_{S\in X_l:S\neq \hat{S}_l} r_S^2 \prod_{S\in X}\prod_{ij\in R_S}\max(\min(r_i,r_j),\lambda)^2\prod_{ij\in T^b}\max(\min(r_i,r_j),\lambda)^2 \\ \times \prod_{S\in X}\indic_{r_i\leq r_{\iota_S}\leq r_{\iota_{S_0}}\leq \ve_0\Cut}\prod_{i\in V_X}e^{\beta \g_\lambda(r_i)}\indic_{r_i\leq \ve_0\Cut}\dd \vr_i.
	\end{multline*}
	By Lemma \ref{lemma:ordering tree},  since $T^b\in \mathsf{T}^{\Coarse_X(X_1,\ldots,X_n)}$, 
	\begin{equation*}
		\prod_{ij\in T^b}\max(\min(r_i,r_j),\lambda)^2\leq \prod_{S\in \Coarse_X(X_1,\ldots,X_n):S\neq S_0'}\max(r_S,\lambda)^2,
	\end{equation*}
	where $S_0'$ is the index $S\in \Coarse_X(X_1,\ldots,X_n)$ such that $r_S^2$ is maximal. Clearly, recalling that $r_{S_0}$ stands for the $S\in X$ such that $r_{S}$ is maximal, we have $r_{S_0'}=r_{S_0}$. Therefore,
	\begin{equation*}
		\prod_{ij\in T^b}\max(\min(r_i,r_j),\lambda)^2\leq \frac{1}{\max(r_{S_0},\lambda)^2}\prod_{S\in \Coarse_X(X_1,\ldots,X_n)}\max(r_S,\lambda)^2,
	\end{equation*}
	Thus,
	\begin{equation*}
		\prod_{l=1}^n \prod_{S\in X_l:S\neq \hat{S}_l} r_S^2 \prod_{ij\in T^b}\max(\min(r_i,r_j),\lambda)^2\leq \prod_{S\in X:S\neq S_0}\max(r_S,\lambda)^2.
	\end{equation*}
	Applying Lemma \ref{lemma:ordering tree} to every $S\in X$, we get 
	\begin{equation*}
		\prod_{ij\in R_S}\max(\min(r_i,r_j),\lambda)^2\leq \prod_{i\in S:i\neq \iota_S}\max(r_i,\lambda)^2.
	\end{equation*}
	Therefore, recalling that ${\iota_{S_0}}$ is the index $i\in V_X$ such that $r_i$ is maximal, we get 
	\begin{equation}\label{eq:RS}
		\prod_{l=1}^n \prod_{S\in X_l:S\neq \hat{S}_l} r_S^2 \prod_{ij\in T^b}\max(\min(r_i,r_j),\lambda)^2  \prod_{ij\in \cup_{S\in X}R_S}\max(\min(r_i,r_j),\lambda)^2\leq \prod_{i\in V_X:i\neq \iota_{S_0}}\max(r_i,\lambda)^2. 
	\end{equation}
	It follows that
	\begin{multline}\label{eq:mm2}
		J_2((R_S^m),(T^{a,m}),(T^{b,m}))\\ \leq CN\max_{i_0\in V_X} \int \Bigr(\log \frac{2\max_{i\in V_X}r_i}{\min_{i\in V_X}r_i}\Bigr)^{k} \prod_{i\in V_X:i\neq i_0}\max(r_i,\lambda)^2 \indic_{r_i\leq r_{i_0}} \prod_{i\in V_X}e^{\beta \g_\lambda(r_i)}\indic_{r_i\leq \ve_0\Cut}\dd \vr_i.
	\end{multline}
	Integrating the variables $r_i$ for $i\neq i_0$ and using Lemma \ref{lemsupplapp}, we find 
	\begin{equation}\label{eq:mm3}
		J_2((R_S^m),(T^{a,m}),(T^{b,m}))\leq CN \int_0^\infty \max(r,\lambda)^{(4-\beta)k-3}\indic_{r\leq \ve_0\Cut}\dd r\le CN \lambda^{(2-\beta)k}\gamma_{\beta,\lambda,k}.
	\end{equation}

	Combining \eqref{eq:mm0}, \eqref{eq:mm1}, \eqref{eq:mm2} and \eqref{eq:mm3} proves \eqref{eq:bound J1}.
	
	The proof of \eqref{eq:bound J2} is similar and is therefore omitted.
\end{proof}

\subsection{Integrals for dipole clusters}

\begin{lemma}\label{lemma:int dipole}
	Let $\beta\in (2,+\infty)$. Let $p^*(\beta)$ be as in \eqref{defpstar}. Let $V\subset [N]$ be such that $|V|\leq p^*(\beta)$. Let $E\in \mc{G}_c(V)$ where $\mc{G}_c(V)$ is as in Definition \ref{def:dipole activity}. Set 
	\begin{equation}\label{def:IkE}
		I\coloneqq \int_{(\Lambda^{2})^{|V|}} \prod_{ij\in E }  f_{ij}^v \prod_{i\in V}e^{\beta \g_\lambda(x_i-y_i)}\dd x_i \dd y_i.
	\end{equation}
	Then, there exists a constant $C>0$ depending on $\beta$ and $|V|$ such that 
	\begin{equation*}
		|I|\leq CN\lambda^{(2-\beta)|V|+2(|V|-1)}.
	\end{equation*}
\end{lemma}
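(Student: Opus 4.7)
\emph{Proof proposal.} The strategy follows the same template as the activity bound for $|\Ksf_{\ve_0}^-(X)|$ established in the proof of Proposition \ref{prop:bounded lower}, specialized to the single fixed connected graph $E$ on $V$ and to the untruncated dipole measure $\prod_i e^{\beta \g_\lambda(x_i-y_i)}\,dx_i\,dy_i$.

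I would first change variables to midpoints $z_i = (x_i+y_i)/2$ and dipole vectors $\vec r_i = y_i - x_i$, so that the Boltzmann factor $e^{\beta \g_\lambda(x_i-y_i)} = e^{\beta\g_\lambda(r_i)}$ depends only on $r_i = |\vec r_i|$. Since $v_{ij}$ is odd in each of $\vec r_i,\vec r_j$ separately when midpoints are held fixed, while the stable-matching indicator $\mc A_{ij}$ is invariant under each flip $\vec r_i\mapsto -\vec r_i$ (as discussed before Corollary \ref{coro:reduction Euler}), I would split each Mayer bond $f_{ij}^v = a_{ij}^v + b_{ij}^v$ into its odd and even parts in the sense of Definition \ref{def:awbw}, and expand
\begin{equation*}
\prod_{ij\in E}f_{ij}^v \;=\;\sum_{E_1\subset E}\prod_{ij\in E_1}a_{ij}^v\prod_{ij\in E\setminus E_1}b_{ij}^v.
\end{equation*}
The parity argument of Lemma \ref{lemma:cancellation odd}, applied vertex by vertex to the integration in $(\vec r_i)$, then shows that only those $E_1$ which are Eulerian on $V$ (every vertex having even degree) contribute a nonzero integral.

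For each such Eulerian $E_1\subset E$, I would use Lemma \ref{lemma:vij}(2) to bound $|a_{ij}^v|\indic_{\mc B_{ij}^c}\le C a_{ij}^\abs$ and $|b_{ij}^v|\indic_{\mc B_{ij}^c}\le C b_{ij}^\abs$, with $a_{ij}^\abs,b_{ij}^\abs$ as in Definition \ref{def:abs ab} (the $\indic_{d_{ij}\le 16\ve_0\Cut}$ cutoff being unnecessary in the present setting); on the complementary event $\mc B_{ij}$ one uses the trivial bound $|f_{ij}^v|\le C\indic_{\mc B_{ij}}$. Applying Corollary \ref{coro:prod a} to each connected Eulerian component of $(V,E_1)$ then yields the quadratic estimate
\begin{equation*}
\prod_{ij\in E_1}|a_{ij}^v|\;\le\;C^{|V|}\prod_{i\ne i_0}r_i^{\,2}\sum_{T'}\prod_{ij\in T'}\frac{1}{d_{ij}^{\,2}}\prod_{ij\in F}g_{ij},
\end{equation*}
with $T'$ a spanning tree of the 2-edge-connected relative components of $E_1$. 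Because $(V,E)$ is connected, $T'$ can be extended to a spanning tree $T$ of all of $V$ by adjoining edges from $E\setminus E_1$; the $|b_{ij}^v|$ factors on those extra edges contribute additional $r_ir_j/\max(d_{ij},\lambda)^2$-type weights (or an $\indic_{\mc B_{ij}}$) that are harmless in what follows.

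I would then integrate in stages, as in the proofs of Lemma \ref{lemma:lowerM+} and Lemma \ref{lemma:intergral small -}: first the midpoints $z_i$ along the spanning tree $T$ (fixing a reference $z_{i_0}$, which produces the factor $|\Lambda|=N$), then the radii $r_i$. After splitting the phase space according to which cross-pair realizes each $d_{ij}$ and performing the standard triangle-inequality change of variables on each tree edge, the computation reduces to the one-dimensional integral $\int_0^\infty \max(r,\lambda)^{(4-\beta)|V|-3}\,dr$. Since the hypothesis $|V|\le p^*(\beta)$ is precisely equivalent to $(4-\beta)|V|-3<-1$, Lemma \ref{lemsupplapp} bounds this integral by $C\lambda^{(4-\beta)|V|-2}$; combined with the $N$ from the free position and the identity $(4-\beta)|V|-2 = (2-\beta)|V|+2(|V|-1)$, this yields the claimed estimate.

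The main obstacle I anticipate is the bookkeeping in the second step: the $b_{ij}^v$ bonds on the edges of $E\setminus E_1$ do not benefit from the parity improvement and must be absorbed into the tree estimate without spoiling integrability. The absence of the $\ve_0\Cut$ cutoff used in the analogous truncated bounds (cf.\ Lemma \ref{lemma:intergral small -}) is not an obstruction here: the radial weight $e^{\beta\g_\lambda(r_i)}$ together with $\beta>2$ already forces convergence at infinity, and the restriction $|V|\le p^*(\beta)$ alone guarantees convergence of the remaining radial integral. If necessary one can perform a Penrose-type resummation over the non-tree edges of $E\setminus E_1$ (as in Lemma \ref{lemma:LXi}) to replace them by bounded exponential factors $e^{Ca_{ij}^\abs}$, which do not affect the scaling.
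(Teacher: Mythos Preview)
Your overall template (parity $\to$ Eulerian reduction $\to$ quadratic estimate $\to$ tree integration) is the same as the paper's, but there is a genuine gap in your parity step.

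You assert that $\mc A_{ij}$ is invariant under the flip $\vec r_i\mapsto -\vec r_i$, citing the discussion before Corollary~\ref{coro:reduction Euler}. That discussion, however, only establishes evenness of $\mc B_{ij}$; the parity in Corollary~\ref{coro:reduction Euler} works because the activity carries an explicit $\indic_{\mc B_{ij}^c}$ factor, and $\mc B_{ij}^c\subset\mc A_{ij}$ makes $\indic_{\mc A_{ij}}$ redundant there. In the present lemma there is no such factor. The event $\mc A_{ij}$ is \emph{not} invariant under swapping $x_i\leftrightarrow y_i$: by Lemma~\ref{lemma:Aij} (say $r_i\le r_j$) the condition $|y_j-x_i|\ge r_i$ becomes $|y_j-y_i|\ge r_i$ after the swap, which is a different event. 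Consequently $a_{ij}^v=-\sinh(\beta v_{ij})\,\indic_{\mc A_{ij}}$ is not purely odd, and the reduction to Eulerian $E_1$ is unjustified. A related issue: once you have committed to the decomposition $f^v=a^v+b^v$, your fallback ``on $\mc B_{ij}$ use $|f_{ij}^v|\le C$'' is unavailable, and the individual pieces $|a_{ij}^v|,|b_{ij}^v|\sim e^{\beta|v_{ij}|}$ are unbounded on the near regime (where $v_{ij}$ can be of order $\log(r/\lambda)$).

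The paper repairs both problems simultaneously by inserting, \emph{before} the odd/even split, the even decomposition $1=\indic_{d_{ij}\ge\max(\min(r_i,r_j),\lambda)}+\indic_{d_{ij}<\max(\min(r_i,r_j),\lambda)}$ on each edge. On the ``far'' part one has $\mc A_{ij}^c\subset\{d_{ij}<\min(r_i,r_j)\}$, so $\indic_{\mc A_{ij}}\equiv 1$ and the parity argument goes through cleanly, yielding Eulerian $E^{\odd}$ on far edges; Corollary~\ref{coro:prod a} and Lemma~\ref{lemma:intergral small -} then give the bound exactly as you outline. On the ``near'' part $f_{ij}^v$ is kept intact and bounded by $|f_{ij}^v|\le C$ (since $v_{ij}$ is bounded below on $\mc A_{ij}$), and $\indic_{\text{near}}\le\indic_{\mc B_{ij}}\le b_{ij}^\abs$ feeds into the tree $T^b$. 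With this reordering your sketch becomes correct.
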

\medskip

\begin{proof}Let $X\coloneqq \cup_{i\in V}\{ \{i\}\}$ be the partition of $V$ into singletons and let $k\coloneqq |V_X|$.
	
	Splitting the edges according to whether $d_{ij}< M\max(\min(r_i,r_j),\lambda)$,  we get 
	\begin{equation}\label{eq:xq1}
		I=\sum_{E'\subset E}\int_{(\Lambda^{2})^k} \prod_{ij\in E } f_{ij}^v \prod_{ij\in E'}\indic_{d_{ij}\geq M\max(\min(r_i,r_j),\lambda) } \prod_{ij\in E\setminus E'}\indic_{d_{ij}< M\max(\min(r_i,r_j),\lambda)}\prod_{i\in V_X}e^{\beta \g_\lambda(x_i-y_i)}\dd x_i \dd y_i.
	\end{equation}

	For every $ij\in E'$, we use $f_{ij}^v=a_{ij}^v+b_{ij}^v$, where $a_{ij}^v$ and $b_{ij}^v$ are as in Definition \ref{def:awbw}. The parity argument of Lemma \ref{lemma:cancellation odd} yields
	\begin{multline}\label{eq:xq2}
		\int_{(\Lambda^{2})^k} \prod_{ij\in E } f_{ij}^v \prod_{ij\in E'}\indic_{d_{ij}\geq M\max(\min(r_i,r_j),\lambda) } \prod_{ij\in E\setminus E'}\indic_{d_{ij}< M\max(\min(r_i,r_j),\lambda) }\prod_{i\in V_X}e^{\beta \g_\lambda(x_i-y_i)}\dd x_i \dd y_i \\ =\sum_{E^{\odd}\in \Eul^X:E^{\odd}\subset E'}I(E',E^{\odd}),
	\end{multline}
	where 
	\begin{multline}\label{eq:IEE1'}
		I(E',E^{\odd})\coloneqq    \int_{(\Lambda^{2})^k} \prod_{ij\in E^{\odd} } a_{ij}^v\indic_{d_{ij}\geq M\max(\min(r_i,r_j),\lambda) }\prod_{ij\in E'\setminus E^{\odd}}b_{ij}^v\indic_{d_{ij}\geq M\max(\min(r_i,r_j),\lambda) }\\ \times \prod_{ij\in E\setminus E'}f_{ij}^v \indic_{d_{ij}< M\max(\min(r_i,r_j),\lambda)}\prod_{i\in V_X}e^{\beta \g_\lambda(x_i-y_i)}\dd x_i \dd y_i 
	\end{multline}

	Fix $E^{\odd}\subset E'$ such that $E^{\odd}\in \Eul^X$. Let $V_1,\ldots,V_n$ be the connected components of the graph $(V,E^{\odd})$ with at least two vertices. Recall $a_{ij}^\abs$ and $b_{ij}^\abs$ from Definition \ref{def:abs ab}. By Lemma \ref{lemma:vij}, there exists a constant $C>0$ depending on $\beta$ such that 
	\begin{equation*}
		|a_{ij}^v|\indic_{d_{ij}\geq M\max(\min(r_i,r_j),\lambda)}\leq Ca_{ij}^{\abs}\quad \text{and}\quad |b_{ij}^v|\indic_{d_{ij}\geq M\max(\min(r_i,r_j),\lambda)}\leq Cb_{ij}^{\abs}.
	\end{equation*}

	For every $l\in [n]$, let $i_l$ be the index of the largest $r_i$, $i\in V_l$. For every $l\in [n]$, let $E_{1,l}'$ be the subset of edges in $E^{\odd}$ adjacent to some vertex in $V_l$. Let $X_l\coloneqq \{\{i\}:i\in V_l\}$ and $T_l^a\coloneqq \mc{T}^{X_l}(\cdot,E_{1,l}')$ be as in Definition \ref{def:peeling lower bound}.
	
	By Corollary \ref{coro:prod a} and Lemma \ref{lemma:technical peeling}, there exists a constant $C>0$ depending on $\beta$ and $k$ such that for every $l\in [n]$,
	\begin{equation}\label{eq:prodE1l'}
		\begin{split}
			\prod_{ij\in E_{1,l}' }a_{ij}^\abs & \leq C\prod_{i\in V_l:i\neq i_l}r_i^2\max_{T\in B(\cdot,T_l^a) } \prod_{ij\in T}\frac{1}{d_{ij}^2}\indic_{d_{ij}\leq 16\ve_0\Cut }\indic_{\mc{B}_{ij}^c } \min\Bigr(\frac{\max_{i\in V_l} r_i}{\max_{e\in T_l^a}d_e},1\Bigr)^2\\ &=C\prod_{i\in V_l:i\neq i_l}r_i^2\max_{T\in B(\cdot,T_l^a) }\left( \prod_{ij\in T}\frac{1}{d_{ij}^2}\indic_{d_{ij}\leq 16\ve_0\Cut}\indic_{\mc{B}_{ij}^c }  \min\Bigr(\frac{\max_{i\in V_l} r_i}{\max_{e\in T}d_e},1\Bigr)^2\right).
		\end{split}
	\end{equation}
	Indeed, $T$ and $T_l^a$ differ only on edges $ij$ such that $d_{ij}\leq \max(r_i,r_j)$. But if the minimum in the above display for $T_l^a$ is strictly smaller than $1$, then it means it is attained at some edge violating the condition $d_{ij}\leq \max(r_i,r_j)$, hence this edge is also in $T$.

	By the definition of the matching, we can check that $v_{ij}$ is bounded from below on the event $\mc{A}_{ij}$ (but not from above). Therefore, we deduce that there exists a constant $C>0$ depending only on $\beta$ such that $|f_{ij}^v|\leq C.$ It follows that there exists $C>0$ depending on $\beta$ such that 
	\begin{equation*}
		|f_{ij}^v|\indic_{d_{ij}\leq M\max(\min(r_i,r_j),\lambda)}\leq Cb_{ij}^\abs.
	\end{equation*}
	Using this and inserting \eqref{eq:prodE1l'} into \eqref{eq:IEE1'}, there exists $C>0$ depending on $\beta$ and $k$ such that
	\begin{multline}\label{eq:uy}
		|I(E',E^{\odd})|\leq C\max_{\tilde{T}_1^a\in \mathsf{T}^{X_1},\ldots,\tilde{T}_n^a\in \mathsf{T}^{X_n}} \int_{(\Lambda^2)^k } \prod_{l=1}^n\left( \prod_{i\in V_l:i\neq i_l}r_i^2  \prod_{ij\in \tilde{T}_l^a }\frac{1}{d_{ij}^2}\indic_{ M\max(\min(r_i,r_j),\lambda)\leq d_{ij}\leq 16\ve_0 \Cut } \right) \\
		\times \min\Bigr(\frac{\max_{i\in V} r_i}{\max_{e\in \cup_l\tilde{T}_l^a }d_e},1\Bigr)^2\prod_{ij\in  E\setminus E^{\odd}}b_{ij}^\abs  \prod_{i\in V}e^{\beta \g_\lambda(x_i-y_i)}\dd x_i \dd y_i.
	\end{multline}
	One can extract from $E\setminus E^{\odd}$ a tree $T^b\in \mathsf{T}^{\Coarse_X(X_1,\ldots,X_n)}$. Hence, there exists a constant $C>0$ depending on $\beta$ and $k$ such that
	\begin{multline*}
		|I(E',E^{\odd})|\leq C\max_{\tilde{T}_1^a\in \mathsf{T}^{X_1},\ldots,\tilde{T}_n^a\in \mathsf{T}^{X_n}} \int_{(\Lambda^2)^k } \prod_{l=1}^n\left( \prod_{i\in V_l:i\neq i_l}r_i^2  \prod_{ij\in \tilde{T}_l^a }\frac{1}{d_{ij}^2}\indic_{ M\max(\min(r_i,r_j),\lambda)\leq d_{ij}\leq 16\ve_0 \Cut } \right) \\
		\times \min\Bigr(\frac{\max_{i\in V} r_i}{\max_{e\in \cup_l\tilde{T}_l^a }d_e},1\Bigr)^2\prod_{ij\in T^b}b_{ij}^\abs \prod_{i\in V}e^{\beta \g_\lambda(x_i-y_i)}\dd x_i \dd y_i.
	\end{multline*}

	Notice that for $k\leq p^*(\beta)$, the integral is convergent at infinity hence  we can remove $\indic_{r_i\leq \ve_0\Cut}$ from the definition  \eqref{def:J1} and be back to controlling an integral very similar to \eqref{def:J1}. Proceeding as in the proof of \eqref{eq:bound J1} we obtain that there exists $C>0$ depending on $\beta$ and $k$ such that
	\begin{equation*}
		|I(E',E^{\odd})|\leq CN\lambda^{(2-\beta)k+2(k-1)}.
	\end{equation*}
	Together with \eqref{eq:xq1} and \eqref{eq:xq2}, this  concludes the proof.
\end{proof}

\subsection{Integrals for activity errors}\label{sub:int error}

\begin{lemma}\label{lemma:general exp M}
	Let $\beta\in (2,\infty)$ and let $p^*(\beta)$ be as in \eqref{defpstar}. Let $\tau^Z=(\tau_i^Z)_{i\in I_\good}$ be as in Definition \ref{def:vijZ}. Recall $\tau_0$ from Definition \ref{def:vijtilde} and set $\tilde{\tau}_i\coloneqq \tau$ for every $i\in [N]$. Let $S\subset [N]$ be such that $|S|\leq p^*(\beta)$. Let $\tau^Z=(\tau_i^Z)_{i\in I_\good}$ be as in Definition \ref{def:vijZ}. Recall $\tau_0$ from Definition \ref{def:vijtilde} and set $\tilde{\tau}_i\coloneqq \tau_0$ for every $i\in [N]$. Let $(w,\tau)\in \{(\tilde{v},\tilde{\tau}),(v^Z,\tau^Z)\}$. Recall $I_1^{w,\tau,L,\ve_0,Z}$ from \eqref{def:I1wL}.
	
	Then, there exists a constant $C>0$ depending on $\beta$, $M$, $\ve_0$ and $|S|$ such that
	\begin{equation*}
		|  I_1^{w,\tau,L,\ve_0,Z}(S)-I_1^{v,\infty,0,\infty,\emptyset}(S)|\leq C(|V_\bad|+N\Cut^{-2})\lambda^{(2-\beta)|S|}. 
	\end{equation*}
\end{lemma}

\medskip

\begin{proof} On the event where $i\in I_\good^Z$,  we have  $\tau_i^Z\geq \frac{r_i}{4}$ as seen in Remark \ref{remcrucialetau}. Similarly, on the event where $r_i\leq \ve_0 \Cut$, we have $\tau_i\geq r_i$. With an abuse of notation, since the integral is anyway restricted to $i\in I_\good^Z$,  one can thus suppose that for every $i$, 
	\begin{equation*}
		\tau_i^Z=\max\Bigr(\max\Bigr(\frac{1}{4}d_{i,\bad},\lambda\Bigr),\frac{r_i}{4}\Bigr).
	\end{equation*}
	and
	\begin{equation*}
		\tilde{\tau}_i=\max(8 \ve_0 \Cut,r_i).
	\end{equation*}
	This way, uniformly, we have $\tau_i\geq \frac{r_i}{4}$. 
	
	\paragraph{\bf{Step 1: starting point}}
	We have
	\begin{multline*}
		\prod_{ i,j\in S:i<j}e^{-\beta w_{ij}}\indic_{\mc{A}_{ij}}\prod_{i\in S}e^{L\frac{|x_i-y_i|^2}{\tau_i^2 }} \prod_{i\in S}\indic_{i\in I_\good^Z} =    \prod_{ i,j\in S:i<j}e^{-\beta v_{ij}}\indic_{\mc{A}_{ij}}( e^{\beta (v_{ij}-w_{ij})}-1+1)\\ \times \prod_{i\in S}\Bigr(e^{L\frac{|x_i-y_i|^2}{\tau_i^2}}-1+1\Bigr) \prod_{i\in S}(1-\indic_{i\notin I_\good^Z}).
	\end{multline*}
	Hence,
	\begin{equation}\label{eq:Idiff1}
		I_1^{w,\tau,L,\ve_0,Z}(S)-I_1^{v,\infty,0,\ve_0,\emptyset}(S)=\sum_{(F,U_1,U_2)\neq (\emptyset,\emptyset,\emptyset)} I(F,U_1,U_2),
	\end{equation}
	where 
	\begin{multline*}
		I(F,U_1,U_2)= \int_{(\Lambda^2)^{|S|}}\indic_{\mc{B}_S}\prod_{ i,j\in \mc{E}^\intra(S) }e^{-\beta v_{ij}}\indic_{\mc{A}_{ij}} \prod_{ij\in F}(e^{\beta (v_{ij}-w_{ij})}-1)\indic_{\mc{A}_{ij}}\prod_{i\in U_1} \Bigr(e^{L\frac{|x_i-y_i|^2}{\tau_i^2}}-1\Bigr)\\ \times  \prod_{i\in U_2}(-1)\indic_{i\notin I_\good^Z} \prod_{i\in S} e^{\beta \g_\lambda(x_i-y_i)}\indic_{|x_i-y_i|\leq \ve_0\Cut} \dd x_i \dd y_i.
	\end{multline*}
	Fix $(F,U_1,U_2)\neq (\emptyset,\emptyset,\emptyset)$. Suppose first that $U_1\neq \emptyset$ and let $i_0\in U_1$. Then, there exists a constant $C>0$ depending on $\beta$ and $|S|$ such that 
	\begin{equation}\label{eq:Idiff2}
		\begin{split}
			I(F,U_1,U_2)&\leq C\int_{(\Lambda^2)^{|S|}}\indic_{\mc{B}_S}\prod_{ i,j\in S:i<j}\indic_{\mc{A}_{ij}} \frac{|x_{i_0}-y_{i_0}|^2}{\tau_{i_0}^2} \prod_{i\in S} e^{\beta \g_\lambda(x_i-y_i)}\indic_{|x_i-y_i|\leq \ve_0\Cut}\dd x_i \dd y_i\\
			&\leq C\int_{(\Lambda^2)^{|S|}}\indic_{\mc{B}_S}\prod_{ i,j\in S:i<j}\indic_{\mc{A}_{ij}} \frac{\max_{i\in S}|x_{i}-y_{i}|^2}{\tau_{i_0}^2} \prod_{i\in S} e^{\beta \g_\lambda(x_i-y_i)}\indic_{|x_i-y_i|\leq \ve_0\Cut}\dd x_i \dd y_i.
		\end{split}
	\end{equation}
	Suppose that $U_2\neq \emptyset$ and let $i_0\in U_2$. Then, there exists a constant $C>0$ depending on $\beta$ and $|S|$ such that 
	\begin{equation}\label{eq:Idiff2 bis} 
		I(F,U_1,U_2)\leq C\int_{(\Lambda^2)^{|S|}}\indic_{\mc{B}_S}\prod_{ i,j\in S:i<j}\indic_{\mc{A}_{ij}}\indic_{i_0\notin I_\good^Z}\prod_{i\in S} e^{\beta \g_\lambda(x_i-y_i)}\indic_{|x_i-y_i|\leq \ve_0\Cut}\dd x_i \dd y_i.
	\end{equation}
	Suppose that $F\neq \emptyset$ and let $i_0 j_0\in F$. 
	\begin{equation}
		|e^{\beta (v_{i_0j_0}-w_{i_0j_0})}-1|\leq C|v_{i_0j_0}-w_{i_0j_0}|.
	\end{equation}
	By \eqref{eq:diff v} or \eqref{comptvijvij}, 
	\begin{equation}\label{eq:UU}
		|w_{ij}-v_{ij}|\leq C\frac{r_{i}r_{j}}{\max(\tau_{i}, \tau_{j})^2}.
	\end{equation}

	We conclude from \eqref{eq:Idiff1}---\eqref{eq:UU}  and $\tau_i\ge \frac{r_i}{4}$ that there exists a constant $C>0$ depending on $\beta$ and $|S|$ such that
	\begin{multline}\label{eq:a intstep2}
		|  I_1^{w,\tau,L,\ve_0,Z}(S)-I_1^{v,\infty,0,\ve_0,\emptyset}(S)|\leq C\max_{i_0\in S}\int_{(\Lambda^2)^{|S|}}\indic_{\mc{B}_S}\prod_{ i,j\in S:i<j}\indic_{\mc{A}_{ij}} \left(\frac{|x_{i_0}-y_{i_0}|^2}{\tau_{i_0}^2}+\indic_{i_0\notin I_\good^Z}\right)\\ \times \prod_{i\in S} e^{\beta \g_\lambda(x_i-y_i)}\indic_{|x_i-y_i|\leq \ve_0\Cut}\dd x_i \dd y_i.
	\end{multline}

	On the other hand, there exists a constant $C>0$ depending on $\beta$ and $|S|$ such that
	\begin{equation}\label{eq:deuxieme bout}
		| I_1^{v,\infty,0,\ve_0,\emptyset}(S)-I_1^{v,\infty,0,\infty,\emptyset}(S)| \leq C\max_{i_0\in S}\int_{(\Lambda^2)^{|S|}} \indic_{\mc{B}_S}\prod_{i\in S}e^{\beta \g_\lambda(x_i-y_i)}\indic_{|x_{i_0}-y_{i_0}|\geq \ve_0\Cut}\prod_{i\in S}\dd x_i \dd y_i.
	\end{equation}

	\paragraph{\bf{Step 2: integration of \eqref{eq:a intstep2}}}
	We then split the integral in the right-hand side of \eqref{eq:a intstep2} according to the label of the largest $r_i$, $i\in S$. Again, by a union bound, $\indic_{\mc{B}_S}\leq \sum_{T\in \mc{T}_c(S)}\prod_{ij\in T}\indic_{\mc{B}_{ij}}$. 
	We deduce from this that there exists a constant $C>0$ depending on $|S|$ such that
	\begin{multline}\label{eq:qq5}
		\int_{(\Lambda^2)^{|S|}}\indic_{\mc{B}_S}\prod_{ i,j\in S:i<j}\indic_{\mc{A}_{ij}} \left(\frac{|x_{i_0}-y_{i_0}|^2}{\tau_{i_0}^2}+\indic_{i_0\notin I_\good^Z}\right) \prod_{i\in S} e^{\beta \g_\lambda(x_i-y_i)}\indic_{|x_i-y_i|\leq \ve_0\Cut}\dd x_i \dd y_i\\ \leq C \max_{T\in \mc{T}_c(S)} \max_{\iota_S \in S} J(\iota_S,T,i_0),  
	\end{multline}
	where 
	\begin{multline*}
		J(\iota_S,T,i_0)\coloneqq  \int_{(\Lambda^2)^{|S|}}\prod_{ij\in T}\indic_{\mc{B}_{ij}}\prod_{ i,j\in S:i<j}\indic_{\mc{A}_{ij}} \left(\frac{|x_{i_0}-y_{i_0}|^2}{\tau_{i_0}^2}+\indic_{i_0\notin I_\good^Z}\right)\\ \times \prod_{i\in S} e^{\beta \g_\lambda(x_i-y_i)}\indic_{|x_i-y_i|\leq |x_{\iota_S}-y_{\iota_S} |\leq \ve_0\Cut}\dd x_i \dd y_i.
	\end{multline*}
	Define
	\begin{equation*}
		\begin{split}
			\mc{D}_{ij}^1&=\{(x_q,y_q)_{q\in S}: d_{ij}= |x_i-x_j|\},\\
			\mc{D}_{ij}^2&=\{(x_q,y_q)_{q\in S}: d_{ij}= |x_i-y_j|\},\\
			\mc{D}_{ij}^3&=\{ (x_q,y_q)_{q\in S}:d_{ij}= |y_i-x_j|\},\\
			\mc{D}_{ij}^4&=\{ (x_q,y_q)_{q\in S}: d_{ij}= |y_i-y_j|\}.
		\end{split}
	\end{equation*}
	We have
	\begin{equation}\label{eq:qq4}
		J(\iota_S,T,i_0)\leq C\max_{T^1,\ldots,T^4:T^1\sqcup \cdots \sqcup T^4=T}J(\iota_S,(T^l)_{1\leq l\leq 4},i_0),
	\end{equation}
	where 
	\begin{multline*}
		J(\iota_S,(T^l)_{1\leq l\leq 4},i_0)=  \int_{(\Lambda^2)^{|S|}}\left(\prod_{l=1}^4 \prod_{ij\in T^l}\indic_{d_{ij} \le M \max(\min(r_i,r_j),\lambda) }\indic_{\mc{D}_{ij}^l}\right) \left(\frac{|x_{\iota_S}-y_{\iota_S}|^2}{\tau_{i_0}^2}+\indic_{i_0\notin I_\good^Z}\right) \\ \times \prod_{i\in S} e^{\beta \g_\lambda(x_i-y_i)}\indic_{|x_i-y_i| \leq |x_{\iota_S}-y_{\iota_S} |\leq \ve_0\Cut}\dd x_i \dd y_i. 
	\end{multline*}
	We then let
	\begin{equation*}
		z_{ij}=\begin{cases}
			x_{i}-x_j & \text{if $ij\in T^1$}\\
			x_i-y_j & \text{if $ij\in  T^2 $}\\
			y_i - x_j & \text{if $ij\in T^3$}\\
			y_i-y_j & \text{if $ij\in T^4$}
		\end{cases}
	\end{equation*}
	and perform as previously the change of variables
	$(x_i,y_i)_{i\in S}\mapsto (x_{i_0},(z_{ij})_{ij\in T},(\vec{r}_i)_{i\in S}).$
	Integrating the variables $(z_{ij})_{ij\in T}$, there exists $C>0$ depending on $\beta,M$ and $|S|$ such that
	\begin{align*}
		J(\iota_S,(T_S^l)_{1\leq l\leq 4},i_0)  &\leq C\int \frac{1}{\tau_{i_0}^2} \prod_{i\in S}\max(r_i,\lambda)^2 \prod_{i\in S}e^{\beta \g_\lambda(r_i)}\indic_{r_i\leq \ve_0\Cut}\dd \vec{r}_i\dd x_{i_0}\\ &+C \int \indic_{i_0\notin I_\good^Z}\prod_{i\in S:i\neq \iota_S}\max(r_i,\lambda)^2 \prod_{i\in S}e^{\beta \g_\lambda(r_i)}\indic_{r_i\leq r_{\iota_S}\leq \ve_0\Cut}\dd \vr_i \dd x_{i_0}.
	\end{align*}
	Let us bound the second integral. First, if $i_0\notin I_\good^Z$, then by definition \eqref{def:IgoodZ}, 
	\begin{equation}\label{eq:st9}
		x_{i_0}\in \bigcup_{j\in V_\bad}\Bigr(B(x_j,2M\max(r_{i_0},\lambda) )\cup B(y_j,2M\max(r_{i_0},\lambda))\Bigr).
	\end{equation}
	Therefore, integrating out $x_{i_0}$ and using that $r_{i_0}\leq r_{\iota_S}$, we get that there exists $C>0$ depending on $\beta$, $|S|$, $M$ and $\ve_0$ such that
	\begin{align*}
		&  \int \indic_{i_0\notin I_\good^Z}\prod_{i\in S:i\neq \iota_S}\max(r_i,\lambda)^2 \prod_{i\in S}e^{\beta \g_\lambda(r_i)}\indic_{r_i\leq r_{\iota_S}\leq \ve_0\Cut}\dd \vr_i \dd x_{i_0}\\ \notag & \qquad \leq C|V_\bad|\int \prod_{i\in S}\max(r_i,\lambda)^2 \prod_{i\in S}e^{\beta \g_\lambda(r_i)}\indic_{r_i\leq \ve_0\Cut}\dd \vr_i\\ & \qquad \notag
		\leq C|V_\bad|\left(R_{\beta,\lambda}^{(4-\beta)|S|}\indic_{\beta\in (2,4)}+ |\log\lambda|^{|S|}\indic_{\beta=4}+\lambda^{(4-\beta)|S|}\indic_{\beta>4} \right).
	\end{align*}
	Thus, recalling that $R_{\beta,\lambda}^{4-\beta}=\lambda^{2-\beta}$ for $\beta\in (2,4)$, there exists a constant $C>0$ depending on $\beta,|S|,M$ and $\ve_0$ such that
	\begin{equation}\label{eq:qq1}
		\int \indic_{i_0\notin I_\good^Z}\prod_{i\in S:i\neq \iota_S}\max(r_i,\lambda)^2 \prod_{i\in S}e^{\beta \g_\lambda(r_i)}\indic_{r_i\leq r_{\iota_S}\leq \ve_0\Cut}\dd \vr_i \dd x_{i_0}\leq C \lambda^{(2-\beta)|S|} |V_\bad|. 
	\end{equation}
	Turning to the first integral, there exists a constant $C>0$ depending on $\beta,|S|$ and $\ve_0$ such that
	\begin{equation}\label{eq:qq2}
		\int \frac{1}{\tau_{i_0}^2}\indic_{\tau_{i_0}\geq 8\ve_0\Cut} \prod_{i\in S}\max(r_i,\lambda)^2 \prod_{i\in S}e^{\beta \g_\lambda(r_i)}\indic_{r_i\leq \ve_0\Cut}\dd \vec{r}_i\dd x_{i_0}\leq CN\Cut^{-2}\lambda^{(2-\beta)|S|}.
	\end{equation}
	Moreover, there exists a constant $C>0$ depending on $\beta,|S|$ and $\ve_0$ such that
	\begin{equation}\label{eq:qq2b}
		\int \frac{1}{\tau_{i_0}^2}\indic_{\tau_{i_0}=\frac{r_{i_0}}{4} } \prod_{i\in S}\max(r_i,\lambda)^2 \prod_{i\in S}e^{\beta \g_\lambda(r_i)}\indic_{r_i\leq \ve_0\Cut}\dd \vec{r}_i\dd x_{i_0}\leq C|V_\bad|\lambda^{(2-\beta)|S|}.
	\end{equation}
	It remains to bound
	\begin{multline*}
		\int \frac{1}{\tau_{i_0}^2}\indic_{\tau_{i_0}<8 \ve_0\Cut}\indic_{r_{i_0}\leq \dist(\{x_{i_0},y_{i_0}\},Z)} \prod_{i\in S}\max(r_i,\lambda)^2 \prod_{i\in S}e^{\beta \g_\lambda(r_i)}\indic_{r_i\leq \ve_0\Cut}\dd \vec{r}_i\dd x_{i_0} \\ \leq \sum_{j\in V_\bad}\int \frac{\indic_{r_{i_0}\leq d_{i_0j}\leq 8 \ve_0\Cut}}{d_{i_0j}^2} \prod_{i\in S}\max(r_i,\lambda)^2 \prod_{i\in S}e^{\beta \g_\lambda(r_i)}\indic_{r_i\leq \ve_0\Cut}\dd \vec{r}_i\dd x_{i_0}.
	\end{multline*}
	One can easily show that
	\begin{multline*}
		\sum_{j\in V_\bad}  \int \frac{\indic_{r_{i_0}\leq d_{i_0j}\leq \ve_0\Cut}}{d_{i_0j}^2} \prod_{i\in S}\max(r_i,\lambda)^2 \prod_{i\in S}e^{\beta \g_\lambda(r_i)}\indic_{r_i\leq \ve_0\Cut}\dd \vec{r}_i\dd x_{i_0} \\ \leq C|V_\bad|\int \log \frac{8\ve_0\Cut}{\min_{i\in V_X}r_i} \prod_{i\in S}\max(r_i,\lambda)^2 \prod_{i\in S}e^{\beta \g_\lambda(r_i)}\indic_{r_i\leq \ve_0\Cut}\dd \vec{r}_i.
	\end{multline*}Therefore, there exists $C>0$ depending on $\beta, |S|$ and $\ve_0$ such that 
	\begin{equation}\label{eq:qq3}
		\int \frac{1}{\tau_{i_0}^2}\indic_{\tau_{i_0}<8 \ve_0\Cut} \prod_{i\in S}\max(r_i,\lambda)^2 \prod_{i\in S}e^{\beta \g_\lambda(r_i)}\indic_{r_i\leq \ve_0\Cut}\dd \vec{r}_i\dd x_{i_0}\leq C(|V_\bad|+N\Cut^{-2}) \lambda^{(2-\beta)|S|}.
	\end{equation}
	Therefore, assembling \eqref{eq:qq1}, \eqref{eq:qq2} and \eqref{eq:qq3}, we get that there exists a constant $C>0$ depending on $\beta$, $M$, $|S|$ and $\ve_0$ such that
	\begin{equation*}
		J(\iota_S,(T_S^l)_{1\leq l\leq 4},i_0)\leq C(|V_\bad|+N\Cut^{-2})\lambda^{(2-\beta)|S|}.
	\end{equation*}
	Combining this with \eqref{eq:a intstep2}, \eqref{eq:qq5} and \eqref{eq:qq4} shows that there exists a constant $C>0$ depending on $\beta$, $M$, $|S|$ and $\ve_0$ such that
	\begin{equation}\label{eq:CC1}
		|  I_1^{w,\tau,L,\ve_0,Z}(S)-I_1^{v,\infty,0,\ve_0,\emptyset}(S)|\leq C(|V_\bad|+N\Cut^{-2})\lambda^{(2-\beta)|S|}. 
	\end{equation}

	\paragraph{\bf{Step 3: integration of \eqref{eq:deuxieme bout}}}
	
	Splitting the phase space as in Step 2 shows that 
	\begin{align*}
		& \int_{(\Lambda^2)^{|S|}} \indic_{\mc{B}_S}\prod_{i\in S}e^{\beta \g_\lambda(x_i-y_i)}\indic_{|x_{i_0}-y_{i_0}|\geq \ve_0\Cut}\prod_{i\in S}\dd x_i \dd y_i\\	&\qquad\leq CN\max_{j_0\in S}\int \indic_{\mc{B}_S} \prod_{i\in S:i\neq j_0}\max(r_i,\lambda)^2 \indic_{r_{j_0}\geq \ve_0\Cut} \prod_{i\in S}e^{\beta \g_\lambda(r_i)}\indic_{r_i\leq r_{j_0}} \dd \vr_i.
	\end{align*}
	Integrating the variables $\vec{r}_i$ for $i\neq j_0$ yields
	\begin{equation*}
		\int_{(\Lambda^2)^{|S|}} \indic_{\mc{B}_S}\prod_{i\in S}e^{\beta \g_\lambda(x_i-y_i)}\indic_{|x_{i_0}-y_{i_0}|\geq \ve_0\Cut}\prod_{i\in S}\dd x_i \dd y_i \leq CN\int \max(r,\lambda)^{(4-\beta)|S|-3}\indic_{r\geq \ve_0\Cut}\dd r.  
	\end{equation*}
	Using Lemma \ref{lemsupplapp}, we obtain that  there exists $C>0$ depending on $\beta$, $M$, $\ve_0$, and $|S|$ such that
	\begin{equation}\label{eq:one large}
		\int_{(\Lambda^2)^{|S|}} \indic_{\mc{B}_S}\prod_{i\in S}e^{\beta \g_\lambda(x_i-y_i)}\indic_{|x_{i_0}-y_{i_0}|\geq \ve_0\Cut}\prod_{i\in S}\dd x_i \dd y_i \leq CN \lambda^{(2-\beta)|S|}\Cut^{-2}.
	\end{equation}
	Therefore, by \eqref{eq:deuxieme bout}, there exists $C>0$ depending on $\beta, M$, $\ve_0$, and $|S|$ such that
	\begin{equation}\label{eq:CC2}
		| I_1^{v,\infty,0,\ve_0,\emptyset}(S)-I_1^{v,\infty,0,\infty,\emptyset}(S)| \leq CN\lambda^{(2-\beta)|S|}\Cut^{-2}.
	\end{equation}
	Assembling \eqref{eq:CC1} and \eqref{eq:CC2} proves the lemma.
\end{proof}

\medskip
\medskip

\begin{lemma}\label{lemma:general exp K}
	Let $\beta\in (2,\infty)$ and let $p^*(\beta)$ be as in \eqref{defpstar}. Let $V_\bad \subset [N]$ and $V_\good = [N]\backslash V_\bad$. Let $X$ be a subpartition of $V_\good$ such that $|V_X|\leq p^*(\beta)$. Set $k\coloneqq |V_X|$. Let $X_1,\ldots,X_n\subset X$ be disjoint, $E_1\in \mathsf{E}^{X_1},\ldots,E_n\in \mathsf{E}^{X_n}$ and $F\in \mathsf{E}^{\Coarse_X(X_1,\ldots,X_n)}$. Let $\tau^Z=(\tau_i^Z)_{i\in I_\good}$ be as in Definition \ref{def:vijZ}. Recall $\tau_0$ from Definition \ref{def:vijtilde} and set $\tilde{\tau}_i\coloneqq \tau_0$ for every $i\in [N]$. Let $(w,\tau)\in \{(\tilde{v},\tilde{\tau}),(v^Z,\tau^Z)\}$. Recall $I_\good^Z$ from \eqref{def:IgoodZ}. Let $L\in \dR$. Set
	\begin{multline}\label{def:I2wL}
		I_2^{w,\tau,L,\ve_0,Z}\coloneqq  \int_{(\Lambda^2)^{k} } \prod_{i\in V_X}\indic_{i\in I_\good^Z} \prod_{ij\in \cup_l E_l}f_{ij}^{w}\prod_{ij\in F}(-\indic_{\mc{B}_{ij}})\prod_{ij\in \cup_{l=1}^n \mc{E}^\inter(X_l)}\indic_{\mc{B}_{ij}^c} \\ \times \left(\prod_{S\in X}\indic_{\mc{B}_S} \prod_{ij\in S:i<j}e^{-\beta w_{ij}}\indic_{\mc{A}_{ij}} \right)\prod_{i\in V_X}e^{L\frac{|x_i-y_i|^2}{\tau_i^2}}\prod_{i\in V_X}e^{\beta \g_\lambda(x_i-y_i)}\indic_{|x_i-y_i|\leq \ve_0\Cut}\dd x_i \dd y_i.
	\end{multline}

	Then, there exists a constant $C>0$ depending on $\beta$, $M$, $\ve_0$ and $k$ such that
	\begin{equation}\label{eq:estimateI2}
		|I_2^{w,\tau,L,\ve_0,Z}-I_2^{v,\infty,0,\ve_0,\emptyset}|\leq  C(|V_\bad|+N\Cut^{-2})\lambda^{(2-\beta)k}
	\end{equation}
	and 
	\begin{equation}\label{def:estimateI2 bis}
		|I_2^{v,\infty,0,\ve_0,\emptyset}-I_2^{v,\infty,0,\infty,\emptyset}|\leq  C(|V_\bad|+N\Cut^{-2})\lambda^{(2-\beta)k}.
	\end{equation}
\end{lemma}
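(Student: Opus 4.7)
\medskip

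\noindent\textbf{Proof proposal.} The plan is to mimic the expansion argument of Lemma~\ref{lemma:general exp M}, adapted to accommodate the extra Mayer bonds $f_{ij}^{w}$, the weights $-\indic_{\mc{B}_{ij}}$ on $F$, and the restrictions $\indic_{\mc{B}_{ij}^c}$ on $\cup_l \mc{E}^{\inter}(X_l)$. All of these latter factors are \emph{identical} in $I_2^{w,\tau,L,\ve_0,Z}$ and $I_2^{v,\infty,0,\ve_0,\emptyset}$, so they will play the role of a fixed spectator graph and only the intra-multipole interaction $e^{-\beta w_{ij}}\indic_{\mc{A}_{ij}}$, the Mayer bonds $f_{ij}^{w}$, the Gaussian factor $e^{L|x_i-y_i|^2/\tau_i^2}$ and the indicator $\indic_{i\in I_\good^Z}$ will be telescoped to $I_2^{v,\infty,0,\ve_0,\emptyset}$.

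\medskip

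\noindent For \eqref{eq:estimateI2}, first write
\[
f_{ij}^{w} = f_{ij}^{v} + (f_{ij}^{w}-f_{ij}^{v}),\qquad
e^{-\beta w_{ij}}\indic_{\mc{A}_{ij}} = e^{-\beta v_{ij}}\indic_{\mc{A}_{ij}}\bigl((e^{\beta(v_{ij}-w_{ij})}-1)+1\bigr),
\]
\[
e^{L|x_i-y_i|^2/\tau_i^2} = 1 + (e^{L|x_i-y_i|^2/\tau_i^2}-1),\qquad
\indic_{i\in I_\good^Z} = 1-\indic_{i\notin I_\good^Z},
\]
expand all four products, and collect the term where every factor contributes the $1$-piece: this is exactly $I_2^{v,\infty,0,\ve_0,\emptyset}$. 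The remainder is a sum over $(F',U_1,U_2,U_3)\neq(\emptyset,\emptyset,\emptyset,\emptyset)$ of integrals $I(F',U_1,U_2,U_3)$, where $F'\subset \cup_l E_l$ selects the Mayer edges replaced by $f_{ij}^{w}-f_{ij}^{v}$, $U_1\subset \mc{E}^{\intra}(X)$ selects pairs where $e^{\beta(v_{ij}-w_{ij})}-1$ is picked, $U_2\subset V_X$ selects vertices contributing the Gaussian correction, and $U_3\subset V_X$ selects vertices with $-\indic_{i\notin I_\good^Z}$. By Lemma~\ref{lemma:error'} or Lemma~\ref{lemma:errortilde} one has $|f_{ij}^{w}-f_{ij}^{v}|\le C(r_ir_j/\min(\tau_i,\tau_j)^2)\indic_{\mc{A}_{ij}}$ whenever $d_{ij}\ge\min(r_i,r_j)$ (and a direct bound $|f_{ij}^w|,|f_{ij}^v|\le C$ otherwise), $|e^{\beta(v_{ij}-w_{ij})}-1|\le C r_ir_j/\min(\tau_i,\tau_j)^2$, and $|e^{L|x_i-y_i|^2/\tau_i^2}-1|\le C r_i^2/\tau_i^2$. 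In every nontrivial remainder term at least one such \emph{small factor} of the form $r_{i_0}^2/\tau_{i_0}^2$ or $\indic_{i_0\notin I_\good^Z}$ is present.

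\medskip

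\noindent For the remaining Mayer bonds $f_{ij}^{v}$ in the edge set $\cup_l E_l\setminus F'$, I split $f_{ij}^{v}=a_{ij}^{v}+b_{ij}^{v}$ and invoke the parity argument of Lemma~\ref{lemma:cancellation odd} to reduce to Eulerian subgraphs relative to $X$. Using $|a_{ij}^{v}|\indic_{\mc{B}_{ij}^c}\le Ca_{ij}^{\abs}$ and $|b_{ij}^{v}|\indic_{\mc{B}_{ij}^c}+\indic_{\mc{B}_{ij}}\le Cb_{ij}^{\abs}$, Corollary~\ref{coro:prod a} and Lemma~\ref{lemma:technical peeling} produce the quadratic bound $\prod r_S^2\cdot\prod d_{ij}^{-2}\indic_{d_{ij}\le 16\ve_0\Cut}$ along a spanning tree, with the usual overhead $\min(\max r_i/\max_{e\in T^a}d_e,1)^2$. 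The isolated small factor $r_{i_0}^2/\tau_{i_0}^2$ (respectively $\indic_{i_0\notin I_\good^Z}$) is then integrated exactly as in Step~2 of the proof of Lemma~\ref{lemma:general exp M}: the contribution of $\tau_{i_0}\ge 8\ve_0\Cut$ gives $N\Cut^{-2}\lambda^{(2-\beta)k}$, the contribution $\tau_{i_0}=r_{i_0}/4$ or the bad-point indicator produces a factor $|V_\bad|$ after integrating $x_{i_0}$ over the $O(|V_\bad|)$ balls $B(x_j,2M\max(r_{i_0},\lambda))\cup B(y_j,2M\max(r_{i_0},\lambda))$ with $j\in V_\bad$, and the contribution from $\tau_{i_0}<8\ve_0\Cut$ with a bad neighbor is treated by the sum $\sum_{j\in V_\bad}\int d_{i_0j}^{-2}\indic_{r_{i_0}\le d_{i_0j}\le 8\ve_0\Cut}$ as in \eqref{eq:qq3}, picking up an extra $\log$ that is absorbed by the quadratic factors. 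This yields \eqref{eq:estimateI2}.

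\medskip

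\noindent For \eqref{def:estimateI2 bis}, I simply write $\indic_{|x_i-y_i|\le\ve_0\Cut}=1-\indic_{|x_i-y_i|>\ve_0\Cut}$, expand the product, and in each nontrivial remainder at least one vertex $i_0$ satisfies $r_{i_0}>\ve_0\Cut$. Performing the same Eulerian reduction and peeling as above, and then integrating via the tail bound \eqref{eq:one large} (or equivalently Lemma~\ref{lemsupplapp} on $[\ve_0\Cut,\infty)$), produces $CN\Cut^{-2}\lambda^{(2-\beta)k}$. The main obstacle is bookkeeping: one must ensure that after removing one or two edges from $\cup_l E_l$ and freezing one vertex, the residual graph still admits a peeling whose integration cost is at most $N\lambda^{(2-\beta)k}$ in the sense of Lemma~\ref{lemma:intergral small -}, so that the \emph{extra} small factor is what produces the $\Cut^{-2}$ or $|V_\bad|/N$ gain. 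Since $|V_X|\le p(\beta)$ all edge sums are finite with constants depending only on $\beta,M,\ve_0,k$, so the combinatorics is harmless.
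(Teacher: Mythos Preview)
Your strategy is close to the paper's but glosses over the heart of the argument. Two points.

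First, a minor order-of-operations issue: you expand $f_{ij}^w = f_{ij}^v + (f_{ij}^w - f_{ij}^v)$ first and then invoke parity on the remaining $f_{ij}^v$ bonds, but Lemma~\ref{lemma:cancellation odd} requires every factor in the integrand to have definite parity in each $\vec r_i$, and $(f_{ij}^w - f_{ij}^v)$ is neither odd nor even. The paper instead splits $f_{ij}^w = a_{ij}^w + b_{ij}^w$ \emph{first}, applies parity to get an Eulerian $E^{\odd}$, and only then writes $a_{ij}^w = a_{ij}^v + (a_{ij}^w - a_{ij}^v)$ and $b_{ij}^w = b_{ij}^v + (b_{ij}^w - b_{ij}^v)$ separately on $E^{\odd}$ and $E^{\even}$. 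This preserves the Eulerian structure while isolating the small differences; your ordering does not.

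Second, and more substantially, what you call ``bookkeeping'' is where the real content lies. After the expansion, the small factor $(a^w_{i_0j_0} - a^v_{i_0j_0})$ or $(b^w_{i_0j_0} - b^v_{i_0j_0})$ may sit on an edge $i_0j_0$ that is \emph{structural}---either in the minimally 2-edge-connected subgraph $E_1''$ used for the peeling, or in the spanning tree $T^b$. In those cases one cannot simply drop $i_0j_0$ and apply Corollary~\ref{coro:prod a} to the rest: removing a skeleton edge destroys the 2-edge-connectivity (resp.\ the connectivity relative to $\Coarse_X$) on which the quadratic estimate relies. The paper handles this via two nontrivial constructions. When $i_0j_0 \in E_{1,l_0}''$ (its case D), it selects a strict ear decomposition whose base cycle contains $i_0j_0$ and opens that cycle precisely at $i_0j_0$; then $|a^w_{i_0j_0} - a^v_{i_0j_0}| \le C r_{i_0}r_{j_0}/\min(\tau_{i_0},\tau_{j_0})^2$ supplies exactly the missing radii and the outcome is $\prod_{S\in Y_{l_0}} r_S^2/\tau_{i_0}^2$ rather than $\prod_{S\neq\hat S_{l_0}} r_S^2$. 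When $i_0j_0 \in T^b$ (its case E), it uses the sharper inequality $|b^w_{i_0j_0} - b^v_{i_0j_0}| \le C(\max_i r_i)^2/\tau^2 \cdot a^{\abs}_{i_0j_0}$, which converts the even edge into an odd one and lets the peeling proceed on $E_1''\cup\{i_0j_0\}$. Your sentence ``the residual graph still admits a peeling whose integration cost is at most $N\lambda^{(2-\beta)k}$'' is exactly what these two constructions establish, and without them the claim is unjustified.
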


\medskip

\begin{proof}[Proof of Lemma \ref{lemma:general exp K}]
	As in the beginning of the proof of Lemma \ref{lemma:general exp M}, we may assume that  $\tau_i\geq \frac{r_i}{4}$. 
	
	\paragraph{\bf{Step 1: starting point}}
	Splitting $f_{ij}^w$ into $a_{ij}^w+b_{ij}^w$ with $a_{ij}^w$ and $b_{ij}^w$ as in Definition \ref{def:awbw}, and using the parity argument of Lemma \ref{lemma:cancellation odd}, we get
	\begin{equation*}
		I_2^{w,\tau,L,\ve_0,Z}=\left(\prod_{l=1}^n\sum_{E_{1,l}\subset E_l:E_{1,l}\in \Eul^{X_l}}\right) J_1^{w,\tau,L,\ve_0,Z}((E_{1,l})), 
	\end{equation*}
	where 
	\begin{multline*}
		J_1^{w,\tau,L,\ve_0,Z}((E_{1,l}))\coloneqq \int_{(\Lambda^2)^{|V_X|} } \prod_{i\in V_X}\indic_{i\in I_\good^Z}\prod_{ij\in \cup_l E_{1,l}}a_{ij}^{w}\prod_{ij\in \cup_l E_l\setminus E_{1,l}}b_{ij}^{w} \prod_{ij\in F}(-\indic_{\mc{B}_{ij}})\prod_{ij\in \cup_l\mc{E}^\inter(X_l)}\indic_{\mc{B}_{ij}^c} 
		\\ \times \left(\prod_{S\in X}\indic_{\mc{B}_S} \prod_{ij\in S:i<j}e^{-\beta w_{ij}}\indic_{\mc{A}_{ij}}\right)
		\prod_{i\in V_X}e^{L\frac{|x_i-y_i|^2}{\tau_i^2}} \prod_{i\in V_X}e^{\beta \g_\lambda(x_i-y_i)}\indic_{|x_i-y_i|\leq \ve_0\Cut}\dd x_i \dd y_i.
	\end{multline*}
	For every $l\in [n]$, fix $E_{1,l}\subset E_l$ such that $E_{1,l}\in \Eul^{X_l}$. Set $E^{\odd}\coloneqq \cup_l E_{1,l}$ and $E^{\even}\coloneqq \cup_l E_l \setminus E^{\odd}$. Decomposing $a_{ij}^w= a_{ij}^v+(a_{ij}^w-a_{ij}^v)$, $e^{-\beta w_{ij}}= e^{-\beta v_{ij}} (e^{\beta (v_{ij}-w_{ij})}-1+1) $, etc, and expanding, we may  write 
	\begin{equation*}
		J_1^{w,\tau,L,\ve_0,Z}((E_{1,l})) =\sum_{\substack{F_1\subset E^{\odd}, F_2\subset E^{\even}\\ F_3\subset \mc{E}^\intra(X) \\ U_1\subset V_X, U_2\subset V_X}}J_2(F_1,F_2,F_3,U_1,U_2),
	\end{equation*}
	where
	\begin{multline*}
		J_2(F_1,F_2,F_3,U_1,U_2)=\int_{(\Lambda^2)^{|V_X|} } \prod_{ij\in F_1}(a_{ij}^{w}-a_{ij}^v)\prod_{ij\in E^{\odd}\setminus F_1} a_{ij}^v\prod_{ij\in F_2}(b_{ij}^{w}-b_{ij}^v) \prod_{ij\in E^{\even}\setminus F_2}b_{ij}^{v}\\
		\prod_{S\in X}\indic_{\mc{B}_S} \prod_{ ij\in \mc{E}^\intra(X)}e^{-\beta v_{ij}}\indic_{\mc{A}_{ij}} \prod_{ij\in F_3}(e^{\beta (v_{ij}-w_{ij})}-1)\prod_{ij\in \cup_l\mc{E}^\inter(X_l)}\indic_{\mc{B}_{ij}^c}\prod_{ij\in F}(-\indic_{\mc{B}_{ij}})
		\\ \times\prod_{i\in U_1} \Bigr(e^{L\frac{|x_i-y_i|^2}{\tau_i^2}}-1\Bigr)\prod_{i\in U_2}(-1)\indic_{i\notin I_\good^Z}\prod_{i\in V_X}e^{\beta \g_\lambda(x_i-y_i)}\indic_{|x_i-y_i|\leq \ve_0\Cut}\dd x_i \dd y_i.
	\end{multline*}
	Thus,
	\begin{equation}\label{eq:Iexp b}
		J_1^{w,\tau,L,\ve_0,Z}((E_{1,l}))- J_1^{v,\infty,0,\infty,\emptyset}((E_{1,l}))=\sum_{\substack{(F_1,F_2,F_3,U_1,U_2)\\ \neq (\emptyset,\emptyset,\emptyset,\emptyset,\emptyset)}}J_2(F_1,F_2,F_3,U_1,U_2).
	\end{equation}
	On the other hand,
	\begin{equation}\label{eq:second bout 2}
		I_2^{v,\infty,0,\ve_0,\emptyset}((E_{1,l}))-I_2^{v,\infty,0,\infty,\emptyset}((E_{1,l})) =\sum_{\emptyset \neq U_0\subset V_X}\sum_{\substack{E^{\odd}\subset E:\\E^{\odd}\in \Eul^X}}J(E^{\odd},U_0),  
	\end{equation}
	where
	\begin{multline}\label{def:JE1U0}
		J(E^{\odd},U_0)\coloneqq \int_{(\Lambda^2)^{|V_X|} } \prod_{ij\in E^{\odd}}a_{ij}^{v}\prod_{ij\in E\setminus E^{\odd}}b_{ij}^{v}\left(\prod_{S\in X}\indic_{\mc{B}_S}\prod_{ij\in S:i<j}e^{-\beta v_{ij}}\indic_{\mc{A}_{ij}}\right) \prod_{i\in U_0}(-\indic_{|x_{i}-y_{i}|\geq \ve_0\Cut})\\ \times \prod_{ij\in \cup_l\mc{E}^\inter(X_l)}\indic_{\mc{B}_{ij}^c}\prod_{ij\in F}(-\indic_{\mc{B}_{ij}}) \prod_{i\in V_X}e^{\beta \g_\lambda(x_i-y_i)}\dd x_i \dd y_i.    
	\end{multline}

	\paragraph{\bf{Step 2: bounding $J_2$}}
	Let $(Y_1,\ldots,Y_m)$ be the connected components (relative to $X$), with at least two multipoles, of the graph $(V_X,E^{\odd})$. Let $T^b\in \mathsf{T}^{\Coarse_X(Y_1,\ldots,Y_m)}$ be such that $T^b\subset E^{\even}\cup F$.
	
	For every $l\in [m]$,  let $E_{1,l}'$ be the subset of edges in $E^{\odd}$ adjacent to some vertex in $V_{Y_l}$ and let $E_{1,l}''$ be a minimal $2$-edge connected (relative to $X$) subgraph of $(V_{Y_l},E_{1,l}')$. Set $E_1''=\cup_{l\in [m]}E_{1,l}''$.
	
	Recall that there exists $C>0$ depending on $\beta$ such that on the event $\mc{B}_{ij}^c$,
	\begin{align}\label{eq:boundsaaaa}
		|a_{ij}^{v}|&\leq C a^{\abs}_{ij},\quad  |a_{ij}^{w}-a_{ij}^v| \leq C a^{\abs}_{ij},\\ \notag
		|b_{ij}^{v}|&\leq C b^{\abs}_{ij},\quad  |b_{ij}^{w}-b_{ij}^v| \leq C b^{\abs}_{ij}.
	\end{align}
	
	$\bullet$ Suppose that $U_1\neq \emptyset$, or $U_2\neq \emptyset$, or $F_3\neq \emptyset$ (case A). Using the last display and proceeding as in the proof of Lemma \ref{lemma:general exp M}, see \eqref{eq:a intstep2}, we get that there exists $C>0$ depending on $\beta$ and $k$ such that
	\begin{multline}\label{eq:xx1}
		|J_2(F_1,F_2,F_3,U_1,U_2)|\leq C\max_{i_0\in V_X}\int_{(\Lambda^2)^{|V_X|} } \left(\frac{\max_{i\in V_X}r_i^2}{\tau_{i_0}^2}+\indic_{i_0\notin I_\good^Z}\right) \prod_{ij\in E_1''}a^{\abs}_{ij}\prod_{ij\in T^b}b^{\abs}_{ij} \prod_{S\in X}\indic_{\mc{B}_S}\\ \times \prod_{i\in V_X}e^{\beta \g_\lambda(x_i-y_i)}\indic_{|x_i-y_i|\leq \ve_0\Cut}\dd x_i \dd y_i.
	\end{multline}

	$\bullet$ Suppose that $E_{1}\setminus E_{1}''$ intersects $F_1$ (case B) and let $i_0j_0$ be in this intersection. Using \eqref{eq:UU} we have 
	\begin{equation}\label{eq:st8}
		|a_{i_0j_0}^{w}-a_{i_0j_0}^{v}|\leq C|w_{i_0j_0}-v_{i_0j_0}|\leq C\frac{r_{i_0}r_{j_0}}{\min(\tau_{i_0},\tau_{j_0})^2}\le
		C\frac{\max_{i\in V_X}r_i^2}{\min(\tau_{i_0},\tau_{j_0})^2},
	\end{equation}
	Therefore, using \eqref{eq:boundsaaaa}, there exists $C>0$ depending on $\beta$ and $k$ such that
	\begin{multline}\label{eq:xx2}
		|J_2(F_1,F_2,F_3,U_1,U_2)|\leq C\max_{i_0\in V_X} \int_{(\Lambda^2)^{|V_X|} } \frac{\max_{i\in V_X}r_i^2}{\tau_{i_0}^2}\prod_{ij\in E_1''} a^{\abs}_{ij}\prod_{ij\in T^b}b^{\abs}_{ij}
		\prod_{S\in X}\indic_{\mc{B}_S} \\ \times \prod_{i\in V_X}e^{\beta \g_\lambda(x_i-y_i)}\indic_{|x_i-y_i|\leq \ve_0\Cut}\dd x_i \dd y_i.
	\end{multline}
	
	$\bullet$ Suppose that $E^{\even}\setminus T^b$ intersects $F_2$ (case C) and let $i_0j_0$ be in this intersection. We use the crude bound, again thanks to \eqref{eq:UU}, 
	\begin{equation*}
		|b_{i_0j_0}^{w}-b_{i_0j_0}^{v}|\leq C|w_{i_0j_0}-v_{i_0j_0}|\leq C\frac{\max_{i\in V_X}r_i^2}{\min(\tau_{i_0},\tau_{j_0})^2}
	\end{equation*}
	and conclude as above that 
	\begin{multline}\label{eq:xx3}
		|J_2(F_1,F_2,F_3,U_1,U_2)|\leq C\max_{i_0\in V_X} \int_{(\Lambda^2)^{|V_X|} } \frac{\max_{i\in V_X}r_i^2}{\tau_{i_0}^2}\prod_{ij\in E_1''} a^{\abs}_{ij}\prod_{ij\in T^b}b^{\abs}_{ij}
		\prod_{S\in X}\indic_{\mc{B}_S} \\ \times \prod_{i\in V_X}e^{\beta \g_\lambda(x_i-y_i)}\indic_{|x_i-y_i|\leq \ve_0\Cut}\dd x_i \dd y_i.
	\end{multline}

	$\bullet$ Suppose that there exists $l_0\in[m]$  such that $E_{1,l_0}''\cap F_1\neq \emptyset$ (case D) and let $i_0j_0\in E_{1,l_0}''\cap F_1$. Then, we select a strict ear decomposition $(P_1,\ldots,P_K)$ of $(V_{Y_{l_0}},E_{1,l_0}'')$ as in Definition \ref{def:strict ear relative} such that the base cycle $P_1$ contains the edge $i_0 j_0$ (which is possible by Lemma \ref{lemma:strict ear}). We perform a peeling of the ears as in Definition \ref{def:peeling mini} except that the base cycle is opened at $i_0 j_0$: instead of \eqref{eq:SPl}, we set $\vec{S}(P_1)=i_0\to j_0$ (where the orientation is arbitrary). We let $T_{l_0}^a$ be the resulting tree. Let us emphasize that $i_0j_0\notin T_{l_0}^a$. Since the opening of the ears distinct from the cycle is as in Definition \ref{def:peeling mini}, we get from Lemma \ref{lemma:along cycles} that 
	\begin{equation}\label{eq:ears second}
		\prod_{ij\in P_2\cup \cdots \cup P_K}a^{\abs}_{ij}\leq C\prod_{S\in Y_{l_0}\setminus \Res(Y_{l_0},P_1)}r_S^2\max_{T\in B(\cdot,T_{l_0}^a ) } \prod_{ij\in T\setminus P_1}\frac{1}{d_{ij}^2}\indic_{d_{ij}\leq 16 \ve_0\Cut}\indic_{\mc{B}_{ij}^c}.
	\end{equation}
	It remains to consider the opening of the base cycle. Clearly,
	\begin{multline*}
		|a_{i_0j_0}^{w}-a_{i_0j_0}^v| \prod_{ij\in P_1\setminus\{i_0j_0\}}a_{ij}^{\abs}\leq C |a_{i_0j_0}^{w}-a_{i_0j_0}^v| r_{i_0}r_{j_0}\prod_{S\in \Res(Y_{l_0},P_1):S\neq [i_0]^X,[j_0]^X}r_S^2\\ \times \max_{T\in B(\cdot,T_{l_0}^a)} \prod_{ij\in T\cap P_1}\frac{1}{d_{ij}^2}\indic_{d_{ij}\leq 16 \ve_0\Cut}\indic_{\mc{B}_{ij}^c}.
	\end{multline*}
	Inserting \eqref{eq:st8},  we get 
	\begin{equation*}
		|a_{i_0j_0}^{w}-a_{i_0j_0}^v| \prod_{ij\in P_1\setminus\{i_0j_0\}}a^{\abs}_{ij}\leq  \frac{C}{\min(\tau_{i_0},\tau_{j_0})^2 } \prod_{S\in \Res(Y_{l_0},P_1)}r_S^2\max_{T\in B(\cdot,T_{l_0}^a)} \prod_{ij\in T\cap P_1}\frac{1}{d_{ij}^2}\indic_{d_{ij}\leq 16  \ve_0\Cut}\indic_{\mc{B}_{ij}^c}.
	\end{equation*}
	Combining this with \eqref{eq:ears second} gives 
	\begin{equation}\label{eq:hell}
		|a_{i_0j_0}^{w}-a_{i_0j_0}^v| \prod_{ij\in E_{1,{l_0}}''\setminus\{i_0j_0\}}a_{ij}^{\abs}\leq \frac{C}{\min(\tau_{i_0},\tau_{j_0})^2 } \prod_{S\in Y_{l_0}}r_S^2\max_{T\in B(\cdot,T_{l_0}^a)} \prod_{ij\in T}\frac{1}{d_{ij}^2}\indic_{d_{ij}\leq16 \ve_0\Cut}\indic_{\mc{B}_{ij}^c}.
	\end{equation}
	
	For $l\in [m]\setminus\{l_0\}$,  we let $T_l^a\coloneqq \mc{T}^{X_l}(\cdot,E_{1,l}'')$ be as in Definition~\ref{def:peeling lower bound}. Set $T^a\coloneqq \cup_{l=1}^m T_l^a$. By Corollary~\ref{coro:prod a} and \eqref{eq:hell},
	\begin{equation*}
		|a_{i_0j_0}^{w}-a_{i_0j_0}^v| \prod_{l=1}^m\prod_{ij\in E_{1,l}''\setminus\{i_0j_0\}}a_{ij}^\abs \leq \frac{Cr_{\hat{S}_{l_0}}^2}{\min(\tau_{i_0},\tau_{j_0})^2 }\prod_{S\in \cup_l (X_{l}\setminus \{\hat{S}_l\})}r_S^2 \max_{T\in B(\cdot,T^a)}\prod_{ij\in T}\frac{1}{d_{ij}^2}\indic_{d_{ij}\leq 16\ve_0\Cut}\indic_{\mc{B}_{ij}^c}.
	\end{equation*}
	We conclude that in this case 
	\begin{multline*}
		|J_2(F_1,F_2,F_3,U_1,U_2)|\leq C\max_{i_0\in V_X}\max_{\substack{T_1^a\in \mathsf{T}^{X_1}\\ \ldots\\  T_n^a\in \mathsf{T}^{X_n} } } \int_{(\Lambda^2)^{|V_X|}} \frac{\max_{i\in V_X}r_i^2}{\tau_{i_0}^2}\prod_{S\in \cup_l (X_{l}\setminus \{\hat{S}_l\})}r_S^2 \prod_{ij\in \cup_lT_l^a}\frac{1}{d_{ij}^2}\indic_{d_{ij}\leq 16 \ve_0\Cut}\indic_{\mc{B}_{ij}^c}\\ \times \prod_{ij\in T^b}b_{ij}^{\abs}\prod_{S\in X}\indic_{\mc{B}_S}\prod_{i\in V_X}e^{\beta \g_\lambda(x_i-y_i)}\indic_{|x_i-y_i|\leq \ve_0\Cut}\dd x_i \dd y_i,
	\end{multline*}

	$\bullet$ Suppose that $T^b\cap F_2\neq \emptyset$ (case E) and let $i_0j_0\in T^b \cap F_2$. 
	Using \eqref{eq:boundsaaaa}, we have
	\begin{equation*}
		|b_{i_0j_0}^{w}-b_{i_0j_0}^v|\leq C|w_{i_0j_0}-v_{i_0j_0}| |v_{i_0j_0}|. 
	\end{equation*}
	Using Lemma \ref{lemma:error'} or~\ref{lemma:errortilde}, we find that on the event $\mc{B}_S$, 
	\begin{equation}\label{eq:bbbbb}
		|b_{i_0j_0}^{w}-b_{i_0j_0}^v| \leq C\frac{r_{i_0}^2r_{j_0}^2}{\max(\tau_{i_0},\tau_{j_0},d_{i_0j_0})^2d_{i_0j_0}\max(d_{i_0j_0},r_{i_0},r_{j_0})}\le  C    \frac{\max_{i\in V_X}r_i^2}{\max(\tau_{i_0},\tau_{j_0})^2}a_{i_0j_0}^{\abs}.
	\end{equation}
	Therefore, 
	\begin{multline}\label{eq:xx4}
		|J_2(F_1,F_2,F_3,U_1,U_2)|\leq C \int_{(\Lambda^2)^{|V_X|}}   \frac{\max_{i\in V_X}r_i^2}{\max(\tau_{i_0},\tau_{j_0})^2}\\ \times \prod_{ij\in E_1''\cup \{i_0j_0\}
		} a_{ij}^{\abs}\prod_{ij\in T^b\setminus \{i_0j_0\} }b_{ij}^{\abs}
		\prod_{S\in X}\indic_{\mc{B}_S} \prod_{i\in V_X}e^{\beta \g_\lambda(x_i-y_i)}\indic_{|x_i-y_i|\leq \ve_0\Cut}\dd x_i \dd y_i.
	\end{multline}
	
	$\bullet$ We now consider the term \eqref{def:JE1U0} (case F). Let $i_0\in U_0$. By Corollary \ref{coro:prod a}, we have 
	\begin{multline}\label{eq:JE1}
		|J(E^\odd,U_0)|\leq C\max_{\tilde{T}^a}\int_{(\Lambda^2)^{|V_X|} }\min\Bigr(\frac{\max_{i\in V_X}r_i}{\max_{e\in \tilde{T}^a}d_e},1\Bigr)^2 \prod_{S\in \cup_l X_l}r_S^2 \prod_{ij\in \tilde{T}^a}\frac{1}{d_{ij}^2}\indic_{ d_{ij}\leq 16\ve_0\Cut}\indic_{\mc{B}_{ij}^c}\\ \times \prod_{ij\in T^b}b_{ij}^{\abs}\prod_{S\in X}\indic_{\mc{B}_S} \left(\prod_{ij\in S:i<j}e^{-\beta v_{ij}}\indic_{\mc{A}_{ij}}\right) \prod_{i\in U_0}\indic_{|x_{i}-y_{i}|\geq \ve_0\Cut}\prod_{i\in V_X}e^{\beta \g_\lambda(x_i-y_i)}\dd x_i \dd y_i.
	\end{multline}

	\paragraph{\bf{Step 3: integration in cases A, B, C, D, E}}
	By \eqref{eq:xx1}, \eqref{eq:xx2}, \eqref{eq:xx3} and Lemma \ref{lemma:vij}, there exists $C>0$ depending on $\beta,M$ and $k$ such that
	\begin{multline*}
		|J_2(F_1,F_2,F_3,U_1,U_2)|\leq C\max_{i_0\in V_X}\max_{\substack{T_1^a\in \mathsf{T}^{X_1}\\ \ldots\\  T_n^a\in \mathsf{T}^{X_n} } }\int_{(\Lambda^2)^{|V_X|} } \left(\frac{\max_{i\in V_X}r_i^2}{\tau_{i_0}^2}+\indic_{i_0\notin I_\good^Z}\right) \prod_{S\in \cup_l (X_l\setminus \{\hat{S}_l\})} r_S^2 \\ \times \prod_{ij\in T^a}\frac{1}{d_{ij}^2}\indic_{d_{ij}\leq 16\ve_0\Cut}\indic_{\mc{B}_{ij}^c}\prod_{i\in V_X}e^{\beta\g_\lambda(x_i-y_i)}\dd x_i \dd y_i,
	\end{multline*}
	where for every $l\in [n]$, $\hat{S}_l$ stands for the index $S$ of the largest $r_S$, $S\in X_l$. Splitting the integral and changing variables as in the proof of Lemma \ref{lemma:general exp M}, we obtain that there exists $C>0$ depending on $\beta, M$ and $k$ such that
	\begin{multline*}
		|J_2(F_1,F_2,F_3,U_1,U_2)|\leq C\max_{i_0\in V_X}\max_{\substack{T_1^a\in \mathsf{T}^{X_1}\\ \ldots\\  T_n^a\in \mathsf{T}^{X_n} } } \int \left(\frac{r_{i_0}^2}{\tau_{i_0}^2}+\indic_{i_0\notin I_\good^Z}\right)\Bigr(\log\frac{16\ve_0\Cut}{\min_{i\in V_X}r_i}\Bigr)^{k} \frac{\prod_{i\in V_X}r_i^2}{\max_{i\in V_X}r_i^2}\\ \times \prod_{i\in V_X}e^{\beta \g_\lambda(r_i)}\indic_{r_i\leq \ve_0\Cut } \prod_{i\in V_X}\dd \vr_i\dd x_{i_0}.
	\end{multline*}
	Recall that if $i_0\notin I_\good^Z$, then \eqref{eq:st9} holds.
	Hence, proceeding as in the proof of Lemma \ref{lemma:general exp M}, we obtain that there exists $C>0$ depending on $\beta,M,\ve_0$ and $k$ such that
	\begin{multline*}
		\int \indic_{i_0\notin I_\good^Z}\Bigr(\log\frac{16\ve_0\Cut}{\min_{i\in V_X}r_i}\Bigr)^{k} \frac{\prod_{i\in V_X}r_i^2}{\max_{i\in V_X}r_i^2}\\ \times \prod_{i\in V_X}e^{\beta \g_\lambda(r_i)}\indic_{r_i\leq \ve_0\Cut } \prod_{i\in V_X}\dd \vr_i\dd x_{i_0} \leq C(|V_\bad|+N\Cut^{-2})\lambda^{(2-\beta)k}. 
	\end{multline*}
	Moreover, arguing as in the proof of Lemma \ref{lemma:general exp M}, there exists $C>0$ depending on $\beta,M,\ve_0$ and $k$ such that
	\begin{equation*}
		\max_{i_0\in V_X}\int \frac{1}{\tau_{i_0}^2}\Bigr(\log\frac{16\ve_0\Cut}{\min_{i\in V_X}r_i}\Bigr)^{k}\prod_{i\in V_X}\Bigr(r_i^2 e^{\beta \g_\lambda(r_i)}\indic_{r_i\leq \ve_0\Cut}\Bigr)\prod_{i\in V_X}\dd \vr_i \dd x_{i_0}\\ \leq   C(|V_\bad|+N \Cut^{-2})\lambda^{(2-\beta)k}.
	\end{equation*}
	Combining these two displays, we deduce that there exists $C>0$ depending on $\beta,M,\ve_0$ and $k$ such that
	\begin{equation}\label{eq:I61}
		|J_2(F_1,F_2,F_3,U_1,U_2)|\leq C(|V_\bad|+N\Cut^{-2})\lambda^{(2-\beta)k}.
	\end{equation}

	\paragraph{\bf{Step 4: integration in case F}}
	
	Using \eqref{eq:JE1} and \eqref{eq:one large}, there exists $C>0$ depending on $\beta,M,\ve_0$ and $k$ such that
	\begin{equation}\label{eq:caseF}
		|J(E^\odd,U_0)|\leq CN\Cut^{-2}\lambda^{(2-\beta)k}.
	\end{equation}
	
	\paragraph{\bf{Step 5: conclusion}}
	
	Combining \eqref{eq:I61} and \eqref{eq:Iexp b}, there exists $C>0$ depending on $\beta,M,\ve_0$ and $k$ such that
	\begin{equation*}
		|I_2^{w,\tau,L,\ve_0,Z}-I_2^{v,\infty,0,\ve_0,\emptyset}
		|\leq C(|V_\bad|+N\Cut^{-2})\lambda^{(2-\beta)k}.  
	\end{equation*}
	Inserting \eqref{eq:caseF} into \eqref{eq:second bout 2} also gives the existence of $C>0$ depending on $\beta,M,\ve_0$ and $k$ such that
	\begin{equation*}
		|I_2^{v,\infty,0,\ve_0,\emptyset}-I_2^{v,\infty,0,\infty,\emptyset}
		|\leq CN\Cut^{-2}\lambda^{(2-\beta)k}, 
	\end{equation*}
	which concludes the proof.
\end{proof}

\begin{lemma}\label{lemma:I3}
	Let $\beta\in (2,\infty)$. Let $\ve_0\in (0,1)$. Let $V\subset [N]$ be such that $|V|\leq p^*(\beta)$.
	Define 
	\begin{equation*}
		I_3^{\ve_0}(V)\coloneqq \sum_{E\in \mc{G}_c(V)} \int \prod_{ij\in E}f_{ij}^v \prod_{i\in V}e^{\beta \g_\lambda(x_i-y_i)}\indic_{|x_i-y_i|\leq \ve_0 \Cut}\dd x_i \dd y_i.
	\end{equation*}
	Then, there exists $C>0$ depending on $\beta$, $|V|$ and $\ve_0$ such that
	\begin{equation*}
		|I_3^{\ve_0}(V)- I_3^{\infty}(V)|\leq CN\lambda^{(2-\beta)|V|}\Cut^{-2}.
	\end{equation*}
\end{lemma}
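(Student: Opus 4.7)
The proof will follow the blueprint of Lemma \ref{lemma:int dipole}, but tracking the additional constraint that the truncation $\indic_{|x_i-y_i|\le \ve_0\Cut}$ is removed. The first step is to rewrite
\[
I_3^{\infty}(V)-I_3^{\ve_0}(V)=\sum_{E\in\mc{G}_c(V)}\int\prod_{ij\in E}f_{ij}^v\prod_{i\in V}e^{\beta\g_\lambda(x_i-y_i)}\Bigl(1-\prod_{i\in V}\indic_{|x_i-y_i|\le\ve_0\Cut}\Bigr)\prod_i\dd x_i\dd y_i,
\]
and to use the union bound $1-\prod_i\indic_{r_i\le\ve_0\Cut}\le\indic_{\max_ir_i>\ve_0\Cut}\le \sum_{i_0\in V}\indic_{r_{i_0}=\max_jr_j,\,r_{i_0}\ge\ve_0\Cut}$. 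This reduces the estimate to controlling $|V|$ terms of the form
\[
J(i_0)\coloneqq\sum_{E\in\mc{G}_c(V)}\int\prod_{ij\in E}f_{ij}^v\prod_{i\in V}e^{\beta\g_\lambda(x_i-y_i)}\indic_{r_{i_0}=\max_jr_j}\indic_{r_{i_0}\ge\ve_0\Cut}\prod_i\dd x_i\dd y_i.
\]

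The second step is to decompose $f_{ij}^v=a_{ij}^v+b_{ij}^v$ as in Definition \ref{def:awbw} and apply the parity cancellation of Lemma \ref{lemma:cancellation odd}. The key point is that the extra indicator $\indic_{r_{i_0}=\max_jr_j,\,r_{i_0}\ge\ve_0\Cut}$ is invariant under the midpoint-fixing reflection $\vec{r}_i\mapsto -\vec{r}_i$ used in that lemma, since $r_i=|x_i-y_i|$ is unchanged. Consequently the sum over $E\in\mc{G}_c(V)$ collapses, after parity cancellation, to a sum over Eulerian subgraphs carrying the odd weights $a_{ij}^v$, with the remaining edges carrying $b_{ij}^v$.

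The third step then applies the peeling procedure of Corollary \ref{coro:prod a} together with the pointwise bounds $|a_{ij}^v|\indic_{\mc{B}_{ij}^c}\le Ca_{ij}^{\abs}$ and $|b_{ij}^v|\indic_{\mc{B}_{ij}^c}\le Cb_{ij}^{\abs}$ from Lemma \ref{lemma:vij}, extracting a spanning tree $T^b$ on the even part. Because $i_0$ is the argmax of the $r_i$'s by construction, the quadratic estimate (Corollary \ref{coro:prod a}) yields $\prod_{S\ne[i_0]^X}r_S^2$ times a product of inverse squared tree distances $\prod_{ij\in T^a}d_{ij}^{-2}$ times $\prod_{ij\in T^b}b_{ij}^{\abs}$. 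Performing the same change of variables $(x_i,y_i)_{i\in V}\mapsto(x_{i_0},(z_{ij})_{ij\in T^a\cup T^b},(\vec{r}_i))$ as in the proof of Lemma \ref{lemma:int dipole} and integrating out $x_{i_0}$ and the $z_{ij}$'s yields
\[
|J(i_0)|\le CN\int_0^\infty\max(r,\lambda)^{(4-\beta)|V|-3}\indic_{r\ge\ve_0\Cut}\dd r.
\]
Applying the second estimate of Lemma \ref{lemsupplapp} (valid exactly because $|V|\le p^*(\beta)$, which makes the integrand decay faster than $r^{-1}$ at infinity) gives the bound $CN\lambda^{(2-\beta)|V|}\Cut^{-2}$, which is the claim.

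The only subtle point, and thus the main thing to check, is that the parity argument still applies despite the asymmetric truncation $\indic_{r_{i_0}\ge\ve_0\Cut}$; this is immediate since $r_i=|x_i-y_{\sigma_N(i)}|$ is invariant under flipping the dipole vector about the midpoint, so the full integrand to which Lemma \ref{lemma:cancellation odd} is applied has the required even parity. Beyond this verification, the argument is a bookkeeping variant of Lemma \ref{lemma:int dipole} with the range of integration over the largest radius shifted from $[0,\infty)$ to $[\ve_0\Cut,\infty)$, which is precisely what produces the $\Cut^{-2}$ gain over the global estimate.
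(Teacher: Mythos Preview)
Your proposal is correct and follows essentially the same approach as the paper. The paper omits the proof, referring to Lemma \ref{lemma:general exp K} (specifically its ``case F'' and the estimate \eqref{eq:one large}); in the pure dipole setting this reduces to precisely the blueprint of Lemma \ref{lemma:int dipole} that you invoke, with the range of the largest radius shifted to $[\ve_0\Cut,\infty)$ and the second estimate of Lemma \ref{lemsupplapp} supplying the $\Cut^{-2}$ gain.
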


The proof of Lemma \ref{lemma:I3} is similar to the proof of Lemma \ref{lemma:general exp K} and is therefore omitted.

\subsection{Integral for unbounded clusters}\label{sub:app unbounded}

For the proof of Lemma \ref{lemma:integration summation}, we will  need the following summation result on trees.

\begin{lemma}\label{lemma:Cayley}
	Let $S=\{1,\ldots,p\}$ with $p\geq 2$. Let $(x_i)_{i\in S}\in \dR^{|S|}$. We have
	\begin{equation*}
		\sum_{\substack{T \mathrm{ connected}\\ \mathrm{tree}\, \mathrm{  on}\,  S}}\prod_{i\in S}x_i^{\deg_T(i)}=|S|^{|S|-2}\Bigr(\prod_{i\in S}x_i\Bigr)\Bigr(\frac1{|S|}\sum_{i\in S}x_i\Bigr)^{|S|-2}.
	\end{equation*}
\end{lemma}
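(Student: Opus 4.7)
The identity is a classical consequence of Cayley's formula for labeled trees with prescribed degree sequence (sometimes called Moon's formula, and provable via the Prüfer correspondence). Let me sketch the argument. Write $p=|S|$ and, for any tree $T$ on $S$, set $d_i\coloneqq \deg_T(i)-1\ge 0$. Since $T$ has $p-1$ edges, the handshaking lemma gives $\sum_{i\in S}d_i=p-2$.

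The first key step is to invoke the Cayley--Prüfer enumeration: the number of labeled trees on $S$ with prescribed degree sequence $(d_1+1,\dots,d_p+1)$ equals the multinomial coefficient $\binom{p-2}{d_1,\dots,d_p}$. This is immediate from the Prüfer bijection between trees on $[p]$ and sequences in $[p]^{p-2}$, together with the fact that vertex $i$ appears exactly $\deg_T(i)-1$ times in the Prüfer code of $T$. Grouping trees by their degree sequence therefore yields
\[
\sum_{T\text{ tree on }S}\prod_{i\in S}x_i^{\deg_T(i)}
=\Bigl(\prod_{i\in S}x_i\Bigr)\sum_{\substack{(d_i)_{i\in S}\in\mathbb{N}^p\\ d_1+\cdots+d_p=p-2}}\binom{p-2}{d_1,\dots,d_p}\prod_{i\in S}x_i^{d_i}.
\]

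The second step is simply to recognize the inner sum as the multinomial expansion of $(x_1+\cdots+x_p)^{p-2}$, giving
\[
\sum_{T\text{ tree on }S}\prod_{i\in S}x_i^{\deg_T(i)}
=\Bigl(\prod_{i\in S}x_i\Bigr)(x_1+\cdots+x_p)^{p-2}
=p^{p-2}\Bigl(\prod_{i\in S}x_i\Bigr)\Bigl(\tfrac{1}{p}\sum_{i\in S}x_i\Bigr)^{p-2},
\]
which is the claimed formula. There is no real obstacle here once the Prüfer enumeration is invoked; the only point requiring a little care is checking the two degenerate cases $p=1$ (the empty product/sum conventions make both sides equal to $1$ after interpreting $0^0=1$) and $p=2$ (a single tree with both degrees equal to $1$, matching $2^0\cdot x_1x_2\cdot 1$).
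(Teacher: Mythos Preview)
Your proof is correct and follows essentially the same route as the paper: both group trees by degree sequence, invoke the Pr\"ufer bijection to count trees with prescribed degrees, and then recognize the resulting sum as the multinomial expansion of $(x_1+\cdots+x_p)^{p-2}$. Your presentation is slightly more direct (using the multinomial coefficient explicitly rather than passing through the coefficient of $t^{p-2}$ in $\prod_i e^{x_i t}$), and you additionally address the degenerate cases $p=1,2$, but the substance is identical.
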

\begin{proof}
	For any $T$ tree on $S$ we have $\sum_{i\in S} \deg_T(i)=2(|S|-1)$. 
	Let $\mc{D}$ be the set of sequences $(d_i)_{i\in S}$ such that $\sum_{i\in S} d_i=2(|S|-1)$ and for every $i\in S$, $d_i\in \mathbb{N}^*$. One may write 
	\begin{equation*}
		\sum_{\substack{T \mathrm{ connected}\\ \mathrm{tree} \, \mathrm{on}\,  S}}\prod_{i\in S} x_i^{\deg_{T}(i)}=\sum_{d\in \mc{D}}\sum_{T\in \mc{T}(d)}\prod_{i\in S}x_i^{d_i}, 
	\end{equation*}
	where for every $d\in\mc{D}$, $\mc{T}(d)$ stands for the set of trees on $S$ with degree sequence $d$. By Prüfer's bijection \cite{prufer}, for every $d\in \mc{D}$, we have 
	\begin{equation*}
		|\mc{T}(\vec{d})|=\frac{(|S|-2)!}{\prod_{i\in S}(d_i-1)!}.
	\end{equation*}
	Therefore, \begin{equation*}
		\sum_{T\in \mc{T}(d)}\prod_{i\in S}x_i^{\deg_{T}(i)}=\frac{(|S|-2)!}{\prod_{i\in S}(d_i-1)!}\prod_{i\in S}x_i^{d_i}.  
	\end{equation*}
	Let us now sum this over $d\in \mc{D}$. By  the change of variables $d_i'=d_i-1$, one can write 
	\begin{equation*}
		\sum_{d\in \mc{D}}\prod_{i\in S}\frac{x_i^{d_i}}{(d_i-1)!}=\sum_{d\in \mc{D}'}\prod_{i\in S}\frac{x_i^{d_i+1}}{d_i!}=\prod_{i\in S}x_i \sum_{d\in \mc{D}'}\prod_{i\in S}\frac{x_i^{d_i}}{d_i!},
	\end{equation*}
	where $\mc{D}'$ is the set of sequences $(d_i)_{i\in S}$ such that $\sum_{i\in S}d_i=|S|-2$ and for every $i\in S$, $d_i\in \mathbb{N}$. One can notice that 
	\begin{equation*}
		\sum_{d\in \mc{D}'}\prod_{i\in S}\frac{x_i^{d_i}}{d_i!}=\frac{1}{(|S|-2)!}\Bigr(\sum_{i\in S}x_i\Bigr)^{|S|-2}.
	\end{equation*}
	Indeed, the right-hand side of the last display can be viewed as the coefficient of $t^{|S|-2}$ in the expansion of $\prod_{i\in S}e^{x_i t}$. Combining the last relations concludes the proof.
\end{proof}

\begin{proof}[Proof of Lemma \ref{lemma:integration summation}]
	Denote $Y\coloneqq \Coarse_X(X_1,\ldots,X_n)$ and $k\coloneqq |V_X|$.

	Recall \eqref{def:IT'}.  
	Using $\indic_{\mc{B}_S}\le \sum_{T_S\in \mc{T}_c(S)} \prod_{ij\in T_S} \indic_{\mc{B}_{ij}}$ and  Cayley's theorem, we obtain
	\begin{equation*}
		\mc{J}_{C_0}(\cup_{l=1}^n T_l^a,T^b)\leq \Bigr(\prod_{S\in X} |S|^{|S|-2}\max_{T_S\in \mc{T}_c(S)}\Bigr)I(\cup_{l=1}^n T_l^a,T^b, (T_S)),
	\end{equation*}
	where
	\begin{multline*}
		I(\cup_{l=1}^n T_l^a,T^b, (T_S))\coloneqq \int_{(\Lambda^2)^{|V_X|}} \prod_{ij\in T^b}b_{ij}^{\abs}\prod_{l=1}^n\Bigr(\prod_{S\in X_l, S\neq \hat{S}_l}r_{S}^2\Bigr)
		\prod_{ij\in \cup_{l=1}^nT_l^a }\Bigr(\frac{1}{d_{ij}^2}\indic_{ d_{ij}\leq 16\ve_0\Cut}\indic_{\mc{B}_{ij}^c} \Bigr) \\ \times \prod_{S\in X}\prod_{ij\in T_S}\indic_{\mc{B}_{ij}} \prod_{i\in V_X}\Bigr(\Bigr(\frac{\ve_0\Cut}{r_i}\Bigr)^{\frac{C_0}{M}}e^{\beta \g_\lambda(x_i-y_i)}\indic_{|x_i-y_i|\leq \ve_0\Cut} \dd x_i \dd y_i\Bigr).
	\end{multline*}
	Splitting the phase space according to the variables that attain $d_{ij}$, for every $ij\in \cup_S T_S\cup T^b\cup \cup_l T_l^a$ as in \eqref{eq:g2}, using the variables \eqref{eq:g4} and using Lemma \ref{lemma:ordering tree},  recalling that $d_{ij}\ge M \max(\min(r_i,r_j),\lambda)$ on $\mc{B}_{ij}^c$, we get that there exists $C>0$ depending on $\beta$ and $M$ such that
	\begin{multline*}
		I(\cup_{l=1}^n T_l^a,T^b, (T_S))\leq NC^k \int   \prod_{ij\in T^b\cup_{S\in X}T_S }\max(\min(r_i,r_j),\lambda)^2 \prod_{l=1}^n\Bigr(\prod_{S\in X_l, S\neq \hat{S}_l}r_{S}^2\Bigr)\\
		\times  \prod_{ij\in T^a\cup T^b}\log \frac{16\ve_0 \Cut}{\min(r_i,r_j)} \prod_{i\in V_X}\Bigr(\frac{\ve_0\Cut}{r_i}\Bigr)^{\frac{C_0}{M}}e^{\beta \g_\lambda(r_i)}\indic_{r_i\leq \ve_0\Cut} \dd \vr_i.
	\end{multline*}
	Notice that 
	\begin{equation*}
		\begin{split}
			\prod_{ij\in T^a_l}\log \frac{16\ve_0 \Cut}{\min(r_i,r_j)}&\leq C^k\prod_{i\in V_{X_l}}\Bigr(\log \frac{16\ve_0 \Cut}{r_i}\Bigr)^{\deg_{T^a_l }(i)}\\
			& \leq C^k\prod_{S\in X_l}\Bigr(\log \frac{16\ve_0 \Cut}{\min_{i\in S}r_i}\Bigr)^{p(\beta)\deg_{(V_X,T^a_l)/X}(S)}.
		\end{split}
	\end{equation*}
	Therefore, applying Lemma \ref{lemma:Cayley} to the quotient tree $(V_{X_l},\cup_l T_l^a)/X_l$ and to the numbers 
	\begin{equation*}
		x_S\coloneqq \Bigr(\log \frac{16\ve_0 \Cut}{\min_{i\in S}r_i}\Bigr)^{p(\beta)},\quad S\in X_l,
	\end{equation*}
	we get that there exists $C>0$ depending on $p(\beta)$ such that
	\begin{equation*}
		\sum_{T_l^a\in \mathsf{T}^{X_l} } \prod_{ij\in T_l^a}\log \frac{16\ve_0 \Cut}{\min(r_i,r_j)}\leq C^k|X|^{|X|}  \left(\frac{1}{|X_l|}\sum_{S\in X_l}x_S\right)^{|X_l|-2}\prod_{S\in {X_l}}x_S.
	\end{equation*}
	Therefore, absorbing the log, there exists a constant $C>0$ depending on $p(\beta)$ and $M$ such that 
	\begin{equation}\label{eq:C0M}
		\sum_{T_l^a\in \mathsf{T}^{X_l} } \prod_{ij\in T_l^a}\log \frac{16\ve_0 \Cut}{\min(r_i,r_j)}\leq C^k|X_l|^{|X_l|} \prod_{i\in V_{X_l}}\Bigr(\frac{16 \ve_0 \Cut}{r_i}\Bigr)^{\frac{4C_0}{M}}.
	\end{equation}
	Similarly,
	\begin{equation}\label{eq:C0M2}
		\sum_{T^b\in \mathsf{T}^{Y} } \prod_{ij\in T^b}\log \frac{16\ve_0 \Cut}{\min(r_i,r_j)}\leq C^k|Y|^{|Y|} \prod_{i\in V_{X}}\Bigr(\frac{16 \ve_0 \Cut}{r_i}\Bigr)^{\frac{4C_0}{M}}.
	\end{equation}
	By \eqref{eq:RS}, 
	\begin{equation}\label{eq:RS3}
		\prod_{l=1}^n \prod_{S\in X_l:S\neq \hat{S}_l} r_S^2 \prod_{ij\in T^b}\max(\min(r_i,r_j),\lambda)^2  \prod_{ij\in \cup_{S\in X}T_S}\max(\min(r_i,r_j),\lambda)^2\leq \frac{\prod_{i\in V_X}\max(r_i,\lambda)^2}{\max_{i\in V_X}\max(r_i,\lambda)^2}. 
	\end{equation}
	Therefore, combining \eqref{eq:C0M}, \eqref{eq:C0M2} and \eqref{eq:RS3}, we obtain that  there exists $C>0$ depending on $\beta,p(\beta)$ and $M$ such that
	\begin{multline}\label{eq:lastII}
		\Bigr(\prod_{l=1}^n\sum_{T_l^a\in \mathsf{T}^{X_l}}\Bigr)\sum_{T^b\in \mathsf{T}^Y } I(\cup_{l=1}^n T_l^a,T^b, (T_S)) \\
		\leq NC^k\prod_{l=1}^n|X_l|^{|X_l|}|Y|^{|Y|} \int \frac{\prod_{i\in V_X}\max(r_i,\lambda)^2}{\max_{i\in V_X}\max(r_i,\lambda)^2} \prod_{i\in V_X}\Bigr(\frac{\ve_0\Cut}{r_i}\Bigr)^{\frac{4C_0}{M}}e^{\beta \g_\lambda(r_i)}\indic_{r_i\leq \ve_0\Cut} \dd \vr_i.  
	\end{multline}
	$\bullet$ Suppose that $\beta\in (2,4)$. Taking $M$ large enough that $4-\beta-\frac{4C_0}{M}\geq \frac{4-\beta}{2}$ and isolating the largest $r_i$, we get from \eqref{eq:lastII} that there exists $C>0$ depending on $\beta, p(\beta)$ and $M$ such that 
	\begin{equation*}
		\Bigr(\prod_{l=1}^n\sum_{T_l^a\in \mathsf{T}^{X_l}}\Bigr)\sum_{T^b\in \mathsf{T}^Y } I(\cup_{l=1}^n T_l^a,T^b, (T_S))    \leq NC^k\prod_{l=1}^n |X_l|^{|X_l|}|Y|^{|Y|} \int_0^\infty r^{(4-\beta)k-3}\Bigr(\frac{\ve_0 \Cut}{r}\Bigr)^{\frac{4C_0}{M}k}\indic_{r\leq \ve_0 \Cut}\dd r.
	\end{equation*}
	By choice of $M$, 
	\begin{equation*}
		(4-\beta)k-3-\frac{4C_0}{M}k\geq \Bigr(\frac{4-\beta}{2}\Bigr)k-3>-1,
	\end{equation*}
	since by assumption of the lemma, $k> \frac{4}{4-\beta}$. Therefore, there exists $C>0$ depending on $\beta$ and $M$ such that 
	\begin{equation*}
		\Bigr(\prod_{l=1}^n\sum_{T_l^a\in \mathsf{T}^{X_l}}\Bigr)\sum_{T^b\in \mathsf{T}^Y } I(\cup_{l=1}^n T_l^a,T^b, (T_S))    \leq NC^k (\ve_0 R_{\beta,\lambda})^{(4-\beta)k-2}. 
	\end{equation*}
	Recalling that $R_{\beta,\lambda}^{4-\beta}=\lambda^{2-\beta}$, this concludes the proof in the case $\beta\in (2,4)$.
	
	$\bullet$ Suppose that $\beta\ge 4$. Then, there exists $C>0$ depending on $\beta,p(\beta)$ and $M$ such that
	\begin{multline}\label{eq:xyt}
		\Bigr(\prod_{l=1}^n\sum_{T_l^a\in \mathsf{T}^{X_l}}\Bigr)\sum_{T^b\in \mathsf{T}^Y } I(\cup_{l=1}^n T_l^a,T^b, (T_S))    \leq NC^k\prod_{l=1}^n |X_l|^{|X_l|}|Y|^{|Y|} \\ \times\Bigr(\frac{\ve_0\Cut}{\lambda}\Bigr)^{\frac{4C_0}{M}k}\Bigr( \lambda^{(4-\beta)k-2}\indic_{\beta>4}+\lambda^{-2}|\log\lambda|\indic_{\beta=4}\Bigr). 
	\end{multline}
	Notice that 
	\begin{equation*}
		\Bigr( \lambda^{(4-\beta)k-2}\indic_{\beta>4}+\lambda^{-2}|\log\lambda|\indic_{\beta=4}\Bigr)\Bigr(\frac{\ve_0\Cut}{\lambda}\Bigr)^{\frac{4C_0}{M}k}=\lambda^{(2-\beta)k} \lambda^{2(k-1)} (\ve_0 \lambda^{-2p_0-1})^{\frac{4C_0}{M}k}(\indic_{\beta>4}+|\log\lambda|\indic_{\beta=4}).
	\end{equation*}
	Since $k> 2p_0$, taking $M$ large enough with respect to $p_0$ and $\lambda$ small enough, we deduce that 
	\begin{equation*}
		\Bigr(\frac{\ve_0\Cut}{\lambda}\Bigr)^{\frac{4C_0}{M}k}\Bigr( \lambda^{(4-\beta)k-2}\indic_{\beta>4}+\lambda^{-2}|\log\lambda|\indic_{\beta=4}\Bigr)\le 
		\ve_0^{2\alpha(\beta)k-2} \lambda^{(2-\beta)k}\lambda^{2p_0},
	\end{equation*} when $\lambda$ is small enough with respect to $\ve_0$. 
	Inserting this into \eqref{eq:xyt} concludes the proof in the case $\beta\geq 4$. 
	
\end{proof}

\bibliographystyle{alpha}
\bibliography{bib.bib}

\end{document}